\tikzset{
  photon/.style={decorate, decoration={snake}, draw=black},
  fermion/.style={draw=black, postaction={decorate},decoration={markings,mark=at position .55 with {\arrow{>}}}},
  fermion2/.style={dashed, dash phase=0.1pt, draw=black, postaction={decorate},decoration={markings,mark=at position .55 with {\arrow{>}}}},
  vertex/.style={draw,shape=circle,fill=black,minimum size=5pt,inner sep=0pt},
particle/.style={thick,draw=black},
particle2/.style={thick,draw=blue},
avector/.style={thick,draw=black, postaction={decorate},
    decoration={markings,mark=at position 1 with {\arrow[black]{triangle 45}}}},
gluon/.style={decorate, draw=black,
    decoration={coil,aspect=0}}
 }
\NewDocumentCommand\semiloop{O{black}mmmO{}O{above}}
{%
\draw[#1] let \p1 = ($(#3)-(#2)$) in (#3) arc (#4:({#4+180}):({0.5*veclen(\x1,\y1)})node[midway, #6] {#5};)
}
\newcommand*\widefbox[1]{\fbox{\hspace{2em}#1\hspace{2em}}}
\theoremstyle{plain}
\newtheorem{thm}{Theorem}[section]
\newtheorem{lem}[thm]{Lemma}
\newtheorem{prop}[thm]{Proposition}
\newtheorem{cor}[thm]{Corollary}
\theoremstyle{definition}
\newtheorem{exe}[thm]{Exercise}
\newtheorem{conj}[thm]{Conjecture}
\newtheorem{defn}[thm]{Definition}
\newtheorem*{thm*}{Theorem}
\newtheorem*{lem*}{Lemma}
\newtheorem*{prop*}{Proposition}
\newtheorem*{cor*}{Corollary}
\newtheorem*{exe*}{Exercise}
\newtheorem*{defn*}{Definition}
\newtheorem{rem}[thm]{Remark}
\newtheorem{ex}[thm]{Example}
\theoremstyle{remark}
\numberwithin{section}{chapter}
\numberwithin{subsection}{section}
\numberwithin{subsubsection}{subsection}
\newcommand{\R}{\mathbb{R}}
\newcommand{\Z}{\mathbb{Z}}
\newcommand{\N}{\mathbb{N}}
\newcommand{\calU}{\mathcal{U}}
\newcommand{\calD}{\mathcal{D}}
\numberwithin{equation}{section}
\newcommand{\dd}{{\mathrm{d}}}
\DeclareMathOperator{\GL}{GL}
\DeclareMathOperator{\ev}{ev}
\newcommand{\id}{\mathrm{id}}
\DeclareMathOperator{\tr}{Tr}
\DeclareMathOperator{\Div}{\textnormal{div}}
\newcommand{\C}{\mathbb{C}}
\DeclareMathOperator{\ad}{ad}
\DeclareMathOperator{\Ad}{Ad}
\DeclareMathOperator{\End}{End}
\DeclareMathOperator{\Hom}{Hom}
\newcommand{\de}{\partial}
\newcommand{\calA}{\mathcal{A}}
\newcommand{\calH}{\mathcal{H}}
\newcommand{\calC}{\mathcal{C}}
\newcommand{\calG}{\mathcal{G}}
\newcommand{\calI}{\mathcal{I}}
\newcommand{\calO}{\mathcal{O}}
\newcommand{\calM}{\mathcal{M}}
\newcommand{\calW}{\mathcal{W}}
\newcommand{\calN}{\mathcal{N}}
\newcommand{\supp}{\mathrm{supp}\,}
\newcommand{\calF}{\mathcal{F}}
\def\gpd{\,\lower1pt\hbox{$\longrightarrow$}\hskip-.24in\raise2pt
               \hbox{$\longrightarrow$}\,}
\newcommand{\im}{\mathrm{im}\,}
\newcommand{\I}{\mathrm{i}}
\newcommand{\calV}{\mathcal{V}}
\newcommand{\Sym}{\textnormal{Sym}}
\newcommand{\upint}{\DOTSI\upintop\ilimits@}
\newcommand{\upoint}{\DOTSI\upointop\ilimits@}
\providecommand\@dotsep{5}
\renewcommand{\listoftodos}[1][\@todonotes@todolistname]{%
  \@starttoc{tdo}{#1}}
\tikzset{residual/.style={draw, shape=circle, black,inner sep=1pt}}
\title{Lectures on Symplectic Geometry, Poisson Geometry, Deformation Quantization \\and Quantum Field Theory}
\author[N. Moshayedi]{Nima Moshayedi\vspace{14cm}}
\address{Institut f\"ur Mathematik\\ Universit\"at Z\"urich\\ 
Winterthurerstrasse 190
CH-8057 Z\"urich}
\email[N.~Moshayedi]{nima.moshayedi@math.uzh.ch}
\begin{document}

\maketitle

\begin{abstract}
    These are lecture notes for the course ``Poisson geometry and deformation quantization'' given by the author during the fall semester 2020 at the University of Zurich. The first chapter is an introduction to differential geometry, where we cover manifolds, tensor fields, integration on manifolds, Stokes' theorem, de Rham's theorem and Frobenius' theorem. The second chapter covers the most important notions of symplectic geometry such as Lagrangian submanifolds, Weinstein's tubular neighborhood theorem, Hamiltonian mechanics, moment maps and symplectic reduction. The third chapter gives an introduction to Poisson geometry where we also cover Courant structures, Dirac structures, the local splitting theorem, symplectic foliations and Poisson maps. The fourth chapter is about deformation quantization where we cover the Moyal product, $L_\infty$-algebras, Kontsevich's formality theorem, Kontsevich's star product construction through graphs, the globalization approach to Kontsevich's star product and the operadic approach to formality. The fifth chapter is about the quantum field theoretic approach to Kontsevich's deformation quantization where we cover functional integral methods, the Moyal product as a path integral quantization, the Faddeev--Popov and BRST method for gauge theories, infinite-dimensional extensions, the Poisson sigma model, the construction of Kontsevich's star product through a perturbative expansion of the functional integral quantization for the Poisson sigma model for affine Poisson structures and the general construction. Any comments and remarks can be send by email to the author.
\end{abstract}

\chapter*{Introduction}

Poisson geometry appears naturally in physics in the context of the dynamics for classical mechanical systems. Mathematically, it lies in the intersection of differential geometry and noncommutative algebra.
In physics, people are often interested in \emph{observables}, which mathematically can be described as elements of the algebra of smooth functions on some underlying manifold $M$ endowed with certain structure, and their dynamics. In fact, the algebra $C^\infty(M)$ can be endowed with an algebra structure given by a bracket $\{\enspace,\enspace\}$, called a \emph{Poisson bracket}, satisfying certain properties, similarly as in the case of a Lie algebra $\mathfrak{g}$ endowed with its Lie bracket $[\enspace,\enspace]$. Time evolution of an observable $O$ is described by the equation
\[
\frac{\dd O}{\dd t}=\{H,O\},
\]
where $H\in C^\infty(M)$ denotes the \emph{Hamiltonian} of the system.\\

This structure is actually motivated by the \emph{symplectic structure} appearing additionally within the natural manifold structure for the \emph{phase space} of a classical system. The phase space is a way of expressing the dynamics in a classical system where each state is represented by a unique point. The most simple phase space is given by $\R^2$ and in higher dimension the local structure looks like some $\R^{2n}$, for $n\geq 1$.
One can then induce a Poisson bracket $\{\enspace,\enspace\}$ from a closed, nondegenerate 2-form $\omega$, the \emph{symplectic form}. Such a 2-form also appears naturally in the structure of a phase space containing the information for the base coordinates $q^{i}$ and the corresponding fiber momentum-coordinates $p_i$ on the vector bundle given by the cotangent bundle $T^*M$ of the manifold $M$. Instead of looking at the structure sheaf of smooth functions $C
^\infty(M)$ on a manifold $M$ which is endowed with a Poisson bracket, one can also consider the geometric picture of the manifold endowed with a bivector field $\pi$ encoding the information of the bracket $\{\enspace,\enspace\}$ geometrically in an equivalent way.\\

A quantum system is described by a complex Hilbert space $\calH$ together with an operator $\widehat{H}$. A physical state of the system is then represented by an element in $\calH$ whereas the physical observables are given by self-adjoint operators on $\calH$. Denote this space by $L(\calH)$. Time evolution is then given by the \emph{Heisenberg equation}
\[
\frac{\dd \widehat{O}}{\dd t}=\frac{\I}{\hbar}\left[\widehat{H},\widehat{O}\right],
\]
where $[\enspace,\enspace]$ denotes the commutator of operators. The parameter $\hbar$ is called the \emph{reduced Planck constant} which is a physical constant naturally appearing at the quantum level.
The connection to classical mechanics is usually given by the introduction of position $\widehat{q_i}$ and momentum $\widehat{p_j}$ operators. They satisfy the following commutation relation:
\[
\left[\widehat{p_i},\widehat{q_j}\right]=\frac{\I}{\hbar}\delta_{ij}.
\]
In this way, one can obtain classical mechanics from the quantum theory in the limit where $\hbar\to 0$. This is a general concept. An important question is whether there is a precise mathematical formulation of such a quantization procedure in terms of a well-defined map  between classical objects and their quantum picture. There are different ways of quantizing a classical system. If one starts from the canonical quantization on $\R^{2n}$, one can consider the method of \emph{geometric quantization} (see e.g. \cite{Kir85,Wood97,Moshayedi2020_geomquant}). There the idea is the quantization of the classical phase space $\R^{2n}$ to the corresponding Hilbert space $\calH=L^2(\R^n)$ on which the \emph{Schr\"odinger equation} is defined. Another quantization approach focuses on the observables of the classical system. There one tries to capture the noncommutativity structure of the space of operators from the commutative structure of $C^\infty(\R^{2n})$. A result of \emph{Groenewold} \cite{Groenewold1946} states that it is impossible to quantize the Poisson algebra $C^\infty(\R^{2n})$ in a way where the Poisson bracket of two functions is sent onto the Lie bracket of the corresponding operators. To overcome this issue, one can instead consider a \emph{deformation} of the pointwise product on $C^\infty(\R^{2n})$ to a noncommutative product.\\

Deformation quantization originated from the work of \emph{Weyl} \cite{Weyl1931}, who gave an explicit formula for the operator $O_f$ on $L(\R^n)$ associated to a function $f\in C^\infty(\R^{2n})$:
\[
O_f:=\int_{\R^{2n}}\check{f}(\xi,\eta)\exp\left(\frac{\I}{\hbar}(P\xi+Q\eta)\right)\dd^n\xi\dd^n\eta,
\]
where $\check{f}$ denotes the inverse Fourier transform of $f$. Here $P=(P_i)$ and $Q=(Q_j)$ denote operators satisfying the canonical commutation relation. Moreover, the integral is considered in the weak sense. An inverse map was later found by \emph{Wigner} \cite{Wigner1932}, who gave a way to recover the corresponding classical observable by taking the \emph{symbol} of the operator. \emph{Moyal} \cite{Moy} interpreted the symbol of the commutator of two operators corresponding to the functions $f$ and $g$ as what is today called the \emph{Moyal bracket}:
\[
\calM(f,g):=\frac{\sinh(\epsilon P)}{\epsilon}(f,g)=\sum_{k=0}^\infty\frac{\epsilon^{2k}}{(2k+1)!}P^{2k+1}(f,g),
\]
where $\epsilon:=\frac{\I\hbar}{2}$ and $P^k$ is the $k$-th power of the Poisson bracket on $C^\infty(\R^{2n})$. Already Groenewold had a similar formula for the symbol of a product $O_fO_g$ which today can be interpreted as the first appearance of the Moyal star product $\star$. The Moyal bracket can then be rewritten in terms of this star product as
\[
\calM(f,g)=\frac{1}{2\epsilon}(f\star g-g\star f).
\]
It was \emph{Flato} who recognized this star product as a deformation of the commutative pointwise product on $C^\infty(\R^{2n})$. This was the beginning of \emph{deformation quantization}. He conjectured the problem of giving a general recipe to deform the pointwise product on $C^\infty(M)$ in such a way that $\frac{1}{2\epsilon}(f\star g-g\star f)$ still remains a deformation of the Poisson structure on $M$. Following this conjecture, a first way of formulating quantum mechanics as such a deformation of classical mechanics had been discovered. The work of \emph{Bayen}, \emph{Flato}, \emph{Fronsdal}, \emph{Lichnerowicz} and \emph{Sternheimer} was essential for the formulation of the deformation problem for symplectic spaces and the physical applications \cite{BFFLS1,BFFLS2}.
\emph{DeWilde} and \emph{Lecomte} \cite{DeWildeLecomte1983} have proven the existence of a star product on a generic symplectic manifold by using \emph{Darboux's theorem} which tells that locally any symplectic manifold of dimension $2n$ can be identified with $\R^{2n}$ by a choice of \emph{Darboux charts}. Using cohomological arguments, one can construct such a star product by a correct gluing of the locally defined Moyal product. Independently of the previous result, \emph{Fedosov} \cite{Fedosov1994} gave an explicit construction of a star product on a symplectic manifold. 
The generalization to any Poisson manifold was given through the formality theorem of \emph{Kontsevich} \cite{K}. He derived an explicit formula for the product on $\R^d$ by using special graphs and configuration space integrals, which can be used to define it locally on any Poisson manifold $M$. The globalization procedure was described by \emph{Cattaneo, Felder} and \emph{Tomassini} \cite{CFT,CattaneoFelderTomassini2002} by extending Fedosov's construction to the Poisson case where they used notions of \emph{formal geometry} developed by \emph{Gelfand--Fuks} \cite{GelfandFuks1969,GelfandFuks1970}, \emph{Gelfand--Kazhdan} \cite{GK} and \emph{Bott} \cite{B}, which completed the program of Flato proposed thirty years before. Another approach to Kontsevich's result was given by \emph{Tamarkin} \cite{Tamarkin1998,Tamarkin2003} who used the notion of \emph{operads} in order to prove the formality theorem.\\

Moreover, \emph{Cattaneo} and \emph{Felder} \cite{CF1} have shown that Kontsevich's formula can actually be formulated in terms of a perturbative expression of the functional integral of a topological field theory. This theory is called the \emph{Poisson sigma model}. The Poisson sigma model was discovered independently by \emph{Ikeda} \cite{I} and \emph{Schaller--Strobl} \cite{SS1,SS2} by an attempt to combine 2-dimensional gravity with Yang--Mills theories.
In fact, one can show that the graphs, which have been constructed by Kontsevich, arise naturally in this context as the Feynman diagrams subject to the expectation of certain observables. Recently, \emph{Cattaneo, Moshayedi} and \emph{Wernli} gave a field-theoretic picture to obtain a global deformation quantization for constant Poisson structures (Moyal product) by cutting and gluing techniques for certain worldsheet manifolds \cite{CMW2}. This is done by using a globalized version of the Poisson sigma model and methods for gauge theories on manifolds with boundary developed in \cite{CMW3}. It is expected that the cutting and gluing construction can be extended for general Poisson structures.\\

An approach regarding higher gauge theories is encoded in the setting of \emph{higher shifted symplectic and Poisson structures}. A way of dealing with these concepts and constructing a higher shifted deformation quantization in the setting of \emph{derived algebraic geometry} was formulated by \emph{Calaque, Pantev, To\"en, Vaqui\'e} and \emph{Vezzosi} \cite{PantevToenVaquieVezzosi2013,CalaquePantevToenVaquieVezzosi2017}.

\vspace{2cm}

\textbf{Acknowledgements}
I would like to thank A. S. Cattaneo for comments on these notes. This research was supported by the NCCR SwissMAP, funded by the Swiss National Science Foundation, and by the SNF grant No. 200020\_192080.

\tableofcontents

\chapter{Foundations of Differential Geometry}
We want to start by introducing the main concepts and notions of differential geometry which are needed in order to be able to understand the discussions in the following chapters. The experienced reader might skip this chapter, but we will still refer to some parts of it in other chapters and one can always come back in case there are any uncertainties. In this chapter we cover differentiable manifolds, vector bundles, (multi)vector fields, differential forms, general tensor fields, integration on manifolds, Stokes' theorem for manifolds, Stokes' theorem for chains, de Rham cohomology, singular homology, de Rham's theorem, distributions and Frobenius' theorem.  
This chapter is mainly based on \cite{BottTu,Cattaneo_Manifolds,Lee}.

\section{Differentiable manifolds}

\subsection{Charts and atlases}

\begin{defn}[Chart]
A \emph{chart} on a set $M$ is a pair $(U,\phi)$ where $U\subset M$ is a subset and $\phi$ is an injective map $U\to \R^n$ for some $n$.
\end{defn}

We call $\phi$ a \emph{chart map} or a \emph{coordinate map}. Sometime we also refer only to $\phi$ as a chart since $U$ is already contained inside the definition of $\phi$. Let $(U,\phi_U)$ and $(V,\phi_V)$ be charts on $M$. Then we can compose the bijections $(\phi_U)\vert_{U\cap V}\colon U\cap V\to \phi_U(U\cap V)$ and $(\phi_V)\vert_{U\cap V}\colon U\cap V\to \phi_V(U\cap V)$ to the bijection 
\[
\phi_{U,V}\colon (\phi_U)\vert_{U\cap V}\circ ((\phi_V)\vert_{U\cap V})^{-1}\colon \phi_{U}(U\cap V)\to \phi_V(U\cap V).
\]
We call this the \emph{transition map} between the charts $(U,\phi_U)$ and $(V,\phi_V)$. Moreover, we refer to $n$ as being the \emph{dimension} of $M$ (we will give another definition for the dimension later on). See Figure \ref{fig:charts} for a visualization.

\begin{defn}[Atlas]
An atlas on a set $M$ is a collection of charts $\{(U_\alpha,\phi_\alpha)\}_{\alpha\in I}$, where $I$ is an index set such that $\bigcup_{\alpha\in I}U_\alpha=M$.
\end{defn}

\begin{rem}
We will denote the transition maps between two charts $(U_\alpha,\phi_\alpha)$ and $(U_\beta,\phi_\beta)$ simply by $\phi_{\alpha\beta}$.
\end{rem}

If $\phi_\alpha(U_\alpha)$ is open for all $\alpha\in I$, then the atlas $\calA=\{(U_\alpha,\phi_\alpha)\}_{\alpha\in I}$ induces a topology on $M$. This topology is given by
\[
\calO_\calA(M):=\{V\subset M\mid \phi_{\alpha}(V\cap U_\alpha)\text{ is open }\forall \alpha\in I\}.
\]

\begin{defn}[Open atlas]
An atlas is said to be \emph{open} if $\phi_{\alpha}(U_\alpha\cap U_\beta)$ is open for all $\alpha,\beta\in I$.
\end{defn}

\begin{defn}[Differentiable atlas]
An atlas is said to be \emph{differentiable} if it is open and all transition functions are $C^k$-maps for $k=0,1,\ldots,\infty$.
\end{defn}

\begin{defn}[Smooth atlas]
An atlas is said to be \emph{smooth} if it is open and all transition functions are $C^\infty$-maps.
\end{defn}

\begin{defn}[$C^k$-equivalence]
Two $C^k$-atlases on the same set are said to be \emph{$C^k$-equivalent} if their union is a $C^k$-atlas for $k=0,1,\ldots,\infty$.
\end{defn}

\begin{figure}
\begin{center}
\begingroup%
  \makeatletter%
  \providecommand\color[2][]{%
    \errmessage{(Inkscape) Color is used for the text in Inkscape, but the package 'color.sty' is not loaded}%
    \renewcommand\color[2][]{}%
  }%
  \providecommand\transparent[1]{%
    \errmessage{(Inkscape) Transparency is used (non-zero) for the text in Inkscape, but the package 'transparent.sty' is not loaded}%
    \renewcommand\transparent[1]{}%
  }%
  \providecommand\rotatebox[2]{#2}%
  \newcommand*\fsize{\dimexpr\f@size pt\relax}%
  \newcommand*\lineheight[1]{\fontsize{\fsize}{#1\fsize}\selectfont}%
  \ifx\svgwidth\undefined%
    \setlength{\unitlength}{200bp}%
    \ifx\svgscale\undefined%
      \relax%
    \else%
      \setlength{\unitlength}{\unitlength * \real{\svgscale}}%
    \fi%
  \else%
    \setlength{\unitlength}{\svgwidth}%
  \fi%
  \global\let\svgwidth\undefined%
  \global\let\svgscale\undefined%
  \makeatother%
  \begin{picture}(1,1.41428571)%
    \lineheight{1}%
    \setlength\tabcolsep{0pt}%
    \put(0,0){\includegraphics[width=9cm]{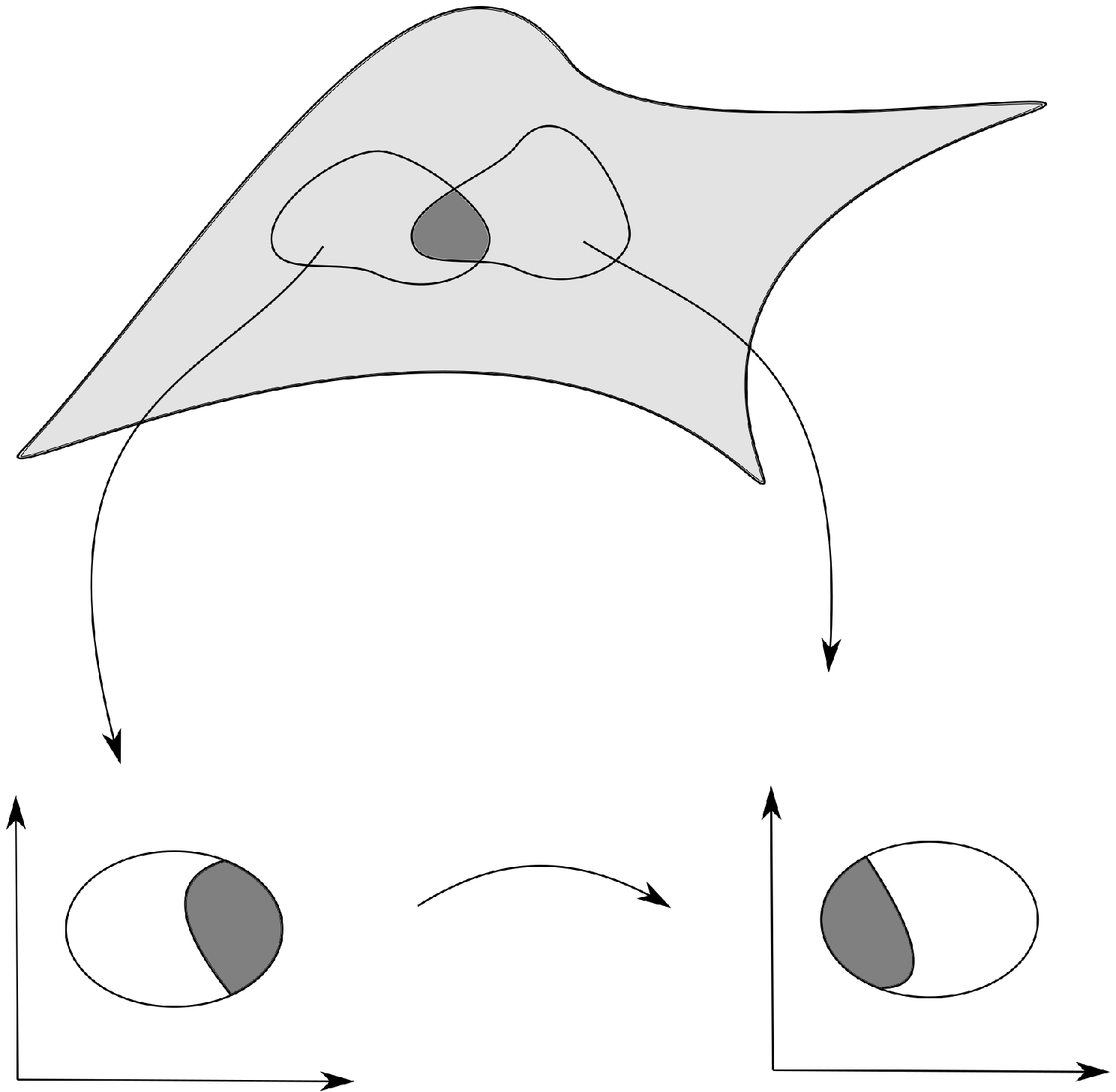}}%
    \put(0.9,0.98){\color[rgb]{0,0,0}\makebox(0,0)[lt]{\lineheight{1.25}\smash{\begin{tabular}[t]{l}$M$\end{tabular}}}}%
    \put(0.03601619,0.4){\color[rgb]{0,0,0}\makebox(0,0)[lt]{\lineheight{1.25}\smash{\begin{tabular}[t]{l}$\mathbb{R}^n$\end{tabular}}}}%
    \put(0.8,0.4){\color[rgb]{0,0,0}\makebox(0,0)[lt]{\lineheight{1.25}\smash{\begin{tabular}[t]{l}$\mathbb{R}^n$\end{tabular}}}}%
    \put(0.95,0.7){\color[rgb]{0,0,0}\makebox(0,0)[lt]{\lineheight{1.25}\smash{\begin{tabular}[t]{l}$\phi_V$\end{tabular}}}}%
    \put(0.07,0.6){\color[rgb]{0,0,0}\makebox(0,0)[lt]{\lineheight{1.25}\smash{\begin{tabular}[t]{l}$\phi_U$\end{tabular}}}}%
    \put(0.4,0.95){\color[rgb]{0,0,0}\makebox(0,0)[lt]{\lineheight{1.25}\smash{\begin{tabular}[t]{l}$U$\end{tabular}}}}%
    \put(0.6,0.95){\color[rgb]{0,0,0}\makebox(0,0)[lt]{\lineheight{1.25}\smash{\begin{tabular}[t]{l}$V$\end{tabular}}}}%
    \put(0.2,0.3){\color[rgb]{0,0,0}\makebox(0,0)[lt]{\lineheight{1.25}\smash{\begin{tabular}[t]{l}$\phi_U(U)$\end{tabular}}}}%
    \put(1.1,0.3){\color[rgb]{0,0,0}\makebox(0,0)[lt]{\lineheight{1.25}\smash{\begin{tabular}[t]{l}$\phi_V(V)$\end{tabular}}}}%
    \put(0.5,0.83){\color[rgb]{0,0,0}\makebox(0,0)[lt]{\lineheight{1.25}\smash{\begin{tabular}[t]{l}$U\cap V$\end{tabular}}}}%
    \put(0.6,0.3){\color[rgb]{0,0,0}\makebox(0,0)[lt]{\lineheight{1.25}\smash{\begin{tabular}[t]{l}$\phi_{U,V}$\end{tabular}}}}%
  \end{picture}%
\endgroup%
\end{center}
\caption{Example of charts and transition map on an $n$-dimensional manifold $M$.}
\label{fig:charts}
\end{figure}

\begin{defn}[$C^k$-manifold]
A \emph{$C^k$-manifold} is an equivalence class of $C^k$-atlases for $k=0,1,\ldots,\infty$.
\end{defn}

\begin{defn}[Smooth manifold]
A \emph{smooth manifold} is an equivalence class of $C^\infty$-atlases.
\end{defn}

\begin{rem}
From now on, if we call a set $M$ a manifold, we will always mean a smooth manifold. Moreover, all maps between manifolds will be regarded as smooth maps. We will call a map of manifolds a \emph{diffeomorphism}, if it is invertible with smooth inverse.
\end{rem}

\begin{defn}[Submanifold]
Let $N$ be an $n$-dimensional manifold. A $k$-dimensional \emph{submanifold}, with $k\leq n$, is a subset $M$ of $N$ such that there is an atlas $\{(U_\alpha,\phi_\alpha)\}_{\alpha\in I}$ of $N$ with the property that for all $\alpha$ with $U_\alpha\cap M\not=\varnothing$ we have $\phi_\alpha(U_\alpha\cap M)=W_\alpha\times\{x\}$ with $W_\alpha\subset \R^k$ an open subset and $x\in \R^{n-k}$.    
\end{defn}

\begin{rem}
Any chart with this property is called an \emph{adapted chart} and an atlas consisting of adapted charts is called an \emph{adapted atlas}. Moreover, by a diffeomorphism of $\R^n$ we can always assume that $x=0$.
\end{rem}

\begin{ex}[Graphs]
Let $F$ be a smooth map from an open subset $V\subset \R^k$ to $\R^{n-k}$ and consider its graph 
\[
M=\{(x,y)\in V\times \R^{n-k}\mid y=F(x)\}.
\]
Then $M$ is a submanifold of $N:=V\times \R^{n-k}$. As an adapted atlas we may take the one consisting of a single chart $(N,\iota)$, where $\iota\colon N\hookrightarrow \R^n$ denotes the inclusion map. 
\end{ex}

\subsection{Pullback and pushforward}
Let $M$ and $N$ be two manifolds and consider a map $F\colon M\to N$. 

\begin{defn}[Pullback]
The $\R$-linear map 
\begin{align*}
F^*\colon C^\infty(N)&\to C^\infty(M),\\
f&\mapsto f\circ F.
\end{align*}
is called \emph{pullback} by $F$. 
\end{defn}

\begin{exe}
Show that for $f,g\in C^\infty(N)$ we have 
\[
F^*(fg)=F^*(f)F^*(g).
\]
Moreover, if $G\colon N\to Z$ is a map between manifolds $N$ and $Z$, show that 
\[
(G\circ F)^*=F^*\circ G^*.
\]
\end{exe}

\begin{defn}[Pushforward]
Using $F$ as before, we define the \emph{pushforward} to be the inverse of the pullback $F^*$ which we denote by $F_*$. In fact, we get 
\begin{align*}
F_*\colon C^\infty(M)&\to C^\infty(N),\\
f&\mapsto f\circ F^{-1}.
\end{align*}
\end{defn}

\begin{exe}
Show that 
\[
F_*(fg)=F_*(f)F_*(g),
\]
and 
\[
(G\circ F)_*=G_*\circ F_*.
\]

\end{exe}

\subsection{Tangent space}

Let $M$ be a manifold. 

\begin{defn}[coordinatized tangent vector]
A \emph{coordinatized tangent vector} at $q\in M$ is a triple $(U,\phi_U,v)$ where $(U,\phi_U)$ is a chart with $U\ni q$ and $v$ is an element of $\R^n$. 
\end{defn}

We say that two coordinatized tangent vectors $(U,\phi_U,v)$ and $(V,\phi_V,w)$ are equivalent if 
\[
w=\dd_{\phi_U(q)}\phi_{U,V}v.
\]

\begin{defn}[Tangent vector]
A \emph{tangent vector} at $q\in M$ is an equivalence class of coordinatized tangent vectors at $q$.
\end{defn}

\begin{defn}[Tangent space]
The \emph{tangent space} of $M$ at $q\in M$ is given by the set of all tangent vectors at $q$.
\end{defn}

Note that each chart $(U,\phi_U)$ at $q\in M$ defines a bijection of sets
\begin{align*}
    \Phi_{q,U}\colon T_qM&\to \R^n,\\
    [(U,\phi_U,v)]&\mapsto v.
\end{align*}

We will also just write $\Phi_U$ when the base point $q\in M$ is understood. This bijection allows us to transfer the vector space structure of $\R^n$ to $T_qM$ and gives $\Phi_{q,U}$ the structure of a linear isomorphism. 

\begin{lem}
$T_qM$ has a canonical vector space structure for which $\Phi_{q,U}$ is a linear isomorphism for every chart $(U,\phi_U)$ containing $q$.
\end{lem}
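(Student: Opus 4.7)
The plan is to define the vector space operations on $T_qM$ through any single chart and then verify that the resulting structure does not depend on the chart chosen, so that every $\Phi_{q,U}$ becomes a linear isomorphism simultaneously.

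First I would fix a chart $(U,\phi_U)$ with $U\ni q$ and transport the vector space structure from $\R^n$ to $T_qM$ along $\Phi_{q,U}$: for $[(U,\phi_U,v)],[(U,\phi_U,w)]\in T_qM$ and $a,b\in\R$, set
\[
a\,[(U,\phi_U,v)]+b\,[(U,\phi_U,w)]:=[(U,\phi_U,av+bw)].
\]
This is well-defined inside the chart $(U,\phi_U)$ because different coordinatized tangent vectors with the same first two entries are only equivalent if the third entries coincide, so the representatives of the classes inside $U$ are unique.

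The key step is independence of the chart. Given another chart $(V,\phi_V)$ with $V\ni q$, I need to check that the operations defined via $(V,\phi_V)$ agree with those defined via $(U,\phi_U)$. The transition map $\phi_{U,V}$ is a diffeomorphism between open subsets of $\R^n$, hence $\dd_{\phi_U(q)}\phi_{U,V}\colon\R^n\to\R^n$ is a linear isomorphism. Suppose $[(U,\phi_U,v_i)]=[(V,\phi_V,w_i)]$ for $i=1,2$, so that $w_i=\dd_{\phi_U(q)}\phi_{U,V}\,v_i$. Linearity of the differential gives
\[
aw_1+bw_2=\dd_{\phi_U(q)}\phi_{U,V}\,(av_1+bv_2),
\]
which means $[(U,\phi_U,av_1+bv_2)]=[(V,\phi_V,aw_1+bw_2)]$. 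Hence the sum and scalar multiple produced in the $U$-chart coincide with those produced in the $V$-chart, so the vector space structure is canonical.

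Finally I would read off that, by construction, $\Phi_{q,U}([(U,\phi_U,v)])=v$ is $\R$-linear with inverse $v\mapsto[(U,\phi_U,v)]$, hence a linear isomorphism; the same holds for every chart containing $q$ by the independence just established. The only real content is the linearity of $\dd_{\phi_U(q)}\phi_{U,V}$, and since transition maps are smooth diffeomorphisms on open subsets of $\R^n$ (by the differentiable atlas condition), this is immediate; so I do not expect any serious obstacle.
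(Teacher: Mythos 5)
Your proposal is correct and follows essentially the same route as the paper: transport the linear structure from $\R^n$ via one chart, then use the linearity of $\dd_{\phi_U(q)}\phi_{U,V}$ to show the operations defined in different charts coincide, which makes every $\Phi_{q,U}$ linear by construction. The only cosmetic difference is that you treat addition and scalar multiplication in a single linear combination, while the paper verifies $\cdot_U=\cdot_V$ and $+_U=+_V$ separately.
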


\begin{proof}
Given a chart $(U,\phi_U)$, the bijection $\Phi_U$ defines the linear structure 
\begin{align*}
    \lambda\cdot_U[(U,\phi_U,v)]&=[(U,\phi_U,\lambda v)],\qquad \forall \lambda\in \R,\\
    [(U,\phi_U,v)]+_U[(U,\phi_U,v')]&=[(U,\phi_U,v+v')],\qquad \forall v,v'\in \R^n 
\end{align*}
If $(V,\phi_V)$ is another chart, we have 
\begin{align*}
    \lambda\cdot_U[(U,\phi_U,v)]&= [(U,\phi_U,\lambda v)]\\
    &=[(V,\phi_V,\dd_{\phi_U(q)}\phi_{U,V}\lambda v)]\\
    &=[(V,\phi_V,\lambda\dd_{\phi_U(q)}\phi_{U,V} v)]\\
    &=\lambda\cdot_V[(V,\phi_V,\dd_{\phi_U(q)}\phi_{U,V} v)]\\
    &=\lambda\cdot_V[(U,\phi_U,v)],
\end{align*}
and so $\cdot_U=\cdot_V$. Similarly, we have
\begin{align*}
    [(U,\phi_U,v)]+_U[(U,\phi_U,v')]&=[(U,\phi_U,v+v')]\\
    &=[(V,\phi_V,\dd_{\phi_U(q)}\phi_{U,V} (v+v')]\\
    &=[(V,\phi_V,\dd_{\phi_U(q)}\phi_{U,V}v+\dd_{\phi_U(q)}\phi_{U,V}v']\\
    &=[(V,\phi_V,\dd_{\phi_U(q)}\phi_{U,V}v)]+_V[(V,\phi_V,\dd_{\phi_U(q)}\phi_{U,V}v')]\\
    &=[(U,\phi_U,v)]+_V[(U,\phi_U,v')],
\end{align*}
and so $+_U=+_V$.
\end{proof}

\begin{defn}[Dimension of a manifold]
We define the \emph{dimension} of a manifold $M$ by
\[
\dim M:=\dim T_qM,\quad q\in M.
\]
\end{defn}

Let $F\colon M\to N$ be a differentiable map between two manifolds $M$ and $N$. Given a chart $(U,\phi_U)$ of $M$ containing $q\in M$ and a chart $(V,\psi_V)$ of $N$ containing $F(q)\in N$, we have a linear map 
\[
\dd_q^{U,V}F:= \Phi^{-1}_{F(q),V}\dd_{\phi_U(q)}F_{U,V}\Phi_{q,U}\colon T_qM\to T_{F(q)}N.
\]

\begin{lem}[Differential/Tangent map]
The map $\dd_q^{U,V}F$ does not depend on the choice of charts, so we get a canonically defined linear map 
\[
\dd_qF\colon T_qM\to T_{F(q)}N,
\]
called the \emph{differential} (or \emph{tangent map}) of $F$ at $q\in M$.
\end{lem}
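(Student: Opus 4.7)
The plan is to verify two things: first, that for fixed charts the formula $\dd_q^{U,V}F = \Phi^{-1}_{F(q),V}\,\dd_{\phi_U(q)}F_{U,V}\,\Phi_{q,U}$ is genuinely linear (which is immediate, being a composition of the linear isomorphisms $\Phi_{q,U}$ and $\Phi^{-1}_{F(q),V}$ from the previous lemma with the ordinary Euclidean differential $\dd_{\phi_U(q)}F_{U,V}$, which is linear by standard multivariable calculus); and second, that it is independent of the choice of charts. Only the latter requires real work, and it is a direct chain-rule computation, so the proof is essentially bookkeeping.

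To handle chart independence I would pick a second pair of charts $(U',\phi_{U'})$ at $q$ and $(V',\psi_{V'})$ at $F(q)$, and compare the resulting formulas on an arbitrary tangent vector. On the overlap of the charts one has the identity
\[
F_{U',V'} \;=\; \psi_{V,V'}\,\circ\, F_{U,V}\,\circ\, \phi_{U',U},
\]
which is just the definition of the coordinate representatives together with the transition maps. Differentiating at $\phi_{U'}(q)$ using the classical chain rule in $\R^n$ gives
\[
\dd_{\phi_{U'}(q)}F_{U',V'} \;=\; \dd_{\psi_V(F(q))}\psi_{V,V'}\;\circ\;\dd_{\phi_U(q)}F_{U,V}\;\circ\;\dd_{\phi_{U'}(q)}\phi_{U',U}.
\]

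Now let $\xi\in T_qM$ be represented by $(U,\phi_U,v)$ and equivalently by $(U',\phi_{U'},v')$, so $v' = \dd_{\phi_U(q)}\phi_{U,U'}\,v$. Applying $\dd_q^{U',V'}F$ to $\xi$ means first sending $\xi$ to $v'$ via $\Phi_{q,U'}$, then hitting $v'$ with the above chain-rule expression. The factor $\dd_{\phi_{U'}(q)}\phi_{U',U}$ applied to $v'$ recovers $v$ because $\phi_{U',U}$ and $\phi_{U,U'}$ are mutually inverse diffeomorphisms on $\R^n$, so their differentials at corresponding points are inverse linear maps. Setting $w := \dd_{\phi_U(q)}F_{U,V}\,v$, the image becomes $\dd_{\psi_V(F(q))}\psi_{V,V'}\,w$, and by the equivalence relation defining tangent vectors this is exactly the $\R^n$-representative, in the chart $(V',\psi_{V'})$, of the tangent vector at $F(q)$ whose $(V,\psi_V)$-representative is $w$. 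Applying $\Phi^{-1}_{F(q),V'}$ therefore produces the same element of $T_{F(q)}N$ as $\Phi^{-1}_{F(q),V}(w) = \dd_q^{U,V}F(\xi)$.

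The only nontrivial part is keeping careful track of where each transition map is evaluated and recognizing the inverse-pair cancellation $\dd_{\phi_{U'}(q)}\phi_{U',U}\circ \dd_{\phi_U(q)}\phi_{U,U'}=\id$; the rest is the chain rule and the definitions. I would expect no conceptual obstacle, only a risk of index confusion, which is why I would run the argument on a single vector $\xi$ rather than at the level of the maps themselves. With independence established, linearity transfers from any chosen pair of charts to the now-canonical map $\dd_qF\colon T_qM\to T_{F(q)}N$.
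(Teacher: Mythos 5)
Your proposal is correct and follows essentially the same route as the paper's proof: both fix a second pair of charts, apply the classical chain rule to the relation between the coordinate representatives $F_{U,V}$ and $F_{U',V'}$, act on a single tangent vector, and conclude via the equivalence relation on coordinatized tangent vectors together with the cancellation of the mutually inverse transition-map differentials. No gap; the linearity remark at the start is also the same observation the paper makes before stating the lemma.
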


\begin{proof}
Let $(U',\phi_{U'})$ be another chart of $M$ containing $q\in M$ and $(V',\psi_{V'})$ another chart of $N$ containing $F(q)\in N$. Then 
\begin{align*}
    \dd_q^{U,V}F[(U,\phi_U,v)]&=[(V,\psi_V,\dd_{\phi_U(q)}F_{U,V}v)]\\
    &=[(V',\psi_{V'},\dd_{\psi_{V'}(F(q))}\psi_{V,V'}\dd_{\phi_U(q)}F_{U,V}v)]\\
    &=[(V',\psi_{V'},\dd_{\phi_{U'}(q)}F_{U',V'}(\dd_{\phi_U(q)}\phi_{U,U'})^{-1}v)]\\
    &=\dd^{U',V'}_qF[(U',\phi_{U'},(\dd_{\phi_U(q)}\phi_{U,U'})^{-1}v)]\\
    &=\dd^{U',V'}_qF[(U,\phi_{U},v)],
\end{align*}
so we get $\dd^{U,V}_qF=\dd^{U',V'}_qF$. 
\end{proof}

We immediately get the following lemma.

\begin{lem}
Let $F\colon M\to N$ and $G\colon N\to Z$ be maps between manifolds $M,N,Z$. Then 
\[
\dd_q(G\circ F)=\dd_{F(q)}G\circ \dd_qF,\quad\forall q\in M.
\]
\end{lem}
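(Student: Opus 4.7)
The plan is to reduce the identity to the classical chain rule in Euclidean space by passing through coordinate charts, and then to invoke the chart-independence established in the previous lemma. First I would fix charts $(U,\phi_U)$ of $M$ containing $q$, $(V,\psi_V)$ of $N$ containing $F(q)$, and $(W,\chi_W)$ of $Z$ containing $G(F(q))$. After possibly shrinking $U$, we may assume $F(U)\subset V$ and $G(V)\subset W$, so that the chart representative of the composition is literally the composition of the chart representatives, namely $(G\circ F)_{U,W}=G_{V,W}\circ F_{U,V}$, as smooth maps between open subsets of Euclidean spaces.

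Next I would apply the ordinary chain rule for smooth maps between open subsets of $\R^n$, $\R^m$, $\R^\ell$ to obtain
$$\dd_{\phi_U(q)}(G\circ F)_{U,W}=\dd_{\psi_V(F(q))}G_{V,W}\circ \dd_{\phi_U(q)}F_{U,V}.$$
Pre-composing with $\Phi_{q,U}$ and post-composing with $\Phi^{-1}_{G(F(q)),W}$, and inserting the identity $\Phi^{-1}_{F(q),V}\Phi_{F(q),V}=\id$ between the two differentials on the right, the equality becomes
$$\dd^{U,W}_q(G\circ F)=\dd^{V,W}_{F(q)}G\circ \dd^{U,V}_qF,$$
by the very definition of the chart-dependent tangent map given just above.

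Finally, the previous lemma guarantees that $\dd^{U,W}_q(G\circ F)=\dd_q(G\circ F)$, $\dd^{V,W}_{F(q)}G=\dd_{F(q)}G$, and $\dd^{U,V}_qF=\dd_qF$, independently of the chart choices. Therefore the identity descends to the canonical tangent maps, yielding $\dd_q(G\circ F)=\dd_{F(q)}G\circ \dd_qF$. The only mildly delicate point is the initial bookkeeping to ensure that $F$ and $G$ actually land in the distinguished charts so that the local representatives are defined on a common open neighborhood of $\phi_U(q)$; this is a standard continuity argument and poses no real obstacle.
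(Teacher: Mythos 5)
Your proof is correct and is exactly the argument the paper has in mind: the text states this lemma as an immediate consequence of the preceding lemma (chart-independence of $\dd_q^{U,V}F$) together with the Euclidean chain rule applied to the chart representatives, which is precisely what you spell out. The only detail worth noting is that one may also need to shrink $V$ (not just $U$) so that $G(V)\subset W$, but as you observe this is a routine continuity point and does not affect the argument.
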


\section{Vector fields and differential $1$-forms}
\label{sec:vector_fields_and_differential_1-forms}

\subsection{Tangent bundle}

We can glue all the tangent spaces of a manifold $M$ together and obtain the following definition.

\begin{defn}[Tangent bundle]
The \emph{tangent bundle} of a manifold $M$ is given by 
\[
TM:=\bigsqcup_{q\in M}T_qM.
\]
\end{defn}

An element of $TM$ is of the form $(q,v)$, where $q\in M$ and $v\in T_qM$. Consider the surjective map $\pi\colon TM\to M$, $(q,v)\mapsto q$. Then the \emph{fiber} $T_qM$ can be denoted by $\pi^{-1}(q)$.

\begin{prop}
$TM$ is a manifold.
\end{prop}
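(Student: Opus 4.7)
The plan is to lift a given smooth atlas on $M$ to a smooth atlas on $TM$ in the obvious coordinate-by-coordinate way. Fix a smooth atlas $\{(U_\alpha,\phi_\alpha)\}_{\alpha\in I}$ representing the smooth structure on $M$. For each $\alpha$, set $\widetilde U_\alpha := \pi^{-1}(U_\alpha)$ and define
\[
\widetilde\phi_\alpha\colon \widetilde U_\alpha\to \R^{2n},\qquad (q,v)\mapsto \bigl(\phi_\alpha(q),\,\Phi_{q,U_\alpha}(v)\bigr),
\]
using the canonical bijection $\Phi_{q,U_\alpha}\colon T_qM\to\R^n$ introduced in the previous subsection. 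Since $\phi_\alpha$ is injective on $U_\alpha$ and $\Phi_{q,U_\alpha}$ is a bijection for every $q$, the map $\widetilde\phi_\alpha$ is injective, so each $(\widetilde U_\alpha,\widetilde\phi_\alpha)$ is a chart on the set $TM$. The collection covers $TM$ because the $U_\alpha$ cover $M$, so $\{(\widetilde U_\alpha,\widetilde\phi_\alpha)\}_{\alpha\in I}$ is an atlas.

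Next I would check openness. By construction
\[
\widetilde\phi_\alpha(\widetilde U_\alpha)=\phi_\alpha(U_\alpha)\times\R^n,\qquad \widetilde\phi_\alpha(\widetilde U_\alpha\cap\widetilde U_\beta)=\phi_\alpha(U_\alpha\cap U_\beta)\times\R^n,
\]
both of which are open in $\R^{2n}$ because the original atlas on $M$ is open. So the lifted atlas is open in the sense of the definition above.

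The heart of the proof is the computation of the transition maps. A point $(q,v)\in\widetilde U_\alpha\cap\widetilde U_\beta$ has coordinates $x=\phi_\alpha(q)$ and $\xi=\Phi_{q,U_\alpha}(v)$ in the chart $\widetilde\phi_\alpha$. To obtain its coordinates in $\widetilde\phi_\beta$ we apply $\phi_{\alpha\beta}$ to $x$ for the base point, and by the very equivalence relation used to define tangent vectors the fiber coordinate transforms as $\xi\mapsto\dd_{x}\phi_{\alpha\beta}\cdot\xi$. Hence
\[
\widetilde\phi_{\alpha\beta}(x,\xi)=\bigl(\phi_{\alpha\beta}(x),\,\dd_{x}\phi_{\alpha\beta}\cdot\xi\bigr).
\]
The first component is smooth in $x$ because $\phi_{\alpha\beta}$ is a diffeomorphism between open subsets of $\R^n$, and the second component is smooth in $(x,\xi)$ because $x\mapsto \dd_x\phi_{\alpha\beta}$ is smooth (all partial derivatives of a $C^\infty$ map are $C^\infty$) and the expression is linear, hence smooth, in $\xi$. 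Thus all transition maps are $C^\infty$, and the lifted atlas is smooth, so it defines a smooth manifold structure on $TM$.

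The only real subtlety is checking that the fiber coordinate transformation really is $\dd_x\phi_{\alpha\beta}$; this is not an obstacle so much as a bookkeeping step, and it follows directly from the equivalence relation $w=\dd_{\phi_U(q)}\phi_{U,V}v$ defining tangent vectors. Everything else, including injectivity of the chart maps and openness of the images, reduces to the corresponding properties of the atlas on $M$ together with the fact that $\R^n$ is open in itself.
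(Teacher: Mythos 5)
Your proof is correct and follows the same route as the paper's: lifting the atlas on $M$ via $\pi^{-1}(U_\alpha)$ and the fiberwise identifications $\Phi_{q,U_\alpha}$, then verifying that the transition maps $(x,\xi)\mapsto(\phi_{\alpha\beta}(x),\dd_x\phi_{\alpha\beta}\xi)$ are smooth. Your extra checks of injectivity, covering, and openness are details the paper leaves implicit, but the argument is the same.
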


\begin{proof}
Let $\{(U_\alpha,\phi_\alpha)\}_{\alpha\in I}$ be an atlas in the equivalence class defining $M$. We set $\hat{U}_\alpha :=\pi^{-1}(U_\alpha)$ and 
\begin{align}
\begin{split}
    \hat{\phi}_\alpha\colon \hat{U}_\alpha&\to \R^n\times \R^n,\\
    (q,v)&\mapsto (\phi_\alpha(q),\Phi_{q,U_\alpha}v).
\end{split}
\end{align}
Note that the chart maps are linear in the fibers. The transition maps are then given by 
\[
\hat{\phi}_{\alpha\beta}(x,w)=(\phi_{\alpha\beta}(x),\dd_x\phi_{\alpha\beta}w).
\]
We can then define the tangent bundle of $M$ to be the equivalence class of the atlas $\{(\hat{U}_\alpha,\hat{\phi}_\alpha)\}_{\alpha\in I}$.
\end{proof}

\begin{rem}
Note that $\dim TM=2\dim M$.
\end{rem}

\subsection{Vector bundles}

\begin{defn}[Vector bundle]
A \emph{vector bundle} of rank $r$ over a manifold $M$ of dimension $n$ is a manifold $E$ together with a surjection $\pi\colon E\to M$ such that
\begin{enumerate}
    \item $E_q:=\pi^{-1}(q)$ is an $r$-dimensional vector space for all $q\in M$.
    \item $E$ possesses an atlas of the form $\{(\tilde{U}_\alpha,\tilde{\phi}_\alpha)\}_{\alpha\in I}$ with $\tilde{U}_\alpha=\pi^{-1}(U_\alpha)$ for an atlas $\{(U_\alpha,\phi_\alpha)\}_{\alpha\in I}$ of $M$ and 
    \begin{align}
    \begin{split}
        \tilde{\phi}_\alpha\colon \tilde{U}_\alpha&\to \R^n\times\R^r\\
        (q,v\in E_q)&\mapsto (\phi_\alpha(q),A_\alpha(q)v),
    \end{split}
    \end{align}
    where $A_\alpha(q)$ is a linear isomorphism for all $q\in U_\alpha$.
\end{enumerate}

\end{defn}



\begin{rem}
We usually call $E$ the \emph{total space} and $M$ the \emph{base space}. Moreover, we usually call $E_q:=\pi^{-1}(q)$ the \emph{fiber} at $q\in M$. 
\end{rem}

The maps
\begin{align}
\begin{split}
A_{\alpha\beta}\colon U_{\alpha}\cap U_\beta&\to \End(\R^r),\\
q&\mapsto A_{\alpha\beta}(q):=A_\beta(q)A_\alpha(q)^{-1}\colon \R^r\to \R^r.
\end{split}
\end{align}
are smooth for all $\alpha,\beta\in I$. The transition maps
\[
\tilde{\phi}_{\alpha\beta}(x,u)=(\phi_{\alpha\beta}(x),A_{\alpha\beta}(\phi^{-1}_\alpha(x))u)
\]
are linear in the second factor $\R^r$.

\begin{ex}[Tangent bundle]
The \emph{tangent bundle} $TM$ of a manifold $M$ is an example of a vector bundle. 
\end{ex}

\begin{ex}[Trivial bundle]
Let $M$ be a manifold. Then we can define a vector bundle of rank $n$ with total space $E=M\times \R^n$ and base space $M$. Note that $\pi\colon E\to M$ is the projection onto the first factor. We call this a \emph{trivial bundle} over $M$.
\end{ex}

\begin{ex}[M\"obius band]
\label{ex:Mobius_band}
We can view the M\"obius band $E$ as a vector bundle of rank 1 over the circle $S^1$. Each fiber at a point $x\in S^1$ is given in the form $U\times \R$, where $U\subset S^1$ is an open arc on the circle including $x$. Note that the total space $E$ is not given by the trivial bundle $S^1\times \R$ which would be the cylinder.
\end{ex}

\begin{defn}[Line bundle]
We call a vector bundle a \emph{line bundle} if it has rank 1.
\end{defn}

\begin{rem}
Example \ref{ex:Mobius_band} is an example of a line bundle.
\end{rem}

\begin{defn}[Morphism of vector bundles]
Let $(E_1,M_1,\pi_1)$ and $(E_2,M_2,\pi_2)$ be two vector bundles. A \emph{morphism} between $(E_1,M_1,\pi_1)$ and $(E_2,M_2,\pi_2)$ is given by a pair of continuous maps $f\colon E_1\to E_2$ and $g\colon M_1\to M_2$ such that $g\circ \pi_1=\pi_2\circ f$ and for all $q\in M_1$, the map $\pi_1^{-1}(q)\to \pi_2^{-1}(g(q))$ induced by $f$ is a homomorphism of vector spaces.
\end{defn}

\begin{defn}[Section]
Let $\pi\colon E\to M$ be a vector bundle. A \emph{section} on an open subset $U\subset M$ is a continuous map $\sigma\colon U\to E$ such that $\pi\circ \sigma=\id_U$.
\end{defn}

\begin{rem}
We denote the space of sections of a vector bundle $E\to M$ by $\Gamma(E)$.
\end{rem}

\subsection{Vector fields}

\begin{defn}[Vector field]
Let $M$ be a manifold. A section $X\in \Gamma(TM)$ of the tangent bundle $TM$ is called a \emph{vector field}. 
\end{defn}

In an atlas $\{(U_\alpha,\phi_\alpha)\}_{\alpha\in I}$ of $M$ and the corresponding atlas $\{(\tilde{U}_\alpha,\tilde{\phi}_\alpha)\}_{\alpha\in I}$ of $TM$, a vector field $X$ is represented by a collection of smooth maps $X_\alpha\colon \phi_\alpha(U_\alpha)\to \R^n$. All these maps are related by 
\[
X_\beta(\phi_{\alpha\beta}(x))=\dd_x\phi_{\alpha\beta}X_\alpha(x),\quad \forall \alpha,\beta\in I,\forall x\in \phi_\alpha(U_\alpha\cap U_\beta).
\]

\begin{rem}
The vector at $q\in M$ defined by the vector field $X$ is usually denoted by $X_q$ as well as by $X(q)$. The latter notation is often avoided as one may apply a vector field to a function $f$, and in this case the standard notation is $X(f)$. One also uses $X_\alpha$ to denote the representation of $X$ in the chart with index $\alpha$, but this should not create confusion with the notation $X_q$ for $X$ at the point $q$.
\end{rem}

We can add and multiply vector fields. Let $X,Y\in \Gamma(TM)$ be two vector fields on $M$, $\lambda\in \R$, and $f\in C^\infty(M)$. Then 
\begin{align}
(X+Y)_q&:=X_q+Y_q,\quad \forall q\in M\\
(\lambda X)_q&:=\lambda X_q,\quad \forall q\in M\\
(fX)_q&:=f(q)X_q,\quad \forall q\in M.
\end{align}

\begin{rem}
We denote the space of vector fields on a manifold $M$ by $\mathfrak{X}(M):=\Gamma(TM)$.
\end{rem}

Let $F\colon M\to N$ be a diffeomorphism between two manifolds. If $X$ is a vector field on $M$, then $\dd_q FX_q$ is a vector in $T_{F(q)}N$ for each $q\in M$. If $F$ is a diffeomorphism, we can perform this construction for each $y\in N$, by setting $q=F^{-1}(y)$, and define a vector field, denoted by $F_*X$ on $N$:
\[
(F_*X)_{F(q)}:=\dd_q FX_q,\quad \forall q\in M,
\]
or, equivalently, 
\[
(F_*X)_y=\dd_{F^{-1}(y)}FX_{F^{-1}(y)},\quad\forall y\in N.
\]

\begin{defn}[Push-forward of a vector field]
For a map of manifolds $F\colon M\to N$, the $\R$-linear map $F_*\colon \mathfrak{X}(M)\to \mathfrak{X}(N)$ is called the \emph{push-forward} of vector fields.
\end{defn}

\begin{rem}
Note that if $G\colon N\to Z$ is another diffeomorphism between manifolds, we immediately have 
\[
(G\circ F)_*=G_*F_*.
\]
Moreover, we have $(F_*)^{-1}=(F^{-1})_*$.
\end{rem}

If $U\subset \R^n$ is an open subset, $X$ a smooth vector field and $f$ a smooth map, then we can define 
\[
X(f)=\sum_{i=1}^nX^i\frac{\de f}{\de x^i},
\]
where on the right hand side we regard $X$ as a map $U\to \R^n$. Each $X^i$ is given by a map in $C^\infty(M)$.
Note that the map $C^\infty(M)\to C^\infty(M)$, $f\mapsto X(f)$, is $\R$-linear and satisfies the Leibniz rule
\[
X(fg)=X(f)g+fX(g),\quad\forall f,g\in C^\infty(M).
\]
This means, choosing local coordinates $(x^i)$ on $M$, we can write a vector field as 
\[
X=\sum_{i=1}^nX^i\frac{\de}{\de x^i}\in \mathfrak{X}(M).
\]

\begin{defn}[Commutator of vector fields]
\label{defn:commutator}
One can define an anti-symmetric $\R$-bilinear map on $\mathfrak{X}(M)$ by
\begin{align}
\begin{split}
    [\enspace,\enspace]\colon \mathfrak{X}(M)\times \mathfrak{X}(M)&\to \mathfrak{X}(M)\\
    (X,Y)&\mapsto[X,Y]:=XY-YX,
\end{split}
\end{align}
called the \emph{commutator} of vector fields.
\end{defn}

\begin{exe}[Jacobi identity]
Show that for all $X,Y,Z\in \mathfrak{X}(M)$
\[
[X,[Y,Z]]+[Y,[Z,X]]+[Z,[X,Y]]=0.
\]
\end{exe}

\begin{exe}[Derivation]
Show that $[\enspace,\enspace]$ is a derivation, i.e. for $X,Y\in \mathfrak{X}(M)$ and $f\in C^\infty(M)$ we have 
\[
[X,fY]=X(f)Y+f[X,Y].
\]
\end{exe}

\subsection{Flow of a vector field}
To a vector field $X\in \mathfrak{X}(M)$ we associate the ODE 
\begin{equation}
\label{eq:ODE}
\dot{q}=X(q).
\end{equation}

\begin{defn}[Integral curve]
A solution of \eqref{eq:ODE} is called an \emph{integral curve}. It is given by a path $q\colon I\to M$ such that $\dot{q}(t)=X(q(t))\in T_{q(t)}M$ for all $t\in I$. Here, $I:=[a,b]\subset \R$ denotes some interval.  
\end{defn}

\begin{defn}[Maximal integral curve]
An integral curve is called \emph{maximal} if it is not the restriction of a solution to a proper subset of its domain.
\end{defn}

\begin{defn}[Flow]
\label{defn:flow}
Let $M$ be a manifold and let $X\in\mathfrak{X}(M)$ be a vector field. The \emph{flow} $\Phi^X_t$ of $X$ is given as follows: For $q\in M$ and $t$ in a neighborhood of $0$, $\Phi^X_t(q)$ is the unique solution at time $t$ to the Cauchy problem with initial condition at $q\in M$. Explicitly, 
\begin{subequations}
\begin{empheq}[box=\widefbox]{align}
\frac{\de}{\de t}\Phi^X_t(q)&=X(\Phi_t^X(q))\nonumber\\
\Phi^X_0(q)&=q\nonumber
\end{empheq}
\end{subequations}
\end{defn}

\begin{rem}
One can then show that 
\begin{equation}
\label{eq:flow}
\Phi^X_{t+s}(q)=\Phi_t^X(\Phi^X_s(q)),\quad \forall q\in M
\end{equation}
and for all $t,s\in \R$ such that the flow is defined.
\end{rem}

\begin{defn}[Complete vector field]
A vector field is called \emph{complete} if all its integral curves exist for all $t\in\R$.
\end{defn}

\begin{defn}[Global flow]
The flow of a complete vector field $X\in \mathfrak{X}(M)$ is a diffeomorphism $\Phi^X_t\colon M\to M$ for all $t\in \R$. We then call $\Phi^X_t$ the \emph{global flow} of $X$.
\end{defn}

\begin{rem}
The condition \eqref{eq:flow} for a global flow can then be written more compactly as 
\[
\Phi^X_{t+s}=\Phi^X_t\circ \Phi^X_s
\]
\end{rem}

\subsection{Cotangent bundle}

If $E$ is a vector bundle over $M$, then the union of the dual spaces $E_q^*$ is also a vector bundle, called the \emph{dual bundle} of $E$. Namely, let $E^*=\bigsqcup_{q\in M}E_q^*$. We denote an element of $E^*$ as a pair $(q,\omega)$ with $\omega\in E_q^*$. We let $\tilde{\pi}\colon E^*\to M$ to be such that $\tilde{\pi}(q,\omega)=q$ (the projection onto the first factor). To an atlas $\{(\tilde{U}_\alpha,\tilde{\phi}_\alpha)\}_{\alpha\in I}$ of $E$ we associate the atlas $\{(\hat{U}_\alpha,\hat{\phi}_\alpha)\}_{\alpha\in I}$ of $E^*$ with $\hat{U}_\alpha=\tilde{\pi}^{-1}(U_\alpha)=\bigsqcup_{q\in M}E^*_q$ and 
\begin{align}
\begin{split}
    \hat{\phi}_\alpha\colon \hat{U}_\alpha&\to \R^n\times (\R^r)^*\\
    (q,\omega\in E^*_q)&\mapsto (\phi_\alpha(q),(A_\alpha(q)^*)^{-1}\omega),
\end{split}
\end{align}
where we regard $(\R^r)^*$ as the manifold $\R^r$ with its standard structure. 

\begin{defn}[Cotangent bundle]
The dual bundle of the tangent bundle $TM$ of a manifold $M$, denoted by $T^*M$ is called the \emph{cotangent bundle} of $M$.
\end{defn}

\subsection{Differential 1-forms}
Similarly as we have defined vector fields as sections of the tangent bundle, we can construct a \emph{differential 1-form} to be a section of the cotangent bundle. We denote the space of 1-forms on a manifold $M$ by 
\[
\Omega^1(M):=\Gamma(T^*M).
\]

\begin{defn}[de Rham differential]
The \emph{de Rham differential} is defined as the map
\begin{align}
\begin{split}
    \dd\colon C^\infty(M)&\to \Omega^1(M)\\
    f&\mapsto \dd f
\end{split}
\end{align}
which is $\R$-linear and satisfies the Leibniz rule
\[
\dd(fg)=\dd fg+f\dd g,\quad \forall f,g\in C^\infty(M).
\]
\end{defn}

Note that a 1-form $\omega\in \Gamma(T^*M)$ is dual to a vector field $X\in \Gamma(TM)$, so there is a pairing between them. The pairing is often denoted by $\iota_X\omega$. The operation $\iota$ is often called the \emph{contraction}. 
If $V\subset \R^n$ is open, we can consider the differentials $\dd x^i$ of the coordinate functions $x^i$. Note that we have 
\[
\iota_{\de_j}\dd x^i=\frac{\de x^i}{\de x^j}=\delta^i_j,
\]
where $\de_j:=\frac{\de}{\de x^j}$. Thus we can write a 1-form as 
\[
\omega=\sum_{i=1}^n\omega_i\dd x^i,
\]
where $\omega_i\in C^\infty(M)$ are uniquely determined functions for $1\leq i\leq n$. If $f\in C^\infty(M)$, then 
\[
\dd f=\sum_{i=1}^n\de_if\dd x^i.
\]
One can define a new notion of derivative by using the notion of a flow of a vector field. We define the \emph{Lie derivative} of a 1-form by 
\[
L_X\omega:=\frac{\de}{\de t}\bigg|_{t=0}(\Phi^X_{-t})_*\omega=\frac{\de}{\de t}\bigg|_{t=0}(\Phi^X_t)^*\omega,\quad \forall X\in\mathfrak{X}(M),\forall \omega\in \Omega^1(M).
\]

\section{Tensor fields}

\subsection{Tensor bundle}
Let $V$ be a vector space over a field $K$. We define the $k$-th \emph{tensor power} by 
\[
V^{\otimes k}:=\underbrace{V\otimes\dotsm\otimes V}_{\text{$k$ times}}.
\]
By convention we have $V^{\otimes 0}:=K$. Moreover, note that $\dim V^{\otimes k}=(\dim V)^k$. 

\begin{defn}[Tensor]
We call an element of $V^{\otimes k}$ a tensor of order $k$. 
\end{defn}

Let $(e_i)_{i\in I}$ be a basis of $V$. Then $(e_{i_1}\otimes\dotsm \otimes e_{i_k})_{i_1,\ldots,i_k\in I}$ is a basis of $V^{\otimes k}$ and a tensor $T$ of order $k$ can be uniquely written by 
\[
T=\sum_{i_1,\ldots,i_k\in I}T^{i_1\dotsm i_k}e_{i_1}\otimes \dotsm \otimes e_{i_k},\quad T^{i_1\dotsm i_k}\in K,\forall i_1,\ldots, i_k\in I.
\]

\begin{defn}[Tensor algebra]
The \emph{tensor algebra} of a vector space $V$ is defined as 
\[
T(V):=\bigoplus_{k=0}^\infty V^{\otimes k}.
\]
\end{defn}

\begin{defn}[Tensor of type $(k,s)$]
We define a \emph{tensor of type $(k,s)$} for a vector space $V$ as an element of 
\[
T^k_s(V):=V^{\otimes k}\otimes (V^*)^{\otimes s}.
\]
\end{defn}

\begin{rem}
Tensors of type $(0,s)$ are called \emph{covariant} tensors of order $s$ and tensors of type $(k,0)$ are called \emph{contravariant} of order $k$.
\end{rem}

If we pick a basis $(e_i)_{i\in I}$ of $V$ and consider the dual basis $(e^{j})_{j\in I}$ of $V^*$. Then we get a basis of $T^k_s(V)$ as
\[
(e_{i_1}\otimes\dotsm \otimes e_{i_k}\otimes e^{j_1}\otimes \dotsm \otimes e^{j_s})_{i_1,\ldots,i_k,j_1,\ldots,j_s\in I}. 
\]
A tensor of type $(k,s)$ can then be uniquely written as 
\[
T=\sum_{i_1,\ldots,i_k,j_1,\ldots,j_s\in I}T^{i_1\dotsm i_k}_{j_1\dots j_s}e_{i_1}\otimes \dotsm \otimes e_{i_k}\otimes e^{j_1}\otimes\dotsm \otimes e^{j_s}.
\]

\begin{defn}[Tensor bundle]
If $E$ is a vector bundle over $M$, we define $T_s^k(E)$ as the vector bundle whose fiber at $q$ is $T^k_s(E_q)$. Namely, to an adapted atlas $\{(\tilde{U}_\alpha,\tilde{\phi}_\alpha)\}_{\alpha\in I}$ of $E$ over the trivializing atlas $\{(U_\alpha,\phi_\alpha)\}_{\alpha\in I}$ of $M$, we associate the atlas $\{(\hat{U}_\alpha,\hat{\phi}_\alpha)\}_{\alpha\in I}$ of $T^k_s(E)$ with $\hat{U}_\alpha=\tilde{\pi}^{-1}(U_\alpha)=\bigsqcup_{q\in U_\alpha}T^k_s(E_q)$ and 
\begin{align}
\begin{split}
    \hat{\phi}_\alpha\colon \hat{U}_\alpha&\to \R^n\times T^k_s(\R^r)\\
    (q,\omega\in T^k_s(E_q))&\mapsto (\phi_\alpha(q),(A_\alpha(q))^k_s\omega),
\end{split}
\end{align}
where we identify $T^k_s(\R^r)$ with $\R^{r(k+s)}$.
\end{defn}

We have the transition maps
\[
\hat{\phi}_{\alpha\beta}(x,u)=(\phi_{\alpha\beta}(x),(A_{\alpha\beta}(\phi^{-1}_\alpha(x)))^k_su).
\]

\subsection{Multivector fields and differential $s$-forms}
\label{subsec:Multivectorfields_differentialforms}
The important case is when $E=TM$ is the tangent bundle of a $n$-dimensional manifold $M$. Denote by $\mathrm{Alt}^k_s$ the tensor bundle induced by the \emph{alternating tensor product} (wedge product) denoted by $\land$.
Then a contravariant tensor field of order $k$ is also called a \emph{multivector field} and a covariant tensor field of order $s$ is called a \emph{differential $s$-form}. In particular, the space of multivector fields of order $k$ on an $n$-dimensional manifold $M$ is given by 
\[
\mathfrak{X}^k(M):=\Gamma(\mathrm{Alt}^k_0(TM))=\Gamma\left(\bigwedge^kTM\right).
\]
Choosing local coordinates $(x^i)$ on $M$ we can represent an element $X\in \mathfrak{X}^k(M)$ as 
\[
X=\sum_{1\leq i_1<\dotsm <i_k\leq n}X^{i_1\dotsm i_k}\de_{i_1}\land\dotsm \land \de_{i_k}.
\]
The space of differential $s$-forms is then given by 
\[
\Omega^s(M):=\Gamma(\mathrm{Alt}^0_s(TM))=\Gamma\left(\bigwedge^s T^*M\right).
\]
Choosing local coordinates $(x^i)$ on $M$ we can represent an element $\omega\in \Omega^s(M)$ as 
\[
\omega=\sum_{1\leq i_1<\dotsm <i_s\leq n}\omega_{i_1\dotsm i_s}\dd x^{i_1}\land\dotsm \land \dd x^{i_s}.
\]
We can define a product $\land$ between differential forms. For $\omega\in\Omega^s(M),\eta\in \Omega^\ell(M)$ given by 
\[
\omega\land \eta=\sum_{1\leq i_1<\dotsm <i_s\leq n}\,\, \sum_{1\leq j_1<\dotsm <j_\ell\leq n} \omega_{i_1\dotsm i_s}\eta_{j_1\dotsm j_\ell}\dd x^{i_1}\land\dotsm \land\dd x^{i_s}\land \dd x^{j_1}\land\dotsm \land \dd x^{j_\ell}.
\]
Note that is $s+\ell>0$, then $\omega\land \eta=0$.
\begin{exe}
Show that if $\omega\in \Omega^s(M)$ and $\eta\in \Omega^\ell(M)$, then 
\[
\omega\land \eta=(-1)^{s+\ell}\eta\land \omega.
\]
\end{exe}

One can extend the de Rham differential to general $s$-forms 
\begin{align}
\begin{split}
\dd\colon \Omega^s(M)&\to \Omega^{s+1}(M)\\
\omega&\mapsto \dd\omega
\end{split}
\end{align}
and obtain a sequence
\begin{equation}
    0\to C^\infty(M)\xrightarrow{\dd}\Omega^1(M)\xrightarrow{\dd}\Omega^2(M)\xrightarrow{\dd}\dotsm \xrightarrow{\dd}\Omega^s(M)\xrightarrow{\dd}\Omega^{s+1}(M)\xrightarrow{\dd}\dotsm\xrightarrow{\dd}\Omega^n(M)\to 0
\end{equation}
This sequence is called \emph{de Rham complex}. Note that by construction $\Omega^0(M)=C^\infty(M)$.
In particular, for a differential form $\omega=\sum_{1\leq i_1<\dotsm <i_s\leq n}\omega_{i_1\dotsm i_s}\dd x^{i_1}\land\dotsm \land \dd x^{i_s}\in \Omega^s(M)$ we get 
\[
\dd\omega=\sum_{1\leq i_1<\dotsm <i_s\leq n}\,\,\sum_{j=1}^n\de_j\omega_{i_1\dotsm i_s}\dd x^j\land\dd x^{i_1}\land\dotsm \land \dd x^{i_s}.
\]

\begin{exe}
Show that $\dd^2=0$. \emph{Hint: use that derivatives commute, i.e. $\frac{\de}{\de x^i}\frac{\de}{\de x^j}=\frac{\de}{\de x^j}\frac{\de}{\de x^i}$.}
\end{exe}

\begin{exe}
Show that if $\omega\in \Omega^s(M)$ and $\eta\in \Omega^\ell(M)$ then 
\[
\dd(\omega\land \eta)=\dd\omega\land \eta+(-1)^s\omega\land \dd\eta.
\]
\end{exe}

We can extend the Lie derivative and the contraction to any differential $s$-forms. 
\begin{defn}[Lie derivative]
We define the \emph{Lie derivative} of an $s$-form $\omega$ with respect to a vector field $X$ as 
\[
L_X\omega:=\lim_{t\to 0}\frac{(\Phi^X_t)^*\omega-\omega}{t},
\]
where $\Phi^X_t$ denotes the flow of $X$ at time $t$. Explicitly, we have
\begin{multline}
    L_X\omega=\sum_{1\leq i_1<\ldots<i_s\leq n} X(\omega_{i_1\dotsm i_s})\dd x^{i_1}\land\dotsm \land \dd x^{i_s}+\\ +\sum_{1\leq i_1<\ldots<i_s\leq n}\,\,\sum_{1\leq k,r\leq n}(-1)^{k-1}\omega_{i_1\dotsm i_s}\de_rX^{i_k}\dd x^r\land \dd x^{i_1}\land\dotsm \land \widehat{\dd x^{i_k}}\land \dotsm\land \dd x^{i_s},
\end{multline}
where $\enspace\widehat{\enspace}\enspace$ means that this element is omitted.
\end{defn}

\begin{exe}[Properties for Lie derivative]
Let $X,Y\in \mathfrak{X}(M)$. Show that
\begin{itemize}
    \item $L_Xf=X(f)$, $\forall f\in C^\infty(M)$,
    \item $L_X(\omega\land \eta)=L_X\omega\land \eta+\omega\land L_X\eta$, $\forall \omega\in \Omega^s(M),\forall \eta\in \Omega^\ell(M)$,
    \item $L_X\dd\omega=\dd L_X\omega$, $\forall \omega\in \Omega^s(M)$,
    \item $L_XL_Y\omega-L_YL_X\omega=L_{[X,Y]}\omega$, $\forall \omega\in \Omega^s(M)$,
\end{itemize}
\end{exe}

\begin{defn}[Contraction]
The \emph{contraction} of a vector field $X$ with a differential $s$-form $\omega$ is the differential $(s-1)$-form given by 
\[
\iota_X\omega=\sum_{1\leq i_1<\ldots< i_s\leq n}\,\, \sum_{k=1}^n(-1)^k\omega_{i_1\ldots i_s}X^{i_k}\dd x^{i_1}\land\dotsm \land \widehat{\dd x^{i_k}}\land\dotsm \land \dd x^{i_s}.
\]
\end{defn}

\begin{rem}
If $\omega\in \Omega^0(M)$, i.e. $s=0$, then for any $X\in \mathfrak{X}(M)$ we get that $\iota_X\omega$ is automatically zero.
\end{rem}

\begin{exe}
Let $X\in \mathfrak{X}(M)$ and let $\omega\in \Omega^s(M)$, $\eta\in \Omega^\ell(M)$. Show that 
\[
\iota_X(\omega\land \eta)=\iota_X\omega\land \eta+(-1)^s\omega\land \iota_X\eta.
\]
Moreover, show that if $Y\in \mathfrak{X}(M)$ is another vector field, we have
\[
\iota_X\iota_Y\omega=-\iota_Y\iota_X\omega,
\]
and 
\[
\iota_XL_Y\omega-L_Y\iota_X\omega=\iota_{[X,Y]}\omega.
\]
\end{exe}

\begin{thm}[Cartan's magic formula]
\label{thm:Cartan_magic_formula}
Let $M$ be a manifold. Let $\omega\in \Omega^s(M)$ and let $X\in \mathfrak{X}(M)$. Then
\begin{equation}
\label{eq:Cartan_magic_formula}
L_X\omega=\iota_X\dd\omega+\dd\iota_X\omega.
\end{equation}
\end{thm}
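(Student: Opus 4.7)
The plan is to avoid the rather tedious direct verification in local coordinates and instead argue structurally, by showing that both sides of \eqref{eq:Cartan_magic_formula} are $\R$-linear, degree-preserving derivations on the graded algebra $\Omega^\bullet(M)$ that agree on a set of local algebra generators. Since $\Omega^\bullet(M)$ is locally generated, as a wedge algebra, by $C^\infty(M)$ together with the exact $1$-forms $\dd f$, this will suffice.

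First I would check the derivation property. That $L_X$ is a degree-$0$ derivation of $(\Omega^\bullet(M),\wedge)$ is one of the exercises already stated above. For the right-hand side, I would use the graded Leibniz rules for $\dd$ (degree $+1$ antiderivation) and for $\iota_X$ (degree $-1$ antiderivation), both of which appear as exercises in Subsection~\ref{subsec:Multivectorfields_differentialforms}. Expanding, for $\alpha\in\Omega^s(M)$ and $\beta\in\Omega^\ell(M)$,
\begin{align*}
(\iota_X\dd+\dd\iota_X)(\alpha\wedge\beta)
&=\iota_X\bigl(\dd\alpha\wedge\beta+(-1)^s\alpha\wedge\dd\beta\bigr)+\dd\bigl(\iota_X\alpha\wedge\beta+(-1)^s\alpha\wedge\iota_X\beta\bigr),
\end{align*}
and after applying the graded Leibniz rules again, the four mixed terms cancel in pairs due to the sign identities $(-1)^{s+1}+(-1)^s=0$ and $(-1)^s+(-1)^{s-1}=0$, leaving exactly $(\iota_X\dd+\dd\iota_X)\alpha\wedge\beta+\alpha\wedge(\iota_X\dd+\dd\iota_X)\beta$.

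Next I would check equality on generators. On $f\in\Omega^0(M)=C^\infty(M)$ we have $L_X f=X(f)$ by the exercise on Lie derivatives, while $\iota_X\dd f+\dd\iota_X f=\iota_X\dd f=\dd f(X)=X(f)$, using that $\iota_X f=0$ for a $0$-form. On an exact $1$-form $\dd f$, using $\dd^2=0$ and again $\iota_X f=0$,
\[
(\iota_X\dd+\dd\iota_X)(\dd f)=\iota_X\dd^2 f+\dd\iota_X\dd f=\dd(X(f))=\dd L_X f=L_X\dd f,
\]
where the last equality is the exercise $L_X\dd=\dd L_X$.

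Finally, since any $\omega\in\Omega^s(M)$ is locally a sum of terms of the form $f_0\,\dd f_1\wedge\cdots\wedge\dd f_s$, and both $L_X$ and $\iota_X\dd+\dd\iota_X$ are degree-$0$ derivations of $(\Omega^\bullet(M),\wedge)$ that coincide on each $f_i$ and each $\dd f_i$, they must coincide on every such local expression, hence on $\omega$. The identity is local, so this proves \eqref{eq:Cartan_magic_formula} globally. The main potential obstacle is the sign bookkeeping in establishing that $\iota_X\dd+\dd\iota_X$ is an ordinary (ungraded) derivation; this is the one place where one must be careful, but the cancellation is forced by the degrees of $\dd$ and $\iota_X$.
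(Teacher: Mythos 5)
Your proof is correct: the computation showing that $\iota_X\dd+\dd\iota_X$ is an ungraded derivation is right (the mixed terms do cancel with the signs you indicate), the checks on $f$ and $\dd f$ are valid, and the localization argument via $\omega=\sum f_0\,\dd f_1\wedge\dots\wedge\dd f_s$ is legitimate because all three operators $L_X$, $\dd$, $\iota_X$ are local. Note, however, that the paper does not actually supply a proof — Theorem \ref{thm:Cartan_magic_formula} is left as an exercise — and the route it implicitly sets up is different from yours: the text provides explicit coordinate formulas for $L_X\omega$, $\iota_X\omega$ and $\dd\omega$, so the intended argument is a direct index computation comparing the two sides term by term. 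Your structural argument buys a cleaner proof with no index bookkeeping and isolates exactly what is needed (both sides are derivations of $(\Omega^\bullet(M),\wedge)$ commuting appropriately with $\dd$ and agreeing on $\Omega^0$), at the price of leaning on several facts stated earlier only as exercises — the graded Leibniz rules for $\dd$ and $\iota_X$, $L_X f=X(f)$, $L_X$ being a derivation, and $L_X\dd=\dd L_X$. This is not circular, since those follow from the flow definition $L_X\omega=\frac{\dd}{\dd t}\big|_{t=0}(\Phi^X_t)^*\omega$ together with the facts that pullback is an algebra homomorphism commuting with $\dd$, but to make the write-up self-contained you should say so explicitly (or prove those four facts first). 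The coordinate proof, by contrast, is longer but self-contained and has the side benefit of verifying the paper's explicit component formulas and sign conventions directly.
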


\begin{exe}
Prove Theorem \ref{thm:Cartan_magic_formula}.
\end{exe}

\section{Integration on manifolds and Stokes' theorem}

\subsection{Integration of densities}
Let $M$ be a manifold endowed with an atlas $\{(U_\alpha,\phi_\alpha)\}_{\alpha\in I}$. The differential $\dd_x\phi_{\alpha\beta}$ is a linear map
\[
T_x\phi_\alpha(U_\alpha\cap U_\beta)\to T_{\phi_{\alpha\beta}(x)}\phi_{\beta}(U_\alpha\cap U_\beta).
\]
Since the image of the charts are open subsets of $\R^n$ we can identify the tangent spaces with $\R^n$. Hence the linear map $\dd_x\phi_{\alpha\beta}$ is canonically given by an $n\times n$ matrix. Let $s\in \R$ and 
\[
A_{\alpha\beta}(q)=\vert\det\dd_{\phi_\alpha(q)}\phi_{\alpha\beta}\vert^{-s}.
\]
\begin{exe}
Show that this defines a line bundle $\vert\Lambda M\vert^s$ over $M$.
\end{exe}

\begin{defn}[$s$-density]
A section of $\vert\Lambda M\vert^s$ is called an \emph{$s$-density}.
\end{defn}

Let $\sigma\in \Gamma(\vert\Lambda M\vert^s)$ be an $s$-density. We represent it in the chart $(U_\alpha,\phi_\alpha)$ by the smooth function $\sigma_\alpha$. It satisfies the transition rule
\begin{equation}
\label{eq:transformation_density}
\sigma_\beta(\phi_{\alpha\beta}(x))=\vert\det\dd_q\phi_{\alpha\beta}\vert^{-s}\sigma_\alpha(x),\quad \forall \alpha,\beta\in I,\forall x\in \phi_\alpha(U_\alpha\cap U_\beta)
\end{equation}
For $s=1$ one simply speaks of a \emph{density}. Densities are the natural objects to integrate on a manifold. Let $\sigma$ be a density on $M$ and let $\{(U_\alpha,\phi_\alpha)\}_{\alpha\in I}$ be an atlas on $M$. One can show that there always exists a finite partition of unity $\{\rho_j\}_{j\in J}$ subordinate to $\{U_\alpha\}_{\alpha\in I}$, i.e. for each $j\in J$ we have an $\alpha_j$ with $\supp \rho_j\subset U_{\alpha_j}$. In fact, $\supp \rho_j$ is compact. Since $\phi_{\alpha_j}$ is a homeomorphism, also $\phi_{\alpha_j}(\supp \rho_j)$ is compact. The representation $(\rho_j\sigma)_{\alpha_j}$ of the density $\rho_j\sigma$ in the chart $\{(U_\alpha,\phi_\alpha)\}_{\alpha\in I}$ is smooth in $\phi_{\alpha_j}(U_{\alpha_j})$, so it is integrable on $\phi_{\alpha_j}(\supp \rho_j)$. We define
\begin{equation}
\label{eq:integral1}
\int_{M;\{(U_\alpha,\phi_\alpha)\};\{\rho_j\}}\sigma:=\sum_{j\in J}\int_{\phi_{\alpha_j}(\supp \rho_j)}(\rho_j\sigma)_{\alpha_j}\dd^nx,
\end{equation}
where $\dd^nx$ denotes the Lebesgue measure on $\R^n$ (also denoted by $\dd x^1\dotsm \dd x^n$). 

\begin{lem}
The integral defined in \eqref{eq:integral1} does not depend on the choice of atlas and partition of unity. 
\end{lem}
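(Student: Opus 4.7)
The plan is to reduce the statement to the classical change-of-variables formula for Lebesgue integrals on $\R^n$ via a standard refinement argument with partitions of unity.

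First I would establish a local lemma: if $\tau$ is a density whose (compact) support lies in the overlap $U_\alpha\cap V_\beta$ of two charts $(U_\alpha,\phi_\alpha)$ and $(V_\beta,\psi_\beta)$ (possibly taken from two different atlases), then
\[
\int_{\phi_\alpha(\supp \tau)}\tau_\alpha\,\dd^n x \;=\; \int_{\psi_\beta(\supp \tau)}\tau_\beta\,\dd^n y.
\]
This follows by applying the change-of-variables formula to the diffeomorphism $\phi_{\alpha,\beta}=\psi_\beta\circ\phi_\alpha^{-1}$ and then invoking the transformation rule \eqref{eq:transformation_density} with $s=1$: the Jacobian factor $\vert\det\dd_x\phi_{\alpha,\beta}\vert$ produced by the substitution is exactly cancelled by the factor $\vert\det\dd_x\phi_{\alpha,\beta}\vert^{-1}$ coming from the density transformation rule. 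This cancellation is the reason why $s=1$ is the correct exponent for integration.

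Next, given two sets of data $(\{(U_\alpha,\phi_\alpha)\},\{\rho_j\})$ and $(\{(V_\gamma,\psi_\gamma)\},\{\tilde\rho_k\})$, I would consider the common refinement $\{\rho_j\tilde\rho_k\}_{j,k}$. Each product is nonnegative, has compact support contained in $U_{\alpha_j}\cap V_{\gamma_k}$, and satisfies $\sum_{j,k}\rho_j\tilde\rho_k=1$ because $\sum_j\rho_j=\sum_k\tilde\rho_k=1$, so it is again a partition of unity. Using $\sum_k\tilde\rho_k=1$ and linearity of the Lebesgue integral gives
\[
\sum_{j}\int_{\phi_{\alpha_j}(\supp \rho_j)}(\rho_j\sigma)_{\alpha_j}\,\dd^n x \;=\; \sum_{j,k}\int_{\phi_{\alpha_j}(\supp \rho_j\tilde\rho_k)}(\rho_j\tilde\rho_k\sigma)_{\alpha_j}\,\dd^n x.
\]
Applying the local lemma to each $\tau=\rho_j\tilde\rho_k\sigma$, each summand equals the analogous integral computed in the $\psi_{\gamma_k}$ chart. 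Swapping the roles of the two partitions and using $\sum_j\rho_j=1$ then yields $\sum_k\int_{\psi_{\gamma_k}(\supp \tilde\rho_k)}(\tilde\rho_k\sigma)_{\gamma_k}\,\dd^n y$, giving the desired equality.

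The main obstacle is purely bookkeeping in the local lemma, where one must carefully align the classical change-of-variables formula with the density transformation rule so that the Jacobian and its inverse cancel. Once that is in place, the partition-of-unity manipulation is routine, subject only to the standard verification that $\{\rho_j\tilde\rho_k\}$ inherits local finiteness and that each piece is compactly supported inside the corresponding chart intersection, both of which follow from the compactness of the individual supports $\supp \rho_j$ and $\supp \tilde\rho_k$.
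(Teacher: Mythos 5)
Your proposal is correct and follows essentially the same route as the paper's proof: the paper likewise inserts the second partition via $\sum_{\bar j}\bar\rho_{\bar j}=1$, reduces to the pieces $\rho_j\bar\rho_{\bar j}\sigma$ supported in chart overlaps, and uses the change-of-variables formula together with the density transformation rule (with $s=1$, so the Jacobian factors cancel) before resumming with $\sum_j\rho_j=1$. Your packaging of the overlap computation as a separate local lemma is only a cosmetic difference.
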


\begin{proof}
Consider an atlas $\{(\bar U_{\bar \alpha},\bar \phi_{\bar \alpha})\}_{\bar \alpha\in \bar I}$ and a finite partition of unity $\{\bar \rho_{\bar j}\}_{\bar j\in \bar J}$ subordinate to it. From $\sum_{\bar j\in \bar J}\bar \rho_{\bar j}=1$, it follows that $\sigma=\sum_{\bar j\in \bar J}\bar\rho_{\bar j}\sigma$, so we have 
\begin{multline}
    \int_{M;\{(U_\alpha,\phi_\alpha)\};\{\rho_j\}}\sigma=\sum_{j\in J}\int_{\phi_{\alpha_j}(\supp \rho_j)}(\rho_j\sigma)_{\alpha_j}\dd^nx=\sum_{\bar j\in \bar J}\sum_{j\in J}\int_{\phi_{\alpha_{j}}(\supp \rho_j)}(\rho_j\rho_{\bar j}\sigma)_{\alpha_j}\dd^nx,
\end{multline}
where we have taken out the finite sum $\sum_{\bar j\in \bar J}$. Observe that 
\begin{multline}
    S_{j\bar j}:=\int_{\phi_{\alpha_j}(\supp \rho_j)}(\rho_j\bar\rho_{\bar j}\sigma)_{\alpha_j}\dd^nx\\=\int_{\phi_{\alpha_j}(\supp\rho_j\cap \supp\bar\rho_{\bar j})}(\rho_j\bar\rho_{\bar j}\sigma)_{\alpha_j}\dd^nx\\=\int_{\phi^{-1}_{\bar\alpha_{\bar j}\alpha_j}(\bar\phi_{\bar\alpha_{\bar j}}(\supp\rho_j\cap \supp\bar\rho_{\bar j}))}(\rho_j\bar\rho_{\bar j}\sigma)_{\alpha_j}\dd^nx,
\end{multline}
where $\phi_{\bar\alpha_{\bar j}\alpha_j}$ denotes the transition map $\phi_{\alpha_j}(U_{\alpha_j})\to \bar\phi_{\bar\alpha_{\bar j}}(\bar U_{\bar\alpha_{\bar j}})$. Since $\rho_j\bar\rho_{\bar j}\sigma$ is a density, we have 
\[
(\rho_j\bar\rho_{\bar j}\sigma)_{\alpha_j}(x)=\vert \det\dd_x\phi_{\bar\alpha_{\bar j}\alpha_j}\vert(\rho_j\bar\rho_{\bar j}\sigma)_{\bar \alpha_{\bar j}}(\bar x),
\]
with $\bar x:=\phi_{\bar\alpha_{\bar j}\alpha_j}(x)$ and $x\in \phi_{\alpha_j}(\supp \rho_j\cap\supp \bar\rho_{\bar j})$. By change-of-variables we get 
\[
S_{j\bar j}=\int_{\bar\phi_{\bar\alpha_{\bar j}}(\supp\rho_j\cap\supp\bar\rho_{\bar j})}(\rho_j\bar\rho_{\bar j}\sigma)_{\bar \alpha_{\bar j}}\dd^n\bar x=\int_{\bar \phi_{\bar\alpha_{\bar j}}(\supp\bar\rho_{\bar j})}(\rho_j\bar\rho_{\bar j}\sigma)_{\bar\alpha_{\bar j}}\dd^n\bar x.
\]
Hence 
\[
\sum_{j\in J}S_{j\bar j}=\int_{\bar\phi_{\bar\alpha_{\bar j}}(\supp\bar\rho_{\bar j})}(\bar \rho_{\bar j}\sigma)_{\bar \alpha_{\bar j}}\dd^n\bar x
\]
and 
\[
\int_{M;\{(U_\alpha,\phi_\alpha)\};\{\rho_j\}}\sigma=\sum_{\bar j\in \bar J}\sum_{j\in J}S_{j\bar j}=\sum_{\bar j\in \bar J}\int_{\bar\phi_{\bar\alpha_{\bar j}}(\supp\bar\rho_{\bar j})}(\bar \rho_{\bar j}\sigma)_{\bar \alpha_{\bar j}}\dd^n\bar x=\int_{M;\{(\bar U_{\bar \alpha},\bar\phi_{\bar \alpha})\};\{\bar\rho_{\bar j}\}}\sigma.
\]
\end{proof}

We can drop the choice of atlas and partition of unity and simply define
\begin{equation}
\label{eq:integral2}
    \boxed{
    \int_M\sigma:=\sum_{j\in J}\int_{\phi_{\alpha_j}(\supp\rho_j)}(\rho_j\sigma)_{\alpha_j}\dd^nx
    }
\end{equation}

\begin{exe}
Let $M_1$ and $M_2$ be disjoint open subsets of a manifold $M$ such that $M\setminus (M_1\cup M_2)$ has measure zero. Show that
\[
\int_M\sigma=\int_{M_1}\sigma+\int_{M_2}\sigma.
\]
\end{exe}

We can also pullback densities by diffeomorphisms $F\colon M\to N$. We define the pullback by
\[
(F^*\sigma)_q:=(\dd_q F)^*\sigma_{F(q)},\quad \forall q\in M.
\]
We can also formulate it in terms of coordinates. Let $\{(U_\alpha,\phi_\alpha)\}_{\alpha\in I}$ be an atlas of $M$ and let $\{(V_j,\psi_j)\}_{j\in J}$ be an atlas of $N$. Let $\{\sigma_j\}$ denote the representation of the density $\sigma$ in the atlas of $N$ and $\{F_{\alpha j}\}$ the representation of $F$ with respect to the two atlases. Then we have
\[
(F^*\sigma)_\alpha(x)=\vert\det\dd_x F_{\alpha j}\vert^s\sigma_j(F_{\alpha j}(x)),\quad \forall x\in \phi_\alpha(U_\alpha).
\]
\begin{exe}
Show that $F^*$ is linear and that 
\[
F^*(\sigma_1\sigma_2)=F^*\sigma_1F^*\sigma_2.
\]
\end{exe}

\begin{exe}
Show that if $F\colon M\to N$ is a diffeomorphism and $\sigma$ is a density on $M$, then 
\[
\int_M\sigma=\int_NF_*\sigma.
\]
\end{exe}

\subsection{Integration of differential forms}
Similarly as for densities we can define the integral of a differential form for some manifold $M$ as in \eqref{eq:integral2}. The difference is that we want to consider the notion of orientation on $M$. This corresponds to the notion of an \emph{oriented atlas}. 

\begin{defn}
A diffeomorphism $F\colon U\to V$ of open subsets of $\R^n$ is \emph{orientation preserving} (orientation reversing) with respect to the standard orientation if and only if $\det F>0$ ($\det F<0$). 
\end{defn}

\begin{defn}
We call an atlas \emph{oriented} if all its transition maps are orientation preserving.
\end{defn}

At first, one defines the notion of an \emph{orientable manifold}. 

\begin{defn}[Top form]
A differential $n$-form on an $n$-dimensional manifold is called a \emph{top form}. We will denote the space of top forms on a manifold $M$ by $\Omega^\mathrm{top}(M)$ without mentioning the dimension of $M$.
\end{defn}

\begin{defn}[Volume form]
A \emph{volume form} on a manifold $M$ is a nowhere vanishing top form.
\end{defn}

\begin{defn}[Orientable manifold]
A manifold $M$ is called \emph{orientable} if it admits a volume form.
\end{defn}

Let $\{(U_\alpha,\phi_\alpha)\}_{\alpha\in I}$ be an atlas of a manifold $M$. Consider a volume form $v\in \Omega^\mathrm{top}(M)$ whose representation in the given atlas is denoted by $v_\alpha$. Let $v_\alpha=\underline{v_\alpha}\dd x^1\land \dotsm \land\dd x^n$. Then the functions $\underline{v_\alpha}$ transform as 
\[
\underline{v_\alpha}(x)=\det \dd_x\phi_{\alpha\beta}\underline{v_\beta}(\phi_{\alpha\beta}(x)),\quad \forall \alpha,\beta\in I,\forall x\in \phi_\alpha(U_\alpha\cap U_\beta).
\]
This is almost the same as the transformation rule \eqref{eq:transformation_density} for densities. To fix this, we can take absolute values of the representations $v_\alpha$. We define the absolute value $\vert v\vert$ as the density with representations $\vert v_\alpha\vert$. Moreover, we want to restrict everything to top forms that do not change sign (at least locally). Choosing a volume form $v$ on $M$, we can define a $C^\infty(M)$-linear isomorphism 
\[
\phi_v\colon\Omega^\mathrm{top}(M)\to \mathrm{Dens}(M)
\]
as follows: since $v$ is nowhere vanishing, for every top form $\omega$ there is a uniquely defined function $f$ such that $\omega=fv$; the corresponding density is then defined to be $f\vert v\vert$. Formally, we may write
\[
\phi_v\omega=\omega\frac{\vert v\vert}{v}.
\]
Two volume forms $v_1$ and $v_2$ yield the same isomorphism, i.e. $\phi_{v_1}=\phi_{v_2}$, if and only if there is a positive function $g$ such that $v_1=gv_2$. 

\begin{exe}
Show that this defines an equivalence relation on the set of volume forms on $M$.
\end{exe}

\begin{defn}[Orientation]
An equivalence class of volume forms on an orientable manifold $M$ is called an \emph{orientation}. An orientable manifold with a choice of orientation is called \emph{oriented}.
\end{defn}

We denote by $[v]$ an orientation, by $(M,[v])$ the corresponding oriented manifold and by $\phi_{[v]}$ the isomorphism given by $\phi_v$ for any $v\in [v]$. If $M$ admits a partition of unity, we can define the integral of a top form $\omega\in \Omega^\mathrm{top}(M)$ by 
\[
\int_{(M,[v])}\omega:=\int_M\phi_{[v]}\omega,
\]
where we use the already defined integration of densities. More explicitly, for an atlas $\{(U_\alpha,\phi_\alpha)\}_{\alpha\in I}$ we have the representations $\omega_\alpha=\underline{\omega_\alpha}\dd^nx$, for uniquely defined maps $\underline{\omega_\alpha}$, and we get 
\[
(\phi_{[v]}\omega)_\alpha=\underline{\omega_\alpha}\vert \dd^n x\vert.
\]
Hence, the integral of $\omega$ on an oriented manifold $M$ is given by 
\begin{equation}
    \int_{(M,[v])}\omega=\sum_{j\in J}\int_{\phi_{\alpha_j}(\supp\rho_j)}(\rho_j)_{\alpha_j}\underline{\omega_{\alpha_j}}\dd^nx,
\end{equation}
where $\{(U_\alpha,\phi_\alpha)\}_{\alpha\in I}$ is an oriented atlas corresponding\footnote{This means that it is an atlas in which any $v\in[v]$ is represented by a positive volume form.} to $[v]$ and $\{\rho_j\}_{j\in J}$ is a partition of unity subordinate to $\{U_\alpha\}_{\alpha\in I}$. If the identification of differential forms and densities on $\phi_{\alpha_j}(U_{\alpha_j})$ is understood, then we can write
\[
\int_{(M,[v])}\omega=\sum_{j\in J}\int_{\phi_{\alpha_j}(\supp\rho_j)}(\rho_j\omega)_{\alpha_j}.
\]

\begin{lem}
\label{lem:localization}
Let $\{(U_\alpha,\phi_\alpha)\}_{\alpha\in I}$ be an oriented atlas of $(M,[v])$ and $\{\rho_j\}_{j\in J}$ a partition of unity subordinate to it. Let $\omega\in \Omega^\mathrm{top}(M)$ with $\supp \omega\subset U_{\alpha_k}$ for some $k\in J$. Then 
\[
\int_{(M,[v])}\omega=\int_{\phi_{\alpha_k}(U_{\alpha_k})}\omega_{\alpha_k}.
\]
\end{lem}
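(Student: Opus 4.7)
The plan is to start from the definition of $\int_{(M,[v])}\omega$ given by the oriented atlas $\{(U_\alpha,\phi_\alpha)\}_{\alpha\in I}$ and the partition of unity $\{\rho_j\}_{j\in J}$, and then use the standard change-of-variables formula together with $\sum_j\rho_j=1$ to collapse all the contributions into a single chart. The key simplification is that, because the atlas is oriented, every transition Jacobian $\det\dd\phi_{\alpha_j\alpha_k}$ is positive, so the absolute values that appeared in the density case disappear and the transformation rule for the coefficient functions $\underline{\omega_\alpha}$ of a top form matches exactly what the change-of-variables theorem requires.

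First I would unfold the definition:
\[
\int_{(M,[v])}\omega=\sum_{j\in J}\int_{\phi_{\alpha_j}(\supp\rho_j)}(\rho_j\omega)_{\alpha_j}.
\]
For each $j\in J$ the support of $\rho_j\omega$ is contained in $\supp\rho_j\cap\supp\omega\subset U_{\alpha_j}\cap U_{\alpha_k}$, hence the integrand $\underline{(\rho_j\omega)_{\alpha_j}}$ vanishes off $\phi_{\alpha_j}(U_{\alpha_j}\cap U_{\alpha_k})$, so the above integral may be rewritten as an integral over the open set $\phi_{\alpha_j}(U_{\alpha_j}\cap U_{\alpha_k})$.

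Next I would apply the change of variables $y=\phi_{\alpha_j\alpha_k}^{-1}(x)=\phi_{\alpha_k\alpha_j}(x)$, which is a diffeomorphism $\phi_{\alpha_j}(U_{\alpha_j}\cap U_{\alpha_k})\to \phi_{\alpha_k}(U_{\alpha_j}\cap U_{\alpha_k})$ with positive Jacobian since the atlas is oriented. Combined with the transformation rule for a top form,
\[
\underline{(\rho_j\omega)_{\alpha_j}}(x)=\det\!\big(\dd_x\phi_{\alpha_j\alpha_k}\big)\,\underline{(\rho_j\omega)_{\alpha_k}}\big(\phi_{\alpha_j\alpha_k}(x)\big),
\]
this yields
\[
\int_{\phi_{\alpha_j}(U_{\alpha_j}\cap U_{\alpha_k})}(\rho_j\omega)_{\alpha_j}
=\int_{\phi_{\alpha_k}(U_{\alpha_j}\cap U_{\alpha_k})}(\rho_j\omega)_{\alpha_k}.
\]

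Finally, I would sum these equalities over $j\in J$. Since $\supp\omega\subset U_{\alpha_k}$ and $\sum_{j\in J}\rho_j\equiv 1$ in a neighborhood of $\supp\omega$, we have $\sum_{j\in J}(\rho_j\omega)_{\alpha_k}=\omega_{\alpha_k}$ pointwise on $\phi_{\alpha_k}(U_{\alpha_k})$, and each integrand is supported in $\phi_{\alpha_k}(U_{\alpha_k})$, so interchanging the finite sum with the integral gives
\[
\int_{(M,[v])}\omega=\int_{\phi_{\alpha_k}(U_{\alpha_k})}\omega_{\alpha_k},
\]
as desired. The main obstacle, such as it is, is essentially bookkeeping: one must verify that the Jacobian sign works out (which is where orientability enters crucially, distinguishing this from the density case) and that the restriction of each integration domain to $U_{\alpha_j}\cap U_{\alpha_k}$ is legitimate because $\rho_j\omega$ has support in that intersection. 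Alternatively, invoking the already-established independence of the integral from the choice of atlas and partition, one could short-circuit the argument by choosing a partition with a single function equal to $1$ on $\supp\omega$ and supported in $U_{\alpha_k}$, but I prefer the direct argument above since it uses the given partition without auxiliary constructions.
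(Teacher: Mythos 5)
Your proof is correct and follows essentially the same route as the paper: unfold the definition of the integral, restrict each term to $\phi_{\alpha_j}(U_{\alpha_j}\cap U_{\alpha_k})$ using the support condition, pass to the chart $\alpha_k$ via the oriented transition maps (positive Jacobian), and then sum over $j$ using $\sum_{j\in J}\rho_j=1$. The only difference is that you write out the change-of-variables step explicitly, which the paper compresses into a single chain of equalities.
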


\begin{proof}
We have 
\begin{multline*}
\int_{(M,[v])}\omega=\sum_{j\in J}\int_{\phi_{\alpha_j}(\supp\rho_j)}(\rho_j\omega)_{\alpha_j}=\sum_{j\in J}\int_{\phi_{\alpha_j}(\supp\rho_j\cap\supp\omega)}(\rho_j\omega)_{\alpha_j}\\
=\sum_{j\in J}\int_{\phi_{\alpha_j}(U_{\alpha_j}\cap U_{\alpha_k})}(\rho_j\omega)_{\alpha_j}=\sum_{j\in J}\int_{\phi_{\alpha_j}(U_{\alpha_j}\cap U_{\alpha_k})}(\rho_j\omega)_{\alpha_k}\\
=\sum_{j\in J}\int_{\phi_{\alpha_k}(U_{\alpha_k})}(\rho_j\omega)_{\alpha_k}=\int_{\phi_{\alpha_k}(U_{\alpha_k})}\sum_{j\in J}(\rho_j\omega)_{\alpha_k}=\int_{\phi_{\alpha_k}(U_{\alpha_k})}\omega_{\alpha_k}.
\end{multline*}
\end{proof}

\begin{lem}
\label{lem:orientation}
A connected oriented manifold admits two orientations.
\end{lem}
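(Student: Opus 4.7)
The plan is to produce a second orientation explicitly and then show every orientation is equivalent to one of the two. Fix an orientation $[v]$ represented by a volume form $v \in \Omega^{\mathrm{top}}(M)$. First I would observe that $-v$ is again a volume form: in any chart $(U_\alpha,\phi_\alpha)$, the local representative $-v_\alpha = -\underline{v_\alpha}\,\dd x^1\wedge\cdots\wedge\dd x^n$ is nowhere vanishing because $v_\alpha$ is. Moreover, $-v$ and $v$ are not equivalent as volume forms, since the relation $-v = g v$ forces $g \equiv -1$, which is not a positive function. So $[v]$ and $[-v]$ are genuinely distinct orientations.

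Next I would show that these are the only two. Let $[w]$ be any orientation on $M$ represented by a volume form $w$. Since $v$ is nowhere vanishing, the $C^\infty(M)$-linearity argument used earlier (in the construction of $\phi_v$) gives a uniquely determined smooth function $g \in C^\infty(M)$ with
\[
w = g\,v.
\]
Because $w$ is itself nowhere vanishing, $g$ takes no zero value on $M$.

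The key step is then to use connectedness of $M$ together with the intermediate value theorem: the continuous function $g\colon M \to \R\setminus\{0\}$ cannot hit both signs, otherwise its image, a connected subset of $\R$, would contain $0$. Hence either $g>0$ everywhere or $g<0$ everywhere on $M$. In the first case, $w = g v$ with $g$ positive gives $[w] = [v]$. In the second case, writing $w = (-g)(-v)$ with $-g$ a positive function on $M$, we get $[w] = [-v]$. Combined with the previous paragraph, this shows $M$ admits exactly two orientations, namely $[v]$ and $[-v]$. I do not expect any serious obstacle here; the only subtle point is invoking connectedness at precisely the right moment, since without it $g$ could in principle change sign on different components and produce $2^k$ orientations on a manifold with $k$ connected components.
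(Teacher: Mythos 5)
Your proof is correct: the paper itself leaves this lemma as an exercise, and your argument is exactly the intended one, using the paper's own observation that any top form can be written uniquely as $w = g\,v$ for a nowhere-vanishing $v$, the definition of equivalence via multiplication by a positive function, and connectedness to force $g$ to have constant sign, so that $[w]$ equals $[v]$ or $[-v]$ and these two classes are distinct. Nothing is missing; your closing remark about $2^{k}$ orientations on a manifold with $k$ components is also accurate.
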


\begin{exe}
Prove Lemma \ref{lem:orientation}.
\end{exe}

\begin{defn}[Orientation preserving/reversing]
A diffeomorphism $F$ of connected oriented manifolds $(M,[v_M])$ and $(N,[v_N])$ is called \emph{orientation preserving} if $F^*[v_N]=[v_M]$ and \emph{orientation reversing} if $F^*[v_N]=-[v_M]$.
\end{defn}

\begin{prop}[Change of variables]
\label{prop:change_of_variables}
let $(M,[v_M])$ and $(N,[v_N])$ be connected oriented manifolds, $F\colon M\to N$ a diffeomorphism and $\omega$ a top form on $N$. Then 
\[
\int_{(M,[v_M])}F^*\omega=\pm \int_{(N,[v_N])}\omega,
\]
with plus sign if $F$ is orientation preserving and the minus sign if $F$ is orientation reversing.
\end{prop}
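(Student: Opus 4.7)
The plan is to reduce to the change of variables formula for densities (established as an earlier exercise, namely $\int_M F^*\tau = \int_N \tau$ for any density $\tau$ on $N$ and diffeomorphism $F\colon M\to N$) by tracking how the canonical $C^\infty(M)$-linear isomorphism $\phi_{[v]}\colon \Omega^{\mathrm{top}}(M)\to \mathrm{Dens}(M)$ interacts with pullback. By the definition of the integral of a top form one has $\int_{(M,[v_M])}F^*\omega = \int_M \phi_{[v_M]}(F^*\omega)$ and $\int_{(N,[v_N])}\omega = \int_N \phi_{[v_N]}\omega$, so it suffices to show
\[
\phi_{[v_M]}(F^*\omega) \;=\; \pm\, F^*\bigl(\phi_{[v_N]}\omega\bigr),
\]
with the sign $+$ or $-$ according to whether $F$ preserves or reverses orientation; the proposition then follows by applying the density change-of-variables to $\tau = \phi_{[v_N]}\omega$.

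To establish this identity I would pick any representative $v\in [v_N]$ and use the pointwise formula $\phi_v\omega = \omega\cdot |v|/v$. The essential observation is that pullback by a diffeomorphism commutes with taking the absolute value of a volume form: the transformation law for a top form involves $\det \dd F$ while that for the associated density involves $|\det \dd F|$, so reading things in local coordinates gives $F^*|v| = |F^*v|$. Since pullback is also multiplicative on functions and commutes with division by a nowhere-vanishing quantity, one obtains
\[
F^*\phi_v\omega \;=\; F^*\omega\cdot\frac{F^*|v|}{F^*v} \;=\; F^*\omega\cdot\frac{|F^*v|}{F^*v} \;=\; \phi_{F^*v}(F^*\omega).
\]

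It remains to compare $\phi_{F^*v}$ with $\phi_{[v_M]}$. If $F$ is orientation preserving, then by definition $F^*v \in [v_M]$, hence $\phi_{F^*v} = \phi_{[v_M]}$ and the displayed identity holds with the plus sign. If $F$ is orientation reversing, then $F^*v \in -[v_M]$; applying the pointwise formula to $-v'$ for any volume form $v'$ immediately yields $\phi_{-v'} = -\phi_{v'}$, so $\phi_{F^*v}(F^*\omega) = -\phi_{[v_M]}(F^*\omega)$, producing the minus sign. The only delicate step is the identity $F^*|v| = |F^*v|$, which is exactly the place where the absolute value of the Jacobian conspires with the sign of the Jacobian to give a well-defined comparison between forms and densities; everything else is formal bookkeeping on equivalence classes of volume forms.
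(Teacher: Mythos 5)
Your proof is correct. Note that the paper itself gives no proof of this proposition (it is immediately left as an exercise), so there is nothing to compare against; your reduction is the natural intended argument: by definition $\int_{(M,[v_M])}F^*\omega=\int_M\phi_{[v_M]}(F^*\omega)$, and the chain $F^*\phi_v\omega=\phi_{F^*v}(F^*\omega)$ together with $\phi_{F^*v}=\pm\phi_{[v_M]}$ (using $\phi_{-v'}=-\phi_{v'}$ and the definition of orientation preserving/reversing, $F^*[v_N]=\pm[v_M]$) hands everything to the density change-of-variables exercise. The one step that genuinely needs checking, $F^*\lvert v\rvert=\lvert F^*v\rvert$, you identify and justify correctly via the coordinate transformation laws ($\det$ for top forms versus $\lvert\det\rvert$ for densities), so the argument is complete.
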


\begin{exe}
Prove Proposition \ref{prop:change_of_variables}.
\end{exe}

Typically, the chosen orientation is understood, so one simply  writes
\[
\int_M\omega.
\]

\subsection{Stokes' theorem}
We will denote the space of compactly supported differential $s$-forms on a manifold $M$ by $\Omega^s_c(M)$. Moreover we define the \emph{$n$-dimensional upper half-space} by 
\begin{equation}
\label{eq:upper_half-space}
\mathbb{H}^n:=\{(x^1,\ldots,x^n)\in \R^n\mid x^n\geq 0\}.
\end{equation}
The boundary of the upper half-space is given by 
\[
\de\mathbb{H}^n=\{(x^1,\ldots,x^n)\in \R^n\mid x^n=0\}.
\]
On the boundary, we can take the orientation induced by the \emph{outward pointing vector field} $-\de_n$, i.e. 
\[
[i^*\iota_{-\de_n}\dd^nx]=(-1)^n[\dd x^1\land\dotsm \land \dd x^{n-1}],
\]
where $i\colon\de\mathbb{H}^n\hookrightarrow \mathbb{H}^n$ denotes the inclusion. 

\begin{lem}
\label{lem:local_Stokes}
Let $\omega\in \Omega^{n-1}_c(\mathbb{H}^n)$. Then, using the orientations defined as before, we get 
\[
\int_{\mathbb{H}^n}\dd\omega=\int_{\de\mathbb{H}^n}\omega.
\]
\end{lem}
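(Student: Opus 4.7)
The plan is to reduce the statement to the fundamental theorem of calculus by working directly in coordinates on $\mathbb{H}^n$, since $\mathbb{H}^n$ is already a single chart; no partition of unity is required.

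First I would write a general $\omega \in \Omega^{n-1}_c(\mathbb{H}^n)$ in the form
\[
\omega = \sum_{i=1}^n f_i\, dx^1 \wedge \cdots \wedge \widehat{dx^i} \wedge \cdots \wedge dx^n,
\]
where each $f_i \in C^\infty_c(\mathbb{H}^n)$. A direct application of $\dd$ (using $\dd^2 = 0$ and the graded Leibniz rule from the exercises in Section~\ref{subsec:Multivectorfields_differentialforms}) yields
\[
\dd\omega = \sum_{i=1}^n (-1)^{i-1} \partial_i f_i \, dx^1 \wedge \cdots \wedge dx^n.
\]
Then I would integrate term by term using Fubini's theorem.

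For each index $i < n$, the $x^i$-integral runs over all of $\mathbb{R}$, and since $f_i$ is compactly supported, the fundamental theorem of calculus gives $\int_{-\infty}^\infty \partial_i f_i \, dx^i = 0$, killing that term. For $i = n$, the $x^n$-integral runs only over $[0,\infty)$, and compact support in this variable gives
\[
\int_0^\infty \partial_n f_n \, dx^n = -\, f_n(x^1, \ldots, x^{n-1}, 0).
\]
Collecting, the bulk integral equals
\[
\int_{\mathbb{H}^n} \dd\omega = (-1)^n \int_{\mathbb{R}^{n-1}} f_n(x^1, \ldots, x^{n-1}, 0)\, dx^1 \cdots dx^{n-1}.
\]

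For the boundary side, I would compute the pullback $i^*\omega$ along $i \colon \partial\mathbb{H}^n \hookrightarrow \mathbb{H}^n$: since $i^* dx^n = 0$, only the $i = n$ summand survives, giving $i^*\omega = f_n(\cdot, 0)\, dx^1 \wedge \cdots \wedge dx^{n-1}$. With the induced boundary orientation $(-1)^n [dx^1 \wedge \cdots \wedge dx^{n-1}]$ prescribed by the outward vector field $-\partial_n$, Proposition~\ref{prop:change_of_variables} (or rather the definition of the oriented integral) yields
\[
\int_{\partial \mathbb{H}^n} \omega = (-1)^n \int_{\mathbb{R}^{n-1}} f_n(x^1,\ldots,x^{n-1},0)\, dx^1 \cdots dx^{n-1},
\]
which matches the bulk computation.

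The routine bookkeeping is the sign arithmetic in $\dd\omega$, but the only genuine obstacle is the orientation sign on $\partial \mathbb{H}^n$: one must confirm that the sign produced by the fundamental theorem of calculus in the $x^n$-direction (namely the $(-1)^{n-1} \cdot (-1) = (-1)^n$ from the coefficient of $\partial_n f_n$ combined with the boundary value $-f_n|_{x^n=0}$) coincides with the $(-1)^n$ introduced by the outward-pointing convention. Once these two $(-1)^n$'s are checked to agree, the identity follows.
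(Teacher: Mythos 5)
Your proof is correct and follows essentially the same route as the paper: decompose $\omega$ in the standard coordinate basis, compute $\dd\omega$, integrate via Fubini and the fundamental theorem of calculus so that only the $x^n$-term survives, and match against $i^*\omega$ using the outward-pointing orientation convention $(-1)^n[\dd x^1\land\dotsm\land \dd x^{n-1}]$. The only difference is pure bookkeeping: the paper absorbs the signs $(-1)^{j-1}$ into the coefficients of $\omega$ so that $\dd\omega$ has sign-free coefficients, whereas you keep them in $\dd\omega$; the resulting sign check is identical.
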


\begin{proof}
We write 
\[
\omega=\sum_{j=1}^n(-1)^{j-1}\omega^j\dd x^1\land\dotsm\land \widehat{\dd x^j}\land\dotsm \land \dd x^n.
\]
The components $\omega^j$ are then related to the components $\omega_{i_1\dotsm i_n}$ by a sign. Then \[
\dd\omega=\sum_{j=1}^n\de_j\omega^j\dd^nx. 
\]
Using the standard orientation, we get 
\[
\int_{\mathbb{H}^n}\dd\omega=\sum_{j=1}^n\int_{\mathbb{H}^n}\de_j\omega^j\dd^nx.
\]
We use \emph{Fubini's theorem} to integrate the $j$-th term along the $j$-th axis. Then, since $\omega$ has compact support, we get 
\[
\int_{-\infty}^{+\infty}\de_j\omega^j\dd x^j=0,\quad \forall j<n,
\]
but for $j=n$ we have
\[
\int_0^{+\infty}\de_n\omega^n\dd x^n=-\omega^n\big|_{x^n=0}.
\]
Hence, we get 
\[
\int_{\mathbb{H}^n}\dd\omega=-\int_{\de\mathbb{H}^n}\omega^n\dd^{n-1}x.
\]
On the other hand, we have 
\[
i^*\omega=(-1)^{n-1}\omega^n\big|_{x^n=0}\dd x^1\land \dotsm \land \dd x^{n-1}.
\]
Thus, using the orientation of $\de\mathbb{H}^n$ as before, we get 
\[
\int_{\de\mathbb{H}^n}\omega=-\int_{\de\mathbb{H}^n}\omega^n\dd^{n-1}x,
\]
which concludes the proof.
\end{proof}

Similarly, we get the following lemma.

\begin{lem}
\label{lem:integral_identity_1}
Let $\omega\in \Omega_c^{n-1}(\R^n)$. Then 
\[
\int_{\R^n}\dd\omega=0.
\]
\end{lem}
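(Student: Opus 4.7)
The plan is to argue directly by Fubini's theorem, following the structure of the proof of Lemma \ref{lem:local_Stokes} but exploiting the fact that $\omega$ has compact support in \emph{every} coordinate direction rather than just in the first $n-1$. The punchline is that the single surviving boundary term in that earlier proof now also vanishes, so the whole integral is zero.

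First I would write
\[
\omega = \sum_{j=1}^n (-1)^{j-1} \omega^j \, \dd x^1 \land \dotsm \land \widehat{\dd x^j} \land \dotsm \land \dd x^n,
\]
so that $\dd\omega = \sum_{j=1}^n \de_j \omega^j \, \dd^n x$ and therefore
\[
\int_{\R^n} \dd\omega = \sum_{j=1}^n \int_{\R^n} \de_j \omega^j \, \dd^n x.
\]
For each $j$, applying Fubini's theorem to carry out the $x^j$-integration first and then invoking the fundamental theorem of calculus gives
\[
\int_{-\infty}^{+\infty} \de_j \omega^j(x^1,\ldots,x^n) \, \dd x^j = 0,
\]
since the smooth function $\omega^j$ has compact support in $\R^n$ and hence vanishes outside a bounded set in the $x^j$ direction. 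Every term in the sum vanishes, and so does the total integral.

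Alternatively, one can reduce the statement to Lemma \ref{lem:local_Stokes}: since $\omega$ is compactly supported, a translation in the $x^n$ direction (which preserves both the integral and the differential) brings $\supp \omega$ into the open upper half-space $\{x^n > 0\}$; extending by zero yields a compactly supported form on $\mathbb{H}^n$ whose pullback to $\de\mathbb{H}^n$ is identically zero, and Lemma \ref{lem:local_Stokes} then gives $\int_{\R^n} \dd\omega = \int_{\mathbb{H}^n} \dd\omega = \int_{\de\mathbb{H}^n} \omega = 0$.

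There is no real obstacle here. The only subtlety worth flagging is the applicability of Fubini's theorem, which is immediate because $\dd\omega$ is smooth and compactly supported, hence absolutely integrable with well-defined iterated integrals.
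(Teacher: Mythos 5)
Your argument is correct and is exactly what the paper intends: the lemma is stated with the remark ``Similarly, we get the following lemma'' after Lemma \ref{lem:local_Stokes}, i.e.\ the same Fubini plus fundamental-theorem-of-calculus computation, now with every directional integral vanishing by compact support. Your alternative reduction to Lemma \ref{lem:local_Stokes} by translating the support into the open upper half-space is a fine bonus but not needed.
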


\begin{defn}[Manifold with boundary]
An $n$-dimensional \emph{manifold with boundary} is an equivalence class of atlases whose charts take values in $\mathbb{H}^n$.
\end{defn}

\begin{rem}
One considers $\mathbb{H}^n$ as a topological space with topology induced from $\R^n$.
\end{rem}

\begin{rem}
Let $M$ be a manifold with boundary.
For any $q\in M$ one can show that if there is a chart map sending $q$ to an interior point of $\mathbb{H}^n$, then any chart map will send it to an interior point of $\mathbb{H}^n$. On the other hand, if there is a chart map sending $q$ to a boundary point of $\mathbb{H}^n$, then any chart map will send $q$ to a boundary point of $\mathbb{H}^n$.
Hence, one can induce a manifold structure on the interior points $\mathring{M}$ and boundary points $\de M$ of $M$ out of the manifold structure of $M$, such that $\dim \mathring{M}=\dim M=\dim \de M+1$, by restricting atlases of $M$. The manifold $\de M$ is called the \emph{boundary} of $M$.
\end{rem}

\begin{thm}[Stokes]
\label{thm:Stokes}
Let $M$ be an $n$-dimensional oriented manifold with boundary and let $\omega\in\Omega^{n-1}_c(M)$. Then 
\begin{equation}
\label{eq:Stokes}
\boxed{
\int_M\dd\omega=\int_{\de M}\omega
}
\end{equation}
where we use the induced orientation on $\de M$.
\end{thm}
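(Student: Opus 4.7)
The plan is to reduce the global statement to the two local model results, Lemma \ref{lem:local_Stokes} and Lemma \ref{lem:integral_identity_1}, by means of a partition of unity argument. I first choose an oriented atlas $\{(U_\alpha,\phi_\alpha)\}_{\alpha\in I}$ representing $[v]$ whose charts take values in $\mathbb{H}^n$; each $U_\alpha$ is thus either an \emph{interior chart} (with $\phi_\alpha(U_\alpha)\subset\mathring{\mathbb{H}}^n$, so effectively a chart in $\R^n$) or a \emph{boundary chart} (with $\phi_\alpha(U_\alpha)\cap\de\mathbb{H}^n\neq\varnothing$). Using the compactness of $\supp\omega$, I pick a finite partition of unity $\{\rho_j\}_{j\in J}$ subordinate to $\{U_\alpha\}$, with $\supp\rho_j\subset U_{\alpha_j}$.

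Next I write $\omega=\sum_{j\in J}\rho_j\omega$, so that $\dd\omega=\sum_{j\in J}\dd(\rho_j\omega)$ (this identity on the differentials uses only $\sum_j\dd\rho_j=\dd 1=0$ together with linearity of $\dd$). By Lemma \ref{lem:localization}, since each $\rho_j\omega$ is supported inside the single chart $U_{\alpha_j}$, I have
\[
\int_M\dd(\rho_j\omega)=\int_{\phi_{\alpha_j}(U_{\alpha_j})}\bigl(\dd(\rho_j\omega)\bigr)_{\alpha_j}.
\]
If $U_{\alpha_j}$ is an interior chart, I extend $(\rho_j\omega)_{\alpha_j}$ by zero to a compactly supported $(n-1)$-form on $\R^n$ and apply Lemma \ref{lem:integral_identity_1} to conclude that this contribution vanishes. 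If $U_{\alpha_j}$ is a boundary chart, I analogously extend by zero to $\mathbb{H}^n$ and apply Lemma \ref{lem:local_Stokes} to obtain
\[
\int_{\phi_{\alpha_j}(U_{\alpha_j})}\bigl(\dd(\rho_j\omega)\bigr)_{\alpha_j}=\int_{\de\mathbb{H}^n}(\rho_j\omega)_{\alpha_j}\big|_{\de\mathbb{H}^n}.
\]

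Now I have to check that summing the boundary contributions reconstructs $\int_{\de M}\omega$ with the correct induced orientation. The restrictions $\{\rho_j|_{\de M}\}$ of the boundary-chart cut-offs form a partition of unity on $\de M$ subordinate to the induced boundary atlas $\{(U_{\alpha_j}\cap\de M,\phi_{\alpha_j}|_{\de M})\}$, because $\sum_j\rho_j\equiv 1$ on $M$ and the interior-chart cut-offs vanish on $\de M$. The boundary orientation is by definition the one induced by the outward pointing vector field, which in each boundary chart is represented by $-\de_n$; this is exactly the convention under which Lemma \ref{lem:local_Stokes} was proved, so the signs match. Putting the pieces together,
\[
\int_M\dd\omega=\sum_{j\in J}\int_{\de M}\rho_j\omega=\int_{\de M}\omega,
\]
and the theorem is proved.

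The main obstacle is none of the individual steps (each is routine) but rather the careful orientation bookkeeping: one must verify that the outward normal convention used to orient $\de\mathbb{H}^n$ at the local level is compatible with the intrinsic definition of the induced orientation on $\de M$, so that the local boundary integrals really do assemble into $\int_{\de M}\omega$ rather than its negative. Everything else is packaging provided by Lemma \ref{lem:localization} and the partition-of-unity identity $\sum_j\dd\rho_j=0$.
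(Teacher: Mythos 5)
Your argument is correct and follows essentially the same route as the paper's proof: a partition of unity subordinate to an oriented atlas, Lemma \ref{lem:localization} to reduce each term to a single chart, Lemma \ref{lem:integral_identity_1} for interior charts, Lemma \ref{lem:local_Stokes} for boundary charts, and the outward-normal orientation convention to reassemble the boundary integrals. Your extra care with the restricted partition of unity on $\de M$ is a fine (and slightly more explicit) version of the paper's final step.
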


\begin{rem}
If $M$ has no boundary, we get $\int_M\dd\omega=0$.
\end{rem}

\begin{rem}
Theorem \ref{thm:Stokes} was never officially proved by Stokes but appeared (in some version) in Maxwell's book on electrodynamics from 1873 \cite{Maxwell1873} where he mentions in a footnote that the idea comes from Stokes who used this theorem in the Smith's Prize Examination of 1854. This is the reason why we call it Stokes' theorem today. A first proof of this theorem was given by Hermann in 1861 \cite{Hermann1861} who does not mention Stokes at all. See also \cite{Katz1979} for some more historical facts on Stokes' theorem.
\end{rem}

\begin{proof}[Proof of Theorem \ref{thm:Stokes}]
Let $\{(U_\alpha,\phi_\alpha)\}_{\alpha\in I}$ be an orientable atlas of $M$ corresponding to the given orientation and let $\{\rho_j\}_{j\in J}$ be a partition of unity subordinate to it. First, observe that 
\[
\dd\omega=\dd\left(\sum_{j\in J}\rho_j\omega\right)=\sum_{j\in J}\dd(\rho_j\omega).
\]
Note that $\supp(\rho_j\omega)\subset U_{\alpha_j}$ and, by Lemma \ref{lem:localization}, we have 
\[
\int_M\dd(\rho_j\omega)=\int_{\phi_{\alpha_j}(U_{\alpha_j})}\dd(\rho_j\omega)_{\alpha_j}.
\]
If $\phi_{\alpha_j}(U_{\alpha_j})$ is contained in the interior of $\mathbb{H}^n$, then we regard $\dd(\rho_j\omega)_{\alpha_j}$ as a compactly supported top form on $\R^n$ by extending it by zero outside of its support. Hence, by Lemma \ref{lem:integral_identity_1}, we get 
\[
\int_{\phi_{\alpha_j}(U_{\alpha_j})}\dd(\rho_j\omega)_{\alpha_j}=0.
\]
Otherwise, we regard $\dd(\rho_j\omega)_{\alpha_j}$ as a compactly supported top form on $\mathbb{H}^n$ by again extending by zero outside of its support. Hence, by Lemma \ref{lem:local_Stokes}, we get 
\[
\int_{\phi_{\alpha_j}(U_{\alpha_j})}\dd(\rho_j\omega)_{\alpha_j}=\int_{\de(\phi_{\alpha_j}(U_{\alpha_j}))}(\rho_j\omega)_{\alpha_j}. 
\]
Note that $\de(\phi_{\alpha_j}(U_{\alpha_j}))=\phi_{\alpha_j}(\de U_{\alpha_j})$ by definition and both are oriented by outward pointing vectors. Thus, again by Lemma \ref{lem:localization}, we get
\[
\int_M\dd(\rho_j\omega)=\int_{\de M}\rho_j\omega.
\]
Summing over $j$ yields the result.
\end{proof}

\section{de Rham's theorem}

\subsection{Singular homology}

\begin{defn}[$p$-simplex]
The \emph{standard $p$-simplex} is the closed subset given by 
\[
\Delta^p:=\left\{(x^1,\ldots,x^p)\in \R^p\,\bigg|\,\sum_{i=1}^px^i\leq 1, x^i\geq 0\, \forall i\right\}\subset \R^p.
\]
\end{defn}

The interior of $\Delta^p$ is a $p$-dimensional manifold. A smooth differential form on $\Delta^p$ is by definition the restriction to $\Delta^p$ of a smooth differential form defined on an open neighborhood of $\Delta^p$ in $\R^p$. Let $\omega\in \Omega^{p-1}(\Delta^p)$. Explicitly, let 
\[
\omega=\sum_{j=1}^p\omega^j\dd x^1\land\dotsm \land\widehat{\dd x^j}\land\dotsm \land \dd x^p.
\]
Then 
\[
\dd\omega=\sum_{j=1}^p(-1)^{j+1}\de_j\omega^j\dd^px
\]
and 
\[
\int_{\Delta^p}\dd\omega=\sum_{j=1}^p(-1)^{j+1}\int_{\Delta^p}\de_j\omega^j\dd^px.
\]
Using \emph{Fubini's theorem} and the \emph{fundamental theorem of analysis}, we get 
\[
\int_{\Delta^p}\de_j\omega^j\dd x^j=\omega^j\big|_{x^j=1-\sum_{i=1\atop i\not=j}x^i}-\omega^j\big|_{x^j=0}.
\]
Hence
\begin{multline}
\label{eq:integral3}
    \int_{\Delta^p}\dd \omega=\sum_{j=1}^p(-1)^{j+1}\int_{\Delta^p\cap\left\{\sum_{i=1}^px^i=1\right\}}\omega^j\dd x^1\land \dotsm \land \widehat{\dd x^j}\land\dotsm \land \dd x^p+\\
    +\sum_{j=1}^p(-1)^{j}\int_{\Delta^p\cap\left\{x^j=0\right\}}\omega^j\dd x^1\land \dotsm \land \widehat{\dd x^j}\land \dotsm \land \dd x^p.
\end{multline}
We can rewrite this in another way if we regard the faces on which we integrate as images of $(p-1)$-simplices. Namely, for $i=0,\ldots,p$, we define smooth maps
\[
k_i^{p-1}\colon \Delta^{p-1}\to \Delta^p,
\]
by 
\[
k_0^{p-1}(a^1,\ldots, a^{p-1})=\left(1-\sum_{i=1}^{p-1}a^i,a^1,\ldots,a^{p-1}\right)
\]
and 
\[
k_j^{p-1}(a^1,\ldots,a^{p-1})=(a^1,\ldots,a^{j-1},0,a^j,\ldots,a^{p-1}),\quad \forall j>0.
\]
The $j$-th integral in the second line of \eqref{eq:integral3} is just the integral on $\Delta^{p-1}$ of the pullback of $\omega$ by $k_j^{p-1}$. In fact, 
\begin{multline}
    (k^{p-1}_j)^*\omega=(k^{p-1}_j)^*\sum_{i=1}^p\omega^i\dd x^1\land\dotsm \land \widehat{\dd x^i}\land\dotsm \land \dd x^p\\=\omega^j(a^1,\ldots,a^{j-1},0,a^j,\ldots, a^{p-1})\dd^{p-1}a
\end{multline}
We integrate over $\Delta^{p-1}$ with the standard orientation and rename variables $x^i=a^i$ for $i<j$ and $x^i=a^{i+1}$ for $i>j$. Note that the $j$-th integral is given by the integral over $\Delta^{p-1}$ of the pullback of $(-1)^{j+1}\omega^j\dd x^1\land \dotsm \land \widehat{\dd x^j}\land\dotsm \land \dd x^p$ by $k_0^{p-1}$. In particular, 
\begin{multline}
    (k_0^{p-1})^*\omega^j\dd x^1\land \dotsm \land \widehat{\dd x^j}\land\dotsm \land \dd x^p\\=-\omega^j\left(1-\sum_{i}a^i,a^1,\ldots,a^{p-1}\right)\sum_i\dd a^i\land \dd a^1\land\dotsm\land \widehat{\dd a^{j-1}}\land\dotsm\land \dd a^{p-1}\\= (-1)^{j+1}\omega^j\left(1-\sum_ia^i,a^1,\ldots,a^{p-1}\right)\dd^{p-1}a.
\end{multline}
Summing everything up, we get \emph{Stokes' theorem for a simplex}:
\begin{equation}
\label{eq:integral4}
\int_{\Delta^p}\dd\omega=\sum_{j=0}^p(-1)^j\int_{\Delta^{p-1}}(k^{p-1}_j)^*\omega,
\end{equation}
where the $j=0$ term corresponds to the whole sum in the first line of \eqref{eq:integral3} and each other term corresponds to a term in the second line. Consider a map 
\[
\sigma\colon \Delta^p\to M,
\]
where $M$ is a manifold. For a $p$-form $\omega$ on $M$, we define
\[
\int_\sigma\omega:=\int_{\Delta^{p}}\sigma^*\omega.
\]
If we define $\sigma^j:=\sigma\circ k_j^{p-1}\colon \Delta^{p-1}\to M$, then, by \eqref{eq:integral4}, we get 
\[
\int_\sigma\dd \omega=\sum_{j=0}^p(-1)^j\int_{\sigma^j}\omega.
\]
\begin{defn}[$p$-chains]
A \emph{$p$-chain} with real coefficients in a manifold $M$ is a finite linear combination $\sum_ka_k\sigma_k$, where $a_k\in\R$ for all $k$, of maps $\sigma_k\colon \Delta^p\to M$. If $\omega$ is a $p$-form on $M$, we define 
\begin{equation}
    \label{eq:integral5}
    \int_{\sum_ka_k\sigma_k}\omega:=\sum_ka_k\int_{\sigma_k}\omega.
\end{equation}
\end{defn}

\begin{thm}[Stokes' theorem for chains]
We have 
\[
\boxed{\int_\sigma\dd\omega=\int_{\de\sigma}\omega}
\]
where 
\[
\de\sigma:=\sum_{j=0}^p(-1)^j\sigma^j.
\]
\end{thm}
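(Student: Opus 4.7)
The plan is to observe that the single-simplex case is essentially already done in the text as equation~\eqref{eq:integral4}, and that the chain case then follows by linearity. So my task is really just to organize the pieces cleanly.

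First, I would dispose of the single-simplex case. Let $\sigma\colon \Delta^p\to M$ be smooth and $\omega\in\Omega^{p-1}(M)$. By the definition $\int_\sigma\eta := \int_{\Delta^p}\sigma^*\eta$ one has
\[
\int_\sigma \dd\omega = \int_{\Delta^p}\sigma^*(\dd\omega) = \int_{\Delta^p}\dd(\sigma^*\omega),
\]
where I use the fact that pullback by a smooth map commutes with the de Rham differential (a standard property, valid also for maps from a domain with boundary such as $\Delta^p$, since $\sigma$ extends smoothly to an open neighborhood of $\Delta^p$ by the definition of smoothness on the simplex given in the text). Now $\sigma^*\omega$ is a smooth $(p-1)$-form on $\Delta^p$, so equation~\eqref{eq:integral4} applies to it and yields
\[
\int_{\Delta^p}\dd(\sigma^*\omega) = \sum_{j=0}^{p}(-1)^j \int_{\Delta^{p-1}}(k_j^{p-1})^*(\sigma^*\omega).
\]
Using functoriality of pullback, $(k_j^{p-1})^*\sigma^* = (\sigma\circ k_j^{p-1})^* = (\sigma^j)^*$, so each term equals $\int_{\sigma^j}\omega$. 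Thus
\[
\int_\sigma \dd\omega = \sum_{j=0}^{p}(-1)^j \int_{\sigma^j}\omega.
\]

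Second, I would handle a general $p$-chain $c = \sum_k a_k\sigma_k$ by invoking the linearity convention~\eqref{eq:integral5}. By definition $\partial c = \sum_k a_k \sum_{j=0}^p (-1)^j \sigma_k^{\,j}$, which is itself a $(p-1)$-chain, and
\[
\int_{\partial c}\omega = \sum_k a_k \sum_{j=0}^{p}(-1)^j\int_{\sigma_k^{\,j}}\omega = \sum_k a_k \int_{\sigma_k}\dd\omega = \int_c \dd\omega
\]
by the single-simplex case applied to each $\sigma_k$. This completes the argument.

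There is really no serious obstacle: the entire content is already contained in equation~\eqref{eq:integral4} together with the definition of pullback and linearity. The only subtlety worth flagging is the identity $\sigma^*\dd = \dd\sigma^*$ on a simplex, which relies on the convention (stated in the text) that smooth forms on $\Delta^p$ are restrictions of forms from an open neighborhood, so ordinary results from smooth calculus on $\R^p$ apply verbatim.
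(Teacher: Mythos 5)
Your proposal is correct and follows the paper's own route: the paper derives the single-simplex identity \eqref{eq:integral4}, passes to $\int_\sigma\dd\omega=\sum_{j=0}^p(-1)^j\int_{\sigma^j}\omega$ via pullback (implicitly using $\sigma^*\dd=\dd\sigma^*$ and $(\sigma\circ k_j^{p-1})^*=(k_j^{p-1})^*\sigma^*$), and the chain version then follows by the linearity convention \eqref{eq:integral5}, exactly as you argue. You have simply made explicit the steps the paper leaves implicit.
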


Let $\Omega_p(M,\R)$ denote the vector space of $p$-chains in $M$ with real coefficients and extend $\de$ to it by linearity. Then we get that $\de$ is an endomorphism of degree $-1$ for the graded vector space 
\[
\Omega_\bullet(M,\R):=\bigoplus_{j=0}^\infty\Omega_j(M,\R),
\]
i.e. a map for all $1\leq p\leq \dim M$ we have
\[
\de\colon\Omega_p(M,\R)\to \Omega_{p-1}(M,\R)
\]

\begin{exe}
Show that $\de^2:=\de\circ \de=0$.
\end{exe}

For $\sigma\in \Omega_p(M,\R)$ and $\omega\in \Omega^p(M)$ we can define
\begin{equation}
\label{eq:pairing}
\langle\sigma,\omega\rangle:=\int_\sigma\omega.
\end{equation}

\begin{exe}
Show that $\langle\enspace,\enspace\rangle$ as defined in \eqref{eq:pairing} is a bilinear map $\Omega_p(M,\R)\times \Omega^p(M)\to \R$.
\end{exe}
By Stokes' theorem for chains we get 
\[
\langle\sigma,\dd\omega\rangle=\langle\de\sigma,\omega\rangle.
\]
In particular we have a sequence

\[
\begin{tikzcd}
0 \arrow[r] & {\Omega_n(M,\mathbb{R})} \arrow[r, "\partial"] & \dotsm \arrow[r, "\partial"] & {\Omega_{p}(M,\mathbb{R})} \arrow[r, "\partial"] & \dotsm \arrow[r, "\partial"] & {\Omega_0(M,\mathbb{R})} \arrow[r] & 0
\end{tikzcd}
\]

If we consider a map $F\colon M\to N$ between manifolds, it induces a graded linear map 
\begin{align}
\begin{split}
    F_*\colon \Omega_\bullet(M,\R)&\to \Omega_\bullet(N,\R),\\
    \sigma&\mapsto F\circ \sigma.
\end{split}
\end{align}
Hence, if $\dim M=\dim N=n,$ we get a chain complex
\[
\begin{tikzcd}[scale cd=0.85]
0 \arrow[r] & {\Omega_n(M,\mathbb{R})} \arrow[r, "\partial"] \arrow[d, "F_*"] & \dotsm \arrow[r, "\partial"] & {\Omega_{p}(M,\mathbb{R})} \arrow[d, "F_*"] \arrow[r, "\partial"] & {\Omega_{p-1}(M,\mathbb{R})} \arrow[r, "\partial"] \arrow[d, "F_*"] & \dotsm \arrow[r, "\partial"] & {\Omega_0(M,\mathbb{R})} \arrow[r] \arrow[d, "F_*"] & 0 \\
0 \arrow[r] & {\Omega_n(N,\mathbb{R})} \arrow[r, "\partial"]                  & \dotsm \arrow[r, "\partial"] & {\Omega_p(N,\mathbb{R})} \arrow[r, "\partial"]                    & {\Omega_{p-1}(N,\mathbb{R})} \arrow[r, "\partial"]                  & \dotsm \arrow[r, "\partial"] & {\Omega_0(N,\mathbb{R})} \arrow[r]                  & 0
\end{tikzcd}
\]

\begin{exe}
\label{exe:commutative}
Show that each square of the diagram commutes, i.e. $\de\circ F_*=F_*\circ\de$.
\end{exe}

\begin{defn}[Singular homology groups]
The \emph{$p$-th singular homology group} on a manifold $M$ with real coefficents is given by  
\[
H_p(M,\R):=\ker\de_{(p)}/\im \de_{(p+1)},
\]
where $\de_{(p)}\colon \Omega_{p}(M,\R)\to \Omega_{p-1}(M,\R)$.
\end{defn}

\begin{rem}
Elements of $\ker \de_{(p)}:=\{\sigma\in \Omega_{p}(M,\R)\mid \de_{(p)}\sigma=0\}$ are usually called \emph{$p$-cycles} and elements of $\im \de_{(p)}:=\{\sigma\in \Omega_p(M,\R)\mid \exists \tau\in \Omega_{p+1}(M,\R),\sigma=\de_{(p+1)}\tau\}$ are usually called \emph{$p$-boundaries}.
\end{rem}

\begin{exe}
Using Exercise \ref{exe:commutative}, show that $F_*$ descends to a graded linear map 
\[
F_*\colon H_\bullet(M,\R)\to H_\bullet(N,\R).
\]
\end{exe}
Thus we get a chain complex on the level of homology
\[
\begin{tikzcd}[scale cd=0.85]
0 \arrow[r] & {H_n(M,\mathbb{R})} \arrow[r] \arrow[d, "F_*"] & \dotsm \arrow[r] & {H_p(M,\mathbb{R})} \arrow[r] \arrow[d, "F_*"] & {H_{p-1}(M,\mathbb{R})} \arrow[r] \arrow[d, "F_*"] & \dotsm \arrow[r] & {H_0(M,\mathbb{R})} \arrow[r] \arrow[d, "F_*"] & 0 \\
0 \arrow[r] & {H_n(N,\mathbb{R})} \arrow[r]                  & \dotsm \arrow[r] & {H_p(N,\mathbb{R})} \arrow[r]                  & {H_{p-1}(N,\mathbb{R})} \arrow[r]                  & \dotsm \arrow[r] & {H_0(N,\mathbb{R})} \arrow[r]                  & 0
\end{tikzcd}
\]

\begin{exe}
Show that $\langle F_*\sigma,\omega\rangle=\langle\sigma,F^*\omega\rangle$ for all $\sigma\in \Omega_\bullet(M,\R)$ and all $\omega\in \Omega^\bullet(N)$.
\end{exe}

\subsection{de Rham cohomology and de Rham's theorem}
Let $M$ be an $n$-dimensional manifold and recall its de Rham complex
\begin{equation}
    0\to C^\infty(M)\xrightarrow{\dd}\Omega^1(M)\xrightarrow{\dd}\Omega^2(M)\xrightarrow{\dd}\dotsm \xrightarrow{\dd}\Omega^s(M)\xrightarrow{\dd}\Omega^{s+1}(M)\xrightarrow{\dd}\dotsm\xrightarrow{\dd}\Omega^n(M)\to 0
\end{equation}

\begin{defn}[de Rham cohomology groups]
We define the \emph{$s$-th de Rham cohomology group} by 
\[
H^s(M):=\ker\dd^{(s)}/\im \dd^{(s-1)}, 
\]
where $\dd^{(s)}\colon \Omega^s(M)\to \Omega^{s+1}(M)$. 
\end{defn}

\begin{rem}
we call element of $\ker\dd^{(s)}:=\{\omega\in \Omega^s(M)\mid \dd^{(s)}\omega=0\}$ \emph{closed forms} and elements of $\im\dd^{(s)}:=\{\omega\in \Omega^s(M)\mid \exists \eta\in \Omega^{s-1}(M), \omega=\dd^{(s-1)}\eta\}$ \emph{exact forms}.
\end{rem}

\begin{exe}
Using Stokes' theorem for chains, show that the bilinear map $\langle\enspace,\enspace\rangle$ defined in \eqref{eq:pairing} descends to a bilinear map 
\begin{equation}
\label{eq:pairing2}
H_p(M,\R)\times H^p(M)\to \R.
\end{equation}
\end{exe}

\begin{thm}[de Rham\cite{deRham1931}]
The bilinear map \eqref{eq:pairing2} is nondegenerate. In particular, 
\[
(H_p(M,\R))^*\cong H^p(M),\quad \forall p.
\]
\end{thm}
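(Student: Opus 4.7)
The plan is to prove that integration induces an isomorphism between de Rham cohomology and the dual of singular homology by a Mayer--Vietoris induction on a \emph{good cover} of $M$ (that is, a cover by open sets all of whose finite intersections are diffeomorphic to $\mathbb{R}^n$). Before starting the induction I would address a preliminary technicality: the $p$-chains defined in the excerpt are \emph{smooth}, whereas general singular homology uses continuous simplices. A standard Whitney-type smoothing argument, combined with the fact that smooth approximation of a continuous map through a homotopy preserves the homology class, shows that the smooth singular homology of $M$ agrees with the ordinary real singular homology, so I may work with smooth simplices throughout.

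The core of the proof is a double Mayer--Vietoris argument. First I would establish, for an open cover $M = U \cup V$, the short exact sequences of chain/cochain complexes
\[
0 \to \Omega_\bullet(U \cap V, \R) \to \Omega_\bullet(U, \R) \oplus \Omega_\bullet(V, \R) \to \Omega_\bullet(U + V, \R) \to 0,
\]
\[
0 \to \Omega^\bullet(M) \to \Omega^\bullet(U) \oplus \Omega^\bullet(V) \to \Omega^\bullet(U \cap V) \to 0,
\]
where $\Omega_\bullet(U+V,\R)$ denotes chains that are sums of simplices supported in $U$ or in $V$ (a subdivision argument, essentially the small chains theorem, shows that this inclusion is a quasi-isomorphism into $\Omega_\bullet(M,\R)$); the cochain sequence is exact by the partition of unity construction. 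Passing to the long exact sequences in homology and cohomology, one obtains the Mayer--Vietoris ladders. The next and most delicate step is to verify that the integration pairing \eqref{eq:pairing} intertwines the two sequences, up to a sign, at the level of connecting homomorphisms. This follows from Stokes' theorem applied carefully: if $\omega \in \Omega^{p}(U \cap V)$ represents a class and its de Rham connecting image is obtained from a partition of unity splitting $\omega = \omega_U|_{U\cap V} - \omega_V|_{U\cap V}$ with $\omega_U, \omega_V$ defined on $U,V$, then $\langle \partial \sigma, \omega\rangle = \langle \sigma, \delta \omega\rangle$ holds by a direct computation using $\dd \omega_U = \dd\omega_V$ on $U \cap V$ and Stokes' theorem for chains.

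The base case of the induction is when $M$ is diffeomorphic to $\mathbb{R}^n$, hence contractible. Here one needs the \emph{Poincar\'e lemma} (so $H^p(M) = 0$ for $p > 0$ and $H^0(M) = \R$) together with the analogous statement for smooth singular homology (contractibility implies $H_p(M,\R) = 0$ for $p > 0$ and $H_0(M,\R) = \R$), which is proved by the usual cone construction giving an explicit chain homotopy. On the single generator in degree zero, a $0$-simplex is a point $q$ and a $0$-form is a function $f$, with pairing $\langle q, f\rangle = f(q)$, manifestly a perfect pairing between $\R$ and $\R$.

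The inductive step proceeds by covering $M$ with a finite good cover $\{U_1,\ldots, U_N\}$ (existence of such covers on paracompact manifolds follows from the geodesic convexity of small balls for any Riemannian metric) and inducting on $N$. Setting $U = U_1 \cup \cdots \cup U_{N-1}$ and $V = U_N$, the sets $U$, $V$, $U \cap V$ admit good covers with strictly fewer than $N$ elements (for $U\cap V$, intersect with $U_N$), so by induction the pairing is a perfect duality for each of them. The ladder of Mayer--Vietoris sequences, together with the \emph{five lemma}, then yields that the pairing for $M$ is also nondegenerate, which is exactly the claim $(H_p(M,\R))^* \cong H^p(M)$. The main obstacle I anticipate is the careful bookkeeping of signs and partitions of unity required to make the Mayer--Vietoris ladder commute on the nose, together with handling the case of manifolds that do not admit a \emph{finite} good cover (which requires a direct limit argument using the compact exhaustion of $M$ and the fact that singular homology and de Rham cohomology commute appropriately with direct limits of open subsets).
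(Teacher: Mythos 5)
The paper itself does not prove this theorem: it is stated with a citation to de Rham's 1931 thesis, and the chapter defers to its references (notably Bott--Tu), so there is no in-text argument for you to match. Your proposal is precisely the standard Bott--Tu--style proof: identify smooth singular homology with ordinary real singular homology by a smoothing/approximation argument, set up the two Mayer--Vietoris sequences (small chains plus subdivision on the homology side, partition of unity on the de Rham side), check via Stokes' theorem for chains that integration intertwines the connecting homomorphisms up to sign, verify the base case $M\cong\R^n$ with the Poincar\'e lemma and the cone construction, and run the five lemma over an induction on the cardinality of a finite good cover. This is sound, and it is worth noting that what the five lemma actually delivers is the stronger statement that $H^p(M)\to (H_p(M,\R))^*$ is an isomorphism (which implies nondegeneracy of \eqref{eq:pairing2}); for the ladder to consist of exact rows you should say explicitly that dualizing the homology Mayer--Vietoris sequence preserves exactness, which holds because these are vector spaces over $\R$.

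The one genuinely shaky step is your last sentence, where you propose to treat manifolds without a finite good cover by a direct limit over a compact exhaustion, invoking that de Rham cohomology ``commutes appropriately with direct limits of open subsets.'' It does not: for an increasing union $U_1\subset U_2\subset\dotsm$ one has $\Omega^\bullet(\bigcup_i U_i)$ as an \emph{inverse} limit of the $\Omega^\bullet(U_i)$, and the comparison of $H^p(\bigcup_i U_i)$ with the inverse limit of the $H^p(U_i)$ involves a $\lim^1$ correction term, so the argument as stated breaks exactly where you need it. The standard repair is the induction principle with a disjoint-union axiom: observe that the claimed duality is preserved under arbitrary disjoint unions, since $H^p\big(\bigsqcup_\alpha U_\alpha\big)\cong\prod_\alpha H^p(U_\alpha)$ while $H_p\big(\bigsqcup_\alpha U_\alpha,\R\big)\cong\bigoplus_\alpha H_p(U_\alpha,\R)$, and the dual of a direct sum is the corresponding product compatibly with the integration pairing; then write $M$ (second countable, so it has a countable good cover) as a union of two open sets, each of which is a disjoint union of opens admitting finite good covers, with the same property for their intersection, and apply your Mayer--Vietoris ladder once more. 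With that replacement your proof is complete and is the expected argument for this statement.
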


\section{Distributions and Frobenius' theorem}

\subsection{Plane distributions}

\begin{defn}[Plane distribution]
A \emph{$k$-plane distribution} $D$, or simply a \emph{$k$-distribution} or just a \emph{distribution}, on a smooth $n$-dimensional manifold $M$ is a collection $\{D_q\}_{q\in M}$ of linear $k$-dimensional subspaces $D_q\in T_qM$ for all $q\in M$. 
\end{defn}

\begin{rem}
The number $k$ is called the \emph{rank} of the distribution. Obviously, we want $k\leq n$.
\end{rem}

\begin{defn}[Distribution]
A \emph{$k$-distribution} $D$ on $M$ is called \emph{smooth} if every $q\in M$ has an open neighborhood $U$ and smooth vector fields $X_1,\ldots,X_k$ defined on $U$ such that 
\[
D_x=\mathrm{span}\{(X_1)_x,\ldots,(X_k)_x\},\quad \forall x\in U.
\]
The vector fields $X_1,\ldots,X_k$ are also called \emph{(local) generators} for $D$ on $U$.
\end{defn}

\begin{rem}
In the case when $U=M$, we speak of \emph{global generators}. The existence of local generators is required since there are certain interesting distributions which do not have global generators.
\end{rem}

We say that a vector field $X\in \mathfrak{X}(M)$ is a \emph{tangent} to a distribution $D$ if $X_q\in D_q$ for all $q\in M$. Note that any linear combination of tangent vector fields are again tangent. 

\begin{defn}[Involutive distribution]
A smooth distribution $D$ on $M$ is called \emph{involutive} if $\Gamma(D)$ is a Lie subalgebra of $\mathfrak{X}(M)$, i.e. when $[X,Y]\in\Gamma(D)$ for all $X,Y\in\Gamma(D)$.
\end{defn}

\begin{rem}
A distribution generated by vector fields $X_1,\ldots,X_k$ is involutive if and only if $[X_i,X_j]$ is a linear combination over $C^\infty(M)$ of the generators. The only-if side follows from the definition. The if-implication can be obtained by observing that a vector field tangent to the distribution is necessarily a linear combination of the generators. Moreover, we have
\[
\left[\sum_i f_i X_i,\sum_j g_jY_j\right]=\sum_{i,j}\Bigg(\Big(f_iX_i(g_j)-g_iX_i(f_j)\Big)X_j+f_ig_j[X_i,X_j]\Bigg).
\]
\end{rem}

\begin{rem}
Note that the previous discussion directly implies that any distribution of rank $1$ is involutive. Locally, it is generated by a single vector field $X$ and by the skew-symmetry of the Lie bracket, we have $[X,X]=0$.
\end{rem}

\begin{defn}[Push-forward of a distribution]
Let $D$ be a distribution on $M$ and let $F\colon M\to N$ be a diffeomorphism. We define the \emph{push-forward} $F_*D$ of $D$ by 
\[
(F_*D)_y:=\dd_{F^{-1}(y)}D_{F^{-1}(y)},\quad \forall y\in N.
\]
\end{defn}

\subsection{Frobenius' theorem}

\begin{defn}[Integral manifold]
An immersion $\psi\colon N\to M$ with $N$ connected is called an \emph{integral manifold} for a distribution $D$ on $M$ if 
\[
\dd_n\psi(T_nN)=D_{\psi(n)},\quad \forall n\in N.
\]
\end{defn}

\begin{rem}
An integral manifold that is not a proper restriction of an integral manifold is called \emph{maximal}.
\end{rem}

\begin{rem}
If $\psi$ is an embedding, restricting $\psi\colon N\to \psi(N)$ to its image allows us to rewrite the above condition as 
\[
\psi_*(TN)=D\big|_{\psi(N)},
\]
where the fact that $D$ can be restricted to $\psi(N)$, i.e. $D_q\in T_q\psi(N)$ for all $q\in \psi(N)$, is part of the condition.
\end{rem}

\begin{defn}[Integrable distribution]
A smooth distribution $D$ on $M$ is called \emph{integrable} if for all $q\in M$ there is an integral manifold $D$ passing through $q$.
\end{defn}

\begin{lem}
If a distribution $D$ is integrable, then $D$ is involutive.
\end{lem}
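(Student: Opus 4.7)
The plan is to fix $q \in M$ and use an integral manifold through $q$ to locally pull $D$-tangent vector fields back to vector fields on a lower-dimensional manifold, where the Lie bracket is automatically defined; then push forward to see that $[X,Y]_q$ lies in $D_q$.

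In detail, I would first take arbitrary $X, Y \in \Gamma(D)$ and a point $q \in M$, and invoke integrability to get an integral manifold $\psi \colon N \to M$ with $\psi(n_0) = q$ for some $n_0 \in N$. Because $\psi$ is an immersion, $\dd_n \psi \colon T_n N \to T_{\psi(n)} M$ is injective for every $n$, and its image is exactly $D_{\psi(n)}$. Since $X_{\psi(n)}, Y_{\psi(n)} \in D_{\psi(n)}$ by assumption, there exist unique tangent vectors $\widetilde X_n, \widetilde Y_n \in T_n N$ with $\dd_n \psi(\widetilde X_n) = X_{\psi(n)}$ and $\dd_n \psi(\widetilde Y_n) = Y_{\psi(n)}$.

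Next I would argue that $\widetilde X$ and $\widetilde Y$ are smooth vector fields on $N$, at least in a neighborhood of $n_0$. This is the main technical step: an immersion is locally an embedding, so after restricting to a neighborhood $U \subset N$ of $n_0$ we may assume $\psi|_U$ is an embedding onto a submanifold of $M$, and then local generators of $D$ around $q$ combined with an adapted chart let us express $\widetilde X, \widetilde Y$ as smooth combinations of coordinate vector fields on $U$. By construction $X$ and $\widetilde X$ are $\psi$-related, and likewise $Y$ and $\widetilde Y$.

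Now I would use the standard fact that $\psi$-relatedness is preserved by Lie brackets: if $\dd \psi \circ \widetilde X = X \circ \psi$ and $\dd \psi \circ \widetilde Y = Y \circ \psi$, then $\dd \psi \circ [\widetilde X, \widetilde Y] = [X, Y] \circ \psi$. Evaluating at $n_0$ gives
\[
[X, Y]_q \;=\; \dd_{n_0} \psi\bigl([\widetilde X, \widetilde Y]_{n_0}\bigr) \;\in\; \dd_{n_0} \psi(T_{n_0} N) \;=\; D_q.
\]
Since $q$ and $X, Y \in \Gamma(D)$ were arbitrary, this shows $[X,Y] \in \Gamma(D)$, so $D$ is involutive. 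The main obstacle is the smoothness of the lifts $\widetilde X, \widetilde Y$: one has to invoke the local embedding property of immersions and the existence of smooth local generators of $D$ to guarantee that the pointwise defined lifts assemble into smooth vector fields on $N$.
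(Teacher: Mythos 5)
Your proposal is correct and follows essentially the same route as the paper's proof: lift $X,Y$ locally through the integral manifold $\psi\colon N\to M$, use that $\psi$-relatedness (push-forward) preserves the Lie bracket, and conclude $[X,Y]_q\in\dd_{n_0}\psi(T_{n_0}N)=D_q$. The only difference is that you spell out the smoothness of the lifts $\widetilde X,\widetilde Y$ via the local-embedding property of immersions, a point the paper's proof passes over implicitly.
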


\begin{proof}
For all $q\in M$, we can find an integral manifold $\psi\colon N\to M$ with $q\in\psi(N)$. If $X$ and $Y$ are tangent to $D$, in a neighborhood of $q$ in $\psi(N)$ we can write them as push-forwards of vector fields $\widetilde{X}$ and $\widetilde{Y}$ on $N$. Since the push-forward preserves the Lie bracket and $TN$ is involutive, we see that in this neighborhood $Z:=[X,Y]$ is the push-forward of $[\widetilde{X},\widetilde{Y}]$ and hence tangent to $D$. Note also that this is indeed the Lie bracket of $X$ and $Y$, as they do not have components transverse to $\psi(N)$ by definition. We can compute $Z$ by this procedure at each point in $M$, which shows that $D$ is involutive.
\end{proof}

\begin{prop}
\label{prop:Frobenius}
Let $X$ be a vector field on a Hausdorff manifold $M$. Let $q\in M$ be a point such that $X_q\not=0$. Then there is a chart $(U,\phi_U)$ with $U\ni q$ such that $(\phi_U)_*X\big|_{U}$ is the constant vector field $(1,0,\ldots,0)$. As a consequence, if $\gamma$ is an integral curve of $X$ passing through $U$, then $\phi_U\circ\gamma$ is of the form $\{x\in \phi_U(U)\mid x^1(t)=x_0^1+t;\, x^j(t)=x_0^j,\, j>1\}$ where the $x_0^i$s are constants.
\end{prop}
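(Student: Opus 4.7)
The strategy is the classical \emph{flow-box} construction: one straightens the vector field $X$ by using its own flow to build coordinates in which $X$ becomes the first coordinate vector field.

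First, I would work inside an arbitrary chart around $q$, so that the problem is reduced to the case $M = V \subset \R^n$ an open neighborhood of $0$, with $q = 0$ and $X_0 \neq 0$. By performing a linear change of coordinates on $\R^n$ (which is smooth and therefore gives another admissible chart), I may further assume $X_0 = \partial_1 := \partial/\partial x^1$. This sets up a transverse hyperplane $S := \{x^1 = 0\} \cap V$ through $q$, which will parametrize the integral curves of $X$.

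Next, I introduce the map built from the flow of $X$,
\[
\Psi(t, y^2, \ldots, y^n) := \Phi^X_t(0, y^2, \ldots, y^n),
\]
defined on some open neighborhood of $0$ in $\R \times \R^{n-1}$ where the flow is defined; such a neighborhood exists by the local existence theory for ODEs recalled in Definition \ref{defn:flow}. Differentiating at the origin, Definition \ref{defn:flow} gives $\partial_t \Psi(0,0) = X_0 = \partial_1$, while for $j \geq 2$ we have $\partial_{y^j} \Psi(0,0) = \partial_j$, since $\Psi(0, y) = (0, y^2, \ldots, y^n)$. Hence $\dd_{(0,0)} \Psi = \Id$, and by the inverse function theorem $\Psi$ restricts to a diffeomorphism from some open neighborhood $W \ni 0$ onto an open neighborhood $U \ni q$ in $M$. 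I then declare $\phi_U := \Psi^{-1} : U \to W$ to be the desired chart.

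It remains to verify that $(\phi_U)_* X|_U$ is the constant vector field $(1, 0, \ldots, 0)$ and to read off the claim about integral curves. For any $(t, y) \in W$, using the semigroup identity \eqref{eq:flow} for the flow,
\[
\frac{\de}{\de s}\bigg|_{s=0} \Psi(t+s, y) = \frac{\de}{\de s}\bigg|_{s=0} \Phi^X_{t+s}(0,y) = X_{\Phi^X_t(0,y)} = X_{\Psi(t,y)},
\]
so $\Psi_* \partial_t = X$, i.e.\ $(\phi_U)_* X = \partial_t$. The integral-curve statement is then immediate: in the coordinates on $U$ given by $\phi_U$, the ODE $\dot{q} = X(q)$ becomes $\dot{x}^1 = 1$ and $\dot{x}^j = 0$ for $j > 1$, which integrates to the stated affine form. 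The Hausdorff hypothesis is used only to ensure that $\Psi$, once known to be a local homeomorphism, restricts to a genuine chart (no ``doubled points'' in the image); I expect the main technical point to be tracking the domain of definition of the flow carefully so that $W$ can be chosen as a product of an interval with a small neighborhood in $S$, but this is standard from the smooth dependence of ODE solutions on initial conditions.
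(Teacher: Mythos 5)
Your flow-box argument is correct: reducing to a chart, normalizing $X_q=\partial_1$ by a linear change of coordinates, setting $\Psi(t,y):=\Phi^X_t(0,y)$, checking $\dd_{(0,0)}\Psi=\Id$, and invoking the inverse function theorem is the standard straightening construction, and the identity $\Psi_*\partial_t=X$ together with the resulting system $\dot x^1=1$, $\dot x^j=0$ for $j>1$ yields both assertions of the statement. For the record, the paper states Proposition \ref{prop:Frobenius} without proof and only uses it as the base case of the induction in the proof of Theorem \ref{thm:Frobenius}, so there is no argument in the text to compare yours against. One small remark: the Hausdorff hypothesis does not actually enter your local argument --- everything happens inside a single chart, where the inverse function theorem already guarantees that $\phi_U=\Psi^{-1}$ is an admissible chart; Hausdorffness is relevant for global uniqueness of (maximal) integral curves, not for the local flow-box statement itself.
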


\begin{thm}[Frobenius\cite{Frobenius1877}]
\label{thm:Frobenius}
Let $D$ be an involutive $k$-distribution on a smooth, Hausdorff $n$-dimensional manifold $M$. Then each point $q\in M$ has a chart neighborhood $(U,\phi)$ such that $\phi_*D=\mathrm{span}\left\{\frac{\de}{\de x^1},\ldots,\frac{\de}{\de x^k}\right\}$, where $x^1,\ldots,x^k$ are coordinates on $\phi(U)$.
\end{thm}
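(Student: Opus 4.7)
The plan is to argue by induction on the rank $k$ of the distribution, using Proposition \ref{prop:Frobenius} (the flow-box theorem) as the base case and as the main geometric input at each step. For $k=1$ the claim is exactly Proposition \ref{prop:Frobenius}: a rank-$1$ distribution is locally generated by a nowhere-vanishing vector field $X$, and the flow-box theorem produces a chart in which $X$ becomes $\partial/\partial x^1$, so that $D$ is spanned by $\partial/\partial x^1$ in that chart.

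For the inductive step, assume the theorem holds for involutive distributions of rank $k-1$, and let $X_1,\ldots,X_k$ be local generators of $D$ on a neighborhood of $q$. By Proposition \ref{prop:Frobenius} applied to $X_k$, shrinking the neighborhood and composing with a suitable chart map, I may assume from the outset that $X_k=\partial/\partial x^k$ in coordinates $(x^1,\ldots,x^n)$ on a neighborhood of $q$. Next I modify the remaining generators by subtracting their $x^k$-component: set
\[
Y_i := X_i - X_i(x^k)\,\frac{\partial}{\partial x^k},\qquad i=1,\ldots,k-1,
\]
so that $Y_i(x^k)=0$, i.e.\ the $Y_i$ have no $\partial/\partial x^k$ component. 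Together with $X_k=\partial/\partial x^k$ the vector fields $Y_1,\ldots,Y_{k-1},X_k$ still generate $D$ on the neighborhood.

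Now I restrict attention to the slice $S=\{x^k=0\}$, an $(n-1)$-dimensional submanifold through $q$. Because each $Y_i$ is tangent to the slices $\{x^k=\mathrm{const}\}$, the fields $Y_1,\ldots,Y_{k-1}$ restrict to vector fields on $S$, and they span a rank-$(k-1)$ distribution $D'$ on $S$. The key check is that $D'$ is involutive on $S$. Using the structure constants of $D$ (involutivity of $D$ gives $[Y_i,Y_j] = \sum_\ell a^\ell_{ij}Y_\ell + b_{ij}X_k$ for smooth functions $a^\ell_{ij},b_{ij}$), and then applying both sides to $x^k$, one finds $b_{ij}=0$ since the left-hand side annihilates $x^k$. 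Hence $[Y_i,Y_j]$ lies in $\mathrm{span}\{Y_1,\ldots,Y_{k-1}\}$, and its restriction to $S$ is the bracket of the restrictions; so $D'$ is involutive on $S$. By the induction hypothesis, after shrinking $S$ around $q$ there exist coordinates $(y^1,\ldots,y^{k-1},y^{k+1},\ldots,y^n)$ on $S$ in which $D'=\mathrm{span}\{\partial/\partial y^1,\ldots,\partial/\partial y^{k-1}\}$.

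Finally, I propagate these slice coordinates off the slice using the flow $\Phi^{X_k}_t$ of $X_k$: define a new chart $\psi$ near $q$ by
\[
\psi^{-1}(y^1,\ldots,y^n) := \Phi^{X_k}_{y^k}\!\bigl(\iota(y^1,\ldots,y^{k-1},y^{k+1},\ldots,y^n)\bigr),
\]
where $\iota$ denotes the inclusion $S\hookrightarrow M$. In these coordinates $X_k=\partial/\partial y^k$ by construction, and since the flow of $X_k$ preserves $D$ (as $X_k\in\Gamma(D)$ and $D$ is involutive, one checks that $L_{X_k}Y_i\in\Gamma(D)$, so $(\Phi^{X_k}_t)_*D=D$), the fields $\partial/\partial y^1,\ldots,\partial/\partial y^{k-1}$ remain tangent to $D$ away from the slice, and by a dimension count $D=\mathrm{span}\{\partial/\partial y^1,\ldots,\partial/\partial y^k\}$. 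I expect the main obstacle to be precisely this last point: verifying invariance of $D$ under $\Phi^{X_k}_t$ cleanly, and ruling out a twisting of the flowed-out frame away from the slice. The cleanest way is to use involutivity to show $[X_k,Y_i]\in\Gamma(D)$, conclude $(\Phi^{X_k}_t)_*Y_i\in\Gamma(D)$, and then check linear independence at $q$, which propagates to a neighborhood by continuity.
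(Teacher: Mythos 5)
Your proposal is correct and follows essentially the same route as the paper: induction on the rank, Proposition \ref{prop:Frobenius} to straighten one generator, the triangular modification $Y_i:=X_i-X_i(x^k)\,\partial/\partial x^k$ so that the remaining generators are tangent to the slice, the bracket-applied-to-$x^k$ computation to get involutivity of the induced rank-$(k-1)$ distribution on the slice, and the induction hypothesis there. Even your final chart is the same as the paper's: since the straightened field is $\partial/\partial x^k$, its flow is just translation in $x^k$, so flowing the slice chart out along $X_k$ reproduces the paper's product chart $\widetilde\tau$. The one place where your write-up is thinner than the paper is exactly the step you flag: the passage from $[X_k,Y_i]\in\Gamma(D)$ to $(\Phi^{X_k}_t)_*D=D$ is not a formal consequence of $L_{X_k}Y_i\in\Gamma(D)$ (which only gives information at $t=0$); it requires writing the relevant quantities as solutions of a linear first-order ODE in the flow parameter and invoking uniqueness with vanishing initial data on the slice. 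That is precisely what the paper does explicitly, via $\frac{\partial}{\partial x^1}\bigl(Z_i(x^j)\bigr)=\sum_{\ell=2}^k c^\ell_i\,Z_\ell(x^j)$ with $Z_i(x^j)=0$ at $x^1=0$ for $j>k$; if you prefer to quote the flow-invariance lemma instead, you should prove it by this same argument rather than treat it as immediate. Also note that linear independence of your final frame is automatic, since $\partial/\partial y^1,\ldots,\partial/\partial y^k$ are coordinate vector fields of a genuine chart (your $\psi^{-1}$ is a local diffeomorphism by the inverse function theorem, $X_k(q)$ being transverse to the slice), so no continuity argument is needed there.
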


\begin{cor}
On a smooth, Hausdorff manifold a smooth distribution is involutive if and only if it is integrable.
\end{cor}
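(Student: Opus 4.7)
The corollary is essentially a direct packaging of two results already available: the lemma stating that integrability implies involutivity, and Frobenius' theorem (Theorem \ref{thm:Frobenius}) giving the straightening chart for an involutive distribution. My plan is therefore to prove each implication separately, with nearly all of the work already done.

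For the ``integrable $\Rightarrow$ involutive'' direction I would simply cite the lemma preceding Proposition \ref{prop:Frobenius}, which establishes exactly this statement using that the push-forward preserves Lie brackets of vector fields along an integral manifold.

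For the ``involutive $\Rightarrow$ integrable'' direction I would fix a point $q \in M$ and apply Frobenius' theorem to obtain a chart $(U,\phi)$ around $q$ with coordinates $(x^1,\ldots,x^n)$ on $\phi(U)$ such that $\phi_* D = \mathrm{span}\{\partial/\partial x^1, \ldots, \partial/\partial x^k\}$. Writing $\phi(q) = (a^1, \ldots, a^n)$, I would then consider the slice
\[
S := \phi^{-1}\bigl(\{\,x \in \phi(U) \mid x^{k+1} = a^{k+1}, \ldots, x^n = a^n\,\}\bigr),
\]
together with its inclusion $\iota\colon S \hookrightarrow M$. This slice is a $k$-dimensional embedded submanifold of $M$ containing $q$, and at each point $p \in S$ its tangent space in the chart coordinates is precisely $\mathrm{span}\{\partial/\partial x^1,\ldots,\partial/\partial x^k\}$. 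Pulling back by $\phi^{-1}$, this coincides with $D_p$, so $\iota$ is an integral manifold of $D$ through $q$. Since $q$ was arbitrary, $D$ is integrable.

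No step here is really an obstacle, as all the heavy lifting is buried inside Frobenius' theorem; the only point deserving care is the verification that the proposed slice $S$ actually has tangent space $D$ at each of its points, which is an immediate consequence of the chart-straightening conclusion of Theorem \ref{thm:Frobenius} combined with the fact that $\phi$ is a diffeomorphism from $U$ onto an open subset of $\mathbb{R}^n$.
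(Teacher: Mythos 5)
Your proof is correct and follows exactly the route the paper intends: the lemma preceding Proposition \ref{prop:Frobenius} gives ``integrable $\Rightarrow$ involutive'', and the straightening chart of Theorem \ref{thm:Frobenius} yields the coordinate slice through any point as an integral manifold, giving the converse. The paper leaves the corollary without an explicit proof precisely because this combination is immediate, and your verification that the slice has tangent space $D_p$ at each of its points is the only detail worth writing out, which you do correctly.
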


\begin{proof}[Proof of Theorem \ref{thm:Frobenius}]
We will use induction on the rank $k$ of the distribution. For $k=1$, This is the content of Proposition \ref{prop:Frobenius}. We assume that we have proved the theorem for rank $k-1$. Let $(X_1,\ldots,X_k)$ be generators of the distribution in a neighborhood of $q$. In particular, they are all not vanishing at $q$. By Proposition \ref{prop:Frobenius}, we can find a chart neighborhood $(V,\chi)$ of $q$ with $\chi(q)=0$ and $\chi_*X_1=\frac{\de}{\de y^1}$, where $y^1,\ldots,y^n$ are coordinates on $\chi(V)$. We can define new generators of $\chi_*D$ by 
\[
Y_1:=\chi_*X_1=\frac{\de}{\de y^1}
\]
and for $i>1$
\[
Y_i:=\chi_*X_i-(\chi_*X_i(y^1))\chi_*X_1.
\]
For $i>1$ we have $Y_i(y^1)=0$ and thus, for $i,j>1$, we have $[Y_i,Y_j](y^1)=0$. This means that the expansion of $[Y_i,Y_j]$ in the $Y_\ell$s does not contain $Y_1$. Hence, the distribution $D'$ defined on $S:=\{y\in \chi(V)\mid y^1=0\}$ as the span of $Y_2,\ldots,Y_k$ is involutive. By the induction assumption, we can find a neighborhood $U$ of $0$ in $S$ and a diffeomorphism $\tau$ such that $\tau_*Y_i=\frac{\de}{\de w^i}$, for $i=2,\ldots,w$, where $w^2,\ldots,w^n$ are coordinates on $\tau(U)$. Let $\widetilde{U}:=U\times(-\varepsilon,\varepsilon)$ for some $\varepsilon>0$ such that $\widetilde{U}\subset\chi(V)$. We then have the projection $\pi\colon \widetilde{U}\to U$. Finally, consider the diffeomorphism 
\begin{align*}
    \widetilde{\tau}\colon \widetilde{U}&\to\tau(U)\times(-\varepsilon,\varepsilon),\\
    (u,y^1)&\mapsto (\tau(u),y^1),
\end{align*}
and write $x^1=y^1$, $x^i=\tau^i(y^2,\ldots,y^n)=w^i$ for $i>1$. We write $Z_i:=\widetilde{\tau}_* Y_i$ for $i=1,\ldots,k$ being the generators of the distribution $\widetilde{D}:=\widetilde{\tau}_*\chi_*D$. Now since $\frac{\de x^i}{\de y^1}$ is equal to one if $i=1$ and zero otherwise, we get that $Z_1=\frac{\de}{\de x^1}$. For $i=2,\ldots,k$ and $j>1$ we have 
\[
\frac{\de}{\de x^1}(Z_i(x^j))=Z_1(Z_i(x^j))=[Z_1,Z_i](x^j)=\sum_{\ell=2}^kc^\ell_iZ_\ell(x^j),
\]
where the $c^\ell_i$ are functions that are guaranteed to exist by the involutivity of the distribution. For fixed $j$ and fixed $x^2,\ldots,x^n$ we regard these identities as ODEs in the variable $x^1$. Note that, for $i=2,\ldots,k$ and $j>k$, we have $Z_i(x^j)=0$ at $x^1=0$ (since at $x^1=0$ we have $Z_i=Y_i$). This means that $Z_i(x^j)=0$ for $i=2,\ldots,k$ and $j>k$, is the unique solution with this initial condition. These identities mean that $\widetilde{D}$ is the distribution spanned by the vector fields $\frac{\de}{\de x^1},\ldots,\frac{\de}{\de x^k}$.
\end{proof}

\chapter{Symplectic Geometry}

Symplectic geometry appears in the mathematical structure of the phase space $M=\R^{2n}$ for a classical mechanical system. The dynamical information is usually encoded in a function $H\in C^\infty(M)$ called the \emph{Hamiltonian}. In order to express the dynamics in terms of the flow lines, we want to extract a vector field $X_H\in \mathfrak{X}(M)$ out of $H$, i.e. consider a differential equation with respect to the change of $H$, similarly as for the canonical equations of motion in Hamiltonian mechanics. This means that we want to consider a map $\omega\colon TM\to T^*M$, or equivalently an element of $T^*M\otimes T^*M$, such that $\dd H=\iota_{X_H}\omega=\omega(X_H,\enspace)$. Additionally, we want that the choice of $X_H$ for each $H$ is unique in this way, which requires $\omega$ to be nondegenerate.
Moreover, we want that $H$ doesn't change along flow lines, i.e. $\dd H(X_H)=0$, meaning that $\omega(X_H,X_H)=0$ and thus we require $\omega$ to be alternating, which means that $\omega$ has to be a 2-form. Note that this also implies that the underlying space has to be even dimensional since every skew-symmetric linear map for odd dimensions is singular.
Finally, we want also that $\omega$ doesn't change under flow lines. Mathematically, this is expressed as the vanishing of the Lie derivative $L_{X_H}\omega=0$. Using Cartan's magic formula \eqref{eq:Cartan_magic_formula}, we get 
\[
L_{X_H}\omega=\dd\iota_{X_H}\omega+\iota_{X_H}\dd\omega=\dd(\dd H)+\iota_{X_H}\dd\omega=\dd\omega(X_H).
\]
Hence, if we require $\dd\omega(X_H)=0$ for the vector field induced by different Hamiltonians, we require $\omega$ to be closed, i.e. $\dd\omega=0$. Such a 2-form is then called \emph{symplectic}. In fact, it gives an area measure which induces a time-independent area by 
\[
A:=\int_\Sigma\omega
\]
for a suitable surface $\Sigma\subset M$. We will generalize this concept to the definition of a volume form on any $2n$-dimensional manifold. 
This chapter is based on \cite{daSilva01,Cattaneo_Manifolds,Arnold1978,Donaldson1996,DonaldsonKronheimer1990,McDuffSalamon1995,EliashbergGromov1998,DuistermaatHeckman1982,Weinstein1971,Weinstein1977,Weinstein1981,BerlineGetzlerVergne1992,AtiyahBott1984}.

\section{Symplectic manifolds}

\subsection{Symplectic form}

Let $V$ be an $m$-dimensional vector space over $\R$ and let $$\Omega\colon V\times V\to \R$$ be a bilinear map. 

\begin{defn}[Skew-symmetric]
The map $\Omega$ is called \emph{skew-symmetric} if $\Omega(u,v)=-\Omega(v,u)$ for all $u,v\in V$.
\end{defn}

\begin{thm}[Standard form]
\label{thm:standard_form}
Let $\Omega$ be a skew-symmetric bilinear map on $V$. Then there is a basis $u_1,\ldots,u_k,e_1,\ldots,e_n,f_1,\ldots,f_n$ of $V$ such that 
\begin{align}
    \Omega(u_i,v)&=0,\quad 1\leq i\leq k,\forall v\in V,\\
    \Omega(e_i,e_j)&=\Omega(f_i,f_j)=0,\quad 1\leq i,j\leq n\\
    \Omega(e_i,f_j)&=\delta_{ij},\quad  1\leq i,j\leq n.
\end{align}
\end{thm}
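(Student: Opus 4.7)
The plan is to prove the result by induction on $\dim V$, in the spirit of a symplectic Gram--Schmidt procedure.

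First I would isolate the degenerate part. Define the \emph{radical}
\[
U := \{u \in V \mid \Omega(u,v) = 0 \text{ for all } v \in V\},
\]
which is a linear subspace of $V$, and let $k := \dim U$. Choose any basis $u_1,\ldots,u_k$ of $U$; this already gives the first block of the desired basis, and the first family of equations $\Omega(u_i,v) = 0$ holds by construction. Next, pick any linear complement $W$ of $U$ in $V$, so that $V = U \oplus W$. By the definition of $U$, the restriction $\Omega|_{W\times W}$ is nondegenerate, so it suffices to produce a basis $e_1,\ldots,e_n,f_1,\ldots,f_n$ of $W$ satisfying the last two sets of equations.

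Now I would induct on $\dim W$. If $\dim W = 0$ there is nothing to do. Otherwise, pick any nonzero $e_1 \in W$; by nondegeneracy there is some $v \in W$ with $\Omega(e_1,v) \neq 0$, and after rescaling we obtain $f_1 \in W$ with $\Omega(e_1,f_1) = 1$. Skew-symmetry automatically gives $\Omega(e_1,e_1) = \Omega(f_1,f_1) = 0$. Let $W_1 := \mathrm{span}(e_1,f_1)$ and
\[
W_1^{\Omega} := \{w \in W \mid \Omega(w,e_1) = \Omega(w,f_1) = 0\}.
\]
The key step is to show $W = W_1 \oplus W_1^{\Omega}$. For any $w \in W$, set $\alpha := \Omega(w,f_1)$ and $\beta := -\Omega(w,e_1)$; then $w - \alpha e_1 - \beta f_1 \in W_1^{\Omega}$ by direct computation using $\Omega(e_1,f_1)=1$ and skew-symmetry, giving $W = W_1 + W_1^{\Omega}$. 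For the intersection, if $\alpha e_1 + \beta f_1 \in W_1^{\Omega}$, pairing with $f_1$ and $e_1$ forces $\alpha = \beta = 0$.

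Finally, I would observe that $\Omega|_{W_1^{\Omega} \times W_1^{\Omega}}$ is again nondegenerate: any element of its radical, being already $\Omega$-orthogonal to $e_1$ and $f_1$, would be $\Omega$-orthogonal to all of $W = W_1 \oplus W_1^{\Omega}$, hence would lie in $U \cap W = 0$. Applying the inductive hypothesis to $W_1^{\Omega}$ (which has strictly smaller dimension) yields vectors $e_2,\ldots,e_n,f_2,\ldots,f_n$ satisfying the required relations, and concatenating with $e_1,f_1$ completes the basis of $W$. The main obstacle, and really the heart of the argument, is the splitting $W = W_1 \oplus W_1^{\Omega}$ together with the verification that nondegeneracy descends to $W_1^{\Omega}$; once these are in hand, the induction runs mechanically and forces $\dim W$ to be even.
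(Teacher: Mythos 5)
Your proposal is correct and follows essentially the same route as the paper's proof: split off the radical $U$, choose a complement $W$, and run a symplectic Gram--Schmidt argument producing the pairs $(e_i,f_i)$ via the splitting $W = W_1 \oplus W_1^{\Omega}$. The only difference is presentational — you phrase the iteration as a formal induction on $\dim W$ and explicitly check that nondegeneracy descends to $W_1^{\Omega}$, which the paper leaves implicit in "the construction can be continued since $\dim V < \infty$."
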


\begin{rem}
The basis in Theorem \ref{thm:standard_form} is not unique even if it is historically called \emph{canonical} basis. 
\end{rem}

\begin{proof}[Proof of Theorem \ref{thm:standard_form}]
Let $U:=\{u\in V\mid \Omega(u,v)=0,\,\, \forall v\in V\}$. Choose a basis $u_1,\ldots,u_k$ of $U$ and choose a complementary space $W$ to $U$ in $V$, i.e. such that 
\[
V=W\oplus U.
\]
Let $e_1\in W$ be a nonzero element. Then there is $f_1\in W$ such that $\Omega(e_1,f_1)\not=0$. Now assume that $\Omega(e_1,f_1)=1$. Let $W_1$ be the span of $e_1$ and $f_1$ and let 
\[
W^\Omega_1:=\{w\in W\mid \Omega(w,v)=0,\,\,\forall v\in W_1\}.
\]
We can then show that $W_1\cap W^\Omega_1=\{0\}$. Indeed, suppose that $v=ae_1+bf_1\in W_1\cap W^\Omega_1$. Then $0=\Omega(v,e_1)=-b$ and $0=\Omega(v,f_1)=a$ which together implies that $v=0$. Moreover, we have $W=W_1\oplus W^\Omega_1$. Indeed, suppose that $v\in W$ has $\Omega(v,e_1)=c$ and $\Omega(v,f_1)=d$. Then $v=(-cf_1+de_1)+(v+cf_1-de_1)$, where $-cf_1+de_1\in W_1$ and $v+cf_1-de_1\in W^\Omega_1$. Now let $e_2\in W^\Omega_1$ be a nonzero element. Then there is $f_2\in W^\Omega_1$ such that $\Omega(e_2,f_2)\not=0$. Now assume that $\Omega(e_2,f_2)=1$. Moreover, let $W_2$ be given by the span of $e_2$ and $f_2$. This construction can be continued until some point since $\dim V<\infty$ and thus we obtain 
\[
V=U\oplus W_1\oplus\dotsm \oplus W_n.
\]
where all the summands are orthogonal with respect to $\Omega$ and where $W_i$ has basis $e_i,f_i$ with $\Omega(e_i,f_i)=1$.
\end{proof}

\begin{rem}
Note that the dimension of the subspace $U\subset V$ does not depend on the choice of basis and hence is an invariant on $(V,\Omega)$. Since $\dim U+2n=\dim V$, we get that $n$ is an invariant of $(V,\Omega)$. We call the number $2n$ the \emph{rank} of $\Omega$.
\end{rem}

\subsection{Symplectic vector spaces}
Let $V$ be an $m$-dimensional real vector space and let $\Omega\colon V\times V\to \R$ be a bilinear form. Define the map $\tilde\Omega\colon V\to V^*$ to be the linear map defined by 
\[
\tilde\Omega(v)(u):=\Omega(v,u).
\]
Note that $\ker\tilde\Omega=U$.
\begin{defn}[Linear symplectic fom]
A skew-symmetric bilinear form $\Omega$ is called \emph{symplectic} (or \emph{nondegenerate}) if $\tilde\Omega$ is bijective, i.e. $U=\{0\}$. 
\end{defn}

\begin{rem}
We sometimes also call a symplectic form $\Omega$ a \emph{linear symplectic structure}.
\end{rem}

\begin{defn}[Symplectic vector space]
We call a vector space $V$ endowed with a linear symplectic structure $\Omega$ \emph{symplectic}.
\end{defn}

\begin{exe}
Let $\Omega$ be a symplectic structure.
Check that the map $\tilde\Omega$ is a bijection and that $\dim U=0$ so $\dim V$ is even. Moreover, check that a symplectic vector space $(V,\Omega)$ has a basis $e_1,\ldots, e_n,f_1,\ldots, f_n$ satisfying
\[
\Omega(e_i,f_j)=\delta_{ij},\qquad \Omega(e_i,e_j)=\Omega(f_i,f_j)=0.
\]
We will call such a basis \emph{symplectic}.
\end{exe}

\begin{defn}[Symplectic subspace]
A subspace $W\subset V$ is called \emph{symplectic} if $\Omega\big|_W$ is nondegenerate.
\end{defn}

\begin{exe}
Show that the subspace given by the span of $e_1$ and $f_1$ is symplectic.
\end{exe}

\begin{defn}[Isotropic subspace]
A subspace $W\subset V$ is called \emph{isotropic} if $\Omega\big|_W=0$.  
\end{defn}

\begin{exe}
Show that the subspace given by the span of $e_1$ and $e_2$ is isotropic.
\end{exe}

\begin{defn}[Symplectic orthogonal]
Let $W\subset V$ be a subspace of a symplectic vector space $(V,\Omega)$. The \emph{symplectic orthogonal} of $W$ is defined as 
\[
W^\Omega:=\{v\in V\mid \Omega(v,u)=0,\,\,\forall u\in W\}.
\]
\end{defn}

\begin{exe}
Show that $(W^\Omega)^\Omega=W$.
\end{exe}

\begin{exe}
Show that a subspace $W\subset V$ is isotropic if $W\subseteq W^\Omega$. Moreover, show that if $W$ is isotropic, then $\dim W\leq \frac{1}{2}\dim V$.
\end{exe}

\begin{defn}[Coisotropic subspace]
A subspace $W\subset V$ of a symplectic vector space $(V,\Omega)$ is called \emph{coisotropic} if 
\[
W^\Omega\subseteq W.
\]
\end{defn}

\begin{exe}
Show that every codimension 1 subspace $W\subset V$ is coisotropic.
\end{exe}

\begin{defn}[Lagrangian subspace]
An isotropic subspace $W\subset V$ of a symplectic vector space $(V,\Omega)$ is called \emph{Lagrangian} if it is maximal, i.e. $\dim W=\frac{1}{2}\dim V$.
\end{defn}

\begin{exe}
Show that a subspace $W\subset V$ of a symplectic vector space $(V,\Omega)$ is Lagrangian if and only if $W$ is isotropic and coisotropic if and only if $W=W^\Omega$.
\end{exe}

\begin{defn}[Standard symplectic vector space]
The \emph{standard symplectic vector space} is defined as the vector space $\R^{2n}$ endowed with the linear symplectic structure $\Omega_0$ defined such that the basis
\begin{align}
\begin{split}
    e_1&=(1,0,\ldots,0),\ldots,e_n=(0,\ldots,0,\underbrace{1}_{n},0,\ldots,0),\\
    f_1&=(0,\ldots,0,\underbrace{1}_{n+1},0,\ldots,0),\ldots,f_n=(0,\ldots,0,1)
\end{split}
\end{align}
is a symplectic basis. 
\end{defn}

\begin{rem}
We can extend $\Omega_0$ to other vectors by using its values on a basis and bilinearity.
\end{rem}

\begin{defn}[Linear symplectomorphism]
A \emph{linear symplectomorphism} $\varphi$ between symplectic vector spaces $(V,\Omega)$ and $(V',\Omega')$ is a linear isomorphism $\varphi\colon V\xrightarrow{\sim} V$ such that
\[
\varphi^*\Omega'=\Omega.
\]
\end{defn}

\begin{rem}
By definition we have 
\[
(\varphi^*\Omega')(u,v):=\Omega'(\varphi(u),\varphi(v)).
\]
\end{rem}

\begin{defn}[Symplectomorphic spaces]
We call two symplectic vector spaces $(V,\Omega)$ and $(V',\Omega')$ \emph{symplectomorphic} if there exists a symplectomorphism between them.
\end{defn}

\begin{exe}
Show that the relation of being symplectomorphic defines an equivalence relation in the set of all even-dimensional vector spaces. Moreover, show that every $2n$-dimensional symplectic vector space $(V,\Omega)$ is symplectomorphic to the standard symplectic vector space $(\R^{2n},\Omega_0)$.
\end{exe}

\subsection{Symplectic manifolds}

Let $\omega$ be a 2-form on a manifold $M$. Note that for each point $q\in M$, the map 
\[
\omega_q\colon T_qM\times T_qM\to \R
\]
is skew-symmetric bilinear on the tangent space $T_qM$. 

\begin{defn}[Symplectic form]
A 2-form $\omega$ is called \emph{symplectic} if $\omega$ is closed and $\omega_q$ is symplectic for all $q\in M$.
\end{defn}

\begin{defn}[Symplectic manifold]
A \emph{symplectic manifold} is a pair $(M,\omega)$ where $M$ is a manifold and $\omega$ is a symplectic form.
\end{defn}

\begin{ex}
\label{ex:standard_symplectic_manifold}
Let $M=\R^{2n}$ with coordinates $x_1,\ldots,x_n,y_1,\ldots,y_n$. Then one can check that the 2-form 
\[
\omega_0:=\sum_{i=1}^n\dd x_i\land \dd y_i
\]
is symplectic and that the set 
\[
\left\{\left(\frac{\de}{\de x_1}\right)_q,\ldots, \left(\frac{\de}{\de x_n}\right)_q,\left(\frac{\de}{\de y_1}\right)_q,\ldots,\left(\frac{\de}{\de y_n}\right)_q\right\}
\]
defines a symplectic basis of $T_qM$.
\end{ex}

\begin{ex}
Let $M=\C^n$ with coordinates $z_1,\ldots,z_n$. Then one can check that the 2-form 
\[
\omega_0:=\frac{\I}{2}\sum_{k=1}^n\dd z_k\land \dd \bar z_k
\]
is symplectic. This is similar to Example \ref{ex:standard_symplectic_manifold} by the identification $\C^n\cong \R^{2n}$ and $z_k=x_k+\I y_k$.
\end{ex}

\subsection{Symplectomorphisms}

\begin{defn}[Symplectomorphism]
Let $(M_1,\omega_1)$ and $(M_2,\omega_2)$ be $2n$-dimensional symplectic manifolds, and let $\varphi\colon M_1\to M_2$ be a a diffeomorphism. Then $\varphi$ is called a \emph{symplectomorphism} if 
\[
\varphi^*\omega_2=\omega_1.
\]
\end{defn}

\begin{rem}
Note that by definition we have 
\[
(\varphi^*\omega_2)_q(u,v)=(\omega_2)_{\varphi(q)}(\dd_q\varphi(u),\dd_q\varphi(v)),\quad \forall u,v\in T_qM.
\]
\end{rem}

The classification of symplectic manifolds up to symplectomorphisms is an interesting problem. The next theorem takes care of this locally. In fact, as any $n$-dimensional manifold looks locally like $\R^n$, one can show that any $2n$-dimensional symplectic manifold $(M,\omega)$ is locally symplectomorphic to $(\R^{2n},\omega_0)$. In particular, the dimension is the only local invariant of symplectic manifolds up to symplectomorphisms.

\begin{thm}[Darboux\cite{Darboux1882}]
\label{thm:Darboux}
Let $(M,\omega)$ be a $2n$-dimensional symplectic manifold and let $q\in M$. Then there is a coordinate chart $(\calU,x_1,\ldots,x_n,y_1,\ldots,y_n)$ centered at $q$ such that on $\calU$ we have \[
\omega=\sum_{i=1}^n\dd x_i\land \dd y_i.
\]
\end{thm}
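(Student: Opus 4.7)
The plan is to apply the \emph{Moser trick} after a preliminary linear normalization. First I would pick any chart $(\calU_0,\tilde\phi)$ centered at $q$ with $\tilde\phi(q)=0$ and push $\omega$ forward to a symplectic form $\omega'$ on an open neighborhood of $0\in\R^{2n}$. By Theorem~\ref{thm:standard_form} applied to the symplectic vector space $(T_0\R^{2n},\omega'_0)$, there exists a linear isomorphism $L\colon\R^{2n}\to\R^{2n}$ such that $L^*\omega'_0$ equals the constant form $\omega_0:=\sum_{i=1}^n\dd x_i\land\dd y_i$ at the origin. Replacing $\tilde\phi$ by $L^{-1}\circ\tilde\phi$, I may therefore assume from the outset that I am working with two symplectic forms on an open neighborhood $V$ of $0\in\R^{2n}$, namely $\omega_0$ and $\omega$ itself, which agree at the single point $0$.

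Next I would set up the interpolation $\omega_t:=(1-t)\omega_0+t\omega$ for $t\in[0,1]$. Each $\omega_t$ is closed, and since nondegeneracy is an open condition and $\omega_t|_0=\omega_0|_0$ is nondegenerate for every $t$, there is a (possibly smaller) neighborhood $V'\subset V$ of $0$ on which $\omega_t$ is symplectic for all $t\in[0,1]$ simultaneously. The form $\omega-\omega_0$ is closed on $V'$ and vanishes at $0$; shrinking $V'$ to a star-shaped neighborhood of $0$ and applying the Poincar\'e lemma (which follows from the fact that $H^2$ of a star-shaped domain vanishes, cf.\ the de Rham discussion above), I get a smooth $1$-form $\alpha$ on $V'$ with $\dd\alpha=\omega-\omega_0$. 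Because $(\omega-\omega_0)|_0=0$, I can moreover arrange (by subtracting a constant $1$-form, or by using the explicit homotopy formula) that $\alpha_0=0$.

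Now comes the Moser trick. I seek a time-dependent vector field $X_t$ on $V'$ whose (time-dependent) flow $\psi_t$ satisfies $\psi_t^*\omega_t=\omega_0$. Differentiating this identity in $t$ and using Cartan's magic formula (Theorem~\ref{thm:Cartan_magic_formula}) together with $\dd\omega_t=0$, the condition reduces to
\[
\dd\iota_{X_t}\omega_t+(\omega-\omega_0)=0,
\]
which is solved by requiring $\iota_{X_t}\omega_t=-\alpha$. Since $\omega_t$ is nondegenerate on $V'$, this algebraic equation determines $X_t$ uniquely and smoothly in $(t,x)$. Crucially, $\alpha_0=0$ forces $X_t(0)=0$ for all $t$, so the point $0$ is a stationary point of the time-dependent ODE; by the standard theorem on smooth dependence on initial conditions, the flow $\psi_t$ exists on a whole neighborhood $\calU\subset V'$ of $0$ for the entire interval $t\in[0,1]$, and $\psi_t(0)=0$.

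Finally I would assemble the chart: the composition $\phi:=\psi_1^{-1}\circ\tilde\phi$, defined on a suitable open neighborhood of $q$, is a diffeomorphism onto its image with $\phi(q)=0$, and a short computation gives $\phi_*\omega=(\psi_1^{-1})_*\omega=\psi_1^*\omega=\omega_0$. Writing the components of $\phi$ as $(x_1,\ldots,x_n,y_1,\ldots,y_n)$ yields the desired Darboux coordinates. The main obstacle in this argument is neither step individually but the interplay between them: one must ensure that the primitive $\alpha$ can be chosen so that the Moser vector field vanishes at $q$, for otherwise the flow would generally fail to exist for $t$ up to $1$ on a fixed neighborhood of $q$; this is precisely why the preliminary linear normalization arranging $\omega|_q=\omega_0|_q$ is indispensable.
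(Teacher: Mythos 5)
Your proposal is correct and follows essentially the route the paper intends: Exercise~\ref{exe:proof_of_Darboux} asks for exactly this, namely a linear normalization via Theorem~\ref{thm:standard_form} so that the two forms agree at $q$, followed by the relative version of Moser's theorem applied to $X=\{q\}$. You have simply inlined the proof of that relative Moser statement for the one-point case (primitive $\alpha$ with $\alpha_q=0$, Moser's equation $\iota_{X_t}\omega_t=-\alpha$, flow fixing $q$ and defined up to $t=1$ on a shrunken neighborhood), which matches the paper's argument step for step.
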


\begin{defn}[Darboux chart]
A local coordinate chart $(\calU,x_1,\ldots,x_n,y_1,\ldots,y_n)$ is called a \emph{Darboux chart}.
\end{defn}

\begin{rem}
The proof of Theorem \ref{thm:Darboux} will be given later as an exercise (Exercise \ref{exe:proof_of_Darboux}) for a simple application of another important theorem (Moser's relative theorem). 
\end{rem}

\section{The cotangent bundle as a symplectic manifold}

Let $X$ be an $n$-dimensional manifold and let $M:=T^*X$ be its cotangent bundle. Consider coordinate charts $(\calU,x_1,\ldots,x_n)$ on $X$ with $x_i\colon \calU\to \R$. Then at any $x\in \calU$ we have a basis of $T^*_xX$ defined by the linear maps 
\[
\dd_x x_1,\ldots, \dd_x x_n.
\]
In particular, if $\xi\in T^*_xX$, then $\xi=\sum_{i=1}^n\xi_i \dd_x x_i$ for some $\xi_1,\ldots,\xi_n\in \R$. Note that this induces a map 
\begin{align}
\begin{split}
    T^*\calU&\to \R^{2n}\\
    (x,\xi)&\mapsto (x_1,\ldots,x_n,\xi_1,\ldots,\xi_n).
\end{split}
\end{align}
The chart $(T^*\calU,x_1,\ldots,x_n,\xi_1,\ldots,\xi_n)$ is a coordinate chart of $T^*X$. The transition functions on the overlaps are smooth; given two charts $(\calU,x_1,\ldots,x_n)$ and $(\calU',x_1'\ldots,x_n')$ and $x\in \calU\cap\calU'$ with $\xi\in T^*_xX$, then 
\[
\xi=\sum_{i=1}^n\xi_i\dd_xx_i=\sum_{i,j}\xi_i\left(\frac{\de x_i}{\de x_j}\right)\dd_xx_j'=\sum_{j=1}^n\xi_j'\dd_xx_j',
\]
where $\xi'_j=\sum_i\xi_i\left(\frac{\de x_i}{\de x_j}\right)$ is smooth. Hence, $T^*X$ is a $2n$-dimensional manifold.

\subsection{Tautological and canonical forms in coordinates}

Consider a coordinate chart $(\calU,x_1,\ldots,x_n)$ for $X$ with associated cotangent coordinates $(T^*\calU,x_1,\ldots,x_n,\xi_1,\ldots,\xi_n)$. Define a 2-form $\omega$ on $T^*\calU$ by 
\[
\omega:=\sum_{i=1}^n\dd x_i\land \dd \xi_i.
\]
we want to show that this expression is independent of the choice of coordinates. Indeed, consider the 1-form 
\[
\alpha:=\sum_{i=1}^n\xi_i\dd x_i,
\]
and note that $\omega=-\dd\alpha$. Let $(\calU,x_1,\ldots,x_n,\xi_1,\ldots,\xi_n)$ and $(\calU',x_1',\ldots,x_n',\xi_1',\ldots,\xi_n')$ be two coordinate charts on $T^*X$. As we have seen, on the intersection $\calU\cap \calU'$ they are related by $\xi_j'=\sum_{i}\xi_i\left(\frac{\de x_i}{\de x_j}\right)$. Since $\dd x_j'=\sum_{i}\left(\frac{\de x_j'}{\de x_i}\right)\dd x_i$, we get 
\[
\alpha=\sum_i\xi_i\dd x_i=\sum_j\xi_j'\dd x_j'=\alpha'.
\]
Hence, since $\alpha$ is intrinsically defined, so is $\omega$. This finishes the claim. 

\begin{defn}[Tautological form]
The 1-form $\alpha$ is called \emph{tautological form}.
\end{defn}

\begin{rem}
The tautological form is sometimes also called \emph{Liouville 1-form} and $\omega=-\dd\alpha$ is often called \emph{canonical symplectic form}.
\end{rem}

\subsection{Coordinate-free construction}

Let use denote by 
\begin{align}
\begin{split}
    \pi\colon M:=T^*X&\to X,\\
    q=(x,\xi)&\mapsto x
\end{split}
\end{align}
be the natural projection for $\xi\in T^*_xX$. We define the tautological 1-form $\alpha$ pointwise as
\[
\alpha_q=(\dd_q \pi)^*\xi\in T^*_qM.
\]
where $(\dd_q\pi)^*$ denotes the transpose of $\dd_q\pi$, i.e. $(\dd_q\pi)^*\xi=\xi\circ\dd_q\pi$. We have the three maps
\begin{align}
\pi\colon M:=T^*X&\to X,\\
\dd_q\pi\colon T_qM&\to T_xX,\\
(\dd_q\pi)^*\colon T^*_xX&\to T^*_qM.
\end{align}
In fact, we have 
\[
\alpha_q(v)=\xi\left((\dd_q\pi)v\right),\quad \forall v\in T_qM.
\]
The canonical symplectic form is then defined by
\[
\omega=-\dd\alpha.
\]

\begin{exe}
Show that the tautological form $\alpha$ is uniquely characterized by the property that, for every 1-form $\mu\colon X\to T^*X$ we have 
\[
\mu^*\alpha=\mu.
\]
\end{exe}

\subsection{Symplectic volume}
Let $V$ be a vector space of dimension $\dim V<\infty$.
Any skew-symmetric bilinear map $\Omega\in \bigwedge^2V^*$ is of the form 
\[
\Omega=e_1^*\land f^*_1+\dotsm +e^*_n\land f_n^*,
\]
where $u_1^*,\ldots,u_k^*,e_1^*,\ldots,e^*_n,f_1^*,\ldots,f_n^*$ is a basis of $V^*$ dual to the standard basis. Here we have set $\dim V=k+2n$. If $\Omega$ is also nondegenerate, i.e. a symplectic form on a vector space $V$ with $\dim V=2n$, then the $n$-th exterior power of
\[
\Omega^n:=\underbrace{\Omega\land\dotsm \land \Omega}_n
\]
does not vanish. 
\begin{exe}
\label{exe:symplectic_volume}
Show that this also holds for the $n$-th exterior power $\omega^n$ of a symplectic form $\omega$ on a $2n$-dimensional symplectic manifold $(M,\omega)$. Deduce that it defines a volume form on $M$.
\end{exe}

Exercise \ref{exe:symplectic_volume} shows that any symplectic manifold $(M,\omega)$ can be canonically oriented by the symplectic structure.

\begin{defn}[Symplectic volume]
Let $(M,\omega)$ be a symplectic manifold. Then the form 
\[
\frac{\omega^n}{n!}
\]
is called the \emph{symplectic volume} of $(M,\omega)$.
\end{defn}

\begin{rem}
The symplectic volume is sometimes also called \emph{Liouville volume}.
\end{rem}

\begin{exe}
Show that if, conversely, a given 2-form $\Omega\in \bigwedge^2V^*$ satisfies $\Omega^n\not=0$, then $\Omega$ is symplectic.
\end{exe}

\begin{exe}
Let $(M,\omega)$ be a $2n$-dimensional symplectic manifold. Show that, if $M$ is compact, the de Rham cohomology class $[\omega^n]\in H^{2n}(M)$ is non-zero. \emph{Hint: Use Stokes' theorem}. Conclude then that $[\omega]$ is not exact and show that for $n>1$, there are no symplectic structures on the sphere $S^{2n}$.
\end{exe}

\section{Lagrangian submanifolds}

\subsection{Lagrangian submanifolds of $T^*X$}

\begin{defn}[Lagrangian submanifold]
Let $(M,\omega)$ be a $2n$-dimensional symplectic manifold. A submanifold $L\subset M$ is called \emph{Lagrangian} if for any point $q\in L$ we get that $T_qL$ is a Lagrangian subspace of $T_qM$.
\end{defn}

\begin{rem}
Recall that this is equivalent to say that $L\subset M$ is Lagrangian if and only if $\omega_q\big|_{T_qL}=0$ and $\dim T_qL=\frac{1}{2}\dim T_qM$ for all $q\in L$. Equivalently, if $i\colon L\hookrightarrow M$ denotes the inclusion of $L$ into $M$, then $L$ is Lagrangian if and only if $i^*\omega=0$ and $\dim L=\frac{1}{2}\dim M$.
\end{rem}

Let $X$ be an $n$-dimensional manifold and let $M:=T^*X$ be its cotangent bundle. consider coordinates $x_1,\ldots,x_n$ on $U\subseteq X$ with cotangent coordinates $x_1,\ldots,x_n,\xi_1,\ldots,\xi_n$ on $T^*U$, then the tautological 1-form on $T^*X$ is given by 
\[
\alpha=\sum_i\xi_i\dd x_i
\]
and the canonical 2-form on $T^*X$ is 
\[
\omega=-\dd\alpha=\sum_i\dd x_i\land \dd\xi_i.
\]

\subsection{Conormal bundle}
Let $S$ be any $k$-dimensional submanifold of $X$. 

\begin{defn}[Conormal space]
The \emph{conormal space} at $x\in S$ is defined by 
\[
N_x^*S:=\{\xi\in T^*_xX\mid \xi(v)=0,\,\forall v\in T_xS\}.
\]
\end{defn}

\begin{defn}[Conormal bundle]
The \emph{conormal bundle} of $S$ is
\[
N^*S=\{(x,\xi)\in T^*X\mid x\in S,\, \xi\in N^*_xS\}.
\]
\end{defn}

\begin{exe}
Show that the conormal bundle $N^*S$ is an $n$-dimensional submanifold of $T^*X$.
\end{exe}

\begin{prop}
\label{prop:conormal}
Let $i\colon N^*S\hookrightarrow T^*X$ be the inclusion, and let $\alpha$ be the tautological 1-form on $T^*X$. Then 
\[
i^*\alpha=0.
\]
\end{prop}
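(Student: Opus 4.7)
The plan is to reduce to the coordinate-free description of $\alpha$ and observe that it vanishes on $N^*S$ for tautological reasons. First, I would unfold the definitions: for $q = (x,\xi) \in N^*S$ and $v \in T_q(N^*S)$,
\[
(i^*\alpha)_q(v) = \alpha_q(\dd_q i\, v) = \bigl((\dd_q\pi)^*\xi\bigr)(\dd_q i\, v) = \xi\bigl(\dd_q(\pi \circ i)\, v\bigr).
\]

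Next, I would identify the key factorization: the composition $\pi \circ i \colon N^*S \to X$ takes values in $S$, since by construction every point of $N^*S$ projects to a point of $S$. Writing $j \colon S \hookrightarrow X$ for the inclusion and $\pi_S \colon N^*S \to S$ for the restricted projection, we have $\pi \circ i = j \circ \pi_S$, so $\dd_q(\pi \circ i)\, v$ lies in the image of $\dd_x j$, which is $T_xS \subset T_xX$.

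Finally, I would invoke the defining property of the conormal space: $\xi \in N^*_x S$ means exactly that $\xi$ annihilates $T_xS$. The pairing therefore vanishes and $(i^*\alpha)_q(v) = 0$ for all $q$ and $v$, giving $i^*\alpha = 0$.

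There is essentially no obstacle here; the whole argument is a definition-chase whose only nontrivial step is recognising the factorisation of $\pi \circ i$ through $S$, after which the conormal condition does all the work. As a sanity check one can use adapted coordinates $(x_1,\ldots,x_n)$ with $S = \{x_{k+1} = \cdots = x_n = 0\}$: then $N^*S \subset T^*X$ is cut out by $x_{k+1} = \cdots = x_n = 0$ together with $\xi_1 = \cdots = \xi_k = 0$, and in $\alpha = \sum_i \xi_i\, \dd x_i$ each term becomes zero on $N^*S$ because in each term either $\xi_i$ vanishes identically on $N^*S$ or $\dd x_i$ pulls back to zero.
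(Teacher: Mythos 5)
Your argument is correct, and it takes a genuinely different route from the paper. The paper proves the statement in adapted coordinates: choosing a chart $(\calU,x_1,\ldots,x_n)$ adapted to $S$, it notes that $N^*S\cap T^*\calU$ is cut out by $x_{k+1}=\dotsm=x_n=0$ and $\xi_1=\dotsm=\xi_k=0$, and then checks term by term that $\alpha=\sum_i\xi_i\,\dd x_i$ restricts to zero — which is exactly your closing ``sanity check.'' Your main argument is instead coordinate-free: you use the intrinsic definition $\alpha_q=(\dd_q\pi)^*\xi$, factor $\pi\circ i$ through $S$ so that $\dd_q(\pi\circ i)$ lands in $T_xS$, and let the defining property of $N^*_xS$ kill the pairing. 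The intrinsic version buys conceptual clarity (it makes the tautological nature of the vanishing transparent and needs no choice of adapted chart, only the already-established fact that $N^*S$ is a submanifold), while the paper's coordinate computation is more concrete and reuses the same adapted-chart description that is needed anyway to see that $N^*S$ is a smooth $n$-dimensional submanifold. Both are complete; there is no gap in your proposal.
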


\begin{defn}[Adapted coordinate chart]
A coordinate chart $(\calU,x_1,\ldots,x_n)$ on $X$ is said to be \emph{adapted} to a $k$-dimensional submanfiold $S\subset X$ if $S\cap\calU$ is described by $x_{k+1}=\dotsm =x_n=0$.
\end{defn}

\begin{proof}[Proof of Proposition \ref{prop:conormal}]
Let $(\calU,x_1,\ldots,x_n)$ be a coordinate system on $X$ centered at $x\in S$ and adapted to $S$, so that $\calU\cap S$ is described by $x_{k+1}=\dotsm=x_n=0$. Let $(T^*\calU,x_1,\ldots,x_n,\xi_1,\dots,\xi_n)$ be the associated cotangent coordinate system. The submanifold $N^*S\cap T^*\calU$ is then described by 
\begin{align*}
    x_{k+1}&=\dotsm =x_n=0,\\
    \xi_1&=\dotsm=\xi_k=0.
\end{align*}
Since $\alpha=\sum_i\xi_i\dd x_i$ on $T^*\calU$, we conclude that for $p\in N^*S$ we get 
\[
(i^*\alpha)_p=\alpha_p\big|_{T_p(N^*S)}=\sum_{i>k}\xi_i\dd x_i\bigg|_{\mathrm{span}\left(\frac{\de}{\de x_i}\right)_{i\leq k}}=0.
\]
\end{proof} 

\begin{defn}[Zero section]
The \emph{zero section} of $T^*X$ is defined by 
\[
X_0:=\{(x,\xi)\in T^*X\mid \xi=0\in T^*_xX\}.
\]
\end{defn}

\begin{exe}
Let $i_0\colon X_0\hookrightarrow T^*X$ be the inclusion of the zero section and let $\omega=-\dd\alpha$ be the canonical symplectic form on $T^*X$. Show that $i_0^*\omega$ and $X_0$ are Lagrangian submanifolds of $T^*X$.
\end{exe}

\begin{cor}
\label{cor:conormal}
For any submanifold $S\subset X$, the conormal bundle $N^*S$ is a Lagrangian submanifold of $T^*X$.
\end{cor}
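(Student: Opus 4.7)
The plan is to combine Proposition \ref{prop:conormal} (which states $i^*\alpha = 0$ for $i\colon N^*S \hookrightarrow T^*X$) with the already-noted fact that $N^*S$ is an $n$-dimensional submanifold of the $2n$-dimensional manifold $T^*X$, so that only the isotropy condition requires any real verification.

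First I would recall the characterization of Lagrangian submanifolds given in the remark after the definition: a submanifold $L \subset M$ of a $2n$-dimensional symplectic manifold is Lagrangian if and only if $i^*\omega = 0$ and $\dim L = \tfrac12 \dim M$, where $i\colon L \hookrightarrow M$ is the inclusion. Here $M = T^*X$ with $\dim M = 2n$, and the dimension condition $\dim N^*S = n$ is precisely the content of the exercise preceding Proposition \ref{prop:conormal}, so one half of the characterization is free.

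For the isotropy half, the key observation is that pullback commutes with the de Rham differential. Since $\omega = -d\alpha$ on $T^*X$ and $i^*\alpha = 0$ by Proposition \ref{prop:conormal}, we compute
\[
i^*\omega = i^*(-d\alpha) = -d(i^*\alpha) = -d(0) = 0.
\]
This shows that $\omega$ vanishes when restricted to $N^*S$, which together with the dimension count completes the proof.

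There is essentially no obstacle here: all the work has already been done in Proposition \ref{prop:conormal} and in the dimension computation. The corollary is a direct consequence of the fact that the tautological form restricts to zero on $N^*S$, coupled with the commutation of $d$ with pullback.
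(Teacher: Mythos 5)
Your proof is correct and is exactly the argument the text intends (the corollary is left as an exercise there, with Proposition \ref{prop:conormal} and the dimension exercise as the ingredients): since $i^*\alpha=0$ and pullback commutes with $\dd$, one gets $i^*\omega=-\dd(i^*\alpha)=0$, and the dimension count $\dim N^*S=n=\tfrac12\dim T^*X$ finishes the Lagrangian condition. Nothing is missing.
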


\begin{exe}
prove Corollary \ref{cor:conormal}.
\end{exe}

\begin{rem}
If $S=\{x\}\subset X$, then the conormal bundle is given by a cotangent fiber $N^*S=T_x^*X$. If $S=X$, then the conormal bundle is the zero section of $T^*X$, i.e. $N^*S=X_0$. 
\end{rem}

\subsection{Applications to symplectomorphisms}
Let $(M_1,\omega_1)$ and $(M_2,\omega_2)$ be two $2n$-dimensional symplectic manifolds. Given a diffeomorphism $\varphi\colon M_1\xrightarrow{\sim}M_2$, when is it a symplectomorphism? Equivalently, when do we have 
\[
\varphi^*\omega_2=\omega_1?
\]
Consider two projection maps
\[
\begin{tikzcd}
                          & M_1\times M_2 \arrow[ld, "\mathrm{pr}_1"'] \arrow[rd, "\mathrm{pr}_2"] &     \\
M_1 \arrow[rr, "\varphi"] &                                                                        & M_2
\end{tikzcd}
\]
Then 
\[
\omega:=(\mathrm{pr}_1)^*\omega_1+(\mathrm{pr}_2)^*\omega_2
\]
is a 2-form on $M_1\times M_2$ which is closed:
\[
\dd\omega=(\mathrm{pr}_1)^*\underbrace{\dd\omega_1}_{=0}+(\mathrm{pr}_2)^*\underbrace{\dd\omega_2}_{=0}=0,
\]
and symplectic:
\[
\omega^{2n}=\binom{2n}{n}\left((\mathrm{pr}_1)^*\omega_1\right)^{n}\land\left((\mathrm{pr}_2)^*\omega_2\right)^n\not=0.
\]
Note that for all $\lambda_1,\lambda_2\in\R$ we get that
\[
\lambda_1(\mathrm{pr}_1)^*\omega_1+\lambda_2(\mathrm{pr})^*\omega_2
\]
is a symplectic form on $M_1\times M_2$. The \emph{twisted product form} on $M_1\times M_2$ is obtained by taking $\lambda_1=1$ and $\lambda_2=-1$. Namely, 
\[
\tilde{\omega}=(\mathrm{pr}_1)^*\omega_1-(\mathrm{pr}_2)^*\omega_2. 
\]
For a diffeomorphism $\varphi\colon M_1\xrightarrow{\sim} M_2$, we define its \emph{graph} by 
\begin{equation}
\label{eq:Graph}
\Gamma_\varphi:=\mathrm{Graph}\, \varphi=\{(q,\varphi(q))\mid q\in M_1\}\subset M_1\times M_2.
\end{equation}
\begin{rem}
Note that $\dim \Gamma_\varphi =2n$ and that $\Gamma_\varphi$ is the embedded image of $M_1$ in $M_1\times M_2$. The embedding is given by the map 
\begin{align*}
    \gamma\colon M_1&\to M_1\times M_2,\\
    q&\mapsto (q,\varphi(q)).
\end{align*}
\end{rem}

\begin{prop}
\label{prop:Graph_Lagrangian}
A diffeomorphism $\varphi$ is a symplectomorphism if and only if $\Gamma_\varphi$ is a Lagrangian submanifold of $(M_1\times M_2,\tilde{\omega})$.
\end{prop}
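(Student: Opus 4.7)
The plan is to reduce the Lagrangian condition on $\Gamma_\varphi$ directly to the symplectomorphism condition on $\varphi$ by pulling back $\tilde{\omega}$ along the parametrization $\gamma\colon M_1 \to M_1 \times M_2$, $q \mapsto (q,\varphi(q))$, that was introduced just above the statement. Since $\gamma$ is a smooth embedding with image exactly $\Gamma_\varphi$, the form $\tilde{\omega}$ vanishes on $T\Gamma_\varphi$ if and only if $\gamma^*\tilde{\omega}=0$.

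First I would verify the dimension part of the Lagrangian condition, which is automatic: $\dim \Gamma_\varphi = \dim M_1 = 2n = \tfrac{1}{2}\dim(M_1\times M_2)$. So being Lagrangian is equivalent to being isotropic, i.e.\ to $\gamma^*\tilde\omega=0$.

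Next I would compute $\gamma^*\tilde\omega$ using functoriality of the pullback. Writing $\tilde\omega = \mathrm{pr}_1^*\omega_1 - \mathrm{pr}_2^*\omega_2$, one has
\[
\gamma^*\tilde\omega \;=\; (\mathrm{pr}_1\circ\gamma)^*\omega_1 - (\mathrm{pr}_2\circ\gamma)^*\omega_2.
\]
By construction $\mathrm{pr}_1\circ\gamma = \id_{M_1}$ and $\mathrm{pr}_2\circ\gamma = \varphi$, so the right-hand side collapses to $\omega_1 - \varphi^*\omega_2$. Therefore $\gamma^*\tilde\omega=0$ if and only if $\varphi^*\omega_2=\omega_1$, which is precisely the symplectomorphism condition. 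Combining this with the automatic dimension count gives the equivalence in both directions.

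There is essentially no obstacle here; the argument is a direct unwinding of definitions and uses only $(G\circ F)^* = F^*\circ G^*$ together with the identifications $\mathrm{pr}_1\circ\gamma=\id$ and $\mathrm{pr}_2\circ\gamma=\varphi$. The only subtlety worth a sentence in the write-up is pointing out why $\gamma^*\tilde\omega = 0$ is equivalent to $\tilde\omega\big|_{T\Gamma_\varphi} = 0$, which follows because $\gamma\colon M_1 \to \Gamma_\varphi$ is a diffeomorphism onto its image and hence $\dd_q\gamma$ is a linear isomorphism $T_qM_1 \xrightarrow{\sim} T_{(q,\varphi(q))}\Gamma_\varphi$ at every point.
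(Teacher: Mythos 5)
Your proof is correct and is essentially the same as the paper's: both reduce the Lagrangian condition to $\gamma^*\tilde\omega=0$ and compute $\gamma^*\tilde\omega=(\mathrm{pr}_1\circ\gamma)^*\omega_1-(\mathrm{pr}_2\circ\gamma)^*\omega_2=\omega_1-\varphi^*\omega_2$ using $\mathrm{pr}_1\circ\gamma=\id_{M_1}$ and $\mathrm{pr}_2\circ\gamma=\varphi$. Your added remarks on the dimension count and on why $\gamma^*\tilde\omega=0$ is equivalent to $\tilde\omega\big|_{T\Gamma_\varphi}=0$ are fine and only make explicit what the paper leaves implicit.
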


\begin{proof}
The graph $\Gamma_\varphi$ is Lagrangian if and only if $\gamma^*\tilde{\omega}=0$. But we have
\[
\gamma^*\tilde{\omega}=\gamma^*(\mathrm{pr}_1)^*\omega_1-\gamma^*(\mathrm{pr}_2)^*\omega_2=(\mathrm{pr}_1\circ \gamma)^*\omega_1-(\mathrm{pr}_2\circ\gamma)^*\omega_2,
\]
where $\mathrm{pr_1}\circ \gamma$ is the identity map on $M_1$ whereas $\mathrm{pr}_2\circ \gamma=\varphi$. Hence
\[
\gamma^*\tilde{\omega}=0\Longleftrightarrow \varphi^*\omega_2=\omega_1.
\]
\end{proof}

\section{Local theory}

\subsection{Isotopies and vector fields}

Let $M$ be a manifold and $\rho\colon M\times \R\to M$ a map with $\rho_t(q):=\rho(q,t)$.

\begin{defn}[Isotopy]
The map $\rho$ is an \emph{isotopy} if each $\rho_t\colon M\to M$ is a diffeomorphism and $\rho_0=\id_M$.
\end{defn}

Given an isotopy $\rho$ we can construct a \emph{time-dependent} vector field. This means that we get a family of vector fields $X_t$ for $t\in\R$, which at $q\in M$ satisfy
\[
X_t(q)=\frac{\dd}{\dd s}\rho_s(p)\bigg|_{s=t},\qquad \forall p=\rho_t^{-1}(q).
\]
Basically, this means
\begin{equation}
\label{eq:isotopy_differential_equation}
\frac{\dd\rho_t}{\dd t}=X_t\circ\rho_t.
\end{equation}
Conversely, let $X_t$ be a time-dependent vector field and assume either that $M$ is compact or $X_t$ is compactly supported for all $t$. Then there exists an isotopy $\rho$ satisfying \eqref{eq:isotopy_differential_equation}. 

Moreover, if $M$ is compact, we have a one-to-one correspondence
\begin{align*}
    \{\text{isotopies of $M$}\}&\longleftrightarrow\{\text{time-dependent vector fields on $M$}\},\\
    (\rho_t)_{t\in\R}&\longleftrightarrow (X_t)_{t\in\R}.
\end{align*}

\begin{defn}[Exponential map]
If a vector field $X_t=X$ is time-independent, we call the associated isotopy the \emph{exponential map} of $X$ and we denote it by $\exp tX$. 
\end{defn}

\begin{rem}
Note that the family $\{\exp tX\colon M\to M\mid t\in\R\}$ is the unique smooth family of diffeomorphisms satisfying the Cauchy problem
\begin{align*}
\exp tX\big|_{t=0}&=\id_M,\\
\frac{\dd}{\dd t}(\exp tX)(q)&=X(\exp tX(q)).
\end{align*}
\end{rem}

\begin{rem}
The exponential map is the same as the \emph{flow} of a vector field (see Section \ref{sec:vector_fields_and_differential_1-forms}, Definition \ref{defn:flow}). The flow of a time-dependent vector field is given by the corresponding isotopy. 
\end{rem}

\begin{exe}
Show that for a time-dependent vector field $X_t$ and $\omega\in\Omega^s(M)$ we have 
\begin{equation}
\label{eq:Lie_derivative_isotopy}
\frac{\dd}{\dd t}\rho_t^*\omega=\rho_t^*L_{X_t}\omega,
\end{equation}
where $\rho$ is the (local) isotopy generated by $X_t$.
\end{exe}

\begin{prop}
\label{prop:change_pullback_isotopy}
For a smooth family $(\omega_t)_{t\in \R}$ of $s$-forms, we have
\begin{equation}
\label{eq:change_pullback_isotopy}
\boxed{
\frac{\dd}{\dd t}\rho_t^*\omega_t=\rho_t^*\left(L_{X_t}\omega_t+\frac{\dd}{\dd t}\omega_t\right).}
\end{equation}
\end{prop}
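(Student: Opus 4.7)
The plan is to reduce the statement to the already established case of a time-independent family by a two-variable chain rule argument. Define the auxiliary map $F(s,t) := \rho_s^*\omega_t$, which is a smooth family of $s$-forms depending on two real parameters. Then $\rho_t^*\omega_t = F(t,t)$, so by the chain rule
\[
\frac{\dd}{\dd t}\bigg|_{t=t_0}\rho_t^*\omega_t = \frac{\de}{\de s}\bigg|_{s=t_0}F(s,t_0) + \frac{\de}{\de u}\bigg|_{u=t_0}F(t_0,u).
\]

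First I would handle the $s$-derivative at fixed $t_0$: this is exactly $\frac{\dd}{\dd s}\big|_{s=t_0}\rho_s^*\omega_{t_0}$, and since $\omega_{t_0}$ is a single (time-independent) $s$-form, equation \eqref{eq:Lie_derivative_isotopy} applies and gives $\rho_{t_0}^* L_{X_{t_0}}\omega_{t_0}$. Next I would handle the $u$-derivative at fixed $s=t_0$: here the pullback $\rho_{t_0}^*$ is a fixed linear operator on differential forms, so it commutes with the $u$-derivative acting on the smooth curve $u\mapsto\omega_u$, yielding $\rho_{t_0}^*\bigl(\frac{\dd}{\dd u}\big|_{u=t_0}\omega_u\bigr)$. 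Adding the two contributions and factoring out $\rho_{t_0}^*$ gives
\[
\frac{\dd}{\dd t}\bigg|_{t=t_0}\rho_t^*\omega_t = \rho_{t_0}^*\left(L_{X_{t_0}}\omega_{t_0} + \frac{\dd}{\dd t}\bigg|_{t=t_0}\omega_t\right),
\]
which is the desired identity at $t_0$; since $t_0$ was arbitrary, the formula holds for all $t$.

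The only step that requires a little care is the $u$-derivative computation: one needs to know that for a fixed diffeomorphism $\rho_{t_0}$ and a smooth family $u\mapsto\omega_u$ of $s$-forms, the pullback passes through the derivative, i.e. $\frac{\dd}{\dd u}\rho_{t_0}^*\omega_u = \rho_{t_0}^*\frac{\dd}{\dd u}\omega_u$. This is immediate from the pointwise definition $(\rho_{t_0}^*\omega_u)_q(v_1,\ldots,v_s) = (\omega_u)_{\rho_{t_0}(q)}(\dd_q\rho_{t_0}\,v_1,\ldots,\dd_q\rho_{t_0}\,v_s)$, which is $\R$-linear in $\omega_u$ with $q$- and $v_i$-data independent of $u$, so differentiation in $u$ commutes with the pullback. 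The application of the two-variable chain rule itself is the standard computation for a composition $t\mapsto(t,t)\mapsto F(t,t)$ with $F$ smooth, so no additional analytic subtlety arises. The main conceptual obstacle — if any — is simply organizing the partial derivatives cleanly and invoking \eqref{eq:Lie_derivative_isotopy} in its time-independent form with $\omega_{t_0}$ frozen.
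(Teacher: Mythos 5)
Your proposal is correct and follows essentially the same route as the paper: both split $\rho_t^*\omega_t$ into a two-variable function, apply the chain rule, identify the first partial derivative via \eqref{eq:Lie_derivative_isotopy}, and pass the fixed pullback $\rho_{t_0}^*$ through the time derivative in the second term. Your extra remark justifying why the pullback commutes with the $u$-derivative is a welcome (if routine) elaboration of a step the paper leaves implicit.
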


\begin{rem}
Equation \eqref{eq:change_pullback_isotopy} will turn out to be very useful for the proof of Moser's theorem (see Section \ref{sec:Moser's_theorem}, Theorem \ref{thm:Moser1})
\end{rem}

\begin{proof}[Proof of Proposition \ref{prop:change_pullback_isotopy}]
If $f(x,y)$ is a real function of two variables, we can use the chain rule to get
\[
\frac{\dd}{\dd t}f(t,t)=\frac{\dd}{\dd x}f(x,t)\bigg|_{x=t}+\frac{\dd}{\dd y}f(t,y)\bigg|_{y=t}.
\]
Hence, we get 
\[
\frac{\dd}{\dd t}\rho^*_t\omega_t=\underbrace{\frac{\dd}{\dd x}\rho_x^*\omega_t\bigg|_{x=t}}_{\rho_x^* L_{X_t}\omega_t\big|_{x=t}\text{ by \eqref{eq:Lie_derivative_isotopy}}}+\underbrace{\frac{\dd}{\dd y}\rho_t^*\omega_y\bigg|_{y=t}}_{\rho_t^*\frac{\dd}{\dd y}\omega_y\big|_{y=t}}=\rho_t^*\left(L_{X_t}\omega_t+\frac{\dd}{\dd t}\omega_t\right).
\]
\end{proof}

\subsection{Tubular neighborhood theorem}

Let $M$ be an $n$-dimensional manifold and let $X\subset M$ be a $k$-dimensional submanifold and consider the inclusion map
\[
i\colon X\hookrightarrow M.
\]
By the differential of the inclusion $\dd i_x\colon T_xX\hookrightarrow T_xM$, we have an inclusion of the tangent space of $X$ at a point $x\in X$ into the tangent space of $M$ at the point $x$. 

\begin{defn}[Normal space]
The \emph{normal space} to $X$ at the point $x\in X$ is given by the $(n-k)$-dimensional vector space defined by the quotient
\[
N_xX:=T_xM/T_xX.
\]
\end{defn}

\begin{defn}[Normal bundle]
The \emph{normal bundle} is then given by 
\[
NX:=\{(x,v)\mid x\in X,\, v\in N_xX\}.
\]
\end{defn}

\begin{rem}
Using the natural projection, $NX$ is a vector bundle over $X$ of rank $n-k$ and hence as a manifold it is $n$-dimensional. The zero section of $NX$
\begin{align*}
    i_0\colon X&\hookrightarrow NX,\\
    x&\mapsto (x,0),
\end{align*}
embeds $X$ as a closed submanifold of $NX$.
\end{rem}

\begin{defn}[Convex neighborhood]
A neighborhood $\calU_0$ of the zero section $X$ in $NX$ is called \emph{convex} if the intersection $\calU_0\cap N_xX$ with each fiber is convex.
\end{defn}

\begin{thm}[Tubular neighborhood theorem]
There exists a convex neighborhood $\calU_0$ of $X$ in $NX$, a neighborhood $\calU$ of $X$ in $M$, and a diffeomorphism $\varphi\colon \calU_0\to \calU$ such that the following diagram commutes:
\[
\begin{tikzcd}
NX\supseteq\mathcal{U}_0 \arrow[rr, "\varphi"] &                                      & \mathcal{U}\subseteq M \\
                                              & X \arrow[lu,hook', "i_0"] \arrow[ru,hook, "i"'] &                       
\end{tikzcd}
\]
\end{thm}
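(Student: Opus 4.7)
The plan is to reduce the statement to the inverse function theorem by using a Riemannian metric and the geodesic exponential map. First I would choose any Riemannian metric $g$ on $M$ (which exists by a partition-of-unity argument on any paracompact manifold). Using $g$, the quotient $N_xX = T_xM/T_xX$ is canonically isomorphic to the orthogonal complement $(T_xX)^{\perp_g} \subset T_xM$, and this gives an identification of vector bundles $NX \cong (TX)^{\perp_g}$. From now on I regard $NX$ as a subbundle of $TM|_X$ via this identification.

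Second, I would define the map
\[
E\colon NX \to M,\qquad E(x,v):=\exp^g_x(v),
\]
where $\exp^g$ is the geodesic exponential map of $g$. This map is smooth on some open neighborhood of the zero section (on all of $NX$ if $g$ is complete, but that is not needed here). The key computation is the differential at a zero-section point $(x,0)$. Identifying $T_{(x,0)}(NX) \cong T_xX \oplus N_xX$ and using the defining property of the exponential map $\frac{\dd}{\dd t}\exp^g_x(tv)|_{t=0} = v$, one checks that $\dd_{(x,0)}E$ equals the identification $T_xX\oplus N_xX \xrightarrow{\sim} T_xX \oplus (T_xX)^{\perp_g} = T_xM$. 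In particular $\dd_{(x,0)}E$ is an isomorphism, so by the inverse function theorem $E$ is a local diffeomorphism at every point of the zero section.

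Third, and this is the main obstacle, I must promote \emph{pointwise} local diffeomorphism along $X$ to a \emph{global} diffeomorphism from some open neighborhood of the zero section of $NX$ to an open neighborhood of $X$ in $M$. The obvious obstruction is injectivity: geodesics starting at different base points may collide away from $X$. The standard argument is: by local injectivity, each $x\in X$ has a neighborhood $W_x\subset NX$ of $(x,0)$ on which $E$ is a diffeomorphism onto its image. A proof by contradiction using sequences (possible by local compactness / paracompactness) shows that there exists a continuous function $\delta\colon X\to (0,\infty)$ such that $E$ is injective on $V:=\{(x,v)\in NX\mid |v|_g<\delta(x)\}$, hence a diffeomorphism $V\to E(V)$ onto an open set of $M$ since it is an injective local diffeomorphism.

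Finally, to obtain convexity in the fibers, I would simply shrink $\delta$ if necessary (e.g. pick any continuous $\varepsilon\colon X\to(0,\infty)$ with $\varepsilon\leq \delta$) and set
\[
\calU_0 := \{(x,v)\in NX \mid |v|_g<\varepsilon(x)\},\qquad \calU := E(\calU_0),\qquad \varphi := E|_{\calU_0}.
\]
Then $\calU_0\cap N_xX$ is an open Euclidean ball in each fiber, hence convex. The commutativity $\varphi\circ i_0 = i$ is automatic because $E(x,0) = \exp^g_x(0) = x$.
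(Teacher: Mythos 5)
The paper itself states this theorem without proof (it is immediately followed only by the remark on the tubular neighborhood fibration), so there is no in-text argument to compare against; your proposal supplies the standard Riemannian proof, and its overall structure is correct: identify $NX$ with $(TX)^{\perp_g}\subset TM\vert_X$, show $E(x,v)=\exp^g_x(v)$ has invertible differential along the zero section, upgrade to a diffeomorphism on a fiberwise-starshaped neighborhood, and note $E(x,0)=x$ gives the commutativity, with convexity coming for free from taking metric balls in the fibers.

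The one place where your argument as written is not yet a proof is the local-to-global injectivity step, which you correctly single out as the main obstacle. The ``contradiction with sequences'' argument works verbatim only when $X$ is compact: one assumes injectivity fails on $\{|v|<1/n\}$ for every $n$, extracts sequences $(x_n,v_n)\neq(x_n',v_n')$ with $E(x_n,v_n)=E(x_n',v_n')$ and $|v_n|,|v_n'|\to 0$, and uses compactness of $X$ to pass to convergent subsequences, contradicting local injectivity at the common limit point on the zero section. The theorem here is stated for an arbitrary embedded submanifold $X\subset M$, and for noncompact $X$ the sequences may leave every compact set, so local compactness or paracompactness alone does not rescue the argument. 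The standard repair is to make the radius function quantitative rather than argue by contradiction: define, say, $r(x)$ as the supremum of radii $\varepsilon\leq 1$ such that $E$ is a diffeomorphism on $\{(x',v)\mid d(x',x)<\varepsilon,\ |v|_g<\varepsilon\}$, check that $r$ is continuous (in fact $1$-Lipschitz in the induced distance), and then prove injectivity of $E$ on $V=\{(x,v)\mid |v|_g<\tfrac{1}{2}r(x)\}$ directly by a triangle-inequality estimate: if $E(x,v)=E(x',v')$ with, say, $r(x')\leq r(x)$, then $d(x,x')\leq |v|_g+|v'|_g<r(x)$, so both points lie in a single set on which $E$ is already known to be injective. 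With that replacement (or an equivalent locally finite cover argument), your proof is complete; everything else, including the embeddedness of $X$ being used implicitly so that $E$ restricted to the zero section is the inclusion, the openness of $E(\calU_0)$ because an injective local diffeomorphism is an open map, and the fiberwise convexity of metric balls, is fine.
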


\begin{rem}
Restricting to the subset $\calU_0\subseteq NX$, we obtain a submersion $\calU_0\xrightarrow{\pi_0}X$ with all fibers $\pi_0^{-1}(x)$ convex. We can extend this fibration to $\calU$ by setting $\pi:=\pi_0\circ \varphi^{-1}$ such that if $NX\supseteq\calU_0\xrightarrow{\pi_0}X$ is a fibration, then $M\supseteq\calU\xrightarrow{\pi}X$ is a fibration. This is called the \emph{tubular neighborhood fibration}.

\end{rem}

\subsection{Homotopy formula}
Let $\calU$ be a tubular neighborhood of a submanifold $X\subset M$. The restriction of de Rham cohomology groups
\[
i^*\colon H^s(\calU)\to H^s(X)
\]
by the inclusion map is surjective. By the tubular neighborhood fibration, $i^*$ is also injective since the de Rham cohomology is homotopy invariant. In fact, we have the following corollary:

\begin{cor}
\label{cor:cohomology}
For any degree $s$ we have 
\[
H^s(\calU)\cong H^s(X).
\]
\end{cor}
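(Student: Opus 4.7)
The plan is to show that the inclusion $i\colon X\hookrightarrow \mathcal{U}$ is a homotopy equivalence with homotopy inverse the tubular projection $\pi\colon\mathcal{U}\to X$, and then invoke homotopy invariance of de Rham cohomology to conclude. The key geometric input is the convexity of the fibers $\pi_0^{-1}(x) = \mathcal{U}_0\cap N_xX$, which allows us to retract $\mathcal{U}_0$ onto the zero section along straight lines in the fibers of $NX$.

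More concretely, I would first transfer the problem to the normal bundle side via the diffeomorphism $\varphi\colon \mathcal{U}_0\xrightarrow{\sim}\mathcal{U}$, so that it suffices to prove $H^s(\mathcal{U}_0)\cong H^s(X)$. On $\mathcal{U}_0$ I would define the smooth homotopy
\begin{equation*}
H\colon \mathcal{U}_0\times[0,1]\to \mathcal{U}_0,\qquad H_t(x,v) := (x,tv),
\end{equation*}
which is well-defined because $\mathcal{U}_0\cap N_xX$ is convex and contains $0$, so the line segment $tv$ stays in $\mathcal{U}_0$ for $t\in[0,1]$. Then $H_1 = \id_{\mathcal{U}_0}$ and $H_0 = i_0\circ \pi_0$, while $\pi_0\circ i_0 = \id_X$ on the nose. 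Transferred back via $\varphi$, this says that $i\circ\pi$ is smoothly homotopic to $\id_\mathcal{U}$ and that $\pi\circ i = \id_X$.

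The remaining ingredient is the homotopy invariance of de Rham cohomology: smoothly homotopic maps induce the same map on $H^s$. This is precisely the content of a standard chain homotopy formula (sometimes called the \emph{Poincaré homotopy operator}): for the homotopy $H$ above one constructs an operator $K\colon \Omega^s(\mathcal{U})\to \Omega^{s-1}(\mathcal{U})$ via $K\omega := \int_0^1 \iota_{\partial_t}H^*\omega\,\dd t$ satisfying $\dd K + K\dd = H_1^* - H_0^* = \id^* - (i\circ\pi)^*$. Applying this to a closed form gives $\id^*[\omega] = (i\circ\pi)^*[\omega] = \pi^*i^*[\omega]$ in $H^s(\mathcal{U})$, while $\pi\circ i=\id_X$ gives $i^*\pi^* = \id$ on $H^s(X)$. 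Hence $i^*$ and $\pi^*$ are mutually inverse isomorphisms.

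The only genuine obstacle is producing (or citing) the chain homotopy operator $K$ — everything else is formal. Since the paper has already made use of Cartan's magic formula (Theorem \ref{thm:Cartan_magic_formula}) and the pullback formula \eqref{eq:change_pullback_isotopy}, I would derive $\dd K + K\dd = H_1^* - H_0^*$ by integrating \eqref{eq:change_pullback_isotopy} in $t$ from $0$ to $1$: writing $\frac{\dd}{\dd t}H_t^*\omega = H_t^*L_{\partial_t}\omega = H_t^*(\dd\iota_{\partial_t}\omega+\iota_{\partial_t}\dd\omega)$ and integrating produces exactly the homotopy identity, after which the isomorphism $H^s(\mathcal{U})\cong H^s(X)$ follows immediately.
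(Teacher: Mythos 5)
Your proof is correct and follows essentially the same route as the paper: the text establishes the corollary by noting that $\pi^*$ is a right inverse to $i^*$ and invoking homotopy invariance via the tubular neighborhood fibration, and the detailed ingredient you supply — the fiberwise scaling retraction $\rho_t(x,v)=(x,tv)$ on the convex neighborhood $\calU_0$ together with the homotopy operator $Q\omega=\int_0^1\rho_t^*(\iota_{X_t}\omega)\,\dd t$ derived from Cartan's magic formula and \eqref{eq:Lie_derivative_isotopy} — is exactly the construction the paper spells out in the proof of the proposition immediately following the corollary. No gaps.
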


\begin{rem}
Corollary \ref{cor:cohomology} says that if $\omega$ is a closed $s$-form on $\calU$ and $i^*\omega$ is exact on $X$, then $\omega$ is exact.
\end{rem}

\begin{prop}
If a closed $s$-form $\omega$ on $\calU$ has restriction $i^*\omega=0$, then $\omega$ is exact, i.e. $\omega=\dd\mu$ for some $\mu\in \Omega^{s-1}(\calU)$. Moreover, we can choose $\mu$ such that $\mu_x=0$ for all $x\in X$.
\end{prop}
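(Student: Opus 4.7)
The plan is a relative Poincaré lemma: transfer the question to the normal bundle via the tubular neighborhood diffeomorphism $\varphi\colon\calU_0\to\calU$, and produce an explicit primitive by means of the fiberwise scaling deformation retraction of $\calU_0$ onto the zero section. Set $\tilde\omega:=\varphi^*\omega\in\Omega^s(\calU_0)$; then $\tilde\omega$ is closed, and since $\varphi\circ i_0=i$ we have $i_0^*\tilde\omega=i^*\omega=0$. Constructing a form $\tilde\mu\in\Omega^{s-1}(\calU_0)$ with $\dd\tilde\mu=\tilde\omega$ and $\tilde\mu$ vanishing on the zero section will give $\mu:=(\varphi^{-1})^*\tilde\mu$ with the desired properties, using $\varphi\circ i_0=i$ once more.

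For $t\in(0,1]$ define the diffeomorphism $h_t\colon\calU_0\to\calU_0$, $h_t(x,v)=(x,tv)$; this is well-defined because each fiber of $\calU_0$ is convex around the origin. One has $h_1=\id$ and, as $t\to 0$, $h_t\to i_0\circ\pi_0$, so $h_t^*\tilde\omega\to\pi_0^*i_0^*\tilde\omega=0$. The generating time-dependent vector field is $X_t(x,w)=(0,w/t)$. Applying Proposition \ref{prop:change_pullback_isotopy} together with Cartan's magic formula (Theorem \ref{thm:Cartan_magic_formula}) and $\dd\tilde\omega=0$, on the open interval $t\in(0,1]$ we will get
\[
\frac{\dd}{\dd t}h_t^*\tilde\omega=h_t^*L_{X_t}\tilde\omega=\dd\bigl(h_t^*\iota_{X_t}\tilde\omega\bigr).
\]
Integrating from $0$ to $1$ then yields $\tilde\omega=\dd\tilde\mu$ with
\[
\tilde\mu:=\int_0^1 h_t^*\iota_{X_t}\tilde\omega\,\dd t,
\]
and the vanishing $\tilde\mu|_{\text{zero section}}=0$ is immediate, since on the zero section $X_t(h_t(x,0))=X_t(x,0)=0$, so the integrand is identically zero there.

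The main obstacle is that $X_t$ is singular at $t=0$, so Proposition \ref{prop:change_pullback_isotopy} does not directly apply across the closed interval $[0,1]$; one must justify both that the integrand extends smoothly down to $t=0$ and that the boundary term $h_t^*\tilde\omega$ passes to its limit. The key observation is that the pullback cancels the $1/t$ in $X_t$: since $X_t(h_t(x,v))=X_t(x,tv)=(0,v)$, a direct computation gives
\[
\bigl(h_t^*\iota_{X_t}\tilde\omega\bigr)_{(x,v)}(u_1,\ldots,u_{s-1})=\tilde\omega_{(x,tv)}\bigl((0,v),(\dd_{(x,v)}h_t)u_1,\ldots,(\dd_{(x,v)}h_t)u_{s-1}\bigr),
\]
in which no factor $1/t$ survives, and which depends smoothly on $t\in[0,1]$ and on the base point. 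With this regularity in hand, one applies the derivative identity on $[\varepsilon,1]$ and passes to $\varepsilon\to0^+$ using uniform-on-compacta convergence of both sides, completing the argument.
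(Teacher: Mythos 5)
Your proof is correct and follows essentially the same route as the paper: pull everything back to the convex neighborhood $\calU_0$ in $NX$, use the fiberwise scaling retraction $\rho_t(x,v)=(x,tv)$, and take $\mu$ to be the image of $\omega$ under the homotopy operator $Q\omega=\int_0^1\rho_t^*(\iota_{X_t}\omega)\,\dd t$, with vanishing along $X$ because the scaling fixes the zero section. Your additional care in checking that the pullback $h_t^*\iota_{X_t}\tilde\omega$ is smooth down to $t=0$ (the $1/t$ in $X_t$ cancels against the pullback) is a point the paper's proof passes over silently, and it is a welcome refinement rather than a deviation.
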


\begin{proof}
By using the map $\varphi\colon \calU_0\xrightarrow{\sim}\calU$, we can work over $\calU_0$. For $t\in [0,1]$, define a map 
\begin{align*}
    \rho_t\colon \calU_0&\to \calU_0,\\
    (x,v)&\mapsto (x,tv).
\end{align*}
This is well-defined since $\calU_0$ is convex. The map $\rho_1$ is the identity and $\rho_0=i_0\circ\pi_0$. Moreover, each $\rho_t$ fixes $X$, i.e. $\rho_t\circ i_0=i_0$. Hence, we say that the family $(\rho_t)_{t\in[0,1]}$ is a \emph{homotopy} from $i_0\circ \pi_0$ to the identity fixing $X$. The map $\pi_0$ is called \emph{retraction} because $\pi_0\circ i_0$ is the identity. The submanifold $X$ is then called a \emph{deformation retract} of $\calU$. 

A (de Rham) \emph{homotopy operator} between $\rho_0=i_0\circ \pi_0$ and $\rho_1=\id$ is a linear map \[
Q\colon \Omega^s(\calU_0)\to \Omega^{s-1}(\calU_0)
\]
satisfying the \emph{homotopy formula}
\begin{equation}
\label{eq:homotopy}
\boxed{
\id-(i_0\circ \pi_0)^*=\dd Q+Q\dd.}
\end{equation}
When $\dd\omega=0$ and $i_0^*\omega=0$, the operator $Q$ gives $\omega=\dd Q\omega$, so that we can take $\mu=Q\omega$. A concrete operator $Q$ is given by the formula
\begin{equation}
\label{eq:Q_homotopy}
Q\omega=\int_0^1\rho_t^*(\iota_{X_t}\omega)\dd t,
\end{equation}
where $X_t$, at the point $p=\rho_t(q)$, is the vector tangent to the curve $\rho_s(q)$ at $s=t$. We claim that the operator \eqref{eq:Q_homotopy} satisfies the homotopy formula. Indeed, we compute
\[
Q\dd\omega+\dd Q\omega=\int_0^1\rho_t^*(\iota_{X_t}\dd\omega)\dd t+\dd\int_0^1\rho_t^*(\iota_{X_t}\omega)\dd t=\int_0^1\rho_t^*(\underbrace{\iota_{X_t}\dd\omega+\dd\iota_{X_t}\omega}_{=L_{X_t}\omega})\dd t.
\]
Hence the result follows from \eqref{eq:Lie_derivative_isotopy} and the fundamental theorem of analysis:
\[
Q\dd\omega+\dd Q\omega=\int_0^1\frac{\dd}{\dd t}\rho_t^*\omega \dd t=\rho_1^*\omega-\rho_0^*\omega.
\]
This completes the proof since, for our case, we have that $\rho_t(x)=x$ is a constant curve for all $x\in X$ and for all $t$, so $X_t$ vanishes at all $x$ and all $t$. Hence, $\mu_x=0$.
\end{proof}

\section{Moser's theorem}
\label{sec:Moser's_theorem}
\subsection{Equivalences for symplectic structures}

Let $M$ be a $2n$-dimensional manifold with two symplectic forms $\omega_0$ and $\omega_1$, so that $(M,\omega_0)$ and $(M,\omega_1)$ are two symplectic manifolds.

\begin{defn}[Symplectomorphic]
$(M,\omega_0)$ and $(M,\omega_1)$ are \emph{symplectomorphic} if there is a diffeomorphism $\varphi\colon M\to M$ with $\varphi^*\omega_1=\omega_0$.
\end{defn}

\begin{defn}[Strongly isotopic]
$(M,\omega_0)$ and $(M,\omega_1)$ are \emph{strongly isotopic} if there is an isotopy $\rho_t\colon M\to M$ such that $\rho^*_1\omega_1=\omega_0$.
\end{defn}

\begin{defn}[Deformation-equivalent]
$(M,\omega_0)$ and $(M,\omega_1)$ are \emph{deformation-equivalent} if there is a smooth family $\omega_t$ of symplectic forms joining $\omega_0$ to $\omega_1$.
\end{defn}

\begin{defn}[Isotopic]
$(M,\omega_0)$ and $(M,\omega_1)$ are \emph{isotopic} if they are deformation-equivalent with the de Rham cohomology classes $[\omega_t]$ independent of $t$.
\end{defn}

\begin{rem}
We have \emph{strongly isotopic} $\Longrightarrow$ \emph{symplectomorphic}, and \emph{isotopic} $\Longrightarrow$ \emph{deformation-equivalent}. Moreover, we have \emph{strongly isotopic} $\Longrightarrow$ \emph{isotopic}, because if $\rho_t\colon M\to M$ is an isotopy such that $\rho_1^*\omega_1=\omega_0$, then the set $\omega_t:=\rho_t^*\omega_1$ is a smooth family of symplectic forms joining $\omega_1$ to $\omega_0$ and $[\omega_t]=[\omega_1]$ for all $t$ by the homotopy invariance of the de Rham cohomology.
\end{rem}

\subsection{Moser's trick}

Consider the following problem: Let $M$ be a $2n$-dimensional manifold and let $X\subset M$ be a $k$-dimensional submanifold. Moreover, consider neighborhoods $\calU_0$ and $\calU_1$ of $X$ and symplectic forms $\omega_0$ and $\omega_1$ on $\calU_0$ and $\calU_1$ respectively. Does there exist a symplectomorphism preserving $X$? More precisely, does there exist a diffeomorphism $\varphi\colon \calU_0\to \calU_1$ with $\varphi^*\omega_1=\omega_0$ and $\varphi(X)=X$?\\

We want to consider the extreme case when $X=M$ and we consider $M$ to be compact with symplectic forms $\omega_0$ and $\omega_1$. So the question will change to: are $(M,\omega_0)$ and $(M,\omega_1)$ symplectomorphic, i.e. does there exist a diffeomorphism $\varphi\colon M\to M$ such that $\varphi^*\omega_1=\omega_0$? 

Moser's question was whether we can find such a $\varphi$ which is homotopic to the identity on $M$. A necessary condition is 
\[
[\omega_0]=[\omega_1]\in H^2(M)
\]
because if $\varphi\sim \id_M$ then, by the homotopy formula \eqref{eq:homotopy}, there exists a homotopy operator $Q$ such that 
\[
(\id_M)^*\omega_1-\varphi^*\omega_1=\dd Q\omega_1+Q\underbrace{\dd\omega_1}_{=0}.
\]
Thus, we get 
\[
\omega_1=\varphi^*\omega_1+\dd Q\omega_1
\]
and hence 
\[
[\omega_1]=[\varphi^*\omega_1]=[\omega_0].
\]

So we ask ourselves whether, if $[\omega_0]=[\omega_1]$, there exist a diffeomorphism $\varphi$ homotopic to $\id_M$ such that $\varphi^*\omega_1=\omega_0$. In \cite{Moser1965}, \emph{Moser} proved that, with certain assumptions, this is true. Later, \emph{McDuff} showed that, in general, this is not true by constructing a counterexample \cite[Example 7.23]{McDuffSalamon1995}.

\begin{thm}[Moser (version I)\cite{Moser1965}]
\label{thm:Moser1}
Suppose that $M$ is compact, $[\omega_0]=[\omega_1]$ and that the 2-form $\omega_t=(1-t)\omega_0+t\omega_1$ is symplectic for all $t\in [0,1]$. Then there exists an isotopy $\rho\colon M\times \R\to M$ such that $\rho^*_t\omega_t=\omega_0$.
\end{thm}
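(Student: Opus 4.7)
The plan is to use the now-classical Moser trick. First I would use the cohomological hypothesis $[\omega_0]=[\omega_1]$ to write $\omega_1-\omega_0=\dd\mu$ for some $\mu\in\Omega^1(M)$. Since $\omega_t=(1-t)\omega_0+t\omega_1$, this gives $\frac{\dd}{\dd t}\omega_t=\omega_1-\omega_0=\dd\mu$ for every $t\in[0,1]$.

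Next I would seek the isotopy $\rho_t$ as the flow of a time-dependent vector field $X_t$ to be chosen, and differentiate the desired identity $\rho_t^*\omega_t=\omega_0$ in $t$. By Proposition \ref{prop:change_pullback_isotopy} we have
\[
\frac{\dd}{\dd t}\rho_t^*\omega_t=\rho_t^*\left(L_{X_t}\omega_t+\tfrac{\dd}{\dd t}\omega_t\right),
\]
so it suffices to arrange $L_{X_t}\omega_t+\dd\mu=0$. Since $\omega_t$ is closed, Cartan's magic formula (Theorem \ref{thm:Cartan_magic_formula}) gives $L_{X_t}\omega_t=\dd\iota_{X_t}\omega_t$, reducing the requirement to $\dd(\iota_{X_t}\omega_t+\mu)=0$. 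It is then enough to impose the stronger pointwise equation
\[
\iota_{X_t}\omega_t=-\mu.
\]

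The crucial step is that the hypothesis $\omega_t$ symplectic for every $t\in[0,1]$ makes the bundle map $\widetilde{\omega_t}\colon TM\to T^*M$ fiberwise invertible, so the above equation has a unique solution $X_t=-\widetilde{\omega_t}^{-1}(\mu)$, which depends smoothly on $t$. Since $M$ is compact, this time-dependent vector field integrates to a global isotopy $\rho\colon M\times\R\to M$ with $\rho_0=\id_M$. By construction $\frac{\dd}{\dd t}\rho_t^*\omega_t=0$, and evaluating at $t=0$ gives $\rho_t^*\omega_t=\rho_0^*\omega_0=\omega_0$ for all $t$, as required.

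The main obstacle is really conceptual rather than technical: one has to resist the temptation to look for a fixed symplectomorphism between $(M,\omega_0)$ and $(M,\omega_1)$ directly, and instead interpolate via the family $\omega_t$ and convert the problem into an ODE for $X_t$. Once this reformulation is in place, the only analytic input is nondegeneracy of $\omega_t$ (to invert the bundle map) and compactness of $M$ (to guarantee that the resulting vector field generates a global flow defined on all of $[0,1]$).
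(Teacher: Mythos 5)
Your proposal is correct and follows essentially the same route as the paper's proof: Moser's trick of writing $\omega_1-\omega_0=\dd\mu$, reducing the desired identity $\rho_t^*\omega_t=\omega_0$ via Proposition \ref{prop:change_pullback_isotopy} and Cartan's magic formula to the pointwise equation $\iota_{X_t}\omega_t+\mu=0$, solving it uniquely by nondegeneracy of $\omega_t$, and integrating $X_t$ using compactness of $M$. The only (cosmetic) difference is that the paper first presents the necessity direction (deriving Moser's equation from an assumed isotopy) before the sufficiency argument you give, so no gap remains.
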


\begin{rem}
In particular, $\varphi:=\rho_t\colon M\to M$ satisfies $\varphi^*\omega_1=\omega_0$.
\end{rem}

The argument for the proof is known as \emph{Moser's trick}.

\begin{proof}[Proof of Theorem \ref{thm:Moser1}]
Suppose that there exists an isotopy $\rho\colon M\times \R\to M$ such that $\rho_t^*\omega_t=\omega_0$ for $t\in[0,1]$. Let 
\[
X_t:=\frac{\dd\rho_t}{\dd t}\circ \rho_t^{-1},\quad \forall t\in\R.
\]
Then 
\[
0=\frac{\dd}{\dd t}(\rho_t^*\omega_t)=\rho^*_t\left(L_{X_t}\omega_t+\frac{\dd}{\dd t}\omega_t\right)
\]
which is equivalent to 
\begin{equation}
\label{eq:Moser_trick}
L_{X_t}\omega_t+\frac{\dd}{\dd t}\omega_t=0.
\end{equation}
Suppose conversely that we can find a smooth time-dependent vector field $X_t$ for $t\in\R$, such that \eqref{eq:Moser_trick} holds for $t\in[0,1]$. Since $M$ is compact, we can integrate $X_t$ to an isotopy $\rho\colon M\times \R\to M$ with 
\[
\frac{\dd}{\dd t}(\rho^*_t\omega_t)=0,
\]
which implies that 
\[
\rho^*_t\omega_t=\rho^*_0\omega_0=\omega_0.
\]
This means that everything reduces to solving \eqref{eq:Moser_trick} for $X_t$. First, note that, from $\omega_t=(1-t)\omega_0+t\omega_1$, we get 
\[
\frac{\dd}{\dd t}\omega_t=\omega_1-\omega_0.
\]
Second, since $[\omega_0]=[\omega_1]$, there is a 1-form $\mu$ such that 
\[
\omega_1-\omega_0=\dd\mu.
\]
Third, by Cartan's magic formula \eqref{eq:Cartan_magic_formula}, we have 
\[
L_{X_t}\omega_t=\dd\iota_{X_t}\omega_t+\iota_{X_t}\underbrace{\dd\omega_{t}}_{=0}.
\]
putting everything together, we need to find $X_t$ such that 
\[
\dd\iota_{X_t}\omega_t+\dd\mu=0.
\]
Clearly, it is sufficient to solve 
\[
\iota_{X_t}\omega_t+\mu=0.
\]
By nondegeneracy of $\omega_t$, we can solve this pointwise to obtain a unique (smooth) $X_t$.

\end{proof}

\begin{thm}[Moser (version II)\cite{Moser1965}]
Let $M$ be a compact manifold with symplectic forms $\omega_0$ and $\omega_1$. Suppose that $(\omega_t)_{t\in [0,1]}$ is a smooth family of closed 2-forms joining $\omega_0$ and $\omega_1$ and satisfying: 
\begin{enumerate}
    \item (cohomology assumption) $[\omega_t]$ is independent of $t$, i.e. 
    \[
    \frac{\dd}{\dd t}[\omega_t]=\left[\frac{\dd}{\dd t}\omega_t\right]=0,
    \]
    \item (nondegeneracy condition) $\omega_t$ is nondegenerate for all $t\in[0,1]$.
\end{enumerate}
Then there exists an isotopy $\rho\colon M\times \R\to M$ such that 
\[
\rho^*_t\omega_t=\omega_0,\quad \forall t\in[0,1].
\]
\end{thm}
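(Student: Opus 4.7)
The plan is to apply \emph{Moser's trick} verbatim from the proof of Theorem \ref{thm:Moser1}, with the only new ingredient being that the primitive used to solve for the vector field must be produced from the cohomological hypothesis on a general family $(\omega_t)$ rather than from the specific affine interpolation. Concretely, I will look for a time-dependent vector field $X_t$ whose isotopy $\rho_t$ satisfies $\rho_t^*\omega_t=\omega_0$, and use Proposition \ref{prop:change_pullback_isotopy} together with Cartan's magic formula \eqref{eq:Cartan_magic_formula} to translate this into a pointwise linear equation.

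Differentiating $\rho_t^*\omega_t=\omega_0$ and invoking Proposition \ref{prop:change_pullback_isotopy} yields
\[
0=\frac{\dd}{\dd t}\rho_t^*\omega_t=\rho_t^*\left(L_{X_t}\omega_t+\frac{\dd}{\dd t}\omega_t\right),
\]
so since $\rho_t$ is a diffeomorphism it is enough to enforce $L_{X_t}\omega_t+\frac{\dd}{\dd t}\omega_t=0$. Because each $\omega_t$ is closed, Cartan's magic formula reduces this to
\[
\dd\iota_{X_t}\omega_t+\frac{\dd}{\dd t}\omega_t=0.
\]
The first cohomological hypothesis says that $\bigl[\frac{\dd}{\dd t}\omega_t\bigr]=0$ in $H^2(M)$ for every $t\in[0,1]$; the main technical step (see below) is to produce a \emph{smooth} family of 1-forms $\mu_t\in\Omega^1(M)$ with $\dd\mu_t=\frac{\dd}{\dd t}\omega_t$. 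Once such $\mu_t$ is in hand, it suffices to solve $\iota_{X_t}\omega_t=-\mu_t$; by the second hypothesis, $\omega_t$ is nondegenerate for every $t\in[0,1]$, so by the same argument as at the end of the proof of Theorem \ref{thm:Moser1} this equation has a unique smooth solution $X_t$ pointwise, smoothly depending on $t$. Since $M$ is compact, the time-dependent field $(X_t)$ integrates to a genuine isotopy $\rho\colon M\times\R\to M$, and by construction $\rho_t^*\omega_t=\rho_0^*\omega_0=\omega_0$.

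The only step which is genuinely new compared to Theorem \ref{thm:Moser1} is the smooth choice of the primitive $\mu_t$, and this is the step I expect to be the main obstacle. One way to carry it out is to fix a Riemannian metric on $M$ and apply Hodge theory: the Hodge decomposition gives a bounded Green operator $G$ commuting with $\dd$, whence $\mu_t:=\dd^*G\bigl(\frac{\dd}{\dd t}\omega_t\bigr)$ depends smoothly on $t$ and satisfies $\dd\mu_t=\frac{\dd}{\dd t}\omega_t$. A more elementary option, closer in spirit to the excerpt, is to invoke a \v{C}ech--de Rham argument on a fixed good cover, or the parametric Poincaré lemma: on each chart one integrates $\frac{\dd}{\dd t}\omega_t$ to a smooth family of local primitives using the homotopy operator \eqref{eq:Q_homotopy}, then glues by a fixed partition of unity and corrects the mismatch inductively. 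Either approach yields the required $\mu_t$, after which the rest of Moser's trick applies word for word and the theorem follows.
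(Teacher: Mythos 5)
Your proposal is correct and follows essentially the same route as the paper: Moser's trick via Proposition \ref{prop:change_pullback_isotopy} and Cartan's magic formula, a smooth family of primitives $\mu_t$ with $\dd\mu_t=\frac{\dd}{\dd t}\omega_t$, Moser's equation $\iota_{X_t}\omega_t+\mu_t=0$ solved pointwise by nondegeneracy, and compactness of $M$ to integrate $X_t$ to the isotopy. For the primitive, the paper uses exactly your second option (the Poincar\'e lemma combined with a Mayer--Vietoris induction over a good cover, following Bott--Tu); your Hodge-theoretic alternative is a harmless variant of that step.
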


\begin{proof}[Proof (Moser's trick)]

We have the following implications: Condition (1) implies that there exists a family of 1-forms $\mu_t$ such that 
\[
\frac{\dd}{\dd t}\omega_t=\dd\mu_t,\quad \forall t\in[0,1].
\]
Indeed, we can find a smooth family of 1-forms $\mu_t$ such that $\frac{\dd}{\dd t}\omega_t=\dd\mu_t$. The argument uses a combination of the \emph{Poincar\'e lemma} for compactly-supported forms and the \emph{Mayer--Vietoris sequence} in order to use induction on the number of charts in a good cover of $M$ (see \cite{BottTu} for a detailed discussion of the Poincar\'e lemma and the Mayer--Vietoris construction). Condition (2) implies that there exists a unique family of vector fields $X_t$ such that
\begin{equation}
\label{eq:Mosers_equation}
\boxed{
\iota_{X_t}\omega_t+\mu_t=0.}
\end{equation}
Equation \eqref{eq:Moser_trick} is called \emph{Moser's equation}. We can extend $X_t$ to all $t\in\R$. Let $\rho$ be the isotopy generated by $X_t$ ($\rho$ exists by compactness of $M$). Then, using Cartan's magic formula and Moser's equation, we indeed have 
\[
\frac{\dd}{\dd t}(\rho^*_t\omega_t)=\rho^*_t\left(L_{X_t}\omega_t+\frac{\dd}{\dd t}\omega_t\right)=\rho^*_t(\dd\iota_{X_t}\omega_t+\dd\mu_t)=0.
\]
\end{proof}

\begin{rem}
Note that we have used compactness of $M$ to be able to integrate $X_t$ for all $t\in\R$. If $M$ is not compact, we need to check the existence of a solution $\rho_t$ for the differential equation 
\[
\frac{\dd}{\dd t}\rho_t=X_t\circ \rho_t,\quad \forall t\in[0,1].
\]
\end{rem}

\begin{thm}[Moser (relative version)\cite{Moser1965}]
Let $M$ be a manifold, $X$ a compact submanifold of $M$, $i\colon X\hookrightarrow M$ the inclusion map, $\omega_0$ and $\omega_1$ two symplectic forms on $M$. Then if $\omega_0\vert_q=\omega_1\vert_q$ for all $q\in X$, we get that there exist neighborhoods $\calU_0,\calU_1$ of $X$ in $M$ and a diffeomorphism $\varphi\colon \calU_0\to \calU_1$ such that $\varphi^*\omega_1=\omega_0$ and the following diagram commutes:
\[
\begin{tikzcd}
\mathcal{U}_0 \arrow[rr, "\varphi"] &                                      & \mathcal{U}_1 \\
                                              & X \arrow[lu,hook', "i"] \arrow[ru,hook, "i"'] &                       
\end{tikzcd}
\]
\end{thm}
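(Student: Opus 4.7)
The plan is to apply Moser's trick in a neighborhood of $X$, combined with the relative homotopy formula already established. Let me sketch this in four main steps.

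First, I would set up the linear interpolation $\omega_t := (1-t)\omega_0 + t\omega_1$ for $t \in [0,1]$. By hypothesis, $\omega_0|_q = \omega_1|_q$ as elements of $\bigwedge^2 T_q^*M$ for every $q \in X$, so $\omega_t|_q = \omega_0|_q$ is symplectic on $T_qM$ for all $t$ and all $q \in X$. Nondegeneracy is an open condition, and since $X$ is compact, a standard argument (covering $X\times [0,1]$ by finitely many open sets) produces an open neighborhood $\mathcal{V}$ of $X$ in $M$ on which $\omega_t$ is nondegenerate for every $t \in [0,1]$. Each $\omega_t$ is also closed since $\omega_0$ and $\omega_1$ are.

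Second, I would produce a primitive for $\omega_1 - \omega_0$ vanishing on $X$. Since $\omega_1 - \omega_0$ is closed and vanishes pointwise on $X$, the homotopy formula proved earlier (as the consequence of the tubular neighborhood theorem stating that a closed form with $i^*\omega = 0$ is exact with a primitive vanishing on $X$) provides, on a possibly smaller neighborhood $\mathcal{U}_0 \subseteq \mathcal{V}$ of $X$, a 1-form $\mu \in \Omega^1(\mathcal{U}_0)$ with $\dd\mu = \omega_1 - \omega_0$ and $\mu_q = 0$ for all $q \in X$.

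Third comes Moser's equation. Nondegeneracy of $\omega_t$ on $\mathcal{U}_0$ allows me to solve uniquely for a smooth time-dependent vector field $X_t$ satisfying $\iota_{X_t}\omega_t + \mu = 0$. The crucial point is that $\mu|_X = 0$ forces $X_t|_q = 0$ for every $q \in X$ and every $t$. Hence any integral curve starting at a point of $X$ is constant, and by the standard ODE dependence on initial conditions plus compactness of $X$, there is an open neighborhood $\mathcal{U} \subseteq \mathcal{U}_0$ of $X$ on which the flow $\rho_t$ of $X_t$ is defined for all $t \in [0,1]$ and maps into $\mathcal{U}_0$. On this neighborhood the computation
\[
\frac{\dd}{\dd t}\rho_t^*\omega_t = \rho_t^*\left(L_{X_t}\omega_t + \frac{\dd}{\dd t}\omega_t\right) = \rho_t^*\bigl(\dd\iota_{X_t}\omega_t + \dd\mu\bigr) = 0
\]
shows that $\rho_t^*\omega_t = \omega_0$ throughout, using Cartan's magic formula and Proposition~\ref{prop:change_pullback_isotopy}.

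Finally, I would set $\varphi := \rho_1$ on $\mathcal{U}$ and $\mathcal{U}_1 := \rho_1(\mathcal{U})$. Then $\varphi^*\omega_1 = \omega_0$ and $\varphi|_X = \mathrm{id}_X$ because $X_t$ vanishes on $X$, which gives the commuting triangle with $i$. The main obstacle, and the only delicate point away from the non-compact setting, is ensuring that the time-$1$ flow of $X_t$ exists on a full neighborhood of $X$; this is where both the compactness of $X$ and the vanishing $\mu|_X = 0$ (inherited from the relative version of the homotopy formula) are essential, since otherwise integral curves could escape before time $1$.
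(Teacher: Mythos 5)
Your proposal is correct and follows essentially the same argument as the paper: interpolate $\omega_t=(1-t)\omega_0+t\omega_1$, use the relative homotopy formula on a tubular neighborhood to get $\mu$ with $\dd\mu=\omega_1-\omega_0$ and $\mu|_X=0$, solve Moser's equation $\iota_{X_t}\omega_t=-\mu$ (so $X_t|_X=0$), and take $\varphi=\rho_1$ after shrinking so that the flow exists up to time $1$. The only difference is cosmetic ordering of the nondegeneracy and primitive steps, and your explicit attention to why the time-$1$ flow exists near $X$ is a welcome elaboration of the paper's "shrinking $\calU_0$ again if necessary."
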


\begin{proof}
Choose a tubular neighborhood $\calU_0$ of $X$. The 2-form $\omega_1-\omega_0$ is closed on $\calU_0$ and $(\omega_1-\omega_0)_q=0$ for all $q\in X$. By the homotopy formula on the tubular neighborhood, there exists a 1-form $\mu$ on $\calU_0$ such that $\omega_1-\omega_0=\dd\mu$ and $\mu_q=0$ at all $q\in X$. Consider a family $\omega_t=(1-t)\omega_0+t\omega_1=\omega_0+t\dd\mu$ of closed 2-forms on $\calU_0$. Shrinking $\calU_0$ if necessary, we can assume that $\omega_t$ is symplectic for $t\in[0,1]$. Then we can solve Moser's equation $\iota_{X_t}\omega_t=-\mu$ and note that $X_t\vert_X=0$. Shrinking $\calU_0$ again if necessary, there is an isotopy $\rho\colon \calU_0\times [0,1]\to M$ with $\rho^*_t\omega_t=\omega_0$ for all $t\in[0,1]$. Since $X_t\vert_X=0$, we have $\rho_t\vert_X=\id_X$. Then we can set $\varphi:=\rho_1$ and $\calU_1:=\rho_1(\calU_0)$.
\end{proof}

\begin{exe}
\label{exe:proof_of_Darboux}
Prove Darboux's theorem (Theorem \ref{thm:Darboux}) by using the relative version of Moser's theorem for $X=\{q\}$.
\end{exe}

\section{Weinstein tubular neighborhood theorem}

\subsection{Weinstein Lagrangian neighborhood theorem}

\begin{thm}[Weinstein Lagrangian neighborhood theorem\cite{Weinstein1971}]
\label{thm:Weinstein_Lagrangian_neighborhood_theorem}
Let $M$ be a $2n$-dimensional manifold, $X$ a compact $n$-dimensional submanifold $i\colon X\hookrightarrow M$ the inclusion map, and $\omega_0$ and $\omega_1$ symplectic forms on $M$ such that $i^*\omega_0=i^*\omega_1=0$, i.e. $X$ is a Lagrangian submanifold of both $(M,\omega_0)$ and $(M,\omega_1)$. Then there exists neighborhoods $\calU_0$ and $\calU_1$ of $X$ in $M$ and a diffeomorphism $\varphi\colon\calU_0\to \calU_1$ such that $\varphi^*\omega_1=\omega_0$ and the following diagram commutes:
\[
\begin{tikzcd}
\mathcal{U}_0 \arrow[rr, "\varphi"] &                                      & \mathcal{U}_1 \\
                                              & X \arrow[lu,hook', "i"] \arrow[ru,hook, "i"'] &                       
\end{tikzcd}
\]
\end{thm}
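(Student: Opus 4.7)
The plan is to reduce the statement to the relative version of Moser's theorem proved above. That result requires $\omega_0\vert_q = \omega_1\vert_q$ on the full tangent space $T_qM$ for every $q \in X$, whereas here we are only given the weaker Lagrangian hypothesis $i^*\omega_0 = i^*\omega_1 = 0$, i.e.\ vanishing of the restrictions to $T_qX \times T_qX$. Accordingly, the bulk of the work consists in constructing a preliminary diffeomorphism $h$ of a neighborhood of $X$, fixing $X$ pointwise, whose differential along $X$ intertwines $\omega_0$ and $\omega_1$ on the full tangent spaces; once this is available, one applies the relative Moser theorem to $\omega_0$ and $h^*\omega_1$.

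First I would prove a pointwise linear-algebra statement: for each $q \in X$ there is a linear isomorphism $A_q\colon T_qM \to T_qM$ with $A_q\vert_{T_qX} = \id$ and $A_q^*(\omega_1\vert_q) = \omega_0\vert_q$. Since $T_qX$ is Lagrangian for both symplectic forms, choosing a Lagrangian complement $L_i^q$ of $T_qX$ with respect to $\omega_i\vert_q$ produces a symplectic isomorphism $T_qM \cong T_qX \oplus (T_qX)^*$ in which $\omega_i\vert_q$ becomes the canonical pairing, via the identification $L_i^q \to (T_qX)^*$, $v \mapsto \omega_i\vert_q(v,\enspace)\vert_{T_qX}$. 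Composing the two identifications yields the desired $A_q$, which is the identity on $T_qX$ by construction.

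Next I would make this construction smooth in $q$. Fix a smooth complementary subbundle $E \subset TM\vert_X$ to $TX$ (e.g.\ the orthogonal complement for any auxiliary Riemannian metric on $M$). For each $i\in\{0,1\}$ and each $q \in X$, there is a unique linear map $\alpha_i^q\colon E_q \to T_qX$ such that $L_i^q := \{v + \alpha_i^q(v) \mid v \in E_q\}$ is Lagrangian for $\omega_i\vert_q$, and $\alpha_i^q$ depends smoothly on $q$ (it is determined by a nondegenerate linear equation with smooth coefficients). This produces a smooth bundle automorphism $A$ of $TM\vert_X$ whose fiber at $q$ is $A_q$. Using the tubular neighborhood theorem, I would identify a neighborhood of $X$ in $M$ with a neighborhood of the zero section of the normal bundle $NX$; transporting $A$ to a fiberwise-linear automorphism of $NX$ near the zero section and conjugating back by the tubular neighborhood diffeomorphism yields a diffeomorphism $h\colon \calU_0' \to \calU_0''$ of neighborhoods of $X$ in $M$ with $h\vert_X = \id_X$ and $\dd_q h = A_q$ for all $q \in X$.

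By construction, $\omega_0\vert_q = (h^*\omega_1)\vert_q$ as bilinear forms on $T_qM$ for every $q \in X$, and both are symplectic on a possibly smaller neighborhood of $X$. I would then apply the relative version of Moser's theorem to the pair $(\omega_0, h^*\omega_1)$, obtaining neighborhoods $\calU_0$ and $\widetilde{\calU}$ of $X$ and a diffeomorphism $\psi\colon \calU_0 \to \widetilde{\calU}$ with $\psi \circ i = i$ and $\psi^*(h^*\omega_1) = \omega_0$. Setting $\calU_1 := h(\widetilde{\calU})$ and $\varphi := h \circ \psi$ completes the construction; the commutativity $\varphi \circ i = i$ is immediate since both $h$ and $\psi$ fix $X$ pointwise. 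The main obstacle, and the novel content beyond Moser, is the smooth construction of $A$: the pointwise statement is a short exercise in symplectic linear algebra, but obtaining a smooth family of Lagrangian complements and reassembling them into a bundle automorphism compatible with the tubular neighborhood identification is where the geometric care must be taken.
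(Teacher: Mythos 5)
Your overall strategy is exactly the paper's: reduce to the relative Moser theorem by first producing, at each $q\in X$, a linear isomorphism of $T_qM$ that is the identity on $T_qX$ and pulls $\omega_1\vert_q$ back to $\omega_0\vert_q$ (via Lagrangian complements and the identification of a Lagrangian complement with $(T_qX)^*$), making it smooth in $q$ with the help of an auxiliary metric, realizing the resulting field of linear maps as the differential along $X$ of a diffeomorphism $h$ fixing $X$ pointwise, and then applying relative Moser to $\omega_0$ and $h^*\omega_1$. The one place where you depart from the paper is the realization step: the paper invokes the Whitney extension theorem (stated immediately before its proof), which produces $h$ with $h\vert_X=\id_X$ and $\dd_qh=L_q$ directly, whereas you try to build $h$ by hand through a tubular neighborhood.

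Two steps of your write-up need repair. First, the map $\alpha_i^q\colon E_q\to T_qX$ with $\{v+\alpha_i^q(v)\}$ Lagrangian for $\omega_i\vert_q$ is \emph{not} unique: writing $\beta_i(v,w):=\omega_i\vert_q(v,\alpha_i^q(w))$, the Lagrangian condition only pins down the antisymmetric part of $\beta_i$, so the admissible $\alpha_i^q$ form an affine space modeled on symmetric bilinear forms on $E_q$. This is harmless, but you must make a canonical choice (e.g.\ require $\beta_i$ antisymmetric) to get a smooth family; this is what the paper means by the construction being ``canonical'' from the orthogonal complement. Second, and more seriously, your $A_q$ in general does \emph{not} preserve the complement $E_q$: it carries $L_0^q$ to $L_1^q$, so its matrix in the splitting $T_qX\oplus E_q$ has a nonzero off-diagonal block $E_q\to T_qX$ whenever $E_q$ is not $\omega_1$-Lagrangian. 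A fiberwise-linear automorphism of $NX$ covering $\id_X$ has block-diagonal differential along the zero section, so ``transporting $A$ to a fiberwise-linear automorphism of $NX$'' cannot produce $\dd_qh=A_q$. You can fix this either by enlarging the class of maps (compose the fiberwise-linear part with a shear that moves base points along $X$, e.g.\ $(q,v)\mapsto(\exp^X_q(b_qv),\dotsc)$ for the off-diagonal block $b_q$), or simply by quoting the Whitney extension theorem as the paper does, after which the application of relative Moser and the definition $\varphi=h\circ\psi$ go through exactly as you state.
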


\begin{thm}[Whitney extension theorem\cite{Whitney1934}]
Let $M$ be an $n$-dimensional manifold and $X$ a $k$-dimensional submanifold with $k<n$. Suppose that at each $q\in X$ we are given a linear isomorphism $L_q\colon T_qM\xrightarrow{\sim}T_qM$ such that $L_q\vert_{T_qX}=\id_{T_qX}$ and $L_q$ depends smoothly on $q$. Then there exists an embedding $h\colon \calN\to M$ of some neighborhood $\calN$ of $X$ in $M$ such that $h\vert_X=\id_X$ and $\dd_q h=L_q$ for all $q\in X$.
\end{thm}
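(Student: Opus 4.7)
The plan is to give an explicit construction of the extension using the exponential map of an auxiliary Riemannian metric, which converts the infinitesimal data $L_q$ into a genuine smooth map.

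First I would equip $M$ with an arbitrary Riemannian metric (which exists by paracompactness), so that $TM|_X$ splits orthogonally as $TX\oplus NX$. The exponential map then provides a diffeomorphism $\phi\colon\widetilde{\calN}\xrightarrow{\sim}\calU$, $\phi(q,v):=\exp_q(v)$, from some neighborhood $\widetilde{\calN}$ of the zero section in $NX$ onto a tubular neighborhood $\calU$ of $X$ in $M$. I write $\phi^{-1}(p)=(q(p),v(p))$ for $p\in\calU$; both components depend smoothly on $p$.

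With this setup in hand I would define the extension by the formula
\[
h(p)\;:=\;\exp_{q(p)}\!\bigl(L_{q(p)}\,v(p)\bigr).
\]
Note that $L_{q(p)}v(p)$ is a vector in $T_{q(p)}M$ (not required to lie in $N_{q(p)}X$), which is exactly what $\exp_{q(p)}$ accepts. The verification of the conditions is then essentially a chain rule computation: if $p=q\in X$ then $v(p)=0$, so $h(q)=\exp_q(0)=q$, giving $h|_X=\id_X$. To compute $\dd_qh$ at $q\in X$ I would use the splitting $T_qM=T_qX\oplus N_qX$. For $u\in T_qX$, taking a curve $c(s)\subset X$ with $c'(0)=u$ yields $h(c(s))=c(s)$, hence $\dd_qh(u)=u=L_qu$ (using $L_q|_{T_qX}=\id$). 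For $w\in N_qX$, taking $\gamma(t)=\phi(q,tw)$ yields $h(\gamma(t))=\exp_q(tL_qw)$, so $\tfrac{\dd}{\dd t}\big|_{t=0}h(\gamma(t))=\dd_0\exp_q(L_qw)=L_qw$. Combining the two directions gives $\dd_qh=L_q$ on all of $T_qM$.

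Since each $L_q$ is a linear isomorphism, $\dd_qh$ is invertible at every $q\in X$, so the inverse function theorem makes $h$ a local diffeomorphism in a neighborhood of each point of $X$, and the identity $h|_X=\id_X$ shows injectivity on $X$ itself. The last step is to pass from a family of local diffeomorphisms to a single neighborhood $\calN\subseteq\calU$ of $X$ on which $h$ is a genuine embedding; this I expect to be the main (though technically standard) obstacle, since for noncompact $X$ one must rule out the possibility that far-apart points of $\calU$ get identified. One handles this by shrinking $\widetilde{\calN}$ fiberwise via a continuous positive function on $X$ (supplied by paracompactness), chosen small enough that on the corresponding shrunken tube $h$ is injective—a standard tubular-neighborhood shrinking argument. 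Pushing this neighborhood back to $M$ via $\phi$ yields the desired $\calN$.
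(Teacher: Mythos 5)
The paper itself never proves this statement: it is quoted with only the citation to Whitney's 1934 paper and then used as a black box in the proof of the Weinstein Lagrangian neighborhood theorem, so there is no in-text argument to compare yours against. On its own merits, your proof is correct and is in fact the standard Riemannian-exponential argument for this version of the theorem. The construction $h(p)=\exp_{q(p)}\bigl(L_{q(p)}v(p)\bigr)$ is well defined and smooth near $X$, the identity $h\vert_X=\id_X$ follows since $v$ vanishes on $X$, and your two derivative computations are right: along $T_qX$ you get the identity, which coincides with $L_q$ by the hypothesis $L_q\vert_{T_qX}=\id$, and along $N_qX$ you get $L_q$ because $\dd_0\exp_q=\id_{T_qM}$; linearity on $T_qM=T_qX\oplus N_qX$ then gives $\dd_qh=L_q$. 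Two points deserve a sentence more care. First, unless you choose a complete metric, $\exp_{q(p)}$ is defined only on an open neighborhood of $0$ in $T_{q(p)}M$, so you should also shrink the tube so that $L_{q(p)}v(p)$ lies in the domain of $\exp$; since that domain is open in $TM$ and contains the zero section, this is automatic after shrinking. Second, the passage from ``local diffeomorphism at each point of $X$ and identity on $X$'' to injectivity on an entire neighborhood is indeed the only nontrivial step for noncompact $X$; the variable-radius shrinking you invoke is the right mechanism (one uses that $X$ is embedded, controls $d(h(p),p)$ on thin tubes, and chooses the radius function to vary slowly so that a coincidence $h(p_1)=h(p_2)$ forces $p_1,p_2$ into a single ball where $h$ is injective). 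It is worth noting that in the paper's only application of this theorem, inside the proof of Theorem \ref{thm:Weinstein_Lagrangian_neighborhood_theorem}, the submanifold $X$ is assumed compact, so there a uniform radius suffices and your last step becomes elementary.
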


\begin{proof}[Proof of Theorem \ref{thm:Weinstein_Lagrangian_neighborhood_theorem}]
Let $g$ be a \emph{Riemannian metric} on $M$, i.e. at each $q\in M$ we get that $g_q$ is a positive-definite inner product. Fix $q\in X$ and let $V:=T_qM$, $U:=T_qX$ and $W:=U^\perp$ be the orthogonal complement of $U$ in $V$ relative to $g_q$. Since $i^*\omega_0=i^*\omega_1=0$, the space $U$ is a Lagrangian subspace of both $(V,\omega_0\vert_{q})$ and $(V,\omega_1\vert_q)$. By the symplectic linear algebra, we canonically get from $U^\perp$ a linear isomorphism $L_q\colon T_qM\to T_qM$ such that $L_q\vert_{T_qX}=\id_{T_qX}$ and $L_q^*\omega_1\vert_p=\omega_0\vert_q$. Note that $L_q$ varies smoothly with respect to $q$ since the construction is canonical. By the Whitney extension theorem, there exists a neighborhood $\calN$ of $X$ and an embedding $h\colon \calN\hookrightarrow X$ with $h\vert_X=\id_X$ and $\dd_q h=L_q$ for $q\in X$. Hence, at any $q\in X$, we have
\[
(h^*\omega_1)_q=(\dd_qh)^*\omega_1\vert_q=L_q^*\omega_1\vert_q=\omega_0\vert_q.
\]
applying the relative version of Moser's theorem to $\omega_0$ and $h^*\omega_1$, we find a neighborhood $\calU_0$ of $X$ and an embedding $f\colon \calU_0\to \calN$ such that $f\vert_X=\id_X$ and $f^*(h^*\omega_1)=\omega_0$ on $\calU_0$. Then we can set $\varphi:=h\circ f$.
\end{proof}

\begin{thm}[Coisotropic embedding theorem]
\label{thm:Coisotropic_embedding}
Let $M$ be a $2n$-dimensional manifold, $X$ a $k$-dimensional submanifold with $k<n$, $i\colon X\hookrightarrow M$ the inclusion map and $\omega_0$ and $\omega_1$ two symplectic forms on $M$ such that $i^*\omega_0=i^*\omega_1$ with $X$ being coisotropic for both $(M,\omega_0)$ and $(M,\omega_1)$. Then there exist neighborhoods $\calU_0$ and $\calU_1$ of $X$ in $M$ and a diffeomorphism $\varphi\colon \calU_0\to \calU$ such that $\varphi^*\omega_1=\omega_0$ and the following diagram commutes:
\[
\begin{tikzcd}
\mathcal{U}_0 \arrow[rr, "\varphi"] &                                      & \mathcal{U}_1 \\
                                              & X \arrow[lu,hook', "i"] \arrow[ru,hook, "i"'] &                       
\end{tikzcd}
\]
\end{thm}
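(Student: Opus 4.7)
My plan is to adapt the three-step proof of the Weinstein Lagrangian neighborhood theorem (Theorem~\ref{thm:Weinstein_Lagrangian_neighborhood_theorem}). First I construct, at each point $q\in X$, a linear isomorphism $L_q\colon T_qM\to T_qM$ that fixes $T_qX$ pointwise and satisfies $L_q^*\omega_1\vert_q=\omega_0\vert_q$. I then assemble the $L_q$'s into a smooth bundle automorphism of $TM\vert_X$ and apply the Whitney extension theorem to obtain a local diffeomorphism $h$ of $M$ fixing $X$ with $d_qh=L_q$. Finally, since $h^*\omega_1$ agrees with $\omega_0$ pointwise along $X$ by construction, the relative version of Moser's theorem yields a further diffeomorphism $f$ of a neighborhood of $X$ with $f^*h^*\omega_1=\omega_0$; the composition $\varphi:=h\circ f$ is the desired symplectomorphism.

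The heart of the argument, and the main obstacle, is the linear-algebra step. Fix $q\in X$ and write $V=T_qM$, $W=T_qX$, and temporarily regard $\omega_0,\omega_1$ as symplectic forms on $V$. The key observation is that the characteristic subspace depends only on $i^*\omega$: coisotropicity forces $W^{\omega_j}\subseteq W$, and for any $v\in W$ one has $\omega_j(v,w)=(i^*\omega_j)(v,w)$ for all $w\in W$, so $W^{\omega_j}$ is simply the radical of $i^*\omega_j\vert_q$. Hence $i^*\omega_0=i^*\omega_1$ implies $W^{\omega_0}=W^{\omega_1}=:C$. I pick a complement $E$ of $C$ in $W$; both forms restrict to the same nondegenerate bilinear form on $E$, so $E$ is $\omega_j$-symplectic for $j=0,1$. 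Next, because $C\subset E^{\omega_j}$ (as $\omega_j(C,E)\subseteq\omega_j(W^{\omega_j},W)=0$), is isotropic, and has exactly half the dimension of $E^{\omega_j}$, the subspace $C$ is Lagrangian in the symplectic subspace $(E^{\omega_j},\omega_j)$. I may therefore choose Lagrangian complements $N_j$ of $C$ inside $E^{\omega_j}$, obtaining decompositions $V=W\oplus N_j$; each $\omega_j$ induces an isomorphism $N_j\xrightarrow{\sim}C^*$ by $n\mapsto\omega_j(\,\cdot\,,n)\vert_C$. Defining $L_q$ to be the identity on $W$ and the unique map $N_0\to N_1$ intertwining these two isomorphisms with $C^*$, a short verification on the three pairs of summands $W\times W$, $W\times N_0$, $N_0\times N_0$ shows $L_q^*\omega_1\vert_q=\omega_0\vert_q$.

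To make this pointwise construction smooth in $q$ I would fix an auxiliary Riemannian metric on $M$ and take the various complements to be orthogonal complements, producing smooth subbundles $E$ of $TX$ and $N_0,N_1$ of $TM\vert_X$, and hence a smooth bundle automorphism $L\colon TM\vert_X\to TM\vert_X$ fixing $TX$ and satisfying $L^*\omega_1\vert_X=\omega_0\vert_X$. The Whitney extension theorem, invoked exactly as in the proof of Theorem~\ref{thm:Weinstein_Lagrangian_neighborhood_theorem}, then supplies an embedding $h$ of a neighborhood $\calN$ of $X$ in $M$ into $M$ with $h\vert_X=\id_X$ and $d_qh=L_q$ for every $q\in X$. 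Consequently $(h^*\omega_1)_q=\omega_0\vert_q$ as bilinear forms on $T_qM$ for every $q\in X$, so the two symplectic forms $\omega_0$ and $h^*\omega_1$ coincide pointwise along $X$.

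Applying the relative Moser theorem to the pair $(\omega_0,h^*\omega_1)$ on $\calN$ now produces neighborhoods $\calU_0\subseteq\calN$ and $\calU_0'$ of $X$ and a diffeomorphism $f\colon\calU_0\to\calU_0'$ fixing $X$ with $f^*h^*\omega_1=\omega_0$; setting $\varphi:=h\circ f$ and $\calU_1:=\varphi(\calU_0)$ completes the construction. Thus the only genuinely new work compared with the Lagrangian case is concentrated in the pointwise linear-algebra step: recognising that the characteristic subspace $C$ is common to both forms, performing the relative symplectic reduction inside $E^{\omega_j}$, and matching Lagrangian complements of $C$ across $\omega_0$ and $\omega_1$ via their common dual $C^*$. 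Once $L_q$ is in hand, the Whitney-extension and relative Moser steps are essentially verbatim from the Lagrangian setting.
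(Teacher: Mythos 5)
The paper itself leaves this theorem as an exercise (with pointers to Weinstein, Guillemin--Sternberg and Gotay), so there is no in-text proof to compare against; judged on its own, your architecture is the intended one, namely the same three-step scheme as Theorem \ref{thm:Weinstein_Lagrangian_neighborhood_theorem}: pointwise linear normalization $L_q$ fixing $T_qX$, Whitney extension, relative Moser. Your pointwise linear algebra is correct: coisotropy makes $W^{\omega_j}$ equal to the radical of $(i^*\omega_j)_q$, so the characteristic subspaces coincide ($=:C$); $E$ is symplectic for both forms; $\dim C=2n-k=\frac{1}{2}\dim E^{\omega_j}$, so $C$ is Lagrangian in $(E^{\omega_j},\omega_j)$; and the map that is the identity on $W$ and matches Lagrangian complements $N_0\to N_1$ through $C^*$ does satisfy $L_q^*\omega_1=\omega_0$, the three-block check being exactly as you say. (Two side remarks: the hypothesis $k<n$ in the statement must be a typo, since coisotropy forces $k\ge n$ -- your count $\dim C=2n-k\le k$ silently uses this -- and the paper's relative Moser theorem is stated for \emph{compact} $X$, so you should either add compactness, as in the Lagrangian case, or comment on how to handle noncompact $X$.)

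The genuine gap is in the smoothing step. Your construction needs the complements $N_j$ of $C$ in $E^{\omega_j}$ to be \emph{Lagrangian}: isotropy of $N_0$ for $\omega_0$ and of $N_1$ for $\omega_1$ is precisely what kills the $N_0\times N_0$ block on both sides. But the $g$-orthogonal complement of $C$ inside $E^{\omega_j}$ for an arbitrary auxiliary metric $g$ is just a complement and is in general not isotropic -- indeed any complement of a Lagrangian subspace arises as a $g$-orthogonal complement for some metric, e.g.\ in $(\R^4,\,dx_1\wedge dy_1+dx_2\wedge dy_2)$ with $C=\mathrm{span}(\partial_{x_1},\partial_{x_2})$ the complement $\mathrm{span}(\partial_{y_1},\,\partial_{y_2}+\partial_{x_1})$ is not isotropic. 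With such $N_j$ the identification $N_j\cong C^*$ still holds (that only needs transversality), but $L_q^*\omega_1\neq\omega_0$ on $N_0\times N_0$, so $(h^*\omega_1)|_X\neq\omega_0|_X$ and the relative Moser step cannot be started. The standard repair: restrict $g$ and $\omega_j$ to the symplectic subbundle $E^{\omega_j}|_X$ and build from them, by polar decomposition, the compatible complex structure $J_j$ (canonical, hence smooth in $q$), then set $N_j:=J_jC$. Compatibility gives $\omega_j(c,J_jc)>0$ for $c\neq0$ and $\omega_j(J_jc,J_jc')=\omega_j(c,c')=0$, so $N_j$ is a smooth Lagrangian complement of $C$ in $E^{\omega_j}$, and with this choice (together with the observation that $C$ has constant rank $2n-k$, hence is a genuine subbundle of $TX$) the rest of your argument goes through verbatim.
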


\begin{exe}
Prove Theorem \ref{thm:Coisotropic_embedding}. \emph{Hint: See} \cite{Weinstein1977} \emph{for some inspiration and} \cite{GuilleminSternberg1977,Gotay1982} \emph{for a proof.}
\end{exe}

\subsection{Weinstein tubular neighborhood theorem}

\begin{thm}[Weinstein tubular neighborhood theorem\cite{Weinstein1971}]
\label{thm:Weinstein_tubular_neighborhood_theorem}
Let $(M,\omega)$ be a symplectic manifold, $X$ a compact Lagrangian submanifold of $M$, $\omega_0$ the canonical symplectic form on $T^*X$, $i_0\colon X\hookrightarrow T^*X$ the Lagrangian embedding as the zero section, and $i\colon X\hookrightarrow M$ the Lagrangian embedding given by the inclusion. Then there exist neighborhoods $\calU_0$ of $X$ in $T^*X$, $\calU$ of $X$ in $M$ and a diffeomorphism $\varphi\colon \calU_0\to \calU$ such that $\varphi^*\omega=\omega_0$ and the following diagram commutes:
\[
\begin{tikzcd}
\mathcal{U}_0 \arrow[rr, "\varphi"] &                                      & \mathcal{U} \\
                                              & X \arrow[lu,hook', "i_0"] \arrow[ru,hook, "i"'] &                       
\end{tikzcd}
\]
\end{thm}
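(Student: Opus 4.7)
The plan is to reduce Theorem \ref{thm:Weinstein_tubular_neighborhood_theorem} to the Weinstein Lagrangian neighborhood theorem (Theorem \ref{thm:Weinstein_Lagrangian_neighborhood_theorem}) by first producing a diffeomorphism $h$ between a neighborhood of the zero section in $T^*X$ and a neighborhood of $X$ in $M$ which restricts to the identity on $X$. Once such $h$ is in hand, $\omega_0$ and $h^*\omega$ are two symplectic forms on a common neighborhood of $X$ in $T^*X$, both vanishing upon pullback to $X$, and Theorem \ref{thm:Weinstein_Lagrangian_neighborhood_theorem} closes the argument.

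First I would use symplectic linear algebra to identify the normal bundle $NX$ of $X$ in $M$ with $T^*X$ as vector bundles over $X$. For each $x\in X$, define the linear map
\[
L_x\colon T_xM\to T_x^*X,\qquad v\mapsto \omega_x(v,\cdot)\big|_{T_xX}.
\]
Since $X$ is Lagrangian, $(T_xX)^{\omega_x}=T_xX$, hence $\ker L_x=T_xX$; comparing dimensions ($\dim N_xX=n=\dim T_x^*X$), $L_x$ descends to a linear isomorphism $N_xX=T_xM/T_xX\xrightarrow{\sim}T_x^*X$. These pointwise isomorphisms are canonical (depending smoothly on $x$ through $\omega$) and assemble into a bundle isomorphism $\Psi\colon NX\xrightarrow{\sim}T^*X$ covering $\id_X$.

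Next, by the tubular neighborhood theorem there exist a convex neighborhood $\calV_0$ of the zero section of $NX$ and a diffeomorphism $\tau\colon \calV_0\to \calV\subset M$ onto a neighborhood of $X$ with $\tau\big|_X=\id_X$. Composing with $\Psi^{-1}$, I obtain a diffeomorphism $h\colon \calW\to \calV$, where $\calW:=\Psi(\calV_0)\subset T^*X$ is a neighborhood of the zero section, still with $h\big|_X=\id_X$. The pullback $h^*\omega$ is a symplectic form on $\calW$, and since $h\circ i_0=i$ together with $i^*\omega=0$ imply $i_0^*(h^*\omega)=0$, the submanifold $X$ is Lagrangian both in $(\calW,\omega_0)$ and in $(\calW,h^*\omega)$.

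Finally, applying Theorem \ref{thm:Weinstein_Lagrangian_neighborhood_theorem} to the pair of symplectic forms $\omega_0$ and $h^*\omega$ on $\calW$ (using compactness of $X$) yields neighborhoods $\calU_0,\calU_1\subset \calW$ of $X$ and a diffeomorphism $f\colon \calU_0\to \calU_1$ with $f\big|_X=\id_X$ and $f^*(h^*\omega)=\omega_0$; setting $\varphi:=h\circ f$ and $\calU:=h(\calU_1)$ gives $\varphi^*\omega=\omega_0$ and $\varphi\big|_X=\id_X$, as required. The main obstacle is the first step: constructing the canonical bundle isomorphism $\Psi\colon NX\xrightarrow{\sim}T^*X$ and verifying its smoothness, which is the essential symplectic-linear-algebra input; once $\Psi$ is available, the theorem is an assembly of the tubular neighborhood theorem and Theorem \ref{thm:Weinstein_Lagrangian_neighborhood_theorem}.
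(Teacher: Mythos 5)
Your proposal is correct and follows essentially the same route as the paper: identify $NX\cong T^*X$, apply the tubular neighborhood theorem to transport $\omega$ to a neighborhood of the zero section, observe that $X$ is Lagrangian for both $\omega_0$ and the pulled-back form, and conclude with the Weinstein Lagrangian neighborhood theorem, composing the two diffeomorphisms. The only difference is that you spell out the symplectic-linear-algebra construction of the bundle isomorphism $NX\xrightarrow{\sim}T^*X$, which the paper simply asserts.
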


\begin{rem}
A similar statement for isotropic submanifolds was also proved by Weinstein in \cite{Weinstein1977,Weinstein1981}.
\end{rem}

\begin{proof}[Proof of Theorem \ref{thm:Weinstein_tubular_neighborhood_theorem}]
The proof uses the tubular neighborhood theorem and the Weinstein Lagrangian neighborhood theorem. We first use the tubular neighborhood theorem. Since $NX\cong T^*X$, we can find a neighborhood $\calN_0$ of $X$ in $T^*X$, a neighborhood $\calN$ of $X$ in $M$ and a diffeomorphism $\psi\colon \calN_0\to \calN$ such that the following diagram commutes:
\[
\begin{tikzcd}
\mathcal{N}_0 \arrow[rr, "\psi"] &                                      & \mathcal{N} \\
                                              & X \arrow[lu,hook', "i_0"] \arrow[ru,hook, "i"'] &                       
\end{tikzcd}
\]
Let $\omega_0$ be the canonical symplectic form on $T^*X$ and $\omega_1:=\psi^*\omega$. Note that $\omega_0$ and $\omega_1$ are symplectic forms on $\calN_0$. The submanifold $X$ is Lagrangian for both $\omega_0$ and $\omega_1$. 
Now we use the Weinstein Lagrangian neighborhood theorem. There exist neighborhoods $\calU_0$ and $\calU_1$ of $X$ in $\calN_0$ and a diffeomorphism $\theta\colon \calU_0\to \calU_1$ such that $\theta^*\omega_1=\omega_0$ and the following diagram commutes:
\[
\begin{tikzcd}
\mathcal{U}_0 \arrow[rr, "\theta"] &                                      & \mathcal{U}_1 \\
                                              & X \arrow[lu,hook', "i_0"] \arrow[ru,hook, "i_0"'] &                       
\end{tikzcd}
\]
Now we can take $\varphi=\psi\circ \theta$ and $\calU_1=\varphi(\calU_0)$. It is easy to check $\varphi^*\omega=\theta^*\underbrace{\psi^*\omega}_{\omega_1}=\omega_0$.
\end{proof}

\subsection{Application}

\begin{defn}[$C^1$-topology]
Let $X$ and $Y$ be two manifolds. A sequence of $C^1$ maps $f_i\colon X\to Y$ is said to \emph{converge in the $C^1$-topology} to a map $f\colon X\to Y$ if and only if the sequence $(f_i)$ itself and the sequence of the differentials $\dd f_i\colon TX\to TY$ converge uniformly on compact sets.
\end{defn}

\begin{rem}
Let $(M,\omega)$ be a symplectic manifold. Note that the graph of the identity map, denoted by $\Delta:=\Gamma_{\id_M}$ (recall Equation \eqref{eq:Graph} for the definition), is a Lagrangian submanifold of $(M\times M,(\mathrm{pr}_1)^*\omega-(\mathrm{pr_2})^*\omega)$ (see also Proposition \ref{prop:Graph_Lagrangian}). Moreover, we say that $f$ is \emph{$C^1$-close} to another map $g$, if $f$ is in some small neighborhood of $g$ in the $C^1$-topology.
\end{rem}

By the Weinstein tubular neighborhood theorem, there is a neighborhood $\calU$ of 
\[
\Delta\subset (M\times M,(\mathrm{pr}_1)^*\omega-(\mathrm{pr}_2)^*\omega)
\]
which is symplectomorphic to a neighborhood $\calU_0$ of $M$ in $(T^*M,\omega_0)$. Let $\varphi\colon \calU\to \calU_0$ be the symplectomorphism satisfying $\varphi(q,q)=(q,0)$ for all $q\in M$. 

Denote by 
\[
\mathrm{Sympl}(M,\omega):=\{f\colon M\xrightarrow{\sim}M\mid f^*\omega=\omega\}
\]
and suppose that $f\in \mathrm{Sympl}(M,\omega)$ is sufficiently \emph{$C^1$-close} to the identity, i.e. $f$ is in some sufficiently small neighborhood of the identity $\id_M$ in the $C^1$-topology. Then we can assume that the graph of $f$ lies inside of $\calU$. Let $j\colon M\hookrightarrow\calU$ be the embedding as $\Gamma_f$ and $i\colon M\hookrightarrow \calU$ the embedding as $\Gamma_{\id_M}=\Delta$. The map $j$ is sufficiently $C^1$-close to $i$. By the Weinstein theorem, $\calU\simeq\calU_0\subseteq T^*M$, so the above $j$ and $i$ induce two embeddings: $j_0\colon M\hookrightarrow \calU_0$ where $j_0=\varphi\circ j$ and $i_0\colon M\hookrightarrow \calU_0$ embedding as $0$-section. Hence, we have 
\[
\begin{tikzcd}
\mathcal{U} \arrow[rr, "\varphi"] &                                      & \mathcal{U}_0 \\
                                              & M \arrow[lu,hook', "i"] \arrow[ru,hook, "i_0"'] &                       
\end{tikzcd}
\]
\[
\begin{tikzcd}
\mathcal{U} \arrow[rr, "\varphi"] &                                      & \mathcal{U}_0 \\
                                              & M \arrow[lu,hook', "j"] \arrow[ru,hook, "j_0"'] &                       
\end{tikzcd}
\]
where $i(q)=(q,q)$, $i_0(q)=(q,0)$, $j(q)=(q,f(q))$ and $j_0(q)=\varphi(q,f(q))$ for $q\in M$. The map $j_0$ is sufficiently $C^1$-close to $i_0$. Thus, the image set $j_0(M)$ intersects each $T^*_qM$ at one point $\mu_q$ depending smoothly on $q$. The image of $j_0$ is the image of a smooth intersection $\mu\colon M\to T^*M$, that is, a 1-form $\mu=j_0\circ(\pi\circ j_0)^{-1}$. Therefore, 
\begin{equation}
\label{eq:graph}
\Gamma_f\simeq \{(q,\mu_q)\mid q\in M,\,\, \mu_q\in T^*_qM\}.
\end{equation}
Vice-versa, if $\mu$ is a 1-form sufficiently $C^1$-close to the zero 1-form, then we also get \eqref{eq:graph}. In fact, we get 
\[
\text{$\Gamma_f$ is Lagrangian}\Longleftrightarrow \text{$\mu$ is closed}.
\]

Hence, we can conclude that a small $C^1$-neighborhood of the identity in $\mathrm{Sympl}(M,\omega)$ is homeomorphic to a $C^1$-neighborhood of zero in the vector space of closed 1-forms on $M$. Thus, we have 
\[
T_{\id_M}(\mathrm{Sympl}(M,\omega))\simeq \{\mu\in \Omega^1(M)\mid \dd\mu=0\}.
\]
In particular, $T_{\id_M}(\mathrm{Sympl}(M,\omega))$ contains the space of exact 1-forms 
\[
\{\mu=\dd h\mid h\in C^\infty(M)\}\simeq C^\infty(M)/\{\text{locally constant functions}\}.
\]

\begin{thm}
Let $(M,\omega)$ be a compact symplectic manifold with $H^1(M)=0$. Then any symplectomorphism of $M$ which is sufficiently $C^1$-close to the identity has at least two fixed points.
\end{thm}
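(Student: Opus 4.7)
The plan is to invoke the identification, set up just before the statement, between symplectomorphisms that are $C^1$-close to the identity and closed 1-forms that are $C^1$-close to $0$, and then reduce the fixed point count to a critical point count using the hypothesis $H^1(M)=0$ together with compactness of $M$.

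More precisely, I would first recall that the Weinstein tubular neighborhood theorem provides a symplectomorphism $\varphi\colon \calU\to \calU_0$ between a neighborhood $\calU$ of the diagonal $\Delta\subset (M\times M,\mathrm{pr}_1^*\omega-\mathrm{pr}_2^*\omega)$ and a neighborhood $\calU_0$ of the zero section in $(T^*M,\omega_0)$, sending $(q,q)\mapsto (q,0)$. For $f$ sufficiently $C^1$-close to $\id_M$, the graph $\Gamma_f$ lies in $\calU$ and, as shown in the text, $\varphi(\Gamma_f)$ is the graph of a unique 1-form $\mu_f$ on $M$ that is $C^1$-close to $0$. Since $\Gamma_f$ is Lagrangian in $(M\times M,\mathrm{pr}_1^*\omega-\mathrm{pr}_2^*\omega)$ because $f$ is a symplectomorphism (Proposition \ref{prop:Graph_Lagrangian}), and $\varphi$ preserves the symplectic structure, the image $\varphi(\Gamma_f)$ is Lagrangian in $T^*M$, which forces $\mu_f$ to be closed.

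Next, I would make the key observation translating the problem from dynamics to calculus: fixed points of $f$ correspond to points of $\Gamma_f\cap \Delta$, and since $\varphi(\Delta)$ is the zero section and $\varphi(\Gamma_f)$ is the graph of $\mu_f$, these intersection points are in bijection with the zeros of $\mu_f$ on $M$. Therefore, bounding the number of fixed points of $f$ below is the same as bounding the number of zeros of $\mu_f$ below.

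Now I would use the hypothesis. Because $H^1(M)=0$, every closed 1-form on $M$ is exact, so $\mu_f=\dd h$ for some $h\in C^\infty(M)$. The zeros of $\mu_f$ are exactly the critical points of $h$. Since $M$ is compact, $h$ attains both a global maximum and a global minimum, giving at least two critical points; if these two critical values coincide then $h$ is constant, in which case $\mu_f\equiv 0$ and every point of $M$ is fixed, so certainly at least two fixed points exist. In either case, $f$ has at least two fixed points. The main (modest) subtlety to pin down is the smoothness and uniqueness of the reconstruction $\mu_f$ from $\Gamma_f$, and the requisite size of the $C^1$-neighborhood of $\id_M$ so that $\pi\circ \varphi\circ j\colon M\to M$ is a diffeomorphism and hence $\mu_f=\varphi\circ j\circ (\pi\circ \varphi\circ j)^{-1}$ is a genuine section of $T^*M$; both follow from the implicit function theorem applied in a neighborhood of $\id_M$, where the corresponding section is the zero section.
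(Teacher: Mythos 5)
Your proposal is correct and follows essentially the same route as the paper: use the Weinstein-neighborhood identification of $C^1$-small symplectomorphisms with closed 1-forms, use $H^1(M)=0$ to write $\mu_f=\dd h$, use compactness of $M$ to get at least two critical points of $h$, and identify fixed points of $f$ with zeros of $\mu_f$, i.e.\ critical points of $h$. Your write-up is in fact somewhat more careful than the paper's (the Lagrangian-implies-closed step, the implicit-function-theorem justification that $\mu_f$ is a genuine smooth section, and the degenerate case $h$ constant), but it is the same argument.
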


\begin{proof}
Suppose that any $f\in \mathrm{Sympl}(M,\omega)$ is sufficiently $C^1$-close to the identity $\id_M$. Then the graph $\Gamma_f$ is homotopic to a closed 1-form on $M$. Now the fact that $\dd\mu=0$ and $H^1(M)=0$ imply that $\mu=\dd h$ for some $h\in C^\infty(M)$. Since $M$ is compact, $h$ has at least two critical points. Note that fixed points of $f$ are equal to critical points of $h$. Moreover, fixed points of $f$ are equal to the intersection of $\Gamma_f$ with the diagonal $\Delta\subset M\times M$ which is again equal to $S:=\{q\mid \mu_q=\dd_qh=0\}$. Finally, note that the critical points of $h$ are exactly points in $S$.
\end{proof}

We say that a submanifold $Y$ of $M$ is \emph{$C^1$-close} to another submanifold $X$ of $M$, when there is a diffeomorphism $X\to Y$ which is, as a map into $M$, $C^1$-close to the inclusion $X\hookrightarrow M$.

\begin{thm}[Lagrangian intersection theorem]
\label{thm:Lagrangian_intersection}
Let $(M,\omega)$ be a symplectic manifold. Suppose that $X$ is a compact Lagrangian submanifold of $M$ with $H^1(X)=0$. Then every Lagrangian submanifold of $M$ which is $C^1$-close to $X$ intersects $X$ in at least two points. 
\end{thm}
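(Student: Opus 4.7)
The plan is to reduce the problem to counting critical points of a smooth function on $X$, via the Weinstein tubular neighborhood theorem, mirroring the argument used for the fixed-point theorem for symplectomorphisms $C^1$-close to the identity. First I would apply Theorem \ref{thm:Weinstein_tubular_neighborhood_theorem} to produce neighborhoods $\calU_0$ of $X$ in $T^*X$ and $\calU$ of $X$ in $M$, together with a symplectomorphism $\varphi\colon \calU_0\to \calU$ that sends the zero section $i_0(X)\subset T^*X$ onto $i(X)\subset M$ and satisfies $\varphi^*\omega=\omega_0$, where $\omega_0=-\dd\alpha$ is the canonical symplectic form on $T^*X$.

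Let $Y$ be a Lagrangian submanifold of $M$ which is $C^1$-close to $X$, so that there is a diffeomorphism $g\colon X\to Y$ which is $C^1$-close to the inclusion $i\colon X\hookrightarrow M$. If $Y$ is sufficiently $C^1$-close to $X$, then $Y\subset \calU$ and the preimage $\varphi^{-1}(Y)\subset \calU_0\subset T^*X$ is $C^1$-close to the zero section. In particular, $\varphi^{-1}(Y)$ meets each cotangent fiber $T^*_qX$ transversely in exactly one point $\mu_q$ depending smoothly on $q$. This defines a smooth 1-form $\mu\in \Omega^1(X)$ whose image is $\varphi^{-1}(Y)$, and $\mu$ is $C^1$-close to the zero 1-form.

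Next I would use the standard fact that the image of a 1-form $\mu\colon X\to T^*X$ is a Lagrangian submanifold of $(T^*X,\omega_0)$ if and only if $\mu$ is closed. Indeed, since $\mu^*\alpha=\mu$ by the characterization of the tautological form, one has $\mu^*\omega_0=-\dd\mu$, so $\mu(X)$ is Lagrangian precisely when $\dd\mu=0$. Since $Y$ is Lagrangian in $(M,\omega)$ and $\varphi$ is a symplectomorphism, $\varphi^{-1}(Y)$ is Lagrangian in $(T^*X,\omega_0)$, and hence $\dd\mu=0$.

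Now comes the cohomological input: because $H^1(X)=0$, the closed 1-form $\mu$ is exact, so $\mu=\dd h$ for some $h\in C^\infty(X)$. Since $X$ is compact, $h$ attains its maximum and minimum on $X$, yielding at least two critical points of $h$, i.e.\ at least two points $q\in X$ with $\mu_q=0$. The intersection points of $Y$ with $X$ correspond precisely to zeros of $\mu$: indeed, $q\in X\cap Y$ if and only if $\varphi(q,0)\in Y$, i.e.\ $(q,0)\in \varphi^{-1}(Y)$, i.e.\ $\mu_q=0$. Therefore $X\cap Y$ contains at least two points, which completes the proof. The only delicate step is justifying that $Y$ is actually the image of a well-defined 1-form $\mu$ on $X$; this is the main obstacle, but it follows from the $C^1$-closeness of $Y$ to $X$ together with the transversality of the zero section to the fibers of $T^*X\to X$, ensured by choosing the $C^1$-neighborhood small enough.
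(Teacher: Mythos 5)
Your proof is correct and follows essentially the route the paper intends (the theorem is left there as an exercise, modeled on the preceding application of Theorem \ref{thm:Weinstein_tubular_neighborhood_theorem}): identify a $C^1$-close Lagrangian with the graph of a closed $1$-form $\mu$ on $X$ via the Weinstein neighborhood, use $H^1(X)=0$ to write $\mu=\dd h$, and get two intersection points from the critical points of $h$ on the compact manifold $X$. Your justification that the graph is Lagrangian iff $\dd\mu=0$, via $\mu^*\alpha=\mu$, and the identification of $X\cap Y$ with the zeros of $\mu$ are exactly the steps the paper's argument relies on.
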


\begin{exe}
Prove Theorem \ref{thm:Lagrangian_intersection}.
\end{exe}

\begin{defn}[Morse function]
\label{defn:Morse_function}
A \emph{Morse function} on a manifold $M$ is a function $h\colon M\to \R$ whose critical points are all nondegenerate, i.e. the Hessian at a critical point $q$ is not singular, $\det\left(\frac{\de^2h}{\de x_i\de x_j}\big|_q\right)\not=0$.
\end{defn}

Let $(M,\omega)$ be a symplectic manifold and suppose that $h_t\colon M\to \R$ is a smooth family of functions which is \emph{1-periodic}, i.e. $h_t=h_{t+1}$. Let $\rho\colon M\times \R\to M$ be the isotopy generated by the time-dependent vector field $X_t$ defined by $\omega(X_t,\enspace)=\dd h_t$. Then we say that $f$ is \emph{exactly homotopic to the identity} if $f=\rho_1$ for such $h_t$. Equivalently, using the notions which will be introduced in Section \ref{sec:Hamiltonian_mechanics}, we have the following definition:

\begin{defn}[Exactly homotopic to the identity]
A symplectomorphism $f\in \mathrm{Sympl}(M,\omega)$ is \emph{exactly homotopic to the identity} when $f$ is the time-1 map of an isotopy generated by some smooth time-dependent 1-periodic Hamiltonian function.
\end{defn}

\begin{defn}[Nondegenerate fixed point]
A fixed point $q$ of a function $f\colon M\to M$ is called \emph{nondegenerate} if $\dd_qf\colon T_qM\to T_qM$ is not singular.
\end{defn}

\begin{conj}[Arnold]
Let $(M,\omega)$ be a compact symplectic manifold, and $f\colon M\to M$ a symplectomorphism which is exactly homotopic to the identity. Then 
\[
\vert\{\text{fixed points of $f$}\}\vert\geq \min \vert \{\text{critical points of a smooth function on $M$}\}\vert.
\]
Using the notion of a \emph{Morse function} as in Definition \ref{defn:Morse_function}, we obtain
\begin{align*}
\vert\{\text{nondegenerate fixed points of $f$}\}\vert&\geq \min \vert\{\text{critical points of a Morse function on $M$}\}\vert\\&\geq \sum_{i=0}^{2n}\dim H^i(M,\R).
\end{align*}
\end{conj}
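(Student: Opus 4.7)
The plan is to reduce the fixed-point count to infinite-dimensional Morse theory on the free loop space $\calL M := C^\infty(S^1,M)$. Because $f$ is the time-$1$ map of the Hamiltonian flow generated by a $1$-periodic $h_t$, its fixed points are in bijection with the contractible $1$-periodic orbits of the Hamiltonian vector field $X_{h_t}$, and these orbits are precisely the critical points of the (generally multivalued) symplectic action functional
\[
\calA_h(\gamma) \;=\; -\int_{D^2}\bar\gamma^*\omega \;+\; \int_0^1 h_t(\gamma(t))\,\dd t,
\]
where $\bar\gamma\colon D^2\to M$ is a capping disc for $\gamma$. The nondegeneracy hypothesis on a fixed point in the conjecture translates into the statement that the corresponding critical point of $\calA_h$ is Morse in the appropriate infinite-dimensional sense, so the problem becomes a question of bounding the number of critical points of $\calA_h$ by a topological invariant of $M$.

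The second step is to build a Morse homology for $\calA_h$, i.e.\ Floer's construction. One picks an almost complex structure $J$ tamed by $\omega$ and considers finite-energy cylinders $u\colon \R\times S^1 \to M$ satisfying Floer's equation
\[
\de_s u \,+\, J(u)\bigl(\de_t u - X_{h_t}(u)\bigr) \,=\, 0,
\]
which are formally the $L^2$-gradient trajectories of $\calA_h$. One defines the chain group $CF_*(M,\omega,h)$ as the free $\R$-module on the $1$-periodic orbits, graded by the Conley--Zehnder index, and defines a differential by counting rigid such cylinders modulo $\R$-translation. Provided the Fredholm theory, transversality, and compactness of these moduli spaces can be set up, one obtains a well-defined homology $HF_*(M,\omega,h)$ and the tautological Morse-type inequality
\[
\#\{\text{nondegenerate fixed points of }f\} \;\geq\; \sum_i \dim HF_i(M,\omega,h).
\]

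The final step is to compute $HF_*(M,\omega,h)$. Using continuation maps along a homotopy of Hamiltonians, one shows that $HF_*$ is independent of $h$, and then specializes to a $C^2$-small time-independent Morse function on $M$: for such $h$ the finite-energy Floer cylinders are forced to be $s$-dependent gradient flow lines of $h$ on $M$, yielding a chain isomorphism with the Morse complex of $(M,h)$ and hence $HF_*(M,\omega,h)\cong H_*(M;\R)$. Combined with Poincaré duality this gives the desired bound $\sum_{i=0}^{2n}\dim H^i(M;\R)$. The weaker bound by the minimal number of critical points of an arbitrary smooth function on $M$ is then obtained from a Lusternik--Schnirelmann-type argument on the same complex.

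The hard part, and the reason this statement (as an unconditional theorem over $\R$ for arbitrary compact $(M,\omega)$) resisted proof for decades after Arnold's formulation, is the control of bubbling in the compactification of the Floer moduli spaces. Without a hypothesis such as monotonicity or weak exactness of $\omega$, a generic one-parameter family of Floer cylinders can degenerate by splitting off a $J$-holomorphic sphere; this is a codimension-two phenomenon that nonetheless obstructs both $\de^2=0$ and the continuation isomorphism at the chain level. Handling it in full generality requires virtual perturbation schemes (Fukaya--Ono, Liu--Tian, or the polyfold formalism of Hofer--Wysocki--Zehnder) that regularize these moduli spaces with rational multiplicities, and it is this analytic machinery, rather than any new symplectic-geometric construction, that carries the decisive weight of the argument.
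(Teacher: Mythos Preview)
The paper does not prove this statement at all: it is stated as a \emph{conjecture}, followed only by a remark noting that it has been established by Conley--Zehnder, Floer, Hofer--Salamon, Ono, Fukaya--Ono, and Liu--Tian via Floer homology, described there as ``an infinite-dimensional version of Morse theory.'' There is nothing further to compare against.

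Your proposal is therefore not a reproduction of any argument in the paper but rather an outline of the very Floer-theoretic machinery the remark alludes to. As a sketch of that programme it is accurate in its broad strokes: the identification of fixed points with $1$-periodic orbits, the action functional on the loop space, the Floer chain complex and its continuation invariance, the computation via a $C^2$-small Morse Hamiltonian, and the acknowledgment that bubbling is the genuine obstruction in the non-monotone case. You are also right that the general case requires virtual techniques. That said, what you have written is a roadmap rather than a proof: every one of the steps you name (Fredholm theory for the linearized operator, transversality for generic $J$, Gromov--Floer compactness, the gluing theorems underlying $\de^2=0$ and the continuation isomorphisms, and the virtual cycle construction) is itself a substantial piece of analysis, none of which you carry out. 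This is entirely appropriate given that the paper itself treats the statement as a black box, but you should be aware that your text is a summary of a proof strategy, not a proof.
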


\begin{rem}
The Arnold conjecture has been proven by Conley--Zehnder, Floer, Hofer--Salamon, Ono, Fukaya--Ono, Liu--Tian by using \emph{Floer homology}. This is an infinite-dimensional version of \emph{Morse theory}. Sharper bound versions of the Arnold conjecture are still open.
\end{rem}

\begin{exe}
Compute the estimates for the number of fixed points on the compact symplectic manifolds $S^2$, $S^2\times S^2$ and $T^2:=S^1\times S^1$.
\end{exe}

\section{Hamiltonian mechanics}
\label{sec:Hamiltonian_mechanics}
\subsection{Hamiltonian and symplectic vector fields}
Let $(M,\omega)$ be a symplectic manifold and let $H\colon M\to \R$ be a smooth function. By nondegneracy of $\omega$, there is a unique vector field $X_H$ on $M$ such that 
\[
\iota_{X_H}\omega=\dd H.
\]
Suppose that $M$ is compact or that $X_H$ is complete. Let $\rho_t\colon M\to M$ for $t\in\R$ be the 1-parameter family of diffeomorphisms generated by $X_H$, i.e. satisfying the Cauchy problem
\begin{align*}
    \rho_0&=\id_M,\\
    \frac{\dd}{\dd t}\rho_t&=X_H\circ \rho_t.
\end{align*}

In fact, each diffeomorphism $\rho_t$ preserves $\omega$, i.e. $\rho^*_t\omega=\omega$ for all $t$. Indeed, note that
\[
\frac{\dd}{\dd t}\rho^*_t\omega=\rho_t^*L_{X_H}\omega=\rho^*_t(\dd\underbrace{\iota_{X_H}\omega}_{=\dd H}+\iota_{X_H}\underbrace{\dd\omega}_{=0})=0
\]

\begin{defn}[Hamiltonian vector field and Hamiltonian function]
A vector field $X_H$ as above is called the \emph{Hamiltonian vector field} with \emph{Hamiltonian function} $H$. 
\end{defn}

\begin{exe}
Let $X$ be a vector field on some manifold $M$. Then there is a unique vector field $X_\sharp$ on the cotangent bundle $T^*M$ whose flow is the lift of the flow of $X$. Let $\alpha$ be the tautological 1-form on $T^*M$ and let $\omega=-\dd \alpha$ be the canonical symplectic form on $T^*M$. Show that $X_\sharp$ is a Hamiltonian vector field with Hamiltonian function $H:=\iota_{X_\sharp}\alpha$.
\end{exe}

\begin{rem}
If $X_H$ is Hamiltonian, we get
\[
L_{X_H}H=\iota_{X_H}\dd H=\iota_{X_H}\iota_{X_H}\omega=0.
\]
Hence, Hamiltonian vector fields preserve their Hamiltonian functions and each integral curve $(\rho_t(x))_{t\in\R}$ of $X_H$ must be contained in a level set of $H$, i.e. 
\[
H(x)=(\rho^*_tH)(x)=H(\rho_t(x)),\quad \forall t.
\]
\end{rem}

\begin{defn}[Symplectic vector field]
A vector field $X$ on a symplectic manifold $(M,\omega)$ preserving $\omega$, i.e. $L_X\omega=0$, is called \emph{symplectic}.
\end{defn}

\begin{rem}
Note that $X$ is \emph{symplectic} if and only if $\iota_X\omega$ is closed and $X$ is \emph{Hamiltonian} if and only if $\iota_X\omega$ is exact.
\end{rem}

\begin{rem}
Locally, on every contractible set, every symplectic vector field is Hamiltonian. If $H^1(M)=0$, then globally every symplectic vector field is Hamiltonian. In general, $H^1(M)$ measures the obstruction for symplectic vector fields to be Hamiltonian.
\end{rem}

\subsection{Classical mechanics}
\label{subsec:CM}
Consider the Euclidean space $\R^{2n}$ with coordinates\\ $(q_1,\ldots,q_n,p_1,\ldots,p_n)$ and $\omega_0=\sum_j\dd q_j\land \dd p_j$. If the \emph{Hamilton equations}
\begin{align}
    \begin{split}
        \frac{\dd q_i}{\dd t}(t)&=\frac{\de H}{\de p_i},\\
        \frac{\dd p_i}{\dd t}(t)&=-\frac{\de H}{\de q_i}
    \end{split}
\end{align}
are satisfied, then the curve $\rho_t=(q(t),p(t))$ is an integral curve for $X_H$. Indeed, let 
\[
X_H=\sum_{i=1}^n\left(\frac{\de H}{\de p_i}\frac{\de}{\de q_i}-\frac{\de H}{\de q_i}\frac{\de}{\de p_i}\right).
\]
Then 
\begin{multline*}
    \iota_{X_H}\omega=\sum_{j=1}^n\iota_{X_H}(\dd q_j\land \dd p_j)=\sum_{j=1}^n\left[(\iota_{X_H}\dd q_j)\land \dd p_j-\dd q_j\land (\iota_{X_H}\dd p_j)\right]\\
    =\sum_{j=1}^n\left(\frac{\de H}{\de p_j}\dd p_j+\frac{\de H}{\de q_j}\dd q_j\right)=\dd H.
\end{multline*}

Consider the case when $n=3$. By Newton's second law, a particle with mass $m\in \R_{>0}$ moving in \emph{configuration space} $\R^3$ with coordinates $q:=(q_1,q_2,q_3)$ in a potential $V(q)$ moves along a curve $q(t)$ satisfying 
\[
m\frac{\dd^2q}{\dd t^2}=-\nabla V(q).
\]
One then introduces momentum coordinates $p_i:=m\frac{\dd q_i}{\dd t}$ for $i=1,2,3$ and a total energy function 
\begin{align*}
H(q,p)&=\text{kinetic energy} + \text{potential energy}\\
&=\frac{1}{2m}\| p\|^2+V(q)
\end{align*}

The \emph{phase space} is then given by $T^*\R^3=\R^6$ with coordinates $(q_1,q_2,q_3,p_1,p_2,p_3)$. Newton's second law in $\R^3$ is equivalent to the Hamilton equations in $\R^6$:
\begin{align*}
    \frac{\dd q_i}{\dd t}&=\frac{1}{m}p_i=\frac{\de H}{\de p_i},\\
    \frac{\dd p_i}{\dd t}&=m\frac{\dd^2 q_i}{\dd t^2}=-\frac{\de V}{\de q_i}=-\frac{\de H}{\de q_i}.
\end{align*}

\begin{rem}
The total energy $H$ is \emph{conserved} by the physical motion, i.e. $\frac{\dd}{\dd t}H=0$.
\end{rem}

\subsection{Brackets}

Recall that vector fields are differential operators on functions. In particular, if $X$ is a vector field and $f\in C^\infty(M)$, with $\dd f$ being the corresponding 1-form, then 
\[
X(f)=\dd f(X)=L_Xf.
\]
For two vector fields $X,Y$, there is a unique vector field $Z$ such that 
\[
L_Zf=L_X(L_Yf)-L_Y(L_Xf).
\]
The vector field $Z$ is called the \emph{Lie bracket} of the vector fields $X$ and $Y$ and is denoted by $Z=[X,Y]$, since $L_Z=[L_X,L_Y]$ is the commutator (see also Definition \ref{defn:commutator}).

\begin{prop}
If $X$ and $Y$ are symplectic vector fields on a symplectic manifold $(M,\omega)$, then $[X,Y]$ is a Hamiltonian vector field with Hamiltonian function $\omega(Y,X)$.
\end{prop}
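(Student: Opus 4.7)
The plan is to verify directly that $\iota_{[X,Y]}\omega = \dd(\omega(Y,X))$, which by definition means $[X,Y]$ is Hamiltonian with the claimed Hamiltonian function. The proof is a short chase through the standard Cartan calculus identities relating $L$, $\iota$, and $\dd$, exploiting the two hypotheses that $L_X\omega=0$ and $L_Y\omega=0$ together with the closedness $\dd\omega=0$.

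First I would invoke the commutator identity from the exercise preceding Cartan's magic formula in the text, namely
\[
\iota_X L_Y\omega - L_Y \iota_X \omega = \iota_{[X,Y]}\omega,
\]
which after swapping the roles of $X$ and $Y$ and using $\iota_{[Y,X]} = -\iota_{[X,Y]}$ gives
\[
\iota_{[X,Y]}\omega = L_X \iota_Y \omega - \iota_Y L_X \omega.
\]
Since $X$ is symplectic, $L_X\omega = 0$, so the second term vanishes and we are reduced to showing $L_X\iota_Y\omega = \dd(\omega(Y,X))$.

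Next I would apply Cartan's magic formula (Theorem \ref{thm:Cartan_magic_formula}) to the $1$-form $\iota_Y\omega$:
\[
L_X \iota_Y \omega = \dd \iota_X \iota_Y \omega + \iota_X \dd \iota_Y \omega.
\]
Here I use that $Y$ is symplectic, once again via Cartan's formula applied to $\omega$: since $\dd\omega=0$, the equation $L_Y\omega = \dd\iota_Y\omega + \iota_Y\dd\omega = \dd\iota_Y\omega$ combined with $L_Y\omega=0$ yields $\dd\iota_Y\omega = 0$. This kills the second term, leaving
\[
\iota_{[X,Y]}\omega = \dd\iota_X\iota_Y\omega.
\]

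Finally I would unpack $\iota_X\iota_Y\omega = \omega(Y,X)$ directly from the definition of contraction, obtaining $\iota_{[X,Y]}\omega = \dd(\omega(Y,X))$, as desired. There is no real obstacle here beyond keeping track of the sign in swapping $X$ and $Y$ in the commutator identity; every other step is a direct use of the hypotheses $L_X\omega = L_Y\omega = 0$ and $\dd\omega=0$, each fed into Cartan's magic formula exactly once.
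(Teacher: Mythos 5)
Your proof is correct and follows essentially the same route as the paper: both reduce to the identity $\iota_{[X,Y]}\omega = L_X\iota_Y\omega - \iota_Y L_X\omega$ and then use Cartan's magic formula together with $L_X\omega = L_Y\omega = 0$ and $\dd\omega = 0$ to obtain $\iota_{[X,Y]}\omega = \dd\iota_X\iota_Y\omega = \dd(\omega(Y,X))$. The only cosmetic difference is that you discard $\iota_Y L_X\omega$ immediately from $L_X\omega=0$, whereas the paper expands it via Cartan's formula before observing it vanishes; your sign bookkeeping in the swap of $X$ and $Y$ is also correct.
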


\begin{proof}
We have 
\begin{align*}
    \iota_{[X,Y]}\omega&=L_X\iota_Y\omega-\iota_YL_X\omega\\
    &=\dd\iota_X\iota_Y\omega+\iota_X\underbrace{\dd\iota_Y\omega}_{=0}-\iota_Y\underbrace{\dd\iota_X\omega}_{=0}-\iota_Y\iota_X\underbrace{\dd\omega}_{=0}\\
    &=\dd(\omega(Y,X)).
\end{align*}
\end{proof}

\begin{defn}[Lie algebra]
A \emph{Lie algebra} is a vector space $\mathfrak{g}$ together with a \emph{Lie bracket} $[\enspace,\enspace]$, i.e. a bilinear map $[\enspace,\enspace]\colon\mathfrak{g}\times \mathfrak{g}\to \mathfrak{g}$ such that
\begin{enumerate}
    \item $[X,Y]=-[Y,X]$, $\forall X,Y\in\mathfrak{g}$ (antisymmetry)
    \item $[X,[Y,Z]]+[Y,[Z,X]]+[Z,[X,Y]]=0$, $\forall X,Y,Z\in \mathfrak{g}$ (Jacobi identity)
\end{enumerate}
\end{defn}

Denote by $\mathfrak{X}_H(M)$ the space of ]\emph{Hamiltonian vector fields} and by $\mathfrak{X}_S(M)$ the space of \emph{symplectic vector fields} on a manifold $M$.

\begin{cor}
The inclusions
\[
(\mathfrak{X}_H(M),[\enspace,\enspace])\subseteq (\mathfrak{X}_S(M),[\enspace,\enspace])\subseteq (\mathfrak{X}(M),[\enspace,\enspace])
\]
are inclusions of Lie algebras.
\end{cor}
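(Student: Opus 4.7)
The plan is to unpack the preceding proposition, since essentially all the work has already been done. First I would observe that $(\mathfrak{X}(M),[\,,\,])$ is a Lie algebra: bilinearity and antisymmetry are immediate from the definition $[X,Y]=XY-YX$ given in Definition \ref{defn:commutator}, and the Jacobi identity was the content of the Jacobi exercise in Section \ref{sec:vector_fields_and_differential_1-forms}. Thus it suffices to show that each of $\mathfrak{X}_H(M)$ and $\mathfrak{X}_S(M)$ is closed under $[\,,\,]$, since antisymmetry and the Jacobi identity are then inherited from the ambient Lie algebra.

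Next I would verify that $\mathfrak{X}_S(M)$ is a Lie subalgebra of $\mathfrak{X}(M)$. Let $X,Y\in\mathfrak{X}_S(M)$. By the proposition just proved, $[X,Y]$ is Hamiltonian with Hamiltonian function $\omega(Y,X)$, i.e. $\iota_{[X,Y]}\omega=\dd(\omega(Y,X))$. Since Hamiltonian means $\iota_{[X,Y]}\omega$ is exact, it is in particular closed, so by the remark characterizing symplectic vector fields as those whose contraction with $\omega$ is closed, $[X,Y]\in\mathfrak{X}_S(M)$.

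Finally I would check that $\mathfrak{X}_H(M)$ is a Lie subalgebra of $\mathfrak{X}_S(M)$. If $X,Y\in\mathfrak{X}_H(M)$, then since Hamiltonian vector fields are in particular symplectic, we have $X,Y\in\mathfrak{X}_S(M)$ and the same proposition applies: $[X,Y]$ is Hamiltonian with Hamiltonian function $\omega(Y,X)$. Hence $[X,Y]\in\mathfrak{X}_H(M)$, and the chain of inclusions $\mathfrak{X}_H(M)\subseteq\mathfrak{X}_S(M)\subseteq\mathfrak{X}(M)$ is a chain of Lie subalgebra inclusions.

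There is no serious obstacle here: the entire content of the corollary is packaged inside the preceding proposition, and the only subtlety worth flagging is the implication ``Hamiltonian $\Rightarrow$ symplectic'', which follows immediately from the observation that an exact form is closed together with Cartan's magic formula $L_X\omega=\dd\iota_X\omega+\iota_X\dd\omega=\dd\iota_X\omega$ when $\omega$ is closed.
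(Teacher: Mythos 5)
Your proof is correct and follows exactly the route the paper intends: the corollary is an immediate consequence of the preceding proposition, since the bracket of two symplectic (in particular, Hamiltonian) vector fields is Hamiltonian, hence symplectic, so both subspaces are closed under $[\enspace,\enspace]$ and inherit the Lie algebra structure from $\mathfrak{X}(M)$. The only thing left implicit (and it is trivial) is that $\mathfrak{X}_H(M)$ and $\mathfrak{X}_S(M)$ are linear subspaces, which follows from linearity of $X\mapsto\iota_X\omega$ together with the fact that sums of exact (resp.\ closed) forms are exact (resp.\ closed).
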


\begin{defn}[Poisson bracket]
\label{defn:Poisson_bracket}
The \emph{Poisson bracket} of two functions $f,g\in C^\infty(M)$ on a symplectic manifold $(M,\omega)$ is defined by 
\begin{equation}
\label{eq:Poisson_bracket}
\{f,g\}:=\omega(X_f,X_g).
\end{equation}
\end{defn}

\begin{rem}
Note that we have $X_{\{f,g\}}=-[X_f,X_g]$ since $X_{\omega(X_f,X_g)}=[X_g,X_f]$.
\end{rem}

\begin{exe}
Show that the Poisson bracket $\{\enspace,\enspace\}$ satisfies the \emph{Jacobi identity}, i.e.
\[
\{f,\{g,h\}\}+\{g,\{h,f\}\}+\{h,\{f,g\}\}=0,\quad \forall f,g,h\in C^\infty(M).
\]
\end{exe}

\begin{defn}[Poisson algebra]
A \emph{Poisson algebra} is a commutative associative algebra $\calA$ with a Lie bracket $\{\enspace,\enspace\}\colon \calA\times\calA\to \calA$ satisfying the \emph{Leibniz rule}
\[
\{f,gh\}=\{f,g\}h+g\{f,h\},\quad \forall f,g,h\in \calA.
\]
\end{defn}

\begin{exe}
Show that the Poisson bracket defined as in \eqref{eq:Poisson_bracket} is satisfies the Leibniz rule and deduce that if $(M,\omega)$ is a symplectic manifold, then $(C^\infty(M),\{\enspace,\enspace\})$ is a Poisson algebra. 
\end{exe}

\begin{rem}
Note that we have a Lie algebra \emph{anti-homomorphism}
\begin{align*}
C^\infty(M)&\to\mathfrak{X}(M),\\
H&\mapsto X_H
\end{align*}
such that the Poisson bracket $\{\enspace,\enspace\}$ will correspond to $-[\enspace,\enspace]$.
\end{rem}

\subsection{Integrable systems}

\begin{defn}[Hamiltonian system]
A \emph{Hamiltonian system} is a triple $(M,\omega,H)$, where $(M,\omega)$ is a symplectic manifold and $H\in C^\infty(M)$ is a \emph{Hamiltonian function}.
\end{defn}

\begin{thm}
We have $\{f,H\}=0$ if and only if $f$ is constant along integral curves of $X_H$.
\end{thm}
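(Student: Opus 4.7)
The plan is to reduce both conditions to the pointwise identity $X_H(f)=0$ on $M$, using the definition of the Poisson bracket together with the defining relation $\iota_{X_H}\omega=\dd H$ of the Hamiltonian vector field.

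First I would unpack what it means for $f$ to be constant along integral curves of $X_H$. If $\rho_t$ denotes the (local) flow of $X_H$ and $x\in M$, then by the chain rule
\[
\frac{\dd}{\dd t}f(\rho_t(x))=\dd_{\rho_t(x)}f\!\left(\tfrac{\dd}{\dd t}\rho_t(x)\right)=\dd f(X_H)(\rho_t(x))=(X_H(f))(\rho_t(x)).
\]
Since every point of $M$ lies on some integral curve of $X_H$ (at least for a small time), the condition that $t\mapsto f(\rho_t(x))$ is constant for every $x$ is equivalent to $X_H(f)\equiv 0$ on $M$.

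Next I would rewrite $\{f,H\}$ in the same terms. By Definition \ref{defn:Poisson_bracket} and the defining equation $\iota_{X_f}\omega=\dd f$,
\[
\{f,H\}=\omega(X_f,X_H)=(\iota_{X_f}\omega)(X_H)=\dd f(X_H)=X_H(f).
\]
Combining the two computations gives the desired equivalence: $\{f,H\}=0$ everywhere precisely when $X_H(f)=0$ everywhere, which happens precisely when $f$ is constant along every integral curve of $X_H$.

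There is essentially no obstacle here beyond keeping sign conventions straight; the proof is a one-line manipulation after the definitions are unwound. The only subtlety worth flagging is the implicit use of the fact that through each point of $M$ passes an integral curve of $X_H$, which is guaranteed by the standard existence theorem for ODEs (and requires no completeness assumption, since the pointwise identity $X_H(f)=0$ on an open neighborhood along a curve already suffices to conclude constancy on that curve).
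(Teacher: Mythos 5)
Your proof is correct and follows essentially the same route as the paper: both arguments reduce to the identity $X_H(f)=\dd f(X_H)=\omega(X_f,X_H)=\{f,H\}$, which the paper packages as the single computation $\frac{\dd}{\dd t}(f\circ\rho_t)=\rho_t^*L_{X_H}f=\rho_t^*\{f,H\}$ while you separate the chain-rule step from the bracket identity. The remark about not needing completeness of $X_H$ is a fine (and correct) clarification, but it does not change the argument.
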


\begin{proof}
Let $\rho_t$ be the flow of $X_H$. Then 
\begin{equation}
    \frac{\dd}{\dd t}(f\circ \rho_t)=\rho_t^*L_{X_H}f=\rho_t^*\iota_{X_H}\dd f=\rho^*_t\iota_{X_H}\iota_{X_f}\omega=\rho_t^*\omega(X_f,X_H)=\rho^*_t\{f,H\}.
\end{equation}
\end{proof}

\begin{rem}
Given a Hamiltonian system $(M,\omega,H)$, a function $f$ satisfying $\{f,H\}=0$ is called an \emph{integral of motion} or a  \emph{constant of motion}. In general, Hamiltonian system do not admit integrals of motion which are \emph{independent} of the Hamiltonian function. 
\end{rem}

\begin{defn}[Independent functions]
We say that functions $f_1,\ldots,f_n$ on a manifold $M$ are independent if their differentials $\dd_qf_1,\ldots, \dd_qf_n$ are linearly independent at all points $q\in M$ in some open dense subset of $M$.
\end{defn}

\begin{defn}[(Completely) integrable system]
A Hamiltonian system $(M,\omega,H)$ is \emph{(completely) integrable} if it has $n=\frac{1}{2}\dim M$ independent integrals of motion $f_1=H,f_2\ldots,f_n$, which are pairwise in \emph{involution} with respect to the Poisson bracket, i.e.
\[
\{f_i,f_j\}=0,\quad \forall i,j.
\]
\end{defn}

Let $(M,\omega,H)$ be an integrable system of dimension $2n$ with integrals of motion $f_1=H,f_2,\ldots,f_n$. Let $c\in \R^n$ be a regular value of $f:=(f_1,\ldots,f_n)$. Note that the corresponding \emph{level set}, $f^{-1}(c)$, is a Lagrangian submanifold, because it is $n$-dimensional and its tangent bundle is isotropic.

\begin{lem}
\label{lem:level_sets}
If the Hamiltonian vector fields $X_{f_1},\ldots, X_{f_n}$ are complete on the level sets $f^{-1}(c)$, then the connected components of $f^{-1}(c)$ are \emph{homogeneous spaces} for $\R^n$, i.e. are of the form $\R^{n-k}\times T^k$ for some $k$ with $0\leq k\leq n$, where $T^k$ denotes the $k$-torus.
\end{lem}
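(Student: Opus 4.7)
The plan is to exhibit an $\R^n$-action on each connected component of $f^{-1}(c)$, then show that the stabilizer of any point is a discrete (hence lattice) subgroup, and finally invoke the classification of closed discrete subgroups of $\R^n$.

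First I would show that the Hamiltonian vector fields $X_{f_1},\dots,X_{f_n}$ pairwise commute and are tangent to $f^{-1}(c)$. The commuting property follows from the identity $[X_{f_i},X_{f_j}]=-X_{\{f_i,f_j\}}=0$, noted in the remark after Definition \ref{defn:Poisson_bracket}, using $\{f_i,f_j\}=0$. Tangency follows from $X_{f_i}(f_j)=\iota_{X_{f_i}}\dd f_j=\omega(X_{f_j},X_{f_i})=\{f_j,f_i\}=0$, so each $X_{f_i}$ preserves every $f_j$. At a regular value $c$, the differentials $\dd f_1,\dots,\dd f_n$ are linearly independent along $f^{-1}(c)$, and since $\omega$ is nondegenerate, so are the vector fields $X_{f_1},\dots,X_{f_n}$.

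Next, using completeness of the $X_{f_i}$ on $f^{-1}(c)$ together with their commutativity, the joint flow
\[
\Phi\colon \R^n\times f^{-1}(c)\to f^{-1}(c),\qquad
\Phi\big((t_1,\dots,t_n),x\big):=\Phi^{X_{f_1}}_{t_1}\circ\cdots\circ\Phi^{X_{f_n}}_{t_n}(x),
\]
is well-defined, independent of the order, and defines a smooth $\R^n$-action (independence of order uses that commuting complete vector fields have commuting flows). Fix a connected component $N\subset f^{-1}(c)$ and a basepoint $x_0\in N$. Since the $X_{f_i}$ span $T_xN$ at every $x\in N$ by linear independence and dimension count ($\dim N=n$), the orbit map $t\mapsto \Phi(t,x_0)$ is a local diffeomorphism near $0\in\R^n$; hence every orbit is open. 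Since $N$ is covered by disjoint open orbits and is connected, it is a single orbit, so the action is transitive on $N$.

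The final and main step is to analyze the stabilizer $\Gamma:=\{t\in\R^n\mid \Phi(t,x_0)=x_0\}$. The plan here is as follows. Since $\Phi(\,\cdot\,,x_0)$ is a local diffeomorphism near each point of $\Gamma$ (by transitivity of the derivative plus the inverse function theorem, translated along $\Gamma$), $\Gamma$ must be a discrete subgroup of $\R^n$; it is also closed in $\R^n$ by continuity. The map $\R^n/\Gamma\to N$ induced by the orbit map is then a bijective local diffeomorphism, hence a diffeomorphism. At this point I would invoke the classification of closed discrete subgroups of $\R^n$: any such subgroup is a free abelian group of some rank $k$ with $0\leq k\leq n$, generated by $k$ linearly independent vectors, so $\Gamma\cong \Z^k$ and
\[
N\cong \R^n/\Gamma\cong \R^{n-k}\times T^k.
\]
The main obstacle is the transition from discreteness of $\Gamma$ to this explicit lattice structure; the argument I would use is the standard inductive one, picking a shortest nonzero element of $\Gamma$ (possible because $\Gamma$ is discrete), quotienting by the line it spans, and iterating, which also establishes that the $k$ generators are $\R$-linearly independent and hence $\R^n/\Gamma\cong\R^{n-k}\times T^k$ as claimed.
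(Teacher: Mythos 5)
Your proof is correct and follows exactly the route the paper intends: the lemma is left as an exercise with the hint ``follow the flows to obtain coordinates,'' and your argument --- commuting complete Hamiltonian flows giving a transitive $\R^n$-action on each connected component, a discrete closed stabilizer $\Gamma$, and the classification $\Gamma\cong\Z^k$ yielding $\R^n/\Gamma\cong\R^{n-k}\times T^k$ --- is precisely that standard Arnold--Liouville argument. No gaps to flag.
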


\begin{exe}
Prove Lemma \ref{lem:level_sets}. \emph{Hint: follow the flows to obtain coordinates}.
\end{exe}

Note that any compact component of $f^{-1}(c)$ must be a torus. These components, when they exist, are called \emph{Liouville tori}.

\begin{thm}[Arnold--Liouville\cite{Arnold1978,Liouville1855}]
Let $(M,\omega,H)$ be an integrable system of dimension $2n$ with integrals of motion $f_1=H,f_2,\ldots,f_n$. Let $c\in \R^n$ be a regular value of $f:=(f_1,\ldots,f_n)$. The corresponding level set $f^{-1}(c)$ is a Lagrangian submanifold of $M$.
\begin{enumerate}
    \item If the flows of $X_{f_1},\ldots, X_{f_n}$ starting at a point $q\in f^{-1}(c)$ are complete, then the connected component of $f^{-1}(c)$ containing $q$ is a homogeneous space for $\R^n$. With respect to this affine structure, that component has coordinates $\phi_1,\ldots,\phi_n$, known as \emph{angle coordinates}, in which the flows of the vector fields $X_{f_1},\ldots,X_{f_n}$ are linear.
    \item There are coordinates $\psi_1,\ldots,\psi_n$, known as \emph{action coordinates}, complementary to the angle coordinates such that the $\psi_i$s are integrals of motion and 
    \[
    \phi_1,\ldots,\phi_n,\psi_1,\ldots,\psi_n 
    \]
    form a Darboux chart.
\end{enumerate}
\end{thm}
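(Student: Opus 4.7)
The proof splits into three successive tasks: (i) verify that each regular level $f^{-1}(c)$ is Lagrangian, (ii) exploit the commuting flows to produce a transitive $\R^n$-action and thereby the angle coordinates, and (iii) construct action coordinates by period integrals and check that together they give a Darboux chart.

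\textbf{Step 1 (Lagrangian level sets).} Since $c$ is a regular value, the differentials $\dd f_1,\ldots,\dd f_n$ are linearly independent on a neighborhood of $f^{-1}(c)$, so the level set is an $n$-dimensional submanifold and, by nondegeneracy of $\omega$, the Hamiltonian vector fields $X_{f_1},\ldots,X_{f_n}$ are pointwise linearly independent there. The involution relations $\{f_i,f_j\}=0$ give $\dd f_j(X_{f_i})=\omega(X_{f_j},X_{f_i})=-\{f_i,f_j\}=0$, so each $X_{f_i}$ is tangent to $f^{-1}(c)$ and the family $(X_{f_i})$ spans its tangent space by the dimension count. The same involution relations read $\omega(X_{f_i},X_{f_j})=\{f_i,f_j\}=0$, so the tangent bundle of $f^{-1}(c)$ is isotropic, and maximality (i.e.\ $\dim f^{-1}(c)=n=\tfrac{1}{2}\dim M$) upgrades this to Lagrangian.

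\textbf{Step 2 (Homogeneous structure and angle coordinates).} From $X_{\{f_i,f_j\}}=-[X_{f_i},X_{f_j}]$ and $\{f_i,f_j\}=0$ I obtain $[X_{f_i},X_{f_j}]=0$, so under the completeness hypothesis the flows $\Phi^{X_{f_i}}_{t_i}$ commute and assemble into a smooth $\R^n$-action
\[
\Psi\colon\R^n\times f^{-1}(c)\to f^{-1}(c),\qquad \Psi(t_1,\ldots,t_n;q):=\Phi^{X_{f_1}}_{t_1}\circ\cdots\circ\Phi^{X_{f_n}}_{t_n}(q),
\]
which is well-defined by Lemma on level sets, since each $X_{f_i}$ preserves every $f_j$. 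The infinitesimal generators $(X_{f_i})$ are linearly independent on $f^{-1}(c)$, so the action is locally free, and by the dimension count it is transitive on each connected component $N_q$ containing $q$. Thus $N_q\simeq \R^n/\Gamma_q$ where $\Gamma_q\subset\R^n$ is the stabilizer of $q$. Since the action is locally free, $\Gamma_q$ is a discrete subgroup of $\R^n$, and the classification of discrete subgroups of $\R^n$ gives $\Gamma_q\cong \Z^k$ for some $0\leq k\leq n$. Hence $N_q\cong \R^{n-k}\times T^k$, and pulling the flat coordinates on $\R^n$ down to $N_q$ yields coordinates $\phi_1,\ldots,\phi_n$ in which $X_{f_i}=\de/\de\phi_i$, so the flows are linear. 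This is the content of part (1) (and sharpens Lemma \ref{lem:level_sets}).

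\textbf{Step 3 (Action coordinates and Darboux chart).} Restricting to a small neighborhood $\calU$ of a Liouville torus $T^k\subset f^{-1}(c)$ (the case of interest; the non-compact factors are handled analogously with \emph{linear} integrals), the foliation by level sets is a Lagrangian foliation with leaf space locally parametrized by $c=(f_1,\ldots,f_n)$. I apply Weinstein's Lagrangian neighborhood theorem (Theorem \ref{thm:Weinstein_Lagrangian_neighborhood_theorem}) to identify $\calU$ symplectomorphically with a neighborhood of the zero section in $T^*(T^k\times\R^{n-k})$; let $\alpha$ denote the pullback of the tautological $1$-form, so $\omega=-\dd\alpha$ on $\calU$. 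Choose a basis $\gamma_1(c),\ldots,\gamma_n(c)$ of the first homotopy (or affine homology) of $N_q$, varying smoothly with $c$, and define
\[
\psi_i(c):=\frac{1}{2\pi}\oint_{\gamma_i(c)}\alpha,\qquad i=1,\ldots,n.
\]
By construction each $\psi_i$ is a function of $(f_1,\ldots,f_n)$, hence an integral of motion and in involution with the $f_j$'s. To finish, I compute
\[
\dd\psi_i=\sum_j\frac{\de\psi_i}{\de f_j}\dd f_j,
\]
and, using the fact that the $\phi_j$'s are dual to the period cycles, verify $\{\psi_i,\phi_j\}=\delta_{ij}$ and $\{\psi_i,\psi_j\}=0$; combined with $\{\phi_i,\phi_j\}=0$ (which holds because the $\phi$'s parametrize a Lagrangian leaf) this gives $\omega=\sum_i\dd\phi_i\wedge\dd\psi_i$, i.e.\ a Darboux chart.

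\textbf{Main obstacle.} The first two steps are essentially bookkeeping with flows and discrete subgroups of $\R^n$. The hard part is Step 3: showing that the period integrals $\psi_i$ are smooth (and well-defined independent of the choice of representative cycles in the homology class), that $(\psi_1,\ldots,\psi_n)$ is a regular change of parameter in $(f_1,\ldots,f_n)$ so they can serve as coordinates transverse to the leaves, and most importantly that they are symplectically dual to the angle coordinates. The cleanest way I see to verify the duality $\{\psi_i,\phi_j\}=\delta_{ij}$ is to differentiate the defining period integral with respect to $c$ and use Stokes' theorem together with the Weinstein normal-form model, where the tautological $1$-form provides the canonical primitive.
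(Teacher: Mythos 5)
Your Steps 1 and 2 are correct, and they are exactly the content the paper itself records without proof: the Lagrangian property of a regular level set is observed in the sentence preceding the theorem, and the homogeneous-space structure is Lemma \ref{lem:level_sets}, which the paper leaves as an exercise (the theorem itself is only stated, with references to Arnold and Liouville), so there is no in-paper argument to compare your Step 3 against. Your sign conventions match the paper's ($\iota_{X_f}\omega=\dd f$, $\{f,g\}=\omega(X_f,X_g)$, $X_{\{f,g\}}=-[X_f,X_g]$), and the orbit/stabilizer argument for $\R^{n-k}\times T^k$ is fine.

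The genuine gap is in Step 3, which carries the actual content of part (2), and your own closing paragraph concedes that the key verifications are not carried out. Concretely: (i) the claim that $\{\phi_i,\phi_j\}=0$ \emph{because the $\phi$'s parametrize a Lagrangian leaf} is not correct as stated --- the angle functions live on a whole neighborhood and are defined by flowing from a chosen transversal section, and their mutual Poisson brackets vanish only if that section is chosen (or corrected) to be Lagrangian; arranging this is a real step of the standard proof, usually handled by first constructing the actions and then defining the angles as flow times of $X_{\psi_i}$ from a Lagrangian section, or via a generating function. (ii) The duality $\{\psi_i,\phi_j\}=\delta_{ij}$ and the invertibility of $c\mapsto(\psi_1(c),\ldots,\psi_n(c))$ (equivalently, nondegeneracy of the $c$-derivative of the period lattice), which is what allows the $\psi_i$ to replace the $f_i$ as transverse coordinates, are only announced as ``to be verified by differentiating and using Stokes''; without these computations part (2) is not proved. (iii) That each $\psi_i$ is well defined, smooth, and a function of $f$ alone is not ``by construction'': it needs the observation that $\alpha$ restricts to a closed form on each Lagrangian leaf, so the period depends only on the homology class of $\gamma_i(c)$, together with a smooth choice of cycles coming from a local trivialization of the fibration near the torus, which you do not set up. (iv) The non-compact factors are not ``handled analogously'': Theorem \ref{thm:Weinstein_Lagrangian_neighborhood_theorem} requires a compact Lagrangian submanifold, and a torus $T^k\subsetneq f^{-1}(c)$ with $k<n$ is not a leaf, so the normal-form model you invoke is available only around genuine Liouville tori ($k=n$); for non-compact components additional hypotheses are needed, and your argument should either be restricted to the compact case or supplemented accordingly.
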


\begin{exe}[Pendulum]
The \emph{pendulum} is a mechanical system consisting of a massless rod of length $\ell$, where one end is fixed and the other end has some mass $m$ attached to it which can oscillate in a vertical plane. Assume that gravity is constant pointing vertically downwards, and that it is the only external force acting on this system.
\begin{enumerate}
    \item Let $\theta$ be the oriented angle between the rod and the vertical direction. Let $\xi$ be the coordinate along the fibers of $T^*S^1$ induced by the standard angle coordinate on $S^1$. Show that the function 
    \begin{align*}
        H\colon T^*S^1&\to \R,\\
        (\theta,\xi)&\mapsto \underbrace{\frac{\xi^2}{2m\ell^2}}_{=:K}+\underbrace{m\ell(1-\cos\theta)}_{=:V}
    \end{align*}
    is an appropriate Hamiltonian function to describe the pendulum. More precisely, check that gravity corresponds to the potential energy $V(\theta)=m\ell(1-\cos\theta)$ (universal constants are omitted), and that the kinetic energy is given by $K(\theta,\xi)=\frac{\xi^2}{2m\ell^2}$.
    \item For simplicity, assume that $m=\ell=1$. draw the level curves of $H$ in the $(\theta,\xi)$-plane. Show that there is a real number $c\in\R$ such that for some real number $h\in\R$ with $0<h<c$ the level curve $H=h$ is a disjoint union of closed curves. Show that the projection of each of these curves onto the $\theta$-axis is an interval of length less than $\pi$. Show that neither of these assertions is true if $h>c$. What types of motion are described by these two types of curves? What about the case $H=c$?
    \item compute the critical points of $H$. Show that, modulo $2\pi$ in $\theta$, there are exactly two critical points: a critical point $s$ where $H$ vanishes, and a critical point $u$ where $H$ equals $c$. These points are called \emph{stable} and \emph{unstable} points of $H$, respectively. Justify this terminology, i.e., show that a trajectory of the Hamiltonian vector field of $H$ whose initial point is close to $s$ stays close to $s$ forever, and show that this is not the case for $u$. What is happening physically?
\end{enumerate}
\end{exe}

\newpage
\section{Moment maps}

\subsection{Smooth actions}

\begin{defn}[Lie group]
A \emph{Lie group} is a manifold $G$ equipped with a group structure where the group operations \emph{multiplication} and \emph{taking the inverse} are smooth maps.
\end{defn}

\begin{defn}[Representation]
A \emph{representation} of a Lie group $G$ on a vector space $V$ is a group homomorphism 
\[
G\to \GL(V).
\]
\end{defn}

\begin{defn}[Action]
Let $M$ be a manifold and denote by 
\[
\mathrm{Diff}(M):=\{\varphi\colon M\xrightarrow{\sim}M\mid \varphi \text{ diffeomorphism}\}
\]
the \emph{diffeomorphism group} of $M$. An \emph{action} of a Lie group $G$ on $M$ is a group homomorphism
\begin{align*}
    \Psi\colon G&\to \mathrm{Diff}(M),\\
    g&\mapsto \Psi_g.
\end{align*}
\end{defn}

\begin{rem}
We will only consider \emph{left actions} where $\Psi$ is a Homomorphism. A \emph{right action} is defined with $\Psi$ being an anti-homomorphism.
\end{rem}

\begin{defn}[Evaluation map]
The \emph{evaluation map} associated with an action 
\[
\Psi\colon G\to \mathrm{Diff}(M)
\]
is 
\begin{align*}
    \ev_\Psi\colon M\times G&\to M,\\
    (q,g)&\mapsto \Psi_g(q).
\end{align*}
\end{defn}

\begin{rem}
The action $\Psi$ is \emph{smooth} if $\ev_\Psi$ is smooth.
\end{rem}

\subsection{Symplectic and Hamiltonian actions}

Let $(M,\omega)$ be a symplectic manifold, and $G$ a Lie group. Let $\Psi\colon G\to \mathrm{Diff}(M)$ be a (smooth) action. 

\begin{defn}[Symplectic action]
The action $\Psi$ is a \emph{symplectic action} if 
\[
\Psi\colon G\to \mathrm{Sympl}(M,\omega)\subset \mathrm{Diff}(M),
\]
i.e., $G$ \emph{acts by symplectomorphisms}.
\end{defn}

\begin{rem}
It is easy to show that there is a one-to-one correspondence between \emph{complete vector fields} on a manifold $M$ and \emph{smooth actions} of $\R$ on $M$. One can show this by associating to a complete vector field $X$ its exponential map $\exp tX$ and, vice-versa, to a smooth action $\Psi$ its derivative $\frac{\dd\Psi_t(q)}{\dd t}\big|_{t=0}=:X_q$. Thus, we also get a one-to-one correspondence between \emph{complete symplectic vector fields} on $M$ and \emph{symplectic actions} of $\R$ on $M$.
\end{rem}

\begin{defn}[Hamiltonian action]
A symplectic action $\Psi$ of $S^1$ or $\R$ on $(M,\omega)$ is \emph{Hamiltonian} if the vector field generated by $\Psi$ is Hamiltonian. Equivalently, an action $\Psi$ of $S^1$ or $\R$ on $(M,\omega)$ is \emph{Hamiltonian} if there is a function $H\colon M\to \R$ with $\iota_X\omega=\dd H$, where $X$ is the vector field generated by $\Psi$.
\end{defn}

\begin{rem}
When $G$ is not a product of copies of $S^1$ or $\R$, the solution is to use an upgraded Hamiltonian function, which is called a \emph{moment map}. 
\end{rem}

\subsection{Adjoint and coadjoint representations}

Let $G$ be a Lie group. For $g\in G$, let 
\begin{align*}
    L_g\colon G&\to G,\\
    a&\mapsto g\cdot a
\end{align*}
be \emph{left multiplication} by $g$.

\begin{defn}[left-invariant vector field]
A \emph{vector field} $X$ on $G$ is called \emph{left-invariant} if 
\[
(L_g)_*X=X,\quad \forall g\in G.
\]
\end{defn}

\begin{rem}
There are similar \emph{right} notions.
\end{rem}

Let $\mathfrak{g}$ be the vector space of all left-invariant vector fields on $G$. Together with the Lie bracket $[\enspace,\enspace]$ of vector fields, $\mathfrak{g}$ forms a Lie algebra. It is called the \emph{Lie algebra of the Lie group $G$}.

\begin{exe}
Show that the map 
\begin{align*}
    \mathfrak{g}&\to T_eG,\\
    X&\mapsto X_e,
\end{align*}
where $e$ is the identity element in $G$, is an isomorphism of vector spaces.
\end{exe}

Note that any Lie group $G$ acts on itself by \emph{conjugation}:
\begin{align*}
    G&\to \mathrm{Diff}(M),\\
    g&\mapsto \Psi_g
\end{align*}
where $\Psi_g(a)=g\cdot a\cdot g^{-1}$. The derivative at the identity of $\Psi_g$ is an invertible linear map 
\[
\Ad_g\colon \mathfrak{g}\to \mathfrak{g}.
\]
Note that we have identified the Lie algebra $\mathfrak{g}$ with the tangent space $T_eG$. 

\begin{defn}[Adjoint representation]
The \emph{adjoint representation} (or \emph{adjoint action}) of $G$ on $\mathfrak{g}$ is
\begin{align*}
    \mathrm{Ad}\colon G&\to \GL(\mathfrak{g}),\\
    g&\mapsto \Ad_g.
\end{align*}
\end{defn}

\begin{exe}
Check that for matrix Lie groups
\[
\frac{\dd}{\dd t}\Ad_{\exp tX}Y\bigg|_{t=0}=[X,Y],\quad \forall X,Y\in \mathfrak{g}.
\]
\emph{Hint: use that for a matrix group $G$ (i.e. a subgroup of $\GL_n(\R)$ for some $n$) we have}
\[
\Ad_g(Y)=gYg^{-1},\quad\forall g\in G,\forall Y\in \mathfrak{g},
\]
\emph{and}
\[
[X,Y]=XY-YX,\quad \forall X,Y\in\mathfrak{g}.
\]
\end{exe}

Let $\langle\enspace,\enspace\rangle$ be the natural pairing between $\mathfrak{g}^*$ and $\mathfrak{g}$ defined as 
\begin{align*}
    \langle\enspace,\enspace\rangle\colon \mathfrak{g}^*\times \mathfrak{g}&\to \R,\\
    (\xi,X)&\mapsto \langle \xi,X\rangle:=\xi(X).
\end{align*}

For $\xi\in\mathfrak{g}^*$, we define $\Ad^*_g\xi$ by the property
\[
\langle \Ad^*_g\xi,X\rangle=\langle \xi,\Ad_{g^{-1}}X\rangle,\quad \forall X\in\mathfrak{g}.
\]

\begin{defn}[Coadjoint representation]
The collection of maps $\Ad^*_g$ forms the \emph{coadjoint representation} (or \emph{coadjoint action}) of $G$ on $\mathfrak{g}^*$:
\begin{align*}
    \Ad^*\colon G&\to \GL(\mathfrak{g}^*),\\
    g&\mapsto \Ad^*_g.
\end{align*}
\end{defn}

\begin{exe}
Show that for all $g,h\in G$, we have
\[
\Ad_g\circ \Ad_h=\Ad_{gh},\qquad \Ad^*_g\circ \Ad^*_h=\Ad^*_{gh}.
\]
\end{exe}

\subsection{Hamiltonian actions}

Let $(M,\omega)$ be a symplectic manifold, $G$ a Lie group, and $\Psi\colon G\to \mathrm{Sympl}(M,\omega)$ a (smooth) symplectic action, i.e. a group homomorphism such that the evaluation map $\ev_\Psi(g,q):=\Psi_g(q)$ is smooth. Moreover, let $\mathfrak{g}$ be the Lie algebra of $G$ and $\mathfrak{g}^*$ its dual space. 

\begin{defn}[Hamiltonian action]
\label{defn:Hamiltonian_action}
The action $\Psi$ is called \emph{Hamiltonian} if there exists a map 
\[
\mu\colon M\to \mathfrak{g}^*
\]
such that
\begin{enumerate}
    \item for each $X\in \mathfrak{g}$, let 
    \begin{itemize}
        \item the map 
        \begin{align*}
            \mu^X\colon M&\to \R,\\
            q&\mapsto \mu^X(q):=\langle \mu(q),X\rangle,
        \end{align*}
        be the component of $\mu$ along $X$,
        \item $X^\#$ be the vector field on $M$ generated by the 1-parameter subgroup $(\exp tX)_{t\in\R}\subseteq G$.
    \end{itemize}
    Then 
    \[
    \dd\mu^X=\iota_{X^\#}\omega,
    \]
    i.e., $\mu^X$ is a Hamiltonian function for the vector field $X^\#$.
    \item $\mu$ is \emph{equivariant} with respect to the given action $\Psi$ of $G$ on $M$ and the coadjoint action $\Ad^*$ of $G$ on $\mathfrak{g}^*$:
    \[
    \mu\circ\Psi_g=\Ad^*_g\circ \mu,\quad \forall g\in G.
    \]
\end{enumerate}
\end{defn}

\begin{defn}[Moment map]
A map $\mu$ as in Definition \ref{defn:Hamiltonian_action} is called a \emph{moment map}.
\end{defn}

\begin{defn}[Hamiltonian $G$-space]
The quadruple $(M,\omega,G,\mu)$ is called a \emph{Hamiltonian $G$-space}.
\end{defn}

\begin{defn}[Comoment map]
Let $G$ be a connected Lie group. A \emph{comoment map} is a map 
\[
\mu^*\colon \mathfrak{g}\to C^\infty(M),
\]
such that 
\begin{enumerate}
    \item $\mu^*(X):=\mu^X$ is a Hamiltonian function for the vector field $X^\#$, 
    \item $\mu^*$ is a Lie algebra homomorphism:
    \[
    \mu^*[X,Y]=\{\mu^*(X),\mu^*(Y)\},
    \]
    where $\{\enspace,\enspace\}$ denotes the Poisson bracket on $C^\infty(M)$.
\end{enumerate}
\end{defn}

\begin{rem}
The condition for an action to be Hamiltonian can be equivalently rephrased by using the notion of a comoment map instead of a moment map.
\end{rem}

\section{Symplectic reduction}

\subsection{Orbit spaces}

Let $\Psi\colon G\to \mathrm{Diff}(M)$ be any action. 

\begin{defn}[Orbit]
The \emph{orbit} of $G$ through $q\in M$ is 
\[
\calO_q:=\{\Psi_g(q)\mid g\in G\}.
\]
\end{defn}

\begin{defn}[Stabilizer]
The \emph{stabilizer} (or \emph{isotropy}) of $q\in M$ is the subgroup 
\[
G_q:=\{g\in G\mid \Psi_g(q)=q\}.
\]
\end{defn}

\begin{exe}
Show that if $p$ is in the orbit of $q$, then $G_p$ and $G_q$ are conjugate subgroups.
\end{exe}

\begin{defn}[Transitive/Free/Locally free]
The action $\Psi$ is called 
\begin{itemize}
    \item \emph{transitive} if there is just one orbit,
    \item \emph{free} if all stabilizers are trivial $\{e\}$,
    \item \emph{locally free} if all stabilizers are discrete.
\end{itemize}
\end{defn}

Let $\sim$ be the orbit equivalence relation, i.e. for $q,p\in M$, we define $q\sim p$ if and only if $q$ and $p$ are on the same orbit. 

\begin{defn}[Orbit space]
The space $M/_\sim=:M/G$ is called the \emph{orbit space}.  
\end{defn}

\begin{rem}
We can endow $M/G$ with the \emph{quotient topology} with respect to the projection 
\begin{align*}
    \pi\colon M&\to M/G,\\
    q&\mapsto \calO_q.
\end{align*}
\end{rem}

\subsection{Principal bundles}

Let $G$ be a Lie group and let $B$ be a manifold.

\begin{defn}[Principal $G$-bundle]
A \emph{principal $G$-bundle over $B$} is a manifold $P$ with a smooth map $\pi\colon P\to B$ satisfying:
\begin{enumerate}
    \item $G$ acts freely on $P$ (from the left),
    \item $B$ is the orbit space for this action and $\pi$ is the point-orbit projection, and
    \item there is an open covering of $B$ such that to each set $\calU$ in that covering corresponds a map $\varphi_\calU\colon \pi^{-1}(\calU)\to \calU\times G$ with 
    \[
    \varphi_\calU(q)=(\pi(q),s_\calU(q)),\quad s_\calU(g\cdot q)=g\cdot s_\calU(q),\quad \forall q\in\pi^{-1}(\calU).
    \]
\end{enumerate}
The $G$-valued maps $s_\calU$ are determined by the corresponding $\varphi_\calU$. Condition (3) is called the property of being \emph{locally trivial}.
\end{defn}

\begin{rem}
The manifold $B$ is usually called the \emph{base}, the manifold $P$ is called the \emph{total space}, the Lie group $G$ is called the \emph{structure group}, and the map $\pi$ is called the \emph{projection}. We can represent a principal $G$-bundle by the following diagram:
\[
\begin{tikzcd}
G \arrow[r, hook] & P \arrow[d, "\pi"] \\
                  & B                 
\end{tikzcd}
\]
\end{rem}

\begin{thm}
\label{thm:quotient_manifold}
If a compact Lie group $G$ acts freely on a manifold $M$, then $M/G$ is a manifold and the map $\pi\colon M\to M/G$ is a principal $G$-bundle.
\end{thm}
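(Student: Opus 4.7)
The plan is to establish the result via the classical \emph{slice theorem} (or \emph{tube theorem}) for actions of compact Lie groups, which will simultaneously produce the smooth structure on $M/G$ and the local trivializations realizing $\pi\colon M\to M/G$ as a principal $G$-bundle. First, I would fix once and for all a $G$-invariant Riemannian metric on $M$, obtained by averaging an arbitrary Riemannian metric over $G$ with respect to Haar measure; compactness of $G$ is crucial here to guarantee that the averaging integral converges and produces a smooth metric. I would also note that freeness plus compactness makes the action proper, so each orbit $\calO_q$ is a closed embedded submanifold of $M$: indeed, the orbit map $\sigma_q\colon G\to M$, $g\mapsto \Psi_g(q)$, is smooth, injective (by freeness), and its differential at $e$ is injective (otherwise some nontrivial one-parameter subgroup would fix $q$, contradicting freeness), whence $\sigma_q$ is an injective immersion from a compact manifold and therefore an embedding with $\calO_q\cong G$.

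Next, I would construct a slice at each $q\in M$. Let $N_q:=(T_q\calO_q)^\perp\subset T_qM$ be the orthogonal complement with respect to the invariant metric, and let $\exp_q$ denote the Riemannian exponential at $q$. For a sufficiently small $G$-invariant neighborhood $V$ of the zero section in the normal bundle $N\calO_q\to \calO_q$, the map $\exp\colon V\to M$ defined fiberwise by the exponential is a $G$-equivariant diffeomorphism onto an open $G$-invariant neighborhood $\calU_q$ of $\calO_q$; this is the tubular neighborhood theorem applied equivariantly (which works because the metric is $G$-invariant, so $\dd\Psi_g$ sends $N_q$ isometrically to $N_{\Psi_g(q)}$). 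Setting $S_q:=\exp_q(B_\varepsilon(0)\cap N_q)$ for small $\varepsilon$, the restriction
\[
\Phi_q\colon G\times S_q\longrightarrow \calU_q,\qquad (g,s)\longmapsto \Psi_g(s),
\]
is a diffeomorphism: it is smooth, $G$-equivariant (for the left action on the first factor), and its differential at $(e,q)$ is an isomorphism $T_eG\oplus T_qS_q\xrightarrow{\sim} T_qM$ because $T_qS_q=N_q$ is complementary to $T_q\calO_q$; shrinking $\varepsilon$ we may assume bijectivity globally. Injectivity, after shrinking, uses freeness together with a standard compactness argument: if it failed there would be sequences $g_n\neq e$ and $s_n\to q$ with $\Psi_{g_n}(s_n)\in S_q$, and passing to a convergent subsequence $g_n\to g_\infty$ by compactness of $G$ would give $\Psi_{g_\infty}(q)=q$, hence $g_\infty=e$; but then $g_n$ would be eventually small enough that $\Psi_{g_n}$ maps $S_q$ off itself, a contradiction.

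With the tubes $(\calU_q,\Phi_q)$ in hand, I would define the charts on $M/G$ as follows: the composition $\iota_q\colon S_q\hookrightarrow M\xrightarrow{\pi}M/G$ is a homeomorphism onto the open set $\pi(\calU_q)\subset M/G$ (openness of $\pi(\calU_q)$ follows because $\pi$ is an open map, as $\pi^{-1}(\pi(W))=\bigcup_{g\in G}\Psi_g(W)$ is open for open $W$). The inverses $\iota_q^{-1}$ will serve as chart maps, with image contained in $\R^{\dim M-\dim G}$ via the identification $S_q\cong B_\varepsilon(0)\cap N_q$. Transition maps between two such charts are smooth: given overlapping tubes at $q$ and $q'$, the transition from $S_{q'}$ to $S_q$ is computed by writing a point $s'\in S_{q'}$ uniquely as $\Psi_{g(s')}(s(s'))$ with $s(s')\in S_q$, and the implicit function theorem (together with the diffeomorphism $\Phi_q$) shows $g$ and $s$ depend smoothly on $s'$. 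The resulting atlas makes $M/G$ into a smooth manifold of dimension $\dim M-\dim G$; Hausdorffness follows from properness of the action (the graph of the equivalence relation is closed), and second-countability is inherited from $M$. Finally, the local trivializations $\varphi_{\pi(\calU_q)}:=\Phi_q^{-1}\colon \calU_q=\pi^{-1}(\pi(\calU_q))\to \pi(\calU_q)\times G$ verify condition (3) in the definition of a principal bundle, with equivariance built into $\Phi_q$.

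The main obstacle is the slice construction and, specifically, verifying injectivity of $\Phi_q$ on a uniform neighborhood of the orbit; this is where freeness, compactness, and the invariant metric conspire, and it is the essential nontrivial ingredient. Once the slices exist, the manifold structure on $M/G$ and the principal bundle structure follow by routine chart-gluing arguments.
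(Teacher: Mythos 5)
Your proposal is correct and follows essentially the same route as the paper: show that each orbit is a compact embedded submanifold diffeomorphic to $G$, produce a slice $S$ through each point so that $G\times S_\varepsilon$ is equivariantly diffeomorphic to a $G$-invariant tube $\calU$, and then use the slices as charts on $M/G$, checking smoothness of transitions and reading off the local trivializations of $\pi$ from the tube diffeomorphisms. The only substantive difference is that the paper invokes the slice theorem as a stated black box, whereas you also supply its proof (invariant metric by averaging, equivariant exponential/tubular neighborhood, and the compactness-plus-freeness argument for injectivity of $\Phi_q$), which strengthens rather than changes the argument.
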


\begin{proof}
First, we will show that, for any $q\in M$, the $G$-orbit through $q$ is a compact embedded submanifold of $M$ diffeomorphic to $G$. Note that the evaluation map 
\begin{align*}
\ev\colon G\times M&\to M,\\
(g,q)&\mapsto\ev(g,q):=g\cdot q 
\end{align*}
is smooth since the action is smooth. We claim that, for $q\in M$, the map $\ev_q$ provides the desired embedding. Note that the image of $\ev_q$ is the $G$-orbit through $q$. Since the action of $G$ is free, we get that $\ev_q$ is injective. Clearly, the map $\ev_q$ is proper, since a compact (and hence closed) subset $N$ of $M$ has inverse image $(\ev_q)^{-1}(N)$ being a closed subset of a compact Lie group $G$, hence compact. We still have to show that $\ev_q$ is an immersion. Note that for $X\in\mathfrak{g}\cong T_eG$ we have 
\[
\dd_e\ev_q(X)=0\Longleftrightarrow X^\#_q=0\Longleftrightarrow X=0,
\]
since the action is free. Hence, we can conclude that $\dd_e\ev_q$ is injective. Thus, at any point $g\in G$, for $X\in T_gG$, we have 
\[
\dd_g\ev_q(X)=0\Longleftrightarrow \dd_e(\ev_q\circ R_g)\circ \dd_gR_{g^{-1}}(X)=0,
\]
where $R_g\colon G\to G$ denotes \emph{right multiplication} by $g$. On the other hand, $\ev_q\circ R_g=\ev_{g\cdot q}$ has an injective differential at the identity $e$, and $\dd_g R_{g^{-1}}$ is an isomorphism. This implies that $\dd_g\ev_q$ is always injective.

In fact, one can show that even if the action is not free, the $G$-orbit through $q$ is a compact embedded submanifold of $M$. In that case, the orbit is diffeomorphic to the quotient of $G$ by the isotropy of $q$, i.e. 
\[
\calO_q\cong G/G_q.
\]
Let $S$ be a transverse section to $\calO_q$ at $q$. We call $S$ a \emph{slice}. Choose coordinates $x_1,\ldots, x_n$ centered at $q$ such that 
\begin{align*}
    \calO_q\cong G\colon x_1&=\dotsm =x_k=0,\\
    S\colon x_{k+1}&=\dotsm =x_n=0.
\end{align*}
Let $S_\epsilon:=S\cap B_\epsilon(0,\R^n)$, where $B_\epsilon(0,\R^n)$ denotes the ball of radius $\epsilon$ centered at $0$ in $\R^n$. Let $\eta\colon G\times S\to M$, $\eta(g,s)=g\cdot s$. Then we can apply the following equivariant version of the tubular neighborhood theorem:
\begin{thm}[Slice theorem]
Let $G$ be a compact Lie group $G$ acting on a manifold $M$ such that $G$ acts freely at $q\in M$. For sufficiently small $\epsilon$, $\eta\colon G\times S_\epsilon\to M$ maps $G\times S_\epsilon$ diffeomorphically onto a $G$-invariant neighborhood $\calU$ of the $G$-orbit through $q$.
\end{thm}
We will use the following corollaries of the Slice theorem:
\begin{cor}
If the action of $G$ is free at $q$, then the action is free on $\calU$.
\end{cor}

\begin{cor}
The set of points where $G$ acts freely is open. 
\end{cor}

\begin{cor}
The set $G\times S_\epsilon\cong \calU$ is $G$-invariant. Hence, the quotient 
\[
\calU/G\cong S_\epsilon
\]
is smooth. 
\end{cor}
Now we can conclude the proof that $M/G$ is a manifold and $\pi\colon M\to M/G$ is a smooth fiber map. For $q\in M$, let $p=\pi(q)\in M/G$. Choose a $G$-invariant neighborhood $\calU$ of $q$ as in the slice theorem: $\calU\cong G\times S_\epsilon$. Then $\pi(\calU)=\calU/G=:\calV$ is an open neighborhood of $p$ in $M/G$. By the slice theorem, we get that $S_\epsilon \xrightarrow{\sim}\calV$ is a homeomorphism. We will use such neighborhoods $\calV$ as charts on $M/G$. We want to show that the transition functions associated with these charts are smooth. For this, consider two $G$-invariant open sets $\calU_1,\calU_2$ in $M$ and corresponding slices $S_1,S_2$ of the $G$-action. Then $S_{12}:=S_1\cap\calU_2$ and $S_{21}:=S_2\cap\calU_1$ are both slices for the $G$-action on $\calU_1\cap\calU_2$. To compute the transition map $S_{12}\to S_{21}$, consider the diagram
\[
\begin{tikzcd}
S_{12} \arrow[r, "\sim"] & \mathrm{id}\times S_{12} \arrow[r, hook] & G\times S_{12} \arrow[rd, "\sim"]  &                                \\
                         &                                          &                                    & \mathcal{U}_1\cap\mathcal{U}_2 \\
S_{21} \arrow[r, "\sim"] & \mathrm{id}\times S_{21} \arrow[r, hook] & G\times S_{21} \arrow[ru, "\sim"'] &                               
\end{tikzcd}
\]
Then the composition
\[
\begin{tikzcd}
S_{12} \arrow[r, hook] & \mathcal{U}_1\cap\mathcal{U}_2 \arrow[r, "\sim"] & G\times S_{21} \arrow[r, "\mathrm{pr}_2"] & S_{21}
\end{tikzcd}
\]
is smooth. Finally, we need to show that $\pi\colon M\to M/G$ is a smooth fiber map. For $q\in M$ with $p:=\pi(q)\in M/G$, choose a $G$-invariant neighborhood $\calU$ of the $G$-orbit through $q$ of the form $\eta\colon G\times S_\epsilon\xrightarrow{\sim}\calU$. Then $\calV=\calU/G\simeq S_\epsilon$ is the corresponding neighborhood of $p\in M/G$:
\[
\begin{tikzcd}
M\supseteq\mathcal{U} \arrow[d, "\pi"'] \arrow[r, "\eta^{-1}"] & G\times S_\epsilon \arrow[r, "\sim"] & G\times \mathcal{V} \arrow[d, "\mathrm{pr}_2"] \\
M/G\supseteq\mathcal{V} \arrow[rr, "\mathrm{id}"]              &                                      & \mathcal{V}                                   
\end{tikzcd}
\]
since the projection $\mathrm{pr}_2$ is smooth. it is then easy to check that the transition maps for the bundle defined by $\pi$ are smooth. We leave this as an exercise.
\end{proof}

\subsection{The Marsden--Weinstein theorem}

\begin{thm}[Marsden--Weinstein\cite{MarsdenWeinstein1974}]
\label{thm:Marsden-Weinstein}
Let $(M,\omega,G,\mu)$ be a Hamiltonian $G$-space for a compact Lie group $G$. Let $i\colon \mu^{-1}(0)\hookrightarrow M$ be the inclusion map. Assume that $G$ acts freely on $\mu^{-1}(0)$. Then 
\begin{itemize}
    \item the orbit space $M_\mathrm{red}:=\mu^{-1}(0)/G$ is a manifold,
    \item $\pi\colon \mu^{-1}(0)\to M_\mathrm{red}$ is a principal $G$-bundle, and 
    \item there is a symplectic form $\omega_\mathrm{red}$ on $M_\mathrm{red}$ satisfying $i^*\omega=\pi^*\omega_\mathrm{red}$.
\end{itemize}
\end{thm}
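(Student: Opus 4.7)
The plan is to verify the three conclusions in order: that $Z := \mu^{-1}(0)$ is a smooth submanifold, that the quotient $M_\mathrm{red} = Z/G$ inherits a principal bundle structure from the quotient manifold theorem, and that $i^*\omega$ descends to a symplectic form $\omega_\mathrm{red}$ on $M_\mathrm{red}$. Throughout, the key computational input will be the moment map identity $\iota_{X^\#}\omega = \dd\mu^X$, which for any $v \in T_qM$ and $X\in\mathfrak{g}$ reads $\langle \dd_q\mu(v), X\rangle = \omega_q(X^\#_q, v)$.

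First I would show that $0$ is a regular value of $\mu$ at every $q\in Z$. Under the pairing between $\mathfrak{g}^*$ and $\mathfrak{g}$, the map $\dd_q\mu\colon T_qM\to\mathfrak{g}^*$ is surjective iff its transpose $\mathfrak{g}\to T_q^*M$, $X\mapsto \dd_q\mu^X = \iota_{X^\#_q}\omega_q$, is injective. Since $G$ acts freely on $Z$, the infinitesimal action $X\mapsto X^\#_q$ is injective on $\mathfrak{g}$ (if $X^\#_q = 0$ the flow of $X^\#$ fixes $q$, contradicting freeness once one exponentiates), and nondegeneracy of $\omega_q$ then forces $X\mapsto \iota_{X^\#_q}\omega_q$ to be injective. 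Hence $Z$ is a closed embedded submanifold of $M$ of codimension $\dim G$, and moreover
\begin{equation*}
T_qZ \;=\; \ker \dd_q\mu \;=\; \{v\in T_qM\mid \omega_q(X^\#_q,v)=0\;\forall X\in\mathfrak{g}\}\;=\;(T_q\calO_q)^{\omega_q}.
\end{equation*}
By equivariance of $\mu$, the $G$-action on $M$ restricts to a free $G$-action on $Z$, so Theorem \ref{thm:quotient_manifold} yields the manifold structure on $M_\mathrm{red}=Z/G$ and the principal $G$-bundle structure on $\pi\colon Z\to M_\mathrm{red}$, establishing the first two bullet points.

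Next I would construct $\omega_\mathrm{red}$ by showing that $i^*\omega$ is \emph{basic} for the submersion $\pi$, i.e.\ $G$-invariant and horizontal. Invariance follows from $\Psi_g^*\omega=\omega$ and equivariance of $i$. For horizontality, the generators of the vertical distribution along $Z$ are exactly the $X^\#$'s, and
\begin{equation*}
\iota_{X^\#}(i^*\omega) \;=\; i^*(\iota_{X^\#}\omega) \;=\; i^*\dd\mu^X \;=\; \dd(\mu^X\circ i) \;=\; 0,
\end{equation*}
since $\mu$ vanishes on $Z$. Because $\pi$ is a surjective submersion with connected fibres (the $G$-orbits), standard descent gives a unique form $\omega_\mathrm{red}\in\Omega^2(M_\mathrm{red})$ with $\pi^*\omega_\mathrm{red}=i^*\omega$; concretely, for $[u],[v]\in T_{[q]}M_\mathrm{red}\cong T_qZ/T_q\calO_q$ one sets $\omega_\mathrm{red}([u],[v]) := \omega_q(u,v)$, and the horizontality computation just performed shows this is independent of the lifts $u,v$. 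Closedness then follows from $\pi^*\dd\omega_\mathrm{red} = \dd\pi^*\omega_\mathrm{red} = \dd i^*\omega = i^*\dd\omega = 0$ and injectivity of $\pi^*$.

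The one remaining point, and the subtlest one, is nondegeneracy of $\omega_\mathrm{red}$, for which I would argue as follows. Suppose $\omega_q(u,v)=0$ for all $v\in T_qZ$; then $u\in (T_qZ)^{\omega_q}$. But by the identification $T_qZ = (T_q\calO_q)^{\omega_q}$ from the first paragraph, and since $(\cdot)^{\omega_q}$ is an involution on subspaces, we have $(T_qZ)^{\omega_q} = T_q\calO_q$, so $u\in T_q\calO_q$ and hence $[u]=0$ in $T_{[q]}M_\mathrm{red}$. The main obstacle I anticipate is keeping track of the two linear-algebraic facts $T_qZ=(T_q\calO_q)^{\omega_q}$ and $T_q\calO_q\subseteq T_qZ$ (the latter from $\mu\circ\exp(tX) = \mu$ on $Z$ via equivariance and $\mu|_Z\equiv 0$), since the entire descent and nondegeneracy argument turns on coisotropy of $Z$ and its symplectic orthogonal being precisely the tangent to the $G$-orbit; everything else is a formal consequence of the quotient manifold theorem and the principal bundle descent of basic forms.
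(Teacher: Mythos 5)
Your proposal is correct and follows essentially the same route as the paper's proof: freeness gives surjectivity of $\dd_q\mu$ so that $0$ is a regular value, Theorem \ref{thm:quotient_manifold} supplies the manifold and principal-bundle structure, and the form descends via the coisotropy identity $T_q\mu^{-1}(0)=(T_q\calO_q)^{\omega_q}$. The only difference is cosmetic: where the paper invokes Lemma \ref{lem:Marsden-Weinstein} for the pointwise symplectic structure on $T_q\mu^{-1}(0)/T_q\calO_q$, you unpack that linear algebra explicitly (horizontality, descent, and nondegeneracy via $(T_qZ)^{\omega_q}=T_q\calO_q$), which if anything makes the nondegeneracy step more self-contained.
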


\begin{defn}[Reduction]
The pair $(M_\mathrm{red},\omega_\mathrm{red})$ is called the \emph{reduction} of $(M,\omega)$ with respect to $G$ and $\mu$.
\end{defn}

\begin{rem}
The reduction is sometimes also called \emph{reduced space}, \emph{symplectic quotient} or the \emph{Marsden--Weinstein quotient}.
\end{rem}

\begin{lem}
\label{lem:Marsden-Weinstein}
Let $(V,\omega)$ be a symplectic vector space. Suppose that $I$ is an isotropic subspace of $V$. Then $\omega$ induces a canonical symplectic form $\Omega$ on $I^\omega/I$, where $I^\omega$ denotes the symplectic orthocomplement of $I$.
\end{lem}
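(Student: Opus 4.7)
The plan is to define $\Omega$ on the quotient by the obvious formula inherited from $\omega$, and then to verify in sequence: (i) $I\subseteq I^\omega$ so the quotient $I^\omega/I$ makes sense; (ii) the formula is well-defined on cosets; (iii) bilinearity and skew-symmetry are inherited; (iv) the resulting form on $I^\omega/I$ is nondegenerate. The only subtle point is (iv), but it follows cleanly from the identity $(I^\omega)^\omega = I$ that is already recorded as an exercise in the text.

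First I would note that since $I$ is isotropic, $\omega|_I = 0$ means that every $i\in I$ satisfies $\omega(i, j) = 0$ for all $j \in I$, which is exactly the statement $I\subseteq I^\omega$. Hence $I^\omega/I$ is a well-defined vector space. I would then define
\[
\Omega\colon I^\omega/I \times I^\omega/I \to \mathbb{R}, \qquad \Omega([u],[v]) := \omega(u,v),
\]
for any representatives $u,v \in I^\omega$.

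Next I would check well-definedness: if $u' = u + i$ and $v' = v + j$ with $i,j\in I$ and $u,v \in I^\omega$, then by bilinearity
\[
\omega(u',v') = \omega(u,v) + \omega(u,j) + \omega(i,v) + \omega(i,j).
\]
Each of the last three terms vanishes: $\omega(u,j)=0$ and $\omega(i,v)=0$ because $u,v\in I^\omega$ and $i,j\in I$, while $\omega(i,j)=0$ because $I$ is isotropic. Hence $\Omega([u],[v])$ does not depend on the chosen representatives. Bilinearity and skew-symmetry of $\Omega$ are then immediate consequences of the corresponding properties of $\omega$.

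The main (though still short) step is nondegeneracy. Suppose $[u]\in I^\omega/I$ satisfies $\Omega([u],[v])=0$ for every $[v]\in I^\omega/I$. This means $\omega(u,v)=0$ for all $v\in I^\omega$, so $u \in (I^\omega)^\omega$. By the symplectic double-orthocomplement identity $(W^\omega)^\omega = W$ (stated as an exercise in the excerpt), this gives $u\in I$, i.e.\ $[u]=0$. Therefore $\Omega$ is nondegenerate, hence a (linear) symplectic form on $I^\omega/I$. Finally, canonicity is clear: the construction used no choice beyond $\omega$ and $I$ themselves.
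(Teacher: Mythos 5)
Your proof is correct and is the standard argument: the paper itself leaves this lemma as an exercise, and your route (checking $I\subseteq I^\omega$, well-definedness of $\Omega$ on cosets, and deducing nondegeneracy from the double-orthocomplement identity $(I^\omega)^\omega=I$, valid here since $V$ is finite-dimensional and $\omega$ nondegenerate) is precisely the intended one, relying on the earlier exercise $(W^\omega)^\omega=W$ from the same chapter. Nothing is missing.
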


\begin{exe}
Prove Lemma \ref{lem:Marsden-Weinstein}.
\end{exe}

\begin{proof}[Proof of Theorem \ref{thm:Marsden-Weinstein}]
Note that since $G$ acts freely on $\mu^{-1}(0)$, we get that $\dd_q\mu$ is surjective for all $q\in \mu^{-1}(0)$ since 
\[
\im \dd_q\mu=\mathrm{Ann}(\mathfrak{g}_q):=\{\xi\in\mathfrak{g}^*\mid \langle \xi,X\rangle=0,\,\, \forall X\in\mathfrak{g}_q\},
\]
where $\mathrm{Ann}$ denotes the \emph{annihilator} and $\mathfrak{g}_q$ the Lie algebra of the stabilizer $G_q$ which in this case is trivial and hence $\im \dd_q\mu=\mathfrak{g}^*$. Thus, $0$ is a \emph{regular value} and therefore $\mu^{-1}(0)$ is a closed submanifold of codimension $\dim G$. The first part of Theorem \ref{thm:Marsden-Weinstein} is just an application of Theorem \ref{thm:quotient_manifold} to the free action of $G$ on $\mu^{-1}(0)$. 

Lemma \ref{lem:Marsden-Weinstein} gives a canonical symplectic structure on the quotient $T_q\mu^{-1}(0)/T_q\calO_q$ since, by the fact that $G$ acts freely on $\mu^{-1}(0)$, we have that $\calO_q\cong G$ and thus 
\[
T_q\mu^{-1}(0)=\ker \dd_q\mu=(T_q\calO_q)^{\omega_q}.
\]
One can indeed check that $T_q\calO_q$ is an isotropic subspace of $T_qM$.
The point $[q]\in M_\mathrm{red}=\mu^{-1}(0)/G$ has tangent space $T_{[q]}M_\mathrm{red}\cong T_q\mu^{-1}(0)/T_q\calO_q$. Thus Lemma \ref{lem:Marsden-Weinstein} defines a nondegenerate 2-form $\omega_\mathrm{red}$ on $M_\mathrm{red}$. This is well-defined because $\omega$ is $G$-invariant.

By construction, we have $i^*\omega=\pi^*\omega_\mathrm{red}$, where 
\[
\begin{tikzcd}
\mu^{-1}(0) \arrow[r, "i", hook] \arrow[d, "\pi"'] & M \\
M_\mathrm{red}                                            &  
\end{tikzcd}
\]
Hence, $\pi^*\dd\omega_\mathrm{red}=\dd\pi^*\omega_\mathrm{red}=\dd i^*\omega=i^*\dd\omega=0$. The closedness of $\omega_\mathrm{red}$ follows from the injectivity of $\pi^*$.
\end{proof}

\subsection{Noether's theorem}
Let $(M,\omega,G,\mu)$ be a Hamiltonian $G$-space.

\begin{thm}[Noether\cite{Noether1918}]
A function $f\colon M\to \R$ is $G$-invariant if and only if $\mu$ is constant along the trajectories of the Hamiltonian vector field of $f$.
\end{thm}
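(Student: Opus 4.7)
The plan is to translate both sides of the biconditional into infinitesimal conditions involving $\mathfrak{g}$ and reduce the theorem to a single two-line computation using the defining property of the moment map $\dd \mu^X = \iota_{X^\#}\omega$.

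First I would reformulate the two conditions. Saying that $\mu$ is constant along trajectories of $X_f$ means $L_{X_f}\mu = 0$, and since $\mu$ takes values in $\mathfrak{g}^*$ this is equivalent, by pairing with an arbitrary $X \in \mathfrak{g}$, to $L_{X_f}\mu^X = 0$ for all $X \in \mathfrak{g}$. On the other side, I would use that (for $G$ connected, which I will assume or restrict to the identity component) $f$ is $G$-invariant, i.e.\ $\Psi_g^* f = f$ for all $g \in G$, if and only if its infinitesimal version holds: $X^\#(f) = 0$ for every $X \in \mathfrak{g}$, since the $X^\#$ generate the flows $\exp tX$ that generate $G$.

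Next I would carry out the key computation linking the two infinitesimal conditions. Using Cartan's magic formula together with $\dd\mu^X = \iota_{X^\#}\omega$, we have
\[
L_{X_f}\mu^X = \iota_{X_f}\dd\mu^X = \iota_{X_f}\iota_{X^\#}\omega = \omega(X^\#, X_f).
\]
On the other hand, from $\dd f = \iota_{X_f}\omega$ we get
\[
X^\#(f) = \iota_{X^\#}\dd f = \iota_{X^\#}\iota_{X_f}\omega = \omega(X_f, X^\#) = -\omega(X^\#,X_f).
\]
Thus $L_{X_f}\mu^X = -X^\#(f)$, so one vanishes for every $X$ exactly when the other does. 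In Poisson-bracket language this is just the familiar identity $\{\mu^X, f\} = -X^\#(f)$.

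Combining the two steps yields the theorem: $f$ is $G$-invariant $\iff X^\#(f) = 0$ for all $X \in \mathfrak{g}$ $\iff L_{X_f}\mu^X = 0$ for all $X \in \mathfrak{g}$ $\iff \mu$ is constant along the flow of $X_f$. The main obstacle, and the only real subtlety, is the passage from the infinitesimal statement $X^\#(f) = 0$ to the global statement $\Psi_g^* f = f$; this requires $G$ (or the relevant orbit-generating subgroup) to be connected, so strictly speaking one proves the equivalence on the connected component of the identity, and the equivariance of $\mu$ (which is built into the definition of a Hamiltonian $G$-space) is what ensures everything lifts consistently. The rest is just the two lines of Cartan-calculus above.
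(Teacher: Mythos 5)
Your proof is correct and rests on essentially the same computation as the paper's: $L_{X_f}\mu^X=\iota_{X_f}\dd\mu^X=\iota_{X_f}\iota_{X^\#}\omega=-\iota_{X^\#}\iota_{X_f}\omega=-L_{X^\#}f$. If anything you are more careful than the paper, which only writes out the direction ``$f$ $G$-invariant $\Rightarrow$ $L_{X_f}\mu^X=0$'' and does not mention the connectedness assumption needed to pass from the infinitesimal condition $X^\#(f)=0$ for all $X\in\mathfrak{g}$ back to genuine $G$-invariance.
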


\begin{proof}
Let $X_f$ be the Hamiltonian vector field of $f$. Moreover, let $X\in\mathfrak{g}$ and 
\[
\mu^X=\langle\mu, X\rangle\colon M\to \R.
\]
Then we have 
\begin{equation}
    L_{X_f}\mu^X=\iota_{X_f}\dd\mu^X=\iota_{X_f}\iota_{X^\#}\omega=-\iota_{X^\#}\iota_{X_f}\omega=-\iota_{X^\#}\dd f=-L_{X^\#}f=0.
\end{equation}
because $f$ is $G$-invariant.
\end{proof}

\begin{defn}[Integral of motion/Symmetry]
A $G$-invariant function $f\colon M\to \R$ is called an \emph{integral of motion} of the Hamiltonian $G$-space $(M,\omega,G,\mu)$. If $\mu$ is constant along the trajectories of a Hamiltonian vector field $X_f$, then the corresponding 1-parameter group of diffeomorphisms $(\exp tX_f)_{t\in\R}$ is called a \emph{symmetry} of $(M,\omega,G,\mu)$.
\end{defn}

\begin{rem}
The \emph{Noether principle} asserts that there is a one-to-one correspondence between symmetries and integrals of motion. 
\end{rem}

\section{The Duistermaat--Heckman theorems}

\subsection{Duistermaat--Heckman polynomial}

Let $(M,\omega)$ be a $2n$-dimensional symplectic manifold and consider its Liouville volume $\frac{\omega^n}{n!}$.

\begin{defn}[Liouville measure]
The \emph{Liouville measure} (or \emph{symplectic measure}) of a Borel subset $\calU\subset M$ is given by 
\[
m_\omega(\calU)=\int_\calU\frac{\omega^n}{n!}.
\]
\end{defn}

Let $G$ be a torus and suppose that $(M,\omega,G,\mu)$ is a Hamiltonian $G$-space such that the moment map $\mu$ is proper. Moreover denote by $\mathfrak{g}$ the Lie algebra of $G$.

\begin{defn}[Duistermaat--Heckman measure]
The \emph{Duistermaat--Heckman measure}, denoted by $m_\mathrm{DH}$, on $\mathfrak{g}^*$ is the pushforward of $m_\omega$ by $\mu\colon M\to \mathfrak{g}^*$. That is, 
\[
m_\mathrm{DH}(U):=(\mu_*m_\omega)(U)=\int_{\mu^{-1}(U)}\frac{\omega^n}{n!}
\]
for any Borel subset $U\subset\mathfrak{g}^*$.
\end{defn}

\begin{rem}
For a compactly supported function $h\in C^\infty(\mathfrak{g}^*)$, we can define its integral with respect to the Duistermaat--Heckman measure by
\[
\int_{\mathfrak{g}^*}h\dd m_\mathrm{DH}:=\int_M(h\circ \mu)\frac{\omega^n}{n!}.
\]
On $\mathfrak{g}^*$, regarded as a vector space $\R^n$, there is also the Lebesgue measure $m_0$. The relation between $m_\mathrm{DH}$ and $m_0$ is given through the \emph{Radon--Nikodym derivative}, denoted by $\frac{\dd m_\mathrm{DH}}{\dd m_0}$, which is a generalized function such that 
\[
\int_{\mathfrak{g}^*}h\dd m_\mathrm{DH}=\int_{\mathfrak{g}^*}\frac{\dd m_\mathrm{DH}}{\dd m_0}\dd m_0.
\]
\end{rem}

\begin{thm}[Duistermaat--Heckman \cite{DuistermaatHeckman1982}]
The Duistermaat--Heckman measure is a piecewise polynomial multiple of the Lebesgue measure $m_0$ on $\mathfrak{g}^*\cong\R^n$. That is, the Radon--Nikodym derivative 
\[
f:=\frac{\dd m_\mathrm{DH}}{\dd m_0}
\]
is piecewise polynomial. More precisely, for any Borel subset $U\subset\mathfrak{g}^*$, we have 
\[
m_\mathrm{DH}(U)=\int_Uf(x)\dd x,
\]
where $\dd x:=\dd m_0(x)$ denotes the Lebesgue volume form on $U$ and $f\colon \mathfrak{g}^*\cong\R^n\to \R$ is polynomial on any region consisting of regular values of $\mu$.
\end{thm}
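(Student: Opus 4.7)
The plan is to compute the density $f(\xi) := \frac{\dd m_{\mathrm{DH}}}{\dd m_0}(\xi)$ at a regular value $\xi$ of the moment map in terms of the symplectic volume of the reduced space $M_\xi := \mu^{-1}(\xi)/G$, and then to show that this volume depends polynomially on $\xi$ throughout any connected component of the regular values of $\mu$. First I would fix such a regular value $\xi_0 \in \mathfrak{g}^*$. Since $\mu$ is proper, $Z_0 := \mu^{-1}(\xi_0)$ is a compact coisotropic submanifold of $M$ of codimension $k := \dim G$; since $\xi_0$ is regular and $G$ is a torus acting locally freely on $Z_0$, by Theorem \ref{thm:quotient_manifold} the space $M_{\xi_0}$ is a smooth manifold of dimension $2(n-k)$ and $Z_0 \to M_{\xi_0}$ is a principal $G$-bundle. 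The Marsden--Weinstein theorem (Theorem \ref{thm:Marsden-Weinstein}), applied to the shifted moment map $\mu - \xi_0$, yields a reduced symplectic form $\omega_{\xi_0}$ on $M_{\xi_0}$.

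The key technical step is the Duistermaat--Heckman normal form. Choose a connection 1-form $\alpha \in \Omega^1(Z_0;\mathfrak{g})$ on the principal bundle $Z_0 \to M_{\xi_0}$. Using an equivariant refinement of the coisotropic embedding theorem (Theorem \ref{thm:Coisotropic_embedding}), one constructs a $G$-equivariant diffeomorphism $\Phi$ of a neighborhood of $Z_0 \times \{\xi_0\}$ in $Z_0 \times \mathfrak{g}^*$ onto a neighborhood of $Z_0$ in $M$, with $\Phi|_{Z_0 \times \{\xi_0\}} = \id$ and $\mu \circ \Phi(z,\xi) = \xi$. A direct computation using Cartan's magic formula, together with the fact that $G$ is abelian (so that the curvature of $\alpha$ is $G$-invariant and descends to $M_{\xi_0}$), should yield
\[
\Phi^*\omega \;=\; \pi^*\omega_{\xi_0} \;+\; \dd\langle \mu - \xi_0,\alpha\rangle,
\]
where $\pi: Z_0 \to M_{\xi_0}$ is the projection. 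Pulling back and reducing, the symplectic form on $M_\xi$, identified with $M_{\xi_0}$ via $\pi$, is
\[
\omega_\xi \;=\; \omega_{\xi_0} \;+\; \langle \xi - \xi_0,\,c\rangle,
\]
where $c \in \Omega^2(M_{\xi_0};\mathfrak{g})$ is the curvature of $\alpha$. In de Rham cohomology this gives $[\omega_\xi] = [\omega_{\xi_0}] + \langle \xi - \xi_0, [c]\rangle$, an affine function of $\xi$.

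To conclude, I would use a coarea/Fubini argument: for a compactly supported test function $h \in C^\infty(\mathfrak{g}^*)$ supported in the chosen neighborhood of $\xi_0$,
\[
\int_M (h\circ \mu)\,\frac{\omega^n}{n!} \;=\; \int_{\mathfrak{g}^*} h(\xi) \biggl(\int_{M_\xi} \frac{\omega_\xi^{n-k}}{(n-k)!}\biggr)\dd\xi,
\]
so that $f(\xi) = \int_{M_\xi} \omega_\xi^{n-k}/(n-k)!$. Since all the $M_\xi$ are canonically diffeomorphic to $M_{\xi_0}$ via the connection, this integral equals $\langle [\omega_\xi]^{n-k},[M_{\xi_0}]\rangle/(n-k)!$, which by the affine formula above is a polynomial in $\xi$ of degree at most $n-k$. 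As $\xi_0$ varies over a connected component of regular values, the local polynomials must agree on overlaps, yielding a single global polynomial on each such component. The main obstacle will be the Duistermaat--Heckman normal form: constructing the equivariant diffeomorphism $\Phi$ and verifying the identity for $\Phi^*\omega$ requires care with the equivariant coisotropic neighborhood theorem and relies crucially on $G$ being abelian, while the piecewise (rather than global) polynomial behavior of $f$ reflects the fact that the diffeomorphism type of $M_\xi$, and hence the cohomological integral defining $f$, can jump as $\xi$ crosses a critical value of $\mu$.
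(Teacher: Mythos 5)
Your proposal follows essentially the same route the paper takes: an equivariant coisotropic-embedding normal form near a regular level (the paper's Lemma \ref{lem:DH1}), the resulting linear variation $\omega_t=\omega_\mathrm{red}-t\beta$ of the reduced forms (Proposition \ref{prop:DH1} and Theorem \ref{thm:DH2}), and the Fubini computation identifying the Duistermaat--Heckman density with the symplectic volume of the reduced space, which the paper carries out for $G=S^1$ ``for simplicity'' while you sketch the general torus case. The only caveat, shared with the paper's own treatment, is that at a regular value the torus action is a priori only locally free, so your appeal to Theorem \ref{thm:quotient_manifold} (which assumes a free action) needs either the freeness hypothesis made in the ``Local form for reduced spaces'' subsection or an orbifold version of the reduction.
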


\begin{rem}
The Radon--Nikodym derivative $f$ is also called the \emph{Duistermaat--Heckman polynomial}. 
\end{rem}

\begin{ex}
Consider the Hamiltonian $S^1$-space $(S^2,\omega=\dd\theta\land \dd h, S^1,\mu=h)$. The image of $\mu$ is given by the interval $[-1,1]$. The Lebesgue measure of $[a,b]\subseteq [-1,1]$ is given by 
\[
m_0([a,b])=b-a.
\]
The Duistermaat--Heckman measure of $[a,b]$ is given by 
\[
m_\mathrm{DH}([a,b])=\int_{\{(\theta,h)\in S^2\,\mid\, a\leq h\leq b\}}\dd\theta\land \dd h=2\pi(b-a).
\]
\end{ex}

\begin{rem}
As a consequence, one can show that the spherical area between two horizontal circles depends only on the vertical distance between them. This result was actually already known by Archimedes. 
\end{rem}

\begin{cor}
For the standard Hamiltonian action of $S^1$ on $(S^2,\omega)$, we have 
\[
m_\mathrm{DH}=2\pi m_0.
\]
\end{cor}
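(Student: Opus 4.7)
The plan is to deduce the corollary directly from the explicit computation carried out in the preceding example, which is essentially all the nontrivial content. In that example it was shown that for every closed subinterval $[a,b]\subseteq[-1,1]$ one has
\[
m_\mathrm{DH}([a,b]) = \int_{\mu^{-1}([a,b])}\dd\theta\wedge\dd h = 2\pi(b-a) = 2\pi\, m_0([a,b]),
\]
and moreover $m_\mathrm{DH}$ is supported on $\mu(S^2) = [-1,1]$ since $\mu = h$ takes values there. So the entire task is to upgrade the equality of the two measures on closed subintervals of $[-1,1]$ to an equality as Borel measures.

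First, I would restrict attention to the compact interval $[-1,1]$ (outside of which both sides, appropriately interpreted, vanish on the interior regular value region). The family of closed subintervals of $[-1,1]$ is a $\pi$-system generating the Borel $\sigma$-algebra on $[-1,1]$, and both $m_\mathrm{DH}$ and $2\pi m_0$ are finite Borel measures agreeing on this $\pi$-system (both assign total mass $4\pi$). By the uniqueness part of Carathéodory's extension theorem (equivalently, by the $\pi$--$\lambda$ theorem), two finite Borel measures agreeing on a generating $\pi$-system are equal. Hence $m_\mathrm{DH} = 2\pi m_0$ as Borel measures.

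As an alternative and more transparent verification consistent with the statement of the Duistermaat--Heckman theorem, I would note that the Radon--Nikodym derivative $f = \dd m_\mathrm{DH}/\dd m_0$ must be a locally integrable function on $(-1,1)$ (the open set of regular values of $\mu$) satisfying
\[
\int_a^b f(x)\,\dd x = 2\pi(b-a),\qquad \forall\,[a,b]\subseteq(-1,1).
\]
Differentiating in $b$ at almost every point (by the Lebesgue differentiation theorem) yields $f(x) = 2\pi$ for almost every $x\in(-1,1)$, which is exactly the assertion that the Duistermaat--Heckman polynomial on the single chamber of regular values is the constant $2\pi$. In particular there is no genuine obstacle in the proof of the corollary itself; the work was already carried out in the example, and the only remaining step is this standard measure-theoretic bookkeeping.
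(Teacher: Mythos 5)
Your proposal is correct and matches the paper's (implicit) argument: the paper offers no separate proof of the corollary, treating it as an immediate consequence of the preceding example's computation $m_\mathrm{DH}([a,b])=2\pi(b-a)=2\pi\,m_0([a,b])$. Your addition of the standard measure-theoretic step (the $\pi$--$\lambda$/Carath\'eodory uniqueness argument, or equivalently the Lebesgue differentiation of the Radon--Nikodym derivative) is exactly the routine bookkeeping needed to upgrade the example to the stated equality of measures, and it is carried out correctly.
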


\subsection{Local form for reduced spaces}

Let $(M,\omega,G,\mu)$ be a Hamiltonian $G$-space, where $G$ is an $n$-torus. Assume that $\mu$ is proper. If $G$ acts freely on $\mu^{-1}(0)$, then is also acts freely on nearby levels $\mu^{-1}(t)$ for $t\in \mathfrak{g}^*$ and $t$ near to $0$ (which we will denote by $t\approx 0$). Consider the reduced spaces 
\[
M_\mathrm{red}:=\mu^{-1}(0)/G,\qquad M_t:=\mu^{-1}(t)/G
\]
with reduced symplectic forms $\omega_\mathrm{red}$ and $\omega_t$. We want to see what the relation between these reduced spaces when regarded as symplectic manifolds. For simplicity, we want to assume that $G$ is the circle $S^1$. Let $Z:=\mu^{-1}(0)$ and let $i\colon Z\hookrightarrow M$ be the inclusion. Fix a form $\alpha\in \Omega^1(Z)$ for the principal bundle 
\[
\begin{tikzcd}
S^1 \arrow[r, hook] & Z \arrow[d, "\pi"] \\
                  & M_\mathrm{red}                 
\end{tikzcd}
\]
which means that $L_{X^\#}\alpha=0$ and $\iota_{X^\#}\alpha=1$, where $X^\#$ is the infinitesimal generator for the $S^1$-action. Using $\alpha$, we construct a 2-form on the product manifold $Z\times(-\varepsilon,\varepsilon)$ by 
\[
\sigma:=\pi^*\omega_\mathrm{red}-\dd(x\alpha),
\]
where $x$ is a linear coordinate on the interval $(-\varepsilon,\varepsilon)\subset \R\cong\mathfrak{g}^*$. We want to abuse notation to shorten the symbols for forms on $Z\times(-\varepsilon,\varepsilon)$ which are given by pullback via the projection onto each factor.

\begin{lem}
The 2-form $\sigma$ is symplectic for $\varepsilon$ small enough.
\end{lem}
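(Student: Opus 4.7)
The plan is to verify the two defining properties of a symplectic form, closedness and nondegeneracy, with the latter being the only substantive point.

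\textbf{Closedness.} First I would observe that closedness is immediate: since $\omega_{\mathrm{red}}$ is closed on $M_{\mathrm{red}}$, so is its pullback $\pi^*\omega_{\mathrm{red}}$, while $\dd(x\alpha)$ is exact; hence $\dd\sigma = 0$ on $Z\times(-\varepsilon,\varepsilon)$ without any smallness assumption on $\varepsilon$.

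\textbf{Nondegeneracy at $x=0$.} The key step is to check nondegeneracy pointwise on the slice $Z\times\{0\}$; openness and a compactness argument will then extend this to a tubular slab. Expanding, $\dd(x\alpha)=\dd x\wedge\alpha+x\,\dd\alpha$, so at $x=0$ one has $\sigma|_{x=0}=\pi^*\omega_{\mathrm{red}}-\dd x\wedge\alpha$. Fix $z\in Z$ and decompose
\[
T_{(z,0)}\bigl(Z\times(-\varepsilon,\varepsilon)\bigr)=H_z\oplus \R X^\#_z\oplus \R\partial_x,
\]
where $H_z:=\ker\alpha_z$ is the horizontal subspace of the connection $\alpha$, which maps isomorphically to $T_{[z]}M_{\mathrm{red}}$ under $\dd\pi$. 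Writing a general vector $v=v_h+aX^\#+b\partial_x$ and another one $w=w_h+a'X^\#+b'\partial_x$, a direct computation using $\alpha(X^\#)=1$, $\alpha(H_z)=0$, and $\dd\pi(X^\#)=\dd\pi(\partial_x)=0$ yields
\[
\sigma_{(z,0)}(v,w)=(\pi^*\omega_{\mathrm{red}})_{(z,0)}(v_h,w_h)+ab'-a'b.
\]
Suppose $\sigma_{(z,0)}(v,\cdot)=0$. Pairing with $\partial_x$ gives $a=0$; pairing with $X^\#$ gives $b=0$; pairing with arbitrary horizontal $w_h$ forces $\omega_{\mathrm{red}}(\dd\pi\,v_h,\cdot)=0$ on all of $T_{[z]}M_{\mathrm{red}}$, hence $\dd\pi\,v_h=0$ by nondegeneracy of $\omega_{\mathrm{red}}$, hence $v_h=0$ since $\dd\pi|_{H_z}$ is injective. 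Thus $\sigma$ is nondegenerate at every point of $Z\times\{0\}$.

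\textbf{Extending to a slab.} Nondegeneracy of a $2$-form on a finite-dimensional manifold is an open condition (it is detected by the non-vanishing of the top wedge power, a continuous function). Since $\mu$ is proper, $Z=\mu^{-1}(0)$ is compact, and so is $Z\times\{0\}$. Covering it by finitely many coordinate patches on which $\sigma^{n+1}$ is visibly non-vanishing at $x=0$ and invoking compactness, I can choose a uniform $\varepsilon>0$ such that $\sigma^{n+1}$ does not vanish on $Z\times(-\varepsilon,\varepsilon)$. Combined with closedness, this gives the desired symplectic form. The only genuine obstacle is the linear-algebra computation of nondegeneracy at $x=0$; the rest is soft.
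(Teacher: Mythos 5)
Your proof is correct and takes essentially the same route as the paper's: restrict to $x=0$, where $\sigma\big|_{x=0}=\pi^*\omega_\mathrm{red}+\alpha\land \dd x$, verify nondegeneracy along $Z\times\{0\}$, and conclude by openness of the nondegeneracy condition. You merely make explicit two points the paper leaves implicit, namely the horizontal/vertical/$\partial_x$ linear-algebra check at $x=0$ and the compactness of $Z$ (from properness of $\mu$) needed to extract a uniform $\varepsilon$, so there is nothing to add.
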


\begin{proof}
Clearly, $\sigma$ is closed. Note that at $x=0$, we have 
\[
\sigma\big|_{x=0}=\pi^*\omega_\mathrm{red}+\alpha\land \dd x,
\]
which satisfies
\[
\sigma\big|_{x=0}\left(X^\#,\frac{\de}{\de x}\right)=1,
\]
and thus $\sigma$ is nondegenerate along $Z\times\{0\}$. Since nondegeneracy is an open condition, we can conclude that $\sigma$ is nondegenerate for $x$ in a sufficiently small neighborhood of $0$.
\end{proof}

\begin{rem}
Note that $\sigma$ is invariant with respect to the $S^1$-action on the first factor of $Z\times(-\varepsilon,\varepsilon)$. Actually, this $S^1$-action is Hamiltonian with moment map given by the projection onto the second factor
\[
x\colon Z\times(-\varepsilon,\varepsilon)\to (-\varepsilon,\varepsilon),
\]
which can be easily shown by
\[
\iota_{X^\#}\sigma=-\iota_{X^\#}\dd(x\alpha)=-L_{X^\#}(x\alpha)+\dd\iota_{X^\#}(x\alpha)=\dd x.
\]
\end{rem}

\begin{lem}
\label{lem:DH1}
There is an equivariant symplectomorphism between a neighborhood of $Z$ in $M$ and a neighborhood of $Z\times\{0\}$ in $Z\times(-\varepsilon,\varepsilon)$, intertwining the two moment maps, for $\varepsilon$ small enough.
\end{lem}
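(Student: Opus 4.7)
The plan is to establish the isomorphism by an $S^1$-equivariant relative Moser argument, after first building an equivariant tubular neighborhood diffeomorphism that intertwines $\mu$ with the projection $x$ and makes $\phi^*\omega$ agree pointwise with $\sigma$ along $Z \times \{0\}$. The key ingredient for both is a carefully chosen invariant vector field $Y$, defined on an invariant neighborhood of $Z$ in $M$, satisfying $\iota_Y\omega|_{TZ} = -\alpha$. Since $\omega_q : T_qM \to T_q^*M$ is an isomorphism, at each point $q$ one produces a candidate $Y_q$ by extending $-\alpha_q \in T_q^*Z$ to a covector in $T_q^*M$ and applying $\omega^{-1}$; gluing with a partition of unity and averaging over $S^1$ gives a smooth invariant global solution, the defining relation being preserved by averaging since $\omega$, $\alpha$, and $Z$ are all $S^1$-invariant. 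Evaluating on $X^\#$ yields $\dd\mu(Y) = \omega(X^\#, Y) = \alpha(X^\#) = 1$, so the flow $\phi(z, x) := \Phi^Y_x(z)$ is an equivariant diffeomorphism from an invariant neighborhood of $Z \times \{0\}$ onto an invariant neighborhood of $Z$ in $M$ for $\varepsilon$ small (compactness of $Z = \mu^{-1}(0)$ follows from properness of $\mu$), and $\mu \circ \phi = x$.

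Next I would compare $\phi^*\omega$ with $\sigma$ on $Z \times \{0\}$. Writing a tangent vector at $(z, 0)$ as $(v, a) \in T_zZ \oplus \R$ and using $\dd\phi(v, 0) = v$, $\dd\phi(0, 1) = Y$, together with $i^*\omega = \pi^*\omega_{\mathrm{red}}$ and $\iota_Y\omega|_{TZ} = -\alpha$, a short bilinear computation shows both forms evaluate to $\pi^*\omega_{\mathrm{red}}(v_1, v_2) - a_1\alpha(v_2) + a_2\alpha(v_1)$ on $T(Z \times \{0\})$. Hence $\eta := \phi^*\omega - \sigma$ is closed, $S^1$-invariant, and vanishes pointwise on $Z \times \{0\}$; in addition $\iota_{X^\#}\eta = 0$, since both $\phi^*\omega$ and $\sigma$ admit $x$ as moment map. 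Applying the relative homotopy formula on the tubular neighborhood and $S^1$-averaging then produces an invariant primitive $\beta$ with $\dd\beta = \eta$ and $\beta|_{Z \times \{0\}} = 0$. The family $\omega_t := \sigma + t\eta$ is symplectic for all $t \in [0, 1]$ after shrinking $\varepsilon$, and the Moser vector field $X_t$ defined by $\iota_{X_t}\omega_t = -\beta$ is $S^1$-invariant and vanishes on $Z \times \{0\}$, integrating to an equivariant isotopy $\rho_t$ with $\rho_1^*\phi^*\omega = \sigma$. The desired map is then $\phi \circ \rho_1$.

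The step I expect to be the main conceptual obstacle is ensuring that $\rho_1$ also intertwines the moment maps, i.e.\ $x \circ \rho_1 = x$, which amounts to $X_t \cdot x = \omega_t(X^\#, X_t) = \beta(X^\#) = 0$. This is not guaranteed by the Moser equation as such, but follows from the setup: combining $L_{X^\#}\beta = 0$ (invariance) with $\iota_{X^\#}\dd\beta = \iota_{X^\#}\eta = 0$, Cartan's magic formula gives $\dd(\iota_{X^\#}\beta) = 0$, so $\iota_{X^\#}\beta$ is locally constant on the connected tubular neighborhood and, since it vanishes on $Z \times \{0\}$, is identically zero. The real subtlety therefore lies upstream, in arranging the chain of invariances---equivariance of $Y$, the pointwise identity $\iota_Y\omega|_{TZ} = -\alpha$, moment-map compatibility of $\phi$, and $S^1$-equivariance of $\beta$---so that this cancellation is available; had the initial $Y$ failed any of these conditions, either $\phi^*\omega$ and $\sigma$ would disagree on $Z \times \{0\}$ (forcing only a first-order match) or $\beta(X^\#)$ would not vanish (breaking the moment-map intertwining).
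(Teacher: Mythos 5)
Your route is genuinely different from the paper's. The paper simply observes that $i_0\colon Z\hookrightarrow Z\times(-\varepsilon,\varepsilon)$ and $i\colon Z\hookrightarrow M$ are $S^1$-equivariant coisotropic embeddings with $i_0^*\sigma=i^*\omega=\pi^*\omega_\mathrm{red}$ and with moment maps agreeing on $Z$, and then invokes an equivariant version of the coisotropic embedding theorem (Theorem \ref{thm:Coisotropic_embedding}); you instead reprove the needed uniqueness statement directly in this codimension-one situation, via a flow-generated tubular neighborhood followed by an equivariant relative Moser argument. The skeleton is sound: the pointwise computation showing $\phi^*\omega=\sigma$ along $Z\times\{0\}$ is correct, the invariant primitive $\beta$ vanishing pointwise on $Z\times\{0\}$ exists as you say, and your observation that $\dd(\iota_{X^\#}\beta)=L_{X^\#}\beta-\iota_{X^\#}\dd\beta=0$ forces $\iota_{X^\#}\beta\equiv 0$ (so the Moser flow preserves $x$) is exactly the right way to get the moment-map intertwining at the end.

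There is, however, one step that does not follow as written. The condition you impose on $Y$, namely $\iota_Y\omega|_{TZ}=-\alpha$, constrains $Y$ only at points of $Z$ and only against vectors tangent to $Z$. Hence $\dd\mu(Y)=\omega(X^\#,Y)=1$ holds only along $Z$, which gives $\mu(\Phi^Y_x(z))=x+O(x^2)$ and not the exact identity $\mu\circ\phi=x$ that you claim. This identity is not cosmetic: you use it once to conclude $\iota_{X^\#}\phi^*\omega=\dd(\mu\circ\phi)=\dd x$, hence $\iota_{X^\#}\eta=0$ (which your $\beta(X^\#)=0$ argument needs), and once at the very end to deduce that $\phi\circ\rho_1$ intertwines $\mu$ with $x$. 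The repair is easy but must be made: require instead $\iota_Y\omega=-\tilde\alpha$ on a whole invariant neighborhood of $Z$, where $\tilde\alpha$ is an $S^1$-invariant $1$-form with $\tilde\alpha|_{TZ}=\alpha$ along $Z$ and $\iota_{X^\#}\tilde\alpha=1$ near $Z$ (for instance $\tilde\alpha=r^*\alpha$ for an equivariant retraction $r$ onto $Z$ obtained from an invariant metric). Then $\dd\mu(Y)\equiv1$ on that neighborhood, so $\mu\circ\phi=x$ holds exactly wherever the flow is defined, and since the comparison of $\phi^*\omega$ with $\sigma$ on $Z\times\{0\}$ only uses $\tilde\alpha|_{TZ}=\alpha$, the rest of your argument goes through unchanged.
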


\begin{proof}
The inclusion $i\colon Z\hookrightarrow Z\times(-\varepsilon,\varepsilon)$ as $Z\times \{0\}$ and the natural inclusion $i\colon Z\hookrightarrow M$ are $S^1$-equivariant coisotropic embeddings. They actually satisfy $i_0^*\sigma=i^*\omega$ since both sides are equal to $\pi^*\omega_\mathrm{red}$, and the moment maps coincide on $Z$ since $i^*_0x=0=i^*\mu$. If we replace $\varepsilon$ by a smaller positive number whenever necessary, the result follows from an equivariant version of the coisotropic embedding theorem (Theorem \ref{thm:Coisotropic_embedding}).
\end{proof}

Hence, in order to compare the reduced spaces 
\[
M_t=\mu^{-1}(t)/S^1,\quad t\approx 0,
\]
we work with $Z\times(-\varepsilon,\varepsilon)$ and compare instead the reduced spaces
\[
x^{-1}(t)/S^1,\quad t\approx 0.
\]

\begin{prop}
\label{prop:DH1}
The reduced space $(M_t\omega_t)$ is symplectomorphic to 
\[
(M_\mathrm{red},\omega_\mathrm{red}-t\beta),
\]
where $\beta$ is the curvature of the connection 1-form $\alpha$.
\end{prop}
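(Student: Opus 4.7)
The plan is to exploit Lemma \ref{lem:DH1} to replace the neighborhood of $Z$ in $M$ with the model $Z\times(-\varepsilon,\varepsilon)$ equipped with $\sigma=\pi^*\omega_\mathrm{red}-\dd(x\alpha)$ and moment map $x$. Under this equivariant symplectomorphism, the level set $\mu^{-1}(t)$ corresponds to $x^{-1}(t)=Z\times\{t\}$, and the $S^1$-action on both factors is just the action on $Z$, so the reduced space $M_t=\mu^{-1}(t)/S^1$ is naturally identified with $Z/S^1=M_\mathrm{red}$ as a smooth manifold. This already gives the underlying diffeomorphism claimed in the proposition, so the work is all at the level of the symplectic forms.

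Next, I would compute the reduced form at level $t$ by pulling $\sigma$ back along the inclusion $i_t\colon Z\hookrightarrow Z\times(-\varepsilon,\varepsilon)$, $z\mapsto(z,t)$. Since $\dd(x\alpha)=\dd x\wedge\alpha+x\,\dd\alpha$ and $i_t^*\dd x=0$ while $i_t^*x=t$, I obtain
\[
i_t^*\sigma=\pi^*\omega_\mathrm{red}-t\,\dd\alpha.
\]
Because $\alpha$ is a connection 1-form on the principal $S^1$-bundle $\pi\colon Z\to M_\mathrm{red}$, its curvature $\beta\in\Omega^2(M_\mathrm{red})$ is characterized by $\pi^*\beta=\dd\alpha$; this is meaningful because $\dd\alpha$ is horizontal (by $\iota_{X^\#}\dd\alpha=L_{X^\#}\alpha-\dd\iota_{X^\#}\alpha=0$) and invariant, hence descends. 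Therefore
\[
i_t^*\sigma=\pi^*(\omega_\mathrm{red}-t\beta).
\]

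By definition of symplectic reduction, the reduced form $\omega_t$ on $M_t$ is the unique 2-form satisfying $\pi^*\omega_t=j_t^*\sigma$, where $j_t$ is the inclusion of the level set; under the identification $x^{-1}(t)\cong Z$, this inclusion is precisely $i_t$. Comparing the two displayed identities and using surjectivity (hence injectivity of $\pi^*$ on forms on $M_\mathrm{red}$), I conclude that $\omega_t=\omega_\mathrm{red}-t\beta$, which is the desired symplectomorphism. I do not expect a serious obstacle: the only point that requires care is verifying that $\dd\alpha$ descends to a well-defined 2-form $\beta$ on $M_\mathrm{red}$, which is a standard fact for connections on principal $S^1$-bundles and follows from the two defining properties $L_{X^\#}\alpha=0$ and $\iota_{X^\#}\alpha=1$ of $\alpha$.
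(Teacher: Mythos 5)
Your proof is correct and follows essentially the same route as the paper: transfer the problem to the local model $(Z\times(-\varepsilon,\varepsilon),\sigma,S^1,x)$ via Lemma \ref{lem:DH1}, identify $x^{-1}(t)/S^1$ with $M_\mathrm{red}$, pull back $\sigma$ to $Z\times\{t\}$ to get $\pi^*\omega_\mathrm{red}-t\,\dd\alpha$, and use $\dd\alpha=\pi^*\beta$. The extra checks you supply (that $\dd\alpha$ descends and that $\pi^*$ is injective) are welcome but do not change the argument.
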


\begin{proof}
Using Lemma \ref{lem:DH1}, we can see that $(M_t,\omega_t)$ is symplectomorphic to the reduced space at level $t$ for the Hamiltonian $S^1$-space $(Z\times(-\varepsilon,\varepsilon),\sigma,S^1,x)$. Since $x^{-1}(t)=Z\times\{t\}$, where $S^1$ acts on the first factor, all manifolds $x^{-1}(t)/S^1$ are diffeomorphic to $Z/S^1=M_\mathrm{red}$. For the symplectic forms, let $\iota_t\colon Z\times\{0\}\hookrightarrow Z\times(-\varepsilon,\varepsilon)$ be the inclusion map. The restriction of $\sigma$ to $Z\times\{t\}$ is 
\[
\iota_t^*\sigma=\pi^*\omega_\mathrm{red}-t\dd\alpha.
\]
By definition of the curvature, we have $\dd\alpha=\pi^*\beta$. Hence, the reduced symplectic form on $x^{-1}(t)/S^1$ is given by $\omega_\mathrm{red}-t\beta$.
\end{proof}

\begin{rem}
Informally, Proposition \ref{prop:DH1} says that the reduced forms $\omega_t$ vary linearly in $t$, for $t\approx0$. On the other hand, the identification of $M_t$ with $M_\mathrm{red}$ as abstract manifolds is not natural. Nevertheless, any two such identifications are isotopic. Using the homotopy invariance of the de Rham cohomology classes, we can obtain the following theorem.
\end{rem}

\begin{thm}[Duistermaat--Heckman\cite{DuistermaatHeckman1982}]
\label{thm:DH2}
The cohomology class of the reduced symplectic form $[\omega_t]$ varies linearly in $t$. More specifically, we have 
\[
[\omega_t]=[\omega_\mathrm{red}]+tc,
\]
where $c=[-\beta]\in H^2(M_\mathrm{red})$ is the \emph{first Chern class} of the $S^1$-bundle $Z\to M_\mathrm{red}$.
\end{thm}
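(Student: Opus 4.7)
The plan is to upgrade the symplectomorphism statement of Proposition \ref{prop:DH1} to a cohomological statement, using the homotopy invariance of de Rham cohomology to make sense of the comparison of forms living on the different manifolds $M_t$.

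First I would construct, from the local normal-form model $Z \times (-\varepsilon,\varepsilon)$ of Lemma \ref{lem:DH1}, a smooth family of diffeomorphisms $\psi_t \colon M_\mathrm{red} \to M_t$ with $\psi_0 = \id_{M_\mathrm{red}}$. Concretely, in the model all level sets $x^{-1}(t)/S^1$ are canonically identified with $Z/S^1 = M_\mathrm{red}$ via the projection onto the first factor, and transporting this identification back to $M$ via Lemma \ref{lem:DH1} yields the desired $\psi_t$. Proposition \ref{prop:DH1} then gives the identity of differential forms
\[
\psi_t^* \omega_t = \omega_\mathrm{red} - t\beta \quad \text{in } \Omega^2(M_\mathrm{red}),
\]
for all $t$ in a neighborhood of $0$.

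Next I would pass to de Rham cohomology. Taking classes in $H^2(M_\mathrm{red})$ produces
\[
\psi_t^*[\omega_t] = [\omega_\mathrm{red}] - t[\beta],
\]
which displays the linear dependence on $t$. Because the family $(\psi_t)$ is smooth with $\psi_0 = \id$, each $\psi_t$ is isotopic to the identity through the family $(\psi_s)_{0 \le s \le t}$, so the induced isomorphisms $\psi_t^* \colon H^2(M_t) \xrightarrow{\sim} H^2(M_\mathrm{red})$ assemble into the canonical identification that allows us to view $[\omega_t]$ unambiguously as an element of $H^2(M_\mathrm{red})$. Under that identification the displayed formula is exactly $[\omega_t] = [\omega_\mathrm{red}] + t(-[\beta])$.

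Finally, the identification $c = [-\beta]$ with the first Chern class of the principal $S^1$-bundle $Z \to M_\mathrm{red}$ is part of the Chern--Weil description of characteristic classes: a connection 1-form $\alpha$ on a principal $S^1$-bundle has curvature $\dd\alpha = \pi^*\beta$, and the cohomology class $[-\beta] \in H^2(M_\mathrm{red})$ (up to the usual $2\pi$ normalization convention adopted here) represents $c_1$ of the bundle; this class is independent of the choice of connection $\alpha$. The main obstacle, in my view, is the bookkeeping around the identifications of the different $M_t$ with $M_\mathrm{red}$: the content of the theorem is not the local calculation, which Proposition \ref{prop:DH1} already did, but the assertion that the resulting cohomology classes depend only on $t$ and on the bundle, not on auxiliary choices (the connection 1-form $\alpha$, the neighborhood size $\varepsilon$, or the equivariant symplectomorphism of Lemma \ref{lem:DH1}). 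This independence follows from homotopy invariance combined with the fact that any two such choices are connected through a smooth family preserving the required structures.
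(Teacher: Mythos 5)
Your proposal is correct and follows essentially the same route the paper takes: the text deduces the theorem from Proposition \ref{prop:DH1} together with the observation (stated in the remark preceding the theorem) that any two identifications of $M_t$ with $M_\mathrm{red}$ are isotopic, so homotopy invariance of de Rham cohomology gives $[\omega_t]=[\omega_\mathrm{red}]+t[-\beta]$, and the identification $c=[-\beta]$ with the first Chern class is exactly the Chern--Weil normalization discussion in the remark following the theorem. Your write-up just makes these identifications and their independence of choices more explicit, which is a faithful elaboration rather than a different argument.
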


\begin{rem}
A connection on a principal bundle is given by a Lie algebra-valued 1-form. We can identify the Lie algebra $S^1$ with $2\pi \I\R$ by using the exponential map $\exp\colon \mathfrak{g}\cong 2\pi\I\R\to S^1$, $\xi\mapsto \exp(\xi)$. If we consider a principal $S^1$-bundle, this identification leads to the fact that the infinitesimal action maps the generator $2\pi\I$ of $2\pi\I\R$ to the generating vector field $X^\#$. If we have a connection 1-form $A$, it is then an imaginary-valued 1-form on the total space satisfying $L_{X^\#}A=0$ and $\iota_{X^\#}A=2\pi\I$. Its curvature 2-form $B$ is then an imaginary-valued 2-form on the base satisfying $\pi^*B=\dd A$. By the \emph{Chern--Weil isomorphism} \cite{Chern1952,Weil1949,BottTu}, the \emph{first Chern class} of the principal $S^1$-bundle is $c=\left[\frac{\I}{2\pi}B\right]$. However, we want to identify the Lie algebra $S^1$ with $\R$ and implicitly use the exponential map $\exp\colon \mathfrak{g}\cong\R\to S^1$, $t\mapsto \exp(2\pi\I t)$. Thus, if we consider a principal $S^1$-bundle, the infinitesimal action maps the generator $1$ of $\R$ to $X^\#$, where now a connection 1-form $\alpha$ is an ordinary 1-form on the total space satisfying $L_{X^\#}\alpha=0$ and $\iota_{X^\#}\alpha=1$. The curvature $\beta$ is now an ordinary 2-form on the base satisfying $\pi^*\beta=\dd\alpha$. This implies that we have $A=2\pi\I\alpha$, $B=2\pi\I\beta$ and the first Chern class $c=[-\beta]$. 
\end{rem}

\subsection{Variation of the symplectic volume}

Consider a Hamiltonian $S^1$-space 
\[
(M,\omega,S^1,\mu)
\]
of dimension $2n$ and let $(M_x,\omega_x)$ be its reduced space at level $x$. Proposition \ref{prop:DH1} and Theorem \ref{thm:DH2} tell us that for $x$ in a sufficiently narrow neighborhood of 0, the symplectic volume of $M_x$ given by 
\[
\mathrm{vol}(M_x):=\int_{M_x}\frac{\omega^{n-1}_x}{(n-1)!}=\int_{M_\mathrm{red}}\frac{(\omega_\mathrm{red}-x\beta)^{n-1}}{(n-1)!},
\]
is a polynomial in $x$ of degree $n-1$. This volume can be also expressed as 
\[
\mathrm{vol}(M_x)=\int_Z\frac{\pi^*(\omega_\mathrm{red}-x\beta)^{n-1}}{(n-1)!}\land \alpha.
\]
Here, $\alpha$ is a chosen connection 1-form on the $S^1$-bundle $Z\to M_\mathrm{red}$ and $\beta$ is its curvature 2-form. The Duistermaat--Heckman measure for a Borel subset $U\subset (-\varepsilon,\varepsilon)$ is
\[
m_\mathrm{DH}(U)=\int_{\mu^{-1}(U)}\frac{\omega^n}{n!}.
\]
Note that $(\mu^{-1}((-\varepsilon,\varepsilon)),\omega)$ is symplectomorphic to $(Z\times(-\varepsilon,\varepsilon),\sigma)$ and moreover they are isomorphic as Hamiltonian $S^1$-spaces. Thus, we can obtain 
\[
m_\mathrm{DH}(U)=\int_{Z\times U}\frac{\sigma^n}{n!}.
\]
Since $\sigma=\pi^*\omega_\mathrm{red}-\dd(x\alpha)$, we can express its $n$-th power by 
\[
\sigma^n=n(\pi^*\omega_\mathrm{red}-x\dd\alpha)^{n-1}\land\alpha\land\dd x.
\]
Using Fubini's theorem, we get
\[
m_\mathrm{DH}(U)=\int_U\left(\int_Z\frac{\pi^*(\omega_\mathrm{red}-x\beta)^{n-1}}{(n-1)!}\land\alpha\right)\land \dd x.
\]
Hence, the Radon--Nikodym derivative of $m_\mathrm{DH}$ with respect to the Lebesgue measure $\dd x$ is given by 
\[
f(x)=\int_Z\frac{\pi^*(\omega_\mathrm{red}-x\beta)^{n-1}}{(n-1)!}\land\alpha=\mathrm{vol}(M_x).
\]

\begin{rem}
Note that the previous discussion, for $x\approx0$, $f(x)$ is indeed polynomial in $x$. This actually also holds for a neighborhood of any other regular value of $\mu$, since we can change the moment map $\mu$ by an arbitrary additive constant.
\end{rem}

\begin{exe}[Equivariant cohomology]
Let $M$ be a manifold with an $S^1$-action and $X^\#$ the vector field on $M$ generated by $S^1$. The algebra of \emph{$S^1$-equivariant forms on $M$} is the algebra of $S^1$-invariant forms on $M$ tensored with complex polynomials in $x$, i.e. 
\[
\Omega^\bullet_{S^1}(M):=(\Omega^\bullet(M))^{S^1}\otimes_\R\C[x].
\]
Note that the product $\land$ on $\Omega^\bullet_{S^1}(M)$ combines the wedge product on $\Omega^\bullet(M)$ with the product of polynomials on $\C[x]$.
\begin{enumerate}
    \item We grade $\Omega^\bullet_{S^1}(M)$ by adding the usual grading on $\Omega^\bullet(M)$ to a grading on $\C[x]$ where the monomial $x$ has degree $2$. Show that $(\Omega^\bullet_{S^1}(M),\land)$ is a \emph{supercommutative graded algebra}, i.e. 
    \[
    \underline{\alpha}\land\underline{\beta}=(-1)^{\deg\underline{\alpha}\deg\underline{\beta}}\underline{\beta}\land\underline{\alpha},\quad \forall \underline{\alpha},\underline{\beta}\in\Omega^\bullet_{S^1}(M).
    \]
    \item On $\Omega^\bullet_{S^1}(M)$ we define an operator 
    \[
    \dd_{S^1}:=\dd\otimes 1-\iota_{X^\#}\otimes x.
    \]
    This means that for an element $\underline{\alpha}=\alpha\otimes p(x)$ we have 
    \[
    \dd_{S^1}\underline{\alpha}=\dd\alpha\otimes p(x)-\iota_{X^\#}\alpha\otimes xp(x).
    \]
    The operator $\dd_{S^1}$ is called \emph{Cartan differentiation}. Show that $\dd_{S^1}$ is a subderivation of degree 1, i.e. show that it increases the degree by 1 and that it satisfies the \emph{super Leibniz rule}
    \[
    \dd_{S^1}(\underline{\alpha}\land\underline{\beta})=(\dd_{S^1}\underline{\alpha})\land \underline{\beta}+(-1)^{\deg\underline{\alpha}}\underline{\alpha}\land\dd_{S^1}\underline{\beta}.
    \]
    \item Show that $(\dd_{S^2})^2=0$. \emph{Hint: Use Cartan's magic formula}.
\end{enumerate}
We can hence deduce that the sequence 
\[
0\to \Omega_{S^1}^0(M)\xrightarrow{\dd_{S^1}}\Omega^1_{S^1}(M)\xrightarrow{\dd_{S^1}}\Omega^2_{S^1}(M)\xrightarrow{\dd_{S^1}}\dotsm
\]
is a graded complex whose cohomology is usually called \emph{equivariant cohomology}. The equivariant cohomology of a topological space $M$ endowed with a continuous action of a topological group $G$ is the cohomology of the diagonal quotient $(M\times EG)/G$, where $EG$ denotes the \emph{universal bundle} of $G$, i.e. $EG$ is a contractible space where $G$ acts freely. Cartan showed that, for the action of a compact Lie group $G$ on some manifold $M$, the de Rham complex $(\Omega^\bullet_G(M),\dd_G)$ computes the equivariant cohomology, where $\Omega^\bullet_G(M)$ denotes the $G$-equivariant forms on $M$. For equivariant cohomology in the symplectic context see also \cite{AtiyahBott1984}. The $k$-th equivariant cohomology group is given by 
\[
H^k_{S^1}(M):=\frac{\ker \dd_{S^1}\colon \Omega^{k}_{S^1}(M)\to \Omega^{k+1}_{S^1}(M)}{\mathrm{im}\, \dd_{S^1}\colon \Omega^{k-1}_{S^1}(M)\to \Omega^k_{S^1}(M)}
\]
\begin{enumerate}
\setcounter{enumi}{3}
    \item What is the equivariant cohomology of a point?
    \item What is the equivariant cohomology of $S^1$ with its multiplication action on itself?
    \item Show that the equivariant cohomology of a manifold $M$ with free $S^1$-action is isomorphic to the ordinary cohomology of the quotient space $M/S^1$.\\ 
    \emph{Hint: Consider the projection $\pi\colon M\to M/S^1$.\\ 
    Show that $\pi^*\colon H^\bullet(M/S^1)\to H^\bullet_{S^1}(M)$, $[\alpha]\mapsto [\pi^*\alpha\otimes 1]$ is a well-defined isomorphism. Choose a connection on the principal $S^1$-bundle $M\to M/S^1$, i.e. a 1-form $\theta$ on $M$ such that $L_{X^\#}\theta=0$ and $\iota_{X^\#}\theta=1$. Recall also that a form $\beta$ on $M$ is of type $\pi^*\alpha$ for some $\alpha$ if and only if it is basic, i.e. $L_{X^\#}\beta=0$ and $\iota_{X^\#}\beta=0$.
    }
    \item Let $(M,\omega)$ be a symplectic manifold with an $S^1$-action. Moreover, let $\mu\in C^\infty(M)$ be a real function. Consider the equivariant form 
    \[
    \underline{\omega}:=\omega\otimes 1+\mu\otimes x.
    \]
    Show that $\underline{\omega}$ is \emph{equivariantly closed}, i.e. $\dd_{S^1}\underline{\omega}=0$, if and only if $\mu$ is a moment map. The equivariant form $\underline{\omega}$ is called the \emph{equivariant symplectic form}.
    \item Let $M$ be a $2n$-dimensional compact oriented manifold with an $S^1$-action. Suppose that the set $M^{S^1}$ consisting of fixed points for the $S^1$-action is finite. Consider an $S^1$-invariant form $\alpha^{(2n)}$ which is the top degree part of an equivariantly closed form of even degree, i.e. $\alpha^{(2n)}\in \Omega^{2n}(M)^{S^1}$ is such that there is some $\underline{\alpha}\in\Omega^\bullet_{S^1}(M)$ with 
    \[
    \underline{\alpha}=\alpha^{(2n)}+\alpha^{(2n-2)}+\dotsm+\alpha^{(0)},\quad \underline{\alpha}^{(2k)}\in(\Omega^{2k}(M))^{S^1}\otimes_\R\C[x],\quad \dd_{S^1}\underline{\alpha}=0.
    \]
    \begin{enumerate}
        \item Show that the restriction of $\alpha^{(2n)}$ to $M\setminus M^{S^1}$ is exact.\\
        \emph{Hint: The generator $X^\#$ of the $S^1$-action does not vanish on $M\setminus M^{S^1}$. Thus, we can define a connection on $M\setminus M^{S^1}$ by $\theta(Y)=\frac{\langle Y,X^\#\rangle}{\langle X^\#,X^\#\rangle}$, where $\langle\enspace,\enspace\rangle$ is some $S^1$-invariant metric on $M$. Use $\theta\in \Omega^1(M\setminus M^{S^1})$ to get the primitive of $\alpha^{(2n)}$ by starting at $\alpha^{(0)}$.
        }
        \item Compute the integral of $\alpha^{(2n)}$ over $M$.\\
        \emph{Hint: Use Stokes' theorem to localize the answer near the fixed points.}
    \end{enumerate}
    This exercise is a very particular case of the \emph{Atiyah--Bott localization theorem} for equivariant cohomology \cite{AtiyahBott1984}.
    \item What is the integral of the symplectic form $\omega$ on a surface with a Hamiltonian $S^1$-action which is free outside a finite set of fixed points?\\
    \emph{Hint: Use (7) and (8).}
\end{enumerate}
\end{exe}

\chapter{Poisson Geometry}
Poisson geometry combines methods of differential geometry and noncommutative geometry. Motivated by the dynamical structure induced from the setting of classical mechanics, it connects to the notion of symplectic geometry and provides the actual mathematical structure to talk about deformation quantization. In fact, as we will see, the phase space can be regarded as a \emph{Poisson manifold} with \emph{Poisson structure} coming from the canonical symplectic form since every symplectic manifold naturally induces a Poisson manifold. In this chapter we will introduce Poisson manifolds, Lie algebroids, Courant algebroids, Dirac manifolds, the local behaviour of Poisson manifolds and the induced symplectic foliation. Finally, we will introduce Poisson maps in the regular sense. We will not cover the notion of \emph{Morita equivalence} \cite{Morita1958}, which gives another construction for morphisms between Poisson manifolds, but we refer to \cite{Xu1991,Xu2004,GinzburgLu1992,JurcoSchuppWess2002,Schwarz1998} for the interested reader.
This chapter is mainly based on \cite{Weinstein1983,GinzburgWeinstein1992,BursztynCrainic2005,Conn1995,Cattaneo2004,GinzburgGolubev2001,DufourZung2005,Courant1990a,Courant1990b,BursztynWeinstein2004,GuttRawnsleySternheimer2005}.
Let $P$ denote a manifold for the entire chapter.

\section{Poisson manifolds}

\subsection{Poisson structures and the Schouten--Nijenhuis bracket}

\begin{defn}[Poisson structure]
A \emph{Poisson structure} on $P$ is an $\R$-bilinear Lie bracket $\{\enspace,\enspace\}$ on $C^\infty(P)$ satisfying the Leibniz rule
\[
\{f,gh\}=\{f,g\}h+g\{f,h\},\quad \forall f,g,h\in C^\infty(P).
\]
\end{defn}

\begin{defn}[Casimir function]
A function $f\in C^\infty(P)$ is called \emph{Casimir} if the Hamiltonian vector field $X_f=\{f,\enspace\}$ of $f$ vanishes. 
\end{defn}

\begin{rem}
By the Leibniz rule of the Poisson bracket $\{\enspace,\enspace\}$, there exists a bivector field $\pi\in \mathfrak{X}^2(P):=\Gamma(\bigwedge^2TP)$ such that 
\[
\{f,g\}=\pi(\dd f,\dd g).
\]
\end{rem}

\begin{defn}[Schouten--Nijenhuis bracket]
The \emph{Schouten--Nijenhuis bracket} is a unique extension of the Lie bracket of vector fields to a graded bracket of multivector fields. It is defined by 
\begin{multline}
\label{eq:Schouten-bracket}
[X_1\land \dotsm \land X_m,Y_1\land\dotsm \land Y_n]:=\sum_{1\leq i\leq m\atop 1\leq j\leq m}\,(-1)^{i+j}[X_i,Y_j]X_1\land\dotsm \land X_{i-1}\land X_{i+1}\land\dotsm\land X_m\land\\
\land Y_1\land\dotsm \land Y_{j-1}\land Y_{j+1}\land \dotsm \land Y_{n}
\end{multline}
for vector fields $X_1,\ldots,X_m,Y_1,\ldots,Y_n$ and by 
\[
[f,X_1\land\dotsm \land X_m]:=-\iota_{X_1\land\dotsm \land X_m}\dd f
\]
for a function $f$.
\end{defn}

\begin{exe}
Show that the Schouten--Nijenhuis bracket satisfies the graded Jacobi identity:
\[
(-1)^{(\vert X\vert-1)(\vert Z\vert-1)}[X,[Y,Z]]+(-1)^{(\vert Y\vert-1)(\vert X\vert-1)}[Y,[Z,X]]+(-1)^{(\vert Z\vert-1)(\vert Y\vert-1)}[Z,[X,Y]]=0,
\]
where $\vert\enspace\vert$ denotes the degree operation, i.e. $\vert X\vert=k$ if $X\in\mathfrak{X}^k(P)$.
\end{exe}

\begin{exe}
\label{ex:vanishing_SN_bracket}
Show that the Jacobi identity for $\{\enspace,\enspace\}$ is equivalent to the condition 
\[
[\pi,\pi]=0.
\]
See Section \ref{subsubsec:DGLA_Poisson} for the computation. 
\end{exe}

In local coordinates $(x_1,\ldots,x_n)$ we can determine the components of the bivector field $\pi$ as 
\[
\pi^{ij}(x)=\{x_i,x_j\}.
\]

\begin{defn}[Poisson manifold]
A pair $(P,\pi)$, where $\pi$ is a Poisson bivector field on $P$, is called a \emph{Poisson manifold}. 
\end{defn}

\begin{defn}[Symplectic Poisson structure]
If the bivector field $\pi$ is invertible at each point $x$, it is called \emph{nondegenerate} or \emph{symplectic}.
\end{defn}

\begin{rem}
Note that, if $\pi$ is symplectic, then we can define a symplectic form on $P$. Indeed, we can locally define the matrices 
\[
(\omega_{ij})=(-\pi^{ij})^{-1},
\]
which defines globally a 2-form $\omega\in \Omega^2(P)$. Moreover, the condition $[\pi,\pi]=0$ implies that $\dd\omega=0$. 
\end{rem}

\subsection{Examples of Poisson structures}
\begin{ex}{Constant Poisson structure}
Let $P=\R^n$ and suppose that $\pi^{ij}(x)$ is constant. Then we can find coordinates on $P$ 
\[
(q_1,\ldots, q_k,p_1,\ldots,p_k,e_1,\ldots,e_\ell),\quad 2k+\ell=n,
\]
such that 
\[
\pi=\sum_{1\leq i\leq k}\frac{\de}{\de q_i}\land \frac{\de}{\de p_i}.
\]
Expressing it in terms of a bracket, we get 
\[
\{f,g\}=\sum_{1\leq i\leq k}\left(\frac{\de f}{\de q_i}\frac{\de g}{\de p_i}-\frac{\de f}{\de p_i}\frac{\de g}{\de q_i}\right).
\]
Note that this agrees with the usual Poisson bracket on $C^\infty(T^*\R^n)$ in Hamiltonian mechanics (see Section \ref{subsec:CM}). Note that all coordinates $e_j$ for $0\leq j\leq \ell$ are Casimirs with respect to $\{\enspace,\enspace\}$. 
\end{ex}

\begin{ex}[Poisson structures on $\R^2$]
Let $P=\R^2$ and consider a smooth function $f\in C^\infty(P)$. Such a function $f$ induces a Poisson structure on $P$ by 
\[
\{x_1,x_2\}:=f(x_1,x_2).
\]
Moreover, any Poisson structure on $\R^2$ is of this form. 
\end{ex}

\begin{ex}[Lie-Poisson structure]
\label{ex:Lie-Poisson}
Let $P$ be a finite-dimensional vector space $V$ with coordinates $(x_1,\ldots,x_n)$. We can define a \emph{linear} Poisson structure by 
\[
\{x_i,x_j\}:=\sum_{1\leq k\leq n}c_{ij}^kx_k,
\]
where $c_{ij}^k$ are determined constants with $c_{ij}^k=-c_{ji}^k$. Such a Poisson structure is usually called \emph{Lie-Poisson structure}, since the Jacobi identity of $\{\enspace,\enspace\}$ implies that the $c_{ij}^k$ are the structure constants of a Lie algebra $\mathfrak{g}$, which might be identified naturally with $V^*$. Thus, we might also identify $V\cong \mathfrak{g}^*$. Conversely, any Lie algebra $\mathfrak{g}$ with structure constants $c_{ij}^k$ defines through $\{\enspace,\enspace\}$ a linear Poisson structure on $\mathfrak{g}^*$. 
\end{ex}

\begin{rem}
Deformation quantization of a Lie-Poisson structure on $\mathfrak{g}^*$ leads to the definition of the universal enveloping algebra $U(\mathfrak{g})$. The elements in the center of $U(\mathfrak{g})$ are usually called \emph{Casimir elements}. These exactly correspond to the center of the Poisson algebra of functions on $\mathfrak{g}^*$, and hence, by extension, the Casimir functions for the center of any Poisson algebra. 
\end{rem}

\section{Dirac manifolds}
\subsection{Courant algebroids}
\label{subsec:Courant_algebroids}
We want to formulate a generalization of Poisson structures and closed 2-forms. 

\begin{defn}[Presymplectic form]
A closed 2-form is called \emph{presymplectic}.
\end{defn}

Note that each 2-form $\omega$ on $P$ corresponds to a bundle map 
\begin{align}
\label{eq:bundle_map_1}
\begin{split}
    \tilde\omega\colon TP&\to T^*P,\\
    v&\mapsto\tilde\omega(v):=\omega(v,\enspace).
\end{split}
\end{align}

We can define a similar bundle map for a bivector field $\pi\in\mathfrak{X}^2(P)$ by 
\begin{align}
\label{eq:bundle_map_2}
\begin{split}
    \tilde\pi\colon T^*P&\to TP,\\
    \alpha&\mapsto \tilde \pi(\alpha):=\pi(\enspace,\alpha).
\end{split}
\end{align}

such that $\iota_{\tilde\pi(\alpha)}\beta=\beta(\tilde\pi(\alpha))=\pi(\beta,\alpha)$. The matrix representing $\tilde \pi$ in the basis $(\dd x_i)$ and $(\frac{\de}{\de x_i})$ for local coordinates $(x_1,\ldots,x_n)$ of $P$, is (up to a sign) given by 
\[
\pi^{ij}(x)=\{x_1,x_j\}.
\]
Thus, bivector fields (or 2-forms) are nondegenerate if and only if the associated bundle maps are invertible. Using these bundle maps, we can describe closed 2-forms and Poisson bivector fields as a subbundles of $TP\oplus T^*P$. Indeed, consider the graphs
\begin{align*}
    L_\omega&:=\Gamma_{\tilde\omega}=\mathrm{graph}\,\tilde\omega,\\
    L_\pi&:=\Gamma_{\tilde\pi}=\mathrm{graph}\,\tilde\pi.
\end{align*}

Let us introduce the following canonical structure on $TP\oplus T^*P$:
\begin{enumerate}
    \item The symmetric bilinear form 
    \begin{align*}
        \langle\enspace,\enspace\rangle_+\colon TP\oplus T^*P\times TP\oplus T^*P&\to \R,\\
        ((X,\alpha),(Y,\beta))&\mapsto \langle(X,\alpha),(Y,\beta)\rangle_+:=\alpha(Y)+\beta(X).
    \end{align*}
    \item The bracket
    \begin{align*}
        [\![\enspace,\enspace]\!]\colon \Gamma(TP\oplus T^*P)\times\Gamma(TP\oplus T^*P)&\to \Gamma(TP\oplus T^*P),\\
        ((X,\alpha),(Y,\beta))&\mapsto[\![(X,\alpha),(Y,\beta)]\!]:=([X,Y],L_X\beta-\iota_Y\dd\alpha).
    \end{align*}
\end{enumerate}

\begin{defn}[Lie algebroid]
A \emph{Lie algebroid} is a vector bundle $E$ endowed with a Lie bracket $[\enspace,\enspace]$ on the space of sections $\Gamma(E)$ together with an \emph{anchor map} $\rho\colon E\to TP$ such that for all $X,Y\in \Gamma(E)$ and $f\in C^\infty(P)$
\begin{enumerate}
    \item $[X,fY]=\rho(X)(f)\cdot Y+f[X,Y]$,
    \item $\rho([X,Y])=[\rho(X),\rho(Y)]$.
\end{enumerate}
Here $\rho(X)(f)=L_{\rho(X)}f$ denotes the derivative of $f$ along the vector field $\rho(X)$.
\end{defn}

\begin{defn}[Courant algebroid]
A \emph{Courant algebroid} is a vector bundle $E$ equipped with a nondegenerate symmetric bilinear form $\langle\enspace,\enspace\rangle\colon E\times E\to \R$, a bilinear bracket $[\enspace,\enspace]\colon \Gamma(E)\times \Gamma(E)\to \Gamma(E)$ and a bundle map $\rho\colon E\to TP$ satisfying the following properties:
\begin{enumerate}
    \item $[X,[Y,Z]]=[[X,Y],Z]+[Y,[X,Z]]$ (left Jacobi identity),
    \item $\rho([X,Y])=[\rho(X),\rho(Y)]$,
    \item $[X,fY]=f[X,Y]+\rho(X)(f)\cdot Y$, (Leibniz rule)
    \item $[X,X]=\frac{1}{2}\mathsf{D}\langle X,X\rangle$, where 
    \[
    \mathsf{D}:=\rho^*\dd\colon C^\infty(P)\xrightarrow{\dd}\Omega^1(P)\xrightarrow{\rho^*}E^*\cong E.
    \]
    \item $\rho(X)\langle Y,Z\rangle=\langle[X,Y],Z\rangle+\langle Y,[X,Z]\rangle$, (self-adjoint)
\end{enumerate}
\end{defn}

\begin{defn}[Split Courant algebroid]
The bundle $E:=TP\oplus T^*P$ together with the bracket $\langle\enspace,\enspace\rangle_+$ and $[\![\enspace,\enspace]\!]$ is an example of a \emph{Courant algebroid} which we call the \emph{split Courant algebroid}. 
\end{defn}

\subsection{Dirac structures}

\begin{prop}
\label{prop:Dirac}
A subbundle $L\subset TP\oplus T^*P$ is of the form $L_\pi$ (resp. $L_\omega$) for a bivector field $\pi$ (resp. 2-form $\omega$) if and only if 
\begin{enumerate}
    \item $TP\cap L=\{0\}$ (resp. $L\cap T^*P=\{0\}$) at all points of $P$,
    \item{$L$ is maximal isotropic with respect to $\langle\enspace,\enspace\rangle_+$,}
\end{enumerate}
Moreover, $[\pi,\pi]=0$ (resp. $\dd\omega=0$) if and only if
\begin{enumerate}
    \item[(3)]{$\Gamma(L)$ is closed under the Courant bracket $[\![\enspace,\enspace]\!]$, i.e.
    \[
    [\![\Gamma(L),\Gamma(L)]\!]\subset\Gamma(L).
    \]
    }
\end{enumerate}
\end{prop}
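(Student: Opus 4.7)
The plan is to treat the $\omega$ and $\pi$ cases in parallel, since they are related by the obvious $\mathbb{Z}/2$-symmetry that swaps the two summands of $TP\oplus T^*P$; I will spell out the bivector case carefully and indicate how the 2-form case follows by the same pattern.

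\smallskip
First I would verify the forward direction of (1)--(2). For $L_\pi$, transversality $L_\pi\cap TP=\{0\}$ holds because $(X,0)\in L_\pi$ forces $X=\tilde\pi(0)=0$, and isotropy is immediate from skew-symmetry of $\pi$:
\[
\langle(\tilde\pi(\alpha),\alpha),(\tilde\pi(\beta),\beta)\rangle_+ = \alpha(\tilde\pi(\beta))+\beta(\tilde\pi(\alpha)) = \pi(\alpha,\beta)+\pi(\beta,\alpha)=0.
\]
Since $\dim L_\pi=\dim P=\tfrac12\mathrm{rk}(TP\oplus T^*P)$, $L_\pi$ is maximal isotropic. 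For the converse, note that a maximal isotropic subbundle of the split bundle has rank $\dim P$, so the transversality hypothesis makes the projection $L\to T^*P$ (resp.\ $L\to TP$) a fiberwise isomorphism. This exhibits $L$ as the graph of a bundle map $\psi\colon T^*P\to TP$ (resp.\ $\phi\colon TP\to T^*P$), and the isotropy condition $\alpha(\psi(\beta))+\beta(\psi(\alpha))=0$ says exactly that $\psi$ is skew-adjoint, so it is of the form $\tilde\pi$ for a unique bivector $\pi$ (resp.\ $\tilde\omega$ for a unique 2-form).

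\smallskip
For (3) in the 2-form case, I would take sections $(X,\iota_X\omega),(Y,\iota_Y\omega)\in\Gamma(L_\omega)$, compute
\[
[\![(X,\iota_X\omega),(Y,\iota_Y\omega)]\!]=([X,Y],\,L_X\iota_Y\omega - \iota_Y\dd\iota_X\omega),
\]
and apply Cartan's magic formula (Theorem~\ref{thm:Cartan_magic_formula}) together with $L_X\iota_Y=\iota_{[X,Y]}+\iota_YL_X$ to rewrite the second slot as $\iota_{[X,Y]}\omega+\iota_Y\iota_X\dd\omega$. Closure in $\Gamma(L_\omega)$ requires this to equal $\iota_{[X,Y]}\omega$, i.e.\ $\iota_Y\iota_X\dd\omega=0$ for all $X,Y$, which is equivalent to $\dd\omega=0$.

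\smallskip
The bivector case is the main obstacle. Take sections $(\tilde\pi(\alpha),\alpha),(\tilde\pi(\beta),\beta)\in\Gamma(L_\pi)$; the Courant bracket has $TP$-component $[\tilde\pi(\alpha),\tilde\pi(\beta)]$ and $T^*P$-component $\mu(\alpha,\beta):=L_{\tilde\pi(\alpha)}\beta-\iota_{\tilde\pi(\beta)}\dd\alpha$. Closure of $\Gamma(L_\pi)$ is the identity
\[
[\tilde\pi(\alpha),\tilde\pi(\beta)]=\tilde\pi(\mu(\alpha,\beta))\qquad\text{for all } \alpha,\beta\in\Omega^1(P).
\]
Using $L_{\tilde\pi(\beta)}\alpha=\dd(\iota_{\tilde\pi(\beta)}\alpha)+\iota_{\tilde\pi(\beta)}\dd\alpha=\dd\pi(\alpha,\beta)+\iota_{\tilde\pi(\beta)}\dd\alpha$, one recognises $\mu$ as (a form of) the Koszul--Magri bracket $[\alpha,\beta]_\pi=L_{\tilde\pi(\alpha)}\beta-L_{\tilde\pi(\beta)}\alpha+\dd\pi(\alpha,\beta)$ on 1-forms. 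The crux is the identity
\[
\bigl([\tilde\pi(\alpha),\tilde\pi(\beta)]-\tilde\pi([\alpha,\beta]_\pi)\bigr)(f)=\tfrac12[\pi,\pi](\alpha,\beta,\dd f),\qquad\forall f\in C^\infty(P),
\]
which I would prove by testing on exact 1-forms $\alpha=\dd g$, $\beta=\dd h$, where both sides reduce to combinations of iterated Poisson brackets and Exercise~\ref{ex:vanishing_SN_bracket} identifies the obstruction with $[\pi,\pi]$, and then extending to general 1-forms by $C^\infty(P)$-linearity of $[\pi,\pi]$ in each slot (using the derivation properties of the Schouten--Nijenhuis bracket). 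This will show the closure condition is equivalent to $[\pi,\pi]=0$, concluding the proof.
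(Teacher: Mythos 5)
Your argument is correct, and since the paper states Proposition \ref{prop:Dirac} without giving a proof, the only benchmark is the standard (Courant) argument, which is precisely the route you take: the fiberwise graph characterization of maximal isotropic subbundles transverse to one of the two summands for (1)--(2), Cartan calculus for the presymplectic half of (3), and the Koszul-type bracket identity for the Poisson half. The rank count for maximal isotropics of the split pairing, the converse graph construction, and the computation $L_X\iota_Y\omega-\iota_Y\dd\iota_X\omega=\iota_{[X,Y]}\omega+\iota_Y\iota_X\dd\omega$ are all fine as written.

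The one step you should make explicit is the justification for ``extending to general 1-forms by $C^\infty(P)$-linearity'': linearity of $[\pi,\pi]$ only takes care of the right-hand side of your crux identity. What you actually need is that the defect $D(\alpha,\beta):=[\tilde\pi(\alpha),\tilde\pi(\beta)]-\tilde\pi(\mu(\alpha,\beta))$, with $\mu(\alpha,\beta)=L_{\tilde\pi(\alpha)}\beta-\iota_{\tilde\pi(\beta)}\dd\alpha$, is itself $C^\infty(P)$-bilinear, so that an identity verified on exact forms propagates to all 1-forms. This is a short Leibniz check: under $\beta\mapsto h\beta$ both $[\tilde\pi(\alpha),\tilde\pi(h\beta)]$ and $\tilde\pi(\mu(\alpha,h\beta))$ pick up the same extra term $(\tilde\pi(\alpha)h)\,\tilde\pi(\beta)$; under $\alpha\mapsto g\alpha$ the extra terms $-(\tilde\pi(\beta)g)\,\tilde\pi(\alpha)$ match as well, the potentially non-tensorial contributions $\beta(\tilde\pi(\alpha))\,\dd g$ (from $L_{g\tilde\pi(\alpha)}\beta$) and $\pi(\alpha,\beta)\,\dd g$ (from $-\iota_{\tilde\pi(\beta)}(\dd g\land\alpha)$) cancelling by skew-symmetry of $\pi$. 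With that in place, evaluating on $\alpha=\dd g$, $\beta=\dd h$ gives $\mu(\dd g,\dd h)=\dd\,\pi(\dd h,\dd g)$, so the defect applied to $f$ is exactly the Jacobiator of the bracket $\{\enspace,\enspace\}$, which Exercise \ref{ex:vanishing_SN_bracket} and the computation in Section \ref{subsubsec:DGLA_Poisson} identify with $[\pi,\pi](\dd g,\dd h,\dd f)$ up to a nonzero constant; the factor $\tfrac12$ in your identity depends on the normalization of the Schouten--Nijenhuis bracket, but this is immaterial since only the vanishing of the defect is at stake, and trivector fields vanish iff they vanish on exact differentials.
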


\begin{defn}[Dirac structure]
A \emph{Dirac structure} on $P$ is a subbundle $L\subset TP\oplus T^*P$ which is maximal isotropic with respect to $\langle\enspace,\enspace\rangle_+$ and whose sections are closed under the Courant bracket $[\![\enspace,\enspace]\!]$.
\end{defn}

\begin{defn}[Dirac manifold]
A tuple $(P,L)$, where $L$ is a Dirac structure on $P$ is called a \emph{Dirac manifold}.
\end{defn}

\begin{rem}
Dirac structures are exactly those which satisfy conditions (2) and (3) of Proposition \ref{prop:Dirac}, but do not necessarily appear as the graph of some bivector field or 2-form.
\end{rem}

\begin{defn}[Almost Dirac structure]
If $L$ only satisfies condition (2) of Proposition \ref{prop:Dirac}, it is called an \emph{almost Dirac structure}.
\end{defn}

\begin{rem}
Usually, condition (3) of Proposition \ref{prop:Dirac} is referred to as the \emph{integrability condition} of a Dirac structure.
\end{rem}

\begin{ex}[Regular foliations]
Let $F\subseteq TP$ be a subbundle, and let $\mathrm{Ann}(F)\subset T^*P$ be its annihilator. Then $L=F\oplus \mathrm{Ann}(F)$ is an almost Dirac structure. It is a Dirac structure if and only if $F$ satisfies the \emph{Frobenius condition} (see Theorem \ref{thm:Frobenius})
\[
[\Gamma(F),\Gamma(F)]\subset \Gamma(F).
\]
Hence, \emph{regular foliations} are examples of Dirac structures.
\end{ex}

\begin{ex}[Linear Dirac structure]
\label{ex:linear_Dirac_structures}
If $V$ is a finite-dimensional real vector space, then a \emph{linear Dirac structure} on $V$ is a subspace $L\subset V\oplus V^*$ which is maximal isotropic with respect to the symmetric pairing $\langle\enspace,\enspace\rangle_+$.

Let $L$ be a linear Dirac structure on $V$. Let $\mathrm{pr}_1\colon V\oplus V^*\to V$ and $\mathrm{pr}_2\colon V\oplus V^*\to V^*$ be the canonical projections, and consider the subspace
\[
W:=\mathrm{pr}_1(L)\subset V.
\]
Then $L$ induces a skew-symmetric bilinear form $\theta$ on $W$ defined by 
\begin{equation}
\label{eq:bilinear_form}
\theta(X,Y):=\alpha(Y),
\end{equation}
where $X,Y\in W$ and $\alpha\in V^*$ such that $(X,\alpha)\in L$.

\begin{exe}
Show that $\theta$ is well-defined, i.e. \eqref{eq:bilinear_form} is independent of the choice of $\alpha$.
\end{exe}

Conversely, any pair $(W,\theta)$, where $W\subseteq V$ is a subspace and $\theta$ is a skew-symmetric bilinear form $W$, defines a linear Dirac structure by 
\[
L:=\{(X,\alpha)\mid X\in W,\, \alpha\in V^*\,\, \text{such that}\,\, \alpha\vert_W=\iota_X\theta\}.
\]
\begin{exe}
Check that this $L$ is a linear Dirac structure on $V$ with associated subspace $W$ and bilinear form $\theta$.
\end{exe}
\end{ex}

Note that Example \ref{ex:linear_Dirac_structures} induces a simple way in which linear Dirac structures can be restricted to subspaces.

\subsection{Dirac structures for constrained manifolds}
\begin{ex}[Restriction of Dirac structures to subspaces]
\label{ex:restriction_Dirac_structures}
Let $L$ be a linear Dirac structure on $V$, let $U\subseteq V$ be a subspace and consider the pair $(W,\theta)$ associated to $L$. Then $U$ inherits the linear Dirac structure $L_U$ from $L$ defined by 
\[
W_U:=W\cap U,\qquad \theta_U:=i^*\theta,
\]
where $i\colon U\hookrightarrow V$ denotes the inclusion.
\end{ex}

\begin{exe}
Show that there is a a canonical isomorphism 
\[
L_U\cong \frac{L\cap(U\oplus V^*)}{L\cap \mathrm{Ann}(U)}.
\]
\end{exe}

\begin{rem}
Let $(P,L)$ be a Dirac manifold, and let $i\colon N\hookrightarrow P$ be a submanifold. The constructions of Example \ref{ex:restriction_Dirac_structures}, when applied to $T_xN\subseteq T_xP$ for all $x\in P$, defines a maximal isotropic \emph{subbundle} $L_N\subset TN\oplus T^*N$. However, $L_N$ might not be a continuous family of subspaces. When $L_N$ \emph{is} a continuous family, it is a smooth bundle which then directly satisfies the integrability condition (condition (3) of Proposition \ref{prop:Dirac}). Hence, $L_N$ defines a Dirac structure.
\end{rem}

\begin{ex}[Moment level sets]
Let $\mu\colon P\to \mathfrak{g}^*$ be the moment map for a Hamiltonian action of a Lie group $G$ on a Poisson manifold $(P,\pi)$. Let $\xi\in\mathfrak{g}^*$ be a regular value for $\mu$ and let $G_\xi$ be the isotropy group at $\xi$ with respect to the coadjoint action, and consider
\[
Q=\mu^{-1}(\xi)\hookrightarrow P.
\]
At each point $x\in Q$, we have a linear Dirac structure on $T_xQ$ given by 
\[
(L_Q)_x:=\frac{L_x\cap(T_xQ\oplus T_x^*P)}{L_x\cap\mathrm{Ann}(T_xQ)}.
\]
One can verify that $L_Q$ defines a smooth bundle by verifying that $L_x\cap \mathrm{Ann}(T_xQ)$ has constant dimension. In fact, one can show that $L_x\cap\mathrm{Ann}(T_xQ)$ has constant dimension if and only if the stabilizer groups of the $G_\xi$-action on $Q$ have constant dimension, which happens if the $G_\xi$-orbits on $Q$ have constant dimension. In this case, $L_Q$ is a Dirac structure on $Q$.
\end{ex}

\section{Symplectic leaves and local structure of Poisson manifolds}
\subsection{Local and regular Poisson structures}
Let $\pi$ be a symplectic Poisson structure on $P$. Then Darboux's theorem asserts that, around each point of $P$, one can find coordinates $(q_1,\ldots,q_k,p_1,\ldots,p_k)$ such that 
\[
\pi=\sum_{1\leq i\leq k}\frac{\de}{\de q_i}\land \frac{\de}{\de p_i}.
\]
The symplectic form is then given by 
\[
\omega=\sum_{1\leq i\leq k}\dd q_i\land \dd p_i.
\]
In general, the image of $\tilde\pi\colon T^*P\to TP$ defined as in \eqref{eq:bundle_map_2} induces an integrable singular distribution on $P$. In fact, $P$ is a disjoint union of \emph{leaves} $\calO$ satisfying 
\[
T_x\calO=\tilde\pi(T^*_xP),\quad \forall x\in P.
\]

\begin{defn}[Regular Poisson structure]
A Poisson structure $\pi$ is called \emph{regular} if $\tilde\pi$ has locally constant rank.
\end{defn}

\begin{rem}
If the considered Poisson structure is regular, then it defines a \emph{foliation} in the ordinary sense.  Note that one can always find an open dense subset of $P$ where this is in fact the case. We call this the \emph{regular part} of $P$.
\end{rem}

Locally, by using the Darboux theorem, if $\pi$ has constant rank $k$ around a given point, there exist coordinates $(q_1,\ldots,q_k,p_1,\ldots,p_k,e_1\ldots,e_\ell)$ such that 
\[
\{q_i,p_j\}=\delta_{ij},\quad \{q_i,q_j\}=\{p_i,p_j\}=\{q_i,e_j\}=\{p_i,e_j\}=0.
\]
\subsection{Local splitting and symplectic foliation}
\begin{thm}[Local splitting theorem]
\label{thm:splitting}
Around any point $x_0$ of a Poisson manifold $(P,\pi)$ there exist local coordinates 
\[
(q_1,\ldots,q_k,p_1,\ldots,p_k,e_1,\ldots,e_\ell),\quad (q,p,e)(x_0)=(0,0,0)
\]
such that 
\begin{equation}
\label{eq:splitting}
\pi=\sum_{1\leq i\leq k}\frac{\de}{\de q_i}\land \frac{\de}{\de p_i}+\frac{1}{2}\sum_{1\leq i,j\leq \ell}\eta^{ij}(e)\frac{\de}{\de e_i}\land \frac{\de}{\de e_j}
\end{equation}
with $\eta^{ij}(0)=0$.
\end{thm}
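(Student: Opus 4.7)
The plan is to proceed by induction on the rank $2k$ of the bivector $\pi$ at $x_0$, producing the symplectic pairs $(q_i,p_i)$ one pair at a time. When $\pi_{x_0}=0$ the statement is trivial: any coordinates work, since $k=0$ and the coefficients $\eta^{ij}(e):=\{e_i,e_j\}$ vanish at $x_0$ by hypothesis.

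For the inductive step, assume $\pi_{x_0}\neq 0$. Then there is some smooth function whose Hamiltonian vector field is nonzero at $x_0$; call it $p_1$, so $X_{p_1}(x_0)\neq 0$. By Proposition~\ref{prop:Frobenius} (the flow-box theorem) I can find a local chart near $x_0$ in which $X_{p_1}=\partial/\partial y^1$, and then taking $q_1:=y^1$ gives $\{q_1,p_1\}=X_{p_1}(q_1)=1$ on a neighborhood. Since $\{q_1,p_1\}$ is the constant $1$, the Jacobi identity (Exercise~\ref{ex:vanishing_SN_bracket}) yields
\[
[X_{q_1},X_{p_1}]=-X_{\{q_1,p_1\}}=0,
\]
so the vector fields $X_{q_1},X_{p_1}$ commute. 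Moreover they are linearly independent at $x_0$: if $aX_{q_1}+bX_{p_1}=0$ at $x_0$, applying to $p_1$ and $q_1$ gives $a=b=0$.

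Next, I would apply the simultaneous flow-box theorem that follows from Frobenius (Theorem~\ref{thm:Frobenius}) to the involutive rank-$2$ distribution spanned by the commuting, independent vector fields $X_{q_1},X_{p_1}$. This produces coordinates $(q_1,p_1,z_3,\ldots,z_n)$ near $x_0$ (with $x_0$ at the origin) in which $X_{q_1}=-\partial/\partial p_1$ and $X_{p_1}=\partial/\partial q_1$; equivalently, the functions $z_j$ can be chosen so that $\{q_1,z_j\}=\{p_1,z_j\}=0$ for $j\geq 3$. A short Jacobi identity computation then shows that each bracket $\{z_i,z_j\}$ Poisson-commutes with both $q_1$ and $p_1$, hence is independent of the coordinates $q_1,p_1$. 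Consequently, in these coordinates,
\[
\pi=\frac{\partial}{\partial q_1}\wedge\frac{\partial}{\partial p_1}+\frac{1}{2}\sum_{i,j\geq 3}\pi'^{ij}(z_3,\ldots,z_n)\frac{\partial}{\partial z_i}\wedge\frac{\partial}{\partial z_j},
\]
and the transverse bivector $\pi'$ is a genuine Poisson structure on the slice $\{q_1=p_1=0\}$, whose rank at $x_0$ is exactly $2k-2$.

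Applying the inductive hypothesis to $\pi'$ on the $(n-2)$-dimensional slice produces coordinates $(q_2,\ldots,q_k,p_2,\ldots,p_k,e_1,\ldots,e_\ell)$ on it with the desired form, and concatenating with $(q_1,p_1)$ yields \eqref{eq:splitting}; the vanishing $\eta^{ij}(0)=0$ corresponds to the fact that after $k$ steps the residual transverse Poisson structure has rank $0$ at the origin. The main obstacle in executing this plan is the simultaneous straightening step: one must verify carefully that Frobenius delivers coordinates in which the two commuting Hamiltonian vector fields become coordinate vector fields \emph{and} that the remaining $n-2$ coordinates can be chosen Poisson-commuting with $q_1$ and $p_1$, so that the bracket $\{z_i,z_j\}$ for $i,j\geq 3$ is free of $q_1,p_1$. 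Once this is in place, the rest is a clean induction.
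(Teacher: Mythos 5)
The paper states Theorem \ref{thm:splitting} without proof, so there is no in-text argument to compare against; judged on its own, your plan is the classical Weinstein induction on the rank of $\pi$ at $x_0$, and it is the right route. The only substantive issue is the one you flag yourself, and it is genuine but routine to close: Theorem~\ref{thm:Frobenius} as stated only straightens the \emph{distribution} spanned by $X_{q_1},X_{p_1}$, not the two vector fields individually. What you need is the simultaneous flow-box theorem for commuting vector fields: since $[X_{q_1},X_{p_1}]=0$ and the fields are independent at $x_0$, composing their flows starting from a transversal yields coordinates $(u_1,u_2,z_3,\ldots,z_n)$ with $X_{p_1}=\partial/\partial u_1$ and $X_{q_1}=\partial/\partial u_2$ (this is essentially what the proof of Theorem~\ref{thm:Frobenius} in Chapter 1 constructs, but it is a stronger statement than the theorem itself). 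You must then still trade $(u_1,u_2)$ for the original functions $(q_1,p_1)$: one computes $\partial q_1/\partial u_1=X_{p_1}(q_1)=1$, $\partial q_1/\partial u_2=X_{q_1}(q_1)=0$, $\partial p_1/\partial u_1=0$, $\partial p_1/\partial u_2=X_{q_1}(p_1)=-1$, so the Jacobian of $(q_1,p_1,z_3,\ldots,z_n)$ with respect to $(u_1,u_2,z_3,\ldots,z_n)$ is invertible; hence $(q_1,p_1,z_3,\ldots,z_n)$ is a chart, and because $X_{q_1},X_{p_1}$ annihilate the $z_j$ one gets $X_{p_1}=\partial/\partial q_1$, $X_{q_1}=-\partial/\partial p_1$, and in particular $\{q_1,z_j\}=\{p_1,z_j\}=0$. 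With this inserted, your Jacobi-identity argument that $\{z_i,z_j\}$ is independent of $(q_1,p_1)$, the observation that the transverse bivector is again Poisson on the slice with rank dropped by two, and the induction (terminating in the base case, which gives $\eta^{ij}(0)=0$) all go through.

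One cosmetic remark: with the paper's convention $X_f=\{f,\cdot\}$ one has $X_{p_1}(q_1)=\{p_1,q_1\}$ and $[X_f,X_g]=X_{\{f,g\}}$, so a couple of your signs are written in the opposite convention; since the bracket $\{q_1,p_1\}$ is constant this does not affect the argument, it only flips the sign in front of $\partial/\partial p_1$ and can be absorbed by replacing $p_1$ with $-p_1$.
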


\begin{rem}
We have a \emph{symplectic factor} in the splitting \eqref{eq:splitting} associated to the coordinates $(q_i,p_i)$ and a factor where all Poisson brackets vanish at $e=0$ associated to the coordinates $(e_j)$. The latter factor is often called the \emph{totally degenerate factor}.
We can identify the symplectic factor with an open subset of the leaf $\calO_{x_0}$ through $x_0$.  
If we consider the foliation given by the collection of all leaves
\[
P=\bigsqcup_{x\in P}\calO_x
\]
we see that $\pi$ canonically defines a singular foliation of $P$ by \emph{symplectic leaves}, since $\pi$ induces a symplectic structure on each leaf. One usually refers to the totally degenerate factor, which is locally well-defined up to isomorphism, as the \emph{transverse structure} to $\pi$ along a given leaf. 
\end{rem}

\begin{ex}[Symplectic leaves of Poisson structures on $\R^2$]
Let $f\colon \R^2\to \R$ be a smooth function, and consider the Poisson structure on $\R^2$ defined by 
\[
\{x_1,x_2\}:=f(x_1,x_2).
\]
The connected components of the set $S_f:=\{(x_1,x_2)\in\R^2\mid f(x_1,x_2)\not=0\}$ are the 2-dimensional symplectic leaves. Note that in the set $S_f$, each point is a symplectic leaf.
\end{ex}

\begin{ex}[Symplectic leaves of Lie-Poisson structures]
Consider a Lie algebra $\mathfrak{g}$ with dual $\mathfrak{g}^*$ equipped with its Lie-Poisson structure as in Example \ref{ex:Lie-Poisson}. The symplectic leaves are then the coadjoint orbits for any connected Lie group with Lie algebra $\mathfrak{g}$. Since $\{0\}$ is always an orbit, a Lie-Poisson structure is not regular unless $\mathfrak{g}$ is commutative.
\end{ex}

\begin{exe}
Describe the symplectic leaves in the duals of the Lie algebras $\mathfrak{su}(2)$ and $\mathfrak{sl}(2,\R)$. 
\end{exe}

\begin{rem}[Linearization problem]
Linearizing the functions $\eta^{ij}$ as in Theorem \ref{thm:splitting} at $x_0$, we can write 
\[
\{e_i,e_j\}=\sum_{1\leq k \leq \ell}c^k_{ij}e_k+O(e^2).
\]
Thus, it turns out that $c^k_{ij}$ defines a Lie-Poisson structure on the normal space to the symplectic leaf at $x_0$. The \emph{linearization problem} consists of determining whether one can choose suitable \emph{transverse} coordinates $(e_1,\ldots,e_\ell)$ with respect to which $O(e^2)$ vanishes. If the Lie algebra structure on the conormal bundle to a symplectic leaf determined by linearization of $\pi$ at a point $x_0$ is semi-simple and of \emph{compact type}, then $\pi$ is linearizable around $x_0$ through a smooth change of coordinates.
\end{rem}

\section{Poisson maps}
\subsection{Two definitions}
\begin{defn}[Poisson map (version I)]
\label{defn:Poisson_map}
Let $(P_1,\pi_1)$ and $(P_2,\pi_2)$ be Poisson manifolds. A smooth map $\psi\colon P_1\to P_2$ is a \emph{Poisson map} if $\psi^*\colon C^\infty(P_2)\to C^\infty(P_1)$ is a homomorphism of Poisson algebras, i.e. 
\[
\psi^*\{f,g\}_2=\{\psi^*f,\psi^*g\}_1,\quad \forall f,g\in C^\infty(P_2).
\]
\end{defn}

Equivalently, we can reformulate Definition \ref{defn:Poisson_map} in terms of Poisson bivector fields and Hamiltonian vector fields. 

\begin{defn}[Poisson map (version II)]
\label{defn:Poisson_map_II}
Let $(P_1,\pi_1)$ and $(P_2,\pi_2)$ be Poisson manifolds. A smooth map $\psi\colon P_1\to P_2$ is a \emph{Poisson map} if and only if either of the following two equivalent conditions hold:
\begin{enumerate}
    \item $\psi_*\pi_1=\pi_2$, i.e. $\pi_1$ and $\pi_2$ are $\psi$-related.
    \item $X_f=\psi_*(X_{\psi^*f})$ for all $f\in C^\infty(P_2)$.
\end{enumerate}
\end{defn}

\begin{rem}
Condition (2) of Definition \ref{defn:Poisson_map_II} shows that trajectories of $X_{\psi^*f}$ project to those of $X_f$ if $\psi$ is a Poisson map. However, $X_f$ being complete does not imply that $X_{\psi^*f}$ is complete. Hence, one define a Poisson map $\psi\colon P_1\to P_2$ to be \emph{complete} if for all $f\in C^\infty(P_2)$ such that $X_f$ is complete, then $X_{\psi^*f}$ is complete. 
\end{rem}

\subsection{Examples of Poisson maps}
\begin{ex}[Complete functions]
Consider $\R$ as a Poisson manifold endowed with the zero Poisson structure. Then any map $f\colon P\to \R$ is a Poisson map, which is \emph{complete} if and only if $X_f$ is a complete vector field.
\end{ex}

\begin{exe}
Find the Poisson manifolds for which the set of complete functions is closed under addition.
\end{exe}

\begin{ex}[Open subsets of symplectic manifolds]
Let $(P,\pi)$ be a symplectic manifold, and let $U\subseteq P$ be an open subset. Then the inclusion $U\hookrightarrow P$ is complete if and only if $U$ is closed. More, generally, the image of a complete Poisson map is a union of symplectic leaves.
\end{ex}

\begin{exe}
Show that the inclusion of every symplectic leaf in a Poisson manifold is a complete Poisson map.
\end{exe}

\begin{exe}
\label{exe:fibration}
let $P_1$ be a Poisson manifold and let $P_2$ be symplectic. Show that then any Poisson map $\psi\colon P_1\to P_2$ is a submersion. Furthermore, if $P_2$ is connected and $\psi$ is complete, then $\psi$ is surjective (assuming that $P_1$ is nonempty).
\end{exe}

\begin{rem}
Exercise \ref{exe:fibration} gives a first hint to the fact that complete Poisson maps with symplectic target must be \emph{fibrations}. In fact, if $P_1$ is symplectic and $\dim P_1=\dim P_2$, then a complete Poisson map $\psi\colon P_1\to P_2$ is a covering map. In general, a complete Poisson map $\psi\colon P_1\to P_2$, where $P_2$ is symplectic, is a locally trivial symplectic fibration with a flat \emph{Ehresmann connection}: the horizontal lift in $T_xP_1$ of a vector $X\in T_{\psi(x)}P_2$ is defined as
\[
\tilde\pi_1((\dd_x\psi)^*\tilde\pi_2^{-1}(X)).
\]
The horizontal subspaces define a foliation whose leaves are coverings of $P_2$, and $P_1$ and $\psi$ are completely determined, up to isomorphism, by the \emph{holonomy}
\begin{equation}
\label{eq:holonomy}
\pi_1(P_2,x)\to \mathrm{Aut}(\psi^{-1}(x)),
\end{equation}
where $\pi_1(P_2,x)$ in \eqref{eq:holonomy} denotes the \emph{fundamental group} of $P_2$ with base point $x$.
\end{rem}

\chapter{Deformation Quantization}
In this chapter we will consider the mathematical framework needed to understand Kontsevich's construction of deformation quantization and more generally the formality theorem. We will start by introducing the general concept of deformation quantization (star products) and answer the existence problem first for the local symplectic case (including a global approach). Then we will move to the more general Poisson case where we build everything up for the formality theorem which implies the existence of a deformation quantization as an application for a special case. Then we will discuss the construction of the explicit star product provided by Kontsevich, using the notion of graphs. At the end, we will also mention the formality result using the notion of operads.
This chapter is mainly based on \cite{Moy,Fedosov1994,Kontsevich1999,K,Tamarkin2003,CI,GuttRawnsleySternheimer2005,CKTB}.

\section{Star products}
\subsection{Formal deformations}
Given a commutative ring $k$, the set of \emph{formal power series} $k[\![X]\!]$ can be thought of polynomials with coefficients in $k$ in some formal parameter $X$ with infinite terms, not concerning any convergence issue. One can show that $k[\![X]\!]$ carries in fact a ring structure.
More precisely, one can define $k[\![X]\!]$ as the completion of $k[X]$ by using a particular metric, which endows $k[\![X]\!]$ with the structure of a topological ring.

Let $\calA$ be a commutative associative algebra with unit over some commutative base ring $k$ and let $\hbar$ denote a formal parameter\footnote{In the introduction we have denoted the formal parameter by $\epsilon$. We want to use $\hbar$ instead to emphasize the reduced Planck constant from the physics literature, which takes the role of a (small) deformation parameter in this theory. More precisely, the deformation parameter is usually given by $\frac{\I\hbar}{2}$.}. 

\begin{defn}[Formal deformation]
A \emph{formal deformation} of $\calA$ is the algebra $\calA[\![\hbar]\!]$ of formal power series over the ring $k[\![\hbar]\!]$ of formal power series.
\end{defn}

\begin{rem}
Elements of the deformed algebra $\calA[\![\hbar]\!]$ are of the form
\[
C=\sum_{i\geq 0}c_i\hbar^i,\quad c_i\in\calA
\]
and the product between such elements is given by the Cauchy formula
\[
\left(\sum_{i\geq 0}a_i\hbar^i\right)\bullet_\hbar\left(\sum_{j\geq 0}b_j\hbar^j\right)=\sum_{r\geq 0}\left(\sum_{\ell\geq 0}a_{r-\ell}b_\ell\right)\hbar^r.
\]
\end{rem}

\begin{defn}[Star product (general)]
A \emph{star product} is a $k[\![\hbar]\!]$-linear associative product $\star$ on $\calA[\![\hbar]\!]$ which deforms the trivial extension $\bullet_\hbar\colon \calA[\![\hbar]\!]\otimes_{k[\![\hbar]\!]}\calA[\![\hbar]\!]\to \calA[\![\hbar]\!]$ in the sense that for any two $v,w\in \calA[\![\hbar]\!]$ we have 
\[
v\star w=v\bullet_\hbar w,\quad \mathrm{mod}\,\,\hbar.
\]
\end{defn}

\begin{rem}
In fact, the star product is a formal noncommutative deformation of the usual pointwise product for the case where $\calA=C^\infty(P)$ for some Poisson manifold $(P,\pi)$ and $k=\R$. 
\end{rem}

\begin{defn}[Star product]
A \emph{star product} on a Poisson manifold $(P,\pi)$ is an $\R[\![\hbar]\!]$-bilinear map 
\begin{align}
    \begin{split}
        \star\colon C^\infty(P)[\![\hbar]\!]\times C^\infty(P)[\![\hbar]\!]&\to C^\infty(P)[\![\hbar]\!]\\
        (f,g)&\mapsto f\star g
    \end{split}
\end{align}
such that 
\begin{enumerate}
    \item $f\star g=fg+\sum_{i\geq1}B_i(f,g)\hbar^i$,
    \item $(f\star g)\star h=f\star (g\star h)$, \hspace{0.3cm} $\forall f,g,h\in C^\infty(P)$,
    \item $1\star f=f\star 1=f$, \hspace{0.3cm} $\forall f\in C^\infty(P)$,
\end{enumerate}
\end{defn}

\begin{rem}
The $B_i$ are bidifferential operators on $C^\infty(P)$ of \emph{globally bounded order}. We can write
\[
B_i(f,g)=\sum_{K,L}\beta^{KL}_i\de_Kf\de_Lg,
\]
where we sum over all multi-indices $K=(k_1,\ldots,k_m)$ and $L=(\ell_1,\ldots,\ell_n)$ for some lengths $m,n\in \mathbb{N}$. The $\beta^{KL}_i$ are smooth functions which are non-zero only for finitely many choices of the multi-indices $K$ and $L$.
\end{rem}

\begin{rem}
A star product on $\calA[\![\hbar]\!]$ is sometimes also called a \emph{formal deformation} of $\calA\subset \calA[\![\hbar]\!]$.
\end{rem}

\subsection{Moyal product}

Consider the standard symplectic manifold $(\R^{2n},\omega_0)$ endowed with the canonical symplectic form $\omega_0$ regarded as a Poisson manifold with Poisson structure induced by $\omega_0$. Moreover, choose Darboux coordinates $(q,p)=(q_1,\ldots,q_n,p_1,\ldots,p_n)$ on $\R^{2n}$.

\begin{defn}[Moyal product\cite{Moy}]
The \emph{Moyal product} is the star product on $C^\infty(\R^{2n})$ defined by 
\begin{equation}
\label{eq:Moyal}
f\star g:=f(q,p)\exp\left(\frac{\I\hbar}{2}\left(\overleftarrow{\de}_q\overrightarrow{\de}_p-\overleftarrow{\de}_p\overrightarrow{\de}_q\right)\right)g(q,p),\qquad f,g\in C^\infty(\R^{2n}),
\end{equation}
where $\overleftarrow{\de}$ denotes the derivative on $f$ and $\overrightarrow{\de}$ the derivative on $g$. 
\end{defn}

\begin{rem}
Note that in \eqref{eq:Moyal} the formal parameter is actually replaced by $\hbar\to\frac{\I\hbar}{2}$, which is typical in the physics literature, especially in quantum field theory. 
\end{rem}

If we consider $\R^d$ regarded as a Poisson manifold endowed with a constant Poisson structure $\pi$, we can define a star product similarly to \eqref{eq:Moyal} by 
\begin{equation}
    \label{eq:Moyal_general}
    f\star g(x)=\exp\left(\frac{\I\hbar}{2}\pi^{ij}\frac{\de}{\de x^i}\frac{\de}{\de y^j}\right)f(x)g(y)\bigg|_{y=x},\qquad f,g\in C^{\infty}(\R^d).
\end{equation}

\begin{prop}
The star product defined in \eqref{eq:Moyal_general} is associative for any choice of $\pi^{ij}$.
\end{prop}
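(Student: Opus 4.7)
The plan is to recast the star product in tensor notation and reduce associativity to the fact that the bidifferential operator governing $\star$ has constant coefficients and involves only commuting partial derivatives. Writing $\mu\colon C^\infty(\R^d)^{\otimes 2}\to C^\infty(\R^d)$ for pointwise multiplication and $P:=\pi^{ij}\partial_i\otimes \partial_j$ as an operator on $C^\infty(\R^d)^{\otimes 2}$, the definition becomes
\[
f\star g=\mu\circ \exp\!\left(\tfrac{\I\hbar}{2}P\right)(f\otimes g).
\]
On $C^\infty(\R^d)^{\otimes 3}$ I will write $P_{ab}$ for the operator $\pi^{ij}\partial_i^{(a)}\otimes \partial_j^{(b)}$ acting on the $a$-th and $b$-th tensor factors, and $\mu_3$ for threefold multiplication.

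The key step is a Leibniz-type identity exchanging $\mu$ with $P$: since $\pi^{ij}$ is \emph{constant}, the derivatives $\partial_i$ never hit a coefficient, so
\[
P\circ(\mu\otimes\id)=(\mu\otimes\id)\circ(P_{13}+P_{23})
\]
on $C^\infty(\R^d)^{\otimes 3}$. Because partial derivatives on different tensor slots commute, $P_{12}$, $P_{13}$, $P_{23}$ pairwise commute, and hence iterating the identity above yields
\[
\exp\!\left(\tfrac{\I\hbar}{2}P\right)\circ(\mu\otimes\id)=(\mu\otimes\id)\circ \exp\!\left(\tfrac{\I\hbar}{2}(P_{13}+P_{23})\right).
\]

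Applying this, I compute
\[
(f\star g)\star h
=\mu\circ e^{\frac{\I\hbar}{2}P}\circ(\mu\otimes\id)\circ e^{\frac{\I\hbar}{2}P_{12}}(f\otimes g\otimes h)
=\mu_3\circ e^{\frac{\I\hbar}{2}(P_{12}+P_{13}+P_{23})}(f\otimes g\otimes h),
\]
using once more that the three $P_{ab}$'s commute. The symmetric computation starting from $f\star(g\star h)$ produces the same totally symmetric expression $\mu_3\circ e^{\frac{\I\hbar}{2}(P_{12}+P_{13}+P_{23})}(f\otimes g\otimes h)$, proving associativity.

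There is no real obstacle beyond careful bookkeeping: the only delicate point is checking the Leibniz identity for $P\circ(\mu\otimes\id)$, which crucially uses that $\pi^{ij}$ does not depend on $x$ (so that $\partial_i$ only sees the tensor factors, not the coefficients), and that $[\partial_i^{(a)},\partial_j^{(b)}]=0$ for $a\neq b$. Note that no skew-symmetry of $\pi^{ij}$ is used in the proof of associativity itself; this matches the phrasing ``for any choice of $\pi^{ij}$''.
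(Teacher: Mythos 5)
Your proof is correct and is essentially the same argument as the paper's: the identity $e^{\frac{\I\hbar}{2}P}\circ(\mu\otimes\id)=(\mu\otimes\id)\circ e^{\frac{\I\hbar}{2}(P_{13}+P_{23})}$ is exactly the paper's step of replacing $\frac{\de}{\de x^i}$ acting on $(f\star g)(x)$ by $\frac{\de}{\de x^i}+\frac{\de}{\de y^i}$ acting on $f(x)g(y)$, and both computations reduce each association to the symmetric operator $\mu_3\circ e^{\frac{\I\hbar}{2}(P_{12}+P_{13}+P_{23})}$. Your tensor-slot notation is just a cleaner packaging of the paper's auxiliary-variable computation.
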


\begin{proof}
\begin{align*}
    ((f\star g)\star h)(x)&=\exp\left(\frac{\I\hbar}{2}\pi^{ij}\frac{\de}{\de x^i}\frac{\de}{\de z^j}\right)(f\star g)(x)h(z)\bigg|_{x=z}\\
    &=\exp\left(\frac{\I\hbar}{2}\pi^{ij}\left(\frac{\de}{\de x^i}+\frac{\de}{\de y^i}\right)\frac{\de}{\de z^j}\right)\exp\left(\frac{\I\hbar}{2}\pi^{k\ell}\frac{\de}{\de x^k}\frac{\de}{\de y^\ell}\right)f(x)g(y)h(z)\bigg|_{x=y=z}\\
    &=\exp\left(\frac{\I\hbar}{2}\left(\pi^{ij}\frac{\de}{\de x^i}\frac{\de}{\de z^j}+\pi^{k\ell}\frac{\de}{\de y^k}\frac{\de}{\de z^\ell}+\pi^{mn}\frac{\de}{\de x^m}\frac{\de}{\de y^n}\right)\right)f(x)g(y)h(z)\bigg|_{x=y=z}\\
    &=\exp\left(\frac{\I\hbar}{2}\pi^{ij}\frac{\de}{\de x^i}\left(\frac{\de}{\de y^j}+\frac{\de}{\de z^j}\right)\right)\exp\left(\frac{\I\hbar}{2}\pi^{ij}\frac{\de}{\de y^k}\frac{\de}{\de z^\ell}\right)f(x)g(y)h(z)\bigg|_{x=y=z}\\
    &=(f\star (g\star h))(x).
\end{align*}
\end{proof}

\subsection{Fedosov's globalization approach}
\label{subsec:Fedosov}
The Moyal product as in \eqref{eq:Moyal} is only defined locally on $\R^{2n}$. In \cite{Fedosov1994} Fedosov showed how to obtain a star product on any symplectic manifold $(M,\omega)$ by using a symplectic connection. 

\begin{defn}[Symplectic connection]
A \emph{symplectic connection} on a symplectic manifold $(M,\omega)$ is a torsion-free connection $\nabla$ preserving the tensor $(\omega_{ij})$, i.e. $\nabla_i\omega_{j\ell}=0$ where $\nabla_i$ denotes the covariant derivative with respect to $\frac{\de}{\de x^i}$ for local coordinates $(x^i)$ on $M$.
\end{defn}

\begin{thm}[Fedosov\cite{Fedosov1994}]
Let $(M,\omega)$ be a $2n$-dimensional symplectic manifold. Consider the bundle\footnote{We will consider the \emph{completed} symmetric algebra, denoted by $\widehat{\Sym}$. The completion of the underlying tensor product is necessary for certain topological reasons. See also Section \ref{subsec:dual_view}.} $\calW:=\widehat{\Sym}(T^*M)$ and let $\nabla$ be a symplectic connection of the \emph{Weyl bundle} $\calW[\![\hbar]\!]$ and denote by $\star$ the Moyal product on $(\R^{2n},\omega_0)$. Then one can define a global star product $\star_M$ on $C^\infty(M)$ by 
\[
f\star_Mg=\sigma(\sigma^{-1}(f)\star\sigma^{-1}(g)),
\]
where\footnote{We have denoted by $H^0_D(\calW)$ the space of $D$-flat sections on $\calW$.} $\sigma\colon H^0_D(\calW)\xrightarrow{\sim} \mathcal{Z}$ with $D$ a flat connection ($D^2=0$) on $\calW$ induced by the symplectic connection $\nabla$ and $\mathcal{Z}$ the center of $\calW$.
\end{thm}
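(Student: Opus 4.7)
The plan is to realize $C^\infty(M)[\![\hbar]\!]$ as the algebra of flat sections of the Weyl bundle $\calW[\![\hbar]\!]$ with respect to a cleverly chosen flat connection $D$, and then transport the fiberwise Moyal product along the isomorphism $\sigma$. First, each fiber $\calW_x=\widehat{\Sym}(T_x^*M)[\![\hbar]\!]$ is an algebra under a pointwise Moyal product built from the symplectic form $\omega_x$: having chosen Darboux coordinates at $x$, the formula \eqref{eq:Moyal_general} applied to formal polynomials on $T_xM$ yields a $k[\![\hbar]\!]$-bilinear, associative product which is independent of the chosen symplectic basis because $\Sp(2n)$ acts by automorphisms of the Moyal algebra. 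This gives $\calW[\![\hbar]\!]$ the structure of a bundle of associative algebras, filtered by the symmetric degree plus twice the $\hbar$-degree.

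Next, I would extend the symplectic connection $\nabla$ on $TM$ to a connection on $\calW[\![\hbar]\!]$ by the natural tensor construction; because $\nabla\omega=0$, this connection is a derivation of the fiberwise Moyal product. In general $\nabla$ has curvature $R\in\Omega^2(M,\mathrm{End}\,\calW)$, so we deform it. Writing $D=\nabla+\tfrac{1}{\I\hbar}[\gamma,\cdot]_\star$ for an unknown $\gamma\in\Omega^1(M,\calW[\![\hbar]\!])$, the condition $D^2=0$ becomes a Maurer--Cartan--type equation of the form $\nabla\gamma+\tfrac{1}{2\I\hbar}[\gamma,\gamma]_\star+R=\dd r$ for some normalizing 2-form $r\in\Omega^2(M,\calW)$. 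Using the filtration, one proves by induction on degree that this equation has a unique solution $\gamma$ satisfying a chosen normalization (e.g.\ fixing the degree-zero part to vanish and picking a homotopy for the fiberwise Koszul differential $\delta$ coming from contraction with $\dd x^i$). This inductive step is the technical heart of the proof.

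With the flat connection $D$ in hand, I would prove that the canonical symbol map
\begin{equation*}
\sigma\colon H^0_D(\calW[\![\hbar]\!])\longrightarrow C^\infty(M)[\![\hbar]\!],\qquad s\longmapsto s\big|_{\Sym^0},
\end{equation*}
is a bijection. Surjectivity and injectivity both follow from the same order-by-order lifting argument: given $f\in C^\infty(M)[\![\hbar]\!]$ one builds $\sigma^{-1}(f)$ iteratively by inverting $\delta$ on the equation $Ds=0$ using the chosen contracting homotopy, and uniqueness at each step gives injectivity. Since $D$ is a derivation of the fiberwise Moyal product $\star$, the space $H^0_D(\calW[\![\hbar]\!])$ is closed under $\star$, and transporting this to $C^\infty(M)[\![\hbar]\!]$ via $\sigma$ gives a $k[\![\hbar]\!]$-bilinear associative product $\star_M$. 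Finally, a check at leading order in $\hbar$ shows that $\sigma$ and $\sigma^{-1}$ reduce to the identity modulo $\hbar$, that the first-order term of $\star_M$ is $\tfrac{\I\hbar}{2}\{f,g\}$, and that the $B_i$ produced are bidifferential of globally bounded order because they are assembled from finitely many covariant derivatives and the components of $\omega^{-1}$. The main obstacle is the existence and uniqueness of $\gamma$ solving the flatness equation; everything else is a formal consequence of the compatibility between $D$, the filtration, and the Moyal product.
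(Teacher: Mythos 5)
Your proposal is the standard Fedosov construction and it is sound in outline; note, however, that the paper itself does not prove this theorem (it is stated with a reference to Fedosov's work), so there is no in-text proof to match. What the paper does contain is the analogous machinery in the globalization section for Kontsevich's star product: there, a connection is deformed as $\bar D=D+[\gamma,\enspace]_\star$, the change of Weyl curvature is computed (the analogue of your Maurer--Cartan equation), and flatness is achieved by an order-by-order induction in $\hbar$ that inverts $D_0$ using the vanishing of $H^2_{D_0}(E_0)$; your argument runs the same induction, but on the Weyl bundle $\calW[\![\hbar]\!]$ and graded by symmetric degree plus twice the $\hbar$-degree, inverting the fiberwise Koszul differential $\delta$ instead of $D_0$. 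Two small points deserve tightening. First, $D^2=0$ as an operator only forces the Weyl curvature $R+\nabla\gamma+\frac{1}{2\I\hbar}[\gamma,\gamma]_\star$ to be \emph{central} (a scalar-valued 2-form), not exact; the normalization is to prescribe this central 2-form (e.g.\ proportional to $\omega$), and the Bianchi-type identity $D$ applied to it is what makes the recursion consistent. Second, for $\sigma\colon H^0_D(\calW)\to C^\infty(M)[\![\hbar]\!]$ to be bijective the lowest-order part of $\gamma$ must be the term $\omega_{ij}y^i\,\dd x^j$ generating the inner derivation $-\delta$ (degree $-1$ for the filtration); without this piece the flat sections would only be the $\nabla$-parallel ones and the order-by-order lifting of $f$ would not get started. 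Since $\mathrm{Sp}(2n)$ acts by automorphisms of the fiberwise Moyal algebra and $D$ is a derivation of $\star$, the rest of your argument (closure of $H^0_D$ under $\star$, transport by $\sigma$, leading-order identification with the Poisson bracket, and bidifferentiality of the $B_i$) is a formal consequence, as you say.
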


\begin{rem}
Note that taking the center $\mathcal{Z}$ of $\calW$ coincides with the center $\mathcal{Z}_\hbar$ of $\calW[\![\hbar]\!]$ which is given by $C^\infty(M)$.
\end{rem}

\begin{rem}
Another approach to globalize the Moyal product was given by DeWilde and Lecomte \cite{DeWildeLecomte1983} using Darboux's theorem. However, it turns out that Fedosov's approach is more important in the sense that it provides a natural extension to the general Poisson case (see Section \ref{sec:globalization}).
\end{rem}

\subsection{Equivalent star products}
\begin{exe}
\label{exe:Poisson_bracket}
Check that the $B_i$ of a star product $\star$ are \emph{strict} bidifferential operators, i.e. there is no term in order zero and
\[
B_i(1,f)=B_i(f,1)=0,\quad \forall i\in\N.
\]
Moreover, check that the skew-symmetric part $B^-_1$ of the first bidifferential operator, defined as
\[
B_1^-(f,g):=\frac{1}{2}\left(B_1(f,g)-B_1(g,f))\right),
\]
satisfies:
\begin{itemize}
    \item $B_1^-(f,g)=-B_1^-(g,f)$,
    \item $B_1^-(f,gh)=gB_1^-(f,h)+B_1^-(f,g)h$,
    \item $B_1^-(B_1^-(f,g),h)+B_1^-(B_1^-(g,h),f)+B_1^-(B_1^-(h,f),g)=0$.
\end{itemize}
Deduce that $\{f,g\}:=B_1^-(f,g)$ is a Poisson structure.
\end{exe}

\begin{rem}
Exercise \ref{exe:Poisson_bracket} shows that given a star product $\star$, we can always deduce a Poisson structure by the formula
\begin{equation}
\label{eq:limi_star_product}
\{f,g\}:=\lim_{\hbar\to 0}\frac{f\star g-g\star f}{\hbar}.
\end{equation}
On the other hand we want to consider the following problem: Given a Poisson manifold $(P,\pi)$, can we define an associative, but possibly noncommutative, product $\star$ on the algebra of smooth functions, which is the pointwise product and such that \eqref{eq:limi_star_product} is satisfied.
\end{rem}

\begin{defn}[Equivalent star products]
\label{defn:equivalent_star_products}
Two star products $\star$ and $\star'$ on $C^\infty(P)$ are said to be \emph{equivalent} if and only if there is a linear operator $D\colon C^\infty(P)[\![\hbar]\!]\to C^\infty(P)[\![\hbar]\!]$ of the form 
\[
Df:=f+\sum_{i\geq 1}D_if\hbar^i,
\]
such that 
\begin{equation}
\label{eq:equivalent}
f\star' g=D^{-1}(Df\star Dg),
\end{equation}
where $D^{-1}$ denotes the inverse in the sense of formal power series.
\end{defn}

\begin{rem}
Note that by the definition of a star product, the $D_i$ have to be differential operators vanishing on constants.
\end{rem}

\begin{lem}
For any equivalence class of star products, there is a representative whose first term $B_1$ in the $\hbar$-expansion is skew-symmetric.
\end{lem}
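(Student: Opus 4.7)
The plan is to construct an equivalence $D = \id + \hbar D_1 + O(\hbar^2)$, with $D_1$ a differential operator vanishing on constants, that transforms the given $\star$ into a star product $\star'$ whose first-order term is skew-symmetric. Expanding $f\star' g = D^{-1}(Df\star Dg)$ modulo $\hbar^2$, one obtains
\[
B_1'(f,g) = B_1(f,g) + f\, D_1 g + (D_1 f)\, g - D_1(fg).
\]
The correction $(\delta D_1)(f,g) := f\,D_1 g + (D_1 f)\,g - D_1(fg)$ is manifestly symmetric in $f$ and $g$, so an equivalence of this form cannot alter the skew-symmetric part $B_1^-$; thus it suffices to choose $D_1$ so that $\delta D_1 = -B_1^+$.

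The crucial point is to show that $B_1^+$ is a Hochschild coboundary. Associativity of $\star$ at order $\hbar$ yields the identity
\[
f\,B_1(g,h) - B_1(fg,h) + B_1(f,gh) - B_1(f,g)\,h = 0,
\]
which says that $B_1$ is a 2-cocycle in the differential Hochschild complex of $C^\infty(P)$. By the Hochschild--Kostant--Rosenberg theorem, the cohomology of this complex is isomorphic to the graded space of multivector fields on $P$ via antisymmetrization; in particular, every differential 2-cocycle is cohomologous to its skew-symmetric part. Consequently $B_1^+ = B_1 - B_1^-$ is exact: there exists a differential operator $E$ on $P$, vanishing on constants, with $\delta E = B_1^+$. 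Setting $D_1 := -E$ and $D := \id + \hbar D_1$, which is invertible in $\End(C^\infty(P))[\![\hbar]\!]$ because it is a deformation of the identity, produces via \eqref{eq:equivalent} an equivalent star product $\star'$ with $B_1' = B_1^-$, which is skew-symmetric.

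The main obstacle is establishing the HKR-type vanishing: a symmetric bidifferential operator that is a Hochschild 2-cocycle must be a coboundary. Locally in a coordinate chart one writes $B_1^+(f,g) = \sum_{I,J} a^{IJ}(x)\,\partial_I f\,\partial_J g$ with $a^{IJ} = a^{JI}$, and constructs a primitive $E$ order by order in the derivative content (the prototypical case being $E = \tfrac{1}{2}\sum a^{ij}\partial_i\partial_j$ when only first derivatives appear); these local primitives can then be patched into a global differential operator via a partition of unity at the cochain level, after which the first-order computation above completes the proof.
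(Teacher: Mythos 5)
Your proof is correct, but it reaches the goal by a genuinely different route than the paper. Both arguments start from the same first-order formula, $B_1'=B_1+\delta D_1$ with $(\delta D_1)(f,g)=f\,D_1g-D_1(fg)+D_1f\,g$ symmetric, so everything reduces to solving $\delta D_1=-B_1^+$; but where you invoke cohomology, the paper constructs $D_1$ by hand: it imposes the twisted Leibniz rule $D_1(fg)=D_1f\,g+fD_1g+B_1^+(f,g)$, declares $D_1$ to vanish on linear functions, reads off $D_1(x^ix^j)=B_1^+(x^i,x^j)$ in local coordinates, and uses associativity of $\star$ only to check that the recursive definition on higher monomials is independent of how the factors are grouped, before completing from polynomials to smooth functions. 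You instead use associativity to see that $B_1$ is a Hochschild $2$-cocycle and then appeal to the differentiable Hochschild--Kostant--Rosenberg theorem (which the paper develops later for the formality statement): since the class of a $2$-cocycle is its antisymmetrization and $B_1^-$ is itself a cocycle (a biderivation, as in the exercise preceding the lemma), $B_1^+=B_1-B_1^-$ is exact, and any primitive $E$ automatically satisfies $E(1)=(\delta E)(1,1)=B_1^+(1,1)=0$, so $D_1:=-E$ is an admissible equivalence. Your route is more conceptual and also yields the converse observation, not spelled out in the paper, that no equivalence can alter $B_1^-$ at first order; the paper's route is elementary and self-contained, in effect producing the primitive explicitly. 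Two small points: with your convention $\delta E=B_1^+$, the prototypical primitive for $B_1^+(f,g)=a^{ij}\partial_if\,\partial_jg$ is $E=-\tfrac12 a^{ij}\partial_i\partial_j$ (equivalently $D_1=+\tfrac12 a^{ij}\partial_i\partial_j$, matching the paper's normalization), so the sign in your parenthetical example is off, though harmless; and your partition-of-unity patching is justified precisely because $\delta(\rho E)=\rho\,\delta E$ for the Hochschild differential on differential cochains, which is worth stating explicitly.
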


\begin{proof}
Given any star product
\[
f\star g=fg+B_1(f,g)\hbar+B_2(f,g)\hbar^2+\dotsm
\]
one can define an equivalent star product $\star'$ as in \eqref{eq:equivalent} by using a formal differential operator 
\[
D=\id+D_1\hbar+D_2\hbar^2+\dotsm
\]
The skew-symmetric condition for $B_1'$ implies that 
\begin{equation}
\label{eq:differential}
D_1(fg)=D_1f g+fD_1g+\underbrace{\frac{1}{2}\left(B_1(f,g)+B_1(g,f)\right)}_{=:B_1^+(f,g)},
\end{equation}
which can be checked for polynomials and further be completed to hold to any smooth functions on $P$. If we choose $D_1$ to vanish on linear functions, Equation \eqref{eq:differential} describes uniquely how $D_1$ acts on quadratic terms, given by the symmetric part $B_1^+$ of $B_1$. Namely, we get 
\[
D_1(x^ix^j)=B_1^+(x^i,x^j):=\frac{1}{2}\left(B_1(x^i,x^j)+B_1(x^j,x^i)\right),
\]
where $(x^i)$ are local coordinates on $P$. By the associativity of $\star$, we get a well-defined operator since it does not depend on the way how we order the factors. 
\end{proof}

\begin{exe}
Check that $D_1((x^ix^j)x^\ell)=D_1(x^i(x^jx^\ell))$.
\end{exe}

\begin{rem}
A natural problem appearing in this setting concerns \emph{classification} of star products.
In fact, it was shown that equivalence classes of star products on a symplectic manifold $(M,\omega)$ are in one-to-one correspondence with elements in $H^{2}(M)[\![\hbar]\!]$ \cite{Deligne1995,GuttRawnsley1999}. The general construction was formulated by Kontsevich \cite{K}. He constructed an explicit formula for a star product on $\R^d$ endowed with any Poisson structure. However, we want to start with a much more general statement, formulated and also proved by Kontsevich \cite{K}, which provides a solution to the existence of deformation quantization for a special case as we will see. This general result is called \emph{formality}.
\end{rem}

\section{Formality}

\subsection{Some formal setup}
For a Poisson manifold $(P,\pi)$, one can define a bracket
\begin{equation}
    \label{eq:bracket}
    \{f,g\}_\hbar:=\sum_{m\geq 0}\hbar^m\sum_{0\leq i,j,\ell\leq m\atop i+j+\ell=m}\pi_i(\dd f_j,\dd g_\ell),
\end{equation}
where 
\[
f=\sum_{j\geq0}f_j\hbar^j,\qquad g=\sum_{\ell\geq 0}g_\ell\hbar^\ell.
\]

\begin{defn}[Formal Poisson structure]
    We call
    \[
    \pi_\hbar:=\pi_0+\pi_1\hbar+\pi_2\hbar^2+\dotsm
    \]
    a formal \emph{Poisson structure} if $\{\enspace,\enspace\}_\hbar$ defined as in \eqref{eq:bracket} is a Lie bracket on $C^\infty(P)[\![\hbar]\!]$. 
\end{defn}

The gauge group in this case is given by \emph{formal diffeomorphisms}, i.e. formal power series of the form 
\begin{equation}
\label{eq:formal_diffeomorphisms}
\Psi_\hbar:=\exp(\hbar X),
\end{equation}
where $X:=\sum_{i\geq 0}X_i\hbar^i\in\mathfrak{X}(P)[\![\hbar]\!]$ is a \emph{formal vector field}, i.e. a formal power series where each coefficients $X_i\in \mathfrak{X}(P)$ are vector fields on $P$. 

\begin{defn}[Baker--Campbell--Hausdorff formula]
For two vector fields $X$ and $Y$ on a manifold we have the \emph{Baker--Campbell--Hausdorff} (BCH) formula for solving the equation $\exp(X)\exp(Y)=\exp(Z)$ for a vector field $Z$:
\begin{align}
\begin{split}
\label{eq:BCH}
\exp(X)\cdot \exp(Y)&=\exp\left(X+\left(\int_0^1\psi\left(\exp(\ad_{X})\exp(t\ad_{Y})\right)\dd t\right)Y\right)\\
&=\exp\left(X+Y+\frac{1}{2}[X,Y]+\dotsm\right),
\end{split}
\end{align}
where $\psi(x):=\frac{x\log x}{x-1}=1-\sum_{n\geq 0}\frac{(1-x)^n}{n(n+1)}$ and $\ad_X=[X,\enspace]$. 
\end{defn}

\begin{exe}
Check that on the set of formal vector fields $\mathfrak{X}(P)[\![\hbar]\!]$ on $P$ there is a group structure regarding the exponential induced by the BCH formula, i.e. we have a group structure on $\{\exp(\hbar X)\mid X\in\mathfrak{X}(P)[\![\hbar]\!]\}$ by
\[
\exp(\hbar X)\cdot \exp(\hbar Y):=\exp\left(\hbar X+\hbar Y+\frac{1}{2}\hbar^2 [X,Y]+\dotsm\right).
\]
\end{exe}

The group action is given by 
\begin{equation}
    \label{eq:formal_group_action}
    \exp(\hbar X)_*\pi_\hbar:=\sum_{m\geq 0}\hbar^m\sum_{0\leq i,j,\ell\leq m\atop i+j+\ell=m}(L_{X_i})^j\pi_\ell
\end{equation}

\begin{rem}
Kontsevich's result was to find an identification between the set of star products modulo the action of the differential operators $D$ as in Definition \ref{defn:equivalent_star_products} and the set of formal Poisson structures modulo the gauge group consisting of formal diffeomorphisms as \eqref{eq:formal_diffeomorphisms}.
\end{rem}

\subsection{Differential graded Lie algebras}

\begin{defn}[Graded Lie algebra]
A \emph{graded Lie algebra} (GLA) is a $\Z$-graded vector space $\mathfrak{g}=\bigoplus_{i\in\Z}\mathfrak{g}^i$ endowed with a bilinear operation
\[
[\enspace,\enspace]\colon\mathfrak{g}\otimes \mathfrak{g}\to \mathfrak{g}
\]
satisfying 
\begin{enumerate}
    \item $[a,b]\in\mathfrak{g}^{\alpha+\beta}$ (homogeneity).
    \item $[a,b]=-(-1)^{\alpha\beta}[b,a]$ (graded skew-symmetry),
    \item $[a,[b,c]]=[[a,b],c]+(-1)^{\alpha\beta}[b,[a,c]]$ (graded Jacobi identity)
\end{enumerate}
for all $a\in\mathfrak{g}^\alpha$, $b\in\mathfrak{g}^{\beta}$ and $c\in\mathfrak{g}^\gamma$.
\end{defn}

\begin{defn}[Shift]
\label{defn:shift}
Given any graded vector space $\mathfrak{g}=\bigoplus_{i\in\mathbb{Z}}\mathfrak{g}^i$, we can obtain a new graded vector space $\mathfrak{g}[n]$ by shifting each component by $n$, i.e.
\[
\mathfrak{g}[n]:=\bigoplus_{i\in\Z}\mathfrak{g}[n]^i, \qquad \mathfrak{g}[n]^i:=\mathfrak{g}^{i+n}.
\]
\end{defn}

\begin{defn}[Differential graded Lie algebra]
A \emph{differential graded Lie algebra} (DGLA) is a GLA $\mathfrak{g}$ together with a differential $\dd\colon\mathfrak{g}\to \mathfrak{g}$, i.e. a linear operator of degree $+1$ ($\dd\colon \mathfrak{g}^i\to \mathfrak{g}^{i+1}$) which satisfies the Leibniz rule
\[
\dd[a,b]=[\dd a,b]+(-1)^\alpha[a,\dd b],\quad a\in \mathfrak{g}^\alpha,b\in\mathfrak{g}^\beta
\]
and squares to zero ($\dd\circ \dd=0$).
\end{defn}

\begin{ex}
Note that one can turn any Lie algebra into a DGLA concentrated in degree zero by using the trivial zero differential $\dd=0$.
\end{ex}

Given a DGLA $\mathfrak{g}$, we can consider its cohomology 
\[
H^i(\mathfrak{g}):=\ker(\dd\colon \mathfrak{g}^i\to \mathfrak{g}^{i+1})/\im(\dd\colon \mathfrak{g}^{i-1}\to\mathfrak{g}^i).
\]

\begin{rem}
Note that $H(\mathfrak{g}):=\bigoplus_iH^i(\mathfrak{g})$ has a natural structure of a graded vector space and further has a GLA structure defined on equivalence classes $[a],[b]\in H(\mathfrak{g})$ by 
\[
[[a],[b]]_{H(\mathfrak{g})}:=[[a,b]_{\mathfrak{g}}],
\]
where $[\enspace,\enspace]_\mathfrak{g}$ denotes the Lie bracket on $\mathfrak{g}$. Moreover, extending the GLA $H(\mathfrak{g})$ by the zero differential will turn it into a DGLA.
\end{rem}

\begin{rem}
Note that each morphism $\varphi\colon \mathfrak{g}_1\to\mathfrak{g}_2$ between two DGLAs induces a morphism $H(\varphi)\colon H(\mathfrak{g}_1)\to H(\mathfrak{g}_2)$ between the cohomologies. In particular, we are interested in \emph{quasi-isomorphisms} between DGLAs.
\end{rem}

\begin{defn}[Quasi-isomorphism]
A \emph{quasi-isomorphism} is a morphism between DGLAs which induces an isomorphism on the level of cohomology.
\end{defn}

\begin{rem}
The notion of quasi-isomorphism naturally holds more generally for any (co)chain complex and the induced (co)homology theory.
\end{rem}

\begin{defn}[Quasi-isomorphic]
Two DGLAs $\mathfrak{g}_1$ and $\mathfrak{g}_2$ are called \emph{quasi-isomorphic} if there is a quasi-isomorphism $\varphi\colon \mathfrak{g}_1\to \mathfrak{g}_2$.
\end{defn}

\begin{rem}
The existence of a quasi-isomorphism $\varphi\colon \mathfrak{g}_1\to \mathfrak{g}_2$ does not imply that there is also a quasi-isomorphism \emph{inverse} $\varphi^{-1}\colon \mathfrak{g}_2\to \mathfrak{g}_1$ and therefore they do not directly define an equivalence relation. We will deal with this issue by considering the category of \emph{$L_\infty$-algebras}.
\end{rem}

\subsection{$L_\infty$-algebras} 
The notion of an $L_\infty$-algebra, first introduced in \cite{Stasheff1992}, gives a generalization of a DGLA. It has a lot of applications in modern constructions of quantum field theory and homotopy theory. In this section, we will define what an $L_\infty$-algebra is and show how one can extract an $L_\infty$-structure out of any DGLA. Moreover, we briefly explain its representation in terms of higher brackets. 
We refer to \cite{Stasheff1992,LadaStasheff1993,LadaMarkl1995} for more details.

\begin{defn}[Formal]
A DGLA $\mathfrak{g}$ is called \emph{formal} if it is quasi-isomorphic to its cohomology, regarded as a DGLA with zero differential and the induced bracket.
\end{defn}

\begin{rem}
Kontsevich's \emph{formality theorem} \cite{K} consists of the result that the DGLA of multidifferential operators (see later) is formal.
\end{rem}

\begin{defn}[Graded coalgebra]
A \emph{graded coalgebra} (GCA) on a base ring $k$ is a $\Z$-graded vector space $\mathfrak{h}=\bigoplus_{i\in\Z}\mathfrak{h}^i$ endowed with a \emph{comultiplication}, i.e. a graded linear map 
\[
\Delta\colon \mathfrak{h}\to \mathfrak{h}\otimes \mathfrak{h}
\]
such that 
\[
\Delta(\mathfrak{h}^i)\subset \bigoplus_{j+\ell}\mathfrak{h}^j\otimes \mathfrak{h}^\ell
\]
and which also satisfies the \emph{coassociativity} condition
\[
(\Delta\otimes \id)\Delta(a)=(\id\otimes \Delta)\Delta(a),\quad \forall a\in \mathfrak{h}.
\]
\end{defn}

\begin{defn}[GCA with counit]
We say that a GCA $\mathfrak{h}$ has a \emph{counit} if there is a morphism
\[
\varepsilon\colon \mathfrak{h}\to k
\]
such that $\varepsilon(\mathfrak{h}^i)=0$ for all $i>0$ and
\[
(\varepsilon\otimes\id)\Delta(a)=(\id\otimes \varepsilon)\Delta(a)=a,\quad\forall a\in \mathfrak{h}.
\]
\end{defn}

\begin{defn}[Cocommutative GCA]
We say that a GCA $\mathfrak{h}$ is \emph{cocommutative} if there is a \emph{twisting map} defined on homogeneous elements $x,y\in\mathfrak{h}$ by
\begin{align*}
    \mathsf{T}\colon \mathfrak{h}\otimes \mathfrak{h}&\to \mathfrak{h}\otimes \mathfrak{h},\\
    x\otimes y&\mapsto \mathsf{T}(x\otimes y):=(-1)^{\vert x\vert\vert y\vert}y\otimes x,
\end{align*}
where $\vert x\vert\in \Z$ denotes the degree of $x\in\mathfrak{h}$, and extended linearly, such that
\[
\mathsf{T}\circ \Delta=\Delta.
\]
\end{defn}

\begin{ex}[Tensor algebra]
Let $V$ be a (graded) vector space over $k$ and consider its tensor algebra $T(V)=\bigoplus_{n=0}^\infty V^{\otimes n}$. Denote by $1$ the unit of $k$. Then, $T(V)$ can be endowed with a coalgebra structure. We define a comultiplication $\Delta_{T(V)}$ on homogeneous elements by 
\[
\Delta_{T(V)}(v_1\otimes\dotsm \otimes v_n):=1\otimes(v_1\otimes\dotsm\otimes v_n)+\sum_{j=1}^{n-1}(v_1\otimes\dotsm\otimes v_j)\otimes(v_{j+1}\otimes\dotsm \otimes v_n)+(v_1\otimes\dotsm\otimes v_n)\otimes 1.
\]
In particular, we have a counit $\varepsilon_{T(V)}$ as the canonical projection $\varepsilon_{T(V)}\colon T(V)\to V^{\otimes 0}:=k$. Note that if we consider the \emph{reduced tensor algebra} $\bar T(V):=\bigoplus_{n=1}^\infty V^{\otimes n}$, then the projection $\bar\pi\colon T(V)\to \bar T(V)$ and the inclusion $\bar i\colon T(V)\hookrightarrow \bar T(V)$ induce a comultiplication on $\bar T(V)$, thus a coalgebra structure but without counit. 
\end{ex}

\begin{ex}[Symmetric and exterior algebra]
\label{ex:reduced_tensor_algebra}
Let $V$ be a (graded) vector space over $k$. Then we can also consider the \emph{Symmetric algebra} $\Sym(V)$ and \emph{exterior algebra} $\bigwedge V$. Note that for the graded case we have to take the quotients with respect to the two-sided ideals generated by homogeneous elements of the form $v\otimes w-\mathsf{T}(v\otimes w)$ and $v\otimes w+\mathsf{T}(v\otimes w)$.
They can be endowed with a coalgebra structure (without counit if constructed for the reduced tensor algebra). Indeed, e.g. we can define a comultiplication $\Delta_{\Sym(V)}$ on homogeneous elements $v\in V$ by 
\[
\Delta_{\Sym(V)}(v):=1\otimes v+v\otimes 1,
\]
\end{ex}

\begin{defn}[Coderivation]
A \emph{coderivation} of degree $\ell$ on some GCA $\mathfrak{h}$ is a graded linear map $\delta\colon \mathfrak{h}^i\to \mathfrak{h}^{i+\ell}$ which satisfies the \emph{co-Leibniz identity}
\[
\Delta\delta(v)=(\delta\otimes \id)\Delta(v)+((-1)^{\ell\vert v\vert}\id\otimes \delta)\Delta(v),\quad \forall v\in \mathfrak{h}^{\vert v\vert}.
\]
\end{defn}

\begin{defn}[Codifferential]
A \emph{codifferential} $Q$ on a coalgebra is a coderivation of degree $+1$ which squares to zero ($Q^2=0$).
\end{defn}

\begin{defn}[$L_\infty$-algebra]
An \emph{$L_\infty$-algebra} is a graded vector space $\mathfrak{g}$ over $k$ endowed with a degree $+1$ coalgebra differential $Q$ on the reduced symmetric space $\overline{\Sym}(\mathfrak{g}[1])$. 
\end{defn}

\begin{defn}[$L_\infty$-morphism]
An \emph{$L_\infty$-morphism} $F\colon(\mathfrak{g},Q)\to (\widetilde{\mathfrak{g}},\widetilde{Q})$ is a morphism
\[
F\colon \overline{\Sym}(\mathfrak{g}[1])\to \overline{\Sym}(\widetilde{\mathfrak{g}}[1]).
\]
of graded coalgebras, which also commutes with the differentials, i.e. 
\[
F Q=\widetilde{Q} F.
\]
\end{defn}

\begin{rem}
As in the dual case an algebra morphism $f\colon \Sym(\calA)\to \Sym(\calA)$ (resp. a derivation $\delta\colon \Sym(\calA)\to \Sym(\calA)$) is uniquely determined by its restriction to the algebra $\calA=\Sym^1(\calA)$ because of the homomorphism condition $f(a\otimes b)=f(a)\otimes f(b)$ (resp. Leibniz rule), an $L_\infty$-morphism $F$ (resp. a coderivation $Q$) is uniquely determined by its projection onto the first component $F^1$ (resp. $Q^1$). 
\end{rem}

Let us introduce the generalized notation $F_j^i$ (resp. $Q_j^i$) for the projection to the $i$-th component of the target vector space restricted to the $j$-th component of the domain. We can then rephrase the condition for $F$ (resp. $Q$) to be an $L_\infty$-morphism (resp. a codifferential). Indeed, using this notation, we get that $QQ$, $FQ$, and $\widetilde{Q}F$ are coderivations. It is sufficient to show this on the projection to the first factor for each term.

\begin{defn}
A coderivation $Q$ is a codifferential if and only if
\begin{equation}
\label{eq:L_infty_differential}
\sum_{1\leq i\leq n}Q_i^1Q_n^i=0,\quad \forall n\in \N.
\end{equation}
\end{defn}

\begin{defn}
A morphism $F$ of graded coalgebras is an $L_\infty$-morphism if and only if 
\begin{equation}
\label{eq:L_infty_morphism}
\sum_{1\leq i\leq n}F_i^1Q_n^i=\sum_{1\leq i\leq n}\widetilde{Q}_i^1F_{n}^i,\quad \forall n\in \N.
\end{equation}
\end{defn}

\begin{ex}
For the case when $n=1$, we get 
\[
Q_1^1Q_1^1=0,\qquad F_1^1Q_1^1=\widetilde{Q}_1^1F_1^1.
\]
Therefore, every coderivation $Q$ induces the structure of a cochain complex of vector spaces on $\mathfrak{g}$ and every $L_\infty$-morphism restricts to a cochain map $F_1^1$.
\end{ex}

\begin{rem}
We can generalize the definitions for a DGLA to this case. In particular, we can define a quasi-isomorphism of $L_\infty$-algebras to be an $L_\infty$-morphism $F$ such that $F_1^1$ is a quasi-isomorphism of cochain complexes. Similarly, one can extend the notion of \emph{formality}.
\end{rem}

\begin{lem}
\label{lem:L_infty_relation}
Let $F\colon (\mathfrak{g},Q)\to (\widetilde{\mathfrak{g}},\widetilde{Q})$ be an $L_\infty$-morphism. If $F$ is a quasi-isomorphism it admits a \emph{quasi-inverse}, i.e. there is an $L_\infty$-morphism $G\colon (\widetilde{\mathfrak{g}},\widetilde{Q})\to (\mathfrak{g},Q)$ which induces the inverse isomorphism in the corresponding cohomologies.
\end{lem}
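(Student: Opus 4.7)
The plan is to build $G$ by inductively constructing its Taylor coefficients $G^1_n \colon \overline{\Sym}^n(\widetilde{\mathfrak{g}}[1]) \to \mathfrak{g}[1]$, following the standard obstruction-theoretic argument for $L_\infty$-morphisms. First, apply standard homological algebra to the cochain quasi-isomorphism $F^1_1 \colon (\mathfrak{g}, Q^1_1) \to (\widetilde{\mathfrak{g}}, \widetilde{Q}^1_1)$ to obtain a quasi-inverse cochain map $G^1_1$ together with degree $-1$ homotopies $H$ and $\widetilde{H}$ satisfying $F^1_1 G^1_1 - \id = [\widetilde{Q}^1_1, H]$ and $G^1_1 F^1_1 - \id = [Q^1_1, \widetilde{H}]$. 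Taking this $G^1_1$ as the first Taylor coefficient satisfies the $n=1$ case of \eqref{eq:L_infty_morphism} and automatically realizes the inverse of the cohomology isomorphism induced by $F^1_1$.

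Next, proceed by induction on $n$. Suppose $G^1_1,\dots,G^1_{n-1}$ have been chosen so that the $L_\infty$-morphism relations \eqref{eq:L_infty_morphism} hold up to order $n-1$. Isolating the two terms in which $G^1_n$ appears (namely $Q^1_1 \circ G^1_n$ on one side and $G^1_n$ precomposed with the differential $D_n$ on $\overline{\Sym}^n(\widetilde{\mathfrak{g}}[1])$ induced by $\widetilde{Q}^1_1$ on the other), the order-$n$ equation takes the form
$$ Q^1_1 \circ G^1_n - G^1_n \circ D_n = R_n, $$
where $R_n$ is an explicit expression assembled from $G^1_1,\dots,G^1_{n-1}$ together with the higher Taylor coefficients of $Q$ and $\widetilde{Q}$, all of which are already known. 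A bookkeeping computation using the codifferential identities \eqref{eq:L_infty_differential} for both $Q$ and $\widetilde{Q}$, in combination with the lower-order morphism relations already established for $G$, shows that $R_n$ is a cocycle in the Hom-complex $\Hom(\overline{\Sym}^n(\widetilde{\mathfrak{g}}[1]), \mathfrak{g})$ with its natural total differential.

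The obstruction to solving this equation is therefore the class $[R_n]$ in this Hom-complex, and this is where the quasi-isomorphism hypothesis enters: because $F^1_1$ induces an isomorphism on cohomology, pushing $R_n$ across via $F^1_1$ exhibits it as a coboundary, and the homotopies $H,\widetilde{H}$ furnish an explicit primitive. Concretely one can set $G^1_n := \widetilde{H} \circ R_n + (\text{correction terms built from lower } G^1_k, H, \widetilde{H})$, the correction terms being adjusted so that the order-$n$ relation holds strictly rather than only up to cohomology. Iterating the construction for all $n \geq 2$ produces the desired $L_\infty$-morphism $G \colon (\widetilde{\mathfrak{g}}, \widetilde{Q}) \to (\mathfrak{g}, Q)$ whose linear part is, by Step 1, the cohomology inverse of $F^1_1$.

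The main obstacle will be the combinatorial bookkeeping: verifying that $R_n$ is closed at every stage, and choosing the primitive $G^1_n$ consistently so that the induction genuinely closes up, is a lengthy sign-tracking exercise in the graded cocommutative coalgebra $\overline{\Sym}(\widetilde{\mathfrak{g}}[1])$ with its shuffle coproduct. No conceptually new input is required beyond the standard obstruction pattern; alternative proofs via Quillen model structures on $L_\infty$-algebras, or via the homotopy transfer theorem for $L_\infty$-structures, bypass much of this calculation and give the same result.
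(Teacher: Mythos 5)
The paper states Lemma \ref{lem:L_infty_relation} without proof (it is invoked as a known structural fact about $L_\infty$-algebras), so there is no argument of the author's to compare yours against; your sketch has to stand on its own. Its overall strategy is the standard one and is sound in outline: take a chain-level quasi-inverse $G^1_1$ of $F^1_1$ with homotopies, then build the higher Taylor coefficients inductively, observing that the order-$n$ instance of \eqref{eq:L_infty_morphism} isolates the unknown as an equation $\partial(G^1_n)=R_n$ in the Hom-complex $\Hom\bigl(\overline{\Sym}^n(\widetilde{\mathfrak{g}}[1]),\mathfrak{g}\bigr)$, with $R_n$ a cocycle built from lower data. Over a field your idea of testing the obstruction after post-composition with $F^1_1$ is also legitimate in principle, since a quasi-isomorphism of complexes of vector spaces is a chain homotopy equivalence.

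The genuine gap is the assertion that $[R_n]$ (equivalently $[F^1_1\circ R_n]$) vanishes. Being a cocycle is not enough: the cohomology of this Hom-complex is nonzero in general, and nothing you have set up forces the obstruction class to be trivial. The only way to kill it is to exploit the $L_\infty$-relations of $F$ together with the lower-order relations for $G$, and for that the induction must carry more data than the $G^1_k$ alone: one constructs, alongside $G^1_1,\dots,G^1_{n-1}$, the Taylor coefficients of an $L_\infty$-homotopy witnessing that $F\circ G$ (or $G\circ F$) agrees with the identity up to order $n-1$, and the exactness of $F^1_1\circ R_n$ is then read off from that homotopy data. Your $H,\widetilde{H}$ from the linear step are only the $n=1$ piece of this and are never propagated, so the phrase ``correction terms built from lower $G^1_k$, $H$, $\widetilde{H}$'' is doing all the real work without justification. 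Either strengthen the inductive hypothesis to include the full homotopy data, or take the cleaner route you mention only in passing: transfer the $L_\infty$-structures to their cohomologies (minimal models) via homotopy transfer, note that an $L_\infty$-morphism whose linear part is an isomorphism is invertible as a morphism of coalgebras, invert the induced map of minimal models there, and conjugate back by the transfer quasi-isomorphisms. Either repair yields the lemma, but as written the obstruction-vanishing step would not survive scrutiny.
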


\begin{rem}
Lemma \ref{lem:L_infty_relation} implies an equivalence relation defined by $L_\infty$-quasi-isomorphisms, i.e. two $L_\infty$-algebras are $L_\infty$-quasi-isomorphic if and only if there is an $L_\infty$-quasi-isomorphism between them. This solves the problem that one has to face regarding DGLAs, where the equivalence relation only holds on the level of quasi-isomorphisms. 
\end{rem}

Let us see how we can extract an $L_\infty$-structure from any DGLA $\mathfrak{g}$ which indeed implies that an $L_\infty$-algebra is a generalization of a DGLA. Define $Q_1^1$ to be a multiple of the differential. For $n=2$, we can write condition \eqref{eq:L_infty_differential} as 
\begin{equation}
\label{eq:example_condition_n=2}
Q_1^1Q_2^1+Q_2^1Q_2^2=0.
\end{equation}
We identify $Q_1^1$ with the differential $\dd$ (up to some sign) and note that \eqref{eq:example_condition_n=2} suggests that $Q_2^1$ should be expressed in terms of the bracket $[\enspace,\enspace]$ since \eqref{eq:example_condition_n=2} has the same form as the compatibility condition between the bracket and the differential. 
In particular, we get
\begin{align}
    Q_1^1(a):=(-1)^\alpha \dd a,\qquad &a\in \mathfrak{g}^\alpha,\\
    \label{eq:bracket_Q}
    Q_2^1(b\otimes c):=(-1)^{\beta(\gamma-1)}[b,c],\qquad &b\in\mathfrak{g}^\beta,c\in\mathfrak{g}^\gamma,\\
    Q_n^1=0,\qquad &\forall n\geq 3. 
\end{align}

For $n=3$, we get 
\begin{equation}
\label{eq:example_condition_n=3}
Q_1^1Q_3^1+Q_2^1Q_3^2+Q_3^1Q_3^3=0.
\end{equation}

If we insert the definition of the bracket \eqref{eq:bracket_Q} and expand $Q_3^2$ in terms of $Q_2^1$ in \eqref{eq:example_condition_n=3}, we get

\begin{multline}
\label{eq:L_infty_Jacobi}
    (-1)^{(\alpha+\beta)(\gamma-1)}\left[(-1)^{\alpha(\beta-1)}[a,b],c\right]\\+(-1)^{(\alpha+\gamma)(\beta-1)}(-1)^{(\gamma-1)(\beta-1)}\left[(-1)^{\alpha(\gamma-1)}[a,c],b\right]\\+(-1)^{(\beta+\gamma)(\alpha-1)}(-1)^{(\beta+\gamma)(\alpha-1)}\left[(-1)^{\beta(\gamma-1)}[b,c],a\right]=0.
\end{multline}

\begin{rem}
Note that, by rearranging signs, \eqref{eq:L_infty_Jacobi} is the same as the \emph{graded Jacobi identity}.
\end{rem}

Similar constructions hold for a DGLA morphism $F\colon \mathfrak{g}\to \widetilde{\mathfrak{g}}$. It induces an $L_\infty$-morphism $\bar F$ which is completely determined by its first component $\bar F_1^1:=F$. The nontrivial conditions for $n=1$ and $n=2$ on $\bar F$, induced by \eqref{eq:L_infty_morphism}, are given by 
\begin{align*}
    \bar F_1^1Q_1^1(f)=\widetilde{Q}_1^1\bar F_1^1(f)&\Leftrightarrow F(\dd f)=\widetilde{\dd} F(f),\\
    \bar F_1^1Q_2^1(f\otimes g)+\bar F_1^2Q_2^2(f\otimes g)=\widetilde{Q}_1^1\bar F_1^2(f\otimes g)+\widetilde{Q}_2^1\bar F_2^2(f\otimes g)&\Leftrightarrow F\left([f,g]\right)=[F(f),F(g)].
\end{align*}

\begin{rem}
\label{rem:homotopy_Jacobi}
For $n=3$, where the $Q_3^1$ do not vanish, one can show that \eqref{eq:L_infty_Jacobi} would have been satisfied up to \emph{homotopy}, i.e. up to a term of the form
\[
\dd\rho(g,h,k)\pm \rho(\dd g,h,k)\pm \rho(g,\dd h,k)\pm \rho(g,h,\dd k),
\]
where $\rho\colon \bigwedge^3\mathfrak{g}\to \mathfrak{g}[-1]$. In this case, one says that $\mathfrak{g}$ has the structure of a \emph{homotopy Lie algebra}.
\end{rem}

\begin{rem}
The concept of Remark \ref{rem:homotopy_Jacobi} can be generalized by introducing the canonical isomorphism between the symmetric and exterior algebra (usually called \emph{d\'eclage isomorphism}) given by 
\begin{align*}
\mathrm{dec}_n\colon \Sym^n(\mathfrak{g}[1])&\xrightarrow{\sim} \bigwedge^n\mathfrak{g}[n],\\
x_1\otimes\dotsm \otimes x_n&\mapsto (-1)^{\sum_{1\leq i\leq n}(n-i)(\vert x_i\vert-1)}x_1\land\dotsm\land x_n,
\end{align*}
to define for each $n$ a \emph{multibracket} of degree $2-n$
\[
[\enspace,\ldots,\enspace]_n\colon \bigwedge^n\mathfrak{g}\to \mathfrak{g}[2-n]
\]
by starting from the corresponding $Q_n^1$. In fact, \eqref{eq:L_infty_differential} induces an infinite family of conditions on these multibrackets. A graded vector space $\mathfrak{g}$ together with such a family of operators is called a \emph{strong homotopy Lie algebra} (SHLA) (see also \cite{LadaMarkl1995}).
\end{rem}

\begin{defn}[Maurer--Cartan equation]
The \emph{Maurer--Cartan equation} of a DGLA $\mathfrak{g}$ is given by 
\begin{equation}
    \label{eq:Maurer-Cartan}
    \dd a+\frac{1}{2}[a,a]=0,\quad a\in \mathfrak{g}^1.
\end{equation}
\end{defn}

\begin{rem}
It is easy to show that the set of solutions to the Maurer--Cartan equation \eqref{eq:Maurer-Cartan} is preserved under the action of a morphism between DGLAs. Moreover, the group action by the gauge group that is canonically defined through the degree zero part of any formal DGLA also preserves the set of solutions to \eqref{eq:Maurer-Cartan}. This can be extended to the formal setting\footnote{Note that we can set $\mathfrak{g}[\![\hbar]\!]:=\mathfrak{g}\otimes k[\![\hbar]\!]$ and show that this is again a DGLA. As we have seen before, the gauge group is then formally defined as $G:=\exp(\hbar\mathfrak{g}^0[\![\hbar]\!])$, where $\mathfrak{g}^0[\![\hbar]\!]$ denotes the Lie algebra given as the degree zero part of $\mathfrak{g}[\![\hbar]\!]$. Note that the action of $\hbar\mathfrak{g}^1[\![\hbar]\!]$ is defined by generalizing the adjoint action. Namely, we have
\[
\exp(\hbar X)a:=\sum_{n\geq 0}\frac{(\mathrm{ad}_X)^n}{n!}(a)-\sum_{n\geq 0}\frac{(\mathrm{ad}_X)^n}{(n+1)!}(\dd X)=a+\hbar[X,a]-\hbar\dd X+O(\hbar^2),\quad \forall X\in\mathfrak{g}^0[\![\hbar]\!],a\in \mathfrak{g}^1[\![\hbar]\!].
\]}. For a DGLA $\mathfrak{g}$, we will denote by $\mathrm{MC}(\mathfrak{g})$ the set of solutions to \eqref{eq:Maurer-Cartan} in $\mathfrak{g}$. 
\end{rem}

\subsection{The DGLA of multivector fields $\calV$}

Recall from Section \ref{subsec:Multivectorfields_differentialforms} that a multivector field of degree $j\geq 1$ on some manifold $M$ is an element
\[
X\in\mathfrak{X}^j(M):=\Gamma\left(\bigwedge^j TM\right).
\]
In local coordinates, we can write 
\[
X=\sum_{1\leq i_1<\dotsm <i_j\leq \dim M}X^{i_1\dotsm i_j}\de_{i_1}\land\dotsm \land \de_{i_j}.
\]
Hence, if we consider the collection of these vector spaces in all degrees, we naturally get a graded vector space structure
\[
\widetilde{\calV}(M):=\bigoplus_{j\geq 0}\widetilde{\calV}^j(M),\qquad \widetilde{\calV}^j(M):=\begin{cases}C^\infty(M),&j=0\\ \mathfrak{X}^j(M),&j\geq 1\end{cases}
\]
Using the \emph{Schouten--Nijenhuis bracket} as defined in \eqref{eq:Schouten-bracket}, we can endow $\widetilde{\calV}(M)$ with a GLA structure. We will write $[\enspace,\enspace]_{\mathrm{SN}}$ to indicate that we mean the Schouten--Nijenhuis bracket. We note/recall the following identities for the Schouten Nijenhuis bracket:
\begin{enumerate}
    \item $[X,Y]_{\mathrm{SN}}=-(-1)^{(\vert X\vert+1)(\vert Y\vert+1)}[Y,X]_\mathrm{SN}$,
    \item $[X,Y\land Z]_\mathrm{SN}=[X,Y]_\mathrm{SN}\land Z+(-1)^{(\vert Y\vert+1)\vert Z\vert}Y\land[X,Z]_\mathrm{SN}$,
    \item $[X,[Y,Z]_\mathrm{SN}]_\mathrm{SN}=[[X,Y]_\mathrm{SN},Z]_\mathrm{SN}+(-1)^{(\vert X\vert+1)(\vert Y\vert+1)}[Y,[X,Z]_\mathrm{SN}]_\mathrm{SN}$.
\end{enumerate}

In order, to obtain the correct signs, we have to shift everything by $+1$. Thus, we define the GLA of multivector fields to be $\calV(M):=\widetilde{\calV}(M)[1]$. Namely, we have
\[
\calV(M):=\bigoplus_{j\geq -1}\calV^j(M),\qquad \calV^j(M):=\widetilde{\calV}^{j+1}(M).
\]
Finally, using the zero differential, we can turn it into a DGLA $(\calV(M),[\enspace,\enspace]_{\mathrm{SN}},\dd=0)$.

\subsubsection{The case of Poisson bivector fields}
\label{subsubsec:DGLA_Poisson}
Consider a manifold $M$ together with a Poisson bivector field $\pi\in \calV^1(M)$. Then we can consider its induced Poisson bracket $\{\enspace,\enspace\}$ on $C^\infty(M)$. Recall that, by Exercise \ref{ex:vanishing_SN_bracket}, we can translate the Jacobi identity of the Poisson bracket to the condition $[\pi,\pi]_{\mathrm{SN}}=0$. Indeed, we have 
\begin{multline*}
    \{\{f,g\},h\}+\{\{g,h\},f\}+\{\{h,f\},g\}=0\Leftrightarrow\\
    \pi^{ij}\de_j\pi^{k\ell}\de_i f\de_kg\de_\ell h+\pi^{ij}\de_j\pi^{k\ell}\de_ig\de_kh\de_\ell f+\pi^{ij}\de_j\pi^{k\ell}\de_ih\de_k\de_\ell g=0\Leftrightarrow\\
    \pi^{ij}\de_j\pi^{k\ell}\de_i\land \de_k\land \de_\ell=0\Leftrightarrow [\pi,\pi]_{\mathrm{SN}}=0.
\end{multline*}

Note that, since we have endowed $\calV(M)$ with the zero differential, we get that Poisson bivector fields are exactly solutions to the Maurer--Cartan equation on the DGLA $\calV(M)$:
\[
\dd\pi+\frac{1}{2}[\pi,\pi]_{\mathrm{SN}}=0,\quad \pi\in\calV^1(M).
\]
In particular, \emph{formal} Poisson structures $\{\enspace,\enspace\}_\hbar$ are associated to formal bivectors $\pi_\hbar\in \hbar\calV^1(M)[\![\hbar]\!]$.

\subsection{The DGLA of multidifferential operators $\calD$}

Note that to any associative algebra $\calA$ over a field $k$, we can assign the complex of multilinear maps given by 
\[
\calC:=\sum_{j\geq-1}\calC^j,\qquad \calC^j:=\Hom_k(\calA^{\otimes(j+1)},\calA).
\]

Define a family of composition $\{\circ_i\}$ such that for an $(m+1)$-linear operator $\phi\in \calC^{m}$ and an $(n+1)$-linear operator $\psi\in\calC^{n}$ we have 
\[
(\phi\circ_i\psi)(f_0\otimes\dotsm\otimes f_{m+n}):=\phi(f_0\otimes \dotsm\otimes f_{i-1}\otimes\psi(f_i\otimes \dotsm\otimes f_{i+n})\otimes f_{i+n+1}\otimes\dotsm\otimes f_{m+n}).
\]
If we sum over all the different ways of composition, we get 
\begin{equation}
\label{eq:composition}
\phi\circ\psi:=\sum_{0\leq i\leq m}(-1)^{in}\phi\circ_i\psi,
\end{equation}
which we can use to endow $\calC$ with a GLA structure. 

\begin{defn}[Gerstenhaber bracket]
The \emph{Gerstenhaber bracket} on the graded vector space $\calC$ is defined by
\begin{align}
\begin{split}
\label{eq:Gerstenhaber_bracket}
[\enspace,\enspace]_\mathrm{G}\colon \calC^m\otimes\calC^n&\to \calC^{m+n},\\
(\phi,\psi)&\mapsto[\phi,\psi]_\mathrm{G}:=\phi\circ\psi-(-1)^{mn}\psi\circ \phi.
\end{split}
\end{align}
\end{defn}

\begin{rem}
There is a more general notion of a \emph{Gerstenhaber algebra} \cite{Ger1963}, which describes the connection between \emph{supercommutative rings} \cite{Varadarajan2004} and \emph{graded Lie superalgebras} \cite{Kac1977}. In particular, it consists of a graded commutative algebra endowed with a Poisson bracket of degree $-1$. This structure plays also a fundamental role in other constructions of theoretical physics, such as in \emph{DeDonder--Weyl theory} \cite{DeDonder1930,Weyl1935} (a generalization of the Hamiltonian formalism to field theory) and in the \emph{Batalin--Vilkovisky formalism} \cite{BV1,BV2,BV3} (a way of treating perturbative quantization of gauge theories).
\end{rem}

\begin{prop}
The graded vector space $\calC$ together with the Gerstenhaber bracket \eqref{eq:Gerstenhaber_bracket} is a GLA, which we call \emph{Hochschild GLA}.
\end{prop}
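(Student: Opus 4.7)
The plan is to verify the three defining axioms of a graded Lie algebra for the pair $(\calC, [\enspace,\enspace]_\mathrm{G})$ in sequence. Homogeneity is essentially by construction: each partial composition $\phi \circ_i \psi$ of an $(m+1)$-linear and an $(n+1)$-linear map is an $(m+n+1)$-linear map, hence lies in $\calC^{m+n}$, so the linear combinations defining $\phi \circ \psi$ and then $[\phi,\psi]_\mathrm{G}$ land in $\calC^{m+n}$. Graded skew-symmetry is immediate: expanding $-(-1)^{mn}[\psi,\phi]_\mathrm{G}$ and using $(-1)^{2mn}=1$ reproduces $[\phi,\psi]_\mathrm{G}$ term by term.

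The heart of the proof is the graded Jacobi identity, and my strategy is to deduce it from a stronger structural fact about the circle product itself: the operation $\circ$ endows $\calC$ with a graded \emph{pre-Lie} (right-symmetric) structure. Concretely, I would show that the associator
\[
A(\phi,\psi,\chi) := (\phi \circ \psi) \circ \chi - \phi \circ (\psi \circ \chi)
\]
is graded symmetric in its last two entries, i.e.\ $A(\phi,\psi,\chi) = (-1)^{np}\, A(\phi,\chi,\psi)$ for $\phi \in \calC^m$, $\psi \in \calC^n$, $\chi \in \calC^p$. It is a standard and purely formal fact that the graded commutator attached to any graded pre-Lie product automatically satisfies the graded Jacobi identity, so this reduction suffices to finish the proof.

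To establish the pre-Lie identity, I would expand $(\phi \circ_i \psi) \circ_j \chi$ and split the sum over $(i,j)$ into three cases according to where the slot indexed by $j$ sits relative to the block of positions $\{i, i+1, \ldots, i+n\}$ into which $\psi$ was substituted. When $j$ lies inside that block, the resulting term matches precisely a contribution $\phi \circ_i (\psi \circ_{j-i} \chi)$ in $\phi \circ (\psi \circ \chi)$, so these terms subtract out and the associator receives no contribution from the "nested" case. The remaining cases, $j < i$ and $j > i+n$, correspond to $\psi$ and $\chi$ being inserted into disjoint slots of $\phi$; a relabelling of summation indices together with a count of the sign $(-1)^{in+jp}$ produced by the definition of $\circ$ shows that this "disjoint" contribution to $A(\phi,\psi,\chi)$ coincides with $(-1)^{np}$ times the analogous contribution to $A(\phi,\chi,\psi)$.

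The main obstacle will be the sign bookkeeping in this case analysis, specifically tracking how the degree shift (an $(j+1)$-linear map has degree $j$) interacts with the position-dependent signs $(-1)^{in}$ in the total sum defining $\phi \circ \psi$, and then how those signs combine when $\chi$ is inserted either to the left or to the right of the $\psi$-block. Once the graded pre-Lie identity is in hand, writing out the three cyclic terms of the graded Jacobi sum for $[\enspace,\enspace]_\mathrm{G}$ and invoking the symmetry of $A$ causes every remaining term to cancel, completing the proof.
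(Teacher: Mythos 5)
Your proposal is correct, and its computational core coincides with the paper's own proof: the paper verifies the graded Jacobi identity by expanding the three cyclic terms and splitting each $(\phi\circ_i\psi)\circ_k\chi$ into exactly the three cases you describe (the slot $k$ lying inside the block occupied by $\psi$, to its left, or to its right), with the nested case reproducing $\phi\circ_i(\psi\circ_{k-i}\chi)$ and the disjoint cases cancelling pairwise after the sign count. The only difference is organizational: you isolate this case analysis as a graded pre-Lie (right-symmetry of the associator) lemma and then invoke the standard fact that the graded commutator of a pre-Lie product is a graded Lie bracket, whereas the paper carries out the same cancellation directly inside the Jacobi expansion. Your packaging is a clean and slightly more conceptual way to state the same argument, and the sign bookkeeping you flag (the interplay of the shifted degrees with the factors $(-1)^{in}$) is indeed the only delicate point in either version.
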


\begin{proof}
Clearly, the Gerstenhaber bracket is linear. Note that the sign $(-1)^{mn}$ ensures that it is also (graded) skew-symmetric, since 
\[
[\phi,\psi]_\mathrm{G}=-(-1)^{mn}\bigg(\psi\circ\phi-(-1)^{mn}\phi\circ\psi\bigg)=-(-1)^{mn}[\psi,\phi]_\mathrm{G},\quad \forall \phi\in \calC^m,\psi\in\calC^n.
\]
Next, we need to make sure that $[\enspace,\enspace]_\mathrm{G}$ satisfies the (graded) Jacobi identity
\begin{equation}
\label{eq:Gerstenhaber_Jacobi}
[\phi,[\psi,\chi]_\mathrm{G}]_\mathrm{G}=[[\phi,\psi]_\mathrm{G}]_\mathrm{G}+(-1)^{mn}[\psi,[\phi,\chi]_\mathrm{G}]_\mathrm{G},\quad\forall \phi\in\calC^m,\psi\in\calC^m,\chi\in\calC^p.
\end{equation}
Note that the left hand side of \eqref{eq:Gerstenhaber_Jacobi} gives
\begin{multline}
\label{eq:expansion}
    \bigg(\phi\circ\psi-(-1)^{mn}\psi\circ\phi\bigg)\circ\chi-(-1)^{(m+n)p}\chi\circ\bigg(\phi\circ\psi-(-1)^{mn}\psi\circ\phi\bigg)\\
    =\sum_{0\leq i\leq m\atop 0\leq k\leq m+n}(-1)^{in+kp}(\phi\circ_i\psi)\circ_k\chi-\sum_{0\leq j\leq n\atop 0\leq k\leq m+n}(-1)^{m(j+n)+kp}(\psi\circ_j\phi)\circ_k\chi\\
    -\sum_{0\leq i\leq m\atop 0\leq k\leq p}(-1)^{(m+n)(k+p)+in}\chi\circ_k(\phi\circ_i\psi)+\sum_{0\leq j\leq n\atop 0\leq k\leq p}(-1)^{(m+n)(k+p)+m(j+n)}\chi\circ_k(\psi\circ_j\phi).
\end{multline}
We can decompose the first sum according to the following rule 
\[
(\phi\circ_i\psi)\circ_k\chi=\begin{cases}(\phi\circ_k\chi)\circ_i\psi,&k<i\\ \phi\circ_i(\psi\circ_{k-i}\chi),&i\leq k\leq i+n\\ (\phi\circ_{k-n}\chi)\circ_i\psi,&i+n<k\end{cases}
\]
into a term of the form 
\begin{equation}
\label{eq:first_sum}
\sum_{0\leq i\leq m\atop i\leq k\leq i+n}(-1)^{in+kp}\phi\circ_i(\psi\circ_{k-i}\chi)=\sum_{0\leq i\leq m\atop 0\leq k\leq n}(-1)^{(n+p)i+kp}\phi\circ_i(\psi\circ_k\chi).
\end{equation}
Note that the sign of \eqref{eq:first_sum} is the same as the one in the corresponding term coming from $(\phi\circ\psi)\circ\chi$ on the left hand side of \eqref{eq:expansion} plus the terms in which the $i$-th and $k$-th composition commute. These then cancel the corresponding terms coming from the expansion of the second term of the right hand side of \eqref{eq:Gerstenhaber_Jacobi}.
Using the same approach for the remaining terms, we get the proof.
\end{proof}

\begin{rem}
Note that associative multiplications on $\calA$ are elements in $\calC^1$. In particular, if we denote by $\mathsf{m}\in\calC^1$ a multiplication map, the associativity condition is given by 
\[
\mathsf{m}(\mathsf{m}(f\otimes g)\otimes h)-\mathsf{m}(f\otimes \mathsf{m}(g\otimes h))=0.
\]
This is in fact equivalent to the condition that the Gerstenhaber bracket of $\mathsf{m}$ with itself vanishes. Indeed, we have 
\begin{multline*}
[\mathsf{m},\mathsf{m}]_\mathrm{G}(f\otimes g\otimes h)=\sum_{0\leq i\leq 1}(-1)^i(\mathsf{m}\circ_i\mathsf{m})(f\otimes g\otimes h)-(-1)^1\sum_{0\leq i\leq 1}(-1)^i(\mathsf{m}\circ_i\mathsf{m})(f\otimes g\otimes h)\\
=2\bigg(\mathsf{m}(\mathsf{m}(f\otimes g)\otimes h)-\mathsf{m}(f\otimes\mathsf{m}(g\otimes h))\bigg).
\end{multline*}
\end{rem}

Now, for an element $\phi$ of a (DG) Lie algebra $\mathfrak{g}$ (of degree $\ell$), we get that 
\[
\mathrm{ad}_\phi:=[\phi,\enspace]
\]
is a derivation (of degree $\ell$). Indeed, by the Jacobi identity, we have 
\[
\mathrm{ad}_\phi[\psi,\xi]=[\mathrm{ad}_\phi\psi,\xi]+(-1)^{\ell m}[\psi,\mathrm{ad}_\phi\xi],\quad \forall \psi\in\mathfrak{g}^m,\xi\in \mathfrak{g}^n.
\]

\begin{defn}[Hochschild differential]
The \emph{Hochschild differential} is given by 
\begin{align}
    \begin{split}
        \label{eq:Hochschild_differential}
        \dd_\mathsf{m}\colon \calC^i&\to \calC^{i+1}\\
        \psi&\mapsto \dd_\mathsf{m}\psi:=[\mathsf{m},\psi]_\mathrm{G}.
    \end{split}
\end{align}
\end{defn}

\begin{prop}
The GLA $\calC$ endowed with the Hochschild differential $\dd_\mathsf{m}$ is a DGLA.
\end{prop}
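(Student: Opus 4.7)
The plan is to verify the two conditions that distinguish a DGLA from a GLA: namely, that $\dd_\mathsf{m}$ is a differential of degree $+1$ squaring to zero, and that it satisfies the graded Leibniz rule with respect to the Gerstenhaber bracket. The central observation is that both conditions are formal consequences of the graded Jacobi identity (already established for $(\calC,[\enspace,\enspace]_\mathrm{G})$) together with the single extra input that associativity of $\mathsf{m}$ is equivalent to $[\mathsf{m},\mathsf{m}]_\mathrm{G}=0$, as recorded in the remark immediately preceding the proposition.

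First I would note that, since $\mathsf{m}\in \calC^1$ and $[\calC^m,\calC^n]_\mathrm{G}\subset \calC^{m+n}$, the operator $\dd_\mathsf{m}=[\mathsf{m},\enspace]_\mathrm{G}$ maps $\calC^i$ to $\calC^{i+1}$, so it has the correct degree. For the square-zero property, I would apply the graded Jacobi identity with $\phi=\psi=\mathsf{m}$ (both of degree $1$) and $\chi\in \calC^p$ arbitrary, obtaining
$$[\mathsf{m},[\mathsf{m},\chi]_\mathrm{G}]_\mathrm{G}=[[\mathsf{m},\mathsf{m}]_\mathrm{G},\chi]_\mathrm{G}+(-1)^{1\cdot 1}[\mathsf{m},[\mathsf{m},\chi]_\mathrm{G}]_\mathrm{G}.$$
Rearranging gives $2\,\dd_\mathsf{m}^2\chi=[[\mathsf{m},\mathsf{m}]_\mathrm{G},\chi]_\mathrm{G}$, which vanishes since $[\mathsf{m},\mathsf{m}]_\mathrm{G}=0$ by associativity. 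Hence $\dd_\mathsf{m}^2=0$.

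For the Leibniz rule, I would apply the graded Jacobi identity once more, now with $\phi=\mathsf{m}$ (degree $1$), $\psi\in\calC^n$, and $\chi\in\calC^p$, to get
$$[\mathsf{m},[\psi,\chi]_\mathrm{G}]_\mathrm{G}=[[\mathsf{m},\psi]_\mathrm{G},\chi]_\mathrm{G}+(-1)^{1\cdot n}[\psi,[\mathsf{m},\chi]_\mathrm{G}]_\mathrm{G},$$
which is precisely the compatibility statement $\dd_\mathsf{m}[\psi,\chi]_\mathrm{G}=[\dd_\mathsf{m}\psi,\chi]_\mathrm{G}+(-1)^n[\psi,\dd_\mathsf{m}\chi]_\mathrm{G}$. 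There is no genuine obstacle here: all the substantive work was carried out in establishing the graded Jacobi identity for $[\enspace,\enspace]_\mathrm{G}$ and in identifying associativity of $\mathsf{m}$ with $[\mathsf{m},\mathsf{m}]_\mathrm{G}=0$. Once those two facts are in hand, the DGLA structure on $(\calC,[\enspace,\enspace]_\mathrm{G},\dd_\mathsf{m})$ is an immediate formal corollary.
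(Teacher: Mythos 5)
Your proof is correct and follows essentially the same route as the paper: the square-zero property is obtained exactly as in the paper's proof, by applying the graded Jacobi identity with both outer entries equal to $\mathsf{m}$ and using that associativity of $\mathsf{m}$ is equivalent to $[\mathsf{m},\mathsf{m}]_\mathrm{G}=0$. Your explicit check of the graded Leibniz rule is the same observation the paper makes just before the proposition (that $\mathrm{ad}_\phi=[\phi,\enspace]$ is a derivation by the Jacobi identity), so there is no substantive difference.
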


\begin{proof}
The only thing we need to check is that the Hochschild differential indeed squares to zero (i.e. $\dd_\mathsf{m}\circ \dd_\mathsf{m}=0$). This follows immediately from the Jacobi identity and the associativity condition of $\mathsf{m}$ expressed in terms of the Gerstenhaber bracket:
\begin{multline*}
(\dd_\mathsf{m}\circ\dd_\mathsf{m})\psi=[\mathsf{m},[\mathsf{m},\psi]_\mathrm{G}]_\mathrm{G}=[[\mathsf{m},\mathsf{m}]_\mathrm{G},\psi]_\mathrm{G}-[\mathsf{m},[\mathsf{m},\psi]_\mathrm{G}]_\mathrm{G}\\
=-[\mathsf{m},[\mathsf{m},\psi]_\mathrm{G}]_\mathrm{G}\Leftrightarrow \dd_\mathsf{m}\circ \dd_\mathsf{m}=0.
\end{multline*}
\end{proof}

\begin{rem}
In fact, we explicitly have
\begin{multline*}
(\dd_\mathsf{m}\psi)(f_0\otimes\dotsm \otimes f_{n+1})=\sum_{0\leq i\leq n}(-1)^{i+1}\psi(f_0\otimes \dotsm\otimes f_{i-1}\otimes\mathsf{m}(f_i\otimes f_{i+1})\otimes\dotsm \otimes f_{n+1})\\
+\mathsf{m}(f_0\otimes \psi(f_1\otimes\dotsm\otimes f_{n+1}))+(-1)^{n+1}\mathsf{m}(\psi(f_0\otimes\dotsm\otimes f_n)\otimes f_{n+1}).
\end{multline*}
\end{rem}

Let now $\calA=C^\infty(M)$ and consider the and consider the subalgebra $\widetilde{\calD}(M)\subset \calC$, which is the (graded) vector space given as the collection 
\[
\widetilde{\calD}^i:= \bigoplus_i\widetilde{\calD}^i
\]
of subspaces $\widetilde{\calD}^i(M)\subset \calC^i$ which consist of differential operators on $C^\infty(M)$. One can check that $\widetilde{\calD}(M)$ is closed with respect to the Gerstenhaber bracket $[\enspace,\enspace]_\mathrm{G}$ and the differential $\dd_\mathsf{m}$ and hence is a DGL subalgebra\footnote{Note that we will also allow differential operators of degree zero, i.e. operators which do not ``derive'' anything. Thus, the associative product $\mathsf{m}$ is also an element of $\widetilde{\calD}^1(M)$.} of $\calC$.\\

Moreover, we want to restrict everything to differential operators which vanish on constant functions. They will in fact form another DGL subalgebra $\calD(M)\subset \widetilde{\calD}(M)$. Note that we want to have this restriction because of the fact that the coefficients in the expansion of a star product vanish on constant functions, i.e. $B_i(1,f)=0$ for all $i\in\N$ and $f\in C^\infty(M)$. Note however that the Hochschild differential $\dd_\mathsf{m}$ is not inner derivation anymore for $\widetilde{\calD}(M)$ since the multiplication does not vanish on constants.\\

Consider an element $B\in \calD^1(M)$. Then we want to regard $\mathsf{m}+B$ as a deformation of the original product. The associativity condition for $\mathsf{m}+B$ implies
\[
[\mathsf{m}+B,\mathsf{m}+B]_\mathrm{G}=0,
\]
and by associativity of $\mathsf{m}$ we get 
\[
[\mathsf{m},B]_\mathrm{G}=[B,\mathsf{m}]_\mathrm{G}=\dd_\mathsf{m}B.
\]
This does exactly lead to the Maurer--Cartan equation:
\begin{equation}
    \label{eq:MC_operator}
    \dd_\mathsf{m}B+\frac{1}{2}[B,B]_\mathrm{G}=0.
\end{equation}

\begin{rem}
If we consider the \emph{formal} version of $\calD(M)$, we can see that the deformed product does exactly describe a star product since $B\in\hbar \calD^1(M)[\![\hbar]\!]$. Moreover, the gauge group is given by formal differential operators and the action on the star product is given by \eqref{eq:equivalent} since the adjoint action, by the definition of the Gerstenhaber bracket, is exactly the composition of $D_i$ with $B_j$.
\end{rem} 

\subsection{The Hochschild--Kostant--Rosenberg map}

In \cite{HochschildKostantRosenberg1962}, Hochschild--Kostant--Rosenberg have shown that the cohomology of the algebra of multidifferential operators and the algebra of multivector fields, which is equal to its cohomology, are isomorphic. They established an isomorphism 
\begin{equation}
\label{eq:HKR}
\mathrm{HKR}\colon H(\widetilde{\calD}(M))\xrightarrow{\sim}\widetilde{\calV}(M)\cong H(\widetilde{\calV}(M)).
\end{equation}
In particular, this isomorphism is induced by the map 
\begin{align}
    \begin{split}
        \label{eq:first_term}
        U^{(0)}_1\colon \widetilde{\calV}(M)&\to \widetilde{\calD}(M),\\
        \xi_0\land\dotsm \land \xi_n&\mapsto\bigg(f_0\otimes\dotsm \otimes f_n\mapsto \frac{1}{(n+1)!}\sum_{\sigma\in S_{n+1}}\mathrm{sign}(\sigma)\xi_{\sigma(0)}(f_0)\dotsm \xi_{\sigma(n)}(f_n)\bigg)
    \end{split}
\end{align}

\begin{rem}
This is extended to vector fields of order zero by the identity map. However, the map \eqref{eq:first_term} fails to preserve the Lie structure. This can be easily checked for order two. Indeed, given homogeneous bivector fields $\chi_1\land \chi_2$ and $\xi_1\land \xi_2$, we get
\[
U^{(0)}_1\bigg([\chi_1\land\chi_2,\xi_1\land\xi_2]_{\mathrm{SN}}\bigg)\not=\left[U^{(0)}_1(\chi_1\land\chi_2),U^{(0)}_1(\xi_1\land\xi_2)\right]_\mathrm{G}.
\]
The left hand side gives 
\begin{multline*}
    U^{(0)}_1\bigg([\chi_1,\xi_1]\land \chi_2\land\xi_2-[\chi_1,\xi_2]\land\chi_2\land \xi_1-[\chi_2,\xi_1]\land \chi_1\land\xi_2+[\chi_2,\xi_2]\land\chi_1\land \xi_1\bigg)(f\otimes g\otimes h)\\
    =\frac{1}{6}\bigg(\chi_1\big(\xi_1(f)\chi_2(g)\xi_2(h)\big)-\xi_1\big(\chi_1(f)\chi_2(g)\xi_2(h)\big)\\
    -\chi_1\big(\xi_2(f)\chi_2(g)\xi_1(h)\big)+\xi_2\big(\chi_1(f)\chi_2(g)\xi_1(h)\big)-\chi_2\big(\xi_1(f)\chi_1(g)\xi_2(h)\big)+\xi_1\big(\chi_2(f)\chi_1(g)\xi_2(h)\big)\\
    +\chi_2\big(\xi_2(f)\chi_1(g)\xi_1(h)\big)+\xi_2\big(\chi_2(f)\chi_1(g)\xi_1(h)\big)\bigg)+\mathrm{permutations}.
\end{multline*}
The right hand side is given by 
\[
\frac{1}{4}\bigg(\chi_1\big(\xi_1(f)\xi_2(g)\big)\chi_2(h)+\dotsm\bigg).
\]
One can, however, still show that the difference of the two terms is given by the image of a closed term in the cohomology of $\calD(M)$. Thus, there is some hope to control everything and still extend it to a Lie algebra morphism whose first order approximation is given by \eqref{eq:first_term}. Again, to resolve this problem, we will consider an $L_\infty$-morphism $U$. 
\end{rem}

\subsection{The dual point of view}
\label{subsec:dual_view}
Let $V$ be a vector space and note that we can naturally identify polynomials on $V$ as the \emph{symmetric functions} on the dual space $V^*$
\[
f(v):=\sum_j\frac{1}{j!}f_j(v\otimes\dotsm \otimes v),\quad \forall v\in V,
\]
where $f_j\in \Sym^j(V^*)$. We want to extend this to the case when $V$ is a graded vector space. In this case, we need to consider the exterior algebra instead. Note that, if we consider the \emph{injective limit} of the $\Sym^j(V^*)$ (resp. $\bigwedge^jV^*$) endowed with the induced topology, we can define the corresponding completion $\widehat{\Sym}(V^*)$ (resp. $\widehat{\bigwedge V^*}$). Now we can define a function in a formal neighborhood of $0$ to be given by the formal Taylor expansion in the $\hbar$
\[
f(\hbar v):=\sum_j \frac{\hbar^j}{j!}f_j(v\otimes\dotsm \otimes v),\quad \forall v\in V.
\]
Thus, a vector field $X$ on $V$ can be regarded as a derivation on $\widehat{\bigwedge V^*}$ and the Leibniz rule makes sure that $X$ is completely determined by its restriction on $V^*$. Similarly, if we have an algebra morphism
\[
\varphi\colon \widehat{\bigwedge W^*}\to\widehat{\bigwedge V^*}, 
\]
it induces a map 
\[
f:=\varphi^*\colon \widehat{\bigwedge V}\to \widehat{\bigwedge W}
\]
whose components $f_j$ are completely defined by their projection on $W$ as the $\varphi_j$ are defined by their restriction on $W^*$.

\begin{defn}[(Formal) pointed map]
A \emph{(formal) pointed map} is an algebra morphism between the reduced exterior algebras
\[
\varphi\colon \widehat{\overline{\bigwedge W^*}}\to \widehat{\overline{\bigwedge V^*}},
\]
where the overline means that we are considering the \emph{reduced tensor algebra} $\bar T(W^*)$ (resp. $\bar T(V^*)$) as in Example \ref{ex:reduced_tensor_algebra}.
\end{defn}

\begin{defn}[Pointed vector field]
A \emph{pointed vector field} $X$ on a manifold $M$ is a vector field in $\mathfrak{X}(M)$ which fixes the point zero, i.e.
\[
X(f)(0)=0,\quad \forall f\in C^\infty(M).
\]
\end{defn}

\begin{defn}[Cohomological pointed vector field]
A pointed vector field $X$ is called \emph{cohomological} (or \emph{$Q$-field}) if and only if it commutes with itself, i.e. $X^2=\frac{1}{2}[X,X]=0$.
\end{defn}

\begin{defn}[Pointed $Q$-manifold]
A \emph{pointed $Q$-manifold} is a (formal) pointed manifold together with a cohomological vector field.
\end{defn}

Consider now a Lie algebra $\mathfrak{g}$. Note that the bracket $[\enspace,\enspace]\colon \bigwedge^2\mathfrak{g}\to \mathfrak{g}$ induces a linear map 
\[
[\enspace,\enspace]^*\colon\mathfrak{g}^*\to \bigwedge^2\mathfrak{g}^*.
\]
This can be extended to the whole exterior algebra by a map
\[
\delta\colon \bigwedge\mathfrak{g}^*\to \bigwedge\mathfrak{g}^*[1]
\]
such that it satisfies the Leibniz rule and $\delta\big|_{\mathfrak{g}^*}=[\enspace,\enspace]^*$. 

\begin{rem}
We can regard the exterior algebra as an \emph{odd} analogue of a manifold on which $\delta$ takes the role of a (pointed) vector field. Note also that, since $[\enspace,\enspace]$ satisfies the Jacobi identity, we have $\delta^2=0$ and thus $\delta$ defines a cohomological (pointed) vector field.
\end{rem}

Consider two Lie algebras $\mathfrak{g}$ and $\mathfrak{h}$ together with the corresponding cohomological vector fields $\delta_\mathfrak{g}$ and $\delta_\mathfrak{h}$ on the respective exterior algebras. Then a Lie algebra morphism $\varphi\colon \mathfrak{g}\to \mathfrak{h}$ will correspond to a chain map $\varphi^*\colon \mathfrak{h}^*\to \mathfrak{g}^*$ since 
\begin{equation}
\label{eq:correspondence_Q-manifold}
\varphi\bigg([\enspace,\enspace]_\mathfrak{g}\bigg)=[\varphi(\enspace),\varphi(\enspace)]_\mathfrak{h}\Longleftrightarrow \delta_\mathfrak{g}\circ \varphi^*=\varphi^*\circ\delta_\mathfrak{h}.
\end{equation}

\begin{rem}
Equation \eqref{eq:correspondence_Q-manifold} gives a first hint to the correspondence between $L_\infty$-algebras and pointed $Q$-manifolds. Since a Lie algebra is a particular case of a DGLA, which again can be endowed with an $L_\infty$-structure, we can see that the map $\varphi$ satisfies exactly the same condition for the first component of an $L_\infty$-morphism as in \eqref{eq:L_infty_morphism} for $n=1$.
\end{rem}

Let us extend this picture to the case of a graded vector space $Z$ over a field $k$. Let us decompose the graded space $Z$ into its \emph{even} and \emph{odd} part, i.e. 
\[
Z=Z_{[0]}\oplus Z_{[1]}
\]
where $Z_{[0]}$ denotes the even part, i.e. $Z_{[0]}:=\bigoplus_{i\in 2\mathbb{Z}}Z^i$ and $Z_{[1]}:=\bigoplus_{i\in 2\mathbb{Z}+1}Z^i$. Here we have denoted by $[0],[1]$ the equivalence classes in $\mathbb{Z}/2$. For a vector space $W$, denote by $\Pi W$ the (odd) space defined by a \emph{parity reversal} on $W$, which we can also denote by $W[1]$ using the notation of Definition \ref{defn:shift}.
Let us define $V:=Z_{[0]}$ and $\Pi W:=Z_{[1]}$. Then $Z=V\oplus \Pi W$ and functions can be identified by elements of 
\[
\Sym(Z^*):=\Sym(V^*)\otimes \bigwedge W^*.
\]
We can express the condition for a vector field $\delta\colon \Sym(Z^*)\to \Sym(Z^*)[1]$ to be \emph{cohomological} in terms of its coefficients
\[
\delta_j\colon \Sym^j(Z^*)\to \Sym^{j+1}(Z^*)
\]
by expanding the equation $\delta^2=0$. Hence, we get an infinite family of equations

\begin{align}
\label{eq:equation_1}
\delta_0\delta_0&=0,\\
\label{eq:equation_2}
\delta_1\delta_0+\delta_0\delta_1&=0,\\
\label{eq:equation_3}
\delta_2\delta_0+\delta_1\delta_1+\delta_0\delta_2&=0,\\
&\vdots\nonumber
\end{align}

Let now $m_j:=(\delta_j\vert_{Z^*})^*$ be the dual coefficients and let $\langle\enspace,\enspace\rangle\colon Z^*\otimes Z\to k$. Then we can rewrite these conditions in terms of $m_j\colon \Sym^j(Z)\to Z$. Note that the first Equation \eqref{eq:equation_1}, which is now given by $m_0m_0=0$, implies that $m_0$ is a differential on $Z$ and hence induces a cohomology $H_{m_0}(Z)$. For $j=1$, i.e. for the second Equation \eqref{eq:equation_2}, we get 
\begin{equation}
\label{eq:first_term_equation_2}
\langle\delta_1\delta_0 f,x\otimes y\rangle=\langle\delta_0f,m_1(x\otimes y)\rangle=\langle f,m_0(m_1(x\otimes y))\rangle 
\end{equation}
and 
\begin{multline}
\label{eq:second_term_equation_2}
   \langle\delta_0\delta_1 f,x\otimes y\rangle=\langle\delta_1f,m_0(x)\otimes y\rangle+(-1)^{\vert x\vert}\langle \delta_1f,x\otimes m_0(y)\rangle\\
   =\langle f,m_1(m_0(x)\otimes y)\rangle+(-1)^{\vert x\vert}\langle f,m_1(x\otimes m_0(y))\rangle.
\end{multline}
If we put \eqref{eq:first_term_equation_2} and \eqref{eq:second_term_equation_2} together, we get that $m_0$ is a derivation with respect to the multiplication defined by $m_1$.

\begin{rem}
If we consider $Z:=\mathfrak{g}[1]$ and consider the exterior algebra by the d\'eclage isomorphism ($\Sym^n(\mathfrak{g}[1])\xrightarrow{\sim}\bigwedge^n\mathfrak{g}[n]$), we can interpret $m_1$ as a bilinear skew-symmetric operator on $\mathfrak{g}$.
\end{rem}

Note that Equation \eqref{eq:equation_3} implies that $m_1$ is indeed a Lie bracket for which the Jacobi identity holds up to terms containing $m_0$. Since $m_0$ is a differential, this means that $m_1$ is a Lie bracket up to \emph{homotopy}. Hence, considering all equations, we will get a strong homotopy Lie algebra structure on $\mathfrak{g}$.

\begin{rem}
Note that this will lead to a one-to-one correspondence between pointed $Q$-manifolds and SHLAs. Since SHLAs are equivalent to $L_\infty$-algebras, we get a one-to-one correspondence between pointed $Q$-manifolds and $L_\infty$-algebras.
\end{rem}

\begin{defn}[$Q$-map]
A \emph{$Q$-map} is a (formal) pointed map between two $Q$-manifolds $Z$ and $\widetilde{Z}$ which commutes with the $Q$-fields, i.e. a map
\begin{align}
\label{eq:Q-map}
    \begin{split}
        \varphi\colon\overline{\Sym}(\widetilde{Z}^*)&\to \overline{\Sym}(Z^*)\\
       \text{such that}\qquad \varphi\circ \widetilde{\delta}&=\delta\circ \varphi.
    \end{split}
\end{align}
\end{defn}

We want to express the condition \eqref{eq:L_infty_morphism} more explicitly in this setting by using $Q$-maps. Consider the vector field $\delta$ and its restriction to $\widetilde{Z}$. We define the coefficients of the dual map as
\[
U_j:=\left(\varphi_j\vert_{Z^*}\right)^*\colon \Sym^j(Z)\to \widetilde{Z}.
\]
Then we can express the condition \eqref{eq:Q-map} of a $Q$-map on the dual coefficients. The first equation gives
\begin{multline*}
\langle\varphi(\widetilde{\delta} f),x\rangle=\langle\delta \varphi(f),x\rangle\Rightarrow \langle\widetilde{\delta}_0f, U_1(x)\rangle=\langle \varphi(f),m_0(x)\rangle\\
\Rightarrow \langle f,\widetilde{m}_0(U_1(x))\rangle=\langle f,U_1(m_0(x))\rangle.
\end{multline*}

This tells us that the first coefficient $U_1$ is a chain map with respect to the differential defined by the first coefficient of the $Q$-structure:
\[
H(U_1)\colon H_{m_0}(Z)\to H_{\widetilde{m}_0}(\widetilde{Z}).
\]

\begin{rem}
Similarly, we can consider the equation for the next coefficient:
\begin{multline}
    \widetilde{m}_1(U_1(x)\otimes U_1(y))+\widetilde{m}_1(U_2(x\otimes y))\\
    =U_2(m_0(x)\otimes y)+(-1)^{\vert x\vert}U_2(x\otimes m_0(y))+U_1(m_1(x\otimes y)).
\end{multline}
This shows that $U_1$ preserves the Lie structure induced by $m_1$ and $\widetilde{m}_1$ up to terms containing $m_0$ and $\widetilde{m}_0$, i.e. up to \emph{homotopy}. Recall that this solves the problem that the map $U^{(0)}_1$, defined in \eqref{eq:first_term}, is a chain map but not a DGLA morphism. Namely, a $Q$-map $U$ (or equivalently an $L_\infty$-morphism) induces a map $U_1$ which has the same property as $U^{(0)}_1$.
\end{rem}

\begin{defn}[Koszul sign]
Let $V$ be a vector space endowed with a graded commutative product $\otimes$. 
The \emph{Koszul sign} $\varepsilon(\sigma)$ of a permutation $\sigma$ is the sign defined by 
\[
x_1\otimes\dotsm\otimes x_n=\varepsilon(\sigma)x_{\sigma(1)}\otimes\dotsm\otimes x_{\sigma(n)},\quad x_i\in V.
\]
\end{defn}

\begin{defn}[Shuffle permutation]
An \emph{$(\ell,n-\ell)$-shuffle permutation} is a permutation $\sigma$ of $(1,\ldots,n)$ such that $\sigma(1)<\dotsm<\sigma(\ell)$ and $\sigma(\ell+1)<\dotsm<\sigma(n)$. 
\end{defn}

\begin{rem}
The shuffle permutation associated to a partition $I_1\sqcup\dotsm\sqcup I_j=\{1,\ldots,n\}$ is the permutation that takes at first all the elements index by the subset $I_1$ in the given order, then those of $I_2$ and so on.
\end{rem}

The $n$-th coefficient of $U$ satisfies the following condition:
\begin{multline}
\label{eq:n-th_coefficients}
    \widetilde{m}_0(U_n(x_1\otimes\dotsm\otimes x_n))+\frac{1}{2}\sum_{I\sqcup J=\{1,\ldots,n\}\atop I,J\not=\varnothing}\,\, \varepsilon_x(I,J)\widetilde{m}_1(U_{\vert I\vert}(x_I)\otimes U_{\vert J\vert}(x_J))\\
    =\sum_{1\leq j\leq n}\varepsilon^j_x U_n(m_0(x_j)\otimes x_1\otimes\dotsm\otimes \widehat{x}_j\otimes\dotsm\otimes x_n)+\\+\frac{1}{2}\sum_{j\not=\ell}\varepsilon^{j\ell}_xU_{n-1}(m_1(x_j\otimes x_\ell)\otimes x_1\otimes\dotsm\otimes \widehat{x}_j\otimes\dotsm\otimes \widehat{x}_\ell\otimes\dotsm\otimes x_n),
\end{multline}
where $\varepsilon_x(I,J)$ is the \emph{Koszul sign} associated to the $(\vert I\vert,\vert J\vert)$-shuffle permutation for the partition $I\sqcup J=\{1,\ldots,n\}$ and $\varepsilon^j_x$ (resp. $\varepsilon^{j\ell}_x$) for the particular case $I=\{j\}$ (resp. $I=\{j,\ell\}$). Moreover, we have set $x_I:=\bigotimes_{i\in I}x_i$.

\begin{rem}
We will specialize \eqref{eq:n-th_coefficients} to an $L_\infty$-morphism where $Z$ is chosen to be the DGLA $\calV$ of multivector fields and $\widetilde{Z}$ the DGLA $\calD$ of multidifferential operators. 
\end{rem}

\subsection{Formality of $\calD$ and classification of star products on $\R^d$}

We want to analyze the relation between the formality of $\calD$ and the classification problem of all star products on $\R^d$.
Recall that the \emph{associativity} of the star product and the \emph{Jacobi identity} for a bivector field are equivalent to certain Maurer--Cartan equations.

\begin{defn}[Maurer--Cartan equation on (formal) $L_\infty$-algebras]
The Maurer--Cartan equation on a (formal) $L_\infty$-algebra $(\mathfrak{g}[\![\hbar]\!],Q)$ is given by 
\begin{equation}
\label{eq:MC_L_infty}
Q(\exp(\hbar x))=0,\qquad x\in\mathfrak{g}^1[\![\hbar]\!],
\end{equation}
where the exponential $\exp$ maps an element of degree $+1$ to a formal power series in $\hbar \mathfrak{g}[\![\hbar]\!]$.
\end{defn}

\begin{rem}
Note that, from the dual point of view, condition \eqref{eq:MC_L_infty} tells us that $x$ is a fixed point of the cohomological vector field $\delta$. This means that for each $f\in \Sym(\mathfrak{g}^*[\![\hbar]\!][1])$ we have 
\begin{equation}
\label{eq:MC_2}
\delta f(\hbar x)=0.
\end{equation}
Moreover, using that $(\delta f)_j=\delta_{j-1}f$, we can expand Equation \eqref{eq:MC_2} into a formal Taylor series. In fact, using the pairing
\[
\langle \delta_{j-1}f,x\otimes\dotsm\otimes x\rangle=\langle f,m_{j-1}(x\otimes\dotsm \otimes x)\rangle,
\]
we can write the Maurer--Cartan equation \eqref{eq:MC_L_infty} as
\begin{equation}
    \label{eq:dual_MC}
    \sum_{j\geq 1}\frac{\hbar^j}{j!}m_{j-1}(x\otimes\dotsm\otimes x)=\hbar m_0(x)+\frac{\hbar^2}{2}m_1(x\otimes x)+O(\hbar^3)=0.
\end{equation}
\end{rem}

For two DGLAs $\mathfrak{g}$ and $\mathfrak{h}$, an $L_\infty$-morphism $\varphi\colon\Sym(\mathfrak{h}^*[\![\hbar]\!][1])\to \Sym(\mathfrak{g}^*[\![\hbar]\!][1])$ preserves the Maurer--Cartan equation \eqref{eq:dual_MC}, similarly as a morphism of DGLAs preserves the Maurer--Cartan equation \eqref{eq:Maurer-Cartan}. In particular, if $x$ is a solution of \eqref{eq:dual_MC} on $\mathfrak{g}[\![\hbar]\!]$, we get that 
\[
U(\hbar x)=\sum_{j\geq 1}\frac{\hbar^j}{j!}U_j(x\otimes\dotsm\otimes x)
\]
is also a solution of \eqref{eq:dual_MC} on $\mathfrak{h}[\![\hbar]\!]$.

\begin{rem}
The action of the gauge group on solutions to the Maurer-Cartan equation \eqref{eq:Maurer-Cartan} can be generalized to the case of $L_\infty$-algebras. Namely, if $x$ and $x'$ are equivalent modulo this generalized action, then their images under $U$ are still equivalent solutions.
\end{rem}

\begin{rem}
Again, we are especially interested in the case where $\mathfrak{g}=\calV$ and $\mathfrak{h}=\calD$. Thus, we get an $L_\infty$-morphism $U$ which gives as a formula of how to construct an associative star product out of any (formal) Poisson bivector $\pi$, given by 
\begin{equation}
\label{eq:star_product_L_infty}
U(\pi)=\sum_{j\geq 0}\frac{\hbar^j}{j!}U_j(\pi\land\dotsm \land \pi),
\end{equation}
where we insert the coefficient of degree zero to be the original non-deformed product. Note that if $U$ is a quasi-isomorphism, the correspondence between (formal) Poisson structures on a manifold $M$ and formal deformations of the pointwise product on $C^\infty(M)$ are one-to-one. In particular, as soon as we have a formality map, we have solved the problem of \emph{existence} and \emph{classification} of star products on $M$.
\end{rem}

\section{Kontsevich's star product}

We want to construct an explicit expression of the formality map $U$. The idea is to introduce graphs and to rewrite things in those terms. An easy example is given by the Moyal product $f\star g$ for two smooth functions $f$ and $g$ on $(\R^{2n},\omega_0)$. We can write $f\star g$ as the sum of graphs in the following way:

\begin{figure}[ht]
\centering
\begin{tikzpicture}[scale=0.5]
\tikzset{Bullet/.style={fill=black,draw,color=#1,circle,minimum size=0.5pt,scale=0.5}}
\node[Bullet=black, label=below: $f$] (f1) at (1,-2) {};
\node[Bullet=black, label=below: $g$] (g1) at (3,-2) {};
\node[Bullet=black, label=below: $f$] (f2) at (7,-2) {};
\node[Bullet=black, label=below: $g$] (g2) at (9,-2) {};
\node[Bullet=black, label=below: $f$] (f3) at (13,-2) {};
\node[Bullet=black, label=below: $g$] (g3) at (15,-2) {};
\node[Bullet=black, label=below: $f$] (f4) at (19,-2) {};
\node[Bullet=black, label=below: $g$] (g4) at (23,-2) {};
\node[Bullet=gray] (pi2) at (8,0.5) {};
\node[Bullet=gray] (pi31) at (12,0.5) {};
\node[Bullet=gray] (pi32) at (16,0.5) {};
\node[Bullet=gray] (pi41) at (18,0.5) {};
\node[Bullet=gray] (pi42) at (21,0.5) {};
\node[Bullet=gray] (pi43) at (24,0.5) {};
\draw[fermion] (pi2) -- (f2);
\draw[fermion] (pi2) -- (g2);
\draw[fermion] (pi31) -- (f3);
\draw[fermion] (pi31) -- (g3);
\draw[fermion] (pi32) -- (f3);
\draw[fermion] (pi32) -- (g3);
\draw[fermion] (pi41) -- (f4);
\draw[fermion] (pi41) -- (g4);
\draw[fermion] (pi42) -- (f4);
\draw[fermion] (pi42) -- (g4);
\draw[fermion] (pi43) -- (f4);
\draw[fermion] (pi43) -- (g4);
\node[] (plus1) at (5,-0.5) {+};
\node[] (plus2) at (11,-0.5) {+};
\node[] (plus3) at (17,-0.5) {+};
\node[] (plus4) at (26,-0.5) {$+\quad\ldots$};
\draw (0,-2) -- (4,-2);
\draw (6,-2) -- (10,-2);
\draw (12,-2) -- (16,-2);
\draw (18,-2) -- (24,-2);
\end{tikzpicture}
\caption{The Moyal product $f\star g$ represented in terms of graphs. The gray vertices represent the Poisson tensor $\pi$ induced by $\omega_0$. The term with $n$ Poisson vertices in the sum represents the $n$-th term in the formal power series of the Moyal product. For each gray vertex, the two outgoing arrows represent the derivatives $\de_i$ and $\de_j$ acting on $f$ and $g$.} 
\label{fig:Moyal_graphs}
\end{figure}

\begin{rem}
This construction can be generalized by allowing more than two outgoing arrows for gray vertices when considering general multivector fields and allowing incoming arrows for gray vertices (i.e. derivations of the tensor coefficients for the assigned multivector fields). Note that there are no incoming arrows for the Moyal product since the Poisson structure is constant. In Kontsevich's star product, we assign to each graph $\Gamma$ a multidifferential operator $B_\Gamma$ and a \emph{weight} $w_\Gamma$ such that the map $U$ that sends an $n$-tuple of multivector fields to the corresponding weighted sum over all possible graphs in this set of multidifferential operators is an $L_\infty$-morphism.
\end{rem}

\subsection{Data for the construction}

\subsubsection{Admissible graphs}

\begin{defn}[Admissible graphs]
The set $\calG_{n,\bar n}$ of \emph{admissible graphs} consists of all connected graphs $\Gamma$ which satisfy the following properties:
\begin{itemize}
    \item The set of vertices $V(\Gamma)$ is decomposed in two ordered subsets $V_1(\Gamma)$ and $V_2(\Gamma)$ isomorphic to $\{1,\ldots,n\}$ respectively $\{\bar 1,\ldots,\bar n\}$ whose elements are called vertices of \emph{first type} respectively \emph{second type};
    \item The following inequalities involving the number of vertices of the two types are satisfied: $n\geq 0,\bar n\geq 0$ and $2n+\bar n-2\geq 0$;
    \item The set of edges $E(\Gamma)$ is finite and does not contain \emph{short loops}, i.e. edges starting and ending at the same vertex;
    \item All edges $E(\Gamma)$ are oriented and start from a vertex of the first type;
    \item The set of edges starting at a given vertex $v\in V_1(\Gamma)$, which will be denoted by $\mathrm{Star}(v)$, is ordered.
\end{itemize}
\end{defn}

See Figure \ref{fig:admissible_graphs} and \ref{fig:non-admissible_graphs} for examples of admissible and non-admissible graphs respectively.

\begin{figure}[ht]
\centering
\begin{tikzpicture}[scale=0.7]
\tikzset{Bullet/.style={fill=black,draw,color=#1,circle,minimum size=0.5pt,scale=0.5}}
\node[Bullet=black, label=below: $\bar 1$] (bar11) at (1,-2) {};
\node[Bullet=black, label=below: $\bar 2$] (bar21) at (5,-2) {};
\node[Bullet=black, label=below: $\bar 1$] (bar12) at (10,-2) {};
\node[Bullet=black, label=below: $\bar 2$] (bar22) at (12,-2) {};
\node[Bullet=black, label=below: $\bar 3$] (bar32) at (14,-2) {};
\node[Bullet=gray, label=left: $1$] (v11) at (0,0) {};
\node[Bullet=gray, label=above: $2$] (v21) at (3,1) {};
\node[Bullet=gray, label=right: $3$] (v31) at (6,0) {};
\node[Bullet=gray, label=left: $1$] (v12) at (12,0) {};
\node[Bullet=gray, label=right: $2$] (v22) at (15,1) {};
\draw[fermion] (v11) -- (bar11);
\draw[fermion] (v11) -- (bar21);
\draw[fermion] (v21) -- (v11);
\draw[fermion] (v21) -- (v31);
\draw[fermion] (v31) -- (bar11);
\draw[fermion] (v31) -- (bar21);
\draw[fermion] (v12) -- (bar12);
\draw[fermion] (v12) -- (bar22);
\draw[fermion] (v12) -- (bar32);
\draw[fermion] (v22) -- (v12);
\draw[fermion] (v22) -- (bar32);
\draw (0,-2) -- (6,-2);
\draw (9,-2) -- (15,-2);
\end{tikzpicture}
\caption{Examples of admissible graphs.} 
\label{fig:admissible_graphs}
\end{figure}

\begin{figure}[ht]
\centering
\begin{tikzpicture}[scale=0.7]
\tikzset{Bullet/.style={fill=black,draw,color=#1,circle,minimum size=0.5pt,scale=0.5}}
\node[Bullet=black, label=below: $\bar 1$] (bar11) at (1,-2) {};
\node[Bullet=black, label=below: $\bar 2$] (bar21) at (5,-2) {};
\node[Bullet=black, label=below: $\bar 1$] (bar12) at (10,-2) {};
\node[Bullet=black, label=below: $\bar 2$] (bar22) at (14,-2) {};
\node[Bullet=gray, label=left: $1$] (v11) at (0,0) {};
\node[Bullet=gray, label=above: $2$] (v21) at (3,1) {};
\node[Bullet=gray, label=below: $1$] (v12) at (9,0) {};
\node[Bullet=gray, label=right: $2$] (v22) at (15,0) {};
\draw[fermion] (bar11) -- (v11);
\draw[fermion] (v11) -- (bar21);
\draw[fermion] (v11) -- (v21);
\draw[fermion] (v21) -- (bar21);
\draw[fermion] (v12) -- (bar12);
\draw[fermion] (v12) -- (bar22);
\draw[fermion] (v22) -- (bar12);
\draw[fermion] (v22) -- (bar22);
\path[->,every loop/.style={looseness=40}] (v12)
         edge  [in=160,out=60,loop] (v12); 
\draw (0,-2) -- (6,-2);
\draw (9,-2) -- (15,-2);
\end{tikzpicture}
\caption{Examples of non-admissible graphs.} 
\label{fig:non-admissible_graphs}
\end{figure}

\subsubsection{The multidifferential operators $B_\Gamma$}
Let us consider pairs $(\Gamma,\xi_1\otimes\dotsm\otimes \xi_n)$ consisting of a graph $\Gamma\in \calG_{n,\bar n}$ with $2n+\bar n-2$ edges and of a tensor product of $n$ multivector fields on $\R^d$. We want to understand how we can associate to such a pair a multidifferential operator $B_\Gamma\in \calD^{\bar n-1}$. This is done in the following way:
\begin{itemize}
    \item Associate to each vertex $v$ of first type with $k$ outgoing arrows the skew-symmetric tensor $\xi^{j_1\dotsm j_k}_i$ corresponding to a given $\xi_i$ via the natural identification.
    \item Place a function at each vertex of second type.
    \item Associate to the $\ell$-th arrow in $\mathrm{Star}(v)$ a partial derivative with respect to the coordinate labeled by the $\ell$-th index of $\xi_i$ acting on the function or the tensor appearing at its endpoint.
    \item Multiply such elements in the order prescribed by the labeling of the graph.
\end{itemize}

\begin{ex}
Denote by $\Gamma_1$ the first graph in Figure \ref{fig:admissible_graphs}. Then, to a triple of bivector fields $(\xi_1,\xi_2,\xi_3)$ with $\xi_1=\sum_{i_1<i_2}\xi_1^{i_1i_2}\de_{i_1}\land \de_{i_2}$, $\xi_2=\sum_{j_1<j_2}\xi_2^{j_1j_2}\de_{j_1}\land \de_{j_2}$ and $\xi_3=\sum_{\ell_1<\ell_2}\xi_3^{\ell_1\ell_2}\de_{\ell_1}\land \de_{\ell_2}$, we associate the bidifferential operator
\[
U_{\Gamma_1}(\xi_1\land\xi_2\land\xi_3)(f\otimes g):=\xi_2^{j_1j_2}\de_{j_2}\xi_1^{i_1i_2}\de_{j_2}\xi_3^{\ell_1\ell_2}\de_{i_1}\de_{\ell_1}f\de_{i_2}\de_{\ell_2}g.
\]
\end{ex}

\begin{ex}
Denote by $\Gamma_2$ the second graph in Figure \ref{fig:admissible_graphs}. Then, to a tuple of multivector fields $(\chi_1,\chi_2)$ with $\chi_1=\sum_{i_1<i_2}\chi_1^{i_1i_2}\de_{i_1}\land \de_{i_2}$ and $\chi_2=\sum_{j_1<j_2<j_3}\chi_2^{j_1j_2j_3}\de_{j_1}\land \de_{j_2}\land \de_{j_3}$, we associate the tridifferential operator
\[
U_{\Gamma_2}(\chi_1\land\chi_2)(f\otimes g\otimes h):=\chi_1^{i_1i_2}\de_{i_1}\chi_2^{j_1j_2j_3}\de_{j_1}f\de_{j_2}g\de_{j_3}\de_{i_2}h.
\]
\end{ex}

\begin{rem}
In particular, for each admissible graph $\Gamma\in\calG_{n,\bar n}$, we get a linear map $U_\Gamma\colon \bigwedge^n\calV\to \calD$ which is equivariant with respect to the action of the symmetric group. Kontsevich's main construction was to choose \emph{weights} $w_\Gamma\in\R$ such that the linear combination
\[
U:=\sum_{n,\bar n \geq 0}\,\,\sum_{\Gamma\in\calG_{n,\bar n}} w_\Gamma B_\Gamma
\]
becomes an $L_\infty$-morphism.
\end{rem}

\subsubsection{Weights of graphs}

The weight $w_\Gamma$ associated to an admissible graph $\Gamma\in\calG_{n,\bar n}$ is defined by an integral of a differential form $\omega_\Gamma$ over (a suitable compactification of) some \emph{configuration space} $C^+_{n,\bar n}$. Namely, it is given by \begin{equation}
    \label{eq:Kontsevich_weight}
    w_\Gamma:=\prod_{1\leq k\leq n}\frac{1}{\vert \mathrm{Star}(k)\vert!}\frac{1}{(2\pi)^{2n+\bar n-2}}\int_{\bar C^+_{n,\bar n}}\omega_\Gamma.
\end{equation}
This holds for the case when $\Gamma$ has exactly $2n+\bar n-2$ edges otherwise the weight is just zero.

Let us look at the integral \eqref{eq:Kontsevich_weight} more carefully. 

\subsubsection{Configuration spaces}
Consider an embedding of the graph $\Gamma$ into the $2$-dimensional upper half-space $\mathbb{H}^2$ (also called \emph{upper half-plane}) as defined in \eqref{eq:upper_half-space} where we want to identify $\R^2\cong\mathbb{C}$. Moreover, denote by $\mathbb{H}^2_{+}:=\{z\in \mathbb{H}\mid \Im(z)>0\}$ where $\Im(z)$ denotes the \emph{imaginary part} of $z$. 

\begin{defn}[Open configuration space]
Define the \emph{open configuration space} of the distinct $n+\bar n$ vertices of an admissible graph $\Gamma\in\calG_{n,\bar n}$ as the smooth manifold
\begin{equation}
    \label{eq:open_conf}
    \mathrm{Conf}_{n,\bar n}:=\big\{(z_1,\ldots,z_n,z_{\bar 1},\ldots,z_{\bar n})\in \mathbb{C}^{n+\bar n}\mid z_i\in \mathbb{H}^2_+, \, z_{\bar i}\in\R,\, z_i\not=z_j\,\, \text{for $i\not=j$},\, z_{\bar i}\not=z_{\bar j}\,\,\text{for $\bar i\not=\bar j$}\big\}
\end{equation}
\end{defn}

\begin{rem}
\label{rem:conf}
More precisely, we need to consider everything up to \emph{scaling} and \emph{translation}. Hence, we consider the Lie group $G$ consisting of translations in the horizontal direction and rescaling such that the action on $z\in\mathbb{H}^2$ is given by 
\[
z\mapsto az+b,\qquad a\in \R_+,b\in\R.
\]
The action of $G$ is free whenever the number of vertices is $2n+\bar n-2\geq 0$. Thus the quotient space of $\mathrm{Conf}_{n,\bar n}$ with respect to the $G$-action, which we will denote by $C_{n,\bar n}$, is again a smooth manifold of (real) dimension $2n+\bar n-2$. If there are no vertices of second type (i.e. $\bar n=0$), one can define the open configuration space as a subset of $\mathbb{C}^n$ instead of $\mathbb{H}^n$ and one can consider the Lie group which consists of rescaling and translation in any direction. The corresponding quotient space $C_n$ for $n\geq 2$ still a smooth manifold but now of dimension $2n-3$.
\end{rem} 

\begin{defn}[Connected configuration space]
Define the \emph{connected configuration space} to be the subset of $C_{n,\bar n}$ given by 
\begin{equation}
    \label{eq:connected_conf}
    C^+_{n,\bar n}:=\big\{(z_1,\ldots,z_n,z_{\bar 1},\ldots,z_{\bar n})\in C_{n,\bar n}\mid z_{\bar i}<z_{\bar j}\,\,\text{for $\bar i<\bar j$}\big\}
\end{equation}
\end{defn}

\begin{rem}
One can show that $C^+_{n,\bar n}$ is again a smooth manifold which is now connected.
\end{rem}

Let us define an \emph{angle map} 
\[
\phi\colon C_{2,0}\to S^1, 
\]
which associates to each pair of distinct points $z_1,z_2\in \mathbb{H}^2$ the angle between the geodesics with respect to the \emph{Poincar\'e metric} connecting $z_1$ to $\I\infty$ and to $z_2$, in the counterclockwise direction (see Figure \ref{fig:angle_map}). Formally, we have 
\begin{equation}
    \label{eq:angle_map}
    \phi(z_1,z_2):=\mathrm{arg}\left(\frac{z_2-z_1}{z_2-\bar z_1}\right)=\frac{1}{2\I}\log\left(\frac{(z_2-z_1)(\bar z_2-z_1)}{(z_2-\bar z_1)(\bar z_2-\bar z_1)}\right).
\end{equation}

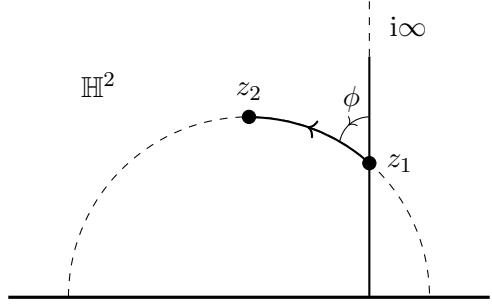
\begin{figure}[ht] 
\centering
\begin{tikzpicture}[scale=0.8]
\node[vertex, label=right: $z_1$] (u) at (2,2.23){};
\node[vertex, label=above: $z_2$] (v) at (0,3){};
\node[label=above: $\mathbb{H}^2$] (H) at (-2.5,3){};
\node[label=right: $\I\infty$] (infty) at (2,4.5){};
\draw[very thick] (-4,0) -- (4,0);
\draw[dashed] (2,4) -- (2,5);
\draw[thick] (2,0) -- (2,4);
\draw[dashed] (-3,0) -- (3,0) arc(0:180:3);
\draw[thick, fermion] (u) arc (48.18:90:3);
\draw[fermion] (2,3) arc (90:170:0.5);
\node[] (phi) at (1.7,3.3){$\phi$};
\end{tikzpicture}
\caption{Illustration of the angle map $\phi$.}
\label{fig:angle_map}
\end{figure}

The differential of $\phi$ is then a 1-form on $C_{2,0}$ which we can pull-back to the configuration space including points in $\R$ by the projection $p_e$ associated to each edge $e=(z_i,z_j)$ of $\Gamma$ defined by 
\begin{align*}
    p_e\colon C_{n,\bar n}&\to C_{2,0},\\
    (z_1,\ldots,z_n,z_{\bar 1},\ldots,z_{\bar n})&\mapsto (z_i,z_j).
\end{align*}
Hence we get a 1-form $\dd\phi_e:=p_e^*\dd\phi\in\Omega^1(C_{n,\bar n})$. The differential form $\omega_\Gamma$ in \eqref{eq:Kontsevich_weight} is then defined as 
\[
\omega_\Gamma:=\bigwedge_{e\in E(\Gamma)}\dd\phi_e.
\]

\begin{rem}
Note that the ordering one these 1-forms is induced by the ordering on the set of edges by the first-type vertices and the ordering on $\mathrm{Star}(v)$. Moreover, note that we indeed obtain a top-form  on $C_{n,\bar n}$ (resp. $C_{n,\bar n}^+$) as long as we consider graphs with exactly $2n+\bar n-2$ edges since $\dim C_{n,\bar n}=\dim C^+_{n,\bar n}=2n+\bar n-2$.  
\end{rem}

\begin{rem}
In order to make sense of the integral \eqref{eq:Kontsevich_weight}, we need to make sure that it converges. Note that, by construction of $\phi$, we get that $\dd\phi$ is not defined as soon two points approach each other. This is the reason why we in fact need a suitable \emph{compactification} $\bar C^+_{n,\bar n}$ of the connected configuration space $C^+_{n,\bar n}$.
\end{rem}

\begin{thm}[\cite{FulMacPh,AS2}]
For any configuration space $C_{n,\bar n}$ (resp. $C_n$) there exists a compactification $\bar C_{n,\bar n}$ (resp. $\bar C_n$) whose interior is the open configuration space and such that the projections $p_e$, the angle map $\phi$ (and thus the differential form $\omega_\Gamma$) extend smoothly to the corresponding compactifications.
\end{thm}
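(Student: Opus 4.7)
The plan is to construct $\bar C_{n,\bar n}$ and $\bar C_n$ as \emph{Fulton--MacPherson style compactifications}, i.e.\ by performing a sequence of real spherical blow-ups along the collision diagonals of the naive closure, and then verify smoothness of the extensions by examining how angles and relative positions behave in the local coordinates provided by the blow-ups. Concretely, I would first treat $C_n$: embed it into the compact manifold $(S^2)^n$ (viewing $\mathbb{C}\subset S^2$) modulo the $G$-action, and for every subset $A\subseteq\{1,\dots,n\}$ with $|A|\geq 2$ consider the diagonal $\Delta_A$ where all points with indices in $A$ coincide. Starting from the largest subsets and proceeding down in cardinality, replace each (proper transform of) $\Delta_A$ by its real spherical normal bundle. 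The result is a compact manifold with corners whose interior is $C_n$ and whose boundary strata are indexed by (rooted, labelled) trees encoding the nested collision pattern; this is the construction of \cite{AS2,FulMacPh}.

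For $C_{n,\bar n}^+$ the procedure is analogous but with two flavours of diagonals that must be blown up: the \emph{interior} diagonals $\Delta_A$ with $A\subseteq V_1(\Gamma)$ corresponding to collisions of first-type points in $\mathbb{H}^2_+$, and the \emph{boundary} diagonals $\Delta_{A,B}$ with $A\subseteq V_1(\Gamma)$, $B\subseteq V_2(\Gamma)$ corresponding to a cluster of first-type points approaching a cluster of consecutive second-type points on $\mathbb{R}$. One also has to compactify ``at infinity'' by the residual $G$-action (or equivalently by the blow-up of the locus where a subset of points escapes to $\infty$). Carrying these blow-ups out in an order compatible with inclusion of subsets produces a compact manifold with corners $\bar C_{n,\bar n}^+$ whose open interior is $C_{n,\bar n}^+$ and whose codimension-one boundary strata $\partial_{A}\bar C_{n,\bar n}^+$, $\partial_{A,B}\bar C_{n,\bar n}^+$ are canonically fibre bundles
\[
\partial_A \bar C_{n,\bar n}^+ \cong \bar C_{|A|}\times \bar C_{n-|A|+1,\bar n}^+,\qquad \partial_{A,B}\bar C_{n,\bar n}^+ \cong \bar C_{|A|,|B|}^+\times \bar C_{n-|A|,\bar n-|B|+1}^+,
\]
reflecting the fact that a collapsing cluster is ``resolved'' by recording the relative configuration of the colliding points, rescaled so as to live in a smaller configuration space.

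To check that the projections $p_e\colon C_{n,\bar n}^+\to C_{2,0}$ and the angle $\phi$ extend smoothly, I would work in local coordinates near a boundary stratum. Near the stratum corresponding to a subset $A$ of colliding first-type points, the blow-up provides coordinates $(z_0,r,\zeta)$, where $z_0\in\mathbb{H}^2_+$ is the limit point, $r\geq 0$ is the overall scale of the cluster, and $\zeta\in C_{|A|}$ is the rescaled relative configuration; the boundary is $\{r=0\}$. In these coordinates, for an edge $e=(i,j)$ with $i,j\in A$ we have $z_j-z_i=r(\zeta_j-\zeta_i)$, so
\[
\phi_e = \arg\!\left(\frac{r(\zeta_j-\zeta_i)}{r(\zeta_j-\zeta_i) + 2\I\,\mathrm{Im}(z_0) + O(r)}\right)
\]
extends smoothly to $r=0$, where it reduces to $\arg(\zeta_j-\zeta_i)$, i.e.\ the angle on the ``rescaled'' configuration factor $\bar C_{|A|}$. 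Similarly for an edge between a point in $A$ and a point outside $A$ one finds that $\phi_e$ extends, its restriction to $\{r=0\}$ depending only on $z_0$ and the relative position of the outside endpoint. The analysis at a boundary stratum of type $(A,B)$ is identical using the hyperbolic-metric based angle \eqref{eq:angle_map}, which is designed precisely so that blowing up the real line behaves well: in local coordinates the logarithm is regularised by the blow-up exponent, because $\phi$ depends only on the \emph{ratio} $(z_2-z_1)/(z_2-\bar z_1)$ and this ratio has a well-defined limit along any radial approach. Hence each $\dd\phi_e$ is smooth on $\bar C_{n,\bar n}^+$, and so is $\omega_\Gamma=\bigwedge_{e\in E(\Gamma)}\dd\phi_e$.

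The main obstacle, in my view, is \emph{not} the single blow-up but the \emph{order and compatibility} of the iterated blow-ups: one must check that performing the blow-ups in the natural partial order indexed by inclusion of collision subsets produces a manifold with corners (rather than a space with more complicated local models) and that the resulting space is independent of the auxiliary choices, so that the fibre bundle descriptions of the boundary strata above are canonical. This is a combinatorial/geometric bookkeeping argument of exactly the type carried out by Fulton--MacPherson \cite{FulMacPh} in the algebraic category and adapted by Axelrod--Singer \cite{AS2} to the smooth real setting; once that is done, the smoothness of $p_e$ and $\phi_e$ is essentially tautological, because the blow-ups are arranged so as to promote the ill-defined limit of the relative direction of colliding points into a genuine smooth coordinate on the exceptional divisor. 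With this framework in place, integrability of $\omega_\Gamma$ over $\bar C^+_{n,\bar n}$ and hence well-definedness of the weight \eqref{eq:Kontsevich_weight} follow immediately from compactness.
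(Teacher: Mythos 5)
Your construction is correct and is essentially the argument of the cited references: the paper states this theorem by citation to Fulton--MacPherson and Axelrod--Singer without giving a proof, and your iterated real spherical blow-up of the collision diagonals, the resulting manifold-with-corners structure, the boundary-strata fibrations $\bar C_{|A|}\times \bar C^+_{n-|A|+1,\bar n}$ and $\bar C^+_{|A|,|B|}\times \bar C^+_{n-|A|,\bar n-|B|+1}$ (matching the S1/S2 strata used later in the formality proof), and the local coordinates $(z_0,r,\zeta)$ in which $p_e$ and $\phi$ visibly extend are exactly their construction. The only small imprecision is that at an interior-collapse stratum the limit of $\phi_e$ is $\arg(\zeta_j-\zeta_i)$ only up to an additive constant (the angle \eqref{eq:angle_map} is measured from the vertical geodesic, and your denominator should read $2\I\,\mathrm{Im}(z_0)+r(\zeta_j-\bar\zeta_i)$), which is harmless since only $\dd\phi_e$ enters $\omega_\Gamma$.
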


\begin{rem}
This compactification is usually called \emph{FMAS compactification} for \emph{Fulton--MacPherson} who proved a first result of this in the algebro-geometrical setting for configuration spaces of points in non-singular algebraic varieties \cite{FulMacPh} and \emph{Axelrod--Singer} who proved this in the smooth-geometrical setting for configuration spaces of points in smooth manifolds \cite{AS2}. 
\end{rem}

\begin{rem}
The compactified configuration space is a compact smooth manifold with \emph{corners}. 
\end{rem}

\begin{defn}[Manifold with corners]
A manifold with corners of dimension $m$ is a topological Hausdorff space $M$ which is locally homeomorphic to $\R^{m-n}\times \R^n_+$ with $n=0,\ldots, m$. 
\end{defn}

\begin{rem}
The points $x\in M$ of a manifold with corners of dimension $m$ whose local expression in some (and hence in any) chart has the form 
\[
(x_1,\ldots,x_{m-n},0,\ldots,0)
\]
are said to be of \emph{type $n$} and form submanifolds of $M$ which are called \emph{strata of codimension $n$}.
\end{rem}

\subsection{Proof of Kontsevich's formula}

To show that $U$ is indeed an $L_\infty$-morphism, we need to show the following points:
\begin{enumerate}
    \item The first component of the restriction of $U$ to $\calV$ is up to a shift in degrees given by the natural map $U_1^{(0)}$ as defined in \eqref{eq:first_term}.
    \item $U$ is a graded linear map of degree zero.
    \item $U$ satisfies the equations \eqref{eq:L_infty_morphism} for an $L_\infty$-morphism. 
\end{enumerate}

\subsubsection{$U_1$ coincides with $U_1^{(0)}$}
\begin{lem}
The map 
\[
U_1\colon \calV\to \calD
\]
is the natural map $U^{(0)}_1$ that identifies each multivector field with the corresponding multidifferential operator. 
\end{lem}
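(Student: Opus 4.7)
The plan is to identify which admissible graphs actually contribute to $U_1$, evaluate the weight of each one, and compare with the explicit formula \eqref{eq:first_term}. Since $U_1$ sends $\calV^{k-1} = \mathfrak{X}^k(\R^d)$ into $\calD^{k-1} = $ (differential operators on $k$ arguments), only graphs $\Gamma \in \calG_{1,k}$ contribute, and only those with exactly $2\cdot 1 + k - 2 = k$ edges yield a non-zero weight. Because there is a single first-type vertex, no short loops are allowed, and every edge must originate from a first-type vertex, the $k$ edges must all go from $z_1$ to the $k$ distinct second-type vertices $z_{\bar 1}, \ldots, z_{\bar k}$. Such graphs are parametrized by the orderings of $\mathrm{Star}(1)$, that is, by permutations $\sigma \in S_k$; denote by $\Gamma_\sigma$ the graph whose $\ell$-th outgoing edge at the first-type vertex lands at $z_{\bar{\sigma(\ell)}}$.

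Next I would compute the weights $w_{\Gamma_\sigma}$. Fix the action of the $2$-dimensional group $G$ by setting $z_1 = \mathrm{i}$, so that $\bar C^+_{1,k}$ is identified with the simplex $\{(x_1,\ldots,x_k)\in\R^k : x_1<\cdots<x_k\}$ (compactified). Using \eqref{eq:angle_map}, a direct differentiation gives
\[
\frac{\dd}{\dd x}\,\phi(\mathrm{i},x) = \Im\!\left(\frac{1}{x-\mathrm{i}} - \frac{1}{x+\mathrm{i}}\right) = \frac{2}{x^2+1} > 0,
\]
so $\phi_j := \phi(\mathrm{i},x_j)$ is a strictly increasing function of $x_j$ alone with total variation $\int_\R \frac{2\,\dd x}{x^2+1} = 2\pi$. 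Consequently
\[
\omega_{\Gamma_\sigma} = \dd\phi_{\sigma(1)}\wedge\cdots\wedge\dd\phi_{\sigma(k)} = \mathrm{sign}(\sigma)\,\phi'(x_1)\cdots\phi'(x_k)\,\dd x_1\wedge\cdots\wedge\dd x_k,
\]
and integrating over the simplex (which accounts for a factor $\frac{1}{k!}$) yields $\int_{\bar C^+_{1,k}}\omega_{\Gamma_\sigma} = \mathrm{sign}(\sigma)\,\frac{(2\pi)^k}{k!}$. Plugging into \eqref{eq:Kontsevich_weight} with $|\mathrm{Star}(1)|! = k!$ gives
\[
w_{\Gamma_\sigma} = \frac{\mathrm{sign}(\sigma)}{(k!)^2}.
\]

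Finally I would assemble the operators. For $\xi\in\mathfrak{X}^k(\R^d)$ with antisymmetric components $\xi^{i_1\ldots i_k}$, the definition of $B_\Gamma$ gives
\[
B_{\Gamma_\sigma}(\xi)(f_1\otimes\cdots\otimes f_k) = \xi^{j_1\ldots j_k}\,\partial_{j_1}f_{\sigma(1)}\cdots\partial_{j_k}f_{\sigma(k)},
\]
and relabelling summation indices by $j_\ell \mapsto j'_{\sigma(\ell)}$ together with the antisymmetry of $\xi$ produces a factor $\mathrm{sign}(\sigma)$, so $B_{\Gamma_\sigma}(\xi) = \mathrm{sign}(\sigma)\,B_{\Gamma_{\mathrm{id}}}(\xi)$. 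Summing,
\[
U_1(\xi) = \sum_{\sigma\in S_k} w_{\Gamma_\sigma}\,B_{\Gamma_\sigma}(\xi) = \sum_{\sigma\in S_k}\frac{\mathrm{sign}(\sigma)^2}{(k!)^2}\,B_{\Gamma_{\mathrm{id}}}(\xi) = \frac{1}{k!}\,\xi^{i_1\ldots i_k}\,\partial_{i_1}f_1\cdots\partial_{i_k}f_k.
\]
Expanding a decomposable multivector $\xi = \xi_1\wedge\cdots\wedge\xi_k$ with components $\xi^{i_1\ldots i_k} = \sum_{\sigma\in S_k}\mathrm{sign}(\sigma)\,\xi_{\sigma(1)}^{i_1}\cdots\xi_{\sigma(k)}^{i_k}$, this is precisely $U_1^{(0)}(\xi)$ from \eqref{eq:first_term}, and both sides extend $\R$-linearly to all multivector fields. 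The only delicate point is the careful bookkeeping of signs in steps two and three—especially the orientation conventions on $\bar C^+_{1,k}$ and the antisymmetrization of $\xi$—so that the two $\mathrm{sign}(\sigma)$ factors combine to give the overall factor $1/k!$ rather than vanishing.
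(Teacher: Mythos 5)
Your proof is correct and follows essentially the same route as the paper: identify the unique contributing graph shape in $\calG_{1,\bar n}$, evaluate the angle-form integral over the configuration space, plug into \eqref{eq:Kontsevich_weight}, and match the result with the HKR map \eqref{eq:first_term}. The only difference is bookkeeping: the paper treats $\calG_{1,\bar n}$ as a single graph and integrates over $\bar C_{1,\bar n}$ to get $(2\pi)^{\bar n}$, hence weight $\frac{1}{\bar n!}$, whereas you sum over the $\bar n!$ orderings of $\mathrm{Star}(1)$ with signed weights $\frac{\mathrm{sign}(\sigma)}{(\bar n!)^2}$ computed on $\bar C^+_{1,\bar n}$ — the two are consistent, and your version is in fact more faithful to the weight formula as stated.
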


\begin{proof}
Consider the set $\calG_{1,\bar n}$ of admissible graphs with one vertex of first type and $\bar n$ vertices of second type. It is easy to see that this set only contains one element $\Gamma_{\bar n}$ which has $2\cdot 1+\bar n-2=\bar n$ arrows outgoing from the single vertex of first type and each arrow is incoming to a vertex of second type (see Figure \ref{fig:single_graph}).

\begin{figure}[ht]
\centering
\begin{tikzpicture}[scale=0.7]
\tikzset{Bullet/.style={fill=black,draw,color=#1,circle,minimum size=0.5pt,scale=0.5}}
\node[Bullet=black, label=below: $\bar 1$] (bar1) at (1,-2) {};
\node[Bullet=black, label=below: $\bar 2$] (bar2) at (3,-2) {};
\node[Bullet=black, label=below: $\bar 3$] (bar3) at (5,-2) {};
\node[Bullet=black, label=below: $\bar n$] (barn) at (10,-2) {};
\node[label=below: $\dotsm$] (dots) at (6.5,0) {};
\node[Bullet=gray, label=above: $1$] (v1) at (5.5,1) {};
\draw[fermion] (v1) -- (bar1);
\draw[fermion] (v1) -- (bar2);
\draw[fermion] (v1) -- (bar3);
\draw[fermion] (v1) -- (barn);
\draw (0,-2) -- (6,-2);
\draw[dashed] (6,-2) -- (9,-2);
\draw (9,-2) -- (11,-2);
\end{tikzpicture}
\caption{The single element $\Gamma_{\bar n}\in \calG_{1,\bar n}$.} 
\label{fig:single_graph}
\end{figure}

Hence, for a multivector field $\xi$ of degree $k$ we associate the multidifferential operator 
\[
U_{\Gamma_{\bar n}}(\xi)\colon f_{\bar 1}\otimes \dotsm \otimes f_{\bar n}\mapsto w_{\Gamma_{\bar n}}\xi^{i_{\bar 1}\dotsm i_{\bar n}}\de_{i_{\bar 1}}f_{\bar 1}\dotsm \de_{i_{\bar n}}f_{\bar n}.
\]
One can easily check that 
\[
\int_{\bar C_{1,\bar n}}\omega_{\Gamma_{\bar n}}=\int_{\bar C_{1,\bar n}}\,\,\bigwedge_{e\in E(\Gamma_{\bar n})}\dd\phi_e =(2\pi)^{\bar n}.
\]
Using \eqref{eq:Kontsevich_weight}, this will give us
\[
w_{\Gamma_{\bar n}}=\frac{1}{\bar n!}.
\]
This shows that $U_1$ is indeed the analogue for $U^{(0)}_1$ in \eqref{eq:first_term} which induces the HKR isomorphism \eqref{eq:HKR}.
\end{proof}

\subsubsection{Checking the degrees}
\begin{lem}
The $n$-th component 
\[
U_n:=\sum_{\bar n\geq 1}\,\sum_{\Gamma\in\calG_{n,\bar n}}w_\Gamma B_\Gamma
\]
has the correct degree for $U$ to be an $L_\infty$-morphism.
\end{lem}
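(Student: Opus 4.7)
The plan is a dimensional-matching argument. First I would unpack what ``correct degree'' means. As a map of DGLAs, $U\colon \calV\to\calD$ is an $L_\infty$-morphism, so each component must be a graded linear map
\[
U_n\colon \bigwedge\nolimits^n\calV\to \calD[1-n],
\]
i.e.\ given $\xi_i\in\calV^{k_i-1}$ (a $k_i$-multivector field, viewed after the shift $\calV=\widetilde\calV[1]$), the value $U_n(\xi_1\wedge\dotsm\wedge\xi_n)$ must lie in degree $\sum_i(k_i-1)+(1-n)=\sum_i k_i-2n+1$ inside $\calD$. An element of $\calD^m$ is a multidifferential operator in $m+1$ arguments, so the target degree is $\bar n-1$ when the operator acts on $\bar n$ functions. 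Hence the required identity is
\begin{equation}\label{eq:deg-match}
\bar n \;=\; \sum_{i=1}^n k_i - 2n + 2.
\end{equation}

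Next I would inspect $B_\Gamma$ for $\Gamma\in\calG_{n,\bar n}$ when $\xi_i$ is a $k_i$-vector. By construction, vertex $i\in V_1(\Gamma)$ contributes a skew-symmetric tensor with $k_i$ indices, so it must carry exactly $k_i$ outgoing edges (otherwise the associated multidifferential operator vanishes or the indices cannot be contracted). Consequently the only graphs that produce a nonzero $B_\Gamma$ satisfy $|E(\Gamma)|=\sum_{i=1}^n k_i$, and such a $B_\Gamma$ is manifestly a $\bar n$-differential operator, i.e.\ an element of $\calD^{\bar n-1}$.

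The key observation is then the interplay with the weight. Recall $\dim \bar C^+_{n,\bar n}=2n+\bar n-2$, and $\omega_\Gamma=\bigwedge_{e\in E(\Gamma)}\dd\phi_e$ is a form of degree $|E(\Gamma)|$ on this manifold. Hence the integral \eqref{eq:Kontsevich_weight} can be nonzero only if $\omega_\Gamma$ is a top form, namely when
\[
|E(\Gamma)| \;=\; 2n+\bar n-2.
\]
Combining this with $|E(\Gamma)|=\sum_i k_i$ yields precisely \eqref{eq:deg-match}, and therefore $B_\Gamma\in\calD^{\bar n-1}=\calD^{\sum_i k_i-2n+1}$, which is the desired target degree. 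Summing the nonvanishing contributions over $\bar n$ and over $\Gamma\in\calG_{n,\bar n}$ thus produces a map of the required degree $1-n$.

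The only step that is not purely combinatorial is the claim that $w_\Gamma=0$ whenever $|E(\Gamma)|\neq 2n+\bar n-2$. For $|E(\Gamma)|>2n+\bar n-2$ this is automatic from $\dim\bar C^+_{n,\bar n}=2n+\bar n-2$ (a form of degree exceeding the dimension vanishes). For $|E(\Gamma)|<2n+\bar n-2$ the convention adopted in the definition of $w_\Gamma$ (the weight is set to zero unless the edge count matches) takes care of this; this convention is forced by the requirement that $\omega_\Gamma$ be a top form so that the integral is meaningful. I expect no genuine obstacle here beyond recording this bookkeeping carefully and checking the factor $1/\prod_k|\mathrm{Star}(k)|!$ is compatible with the $k_i$-vector normalization (it absorbs the combinatorial factor from the symmetrization over outgoing edges). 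Once these two checks are in place, \eqref{eq:deg-match} gives the lemma immediately.
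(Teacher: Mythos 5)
Your proposal is correct and follows essentially the same dimensional-matching argument as the paper: identify the number of outgoing edges at each first-type vertex with the order of the corresponding multivector field, use that the weight forces $\vert E(\Gamma)\vert=2n+\bar n-2=\dim \bar C^+_{n,\bar n}$, and conclude that the resulting $\bar n$-differential operator sits in exactly the degree $1-n$ shift required of the $n$-th Taylor coefficient of an $L_\infty$-morphism. Your bookkeeping of the shift $\calV=\widetilde{\calV}[1]$ is in fact spelled out more carefully than in the paper's own computation, but the underlying idea is identical.
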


\begin{proof}
Note that to a vertex $v_i$ with $\vert\mathrm{Star}(v_i)\vert$ outgoing arrows, we associate an element in $\calV^{\vert\mathrm{Star}(v_i)\vert}=\widetilde{\calV}^{\vert\mathrm{Star}(v_i)\vert+1}$. On the other hand, each graph $\Gamma$ with $\bar n$ vertices of second type and $n$ multivector fields $\xi_1,\ldots,\xi_n$ induce a multidifferential operator $U_n(\xi_1\land\dotsm \land \xi_n)$ of degree $s=\bar n-1$. Since we are only considering graphs with exactly $2n+\bar n-2$ edges and since 
\[
\vert E(\Gamma)\vert=\sum_{1\leq i\leq n}\vert\mathrm{Star}(v_i)\vert,
\]
we can write the degree of $U_n(\xi_1\land\dotsm \land \xi_n)$ as
\[
s=(2n+\bar n-2)+1-n=\vert E(\Gamma)\vert+1-n=\sum_{1\leq i\leq n}\vert\mathrm{Star}(v_i)\vert+1-n.
\]
This is exactly the degree for the $n$-th component of an $L_\infty$-morphism.
\end{proof}

\subsubsection{Reformulation of the $L_\infty$-condition in terms of graphs}
Next we discuss the geometric proof of the formality statement. We have to extend the morphism $U$ with a degree zero component which represents the usual multiplication between smooth maps. Therefore, we can consider the special case of the $L_\infty$-condition \eqref{eq:n-th_coefficients} where $m_0,\widetilde{m}_0$ are given by the Taylor coefficients $U_n$:
\begin{multline}
\label{eq:formality_decoding}
    \sum_{0\leq \ell\leq n}\,\,\sum_{-1\leq k\leq m}\,\,\sum_{0\leq i\leq m-k}\varepsilon_{kim}\sum_{\sigma\in S_{\ell,n-\ell}}\varepsilon_\xi(\sigma)U_\ell\bigg(\xi_{\sigma(1)}\land\dotsm \land \xi_{\sigma(\ell)}\bigg)\\
    \bigg(f_0\otimes\dotsm \otimes f_{i-1}\otimes U_{n-\ell}\bigg(\xi_{\sigma(\ell+1)}\land\dotsm\land \xi_{\sigma(n)}\bigg)\big(f_i\otimes\dotsm\otimes f_{i+k}\big)\otimes f_{i+k+1}\otimes\dotsm\otimes f_m\bigg)\\
    =\sum_{i\not=j=1}^n \varepsilon^{ij}_\xi U_{n-1}\bigg(\xi_i\circ\xi_j\land\xi_1\land\dotsm\land\widehat{\xi}_i\land\dotsm\land \widehat{\xi}_j\land\dotsm\land\xi_n\bigg)\big(f_0\otimes\dotsm \otimes f_n\big),
\end{multline}
where $(\xi_j)_{j=1,\ldots,n}$ are multivector fields, $f_0,\ldots,f_m$ are smooth maps on which the multidifferential operator is acting, $S_{\ell,n-\ell}$ is the subset of $S_n$ consisting of $(\ell,n-\ell)$-shuffle permutations, the product $\xi_i\circ\xi_j$ is defined in a way such that the Schouten--Nijenhuis bracket can be expressed in terms of this composition by a formula similar to the one relating the Gerstenhaber bracket to the analogous composition $\circ$ on $\calD$ in \eqref{eq:composition} and the signs involved are defined as follows: $\varepsilon_{kim}:=(-1)^{k(m+i)}$, $\varepsilon_\xi(\sigma)$ is the Koszul sign associated to the permutation $\sigma$ and $\varepsilon^{ij}_\xi$ is defined as in \eqref{eq:n-th_coefficients}.

\begin{rem}
Note that \eqref{eq:formality_decoding} carries the formality condition since the left hand side corresponds to the Gerstenhaber bracket of multidifferential operators and the right hand side to (a part of) the Schouten--Nijenhuis bracket (since the differential on $\calV$ is identical to zero). Recall that the differential on $\calD$ is is given by the Hochschild differential $[\mathsf{m},\enspace]_\mathrm{G}$ which in \eqref{eq:formality_decoding} is replaced by $U_0$.
\end{rem}

We can now rewrite \eqref{eq:formality_decoding} in terms of admissible graphs and weights to prove that it holds. Note that the difference between the left hand side and right hand side of \eqref{eq:formality_decoding} can be formulated as a linear combination 
\begin{equation}
\label{eq:difference_formality}
\sum_{\Gamma\in\calG_{n,\bar n}} c_\Gamma U_\Gamma(\xi_1\land\dotsm\land \xi_n)(f_0\otimes\dotsm \otimes f_n).
\end{equation}
Recall that we want to consider admissible graphs $\Gamma$ with exactly $2n+\bar n-2$ edges. In fact, \eqref{eq:formality_decoding} is satisfied if $c_\Gamma=0$ for all such $\Gamma$. We will show that this is indeed the case.\\ 

\subsubsection{The key is Stokes' theorem}
In order to prove that all these coefficients $c_\Gamma$ vanish, we will use Stokes' theorem for manifolds with corners. Similarly as for the usual Stokes' theorem, we can replace an integral of an exact form over some manifold $M$ as the integral of its primitive form over the boundary $\de M$. In particular, we have 
\begin{equation}
\label{eq:Stokes_formality}
    \int_{\de\bar C^+_{n,\bar n}}\omega_\Gamma=\int_{\bar C^+_{n,\bar n}}\dd\omega_\Gamma=0, \quad\forall \Gamma\in\calG_{n,\bar n}
\end{equation}
since $\dd\phi_e$ is closed and $\bar C^+_{n,\bar n}$ is compact. Let us expand the left hand side of \eqref{eq:Stokes_formality} in order to show that it is exactly given by the coefficients $c_\Gamma$. Therefore, we want to take a closer look on the manifold $\de \bar C^+_{n,\bar n}$. Recall that we have restricted the weights in \eqref{eq:formality_decoding} to be equal to zero if the form degree does not match the dimension of the underlying manifold over which we integrate. Hence, we only have to consider the codimension $1$ strata of $\de \bar C^+_{n,\bar n}$ which have dimension $2n+\bar n-3$. Note that the dimension of $\de\bar C^+_{n,\bar n}$ is equal to the number of edges and thus of the 1-forms $\dd\phi_e$. 

\begin{rem}
\label{rem:collision}
Intuitively, one can think of the boundary of $\bar C^+_{n,\bar n}$ to be represented by the degenerate configuration in which some of the $n+\bar n$ points \emph{collide with each other}. 
\end{rem}

\subsubsection{Classification of boundary strata}
Using Remark \ref{rem:collision}, we can classify the codimension $1$ strata of $\de\bar C^+_{n,\bar n}$ as follows:
\begin{itemize}
    \item (Strata of type S1) These are strata in which $i\geq 2$ points in $\mathbb{H}_+^2$ collide together to a point which still lies in $\mathbb{H}^2_+$. Points in such a stratum can be locally described by 
    \begin{equation}
        \label{eq:S1}
        C_i\times C_{n-i+1,\bar n}.
    \end{equation}
    The first term represents the relative position of the colliding points when we look at them under a \emph{magnifying glass}. The second term is the space of all remaining points plus the point which occurs as the point on which the first $i$ points have collapsed to.
    \item (Strata of type S2) These are strata in which $i>0$ points in $\mathbb{H}^2_+$ and $j>0$ points in $\R$ with $2i+j-2\geq 0$ collide to a single point on $\R$. Points in such a stratum can be locally represented by 
    \begin{equation}
        \label{eq:S2}
        C_{i,j}\times C_{n-i,\bar n-j+1}.
    \end{equation}
    The two terms are similarly given as in \eqref{eq:S1}. 
\end{itemize}

See Figure \ref{fig:S1} and \ref{fig:S2} for an illustration of a stratum of type S1 and type S2 respectively.

\begin{figure}[ht]
\centering
\begin{tikzpicture}[scale=0.7]
\tikzset{Bullet/.style={fill=black,draw,color=#1,circle,minimum size=0.5pt,scale=0.5}}
\node[Bullet=black, label=below: $\bar 1$] (bar1) at (1,-2) {};
\node[Bullet=black, label=below: $\bar 2$] (bar2) at (3,-2) {};
\node[Bullet=black, label=below: $\bar 3$] (bar3) at (5,-2) {};
\node[Bullet=black, label=below: $\bar n$] (barn) at (10,-2) {};
\node[Bullet=gray, label=left: $\ell$] (vi) at (5.5,1) {};
\node[Bullet=gray, label=above: $1$] (v1) at (1,0) {};
\node[Bullet=gray, label=above: $n$] (vn) at (8,0) {};
\node[Bullet=gray, label=above: $2$] (v2) at (3,3) {};
\node[Bullet=gray] (vc1) at (7,4) {};
\node[Bullet=gray] (vc2) at (9,3) {};
\node[Bullet=gray] (vc3) at (8.5,4.5) {};
\node[label=left: $\mathbb{H}^2$] (H) at (0,2.5) {};
\draw (0,-2) -- (6,-2);
\draw[dashed] (6,-2) -- (9,-2);
\draw (9,-2) -- (11,-2);
\draw[dashed] (8,4) circle (2cm);
\draw[dashed] (6,4) -- (vi);
\draw[dashed] (8,2) -- (vi);
\end{tikzpicture}
\caption{Example of a stratum of type S1. Note that here $i=3$ points have collapsed together to a point $\ell\in\mathbb{H}^2_+$.} 
\label{fig:S1}
\end{figure}

\begin{figure}[ht]
\centering
\begin{tikzpicture}[scale=0.7]
\tikzset{Bullet/.style={fill=black,draw,color=#1,circle,minimum size=0.5pt,scale=0.5}}
\node[Bullet=black, label=below: $\bar 1$] (bar1) at (1,-2) {};
\node[Bullet=black, label=below: $\bar 2$] (bar2) at (3,-2) {};
\node[Bullet=black, label=below: $\bar 3$] (bar3) at (5,-2) {};
\node[Bullet=black, label=below: $\bar n$] (barn) at (10,-2) {};
\node[Bullet=black, label=below: $\bar \ell$] (barell) at (7.5,-2) {};
\node[Bullet=black] (barc1) at (7,0) {};
\node[Bullet=black] (barc2) at (8,0) {};
\node[Bullet=gray, label=above: $1$] (v1) at (1,0) {};
\node[Bullet=gray, label=above: $n$] (vn) at (12,0) {};
\node[Bullet=gray, label=above: $2$] (v2) at (3,3) {};
\node[Bullet=gray] (vc1) at (6.5,1.5) {};
\node[Bullet=gray] (vc2) at (8,1) {};
\node[Bullet=gray] (vc3) at (7.5,2) {};
\node[label=left: $\mathbb{H}^2$] (H) at (0,2.5) {};
\draw (0,-2) -- (6,-2);
\draw (5.8,0) -- (9.2,0);
\draw[dashed] (6,-2) -- (9,-2);
\draw (9,-2) -- (11,-2);
\draw[dashed] (7.5,1) circle (2cm);
\draw[dashed] (5.8,0) -- (barell);
\draw[dashed] (9.2,0) -- (barell);
\end{tikzpicture}
\caption{Example of a stratum of type S2. Note that here $i=3$ and $j=2$ points have collapsed together to a point $\bar \ell\in\R$.} 
\label{fig:S2}
\end{figure}

Now we can split the integral on the left hand side of \eqref{eq:Stokes_formality} into the sum of strata of type S1 and type S2. For the strata of type S1, we will distinguish two subcases for the $i$ collapsing vertices. Since the integral vanishes if the form degree is not equal to the dimension of the underlying manifold, one can show that the only contributions come from graphs $\Gamma$ whose subgraphs $\Gamma_1$, spanned by the collapsing vertices, contain exactly $2i-3$ edges.

When $i=2$, there is only one edge $e$, an hence in the first integral of the decomposition of \eqref{eq:S1} the differential $\dd\phi_e$ is integrated over $C_2\cong S^1$ and thus we get a factor of $2\pi$ which cancels the coefficient in \eqref{eq:Kontsevich_weight}. 
The remaining integral represents the weight of the corresponding quotient graph $\Gamma_2$ which is obtained from $\Gamma$ after the contraction of the edge $e$. In particular, to the vertex $j$ of first type, which results from this contraction, we associate the $j$-composition of the two multivector fields that were associated to the endpoints of $e$. Hence, summing over all graphs and all strata of this subtype, we get the right hand side of \eqref{eq:formality_decoding}.

\subsubsection{A trick using logarithms}
When $i\geq 3$, the integral corresponding to this stratum involves the product of $2i-3$ angle forms over $C_i$. According to a Lemma of Kontsevich, this integral vanishes. 

\begin{lem}[Kontsevich\cite{K}]
\label{lem:Kontsevich}
The integral over the configuration space $C_n$ of $n\geq 3$ points in the upper half-plane of any $\dim C_n:=2n-3$ angle forms $\dd\phi_{e_i}$ with $i=1,\ldots, n$ vanishes for $n\geq 3$.
\end{lem}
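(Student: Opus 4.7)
My plan is to follow Kontsevich's original argument from \cite{K}, which combines a ``logarithmic trick'' with a symmetry-based dimension count. The starting point is the explicit formula
\[
d\phi_e \;=\; \frac{1}{2\I}\, d\log\frac{(z_j-z_i)(\bar z_j - z_i)}{(z_j - \bar z_i)(\bar z_j - \bar z_i)},
\]
which expresses each $d\phi_e$ as a linear combination of elementary closed $1$-forms of logarithmic type. Expanding $\omega = \bigwedge_{i=1}^{2n-3} d\phi_{e_i}$ then yields a finite sum of products of such forms, each of which is separately closed and so amenable to Stokes-type manipulations on the compactified configuration space $\bar C_n$.

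The next and crucial step is to exploit the fact that the full wedge $\omega$, although not each $d\phi_e$ individually, admits a symmetry not used in the construction of $C_n$. The angle form $\phi$ is defined in terms of the Poincar\'e geodesic structure of $\mathbb{H}^2$, and the configuration space $C_n$ is obtained by quotienting $\mathrm{Conf}_n(\mathbb{H}^2)$ by only the $2$-dimensional subgroup of horizontal translations and positive dilations; the full orientation-preserving hyperbolic isometry group $\mathrm{PSL}_2(\mathbb{R})$ is $3$-dimensional, so a residual $1$-parameter subgroup $K \cong S^1$ still acts on $C_n$. A direct computation shows that each $d\phi_e$ transforms under $\mathrm{PSL}_2(\mathbb{R})$ by an exact correction $d\arg(cz_i+d)$, and the corrections cancel in the wedge $\omega$, so that $\omega$ is genuinely $K$-invariant.

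The conclusion then follows by a dimension count. For $n\geq 3$, three or more points in general position in $\mathbb{H}^2$ cannot be simultaneously fixed by a nontrivial hyperbolic isometry, so $K$ acts freely on a dense open subset $U\subset C_n$. On $U$, the $K$-invariant top-degree form $\omega$ of degree $\dim C_n = 2n-3$ would have to descend to a $(2n-3)$-form on the $(2n-4)$-dimensional quotient $U/K$, and such a form vanishes identically for dimensional reasons. Since the complement $\bar C_n \setminus U$ is a lower-dimensional stratified subset of the FMAS compactification and hence of measure zero, the integral $\int_{\bar C_n}\omega$ vanishes.

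The main obstacle will be verifying rigorously that the exact $\mathrm{PSL}_2(\mathbb{R})$-corrections to the individual $d\phi_e$ cancel in the wedge $\omega$; this is a combinatorial calculation that requires tracking anti-symmetrization over the ordered edges of $\Gamma$, and it is the step most sensitive to the hypothesis $n\geq 3$. A robust fallback, in case the symmetry argument is hard to execute cleanly, is to proceed inductively on $n$ using Stokes' theorem on $\bar C_n$: the codimension-one boundary strata are products of the form $\bar C_k\times\bar C_{n-k+1}$ classified as in the S1/S2 analysis already used above, and in each such stratum a residual dilation symmetry near the collision point produces a free $\R_{>0}$-action on the collapsing factor, so that the same dimension-count kills the boundary contribution and lets the induction close at the trivial base case $n=2$.
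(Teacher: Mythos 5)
Your proposal does not follow the paper's argument, and its central step contains a genuine gap. The key claim you rely on -- that a top-degree form which is invariant under a free circle action on a dense open set must vanish because it ``would have to descend'' to the lower-dimensional quotient -- is false. Invariance $L_X\omega=0$ alone does not make a form basic; you also need horizontality, $\iota_X\omega=0$, and that fails here: for the residual rotation action (which is the relevant extra symmetry once one notes that in this lemma $C_n$ is really the space of $n$ points in the plane modulo translations and dilations, with Euclidean angle forms), every angle shifts at unit rate, so $\iota_X\dd\phi_e=1$ for each edge and $\iota_X\omega\neq 0$. The case $n=2$ exposes the flaw inside this very setting: $C_2\cong S^1$, the rotation action is free, $\dd\phi$ is invariant, yet $\int_{C_2}\dd\phi=2\pi\neq 0$; your argument as written would ``prove'' this vanishes, and nothing in it genuinely uses $n\geq 3$. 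The most the symmetry buys is a Fubini reduction $\int_{C_n}\omega = 2\pi\int_{C_n/S^1}\bar\omega$ with $\bar\omega$ a top form on the $(2n-4)$-dimensional quotient, which is no vanishing at all. (Your preliminary step about exact $\mathrm{PSL}_2(\R)$-corrections cancelling in the wedge is also unsubstantiated, but it is moot: in the Euclidean setting the rotation invariance of each $\dd\phi_e$ is immediate, and the problem lies downstream.)

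Your fallback is not viable either: $\omega$ is a closed form of top degree on $C_n$, so applying Stokes' theorem to $\omega$ itself yields no information about $\int_{\bar C_n}\omega$; to use Stokes you must produce a primitive, and ``killing boundary contributions by a residual dilation symmetry'' repeats the same invalid dimension count. Producing a primitive is exactly what the paper's proof (Kontsevich's logarithm trick) does: one uses the group action to write $C_n\cong S^1\times U$, integrates one angle form over the $S^1$ factor, and for the remaining even number $2N=2n-4$ of forms replaces each $\dd\,\mathrm{arg}(f_j)$ by $\dd\log\vert f_j\vert$ -- legitimate because, after decomposing into holomorphic and anti-holomorphic pieces, only the terms with equal numbers of each survive integration over the complex manifold $U$, and those terms agree for the two decompositions. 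Then $\log\vert f_1\vert$, interpreted distributionally via the map $\calI$ on the compactification $\bar U$, serves as a primitive, and Stokes' theorem gives the vanishing. If you want to salvage a symmetry-based route you would need a symmetry under which $\omega$ changes sign (as in the mirror arguments for special graphs), not merely an invariance plus a dimension count; for the general statement of this lemma no such elementary argument is available, which is why the logarithmic/distributional argument is the one used.
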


\begin{figure}[ht]
\centering
\begin{tikzpicture}[scale=0.6]
\tikzset{Bullet/.style={fill=black,draw,color=#1,circle,minimum size=0.5pt,scale=0.5}}
\node[Bullet=black, label=below: $\bar 1$] (bar1) at (1,-2) {};
\node[Bullet=black] (bar2) at (3,-2) {};
\node[Bullet=black] (bar3) at (5,-2) {};
\node[Bullet=black, label=below: $\bar n$] (barn) at (10,-2) {};
\node[Bullet=gray] (v1) at (4,1) {};
\node[Bullet=gray] (v2) at (6,3) {};
\node[] (n1) at (1,5) {};
\node[label=left: $\mathbb{H}^2$] (H) at (0,2.5) {};
\draw (0,-2) -- (6,-2);
\draw[dashed] (6,-2) -- (9,-2);
\draw (9,-2) -- (11,-2);
\draw[dashed] (5,2) circle (2cm);
\draw[fermion] (v2) -- (v1);
\draw[fermion] (n1) -- (v1);
\draw[fermion] (v1) -- (2,0);
\draw[fermion] (v1) -- (4,-1);
\draw[fermion] (9,3) -- (v2);
\draw[fermion] (v2) -- (9,0);
\draw[fermion,dashed] (0,-1) -- (bar1);
\draw[fermion,dashed] (11,-1) -- (barn);
\draw[fermion,dashed] (9,-1) -- (barn);
\end{tikzpicture}
\caption{Example of a non-vanishing term.} 
\label{fig:non-vanishing}
\end{figure}

\begin{figure}[ht]
\centering
\begin{tikzpicture}[scale=0.6]
\tikzset{Bullet/.style={fill=black,draw,color=#1,circle,minimum size=0.5pt,scale=0.5}}
\node[Bullet=black, label=below: $\bar 1$] (bar1) at (1,-2) {};
\node[Bullet=black] (bar2) at (3,-2) {};
\node[Bullet=black] (bar3) at (5,-2) {};
\node[Bullet=black, label=below: $\bar n$] (barn) at (10,-2) {};
\node[Bullet=gray] (v1) at (4,1) {};
\node[Bullet=gray] (v2) at (6,3) {};
\node[Bullet=gray] (v3) at (6,1) {};
\node[] (n1) at (1,5) {};
\node[label=left: $\mathbb{H}^2$] (H) at (0,2.5) {};
\draw (0,-2) -- (6,-2);
\draw[dashed] (6,-2) -- (9,-2);
\draw (9,-2) -- (11,-2);
\draw[dashed] (5,2) circle (2cm);
\draw[fermion] (v2) -- (v1);
\draw[fermion] (n1) -- (v1);
\draw[fermion] (v1) -- (2,0);
\draw[fermion] (v1) -- (4,-1);
\draw[fermion] (9,3) -- (v2);
\draw[fermion] (v2) -- (9,0);
\draw[fermion] (v1) -- (v3);
\draw[fermion] (v3) -- (v2);
\draw[fermion] (v3) -- (7.5,0);
\draw[fermion,dashed] (0,-1) -- (bar1);
\draw[fermion,dashed] (11,-1) -- (barn);
\draw[fermion,dashed] (9,-1) -- (barn);
\end{tikzpicture}
\caption{Example of a vanishing term.} 
\label{fig:non-vanishing}
\end{figure}

\begin{proof}[Proof of Lemma \ref{lem:Kontsevich}]
First we need to restrict the integral to an even number of angle forms. This can be done by identifying $C_n$ with the subset of $\mathbb{H}^n$, where one of the endpoints of $e_1$ is set to be the origin and the second is forced to lie on the unit circle (note that this can always be done by considering the action of the Lie group in the definition of $C_n$). Then the integral decomposes into a product of integrals of $\dd\phi_e$ over $S^1$ and the remaining $2n-4=:2N$ forms integrated over the resulting complex manifold $U$ given by the isomorphism
\[
C_n\cong S^1\times U.
\]
Then the claim follows by the following computation:
\begin{align}
    \begin{split}
        \int_U\bigwedge^{2N}_{j=1}\dd\,\mathrm{arg}(f_j)&=\int_U\bigwedge^{2N}_{j=1}\dd\log(\vert f_j\vert)=\int_{\bar U}\calI\bigg(\dd\log(\vert f_1\vert)\bigwedge^{2N}_{j=2}\dd\log(\vert f_j\vert)\bigg)\\
        &=\int_{\bar U}\dd\bigg(\calI\bigg(\log(\vert f_1\vert)\bigwedge^{2N}_{j=2}\dd\log(\vert f_j\vert)\bigg)\bigg)=0.
    \end{split}
\end{align}
Let us see what is happening here. First, note that the angle function $\phi_{e_j}$ was expressed in terms of the argument of the (holomorphic) function $f_j$. In particular, $f_j$ is just the difference of the coordinates of the endpoints of $e_j$. The first equality follows by the decompositions
\begin{align*}
\dd\,\mathrm{arg}(f_j)&=\frac{1}{2\I}\big(\dd\log(f_j)-\dd\log(\bar f_j)\big),\\
\dd\log(\vert f_j\vert)&=\frac{1}{2}\big(\dd\log(f_j)+\dd\log(\bar f_j)\big).
\end{align*}

Thus, the product of $2N$ of these expressions is a linear combination of products of $k$ holomorphic and $2N-k$ anti-holomorphic forms. By a result of complex analysis, one can show that in the integration over the complex manifold $U$, the only terms which do not vanish are the ones where $k=N$. Then one can observe that the terms coming from the first decomposition are equal to those coming from the second decomposition.

In the second equality the integral of the differential form is replaced by an integral over a suitable 1-form with values in the space of \emph{distributions} over the compactification $\bar U$ of $U$. One can show (this is another Lemma) that such a map $\calI$, which sends usual 1-forms to distributional ones, commutes with the differential and thus by Stokes' theorem we get the claim.
\end{proof}

\subsubsection{Last step: vanishing terms for type S2 strata}
Finally, let us consider the strata of type S2. There we will have a similar dimensional argument, i.e. it analogously restricts the possible non-vanishing terms to the condition that the subgraph $\Gamma_1$, spanned by $i+j$ collapsing vertices of first and second type respectively, contains exactly $2i+j-2$ edges. Similarly as before, for the quotient graph $\Gamma_2$ obtained by contracting $\Gamma_1$, the claim is that the only non-vanishing contributions come from the graphs for which both graphs obtained from a given $\Gamma$ are admissible. In this case we get a decomposition of the weight $w_\Gamma$ into the product $w_{\Gamma_1}\cdot w_{\Gamma_2}$ which in general, by the conditions on the number of edges of $\Gamma$ and $\Gamma_1$, does not vanish.

The only remaining thing to check is that we do not have \emph{bad edges} by contraction (see Figure \ref{fig:non-admissible_graphs}). The only such possibility appears when $\Gamma_2$ contains an edge which starts from a vertex of second type. In this case the corresponding integral vanishes because it contains the differential of an angle map evaluated on the pair $(z_1,z_2)$ where $z_1$ is constrained to lie in $\R$ and such maps vanish for every $z_2$ because the angle is measured with respect to the Poincar\'e metric (recall Figure \ref{fig:angle_map} for an intuitive picture).

\begin{figure}[ht]
\centering
\begin{tikzpicture}[scale=0.6]
\tikzset{Bullet/.style={fill=black,draw,color=#1,circle,minimum size=0.5pt,scale=0.5}}
\node[Bullet=black, label=below: $\bar 1$] (bar1) at (1,-2) {};
\node[Bullet=black, label=below: $\bar \ell$] (bar2) at (5,-2) {};
\node[Bullet=black, label=below: $\overline{\ell+1}$] (bar3) at (7,-2) {};
\node[Bullet=black, label=below: $\bar n$] (barn) at (10,-2) {};
\node[Bullet=gray] (v1) at (4.5,0) {};
\node[Bullet=gray] (v2) at (6,1) {};
\node[] (n1) at (1,5) {};
\node[label=left: $\mathbb{H}^2$] (H) at (0,2.5) {};
\draw (0,-2) -- (3,-2);
\draw[dashed] (3,-2) -- (9,-2);
\draw (9,-2) -- (11,-2);
\draw[dashed] (6,0) circle (2.4cm);
\draw[fermion] (v2) -- (v1);
\draw[fermion] (n1) -- (v1);
\draw[fermion] (v1) -- (bar2);
\draw[fermion] (v1) -- (bar3);
\draw[fermion] (9,3) -- (v2);
\draw[fermion] (9,0) -- (v2);
\draw[fermion] (v2) -- (bar3);
\end{tikzpicture}
\caption{Example of an admissible quotient. } 
\label{fig:admissible_quotient}
\end{figure}

\begin{figure}[ht]
\centering
\begin{tikzpicture}[scale=0.6]
\tikzset{Bullet/.style={fill=black,draw,color=#1,circle,minimum size=0.5pt,scale=0.5}}
\node[Bullet=black, label=below: $\bar 1$] (bar1) at (1,-2) {};
\node[Bullet=black, label=below: $\bar \ell$] (bar2) at (5,-2) {};
\node[Bullet=black, label=below: $\overline{\ell+1}$] (bar3) at (7,-2) {};
\node[Bullet=black, label=below: $\bar n$] (barn) at (10,-2) {};
\node[label=below: bad edge] (be) at (2,0) {}; 
\node[Bullet=gray] (v1) at (4.5,0) {};
\node[Bullet=gray] (v2) at (6,1) {};
\node[] (n1) at (1,5) {};
\node[label=left: $\mathbb{H}^2$] (H) at (0,2.5) {};
\draw (0,-2) -- (3,-2);
\draw[dashed] (3,-2) -- (9,-2);
\draw (9,-2) -- (11,-2);
\draw[dashed] (6,0) circle (2.4cm);
\draw[fermion] (v2) -- (v1);
\draw[fermion] (n1) -- (v1);
\draw[fermion] (v1) -- (bar2);
\draw[fermion] (v1) -- (1,0);
\draw[fermion] (9,3) -- (v2);
\draw[fermion] (9,0) -- (v2);
\draw[fermion] (v2) -- (bar3);
\end{tikzpicture}
\caption{Example of a non-admissible quotient. Such a term vanishes.} 
\label{fig:non-admissible_quotient}
\end{figure}

This means that the only non-vanishing terms correspond to the case when we plug the differential operator associated to $\Gamma_1$ as the $k$-th argument of the one associated to $\Gamma_2$, where $k$ is the vertex of the second type emerging from the collapse. Summing over all these possibilities, the strata of type S2 exactly corresponds to the left hand side of \eqref{eq:formality_decoding}.

We have thus proven that $U$ is indeed an $L_\infty$-morphism and since the first coefficients $U_1$ are given by $U^{(0)}_1$ it is also a quasi-isomorphism and hence determines uniquely a star product for any bivector field $\pi$ on $\R^d$ by the formula \eqref{eq:star_product_L_infty}.

\section{Globalization of Kontsevich's star product}
\label{sec:globalization}
Kontsevich's formula for the star product only gives a quantization for the case when $M=\R^d$ for a general Poisson structure $\pi$, and thus only describes a local description for the general case. Already in \cite{K}, Kontsevich gave a globalization method which was very briefly mentioned, but it was explicitly constructed by Cattaneo, Felder and Tomassini in \cite{CFT,CattaneoFelderTomassini2002} in a similar way as Fedosov did for the symplectic case of the Moyal product \cite{Fedosov1994} (see also Section \ref{subsec:Fedosov}). Their approach uses a flat connection $\bar D$ on a vector bundle over $M$ such that the algebra of the horizontal sections with respect to $\bar D$ is a quantization of $C^\infty(M)$. Consider the vector bundle $E_0\to M$ of infinite jets of functions endowed with a flat connection $D_0$. The fiber $(E_0)_x$ over $x\in M$ is naturally a commutative algebra and carries the Poisson structure induced fiberwise by the Poisson structure on $C^\infty(M)$. The map which associates to any globally defined function its infinite jet at each point $x\in M$ is a Poisson isomorphism onto the Poisson algebra of horizontal sections of $E_0$ with respect to $D_0$. Since the star product gives a deformation of the pointwise product on $C^\infty(M)$, we want to have a \emph{quantum version} of the vector bundle and the flat connection in order to get an similar isomorphism. The vector bundle $E\to M$ is defined in terms of a section $\phi^\mathrm{aff}$ of the fiber bundle $M^\mathrm{aff}\to M$, where $M^\mathrm{aff}$ denotes the quotient of the manifold $M^\mathrm{coor}$ of jets of coordinate systems on $M$ by the action of the group $GL(d,\R)$ of linear diffeomorphims given by $E:=(\phi^\mathrm{aff})^*\widetilde{E}$, where $\widetilde{E}$ is the bundle of $\R[\![\hbar]\!]$-modules 
$M^\mathrm{coor}\times_{GL(d,\R)}\R[\![y^1,\ldots,y^d]\!][\![\hbar]\!]\to M^\mathrm{aff}$.
Note that the section $\phi^\mathrm{aff}$ can be realized explicitly by a collection of jets at $0$ of maps $\phi_x\colon \R^d\to M$ such that $\phi_x(0)=x$ for all $x\in M$ (modulo the action of $GL(d,\R)$). Thus, we can assume for simplicity that we have fixed a representative $\phi_x$ of the equivalence class for each open set of a given covering, hence realizing a trivialization of the bundle $E$. So from now on we will identify $E$ with the trivial bundle with fiber given by $\R[\![y^1,\ldots,y^d]\!][\![\hbar]\!]$. Therefore, $E$ realizes the desired quantization, since it is isomorphic (as a bundle of $\R[\![\hbar]\!]$-modules) to the bundle $E_0[\![\hbar]\!]$ whose elements are formal power series with infinite jets of functions as coefficients.

\subsection{The multiplication, connection and curvature maps}
To define the star product and the connection on $E$, one has to introduce new objects whose existence and properties will be byproducts of the formality theorem. For a Poisson structure $\pi$ and two vector fields $\xi$ and $\eta$ on $\R^d$, we define 
\begin{align}
    \label{eq:multiplication}
    P(\pi)&:=\sum_{j\geq 0}\frac{\hbar^j}{j!}U_j(\pi\land\dotsm \land \pi),\\
    \label{eq:connection}
    A(\xi,\pi)&:=\sum_{j\geq 0}\frac{\hbar^j}{j!}U_{j+1}(\xi\land\pi\land\dotsm\land\pi),\\
    \label{eq:curvature}
    F(\xi,\eta,\pi)&:=\sum_{j\geq 0}\frac{\hbar^j}{j!}U_{j+2}(\xi\land\eta\land\pi\land\dotsm\land\pi).
\end{align}

It is easy to show that the degree of the multidifferential operators on the right hand sides of \eqref{eq:multiplication}, \eqref{eq:connection} and \eqref{eq:curvature} induce that $P(\pi)$ is a (formal) bidifferential operator, $A(\xi,\pi)$ is a differential operator and $F(\xi,\eta,\pi)$ is a function. Indeed, $P(\pi)$ is actually just the star product associated to $\pi$. In particular, the maps $P,A$ and $F$ are elements of degree $0,1$ and $2$ respectively of the Lie algebra cohomology complex of (formal) vector fields with values in the space of local polynomial maps, i.e. multidifferential operators which depend polynomially on $\pi$. An element of degree $j$ of this complex is a map that sends $\xi_1\land\dotsm\land\xi_j$ to a multidifferential operator $S(\xi_1\land\dotsm\land \xi_j\land\pi)$. The differential $\delta$ of this complex can then be defined as 
\begin{multline*}
    \delta S(\xi_1\land\dotsm\land\xi_{j+1}\land\pi):=\sum_{1\leq i\leq j+1}(-1)^i\frac{\de}{\de t}\Big|_{t=0}S\Big(\xi_1\land\dotsm\land\widehat{\xi}_i\land\dotsm\land\xi_{j+1}\land(\Phi^{\xi_i}_t)_*\pi\Big)\\
    +\sum_{i<\ell}(-1)^\ell S\Big([\xi_i,\xi_\ell]\land\xi_1\land\dotsm\land\widehat{\xi}_i\land\dotsm\land\widehat{\xi}_\ell\land\dotsm\land\xi_{j+1}\land\pi\Big),
\end{multline*}
where $\Phi^\xi_t$ denotes the flow of the vector field $\xi$. The associativity of the star product can now be expressed by 
\[
P\circ (P\otimes \id-\id\otimes P)=0.
\]
This follows from the formality theorem and hence the following equations do hold in the a similar way:
\begin{equation}
\label{eq:Eq1}
P(\pi)\circ (A(\xi,\pi)\otimes \id+\id\otimes A(\xi,\pi))=A(\xi,\pi)\circ P(\pi)+\delta P(\xi,\pi),
\end{equation}
\begin{equation}
\label{eq:Eq2}
P(\pi)\circ (F(\xi,\eta,\pi)\otimes \id-\id\otimes F(\xi,\eta,\pi))=-A(\xi,\pi)\circ A(\eta,\pi)+A(\eta,\pi)\circ A(\xi,\pi)+\delta A(\xi,\eta,\pi),
\end{equation}
\begin{equation}
\label{eq:Eq3}
-A(\xi,\pi)\circ F(\eta,\zeta,\pi)-A(\eta,\pi)\circ F(\zeta,\xi,\pi)-A(\zeta,\pi)\circ F(\xi,\eta,\pi)=\delta F(\xi,\eta,\zeta,\pi)
\end{equation}

Equation \eqref{eq:Eq1} describes the fact that under coordinate transformation induced by the vector field $\xi$, the star product $P(\pi)$ is changed to an equivalent one up to higher order terms. Equations \eqref{eq:Eq2} and \eqref{eq:Eq3} will be used for the construction of the relations between a connection 1-form $A$ and its curvature $F_A$.
For the explicit computation of the configuration space integrals, which arise for the weight computation in the Taylor coefficients of $U_j$, we can also describe the lowest order terms in the expansion of $P, A$ and $F$ and their action on functions:
\begin{enumerate}
    \item $P(\pi)(f\otimes g)=fg+\hbar\pi(\dd f,\dd g)+O(\hbar^2)$,
    \item $A(\xi,\pi)=\xi+O(\hbar)$, where we identify $\xi$ with a first order differential operator on the right hand side,
    \item $A(\xi,\pi)=\xi$, if $\xi$ is a linear vector field,
    \item $F(\xi,\eta,\pi)=O(\hbar)$,
    \item $P(\pi)(1\otimes f)=P(\pi)(f\otimes 1)=f$,
    \item $A(\xi,\pi)1=0$.
\end{enumerate}
    
Equations $(1)$ and $(5)$ have already been introduced before as the defining conditions for a star product. The equations involving the connection $A$ are used to construct a connection $D$ on sections on $E$. A section $f\in\Gamma(E)$ is locally given by a map $x\mapsto f_x$, where for every $y$, $f_x(y)$ is a formal power series with coefficients given by infinite jets. On this space we can introduce a deformed product $\star$ which will be the desired star product on $C^\infty(M)$ after we have identified horizontal section with ordinary functions. Let us denote analogously by $\pi_x$ the push-forward by $\phi_x^{-1}$ of the Poisson bivector field $\pi$ on $\R^d$. Then we can define the deformed product by the formal bidifferential operator $P(\pi_x)$ similarly as we did for the usual product by $P(\pi)$:
\[
(f\star g)_x(y):=f_x(y)g_x(y)+\hbar\pi^{ij}_x(y)\frac{\de f_x(y)}{\de y^i}\frac{\de g_x(y)}{\de y^j}+O(\hbar^2).
\]
One can define the connection $D$ on $\Gamma(E)$ by 
\[
(Df)_x:=\dd_xf+A^M_xf,
\]
where $\dd_xf$ denotes the de Rham differential of $f$ regarded as a function with values in $\R[\![y^1,\ldots,y^d]\!][\![\hbar]\!]$ and the formal connection $1$-form is given through its action on a tangent vector $\xi$ by 
\[
A^M_x(\xi):=A(\widehat{\xi}_x,\pi_x),
\]
where $A$ is the operator defined as in \eqref{eq:connection} evaluated on the multivector fields $\xi$ and $\pi$ given through the local coordinate system defined by $\phi_x$. Note that since the coefficients $U_j$ of the formality map that appear in the definition of $P$ and $A$ are polynomial in the derivatives of the coordinate of the arguments $\xi$ and $\pi$, all results which hold for $P(\pi)$ and $A(\xi,\pi)$ are inherited by their counterparts.
In fact, Equations $(1)$ and $(5)$ above ensure that $\star$ is an associative deformation of the pointwise product on sections and Equations $(2)$ and $(3)$ can be used to prove that $D$ is independent of the choice of $\phi$ and hence induces a global connection on $E$.

\subsection{Construction of solutions for a Fedosov-type equation}
We can extend $D$ and $\star$ by the (graded) Leibniz rule to the whole complex of formal differential forms $\Omega^\bullet(E):=\Omega^\bullet(M)\otimes_{C^\infty(M)}\Gamma(E)$ and use \eqref{eq:Eq2} to get the following lemma.

\begin{lem}
Let $F^M$ be the $E$-valued 2-form given by $x\mapsto F^M_x$ where $F_x^M(\xi,\eta):=F(\widehat{\xi}_x,\widehat{\eta}_x,\pi_x)$ for any pair of vector fields $\xi,\eta$. Then $F^M$ represents the curvature of $D$ and the two are related to each other and to the star product by the usual identities:
\begin{enumerate}
    \item $D(f\star g)=D(f)\star g+f\star D(g)$,
    \item $D^2=[F^M,\enspace]_\star$,
    \item $DF^M=0$.
\end{enumerate}
\end{lem}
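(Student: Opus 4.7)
The plan is to prove each of the three identities by unpacking the local definitions of $\star$, $D$, and $F^M$ in terms of the trivialization induced by the family $\{\phi_x\}$, and then to recognize the three formality-derived equations \eqref{eq:Eq1}, \eqref{eq:Eq2}, \eqref{eq:Eq3} as precisely the pointwise statements needed. Since everything is local on $M$, it suffices to verify the identities on a chart where a representative $\phi_x$ has been fixed, identifying $\Gamma(E)$ with formal-series-valued functions $x\mapsto f_x(y)$ and $\Omega^\bullet(E)$ with the de Rham complex of $M$ tensored with such formal series. The only subtlety throughout is bookkeeping the split between the \emph{base} variable $x$ and the \emph{fiber} variable $y$: differentiating $f_x$ with respect to $x$ interacts nontrivially with the family of coordinate systems $\phi_x$, and this is exactly what the operators $A$ and $F$ are built to control.

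For identity (1), the Leibniz rule, I would compute $(D(f\star g))_x$ by writing $f\star g = P(\pi_x)(f_x\otimes g_x)$ and applying $\dd_x + A^M_x$. The de Rham part $\dd_x P(\pi_x)(f_x\otimes g_x)$ splits via the ordinary Leibniz rule for $\dd_x$ plus a term that measures how $P(\pi_x)$ itself depends on $x$, which is precisely what $\delta P(\xi,\pi_x)$ encodes when contracted with a tangent vector $\xi$. The $A^M$-part contributes $A(\hat\xi_x,\pi_x)\circ P(\pi_x)(f_x\otimes g_x)$. Equation \eqref{eq:Eq1} then rewrites the sum as $P(\pi_x)\circ(A(\hat\xi_x,\pi_x)\otimes\id+\id\otimes A(\hat\xi_x,\pi_x))(f_x\otimes g_x)$, which is exactly $(Df)\star g + f\star (Dg)$ evaluated on $\xi$.

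For identity (2), I would expand $D^2=(\dd+A^M)\wedge(\dd+A^M)=\dd A^M + A^M\wedge A^M$ acting on a section $f$, evaluated on a pair $(\xi,\eta)$. The first piece $(\dd A^M)(\xi,\eta)$ produces, after moving $\dd_x$ past $A(\hat\xi_x,\pi_x)$, a term of the shape $\delta A(\hat\xi_x,\hat\eta_x,\pi_x)f_x$ plus commutators of $\xi,\eta$; the second piece gives the composition $A(\hat\xi_x,\pi_x)\circ A(\hat\eta_x,\pi_x)$ (suitably antisymmetrized). Equation \eqref{eq:Eq2} is exactly the identity rearranging these contributions into $P(\pi_x)\circ(F^M_x(\xi,\eta)\otimes \id-\id\otimes F^M_x(\xi,\eta))f_x$, which by definition of $\star$ equals $[F^M,f]_\star(\xi,\eta)$.

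Identity (3) is then the Bianchi identity $\dd F^M+[A^M,F^M]_\star=0$. Evaluating on a triple $(\xi,\eta,\zeta)$, the de Rham differential contributes the $\delta F$ term (the coefficient polynomial side is untouched, since $F(\hat\xi,\hat\eta,\pi)$ depends on $x$ only through the arguments), and the bracket with $A^M$ contributes the three $A(\cdot,\pi_x)\circ F(\cdot,\cdot,\pi_x)$ terms on the left-hand side of \eqref{eq:Eq3}; that equation says exactly that their sum vanishes. The main obstacle, as already flagged, is the careful translation between the abstract cohomological differential $\delta$ acting on local polynomial maps in $(\xi,\pi)$ and the geometric de Rham differential $\dd_x$ acting on $x\mapsto\phi_x$-dependent objects; once this translation is set up cleanly (using that $\delta$ is built from Lie derivatives along the $\hat\xi_x$'s and Lie brackets of vector fields, which are precisely what appear when differentiating $\pi_x$ and the trivialization in $x$), all three identities reduce to the three formality-derived equations applied pointwise.
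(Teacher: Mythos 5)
Your proposal is correct and follows essentially the same route as the paper: the paper's own proof deduces the three identities directly from \eqref{eq:Eq1}, \eqref{eq:Eq2} and \eqref{eq:Eq3} after identifying the complex of formal multivector fields with differential $\delta$ with the de Rham complex (the isomorphism made explicit in \cite{CFT}), which is exactly the $\delta$-versus-$\dd_x$ translation you set up and then apply equation by equation. Your write-up merely spells out the pointwise bookkeeping that the paper leaves to that cited identification.
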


\begin{proof}
We can deduce these identities from the relations \eqref{eq:Eq1}, \eqref{eq:Eq2} and \eqref{eq:Eq3}, in which the star commutator $[f,g]_\star:=f\star g-g\star f$ is already implicitly defined, after identifying the complex of formal multivector fields endowed with the differential $\delta$ with the complex of formal multidifferential operators with the de Rham differential. Such an isomorphism was explicitly given in \cite{CFT}.
\end{proof}

\begin{rem}
A connection $D$ which satisfies the above relation on a bundle $E$ of associative algebras is called \emph{Fedosov connection} with \emph{Weyl curvature} $F$. It is the kind of connection Fedosov introduced in order to give a global construction for the symplectic case (Section \ref{subsec:Fedosov}) \cite{Fedosov1994}.  
\end{rem}

The final step towards a globalization is to deform the connection $D$ into a new connection $\bar D$ which has the same properties and moreover has vanishing Weyl curvature, i.e. $\bar D^2=0$. Hence, we can define the complex $H^j_{\bar D}(E)$ and thus the (sub)algebra of horizontal sections $H^0_{\bar D}(E)$. 

\begin{lem}
\label{lem:Fedosov_connection}
Let $D$ be a Fedosov connection on $E$ with Weyl curvature $F$ and $\gamma$ an $E$-valued 1-form. Then 
\[
\bar D:=D+[\gamma,\enspace]_\star
\]
is also a Fedosov connection with Weyl curvature $\bar F=F+D\gamma+\gamma\star \gamma$.
\end{lem}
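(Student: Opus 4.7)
The plan is to verify in turn the three defining properties of a Fedosov connection for $\bar D$ and the Weyl curvature candidate $\bar F := F + D\gamma + \gamma\star\gamma$, i.e.\ (i) the graded Leibniz rule with respect to $\star$, (ii) the curvature identity $\bar D^2 = [\bar F,\enspace]_\star$, and (iii) the Bianchi identity $\bar D\bar F=0$. All three are direct computations once the signs are handled correctly, so the work consists of carefully bookkeeping the graded commutator $[\alpha,\beta]_\star := \alpha\star\beta -(-1)^{pq}\beta\star\alpha$ for $\alpha$ a $p$-form and $\beta$ a $q$-form valued in $E$, and of repeatedly using the associativity of $\star$ on $\Omega^\bullet(E)$.

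For (i), I would first observe that for any $E$-valued form $\gamma$ the inner derivation $[\gamma,\enspace]_\star$ is a graded derivation of $\star$ of degree $|\gamma|$: this is a purely algebraic consequence of associativity of $\star$ (it is the usual verification that the graded commutator of an associative algebra is a graded biderivation). Since $D$ is already a graded derivation of $\star$ of degree $1$ by hypothesis and $[\gamma,\enspace]_\star$ has the same degree, $\bar D = D+[\gamma,\enspace]_\star$ is also a graded derivation of $\star$ of degree $1$.

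For (ii), I would expand
\begin{equation*}
\bar D^2 = D^2 + D\circ[\gamma,\enspace]_\star + [\gamma,\enspace]_\star\circ D + [\gamma,[\gamma,\enspace]_\star]_\star.
\end{equation*}
The first term is $[F,\enspace]_\star$ by hypothesis. Applying the graded Leibniz rule for $D$ inside the star–commutator gives the key identity $D\circ[\gamma,\enspace]_\star + [\gamma,\enspace]_\star\circ D = [D\gamma,\enspace]_\star$. For the last term, the graded Jacobi identity for the associative algebra $(\Omega^\bullet(E),\star)$ applied to the $1$-form $\gamma$ with itself yields $[\gamma,[\gamma,\enspace]_\star]_\star = \tfrac12[[\gamma,\gamma]_\star,\enspace]_\star = [\gamma\star\gamma,\enspace]_\star$, using $[\gamma,\gamma]_\star = 2\gamma\star\gamma$ in odd degree. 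Summing, $\bar D^2 = [F+D\gamma+\gamma\star\gamma,\enspace]_\star = [\bar F,\enspace]_\star$.

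For (iii), I would expand $\bar D\bar F = D F + D^2\gamma + D(\gamma\star\gamma) + [\gamma,F]_\star + [\gamma,D\gamma]_\star + [\gamma,\gamma\star\gamma]_\star$. The first term vanishes by hypothesis; $D^2\gamma = [F,\gamma]_\star$ cancels $[\gamma,F]_\star$ by graded antisymmetry of the commutator (both have even total degree); $D(\gamma\star\gamma) = [D\gamma,\gamma]_\star$ via the graded Leibniz rule, and this cancels $[\gamma,D\gamma]_\star$ for the same reason; finally $[\gamma,\gamma\star\gamma]_\star=0$ by graded antisymmetry together with associativity (cyclicity in odd degree). The main obstacle is not conceptual but bookkeeping: ensuring that every sign $(-1)^{pq}$ in the graded commutator and every sign $(-1)^p$ in the graded Leibniz rule is tracked through the expansion. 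Once these purely combinatorial identities are made explicit, the three properties fall out mechanically, giving the claim.
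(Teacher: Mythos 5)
Your proof is correct and follows essentially the same route as the paper: expand $\bar D^2$, combine the cross terms into $[D\gamma,\enspace]_\star$ via the Leibniz rule, use the graded Jacobi identity to get $[\gamma,[\gamma,\enspace]_\star]_\star=\tfrac12[[\gamma,\gamma]_\star,\enspace]_\star=[\gamma\star\gamma,\enspace]_\star$, and then check $\bar D\bar F=0$ by the same Jacobi/skew-symmetry cancellations. The only difference is cosmetic: you verify the derivation property of $\bar D$ explicitly (the paper leaves it implicit) and work at the operator level rather than on a test section.
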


\begin{proof}
Let $f$ be a section in $\Gamma(E)$. Then 
\begin{align*}
    \bar D^2 f&=[\bar F,f]_\star+D[\gamma,f]_\star+[\gamma,Df]_\star+[\gamma,[\gamma,f]_\star]_\star\\
    &=[F,f]_\star+[D\gamma,f]_\star+[\gamma,[\gamma,f]_\star]_\star\\
    &=\left[F+D\gamma+\frac{1}{2}[\gamma,\gamma]_\star, f\right]_\star,
\end{align*}
where the last equality follows from the Jacobi identity of the star commutator $[\enspace,\enspace]_\star$, since each associative product induces a Lie bracket given by the commutator. if we apply $\bar D$ to the obtained curvature, we get 
\begin{align*}
    \bar D\left(F+D\gamma+\frac{1}{2}[\gamma,\gamma]_\star\right)&=D^2\gamma+\frac{1}{2}[D\gamma,\gamma]_\star-\frac{1}{2}[\gamma,D\gamma]_\star+[\gamma,F+D\gamma]_\star\\
    &=[F,\gamma]_\star+[\gamma,F]_\star\\
    &=0,
\end{align*}
where we again used the (graded) Jacobi identity and the (graded) skew-symmetry of $[\enspace,\enspace]_\star$.
\end{proof}

\begin{lem}
\label{lem:Fedosov_connection2}
Let $D$ be a Fedosov connection on a bundle $E=E_0[\![\hbar]\!]$ and $F$ its Weyl curvature and let 
\[
D=D_0+\hbar D_1+\dotsm,\qquad F=F_0+\hbar F_1+\dotsm
\]
be their expansion as formal power series. If $F_0=0$ and the second cohomology of $E_0$ with respect to $D_0$ is trivial, then there exists a 1-form $\gamma$ such that $\bar D$ has vanishing Weyl curvature.
\end{lem}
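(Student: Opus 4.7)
The plan is to apply Lemma \ref{lem:Fedosov_connection} and construct $\gamma$ order by order in $\hbar$ so that the deformed Weyl curvature
\[
\bar F \;=\; F+D\gamma+\gamma\star\gamma
\]
vanishes identically. I will look for $\gamma$ of the form $\gamma=\sum_{j\geq 1}\hbar^j\gamma_j$, where each $\gamma_j$ is an $E_0$-valued $1$-form on $M$. A crucial preliminary observation is that for any $E$-valued $1$-form $\eta$, the expression $\eta\star\eta$ is already $O(\hbar)$: writing $\eta=\sum_i\omega_i\otimes f_i$, we have $\eta\star\eta=\sum_{i<j}\omega_i\wedge\omega_j\otimes[f_i,f_j]_\star$, and the star commutator vanishes modulo $\hbar$ because $\star$ reduces to the commutative pointwise product at order $\hbar^0$. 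Consequently $\gamma\star\gamma=O(\hbar^3)$ when $\gamma$ starts at order $\hbar$, which is what makes the induction feasible.

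The base case ($k=1$): expanding $\bar F=0$ at order $\hbar$ gives $F_1+D_0\gamma_1=0$, so one must solve $D_0\gamma_1=-F_1$. This is possible provided $F_1$ is $D_0$-closed, which follows from the Bianchi identity $DF=0$: at order $\hbar$ (and using $F_0=0$) it reads precisely $D_0F_1=0$. Since $H^2_{D_0}(E_0)=0$ by assumption, a primitive $\gamma_1$ exists.

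For the inductive step, suppose $\gamma^{(k)}:=\sum_{j=1}^{k}\hbar^j\gamma_j$ has been constructed so that the connection $D^{(k)}:=D+[\gamma^{(k)},\,\cdot\,]_\star$ has Weyl curvature $\bar F^{(k)}=O(\hbar^{k+1})$. Write $\bar F^{(k)}=\hbar^{k+1}R_{k+1}+O(\hbar^{k+2})$, where $R_{k+1}$ is an $E_0$-valued $2$-form built from $F_{\leq k+1}$, the $D_j$, and the $\gamma_1,\dots,\gamma_k$ already found. I then want to choose $\gamma_{k+1}$ so that $D^{(k+1)}:=D^{(k)}+[\hbar^{k+1}\gamma_{k+1},\,\cdot\,]_\star$ has curvature of order at least $\hbar^{k+2}$; by Lemma \ref{lem:Fedosov_connection} this curvature equals $\bar F^{(k)}+\hbar^{k+1}D^{(k)}\gamma_{k+1}+\hbar^{2(k+1)}\gamma_{k+1}\star\gamma_{k+1}$, whose order-$\hbar^{k+1}$ coefficient is $R_{k+1}+D_0\gamma_{k+1}$. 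Thus it suffices to solve $D_0\gamma_{k+1}=-R_{k+1}$. Closedness of $R_{k+1}$ comes from the Bianchi identity $D^{(k)}\bar F^{(k)}=0$ read at order $\hbar^{k+1}$, which reduces to $D_0R_{k+1}=0$; again by $H^2_{D_0}(E_0)=0$, a solution $\gamma_{k+1}$ exists.

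Iterating this procedure defines $\gamma=\sum_{j\geq 1}\hbar^j\gamma_j\in\Omega^1(E)$, and $\bar D:=D+[\gamma,\,\cdot\,]_\star$ is then a Fedosov connection with $\bar F=0$ by Lemma \ref{lem:Fedosov_connection}. The main technical obstacle is the order-by-order bookkeeping: one must verify carefully that $\gamma\star\gamma$ contributes only to orders strictly above the one currently being solved, and that the Bianchi identity for the intermediate connection $D^{(k)}$ always yields $D_0$-closedness of the obstruction class $R_{k+1}$ in $\Omega^2(M,E_0)$, so that the hypothesis $H^2_{D_0}(E_0)=0$ can be invoked at every step.
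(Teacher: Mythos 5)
Your proof is correct and follows essentially the same route as the paper: an order-by-order induction in $\hbar$ using Lemma \ref{lem:Fedosov_connection}, with the Bianchi identity $D^{(k)}\bar F^{(k)}=0$ giving $D_0$-closedness of the order-$\hbar^{k+1}$ obstruction and the hypothesis $H^2_{D_0}(E_0)=0$ allowing one to invert $D_0$ at each step. Your explicit treatment of the base case and the bookkeeping showing $\gamma\star\gamma$ only contributes at higher orders just spells out what the paper leaves implicit.
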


\begin{proof}
By Lemma \ref{lem:Fedosov_connection}, we can equivalently say that there exists a solution to the equation 
\[
\bar F=F+D\gamma+\frac{1}{2}[\gamma,\gamma]_\star=0.
\]
We can explicitly construct a solution by induction on the order of $\hbar$. We start from $\gamma_0=0$ and assume that $\gamma^{(j)}$ is a solution modulo $\hbar^{(j+1)}$. We can add to $\bar F^{(j)}=F+D\gamma^{(j)}+\frac{1}{2}\left[\gamma^{(j)},\gamma^{(j)}\right]_\star$ the next term $\hbar^{j+1}D_0\gamma_{j+1}$ to get $\bar F^{j+1}$ modulo higher terms. From the equation $D\bar F^{(j)}+\left[\gamma^{(j)},\bar F^{(j)}\right]_\star=0$ and the induction hypothesis $\bar F^{{j}}=0$ modulo $\hbar^{(j+1)}$, we get $D_0\bar F^{(j)}=0$. Now since $H^2_{D_0}(E_0)=0$, we can invert $D_0$ in order to define $\gamma_{j+1}$ in terms of the lower order terms $\bar F^{(j)}$ in a way such that $\bar F^{(j+1)}=0$ modulo $\hbar^{(j+2)}$. This completes the induction step.
\end{proof}

\begin{rem}
Note that in our case $D$ is a deformation of the natural flat connection $D_0$ on sections of the bundle of infinite jets. Thus, the hypothesis of Lemma \ref{lem:Fedosov_connection2} is indeed satisfied and we can find a flat connection $\bar D$ which is still a good deformation of $D_0$. In the setting of formal geometry, a flat connection as $D_0$ is called a \emph{Grothendieck connection} \cite{Grothendieck1968,Katz1970} which is a certain generalization of the \emph{Gauss--Manin connection} \cite{Manin1958}.  
Another more technical lemma, which uses the notion of \emph{homological perturbation theory}, actually gives an isomorphism between the algebra of horizontal sections $H^0_{\bar D}(E)$ and its non-deformed counterpart $H^0_{D_0}(E_0)$ which is isomorphic to $C^\infty(M)$. This implies the globalization procedure. 
\end{rem}

\begin{rem}
In \cite{D}, Dolgushev gave another proof of Kontsevich's formality theorem for general manifolds by using covariant tensors instead of $\infty$-jets of multidifferential operators and multivector fields which is intrinsically local. In particular, he also formulated the deformation quantization construction on \emph{Poisson orbifolds}.
\end{rem}

\section{Operadic approach to formality and Deligne's conjecture}

\subsection{Operads and algebras}

\begin{defn}[Algebraic operad]
\label{defn:alg_op}
An (algebraic) \emph{operad} (of vector spaces) consists of the following;
\begin{enumerate}
    \item A collection of vector spaces $P(n)$, $n\geq 0$,
    \item An action of the symmetric group $S_n$ on $P(n)$ for all $n$,
    \item An identity element $\id_P\in P(1)$,
    \item Compositions $m_{n_1,\ldots,n_k}$:
    \[
    P(k)\otimes (P(n_1)\otimes\dotsm \otimes P(n_k))\to P(n_1+\dotsm +n_k),
    \]
    for all $k\geq 0$ and $n_1,\ldots,n_k\geq 0$ satisfying a list of axioms. 
\end{enumerate}
\end{defn}

\begin{rem}
We want to obtain the list of axioms for an operad by looking at some examples.
\end{rem}

\begin{ex}[Endomorphism operad]
\label{ex:endomorphism_operad}
Consider the very simple operad $P(n):=\mathsf{Hom}(V^{\otimes n},V)$ for some vector space $V$. The action of the symmetric group and the identity element are obvious. The compositions are defined by 
\begin{multline*}
    (m_{n_1\ldots,n_k}(\phi\otimes(\psi_1\otimes\dotsm \otimes\psi_k)))(v_1\otimes\dotsm \otimes v_{n_1+\dotsm+n_k})\\
    :=\phi(\psi_1(v_1\otimes\dotsm\otimes v_{n_1})\otimes\dotsm\otimes\psi_k(v_{n_1}+\dotsm+v_{n_{k-1}+1}\otimes\dotsm \otimes v_{n_1+\dotsm +n_k})),
\end{multline*}
where $\phi\in P(k)=\mathsf{Hom}(V^{\otimes k},V)$, $\psi_i\in P(n_i)=\mathsf{Hom}(V^{\otimes n_i},V)$ for $i=1,\ldots,k$. This operad is called the \emph{endomorphism operad} of a vector space $V$. 
\end{ex}

\begin{rem}
We can see from Example \ref{ex:endomorphism_operad} that there should be an assoicativity axiom for multiple compositions, various compatibilities for actions of symmetric groups, and evident relations for compositions including the identity element.
\end{rem}

\begin{ex}[Associative operad]
Consider the operad $P=\mathsf{Assoc}_1$. The $n$-th component $P(n)=\mathsf{Assoc}_1(n)$ for $n\geq 0$ is defined as the collection of all universal (functorial) $n$-linear operations $\calA^{\otimes n}\to \calA$ of associative algebras $\calA$ with unit. The space $\mathsf{Assoc}_1(n)$ has dimension $n!$, and is spanned by the operations
\[
a_1\otimes\dotsm \otimes a_n\mapsto a_{\sigma(1)}\dotsm a_{\sigma(n)},\quad \sigma\in S_n.
\]
We can identify $\mathsf{Assoc}_1(n)$ with the subspace of free associative unital algebras in $n$ generators consisting of expressions which are multilinear in each generator.
\end{ex}

\begin{defn}[Algebra over an operad]
An \emph{algebra over an operad} $P$ consists of a vector space $\calA$ and a collection of multilinear maps $f_n\colon P(n)\otimes \calA^{\otimes n}\to \calA$ for all $n\geq 0$ satisfying the following axioms:
\begin{enumerate}
    \item The map $f_n$ is $S_n$-equivariant for any $n\geq 0$,
    \item We have $f_1(\id_P\otimes a)=a$ for all $a\in \calA$,
    \item All compositions in $P$ map to compositions of multilinear operations on $\calA$.
\end{enumerate}
\end{defn}

\begin{rem}
We often also call an algebra $\calA$ over an operad $P$ a \emph{$P$-algebra}.
\end{rem}

\begin{ex}
The algebra over the operad $\mathsf{Assoc}_1$ is an associative unital algebra. If we replace the 1-dimensional space $\mathsf{Assoc}_1(0)$ by the zero space $0$, we obtain an operad $\mathsf{Assoc}$ describing associative algebras possibly without unit.
\end{ex}

\begin{ex}[Lie operad]
There is an operad $\mathsf{Lie}$ such that $\mathsf{Lie}$-algebras are Lie algebras. The dimension of the $n$-th component $\mathsf{Lie}(n)$ is $(n-1)!$ for $n\geq 0$ and zero for $n=0$.
\end{ex}

\begin{thm}
Let $P$ be an operad and $V$ a vector space. Then the free $P$-algebra $\mathsf{Free}_P(V)$ generated by $V$ is naturally isomorphic as a vector space to 
\[
\bigoplus_{n\geq 0}(P(n)\otimes V^{\otimes n})_{S_n},
\]
where the subscript $S_n$ denotes the quotient space of coinvariants for the diagonal action of the symmetric group.
\end{thm}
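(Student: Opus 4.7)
The plan is to verify the universal property of the free $P$-algebra directly for the candidate vector space $F(V) := \bigoplus_{n\geq 0}(P(n)\otimes V^{\otimes n})_{S_n}$. First I would equip $F(V)$ with a $P$-algebra structure. Given $\psi \in P(k)$ and $k$ elements $[\varphi_i \otimes (v_{i,1}\otimes\cdots\otimes v_{i,n_i})]\in F(V)$ (for $i=1,\ldots,k$), I would define the $k$-ary operation by
\[
\psi\cdot (x_1,\ldots,x_k) := \bigl[\,m_{n_1,\ldots,n_k}(\psi \otimes \varphi_1\otimes\cdots\otimes \varphi_k)\otimes (v_{1,1}\otimes\cdots\otimes v_{k,n_k})\,\bigr],
\]
and check that this is well-defined on coinvariants by using the $S_n$-equivariance of the composition maps $m_{n_1,\ldots,n_k}$ in the operad $P$. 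The operad axioms (associativity of compositions, unit, and equivariance listed informally after Definition \ref{defn:alg_op}) translate directly into the three axioms required for a $P$-algebra structure on $F(V)$.

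Next I would build the canonical inclusion $\iota\colon V \hookrightarrow F(V)$ as the composition $V \cong \id_P\otimes V \hookrightarrow (P(1)\otimes V)_{S_1} \subset F(V)$, using the identity element $\id_P\in P(1)$. Then, for any $P$-algebra $\calA$ with structure maps $f_n\colon P(n)\otimes \calA^{\otimes n}\to \calA$ and any linear map $\phi\colon V\to \calA$, I would define $\widetilde{\phi}\colon F(V)\to \calA$ on generators by
\[
\widetilde{\phi}\bigl([\varphi\otimes (v_1\otimes\cdots\otimes v_n)]\bigr) := f_n(\varphi\otimes (\phi(v_1)\otimes\cdots\otimes \phi(v_n))).
\]
Well-definedness on $S_n$-coinvariants follows from axiom (1) of a $P$-algebra (the $S_n$-equivariance of $f_n$). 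That $\widetilde{\phi}$ is a morphism of $P$-algebras reduces, after unwinding definitions, to axiom (3) for $\calA$ (compositions in $P$ are sent to compositions in $\calA$). That $\widetilde{\phi}\circ\iota = \phi$ follows from axiom (2) (the unit axiom), and uniqueness is forced by the fact that every element of $F(V)$ is, by construction, obtainable from elements of $\iota(V)$ by applying operations from $P$.

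The main conceptual obstacle is not any single computation but the bookkeeping around the symmetric group actions: one must check that every formula descends cleanly to the $S_n$-coinvariants, and that the associativity of the multi-ary compositions in $P$ matches the associativity needed to make $\widetilde{\phi}$ a $P$-algebra morphism without having to pick representatives. The cleanest way to manage this is to first establish everything at the level of $\bigoplus_n P(n)\otimes V^{\otimes n}$ (where there is no quotient), verify $S_n$-equivariance of all constructed maps, and only then pass to the quotient by coinvariants. Once these equivariance checks are in place, the universal property follows formally, and any two vector spaces satisfying it must be canonically isomorphic, giving the stated identification of $\mathsf{Free}_P(V)$.
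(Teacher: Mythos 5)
Your proposal is correct: it is the standard verification that $\bigoplus_{n\geq 0}(P(n)\otimes V^{\otimes n})_{S_n}$ satisfies the universal property characterizing $\mathsf{Free}_P(V)$, and the paper itself offers no proof of this theorem to compare against — it only states the result and remarks afterwards that the free algebra is defined by the categorical adjunction property, so your argument supplies exactly what is omitted. The structure of your argument is sound: the $P$-algebra structure on the candidate space via the operadic compositions $m_{n_1,\ldots,n_k}$, the inclusion of $V$ through the unit $\id_P\in P(1)$, the extension $\widetilde{\phi}$ defined by the structure maps $f_n$ of the target algebra, and uniqueness from the observation that $[\varphi\otimes(v_1\otimes\cdots\otimes v_n)]$ equals $\varphi$ applied to $\iota(v_1),\ldots,\iota(v_n)$ (which uses the unit axiom). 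The one point demanding real care is the one you identify: well-definedness of both the $P$-algebra structure and of $\widetilde{\phi}$ on coinvariants, which rests on the equivariance axioms for the operad compositions and for the maps $f_n$ — axioms the paper only gestures at informally after its definition of an operad, so in a fully written-out version you would want to state the equivariance condition you are invoking (composition intertwines the $S_{n_i}$-actions on the inputs with the induced block-permutation action on $P(n_1+\cdots+n_k)$) explicitly before passing to the quotient, exactly as your strategy of working first on $\bigoplus_n P(n)\otimes V^{\otimes n}$ and descending afterwards suggests.
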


\begin{rem}
The free algebra $\mathsf{Free}_P(V)$ is defined by the usual categorical \emph{adjunction property}, i.e. the set $\mathsf{Hom}_{P\text{-algebras}}(\mathsf{Free}_P(V),\calA)$ (i.e. homomorphisms in the category of $P$-algebras) is naturally isomorphic to the set $\mathsf{Hom}_{\text{vector spaces}}(V,\calA)$ for any $P$-algebra $\calA$.
\end{rem}

\subsection{Topological operads}

We can replace vector spaces by \emph{topological spaces} in the definition of an operad. The tensor product is then replaced by the Cartesian product.

\begin{defn}[Topological operad]
\label{defn:top_op}
A \emph{topological operad} consists of the following:
\begin{enumerate}
    \item A collection of topological spaces $P(n)$ for $n\geq 0$,
    \item A continuous action of the symmetric group $S_n$ on $P(n)$ for all $n$,
    \item An identity element $\id_P\in P(1)$,
    \item Compositions $m_{n_1,\ldots,n_k}$:
    \[
    P(k)\times(P(n_1)\times\dotsm\times P(n_k))\to P(n_1+\dotsm+n_k),
    \]
    which are continuous maps for all $k\geq 0$ and $n_1,\ldots,n_k\geq 0$ satisfying a list of axioms similarly to the ones in Definition \ref{defn:alg_op}.
\end{enumerate}
\end{defn}

\begin{ex}[Topological version of endomorphism operad]
The analog version to the endopmorphism operad is the operad $P$ such that for any $n\geq 0$, we get a topological space $P(n)$ which is the space of continuous maps from $X^n\to X$, where $X$ is some compact topological space.
\end{ex}

\begin{rem}
In general, one can define an operad and an algebra over an operad in any arbitrary symmetric monoidal category $\mathscr{C}^\otimes$, i.e. a category endowed with the functor 
\[
\otimes\colon\mathscr{C}\times \mathscr{C}\to \mathscr{C},
\]
the identity element $1_\mathscr{C}\in \mathrm{Obj}(\mathscr{C})$ and various coherence conditions for associativity, commutativity of $\otimes$ and so on. 

In particular, we want to consider the symmetric monoidal category $\mathsf{Complexes}$ of \emph{$\mathbb{Z}$-graded cochain complexes} of abelian groups (or vector spaces over some field). Such operads are called \emph{dg-operads}. Note that each component $P(n)$ is a cochain complex, i.e. a vector space decomposed into a direct sum $P(n)=\bigoplus_{i\in\mathbb{Z}}P(n)^i$ and endowed with a differential $\dd\colon P(n)^i\to P(n)^{i+1}$ which is of degree $+1$ satisfying $\dd^2=0$.
\end{rem}

\begin{rem}
We can construct an operad of cochain complexes out of a topological operad by considering a version of the \emph{singular chain complex}. For a topological space $X$ we denote by $\mathsf{Chains}(X)$ the complex concentrated in negative degrees, whose $(-k)$-th component for $k\geq 0$ consists of the formal finite additive combinations
\[
\sum_{1\leq i\leq N}n_if_i,\quad n_i\in\mathbb{Z},\, N\in\mathbb{Z}_{\geq 0}.
\]
of continuous maps $f_i\colon [0,1]^k\to X$ (\emph{singular cubes} in $X$) modulo the following relations:
\begin{enumerate}
    \item $f\circ \sigma=\mathrm{sign}(\sigma)f$ for any $\sigma\in S_k$ acting on the standard curbe $[0,1]^k$ by permutations of coordinates,
    \item $f'\circ\mathrm{pr}_{k\to k-1}=0$, where $\mathrm{pr}_{k\to k-1}\colon[0,1]^k\to[0,1]^{k-1}$ is the projection onto the first $(k-1)$ coordinates and $f'\colon[0,1]^{k-1}\to X$ is a continuous map.
\end{enumerate}
The boundary operator is defined similarly as for singular chains. Note that, in contrast to singular chains, for cubical chains we have an external product map 
\[
\bigotimes_{i\in I}\mathsf{Chains}(X_i)\to \mathsf{Chains}\left(\prod_{i\in I}X_i\right). 
\]
If $P$ is a topological operad, then $\mathsf{Chains}(P(n))$ for $n\geq 0$ has a natural operad structure in the category of complexes of abelian groups. The compositions are then given in terms of the external tensor product of cubical chains. On the level of cohomology, we obtain an operad $H(P)$ of $\mathbb{Z}$-graded abelian groups which are complexes endowed with the zero differential. This is called the \emph{homology} of the operad $P$.
\end{rem}

\subsection{The little disks operad}

Let $d\geq 1$ and denote by $G_d$ the $(d+1)$-dimensional Lie group acting on $\R^d$ by affine transformations $u\mapsto \lambda u+v$ where $\lambda>0$ is a real number and $v\in\R^d$. This group acts simply and transitively on the space of closed disks in $\R^d$ endowed with the usual Euclidean metric. The disk with center $v$ and radius $\lambda$ is given by a transformation of $G_d$ with parameters $(\lambda,v)$ of the standard disk
\[
D_0:=\{(x_1,\ldots,x_d)\in\R^d\mid x_1^2+\dotsm+x_d^2\leq 1\}.
\]

\begin{defn}[Little disks operad]
The \emph{little disks operad} $C_d$ is a topological operad with the following structure:
\begin{enumerate}
    \item $C_d(0)=\varnothing$,
    \item $C_d(1)=\{\id_{C_d}\}$,
    \item The space $C_d(n)$ is the space of configurations of $n$ disjoint disks $(D_i)_{1\leq i\leq n}$ inside the standard disk $D_0$ for $n\geq 2$.
\end{enumerate}
The composition 
\[
C_d(k)\times (C_d(n_1)\times\dotsm\times C_d(n_k))\to C_d(n_1+\dotsm +n_k)
\]
is obtained by applying elements from $G_d$ associated with disks $(D_i)_{1\leq i\leq n}$ in the configuration in $C_d(k)$ to configurations in all $C_d(n_i)$, $i=1,\ldots,k$ and putting the resulting configurations together. The action of the symmetric group $S_n$ on $C_d(n)$ is given by renumerations of indices of disks $(D_i)_{1\leq i\leq n}$ (see Figure \ref{fig:little_disks_composition}).
\end{defn}

\begin{figure}[ht]
\centering
\begin{tikzpicture}[scale=0.6]
\node[] (11) at (-1,-1) {$1$};
\node[] (21) at (0.5,1.8) {$2$};
\node[] (31) at (2.5,0) {$3$};
\node[] (22) at (12.5,1.8) {$3$};
\node[] (32) at (14.5,0) {$4$};
\node[] (32) at (10.5,-1.5) {$1$};
\node[] (32) at (11.5,0) {$2$};
\draw[] (0,0) circle (4cm);
\draw[] (-1,-1) circle (2cm);
\draw[] (0.5,1.8) circle (1cm);
\draw[] (2.5,0) circle (1.3cm);
\draw[] (12,0) circle (4cm);
\draw[dashed] (11,-1) circle (2cm);
\draw[] (10.5,-1.5) circle (0.5cm);
\draw[] (11.5,0) circle (0.8cm);
\draw[] (12.5,1.8) circle (1cm);
\draw[] (14.5,0) circle (1.3cm);
\draw[->] (5,0) -- (7,0);
\end{tikzpicture}
\caption{Example for the compositions of little disks.} 
\label{fig:little_disks_composition}
\end{figure}
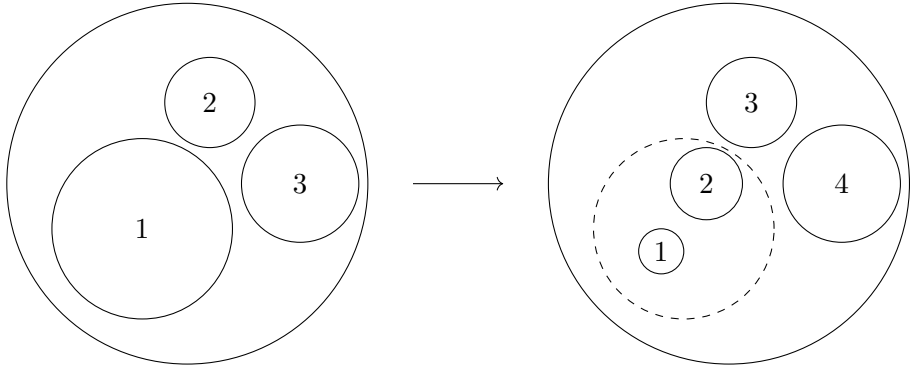

\begin{rem}
Note that $C_d(n)$ is homotopy equivalent to the configuration space $\mathrm{Conf}_n(\R^d)$ of $n$ pairwise distinct points in $\R^d$. Obviously, we can consider the map $C_d(n)\to \mathrm{Conf}_n(\R^d)$ which takes a collection of disjoint disks to the collection of the centers. Note that in particular, we have that $\mathrm{Conf}_2(\R^d)$ (and hence $C_2(n)$) is homotopy equivalent to the $(d-1)$-dimensional sphere $S^{d-1}$. The homotopy equivalence is given by the map
\[
(v_1,v_2)\mapsto \frac{v_1-v_2}{\| v_1-v_2\|}\in S^{d-1}\subset \R^d.
\]
\end{rem}

\begin{rem}
Since we are abusing notation, one should not confuse the operad $C_d$ with the configuration space $C_n$ as in Remark \ref{rem:conf}.
\end{rem}

\subsection{Deligne's conjecture}
Denote the Hochschild complex of an associative algebra $\calA$ concentrated in positive degree by
\[
\calC^\bullet(A,A):=\mathsf{Hom}_{\text{vector space}}(\calA^{\otimes n},\calA),\quad n\geq 0
\]
and the Hochschild differential by $\dd_\mathrm{H}$, given by the formula
\begin{multline*}
    (\dd_\mathrm{H}\phi)(a_1\otimes\dotsm\otimes a_{n+1}):=a_1\phi(a_2\otimes\dotsm\otimes a_n)+\sum_{1\leq i\leq n}(-1)^i\phi(a_1\otimes\dotsm\otimes a_ia_{i+1}\otimes\dotsm\otimes a_{n+1})+\dotsm\\\dotsm+(-1)^{n+1}\phi(a_1\otimes\dotsm\otimes a_n)a_{n+1},\quad \forall \phi\in \calC^n(\calA,\calA).
\end{multline*}

\begin{conj}[Deligne]
There exists a natural action of the operad $\mathsf{Chains}(C_2)$ on the Hochschild complex $\calC^\bullet(\calA,\calA)$ for any associative algebra $\calA$.
\end{conj}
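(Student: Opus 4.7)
The plan is to construct a combinatorial dg-operad $\calB$ acting explicitly on $\calC^\bullet(\calA,\calA)$ by formulas generalizing the Gerstenhaber bracket and the cup product, and then to exhibit a quasi-isomorphism (or a zig-zag) $\mathsf{Chains}(C_2)\xrightarrow{\sim}\calB$. Since $H_\bullet(C_2)$ is the Gerstenhaber operad and Hochschild cohomology is a Gerstenhaber algebra with cup product and bracket $[\enspace,\enspace]_\mathrm{G}$, the content of the conjecture lies in lifting this Gerstenhaber structure coherently to the chain level, together with all the higher-homotopy data encoded in $\mathsf{Chains}(C_2)$.

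First I would introduce the \emph{brace operations}. For $\phi\in\calC^n(\calA,\calA)$ and $\psi_1,\ldots,\psi_k\in\calC^\bullet(\calA,\calA)$, set
\[
\phi\{\psi_1,\ldots,\psi_k\}(a_1\otimes\dotsm\otimes a_N):=\sum \pm\,\phi\bigg(\dotsm\otimes\psi_1(a_{i_1}\otimes\dotsm)\otimes\dotsm\otimes\psi_k(a_{i_k}\otimes\dotsm)\otimes\dotsm\bigg),
\]
summed over all insertions of the $\psi_j$'s into distinct argument slots of $\phi$ in the prescribed order, with Koszul signs. Together with the cup product $\phi\cup\psi:=\mathsf{m}\circ(\phi\otimes\psi)$, these operations generate a suboperad $\calB$ of the endomorphism operad of $\calC^\bullet(\calA,\calA)$. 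A direct computation yields $[\phi,\psi]_\mathrm{G}=\phi\{\psi\}-(-1)^{(|\phi|-1)(|\psi|-1)}\psi\{\phi\}$ and shows that the commutator of the cup product equals the Hochschild boundary of a brace, so that on cohomology $\calB$ induces exactly the Gerstenhaber operad. One must also verify a tower of higher identities (associativity up to homotopy of braces, interaction with $\dd_\mathrm{H}$, symmetry compatibilities) which assemble into the full operad structure on $\calB$.

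The decisive step is then to identify $\calB$, up to quasi-isomorphism, with $\mathsf{Chains}(C_2)$. Following McClure--Smith, one introduces the \emph{surjection operad} $\mathcal{S}$, whose $n$-th component is spanned (modulo degeneracies) by non-degenerate surjections $\{1,\ldots,r\}\twoheadrightarrow\{1,\ldots,n\}$, and proves both that $\mathcal{S}$ acts on $\calC^\bullet(\calA,\calA)$ by a natural formula refining the braces, and that there is a quasi-isomorphism $\mathcal{S}\xrightarrow{\sim}\mathsf{Chains}(C_2)$. An alternative route, due to Kontsevich--Soibelman, uses a \emph{minimal operad} built from planar trees with labeled vertices and compares it to the Fulton--MacPherson compactification $\overline{\mathrm{FM}}_2(n)$ of configurations of $n$ points in $\R^2$; the latter is homotopy equivalent to $C_2(n)$ and admits an explicit cell decomposition whose cellular chains correspond term-by-term to brace-type operations.

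The main obstacle lies precisely in this last comparison. The chain operad $\mathsf{Chains}(C_2)$ contains infinitely many homotopically nontrivial operations in every arity, intertwined with an intricate $S_n$-action, and all operadic compositions and symmetries must be matched on a chosen cell decomposition of $C_2(n)$ (or of $\overline{\mathrm{FM}}_2(n)$). Producing such a decomposition whose cellular faces correspond exactly to the combinatorics of the surjection (or minimal) operad, and checking compatibility with operadic composition through a careful boundary analysis, is the technical heart of the argument. Once this identification is in place, pullback of the $\calB$-action along the quasi-isomorphism yields the required natural action of $\mathsf{Chains}(C_2)$ on $\calC^\bullet(\calA,\calA)$; as a byproduct, combining this with the formality of $\mathsf{Chains}(C_2)$ over $\R$ reproves Kontsevich's formality theorem for the Hochschild complex along the lines of Tamarkin's approach.
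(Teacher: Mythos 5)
The paper itself does not prove this statement: it records it as a conjecture and, in the remark that follows, simply cites Tamarkin, Kontsevich, McClure--Smith and Kontsevich--Soibelman for its resolution. Your outline follows exactly the route of those references (brace operations and cup product generating a suboperad of the endomorphism operad of $\calC^\bullet(\calA,\calA)$, then a comparison with $\mathsf{Chains}(C_2)$ via the surjection operad or the Kontsevich--Soibelman minimal operad and a cell decomposition of the Fulton--MacPherson compactification), so there is no divergence of method to report --- but what you have written is a plan, not a proof. The step you yourself flag as ``the technical heart'' --- producing a cellular model of $C_2(n)$ (or $\overline{\mathrm{FM}}_2(n)$) whose cells, boundaries, $S_n$-action and operadic compositions match the combinatorics of the brace/surjection operations --- is precisely the content of the theorem, and none of it is carried out; likewise the ``tower of higher identities'' for the braces is asserted rather than verified.

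There is also a concrete logical slip in the final step. You posit a quasi-isomorphism $\mathcal{S}\xrightarrow{\sim}\mathsf{Chains}(C_2)$ and then claim that ``pullback of the $\calB$-action along the quasi-isomorphism yields the required natural action of $\mathsf{Chains}(C_2)$.'' Pullback transfers an action along a map \emph{into} the acting operad: to make $\mathsf{Chains}(C_2)$ act on $\calC^\bullet(\calA,\calA)$ you need a morphism $\mathsf{Chains}(C_2)\to\calB$ (or into $\mathcal{S}$, or into the endomorphism operad), not a morphism out of $\mathcal{S}$. In the literature this is handled either by constructing the comparison map in the correct direction (e.g.\ via the Barratt--Eccles operad and table reduction, or Kontsevich--Soibelman's minimal operad mapping to both sides) or by a zig-zag of quasi-isomorphisms together with a cofibrant-replacement/homotopy-transfer argument, after which the conclusion is an action of an operad weakly equivalent to $\mathsf{Chains}(C_2)$. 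As stated, your transfer step does not go through, and fixing it requires exactly the additional operadic machinery you have left implicit.
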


\begin{rem}
The proof of this conjecture was given by a combination of results of \cite{Tamarkin1998,Tamarkin2003} and (a higher version) also in \cite{Kontsevich1999}. See also \cite{McClureSmith1999,KontsevichSoibelman2000}.
\end{rem}

\subsection{Formality of chain operads}
\begin{thm}[Kontsevich--Tamarkin\cite{Tamarkin2003,Kontsevich1999}]
The operad $\mathsf{Chains}(C_d)\otimes \R$ of complexes of real vector spaces is quasi-isomorphic to its cohomology operad endowed with the zero differential.
\end{thm}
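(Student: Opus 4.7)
The plan is to follow Kontsevich's original approach using configuration space integrals and a graph operad, which is in direct analogy with the proof of the formality of the DGLA of multidifferential operators that was outlined earlier in the chapter. An alternative (for $d=2$) would be Tamarkin's proof using the existence of a Drinfeld associator and the Etingof--Kazhdan quantization machinery, but the integral-geometric proof is more natural in the present context.

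First I would identify the cohomology operad $H_\bullet(C_d,\R)$ explicitly. By the classical Arnold--Cohen computation, $H^\bullet(C_d(n),\R)$ is the graded algebra generated in degree $d-1$ by classes $\omega_{ij}$, $1\leq i\neq j\leq n$, modulo the symmetry relations $\omega_{ji}=(-1)^d\omega_{ij}$ and the Arnold relations $\omega_{ij}\omega_{jk}+\omega_{jk}\omega_{ki}+\omega_{ki}\omega_{ij}=0$. The operadic structure then identifies $H_\bullet(C_d,\R)$ with the \emph{$d$-algebras operad} $e_d$ (Gerstenhaber for $d=2$, Poisson up to shift for $d\geq 3$). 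This is the target we must match.

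Next I would build, exactly as for the formality map $U$ of Section~7, an intermediate dg-operad $\mathsf{Graphs}_d$ whose arity-$n$ component is generated by admissible graphs on $n$ external vertices and arbitrarily many internal vertices, with edges carrying the appropriate $(d-1)$-degree, and equipped with a differential coming from edge contraction. The key data are:
\begin{itemize}
\item[(a)] the Fulton--MacPherson--Axelrod--Singer compactification $\overline{C_d(n)}$ of the configuration space of $n$ points in $\R^d$ modulo translation and scaling, which is a compact semi-algebraic manifold with corners of dimension $nd-d-1$;
\item[(b)] the normalized angle/propagator form $\dd\phi_e\in\Omega^{d-1}(\overline{C_d(2)})$ representing the generator of $H^{d-1}(S^{d-1})$, pulled back along the projection $p_e\colon\overline{C_d(n)}\to\overline{C_d(2)}$ associated to each edge $e$;
\item[(c)] for every graph $\Gamma$, the PA-form $\omega_\Gamma:=\bigwedge_e\dd\phi_e$ and, by fiber integration over internal vertices, a PA-form $\pi_*\omega_\Gamma$ on $\overline{C_d(n)}$.
\end{itemize}
Integration of these forms against singular chains gives an operad morphism
\[
I\colon\mathsf{Graphs}_d\longrightarrow \Omega^\bullet_{PA}(\overline{C_d(\bullet)}),
\]
and Poincar\'e duality (semi-algebraic de~Rham theorem) relates this cochain model to $\mathsf{Chains}(C_d)\otimes\R$. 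Simultaneously, projecting a graph to its ``external skeleton'' with only Arnold-type edges yields an explicit operad morphism $p\colon\mathsf{Graphs}_d\to H_\bullet(C_d,\R)$.

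The hard part will be establishing that both $I$ and $p$ are quasi-isomorphisms, so that the zigzag
\[
\mathsf{Chains}(C_d)\otimes\R \;\xleftarrow{\;\sim\;}\; \mathsf{Graphs}_d \;\xrightarrow{\;\sim\;}\; H_\bullet(C_d,\R)
\]
realizes the desired formality. The quasi-isomorphism $p$ reduces to a spectral-sequence / combinatorial argument filtering by the number of internal vertices, whose $E_1$-page collapses onto the Arnold algebra. The quasi-isomorphism $I$ is the genuine geometric input: by Stokes' theorem on $\overline{C_d(n)}$, the de Rham differential of $\pi_*\omega_\Gamma$ reduces to a sum of integrals over codimension-one boundary strata, and these strata are of exactly the two types (S1) and (S2) analyzed in the proof of Kontsevich's formula. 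Strata where a single pair of points collide reproduce precisely the edge-contraction differential on $\mathsf{Graphs}_d$, whereas all higher collisions must be shown to contribute zero. The latter is the analogue of Lemma~\ref{lem:Kontsevich} in the current setting: writing the angle forms as $\tfrac{1}{2\I}(\dd\log f_e-\dd\log\bar f_e)$ when $d=2$, or using the involution $v\mapsto -v$ on the unit sphere $S^{d-1}$ for $d\geq 2$, one shows that the integral over a sub-configuration space of dimension $\geq 1$ vanishes by a parity/logarithm argument. This Kontsevich vanishing is the main technical obstacle, and all other steps (compatibility with symmetric-group actions, compatibility with operadic composition, finiteness of $\omega_\Gamma$ on $\overline{C_d(n)}$) are essentially formal consequences of the good behavior of PA-forms on the FMAS compactification.
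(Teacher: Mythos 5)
The notes state this theorem without proof, giving only the citations to Kontsevich and Tamarkin, so there is no in-text argument to measure your proposal against; what you have written is essentially the standard Kontsevich proof (later made fully rigorous by Lambrechts--Voli\'c), and it is the correct analogue of the graph/configuration-space-integral strategy used earlier in this chapter for the formality of $\calD$: identify $H_\bullet(C_d,\R)$ with the $e_d$-operad via the Arnold relations, interpolate with a graph operad $\mathsf{Graphs}_d$, send a graph $\Gamma$ to the PA form $\omega_\Gamma=\bigwedge_e\dd\phi_e$ fiber-integrated over the internal vertices on the Fulton--MacPherson--Axelrod--Singer compactification, and use Stokes' theorem together with vanishing lemmas so that only the edge-contraction differential survives on the codimension-one strata. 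Two points you gloss over deserve flagging. First, semi-algebraic forms give a cooperad only up to K\"unneth-type quasi-isomorphisms, fiberwise integration of PA (rather than smooth) forms has to be constructed, and the passage from this cochain-level statement to the chain operad $\mathsf{Chains}(C_d)\otimes\R$ is achieved by dualizing a zigzag of cooperad quasi-isomorphisms rather than by an appeal to ``Poincar\'e duality''; this compatibility is precisely the hard technical content settled by Lambrechts--Voli\'c and is not an ``essentially formal consequence''. Second, the vanishing of boundary strata where three or more points collide is not obtained from the antipodal involution on $S^{d-1}$ alone: one needs separate arguments for collapsing subgraphs containing internal vertices of valence at most two or double edges, and a genuine symmetry/degree-counting argument in the remaining cases, the $d=2$ case being exactly the logarithm trick of Lemma \ref{lem:Kontsevich}. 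Finally, Tamarkin's proof, which you mention as an alternative, concerns $d=2$ and proceeds via Drinfeld associators and Etingof--Kazhdan quantization, so it establishes formality (even over $\Q$) by a completely different mechanism. With these caveats, your plan is a faithful sketch of the accepted geometric proof.
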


\chapter{Quantum Field Theoretic Approach to Deformation Quantization}

Kontsevich's formula is in its nature a pure algebraic construction. However, the concept of deformation quantization should give a physical quantization procedure which opens the question whether it is possible to naturally extract the star product out of a quantum field theory, i.e. a perturbative \emph{Feynman path integral quantization} \cite{Feynman1942,Feynman1949,Feynman1950,FeynmanHibbs1965,P}. The appearance of graphs in Kontsevich's construction might already inspire to regard them as Feynman diagrams of a certain field theory. Indeed, using the \emph{Poisson sigma model} \cite{I,SS1,CF2}, one can show that its perturbative quantization produces the desired outcome. It is however necessary to formulate everything in a suitable formalism since it appears to be a \emph{gauge theory} (i.e. the theory has symmetries). It turns out that the Batalin--Vilkovisky (BV) formalism \cite{BV1,BV2,BV3} (see \cite{S} for a more mathematical formulation and \cite{Mnev2019} for a good introduction) is needed in order to deal with this particular sigma model. In this chapter we will describe the mathematical framework for functional integrals, construct the Moyal product out of a path integral quantization, describe the Faddeev--Popov \cite{FP} and BRST \cite{BRS1,BRS2,BRS3,Tyutin1976} method to deal with gauge theories and introduce the Poisson sigma model. Moreover, we will explicitly show how the perturbative quantization of the Poisson sigma model for linear Poisson structures produces Kontsevich's star product (this only requires the BRST formalism). Finally, we state the theorem for general Poisson structures on $\R^d$ (this construction needs the BV formalism) \cite{CF1,CF2}.
This chapter is based on \cite{Zinn-Justin1994,P,CKTB,CF1,CF2,Kontsevich1994,Mnev2019,Fedosov1996}. 

\section{Functional integrals}

\subsection{Functional integrals and expectation values}

For a field theory construction we want to consider a \emph{space of fields} $\calM$ which in most cases is given by the space of sections for some vector bundle and an \emph{action function} $S\colon \calM\to \R$. The action is usually given \emph{locally} (see also Definition \ref{defn:local}), i.e., roughly, as an integral over some Lagrangian density $S(\phi)=\int \mathscr{L}(\phi,\de\phi,\ldots,\de^N\phi)$ for $N\in\mathbb{Z}_{>0}$. 

\begin{defn}[Expectation value]
\label{defn:expectation}
For a function $\calO$ on $\calM$ we define the \emph{expectation value}
\[
\langle\calO\rangle:=\frac{\displaystyle{\int}_\calM \exp\left(\I S/\hbar\right)\calO}{\displaystyle{\int}_\calM \exp\left(\I S/\hbar\right)}.
\]
\end{defn}

\begin{defn}[Observables]
An \emph{observable} is a function $\calO$ on $\calM$ whose expectation value is well-defined.
\end{defn}

\begin{rem}
Integrals as in Definition \ref{defn:expectation} are called \emph{functional integrals} or \emph{path integrals}. One can consider a perturbative evaluation of such integrals by expanding $S$ around a nondegenerate critical point and defining the integral as a formal power series in $\hbar$ with coefficients given by Gaussian expectation values. If $S$ carries certain \emph{symmetries}, the critical points will always be degenerate. In this case we speak of a \emph{gauge theory}. There are different methods to deal with such theories such a the \emph{Faddeev--Popov ghost method} \cite{FP}, the \emph{BRST method} \cite{BRS1,BRS2,Tyutin1976} and the \emph{Batalin--Vilkovisky method} \cite{BV1,BV2,BV3}. Interestingly, it was shown that the field theoretic construction of Kontsevich's star product for general Poisson structures requires the Batalin--Vilkovisky formalism to deal with the gauge theory given by the Poisson sigma model \cite{CF1}.
\end{rem}

\subsection{Gaussian integrals}
Let $A$ be a positive-definite symmetric matrix on $\R^n$ which we assume to be endowed with the Lebesgue measure $\dd^nx$ and the Euclidean inner product $\langle\enspace,\enspace\rangle$ (note that $n$ has to be even). Then 
\[
I(\lambda):=\int_{\R^n}\exp\left(-\frac{\lambda}{2}\langle x,Ax\rangle\right)\dd^nx=\frac{(2\pi)^{\frac{n}{2}}}{\lambda^{\frac{n}{2}}}\frac{1}{\sqrt{\det A}},\quad \lambda>0.
\]
If we continue $I$ to the whole complex plane without the negative real axis, we get 
\[
I(-\I)=(2\pi)^{\frac{n}{2}}\exp\left(\frac{\I\pi n}{4}\right)\frac{1}{\sqrt{\det A}},\qquad I(\I)=(2\pi)^{\frac{n}{2}}\exp\left(\frac{-\I\pi n}{4}\right)\frac{1}{\sqrt{\det A}}.
\]
Thus, when $A$ is negative-definite, we can define the integral by 
\[
\int_{\R^n}\exp\left(\frac{\I}{2}\langle x,Ax\rangle\right)\dd^nx=\int_{\R^n}\exp\left(-\frac{\I}{2}(-\langle x,Ax\rangle)\right)\dd^nx=(2\pi)^{\frac{n}{2}}\exp\left(\frac{-\I\pi n}{4}\right)\frac{1}{\sqrt{\vert\det A\vert}}.
\]

For the case when $A$ is nondegenerate (not necessarily positive- or negative-definite), we get 
\[
\int_{\R^n}\exp\left(\frac{\I}{2}\langle x,Ax\rangle\right) \dd^nx=(2\pi)^{\frac{n}{2}}\exp\left(\frac{\I\pi\,\mathrm{sign}\,A}{4}\right)\frac{1}{\sqrt{\vert\det A\vert}},
\]
where $\mathrm{sign}\,A$ denotes the \emph{signature} of $A$.

We want to compute expectation values with respect to a Gaussian distribution. Let us denote such an expectation value by $\langle\enspace,\enspace\rangle_0$. Define first the generating function 
\begin{multline*}
Z(J):=\int_{\R^n}\exp\left(\frac{\I}{2}\langle x,Ax\rangle+\frac{\I}{2}\langle J,x\rangle\right) \dd^nx\\
=(2\pi)^{\frac{n}{2}}\exp\left(\frac{\I\pi\,\mathrm{sign}\,A}{4}\right)\frac{1}{\sqrt{\vert\det A\vert}}\exp\left(\frac{\I}{2}\langle J,A^{-1}J\rangle\right).
\end{multline*}
Then we get 
\begin{multline*}
\langle x^{i_1}\dotsm x^{i_k}\rangle_0=\frac{\displaystyle{\int}_{\R^n}\exp\left(\frac{\I}{2}\langle x,Ax\rangle\right) x^{i_1}\dotsm x^{i_k}\dd^nx}{\displaystyle{\int}_{\R^n}\exp\left(\frac{\I}{2}\langle x,Ax\rangle\right)\dd^nx}\\
=\frac{\frac{\de}{\de J^{i_1}}\dotsm \frac{\de}{\de J^{i_k}}Z(J)\vert_{J=0}}{Z(0)}=\frac{\de}{\de J^{i_1}}\dotsm \frac{\de}{\de J^{i_k}}\exp\left(\frac{\I}{2}\langle J,A^{-1}J\rangle\right)\bigg|_{J=0}.
\end{multline*}

\begin{rem}
Note that $\langle x^{i_1}\dotsm x^{i_k}\rangle_0$ vanishes if $k$ is odd and is a sum of products of matrix elements of the inverse of $A$ if $k$ is even. For example, if $k=2$ we have $\langle x^{i}x^j\rangle_0=\I(A^{-1})^{ij}$ and if $k=2s$, we get 
\[
\langle x^{i_1}\dotsm x^{i_{2s}}\rangle_0=\I^s\sum_{\sigma\in S_{2s}}\frac{1}{2^ss!}(A^{-1})^{i_{\sigma(1)}i_{\sigma(2)}}\dotsm (A^{-1})^{i_{\sigma(2s-1)}i_{\sigma(2s)}}.
\]
\end{rem}

\begin{thm}[Wick]
\label{thm:Wick}
Denote by $P(s)$ the set of \emph{pairings}, i.e. permutations $\sigma\in S_{2s}$ with the property that $\sigma(2i-1)<\sigma(2i)$ for $i=1,\ldots,s$ and $\sigma(1)<\sigma(3)<\dotsm <\sigma(2s-3)<\sigma(2s-1)$. Then we have 
\begin{equation}
\label{eq:Wick}
\langle x^{i_1}\dotsm x^{i_{2s}}\rangle_0=\I^s\sum_{\sigma\in P(s)}(A^{-1})^{i_{\sigma(1)}i_{\sigma(2)}}\dotsm (A^{-1})^{i_{\sigma(2s-1)}i_{\sigma(2s)}}.
\end{equation}
\end{thm}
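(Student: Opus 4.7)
The proof builds directly on the generating-function identity already derived just above the statement, namely
\[
\langle x^{i_1}\dotsm x^{i_{2s}}\rangle_0=\frac{\de}{\de J^{i_1}}\dotsm \frac{\de}{\de J^{i_{2s}}}\exp\left(\frac{\I}{2}\langle J,A^{-1}J\rangle\right)\bigg|_{J=0}.
\]
The plan is to expand the exponential as a Taylor series, observe which term survives after applying $2s$ derivatives and setting $J=0$, and then organize the result into a sum over pairings.

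First I would note that after differentiating $2s$ times and evaluating at $J=0$, only the degree-$s$ term of the Taylor expansion of the exponential contributes: terms of degree less than $s$ (that is, with fewer than $2s$ factors of $J$) are killed by having too few $J$'s to absorb all derivatives and then evaluating at $J=0$, while terms of degree greater than $s$ vanish upon setting $J=0$. Hence the computation reduces to
\[
\langle x^{i_1}\dotsm x^{i_{2s}}\rangle_0=\frac{1}{s!}\left(\frac{\I}{2}\right)^s\frac{\de}{\de J^{i_1}}\dotsm \frac{\de}{\de J^{i_{2s}}}\Bigl(\sum_{a,b}(A^{-1})^{ab}J_aJ_b\Bigr)^{\!s}\bigg|_{J=0}.
\]

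Next I would distribute the $2s$ derivatives across the $2s$ $J$-factors in the expanded polynomial. The Leibniz rule produces one nonzero term for each bijective assignment of the derivative indices $\{i_1,\ldots,i_{2s}\}$ to the $2s$ positions occupied by $J$'s (any assignment in which some position is hit twice or not at all yields zero after evaluation at $J=0$). This yields
\[
\langle x^{i_1}\dotsm x^{i_{2s}}\rangle_0=\frac{\I^s}{2^s s!}\sum_{\sigma\in S_{2s}}(A^{-1})^{i_{\sigma(1)}i_{\sigma(2)}}\dotsm (A^{-1})^{i_{\sigma(2s-1)}i_{\sigma(2s)}},
\]
which is the symmetric form of the identity mentioned in the preceding paragraph of the paper.

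The last step converts this sum over all of $S_{2s}$ into a sum over the subset $P(s)$ of pairings. Two redundancies account for the prefactor: (i) because $A^{-1}$ is symmetric, swapping the two indices within any one of the $s$ pairs does not change the summand, accounting for a factor of $2^s$; (ii) the product over the $s$ factors $(A^{-1})^{i_{\sigma(2j-1)}i_{\sigma(2j)}}$ is invariant under reordering the pairs themselves, accounting for a factor of $s!$. Thus each $\sigma\in P(s)$ corresponds to exactly $2^s s!$ permutations of $S_{2s}$, and this exactly cancels the prefactor, giving \eqref{eq:Wick}. The main obstacle is purely combinatorial bookkeeping; the only substantive ingredients are the symmetry of $A^{-1}$ and the definition of $P(s)$ chosen to eliminate precisely these two redundancies.
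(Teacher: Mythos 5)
Your proof is correct: the reduction to the degree-$s$ term of the exponential, the resulting sum over $S_{2s}$ with prefactor $\I^s/(2^s s!)$, and the $2^s s!$-fold redundancy (within-pair swaps, using the symmetry of $A^{-1}$, together with reordering of the $s$ pairs) that is exactly eliminated by the normalization built into the definition of $P(s)$ are all handled correctly. The paper states the theorem without a written proof, having only recorded the symmetric $S_{2s}$-sum in the preceding remark, so your argument supplies precisely the steps left implicit there and follows the same route rather than a different one.
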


\subsubsection{Infinite-dimensional case}

By \eqref{eq:Wick}, the infinite-dimensional case only makes sense if $A$ is invertible. However, usually $A$ will be a differential operator and thus $G:=A^{-1}$ will denote the distributional kernel of its inverse, i.e. its \emph{Green function}. So if e.g. $A$ is a differential operator on functions on some manifold $\Sigma$, we get 
\begin{multline}
\begin{split}
\label{eq:infinite-dimensional_Wick}
\langle\phi(x_1)\dotsm\phi(x_{2s})\rangle_0=\frac{\displaystyle{\int} \exp\left(\frac{\I}{2\hbar}\int_\Sigma \phi A\phi\right)\phi(x_1)\dotsm \phi(x_{2s})\mathscr{D}[\phi]}{\displaystyle{\int} \exp\left(\frac{\I}{2\hbar}\int_\Sigma \phi A\phi\right)\mathscr{D}[\phi]}\\
:=(\I\hbar)^s\sum_{\sigma \in P(s)}G\big(x_{\sigma(1)},x_{\sigma(2)}\big)\dotsm G\big(x_{\sigma(2s-1)},x_{\sigma(2s)}\big),
\end{split}
\end{multline}
where $\phi$ denotes a function on $\Sigma$, $\mathscr{D}[\phi]$ denotes the \emph{formal Lebesgue measure} on the space of functions and $x_1,\ldots,x_{2s}$ are distinct points in $\Sigma$.

\begin{rem}[Normal ordering]
Equation \eqref{eq:infinite-dimensional_Wick} is usually extended to the singular case when points coincide by restricting the sum to pairings with the property that $x_{\sigma(2i-1)}\not= x_{\sigma(2i)}$ for all $i$. This is called \emph{normal ordering} since it corresponds to the usual normal ordering in the operator formulation of Gaussian field theories.
\end{rem}

\begin{rem}[Propagator]
In Gaussian integrals all expectation values are given in terms of the expectation value of the quadratic monomials. These are usually called \emph{2-point functions} or \emph{propagators}. For the case of a Gaussian field theory defined by a differential operator, propagator is just another name for the Green function.
\end{rem}

\begin{rem}[Derivatives]
One can extend the definition of expectation values to derivatives of fields by linearity. In particular, for multi-indices $I_1,\ldots,I_{2s}$ we set
\begin{equation}
\label{eq:derivative_Green_functions}
    \langle\de_{I_1}\phi(x_1)\dotsm \de_{I_{2s}}\phi(x_{2s})\rangle_0=(\I\hbar)^s\frac{\de^{\vert I_1\vert}}{\de x_1^{I_1}}\dotsm \frac{\de^{\vert I_{2s}\vert}}{\de x_{2s}^{I_{2s}}}\sum_{\sigma\in P(s)}G\big(x_{\sigma(1)},x_{\sigma(2)}\big)\dotsm G\big(x_{\sigma(2s-1)},x_{\sigma(2s)}\big),
\end{equation}
where the derivatives on the right hand side are meant in the distributional sense.
\end{rem}

\subsection{Integration of Grassmann variables}
\label{subsec:Grassmann_integration}
Let $V$ be a vector space. Recall that the exterior algebra $\bigwedge V^*$ is regarded as the algebra of functions on the odd vector space $\Pi V$. Choosing a basis and orientation, we can identify $\bigwedge^{\mathrm{top}}V^*$ with $\R$. The composition of this isomorphism with the projection $\bigwedge V^*\to \bigwedge^\mathrm{top}V^*$ gives a map $\bigwedge V^*\to \R$ which will be denoted by $\int_{\Pi V}$ and we call it the \emph{integral on $\Pi V$}.

Let $B\in \End(V)$ and regard it as an element of $V^*\otimes V$ and thus as a function on 
\[
\Pi V^*\times \Pi V:= \Pi(V^*\oplus V).
\]
There is a natural identification of $\bigwedge^\mathrm{top}(V^*\oplus V)$ with $\R$. Thus, we have 
\[
\int_{\Pi V^*\times \Pi V}\exp(B)=\det B.
\]
If $B$ is nondegenerate, we define the expectation value of a function $f$ on $\Pi V^*\times \Pi V$ by 
\[
\langle f\rangle_0:=\frac{\displaystyle{\int}_{\Pi V^*\times \Pi V}\exp(B)f}{\displaystyle{\int}_{\Pi V^*\times \Pi V}\exp(B)}.
\]
Let $\{e_i\}$ be a basis of $V$ and denote by $\{\bar e^i\}$ the corresponding dual basis. Then $\bigwedge (V^*\oplus V)$ can be identified with the Grassmann algebra generated by the anticommuting \emph{coordinate functions} $\bar e^i$ and $e_j$. Functions on $\Pi V^*\times \Pi V$ are then linear combinations of monomials $e_{j_1}\dotsm e_{j_r}\bar e^{i_1}\dotsm \bar e^{i_s}$. To $B$ we associate the function 
\[
\langle \bar e,Be\rangle=\bar e^j B^i_j e_i,
\]
where $\langle\enspace,\enspace\rangle$ denotes the canonical pairing between $V^*$ and $V$. Hence, we can write 
\[
\int \exp\left(\langle \bar e,Be\rangle\right)=\det B,
\]
and 
\[
\langle e_{j_1}\dotsm e_{j_r}\bar e^{i_1}\dotsm \bar e^{i_s}\rangle_0=\frac{\displaystyle{\int} \exp\left(\langle \bar e,Be\rangle\right) e_{j_1}\dotsm e_{j_r}\bar e^{i_1}\dotsm \bar e^{i_s}}{\displaystyle{\int} \exp\left(\bar e^j B^i_j e_i\right)}.
\]
It is easy to see that $\langle e_{j_1}\dotsm e_{j_r}\bar e^{i_1}\dotsm \bar e^{i_s}\rangle_0$ vanishes if $r\not=s$ and that 
\[
\langle e_{j_1}\dotsm e_{j_s}\bar e^{i_1}\dotsm \bar e^{i_s}\rangle_0=\sum_{\sigma\in S_{s}}\mathrm{sign}(\sigma)(B^{-1})^{i_{\sigma(1)}}_{j_1}\dotsm (B^{-1})^{i_{\sigma(s)}}_{j_s}.
\]

\begin{rem}[Odd vector fields]
A vector field on $\Pi V$ is by definition a graded derivation of $\bigwedge V^*$. In particular, an endomorphism $X$ of $\bigwedge V^*$ is a vector field of degree $\vert X\vert$ if 
\[
X(fg)=X(f)g+(-1)^{\vert X\vert r}f X(g),\quad \forall f\in \bigwedge^r V^*,\forall g\in \bigwedge V^*,\forall r.
\]
A \emph{right} vector field $X$ of degree $\vert X\vert$ is an endomorphism $f\mapsto (f)X$ of $\bigwedge V^*$ that satisfies
\[
(fg)X=f(g)X+(-1)^{\vert X\vert s}(f)Xg,\quad \forall f\in \bigwedge V^*,\forall g\in \bigwedge^sV^*,\forall s.
\]
We can identify the vector space of all vector fields on $\Pi V$ with $\bigwedge V^*\otimes V$. Note that the elements of $V$ can be regarded as constant vector fields. Integration gives us 
\[
\int_{\Pi V}X(f)=0,\quad \forall f,
\]
if $X$ is a constant vector field. In general, one can define the \emph{divergence} $\Div X$ of $X$ by 
\[
\int_{\Pi V}X(f)=\int_{\Pi V}\Div X f,\quad \forall f.
\]
If $X=g\otimes v$ for $g\in\bigwedge^s V^*$ and $v\in V$, we get 
\[
\Div X=(-1)^{s+1}\iota_vg.
\]
\end{rem}

\section{The Moyal product as a path integral quantization}
\label{sec:Moyal_product_path_integral}
Consider the symplectic manifold $T^*\R^n$ endowed with the canonical symplectic form $\omega_0=\sum_{1\leq i\leq n}\dd p_i\land \dd q^i$, where $(q^1,\ldots,q^n,p_1,\ldots,p_n)\in T^*\R^n\cong \R^{2n}$. Denote by $\alpha=\sum_{1\leq i\leq n}p_i\dd q^i$ the Liouville 1-form such that $\omega_0=\dd\alpha$. Consider a path $\gamma\colon I\to T^*\R^n$, where $I$ is a 1-dimensional manifold, and define an action $S(\gamma)=\int_I\gamma^*\alpha$. Let us write $\gamma(t)=(Q(t),P(t))$ for $t\in I$. Then we can rewrite the action as
\begin{equation}
\label{eq:QM_action}
S(Q,P)=\int_{t\in I}P_i\frac{\dd}{\dd t}Q^i\dd t.
\end{equation}
Given a Hamiltonian function $H$, one can deform the action to 
\begin{equation}
\label{eq:deformed_action}
S_H(Q,P)=\int_{t\in I}\left(P_i\frac{\dd}{\dd t}Q^i+H(Q(t),P(t),t)\right)\dd t.
\end{equation}
Let now $I=S^1$ and, in order to make the quadratic form nondegenerate, we choose a basepoint $\infty\in S^1$. Let 
\[
\calM:=\{(Q,P)\in C^\infty(S^1,T^*\R^n)\}
\]
and 
\[
\calM(q,p):=\{(Q,P)\in C^\infty(S^1,T^*\R^n)\mid Q(\infty)=q,\, P(\infty)=p\}.
\]
The path integral can then be defined by imposing Fubini's theorem:
\begin{equation}
\label{eq:Fubini}
\int_\calM(\dotsm)=\int_{(q,p)\in T^*\R^n}\mu(q,p)\int_{\calM(q,p)}(\dotsm)
\end{equation}
where $\mu$ denotes a measure on $T^*\R^n$. One can check that the quadratic form in $S$ is nondegenerate when restricted to $\calM(q,p)$. Hence, we can compute
\[
\langle\calO\rangle_0(q,p):=\frac{\displaystyle{\int}_{\calM(q,p)}\exp\left(\I S/\hbar\right)\calO}{\displaystyle{\int}_{\calM(q,p)}\exp\left(\I S/\hbar\right)},
\]
where $\calO$ is some function on $\calM$ that is either polynomial or a formal power series in $Q$ and $P$. Using the fact that the denominator is constant (infinite though), we can use \eqref{eq:Fubini} to write 
\[
\langle \calO\rangle_0:=\frac{\displaystyle{\int}_{(q,p)\in T^*\R^n}\mu(q,p)\langle \calO\rangle_0(q,p)}{\displaystyle{\int}_{(q,p)\in T^*\R^n}\mu(q,p)}
\]
whenever the functions $\langle\calO\rangle_0(q,p)$ and $1$ are integrable. The latter condition prevents us from choosing $\mu$ to be the Liouville volume $\frac{\omega_0^n}{n!}$. However, at this point one can also forget about the denominator and define 
\[
\langle\calO\rangle_0':=\int_{(q,p)\in T^*\R^n}\mu(q,p)\langle\calO\rangle_0(q,p)
\]
such that the Liouville volume is allowed. In fact, we could also choose $\mu$ to be the delta measure peaked at a point $(q,p)\in T^*\R^n$. In this case we have
\[
\langle\calO\rangle_0=\langle\calO\rangle_0'=\langle\calO\rangle_0(q,p).
\]
If we have fixed $(q,p)\in T^*\R^n$, we can can use the \emph{change of variables} $Q=q+\tilde Q$ and $P=p+\tilde P$, where $(\tilde Q,\tilde P)$ is a map from $S^1$ to $T^*\R^n$ which vanishes at $\infty$. This is the same as a map $\R\to T^*\R^n$ which vanishes at $\infty$. The action is then given by 
\[
S(Q,P)=S(q+\tilde Q,p+\tilde P)=\int_\R \tilde P_i\frac{\dd}{\dd t}\tilde Q^i\dd t.
\]
Expectation values are given by 
\[
\langle \calO(Q,P)\rangle_0(q,p)=\left\langle\calO(q+\tilde Q,p+\tilde P)\right\rangle^\sim_0:=\frac{\displaystyle{\int} \exp\left(\I S/\hbar\right)\calO(q+\tilde Q,p+\tilde P)\mathscr{D}[\tilde P]\mathscr{D}[\tilde Q]}{\displaystyle{\int}\exp\left(\I S/\hbar\right)\mathscr{D}[\tilde P]\mathscr{D}[\tilde Q]}.
\]

\subsection{The propagator}

We need to find the Green function for the differential operator $\frac{\dd}{\dd t}$. It is given in terms of the \emph{sign function}:
\[
\left(\frac{\dd}{\dd t}\right)^{-1}(u,v)=\theta(u-v)=\frac{1}{2}\mathrm{sig}(u-v):=\begin{cases}\frac{1}{2},& u>v\\ -\frac{1}{2},& u<v\end{cases}
\]
As a consequence, we get 
\[
\left\langle \tilde P_i(u)\tilde Q^j(v)\right\rangle^\sim_0=\I\hbar \theta(v-u)\delta^j_i
\]
and more generally
\[
\left\langle \tilde P_{i_1}(u_1)\dotsm\tilde P_{i_s}(u_s)\tilde Q^{j_1}(v_1)\dotsm \tilde Q^{j_s}(v_s)\right\rangle^\sim_0=(\I\hbar)^s\sum_{\sigma\in S_s}\theta(v_{\sigma(1)}-u_1)\dotsm \theta(v_{\sigma(s)}-u_s)\delta^{j_{\sigma(1)}}_{i_1}\dotsm \delta^{j_{\sigma(s)}}_{i_s}.
\]
The normal ordering can be implemented by setting $\theta(0)=0$ which is compatible with the skew-symmetry of $\frac{\dd}{\dd t}$.

\subsection{Expectation values}

Let us consider the observable on $\calM$ which is given by evaluation of a smooth function $f$ on $T^*\R^n$ at some point $u$ in the path. Define
\[
\calO_{f,u}(Q,P):=f(Q(u),P(u)),\quad f\in C^\infty(T^*\R^n),\, u\in S^1\setminus \{\infty\}.
\]
in order to compute the expectation value of this observable, we need to introduce some notation. Let $I=(i_1,\ldots,i_r)$ be a multi-index and set $\vert I\vert=r$, $p_I:=p_{i_1}\dotsm p_{i_r}$, $q^I:=q^{i_1}\dotsm q^{i_r}$ (similarly for $\tilde P_I$ and $\tilde Q^I$). Moreover, define
\[
\de_I:=\frac{\de}{\de q^{i_1}}\dotsm \frac{\de}{\de q^{i_r}},\quad \de^I:=\frac{\de}{\de p_{i_1}}\dotsm \frac{\de}{\de p_{i_r}}.
\]
We extend everything to the case of $\vert I\vert =0$ by setting $p_I=q^I=1$ and $\de_I=\de^I=\mathrm{id}$. If we use the change of variables as before and use the Taylor expansion of $f$, we get
\[
f(Q(u),P(u))=f(q,\tilde Q(u),p+\tilde P(u))=\sum_{r,s\geq 0}\frac{1}{r!s!}\sum_{\vert I\vert=r\atop \vert J\vert=s}\tilde P_I(u)\tilde Q^J(u)\de^I\de_J f(q,p).
\]
Hence 
\[
\langle \calO_{f,u}\rangle_0(q,p)=\left\langle\calO_{f,u}(q+\tilde Q,p+\tilde P)\right\rangle_0^\sim=f(q,p).
\]
If $f$ is integrable we can also define
\[
\langle\calO_{f,u}\rangle_0'=\int_{T^*\R^n}\mu(q,p)f(q,p).
\]

\begin{rem}
Note that these expectation values do not depend on the point $u$.
\end{rem}

Consider now the observable given by 
\begin{equation}
\label{eq:Moyal_observable}
\calO_{f,g;u,v}=f(Q(u),P(u))g(Q(v),P(v)),\quad f,g\in C^\infty(T^*\R^n),\, u,v\in S^1\setminus\{\infty\}\cong \R,\, u<v.
\end{equation}
Its expectation value can then be computed as
\begin{multline*}
  \langle \calO_{f,g;u,v}\rangle_0(q,p)=\left\langle f(q+\tilde Q(u),p+\tilde P(u))g(q+\tilde Q(v),p+\tilde P(v)) \right\rangle^\sim_0\\
  =\sum_{r_1,s_1,r_2,s_2\geq 0}\frac{1}{r_1!s_1!r_2!s_2!}\sum_{\vert I_1\vert=r_1\atop \vert J_1\vert=s_1}\sum_{\vert I_2\vert=r_2\atop \vert J_2\vert=s_2}\left\langle \tilde P_{I_1}(u)\tilde Q^{J_1}(u)\tilde P_{I_2}(v)\tilde Q^{J_2}(v)\right\rangle^\sim_0\de^{I_1}\de_{J_1}f(q,p)\de^{I_2}\de_{J_2}g(q,p)\\
  =\sum_{r,s\geq 0}\frac{1}{r!s!}\left(\frac{\I\hbar}{2}\right)^{r+s}(-1)^s\sum_{\vert I\vert=r\atop \vert J\vert=s}\de^I\de_Jf(q,p)\de^J\de_Ig(q,p)=f\star g(q,p).
\end{multline*}
Here we have denoted by $\star$ the Moyal product induced by the Poisson bivector field $\frac{\de}{\de p_i}\land \frac{\de}{\de q^i}$.

\begin{rem}
Note also here that the expectation is independent of $u$ and $v$.
\end{rem}

If $f\star g$ is integrable, we can compute 
\[
\langle \calO_{f,g;u,v}\rangle_0'=\int_{T^*\R^n}\mu f\star g.
\]
If $\mu$ is given by the Liouville volume $\frac{\omega_0^n}{n!}$, we can define a trace (see also \cite{Fedosov1994,NestTsygan1995,Fedosov1996})
\[
\tr(f\star g):=\int_{T^*\R^n}\frac{\omega_0^n}{n!}f\star g=\int_{T^*\R^n}\frac{\omega_0^n}{n!}fg,
\]
where the second equality follows from using integration by parts in the correction terms to the commutative product (see also \cite{CattaneoFelder2010} for a similar trace formula for the Poisson manifold $\R^d$ endowed with any Poisson structure and \cite{Mosh1} for the formulation on general Poisson manifolds). 
More generally, we can define an observable (see Figure \ref{fig:cyc_Moyal})
\begin{multline*}
\calO_{f_1,\ldots,f_k;u_1,\ldots,u_k}=f_1(Q(u_1),P(u_1))\dotsm f_k(Q(u_k),P(u_k)),\\\ f_1,\ldots,f_k\in C^\infty(T^*\R^n),\, u_1,\ldots,u_k\in S^1\setminus \{\infty\}\cong\R,\, u_1<\dotsm <u_k.
\end{multline*}

\begin{exe}
Show that 
\[
\langle\calO_{f_1,\ldots,f_k;u_1,\ldots,u_k}\rangle_0(q,p)=f_1\star\dotsm \star f_k(q,p).
\]
\end{exe}

\begin{figure}[!ht]
\centering
\tikzset{
particle/.style={thick,draw=black},
particle2/.style={thick,draw=black, postaction={decorate},
    decoration={markings,mark=at position .9 with {\arrow[black]{triangle 45}}}},
gluon/.style={decorate, draw=black,
    decoration={coil,aspect=0}}
 }
\tikzset{Bullet/.style={fill=black,draw,color=#1,circle,minimum size=0.5pt,scale=0.5}}
\begin{tikzpicture}[x=0.04\textwidth, y=0.04\textwidth]
\draw[thick](-2,0) circle (2.5);
\node[Bullet=black, label=left: $u_1$] (u1) at (-4.3,-1) {};
\node[Bullet=black, label=below: $u_2$] (u2) at (-3.3,-2.1) {};
\node[Bullet=black, label=below: $u_3$] (u3) at (-1.5,-2.4) {};
\node[Bullet=black, label=right: $u_4$] (u4) at (0,-1.5) {};
\node[label=below: $\vdots$] (dots) at (1,1) {};
\node[Bullet=black, label=right: $u_k$] (uk) at (0.3,1) {};
\node[Bullet=black, label=above: $\infty$] (infty) at (-2,2.5) {};
\node[] (S1) at (-2,0) {$S^1$};
\end{tikzpicture}
\caption{The points $u_1,\ldots,u_k$ on $S^1$}
\label{fig:cyc_Moyal}
\end{figure}

\subsection{Divergence of vector fields}

The normal ordering condition implies that local vector fields are in fact divergence free. Consider a vector at $(\tilde Q,\tilde P)\in \calM(q,p)$ which can be identified with a smooth map $\R\to T^*\R^n$ that vanishes at $\infty$. Note that a vector field $X$ on $\calM(q,p)$ is an assignment of a map $X(\tilde Q,\tilde P)$ to each path $(\tilde Q,\tilde P)$. The vector field is said to be \emph{local} if $X(\tilde Q,\tilde P)(t)$ is a function of $\tilde Q(t)$ and $\tilde P(t)$ for all $t\in\R$.

Formally, we want to express the divergence of a vector field in terms of path integrals similar to the description using ordinary integrals. We would like to have 
\[
\int_{\calM(q,p)}X\left(\exp\left(\I S/\hbar\right)\calO\right)=\int_{\calM(q,p)}\Div X \exp\left(\I S/\hbar\right)\calO,
\]
for all observables $\calO$.

\begin{defn}[Divergence]
If there exists an observable $\Div X$ such that for any observable $\calO$ we have 
\[
\langle X(S)\calO\rangle_0(q,p)=\I\hbar \langle X(\calO)\rangle_0(q,p)-\I\hbar \langle \Div X \calO\rangle_0(q,p),
\]
then we say that $\Div X$ is the \emph{divergence} of $X$.
\end{defn}

\begin{lem}
\label{lem:divergence}
If $X$ is a local vector field, then $\Div X=0$.
\end{lem}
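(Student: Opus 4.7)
The plan is to prove the lemma by verifying directly that the identity
\[
\langle X(S)\calO\rangle_0(q,p) = \I\hbar \langle X(\calO)\rangle_0(q,p)
\]
holds for every polynomial (or formal power series) observable $\calO$ and every local vector field $X$, which forces $\Div X = 0$ by the defining equation of the divergence. First I would shift to the $(\tilde Q, \tilde P)$ variables as in Section \ref{sec:Moyal_product_path_integral} and integrate by parts in the time variable (boundary terms vanish because $\tilde P$ vanishes at $\infty$) to get
\[
X(S) = \int_\R \Bigl[X^P_i(\tilde Q,\tilde P)\,\dot{\tilde Q}^i - \dot{\tilde P}_i\, X^{Q,i}(\tilde Q,\tilde P)\Bigr]\,\dd t.
\]
Expanding $X^Q$, $X^P$, and $\calO$ as Taylor series in $\tilde Q,\tilde P$ around $(q,p)$, both sides of the proposed identity reduce to Gaussian expectation values governed by Wick's theorem (Theorem \ref{thm:Wick}) with propagator $\langle\tilde P_i(u)\tilde Q^j(v)\rangle_0^\sim = \I\hbar\,\theta(v-u)\delta_i^j$.

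Next I would organize the Wick contractions contributing to $\langle X(S)\calO\rangle_0$ into two disjoint classes: (a) contractions that pair a field inside $X(S)$ with a field of $\calO$, and (b) contractions between two fields inside $X(S)$ itself, necessarily evaluated at the same time $t$. Using the distributional identity $\partial_t \theta(s-t) = -\delta(s-t)$, the time derivatives $\dot{\tilde Q}^i$ and $\dot{\tilde P}_i$ in $X(S)$ collapse the $t$-integral against any propagator reaching into $\calO$, and a careful bookkeeping shows that the sum of class (a) contractions reproduces precisely $\I\hbar\,\langle X(\calO)\rangle_0$, because the field variations of $\calO$ induced by the delta functions are exactly those obtained by applying the derivation $X$ to $\calO$.

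The remaining class (b) contractions are the only possible source of an anomaly, and they are exactly what a formal computation would identify with $\I\hbar \langle \Div X\cdot \calO\rangle_0$. For a local vector field, the only nonzero pairing at the coincident time $t$ between fields inside a single factor $X^{Q,i}(\tilde Q(t),\tilde P(t))$ or $X^P_i(\tilde Q(t),\tilde P(t))$ involves a contraction of $\tilde P_i(t)$ with $\tilde Q^j(t)$, which carries the factor $\theta(t-t)=\theta(0)$. By the normal-ordering prescription fixed in Section \ref{sec:Moyal_product_path_integral} (compatible with the skew-symmetry of $\frac{\dd}{\dd t}$), this factor is zero. Hence all class (b) contributions vanish, and comparing with the definition of $\Div X$ yields $\Div X = 0$.

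The main obstacle is the combinatorial bookkeeping in step two: one must check that signs, symmetry factors, and the action of $\dot{\tilde P},\dot{\tilde Q}$ through the delta-function contractions assemble cleanly into the derivation $X$ acting term-by-term on the Taylor expansion of $\calO$, with no leftover boundary or surface contributions. The conceptual heart of the argument, however, is not this bookkeeping but the observation that locality of $X$ confines the potential anomaly to tadpole-type self-contractions at a single time, so that the normal-ordering convention $\theta(0)=0$ suffices to kill it — whereas for a non-local $X$ (whose components could involve integrals over $\tilde Q$ and $\tilde P$ at distinct times) the anomalous contractions would not be restricted to the diagonal and could genuinely contribute a nonzero $\Div X$.
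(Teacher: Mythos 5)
Your proposal is correct and follows essentially the same route as the paper's proof: write $X(S)=\int\bigl(X_p\frac{\dd}{\dd t}\tilde Q-X_q\frac{\dd}{\dd t}\tilde P\bigr)\dd t$, observe that locality confines all self-contractions within $X(S)$ to coincident times where normal ordering ($\theta(0)=0$) kills them, and check that the remaining contractions with $\calO$ (via $\partial_t\theta$ producing delta functions) reassemble into $\I\hbar\langle X(\calO)\rangle_0$, forcing $\Div X=0$. The paper states this more tersely but the decomposition of Wick contractions and the role of normal ordering plus locality are identical.
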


\begin{proof}
Write $X(\tilde Q,\tilde P)(t)=(X_q(\tilde Q(t),\tilde P(t)),X_p(\tilde Q(t),\tilde P(t)))$. Then 
\[
X(S)=\int \left(X_p\frac{\dd}{\dd t}\tilde Q-X_q\frac{\dd}{\dd t}\tilde P\right)\dd t.
\]
By the normal ordering and the locality of $X$, when computing $\langle X(S)\calO\rangle_0(q,p)$ we only have to contract the $\tilde P$'s ($\tilde Q$'s) in $\calO$ with the $\tilde Q$'s ($\tilde P$'s) in $X(S)$ and replace each pair by a propagator. Whenever a $\tilde P$ ($\tilde Q$) in $\calO$ is contracted with the $\frac{\dd}{\dd t}\tilde Q$ ($\frac{\dd}{\dd t}\tilde P$) in $X(S)$, we get the identity operator times $\I\hbar$ ($-\I\hbar$). Summing up the various terms, we get that this is the same as taking the expectation value of $X(\calO)$ multiplied with $\I\hbar$. This completes the proof.
\end{proof}

We can also consider the divergence of vector fields on $\calM$. In fact, a local vector field $X$ on $\calM$ can be uniquely written as the sum of a vector field $X_\infty$ on $T^*\R^n$ and a section $\tilde X$ of local vector fields (i.e. an assignment of a local vector field $\tilde X(q,p)$ on $\calM(q,p)$ to each $(q,p)\in T^*\R^n$). By Lemma \ref{lem:divergence} we get 
\[
\langle X(S)\calO\rangle_0'=\I\hbar\langle X(\calO)\rangle_0'-\I\hbar\langle \Div_\mu X_\infty \calO\rangle_0',
\]
where $\Div_\mu X_\infty$ denotes the ordinary divergence of the vector field $X_\infty$ with respect to the measure $\mu$. One can then define $\Div_\mu X_\infty$ to be the \emph{divergence} of the local vector field $X$ on $\calM$. 

\begin{exe}
Using the local vector field $X(\tilde Q,\tilde P)(t):=(\tilde P(t),0)$, show that
\[
\hbar\frac{\dd}{\dd\hbar} (f\star g)=(p_i\de^i f)\star g+f\star (p_i\de^ig)-p_i\de^i(f\star g).
\]
\end{exe}

\subsection{Independence of evaluation points}

Note that in the previous computations we have considered observables whose definition depends on the choice of points in $S^1\setminus\{\infty\}$. However, computing the expectation values we have seen that they are in fact independent of these points. In particular, we have the following Proposition:
\begin{prop}
The expectation values are invariant under (pointed) diffeomorphism of the source manifold $S^1$ on which the field theory is defined.
\end{prop}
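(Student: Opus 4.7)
The plan is to establish invariance in two complementary steps: first a formal geometric change-of-variables argument at the level of the action and the path integral, and second a direct verification via the Wick expansion that makes rigorous sense in perturbation theory.

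First, fix an orientation-preserving pointed diffeomorphism $\varphi\colon S^1\to S^1$ with $\varphi(\infty)=\infty$. Precomposition $\gamma\mapsto\gamma\circ\varphi$ is a bijection of $\calM$ that preserves the boundary condition $\gamma(\infty)=(q,p)$, hence restricts to a bijection of each $\calM(q,p)$. Evaluation observables transform covariantly: for any $f_1,\dots,f_k\in C^\infty(T^*\R^n)$ and $u_1<\dots<u_k$,
\[
\calO_{f_1,\ldots,f_k;u_1,\ldots,u_k}(\gamma\circ\varphi)=\calO_{f_1,\ldots,f_k;\varphi^{-1}(u_1),\ldots,\varphi^{-1}(u_k)}(\gamma).
\]
The action $S(\gamma)=\int_{S^1}\gamma^*\alpha$ is manifestly reparametrization invariant, since $S(\gamma\circ\varphi)=\int_{S^1}\varphi^*\gamma^*\alpha=\int_{S^1}\gamma^*\alpha$ (an orientation-preserving change of variables on $S^1$). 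Formally, the measure $\mathscr{D}[\tilde Q]\mathscr{D}[\tilde P]$ is invariant under the linear substitution induced by $\varphi$, so a change of variables inside the path integral gives the desired identity between expectation values.

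Second, to avoid invoking an ill-defined formal measure I would verify invariance directly on the perturbative expansion. By the Wick rule \eqref{eq:infinite-dimensional_Wick}--\eqref{eq:derivative_Green_functions}, every correlator of (derivatives of) evaluation observables is a finite sum of products of propagators
\[
\theta(v-u)=\tfrac12\,\mathrm{sign}(v-u),\qquad u,v\in S^1\setminus\{\infty\}\cong\R.
\]
Any orientation-preserving diffeomorphism $\varphi$ of $S^1$ fixing $\infty$ is an orientation-preserving diffeomorphism of $\R$, hence $\mathrm{sign}(\varphi(v)-\varphi(u))=\mathrm{sign}(v-u)$. Therefore the simultaneous relabelling $u_i\mapsto\varphi^{-1}(u_i)$ leaves every pairing contribution unchanged, and combining this with the covariance of observables above yields
\[
\langle\calO_{f_1,\ldots,f_k;u_1,\ldots,u_k}\rangle_0
=\langle\calO_{f_1,\ldots,f_k;\varphi^{-1}(u_1),\ldots,\varphi^{-1}(u_k)}\rangle_0,
\]
which is the stated reparametrization invariance. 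In particular this reproduces the observations made in the text that $\langle\calO_{f,u}\rangle_0=f$ and $\langle\calO_{f_1,\ldots,f_k;u_1,\ldots,u_k}\rangle_0=f_1\star\cdots\star f_k$ depend only on the cyclic ordering of the insertion points, not on their precise location.

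The main obstacle is precisely the step in the first paragraph that appeals to invariance of the formal measure $\mathscr{D}[\tilde Q]\mathscr{D}[\tilde P]$, which carries no intrinsic meaning outside perturbation theory. The second paragraph bypasses this: the Wick expansion is well-defined as a formal power series in $\hbar$, and in this setting the invariance reduces to the elementary fact that the sign function on $\R$ is preserved by orientation-preserving reparametrizations. Note that orientation preservation is essential: an orientation-reversing pointed diffeomorphism would flip every $\mathrm{sign}(v-u)$ and hence exchange $\star$ with its opposite product, so the hypothesis (implicit in the word ``pointed'' together with the usual convention that $S^1$ is oriented) is genuinely needed.
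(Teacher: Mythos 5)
Your proof is correct, but it takes a different route from the paper. The paper establishes the proposition as a Ward-type identity: it writes $f(Q(b),P(b))-f(Q(a),P(a))=\lim_{r\to\infty}\tilde X_{f,r}(S)$ for the cutoff Hamiltonian vector field $\tilde X_{f,r}$ of $f$, invokes Lemma \ref{lem:divergence} (local vector fields on $\calM(q,p)$ are divergence-free once normal ordering is imposed), and concludes that $\langle(f(Q(b),P(b))-f(Q(a),P(a)))\calO\rangle_0(q,p)=\I\hbar\lim_r\langle\tilde X_{f,r}(\calO)\rangle_0(q,p)=0$ whenever $\calO$ does not depend on the fields on $[a,b]$; evaluation points can thus be slid around freely as long as they do not cross other insertions, which generates the action of pointed orientation-preserving diffeomorphisms. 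You instead argue directly on the Wick expansion: the propagator $\theta(v-u)=\tfrac12\,\mathrm{sign}(v-u)$ depends only on the ordering of the insertion points, which an orientation-preserving pointed diffeomorphism preserves, and together with the covariance $\calO_{f_1,\ldots,f_k;u_1,\ldots,u_k}(\gamma\circ\varphi)=\calO_{f_1,\ldots,f_k;\varphi^{-1}(u_1),\ldots,\varphi^{-1}(u_k)}(\gamma)$ this gives invariance; you rightly treat the formal measure-invariance argument as heuristic and let the perturbative expansion carry the proof. Each approach has its merits: yours is more elementary, handles an arbitrary pointed diffeomorphism in one stroke, and makes transparent why orientation matters (reversal would exchange $\star$ with the opposite product); the paper's divergence argument does not rely on the explicit form of the propagator, applies to arbitrary observables insensitive to the fields on the swept interval (not only evaluation observables), and is precisely the template reused later for the Poisson sigma model on the upper half-plane, where the superpropagator is far less trivial than a sign function.
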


\begin{rem}
A quantum field theory with such a property is usually called \emph{topological quantum field theory (TQFT)}. 
\end{rem}
Given $f\in C^\infty(T^*\R^n)$ and two points $a,b\in \R$, we have 
\begin{multline*}
    f(Q(b),P(b))-f(Q(a),P(a))=\int_a^b\left(\de_i f(Q(t),P(t))\frac{\dd}{\dd t}Q^i(t)+\de^if(Q(t),P(t))\frac{\dd}{\dd t}P_i(t)\right)\dd t\\
    =\lim_{r\to\infty}\int_{-\infty}^{+\infty}\left(\de_if(Q(t),P(t))\frac{\dd}{\dd t}Q^i(t)+\de^if(Q(t),P(t))\frac{\dd}{\dd t}P_i(t)\right)\lambda_r(t)\dd t\\
    =\lim_{r\to\infty}\tilde X_{f,r}(S),
\end{multline*}

where $(\lambda_r)$ is a sequence of smooth, compactly supported functions that converges almost everywhere to the characteristic function of the interval $[a,b]$ and $\tilde X_{f,r}$ is the local vector field
\[
\tilde X_{f,r}(t)=(-\de^i f(Q(t),P(t)),\de_if(Q(t),P(t)))\lambda_r(t).
\]

\begin{rem}
Recall that $\calM$ is a space of maps. So a vector field on the target $T^*\R^n$ generates a local vector field on $\calM$. Let $\tilde X_f$ be the local vector field corresponding to the Hamiltonian vector field $X_f$ of $f$. Then $\tilde X_{f,r}$ is given by multiplying $\tilde X_f$ by $\lambda_r$.
\end{rem}

If we restrict $\tilde X_{f,r}$ to $\calM(q,p)$, we can observe by Lemma \ref{lem:divergence} that it is divergence-free. Hence, we get 
\[
\langle (f(Q(b),P(b))-f(Q(a),P(a)))\calO\rangle_0(q,p)=\lim_{r\to \infty}\left\langle \tilde X_{f,r}(S)\calO\right\rangle_0(q,p)=\I\hbar\lim_{r\to \infty}\left\langle\tilde X_{f,r}(\calO)\right\rangle_0(q,p),
\]
which vanishes if $\calO$ does not depend on $(\tilde Q(t),\tilde P(t))$ for $t\in[a,b]$. This means that we can move the evaluation point at least as long we do not meet another evaluation point.

\subsection{Associativity}

Let us prove the associativity of the Moyal product by using the path integral description considering the expectation value of the observable $\calO_{f,g,h;u,v,w}$. There will be three propagators appearing which correspond to the three different ways of pairing $u,v,w$. Note that the function $\theta$ will not see the difference, since 
\[
\theta(w-u)=\theta(w-v)=\theta(v-u)=\frac{1}{2}.
\]
We group the propagators by considering first only those between $u$ and $v$ and only thereafter the other ones. This will imply that 
\[
\langle\calO_{f,g,h;u,v,w}\rangle_0=(f\star g)\star h.
\]
If we group the propagators such that we first consider those between $v$ and $w$ and then the others, we get 
\[
\langle\calO_{f,g,h;u,v,w}\rangle_0=f\star(g\star h).
\]
This proves associativity of the Moyal product.

\begin{rem}
We can also prove this result by using the independence of evaluation points. This then implies 
\[
\lim_{v\to u^+}\langle \calO_{f,g,h;u,v,w}\rangle_0(q,p)=\lim_{v\to w^-}\langle \calO_{f,g,h;u,v,w}\rangle_0(q,p).
\]
The left hand side corresponds to evaluating first the expectation value of $\calO_{f,g;u,v}$ then placing the result at $u$ and computing the expectation value of $\calO_{\langle\calO_{f,g;u,v}\rangle_0,h;w}$. This will give the result $(f\star g)\star h$. Repeating this computation on the right hand side, we get associativity.
\end{rem}

\subsection{The evolution operator}

Let us consider the deformed action $S_H$ as in \eqref{eq:deformed_action}. We restrict ourselves to a time-independent Hamiltonian $h$ and let it act from the instant $a$ to the instant $b$ with $a<b$. Then we consider the action $S_H$ with 
\begin{equation}
\label{eq:Hamiltonian_function}
H(q,p,t)=h(q,p)\chi_{[a,b]}(t),
\end{equation}
where $\chi_{[a,b]}$ denotes the characteristic function on the interval $[a,b]$. The aim is to compute the \emph{evolution operator}
\[
U(q,p,T):=\frac{\displaystyle{\int}_{\calM(q,p)}\exp\left(\I S_H/\hbar\right)}{\displaystyle{\int}_{\calM(q,p)}\exp\left(\I S/\hbar\right)},
\]
where $T=b-a$. Observe that 
\begin{multline*}
U(q,p,T)=\left\langle \exp\left(\frac{\I}{\hbar}\int H(Q(t),P(t),t)\dd t\right)\right\rangle_0(q,p)\\
=\left\langle \exp\left(\frac{\I}{\hbar}\int_a^bh(Q(t),P(t),t)\dd t\right)\right\rangle_0(q,p).
\end{multline*}
Now we can express the integral in terms of Riemann sums as 
\[
\int_a^bh(Q(t),P(t),t)\dd t=\lim_{N\to\infty}\frac{T}{N}\sum_{1\leq r\leq N}h(Q(a+rT/N),P(a+rT/N)).
\]
Hence, we get 
\begin{multline*}
    U(q,p,T)=\lim_{N\to\infty}\left\langle \prod_{1\leq r\leq N}\exp\left(\frac{\I}{\hbar}\frac{T}{N}h(Q(a+rT/N),P(a+rT/N))\right)\right\rangle_0(q,p)\\
    =\lim_{N\to\infty}\left\langle\calO_{\exp\left(\frac{\I}{\hbar}\frac{T}{N}h\right),\ldots,\exp\left(\frac{\I}{\hbar}\frac{T}{N}h\right);a+\frac{T}{N},a+2\frac{T}{N},\ldots,a+T}\right\rangle_0(q,p)\\
    =\lim_{N\to\infty}\left(\exp\left(\frac{\I}{\hbar}\frac{T}{N}h\right)\right)^{\star N}(q,p)=\exp_\star\left(\frac{\I}{\hbar}h\right)(q,p).
\end{multline*}

\begin{rem}
Note that the previous result also includes negative powers of $\hbar$. However, everything is well-defined since each term in the power series expansion of $\exp_\star$ is actually a \emph{Laurent series} in $\hbar$.
\end{rem}

\subsection{Perturbative evaluation of integrals}

Let $\calM$ be an $n$-dimensional manifold and $S\in C^\infty(\calM)$. We want to understand how we can compute an integral of the form $Z:=\int_\calM \exp(\I S/\hbar)\dd^nx$.
Consider first the case where $S$ has a unique critical point $x_0\in\calM$ which is nondegnerate. Then we can write 
\[
S(x_0+\sqrt{\hbar}x)=S(x_0)+\frac{\hbar}{2}\dd_{x_0}^2S(x)+R_{x_0}(\sqrt{\hbar}x),
\]
where $R_{x_0}$ is a formal powers series given by the Taylor expansion of $S$ starting with the cubic term in $\sqrt{\hbar}x$ with $x\in T_{x_0}\calM$. Let $A$ be the \emph{Hessian} of $S$ at $x_0$ with respect to the Euclidean metric $\langle\enspace,\enspace\rangle$. In particular, we write 
\[
\dd_{x_0}^2S(x)=\langle x,Ax\rangle.
\]

\begin{thm}[Stationary phase expansion]
\label{thm:stationary_phase_expansion}
The \emph{stationary phase expansion} (or also \emph{saddle-point approximation}) of the integral $Z:=\int_\calM\exp(\I S/\hbar)\dd^nx$ is given by 
\begin{multline*}
    Z=\hbar^{\frac{n}{2}}\exp(\I S(x_0)/\hbar)\int_\calM \exp\left(\frac{\I}{2}\langle x,Ax\rangle\right)\sum_{r\geq 0}\frac{1}{r!}R^r_{x_0}(\sqrt{\hbar}x)\dd^nx\\
    =(2\pi\hbar)^\frac{n}{2}\exp(\I S(x_0)/\hbar)\exp\left(\frac{\I\pi\,\mathrm{sign}\, A}{4}\right)\frac{1}{\sqrt{\vert\det A\vert}}\sum_{r\geq 0}\frac{1}{r!}\left\langle R^r_{x_0}(\sqrt{\hbar}x)\right\rangle_0,
\end{multline*}
where $\langle\enspace\rangle_0$ denotes the Gaussian expectation value with respect to the nondegenerate symmetric matrix $A$.
\end{thm}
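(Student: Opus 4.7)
The plan is to reduce the oscillatory integral to a standard Gaussian expectation via a shift-and-rescale of variables around $x_0$, and then expand the non-quadratic remainder formally and integrate term by term using Wick's theorem.

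First I would change variables $y = x_0 + \sqrt{\hbar}\,x$, which contributes a Jacobian $\hbar^{n/2}$. Using that $x_0$ is a critical point ($\dd_{x_0}S = 0$), the Taylor expansion of $S$ reads
\[
S(x_0 + \sqrt{\hbar}\,x) = S(x_0) + \tfrac{1}{2}\hbar\langle x, Ax\rangle + R_{x_0}(\sqrt{\hbar}\,x),
\]
where $R_{x_0}(\sqrt{\hbar}\,x)$ collects the cubic and higher Taylor terms and is therefore a formal power series in $\sqrt{\hbar}$ starting at order $\hbar^{3/2}$ (with $x$ held fixed). Consequently $\I R_{x_0}(\sqrt{\hbar}\,x)/\hbar$ is a formal series in $\sqrt{\hbar}$ starting at order $\hbar^{1/2}$, and so
\[
\exp\!\left(\frac{\I}{\hbar}R_{x_0}(\sqrt{\hbar}\,x)\right) = \sum_{r\geq 0}\frac{1}{r!}\left(\frac{\I R_{x_0}(\sqrt{\hbar}\,x)}{\hbar}\right)^{\!r}
\]
is well-defined in the sense of formal power series in $\sqrt{\hbar}$: at any fixed order only finitely many values of $r$ contribute.

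Next I would substitute this expansion into the rescaled integral and, at the formal-series level, interchange the summation with the Gaussian integration against $\exp(\tfrac{\I}{2}\langle x,Ax\rangle)$. Applying the Gaussian integral formula
\[
\int_{\R^n} \exp\!\left(\tfrac{\I}{2}\langle x,Ax\rangle\right)\dd^n x = (2\pi)^{n/2}\exp\!\left(\tfrac{\I\pi\,\mathrm{sign}\,A}{4}\right)\frac{1}{\sqrt{|\det A|}}
\]
to normalize and identifying the resulting ratios with Gaussian expectations $\langle\,\cdot\,\rangle_0$ with respect to $A$, each term becomes $\tfrac{1}{r!}\langle R^r_{x_0}(\sqrt{\hbar}\,x)\rangle_0$. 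By Wick's theorem (Theorem~\ref{thm:Wick}), each such expectation is a finite sum of products of matrix elements of $A^{-1}$, finite at every order of $\hbar$.

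The main obstacle is interpretive rather than computational: the integral on the right does not converge absolutely once the Gaussian factor is multiplied by the cubic-and-higher polynomial $R_{x_0}$, so the identity must be read as an equality of asymptotic formal series in $\hbar^{1/2}$, not of convergent integrals. The key check is that this reading is self-consistent order by order: since $R^r_{x_0}(\sqrt{\hbar}\,x)$ contributes at $\hbar$-order at least $3r/2$ and Wick contractions at a given total polynomial degree produce only finitely many diagrams, the coefficient of each power of $\hbar^{1/2}$ in the right-hand side is a well-defined finite sum. This is what legitimizes the formal interchange of sum and integral and completes the derivation of the expansion.
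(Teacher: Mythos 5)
Your argument is correct and is essentially the derivation the paper intends: the theorem is stated there without proof, immediately after setting up exactly the ingredients you use (the Taylor splitting with $R_{x_0}$ starting at the cubic term, the oscillatory Gaussian integral formula, and Wick's theorem), and your reading of the identity as an equality of formal power series in $\hbar^{1/2}$, with the order-by-order finiteness justifying the interchange of sum and integral, is the right way to make it precise. The only notational caveat is that, as in the paper's own statement, the factors $\left(\tfrac{\I}{\hbar}\right)^{r}$ arising from expanding $\exp\!\left(\tfrac{\I}{\hbar}R_{x_0}(\sqrt{\hbar}x)\right)$ are silently absorbed into the terms $R^{r}_{x_0}(\sqrt{\hbar}x)$, so you should either keep them explicit or state that convention when you write the final formula.
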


Expectation values as in Definition \ref{defn:expectation} are then given by
\begin{equation}
\label{eq:expectation_general}
\langle \calO\rangle=\frac{\displaystyle{\sum}_{r\geq 0}\frac{1}{r!}\left\langle \calO(x_0+\sqrt{\hbar}x)R_{x_0}^r(\sqrt{\hbar}x)\right\rangle_0}{\displaystyle{\sum}_{r\geq 0}\frac{1}{r!}\left\langle R^r_{x_0}(\sqrt{\hbar}x)\right\rangle_0}.
\end{equation}
\begin{rem}
Note that the denominator is of the form $1+O(\hbar)$ and thus the expectation value can be computed as a formal power series in $\hbar$. One can consider the computation in a graphical way by associating a vertex of valence $k$ to the term of degree $k$ in $R_{x_0}$ and a vertex of valence $\ell$ to the term of degree $\ell$ in $\calO$. By Wick's theorem (Theorem \ref{thm:Wick}) one has to connect these vertices in pairs in all possible ways the half-edges emanating from each vertex. The result is a collection of \emph{Feynman diagrams} with a weight associated to each of them. The expectation value is then a sum over all Feynman diagrams of their weights. A Feynman diagram with vertices coming only from $R_{x_0}$, i.e. a Feynman diagram appearing only in the denominator of \eqref{eq:expectation_general}, is called a \emph{vacuum diagram}.
A combinatorial fact of \eqref{eq:expectation_general} is that an expectation value is given by the sum over graphs which do not have a connected component which is a vacuum diagram (see \cite{Zinn-Justin1994,P} for more details on Feynman diagrams). 
\end{rem}

\subsection{Infinite dimensions}
Let us look at the situation of computing \eqref{eq:expectation_general} in the case when $\calM$ is infinite-dimensional. We want to recall the following definition.

\begin{defn}[Local function]
\label{defn:local}
A function on a space of fields on some manifold $M$ is \emph{local} if it is the integral on $M$ of a function that depends at each point on finite jets of fields at that point.
\end{defn}

\begin{rem}
If the action is a local function, $A$ will be a differential operator. For our purpose, we will need its Green function (the propagator) and in the Gaussian expectation values of $\calO R^r$ and $R^r$ we will use \eqref{eq:derivative_Green_functions}. Instead of summing over indices we will have to integrate over the Cartesian products of the source manifold. The normal ordering will exclude all the graphs with edges having the same endpoints (these edges are called \emph{tadpoles} or \emph{short loops}) and thus integration is restricted to the configuration space of the source manifold. However, this is usually not enough to make the integrals converge as the Green functions are very singular when the arguments approach each other. The general procedure to get rid of divergencies is called \emph{renormalization}. 
Anyway, for our case (i.e. the case of a TQFT), the configuration-space integrals associated to Feynman diagrams without tadpoles do indeed converge.
\end{rem}

\begin{rem}
If the action $S$ has more critical points and all of them are nondegenerate, the asymptotic expansion is obtained by computing the stationary phase expansion (Theorem \ref{thm:stationary_phase_expansion}) around each critical point and then summing all these terms together. The expression for expectation value is then no longer given by \eqref{eq:expectation_general}. It can happen that one of the critical points dominates the others so that one can forget them. In physical theories this is usually done by regarding $\exp(\I S/\hbar)$ as the analytic continuation of the exponential of minus a poisitive-definite function, called \emph{Euclidean action}, so that the dominating critical point is the absolute minimum. There are still examples where there is no way to do it as all critical points appear as saddle points. In such a case, one can consider another option which consists of selecting one particular critical point (called the \emph{sector}) and expanding only around this point. In this case, \eqref{eq:expectation_general} is again the correct expression for the infinite-dimensional extension. 
\end{rem}

As already mentioned, it can often happen that the critical points are degenerate. A simple case is given when critical points are parametrized by a finite-dimensional manifold $\calM_\mathrm{crit}$. Then, using Fubini's theorem for the finite-dimensional case, we can rewrite the integral similarly as in \eqref{eq:Fubini} as 
\[
\int_\calM(\dotsm)=\int_{x_0\in\calM_\mathrm{crit}}\mu(x_0)\int_{\calM(x_0)}(\dotsm)
\]
where $\int_{\calM(x_0)}$ denotes the asymptotic expansion of the integral in the complement to $T_{x_0}\calM_\mathrm{crit}$ of a formal neighborhood of $x_0$, while $\mu$ denotes a measure on $\calM_\mathrm{crit}$ (which is determined in the finite-dimensional case and has to be chosen in the infinite-dimensional case as part of the definition). If the Hessian is constant on $\calM_\mathrm{crit}$, we can express the expectation value as 
\[
\langle\calO\rangle=\frac{\displaystyle{\int}_{x_0\in\calM_\mathrm{crit}}\mu(x_0)\displaystyle{\sum}_{r\geq 0}\frac{1}{r!}\left\langle \calO(x_0+\sqrt{\hbar}x)R_{x_0}^r(\sqrt{\hbar}x)\right\rangle_0}{\displaystyle{\int}_{x_0\in\calM_\mathrm{crit}}\mu(x_0)\displaystyle{\sum}_{r\geq 0}\frac{1}{r!}\left\langle R_{x_0}^r(\sqrt{\hbar}x)\right\rangle_0},
\]
where $\langle\enspace\rangle_0(x_0)$ denotes the Gaussian expectation value which is computed by expanding around $x_0$ orthogonally to $T_{x_0}\calM_\mathrm{crit}$. A possible choice for the measure $\mu$ is the delta measure peaked at some point $x_0$. In this case, as we already did before, the expectation value will be denoted by $\langle\enspace\rangle(x_0)$. 

\subsection{A simple generalization}

Note that we have only considered the perturbative expansion with formal expansion parameter $\hbar$. It might happen that there is another small expansion parameter, or some coefficient in $S$ which is much smaller than $\hbar$ or in our setting is an element of $\hbar^2\R[\![\hbar]\!]$. In such a case, we can define the Gaussian part by using the quadratic $\hbar$-independent term of $S/\hbar$ and consider all the other terms as the perturbation $R$. It can also happen that the perturbation term $R$ contains quadratic and linear terms as well. In particular, we can have an action function of the form 
\begin{equation}
\label{eq:action_form}
S(y,z)=\langle y,Bz\rangle+f(y,z),
\end{equation}
where $B$ is a nondegenerate matrix and $f$ is a function quadratic in $z$. If we work around the critical point $y=z=0$, we can rescale $z$ by $\hbar$ to get 
\[
S(y,\hbar z)=\hbar\langle y,Bz\rangle+\hbar f(y,z)
\]
and consider $f$ as the perturbation to the Gaussian theory defined by $B$.\\

\subsubsection{Quantum mechanics}

Recall that the topological action \eqref{eq:QM_action} defines the Moyal product in terms of the expectation value of the observable \eqref{eq:Moyal_observable}, which is independent of the evaluation points. Consider a Hamiltonian $H$ as in \eqref{eq:Hamiltonian_function} and define the expectation 
\[
f\hat{\star}_{a,b;u,v}g(q,p):=\frac{\displaystyle{\int}_{\calM(q,p)}\exp(\I S_H)/\hbar)\calO_{f,g;u,v}}{\displaystyle{\int}_{\calM(q,p)}\exp(\I S_H/\hbar)}=\frac{\left\langle \exp\left(\frac{\I}{\hbar}\displaystyle{\int}_a^b h\dd t\right)\calO_{f,g;u,v}\right\rangle_0(q,p)}{\left\langle \exp\left(\frac{\I}{\hbar}\displaystyle{\int}_{a}^b h\dd t\right)\right\rangle_0(q,p)}
\]
which will no longer be independent to $u$ and $v$ ($a<u<v<b$) nor will it define an associative product. By the computations before, we get 
\[
f\hat{\star}_{a,b;u,v}g=\frac{\exp_\star\left(\frac{\I}{\hbar}(u-a)h\right)\star f\star \exp_\star\left(\frac{\I}{\hbar}(v-u)h\right)\star g\star \exp_\star\left(\frac{\I}{\hbar}(b-v)h\right)}{\exp_\star\left(\frac{\I}{\hbar}(b-a)h\right)}
\]
Let us consider the perturbative computation for the Hamiltonian
\[
h(q,p)=\frac{1}{2}G^{ij}(q)p_ip_j,
\]
at $p=0$ and for $f$ and $g$ only depending on $q$. By setting $Q(t)=q+\tilde Q(t)$ and $P(t)=p+\tilde P(t)$, we get 
\[
S_H=\hbar \int_{-\infty}^{+\infty}\tilde P(t)\frac{\dd}{\dd t}\tilde Q(t)\dd t+\frac{\hbar^2}{2}\int_a^bG^{ij}(q+\tilde Q(t))\tilde P_i(t)\tilde P_j(t)\dd t.
\]
Hence, $f\hat{\star}_{a,b;u,v}g(q,0)$ is the ratio of the expectation value of $\exp\left(\frac{\I\hbar}{2}\int_a^b G^{ij}(q+\tilde Q)\tilde P_i\tilde P_j\dd t\right)\calO_{f,g;u,v}$ and the expectation value of $\exp\left(\frac{\I\hbar}{2}\int_a^b G^{ij}(q+\tilde Q)\tilde P_i\tilde P_j\dd t\right)$. The propagator pairs the $\tilde P$'s to $\tilde Q$'s, so it is better to consider oriented graphs. The Feynman diagrams will be oriented graphs with 
\begin{enumerate}
    \item a vertex at $u$ with no outgoing arrows,
    \item a vertex at $v$ with no outgoing arrows,
    \item vertices between $a$ and $b$ with exactly two outgoing arrows.
\end{enumerate}

Moreover, graphs containing tadpoles or vacuum subgraphs will not be allowed. See Figures \ref{fig:Feynman_diagrams_order_one}, \ref{fig:Feynman_diagrams_order_two} and \ref{fig:Feynman_diagrams_not_allowed} for examples of allowed and not allowed Feynman diagrams. Note that to a vertex of first type with $r$ incoming arrows we associate an $r$-th derivative  of $f$, to a vertex of second type with $r$ incoming arrows we associate an $r$-th derivative of $g$ and to a vertex of third type with $r$ incoming arrows we associate an $r$-th derivative of $G$. The order in $\hbar$ is given by the number of vertices of the third type. So, at order zero, we have $f$ placed at $u$ and $g$ placed at $v$. This yields the pointwise product of $f$ and $g$. At order $1$, we have the graphs of Figure \ref{fig:Feynman_diagrams_order_one}. Hence, we get 
\begin{multline*}
   f\hat{\star}_{a,b;u,v}g(q,0)=f(q)g(q)-\frac{\I\hbar}{4}(b-a+2u-2v)G^{ij}(q)\de_i f(g)\de_j g(q)\\
   -\frac{\I\hbar}{4}(b-a)G^{ij}(q)[\de_i\de_jf(q)g(q)+f(q)\de_i\de_jg(q)]+O(\hbar^2).
\end{multline*}

\begin{exe}
Compute some of the further orders.
\end{exe}

\begin{figure}[ht]
\centering
\begin{tikzpicture}[scale=0.6]
\tikzset{Bullet/.style={fill=black,draw,color=#1,circle,minimum size=0.5pt,scale=0.5}}
\node[Bullet=black, label=below: $a$] (a1) at (1,0) {};
\node[Bullet=black, label=below: $u$] (u1) at (2.5,0) {};
\node[Bullet=black, label=above: $v$] (v1) at (4,0) {};
\node[Bullet=black, label=below: $b$] (b1) at (5.5,0) {};
\node[Bullet=black, label=below: $a$] (a2) at (9,0) {};
\node[Bullet=black, label=below: $u$] (u2) at (10.5,0) {};
\node[Bullet=black, label=below: $v$] (v2) at (12,0) {};
\node[Bullet=black, label=below: $b$] (b2) at (13.5,0) {};
\node[Bullet=black, label=below: $a$] (a3) at (17,0) {};
\node[Bullet=black, label=below: $u$] (u3) at (18.5,0) {};
\node[Bullet=black, label=below: $v$] (v3) at (20,0) {};
\node[Bullet=black, label=below: $b$] (b3) at (21.5,0) {};
\draw (0,0) -- (7,0);
\draw (8,0) -- (14,0);
\draw (15,0) -- (23,0);
\draw[->,looseness=3] (1.5,0) [out=100,in=80] to (u1);
\draw[->,looseness=2] (1.5,0) [out=-100,in=-80] to (v1);
\draw[->,looseness=5] (9.5,0) [out=120,in=60] to (u2);
\draw[->,looseness=3] (9.5,0) [out=100,in=80] to (u2);
\draw[->,looseness=2] (17.5,0) [out=100,in=80] to (v3);
\draw[->,looseness=3] (17.5,0) [out=120,in=60] to (v3);
\end{tikzpicture}
\caption{Feynman diagrams of order one.} 
\label{fig:Feynman_diagrams_order_one}
\end{figure}

\begin{figure}[ht!]
\centering
\begin{tikzpicture}[scale=0.6]
\tikzset{Bullet/.style={fill=black,draw,color=#1,circle,minimum size=0.5pt,scale=0.5}}
\node[Bullet=black, label=below: $a$] (a1) at (1,0) {};
\node[Bullet=black, label=below: $u$] (u1) at (2.5,0) {};
\node[Bullet=black, label=above: $v$] (v1) at (4,0) {};
\node[Bullet=black, label=below: $b$] (b1) at (5.5,0) {};
\node[Bullet=black, label=below: $a$] (a2) at (9,0) {};
\node[Bullet=black, label=below: $u$] (u2) at (10.5,0) {};
\node[Bullet=black, label=below: $v$] (v2) at (12,0) {};
\node[Bullet=black, label=below: $b$] (b2) at (13.5,0) {};
\node[Bullet=black, label=below: $a$] (a3) at (17,0) {};
\node[Bullet=black, label=below: $u$] (u3) at (18.5,0) {};
\node[Bullet=black, label=below: $v$] (v3) at (20,0) {};
\node[Bullet=black, label=below: $b$] (b3) at (21.5,0) {};
\draw (0,0) -- (7,0);
\draw (8,0) -- (14,0);
\draw (15,0) -- (23,0);
\draw[->,looseness=3] (1.5,0) [out=100,in=80] to (u1);
\draw[->,looseness=2] (1.5,0) [out=-100,in=-80] to (v1);
\draw[->,looseness=5] (11,0) [out=90,in=130] to (u2);
\draw[->,looseness=3] (11,0) [out=90,in=80] to (v2);
\draw[->,looseness=3] (17.5,0) [out=-120,in=-60] to (v3);
\draw[->, looseness=3] (4.5,0) [out=90,in=60] to (u1);
\draw[->, looseness=1.2] (4.5,0) [out=-100,in=-60] to (1.5,0);
\draw[->,looseness=3] (13,0) [out=90,in=130] to (v2);
\draw[->,looseness=2.5] (13,0) [out=-90,in=-130] to (11,0);
\draw[->,looseness=3] (17.5,0) [out=100,in=120] to (u3);
\draw[->,looseness=4] (19.5,0) [out=90,in=60] to (u3);
\draw[->,looseness=4] (19.5,0) [out=90,in=120] to (v3);
\end{tikzpicture}
\caption{Feynman diagrams of order two.} 
\label{fig:Feynman_diagrams_order_two}
\end{figure}

\begin{figure}[ht!]
\centering
\begin{tikzpicture}[scale=0.6]
\tikzset{Bullet/.style={fill=black,draw,color=#1,circle,minimum size=0.5pt,scale=0.5}}
\node[Bullet=black, label=below: $a$] (a1) at (1,0) {};
\node[Bullet=black, label=below: $u$] (u1) at (2.5,0) {};
\node[Bullet=black, label=above: $v$] (v1) at (4,0) {};
\node[Bullet=black, label=below: $b$] (b1) at (5.5,0) {};
\node[Bullet=black, label=below: $a$] (a2) at (9,0) {};
\node[Bullet=black, label=below: $u$] (u2) at (10.5,0) {};
\node[Bullet=black, label=below: $v$] (v2) at (12,0) {};
\node[Bullet=black, label=below: $b$] (b2) at (13.5,0) {};
\node[Bullet=black, label=below: $a$] (a3) at (17,0) {};
\node[Bullet=black, label=below: $u$] (u3) at (18.5,0) {};
\node[Bullet=black, label=below: $v$] (v3) at (20,0) {};
\node[Bullet=black, label=below: $b$] (b3) at (21.5,0) {};
\draw (0,0) -- (7,0);
\draw (8,0) -- (14,0);
\draw (15,0) -- (23,0);
\path[->,every loop/.style={looseness=40}] (a1) edge [in=120,out=60,loop] (a1); 
\path[->,every loop/.style={looseness=40}] (b3) edge [in=120,out=60,loop] (b3); 
\draw[->,looseness=2] (1.5,0) [out=-100,in=-80] to (v1);
\draw[->,looseness=5] (11,0) [out=90,in=120] to (9.5,0);
\draw[->,looseness=3] (9.5,0) [out=90,in=100] to (11,0);
\draw[->,looseness=3] (9.5,0) [out=-90,in=-100] to (11,0);
\draw[->,looseness=3] (11,0) [out=90,in=100] to (13,0);
\draw[->,looseness=3] (11,0) [out=-90,in=-100] to (13,0);
\draw[->,looseness=3] (13,0) [out=-70,in=-120] to (9.5,0);
\draw[->,looseness=3] (12.5,0) [out=90,in=120] to (v2);
\draw[->,looseness=5] (12.5,0) [out=-90,in=-120] to (v2);
\draw[->,looseness=3] (17.5,0) [out=110,in=80] to (19.5,0);
\draw[->,looseness=3] (17.5,0) [out=-110,in=-80] to (19.5,0);
\draw[->,looseness=2] (19.5,0) [out=110,in=80] to (17.5,0);
\draw[->,looseness=2] (19.5,0) [out=-110,in=-80] to (17.5,0);
\draw[->,looseness=2] (21,0) [out=110,in=80] to (v3);
\end{tikzpicture}
\caption{Feynman diagrams which are not allowed: tadpoles, vacuum subdiagrams and both.} 
\label{fig:Feynman_diagrams_not_allowed}
\end{figure}

\section{Symmetries and the BRST formalism}

If the action $S$ has symmetries encoded in a free action of a Lie group $G$ (gauge theory with gauge group $G$), the critical points will never be nondegenerate since they always will appear in $G$-orbits. This is also the case for the Poisson sigma model for linear Poisson structures and hence we need to understand how to deal with these type of theories. 

\subsection{The main construction: Faddeev--Popov ghost method}

Let $\calM$ be a finite-dimensional manifold together with a measure $\mu$. Let $G$ be a compact Lie group which is endowed with an invariant measure (Haar measure) with a measure-preserving free action on $\calM$. Moreover, we assume that $\calM/G$ is a manifold. Then an invariant function $f$ is the pullback of a function $\underline{f}\in\calM/G$ and 
\[
I:=\frac{1}{\mathrm{Vol}(G)}\int_\calM f\mu=\int_{\calM/G}\underline{f}\underline{\mu},
\]
where $\underline{\mu}$ is the measure induced by $\mu$ on the quotient.
If we consider a section of a principal bundle $\pi\colon \calM\to\calM/G$, we can rewrite $I$ as an integral on the image of this section. Assume that this integral is locally given by the zero set of a function $F\colon \calM'\to \mathfrak{g}$ where $\calM'\subset \calM$ is such that $\pi(\calM')=\calM/G$ and $\mathfrak{g}$ denotes the Lie algebra of $G$. In the physics literature the condition $F=0$ is called \emph{gauge fixing} and $F$ is called \emph{gauge-fixing function}. For $x\in\calM'$, let $A(x)$ be the differential $\dd F(x)$ restricted to the vertical tangent space at $x$. This can be identified with the Lie algebra $\mathfrak{g}$ and thus we can regard $A(x)$ as an endomorphism of $\mathfrak{g}$ and denote by $J$ its determinant. In the physics literature $J$ is called \emph{Faddeev--Popov determinant}. Then we get
\[
I=\int_{\calM'}f\delta_0(F)J\mu,
\]
where $\delta_0$ denotes the delta function at $0\in\mathfrak{g}$. Let us now rewrite $I$ such that it is suitable for the stationary phase expansion formula. We need to write $\delta_0(F)$ and $J$ in exponential form. Therefore, we use the Fourier transform of the delta function and the Grassmann (odd) integration as in \ref{subsec:Grassmann_integration}. Denote by $\langle\enspace,\enspace\rangle$ the canonical pairing between $\mathfrak{g}^*$ and $\mathfrak{g}$. Then we have 
\[
I=C\int f(x)\mu(x)\exp\left(\frac{\I}{\hbar}\langle \lambda,F(x)\rangle\right)\omega(\lambda)\exp\left(\frac{\I}{\hbar}\langle \bar c,A(x)c\rangle\right),
\]
where the integral is over $x\in\calM'$, $\lambda\in\mathfrak{g}^*$, $c\in\Pi\mathfrak{g}$, $\bar c\in\Pi\mathfrak{g}^*$ and where $C$ is a constant depending on $\hbar$ and on the choice of the top form $\omega\in \bigwedge^{\mathrm{top}}\mathfrak{g}^*$. Note that we have added the prefactors $\frac{\I}{\hbar}$ for later purposes.

\begin{rem}
In the physics literature $c$ is called \emph{ghost} and $\bar c$ is called \emph{antighost}. $\lambda$ is usually called \emph{Lagrange multiplier}. 
\end{rem}

Note that if we choose a basis $\{c^i\}$ of $\mathfrak{g}^*$, then $\bigwedge\mathfrak{g}^*$ can be identified with the algebra generated by the $c^i$s with the relations
\[
c^ic^j=-c^jc^i,\qquad \forall i,j.
\]
The generators $c^i\in \mathfrak{g}^*$ are often called \emph{ghost variables}. Similarly, one can introduce \emph{antighost variables} $\bar c_i\in \mathfrak{g}$.

\begin{rem}
The determinant of $A$ can also be obtained in terms of an ordinary Gaussian integral over $\mathfrak{g}\times \mathfrak{g}^*$. However, then one needs to put $A^{-1}$ into the exponential. In the infinite-dimensional case we may use the techniques of Feynman diagrams only when dealing with local functions. If the action of the group is local, $A$ will be a differential operator, so $\langle \bar c, Ac\rangle$ will be a local function, in contrary to the quadratic function defined in terms of the Green function $A^{-1}$.
\end{rem}

The expectation value of an invariant function $g$ with respect to a given invariant action $S$ can be written as 
\begin{equation}
\label{eq:FP_expectation}
\langle g\rangle=\frac{\displaystyle{\int}_\calM \exp(\I S/\hbar)g\mu}{\displaystyle{\int}_\calM \exp(\I S/\hbar)\mu}=\frac{\displaystyle{\int}_{\widetilde{\calM}}\exp(\I S_F/\hbar)g\widetilde{\mu}}{\displaystyle{\int}_{\widetilde{\calM}}\exp(\I S/\hbar)\widetilde{\mu}}=:\langle g\rangle_F,
\end{equation}
where 
\begin{align*}
    \widetilde{\calM}&:=\calM'\times \Pi\mathfrak{g}\times\mathfrak{g}^*\times \Pi\mathfrak{g}^*,\\
    S_F&:= S+\langle\lambda,F\rangle-\langle \bar c,A,c\rangle,
\end{align*}
and $\widetilde{\mu}:=\mu\omega$. The function $S_F$ is usually called the \emph{gauge-fixed action}.

\subsection{The BRST formalism}
Note that, by construction, the right hand side of \eqref{eq:FP_expectation} is independent of $F$. Moreover, the assumptions we had are quite restrictive, namely, we want to have a measure-preserving action on $\calM$ by a compact Lie group $G$ and we have to assume that the principal bundle $\calM\to \calM/G$ is trivial. On the other hand, to define 
\begin{equation}
\label{eq:expectation_BRST}
\langle g\rangle_F=\frac{\displaystyle{\int}_{\widetilde{\calM}}\exp(\I S_F/\hbar)g\widetilde{\mu}}{\displaystyle{\int}_{\widetilde{\calM}}\exp(\I S/\hbar)\widetilde{\mu}}
\end{equation}
we only need the infinitesimal action 
\begin{align*}
    X\colon \mathfrak{g}&\to \mathfrak{X}(\calM)\\
    \gamma&\mapsto X_\gamma
\end{align*}
of a Lie algebra $\mathfrak{g}$ on $\calM$. In this case $A$ is simply given by
\[
A_\gamma=L_{X_\gamma}F,\quad \gamma\in\mathfrak{g}.
\]

Moreover, we want to relax the condition that $F^{-1}(0)$ defines a section an rather require that $A(x)$ should be nondegenerate for all $x\in\calM'$. Of course we also require that the integrals are well-defined and that the denominator of \eqref{eq:FP_expectation} does not vanish.

Let $\calF\subset C^\infty(\calM',\mathfrak{g})$ be the space of allowed gauge-fixing functions. As in this case $\langle g\rangle_F$ is well-defined, it makes sense to consider it at more general instances. Observe also that if the Lie group is not compact, an invariant function is not a test function. It is then understood that it is replaced by a test function that in a neighborhood of the zeros of $F$ coincides with the given function. To avoid cumbersome notation, we will never explicitly change the function, so we will write e.g. $\int_\R\delta_0(x)\dd x=1$, without mentioning that the constant function 1 is replaced by a test function which is one in a neighborhood of zero. However, since the gauge-fixing function $F$ is arbitrarily chosen, we want the conditions on $\langle g\rangle_F$ to be locally independent of $F$.

\begin{defn}[Gauge-fixing independence]
A locally constant function on $\calF$ is called \emph{gauge-fixing independent}.
\end{defn}

\begin{thm}
\label{thm:gauge-fixing_independence}
Let $X\colon \mathfrak{g}\to\calM$ be an infinitesimal action of the Lie algebra $\mathfrak{g}$ on the manifold $\calM$. If $S$ and $g$ are invariant functions and 
\begin{equation}
\label{eq:divergence-free_condition}
\Div X_\gamma+\tr \mathrm{ad}_\gamma=0,\qquad \forall \gamma\in \mathfrak{g},
\end{equation}
then $\langle g\rangle_F$ is gauge-fixing independent.
\end{thm}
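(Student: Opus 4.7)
The plan is to introduce the BRST odd vector field $Q$ on the extended space $\widetilde{\calM}=\calM'\times\Pi\mathfrak{g}\times\mathfrak{g}^*\times\Pi\mathfrak{g}^*$ and show that (i) the gauge-fixing part of $S_F$ is $Q$-exact, (ii) $Q$ preserves $S$ and $g$, and (iii) $Q$ is $\widetilde{\mu}$-divergence-free precisely under the hypothesis $\Div X_\gamma+\tr\ad_\gamma=0$. Combined with the divergence theorem for graded manifolds, an infinitesimal variation of $F$ will produce the integral of a $Q$-exact quantity, which vanishes, giving local independence of $F$.

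First I would define $Q$ on the generators by
\[
Qx:=X_c(x),\qquad Qc:=-\tfrac{1}{2}[c,c],\qquad Q\bar c:=\lambda,\qquad Q\lambda:=0,
\]
where $c\in\Pi\mathfrak{g}$, $\bar c\in\Pi\mathfrak{g}^*$ and $\lambda\in\mathfrak{g}^*$, and extend $Q$ as an odd derivation of degree $+1$. The first nontrivial step is to verify $Q^{2}=0$: on $x$ this uses the Lie algebra morphism property of $X$ (i.e.\ $[X_\alpha,X_\beta]=X_{[\alpha,\beta]}$) together with the odd Leibniz rule, while on $c$ it is exactly the Jacobi identity on $\mathfrak{g}$; on $\bar c$ and $\lambda$ the computation is immediate.

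Next I would check that
\[
S_F = S + Q\Psi_F,\qquad \Psi_F := \langle\bar c,F\rangle,
\]
since $Q\langle\bar c,F\rangle=\langle\lambda,F\rangle-\langle\bar c,A(x)c\rangle$ by the Leibniz rule and the identification $A_\gamma=L_{X_\gamma}F$. Because $S$ and $g$ are $G$-invariant, $QS=X_c(S)=0$ and $Qg=0$, and moreover $Q(S_F)=Q^{2}\Psi_F=0$. Consequently, for a one-parameter deformation $F\mapsto F+\epsilon\,\delta F$,
\[
\frac{\dd}{\dd\epsilon}\bigg|_{\epsilon=0}\!\Big(\exp(\I S_F/\hbar)\,g\Big)=\frac{\I}{\hbar}\exp(\I S_F/\hbar)\,g\,Q\langle\bar c,\delta F\rangle=\pm\frac{\I}{\hbar}Q\!\left(\exp(\I S_F/\hbar)\,g\,\langle\bar c,\delta F\rangle\right),
\]
so the variation of both numerator and denominator of $\langle g\rangle_F$ is the integral of a $Q$-exact function against $\widetilde{\mu}$.

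The main obstacle — and the place where the hypothesis gets used — is to verify that $Q$ is divergence-free with respect to $\widetilde{\mu}=\mu\,\omega$. Using the formula for the divergence of an odd vector field from Section~\ref{subsec:Grassmann_integration} applied component by component, the only contributions come from $Qx=X_c(x)$ and $Qc=-\tfrac{1}{2}[c,c]$: the first yields $\Div_\mu X_c$ (the ordinary divergence of the vertical vector field on $\calM'$, with ghost parameter $c$), and the second, after differentiating the quadratic expression $-\tfrac{1}{2}c^i c^j f^k{}_{ij}\partial_{c^k}$ in the odd coordinates and accounting for the sign rule for right-derivations, produces exactly $\tr\ad_c$. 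The hypothesis \eqref{eq:divergence-free_condition} therefore gives $\Div_{\widetilde{\mu}}Q=0$, so by the divergence identity $\int_{\widetilde{\calM}} Q(h)\,\widetilde{\mu}=\int_{\widetilde{\calM}}\Div_{\widetilde{\mu}}Q\cdot h\,\widetilde{\mu}=0$ for any admissible test function $h$. Applying this to the numerator and denominator separately shows that $\dfrac{\dd}{\dd\epsilon}\big|_{\epsilon=0}\langle g\rangle_{F+\epsilon\,\delta F}=0$ for every $\delta F$, so $F\mapsto\langle g\rangle_F$ is locally constant on $\calF$, which is what we needed to prove.
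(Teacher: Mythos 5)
Your proposal is correct and follows essentially the same route as the paper: the paper also introduces the BRST operator $\delta$ (your $Q$), writes $S_F=S+\delta\Psi_F$ with $\Psi_F=\bar c_iF^i$, observes that varying $F$ inserts a $\delta$-exact term, and concludes via $\Div\delta=\Div X_c+\tr\ad_c=0$ (Exercise \ref{exe:divergence_BRST_operator} together with Lemma \ref{lem:gauge-fixing_independence}). Your explicit check of $Q^{2}=0$ and the separate treatment of numerator and denominator are just spelled-out versions of steps the paper delegates to Lemma \ref{lem:BRST_differential} and to the application of Lemma \ref{lem:gauge-fixing_independence}.
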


The condition in Theorem \ref{thm:gauge-fixing_independence} basically tells us that the divergence of $X$ is a constant function. In particular, for the case when the Lie algebra is \emph{unimodular} (i.e. $\tr \mathrm{ad}_\gamma=0$ for all $\gamma$), the condition says that the infinitesimal action must be measure-preserving. The discussions before are covered by Theorem \ref{thm:gauge-fixing_independence} since the Lie algebra of a compact Lie group is indeed unimodular.\\

\subsubsection{The BRST operator and the proof of Theorem \ref{thm:gauge-fixing_independence}}

The infinitesimal action of $\mathfrak{g}$ on $\calM$ gives $C^\infty(\calM)$ the structure of a $\mathfrak{g}$-module. Therefore, we can consider the Lie algebra complex $\bigwedge\mathfrak{g}^*\otimes C^\infty(\calM)$. Note that the Lie algebra differential $\delta$ is in particular a derivation on $\bigwedge\mathfrak{g}^*\otimes C^\infty(\calM)$ and moreover defines a vector field on $\calM\times\Pi\mathfrak{g}$. Recall that a vector field on the superspace $\Pi\mathfrak{g}$ is an element of $\bigwedge\mathfrak{g}\otimes \mathfrak{g}$. If $\mathfrak{g}$ is a Lie algebra, the commutator can be regarded as an element of $\bigwedge^2\mathfrak{g}^*\otimes \mathfrak{g}$ and as such it is a vector field on $\Pi\mathfrak{g}$. Vector fields on $\calM\times \Pi\mathfrak{g}$ can then be identified with elements of $\bigwedge\mathfrak{g}^*\otimes \mathfrak{g}\otimes C^\infty(\calM)\oplus \bigwedge\mathfrak{g}^*\otimes\mathfrak{X}(\calM)$. The commutator tensor the constant function 1 is an element of $\bigwedge^2\mathfrak{g}^*\otimes \mathfrak{g}\otimes C^\infty(\calM)$ while the infinitesimal action of $\mathfrak{g}$ on $\calM$ is an element of $\mathfrak{g}^*\otimes \mathfrak{X}(\calM)$. As such they define vector fields on $\calM\times \Pi\mathfrak{g}$ and $\delta$ is their sum. If we choose a basis $\{e_i\}$ of $\mathfrak{g}$ and denote by $\{f^k_{ij}\}$ the corresponding structure constants, the algebra of functions on $\Pi\mathfrak{g}$ can be identified with the graded commutative algebra with odd generators $c^i$ (ghost variables) and 
\[
\delta c^k=-\frac{1}{2}f^k_{ij}c^ic^j.
\]
On functions $f\in C^\infty(\calM)$, we get that $\delta$ acts by 
\[
\delta f=c^i L_{X_{e_i}}f.
\]
The vector field $\delta$ can be extended to $\widetilde{\calM}$ by adding it to the vector field on $\mathfrak{g}^*\otimes \Pi\mathfrak{g}^*$. Moreover, using the dual basis $\{e^i\}$ of $\mathfrak{g}^*$, the algebra of functions can be identified with the graded commutative algebra with odd generators $\bar c_i$ and even generators $\lambda_i$. There we define
\begin{equation}
\label{eq:FP_cohomological_VF}
\delta \bar c_i=\lambda_i,\qquad \delta\lambda_i=0,\quad \forall i.
\end{equation}
More abstractly, observe that a polynomial vector field on $\mathfrak{g}^*\times \Pi\mathfrak{g}^*$ is an element of $\bigwedge\mathfrak{g}\otimes \Sym(\mathfrak{g})\otimes(\mathfrak{g}^*\oplus\mathfrak{g}^*)$. The identity operator can then be regarded as an element of $\mathfrak{g}\otimes \mathfrak{g}^*$. Using the inclusion map
\begin{align*}
    i\colon \mathfrak{g}\otimes\mathfrak{g}^*&\to \bigwedge\mathfrak{g}\otimes \Sym(\mathfrak{g})\otimes (\mathfrak{g}^*\oplus\mathfrak{g}^*),\\
    a\otimes b&\mapsto 1\otimes a\otimes (b\oplus 0),
\end{align*}
we can regard it as a vector field on $\mathfrak{g}^*\times \Pi\mathfrak{g}^*$. Moreover, note that this vector field corresponds to the de Rham differential on $\Pi\mathfrak{g}^*$.

\begin{lem}
\label{lem:BRST_differential}
The map $\delta$ is a differential (i.e. an odd derivation that squares to zero) on $\bigwedge\mathfrak{g}^*\otimes C^\infty(\calM)\otimes \bigwedge\mathfrak{g}\otimes \Sym(\mathfrak{g})$. It has degree one with respect to the grading 
\[
\deg(\alpha\otimes f\otimes \beta\otimes \gamma):=\deg(\alpha)-\deg(\beta).
\]
\end{lem}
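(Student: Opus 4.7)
The plan is to verify the three separate assertions (odd derivation, degree $+1$, squares to zero) one at a time, reducing each to a check on algebra generators. First I would fix conventions: the generators of the relevant graded algebra are $c^i\in\mathfrak{g}^*$ in degree $+1$, the smooth functions $f\in C^\infty(\calM)$ in degree $0$, the antighosts $\bar c_i\in\mathfrak{g}$ in degree $-1$, and the Lagrange multipliers $\lambda_i\in\Sym(\mathfrak{g})$ in degree $0$. With these conventions the four defining formulas $\delta c^k=-\tfrac12 f^k_{ij}c^ic^j$, $\delta f=c^iL_{X_{e_i}}f$, $\delta\bar c_i=\lambda_i$, $\delta\lambda_i=0$ each raise total degree by one, so the degree claim reduces to the observation that $\delta$ is defined on generators to be homogeneous of degree $+1$ and then extended by the graded Leibniz rule, which preserves degree increments.

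Second, I would argue that such an extension is well-defined and unique: the underlying algebra is the graded-commutative algebra freely generated by the $c^i$, $\bar c_i$, $\lambda_i$ together with the commutative algebra $C^\infty(\calM)$, so specifying an odd derivation on generators and invoking universality determines $\delta$ uniquely (the only compatibility to check is that $\delta$ respects $C^\infty(\calM)$-linearity in its action on functions, which follows from the fact that $c^iL_{X_{e_i}}$ is a derivation of $C^\infty(\calM)$ for every fixed coefficient in $\bigwedge\mathfrak{g}^*$).

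Third, since $\delta^2$ is automatically an even derivation, it is enough to verify $\delta^2=0$ on the generators. The computations on $\bar c_i$ and $\lambda_i$ are immediate: $\delta^2\bar c_i=\delta\lambda_i=0$ and $\delta^2\lambda_i=0$ trivially. The two nontrivial cases are
\begin{align*}
\delta^2 c^k&=-\tfrac12 f^k_{ij}\bigl(\delta c^i\cdot c^j-c^i\cdot\delta c^j\bigr)=\tfrac14 f^k_{ij}f^i_{ab}c^ac^bc^j+\tfrac14 f^k_{ij}f^j_{ab}c^ic^ac^b,\\
\delta^2 f&=\delta\bigl(c^iL_{X_{e_i}}f\bigr)=-\tfrac12 f^i_{jk}c^jc^k L_{X_{e_i}}f-c^ic^jL_{X_{e_i}}L_{X_{e_j}}f,
\end{align*}
and both vanish by the Jacobi identity for $\mathfrak{g}$ together with the fact that $X\colon\mathfrak{g}\to\mathfrak{X}(\calM)$ is a Lie algebra homomorphism, so $[X_{e_i},X_{e_j}]=f^k_{ij}X_{e_k}$; graded anti-symmetrization of $c^ic^j$ turns the double Lie derivative into a commutator that cancels exactly against the structure-constant term.

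The only real subtlety, and what I would flag as the main bookkeeping obstacle, is keeping the signs straight when extending $\delta$ by the graded Leibniz rule across factors that live in different tensor slots of $\bigwedge\mathfrak{g}^*\otimes C^\infty(\calM)\otimes\bigwedge\mathfrak{g}\otimes\Sym(\mathfrak{g})$, particularly when $\delta$ acts on a monomial containing both $c^i$'s and $\bar c_j$'s; once one fixes the Koszul sign rule for the total grading defined in the statement, the two nontrivial identities above are exactly the two faces of the Chevalley--Eilenberg complex (the quadratic relation $\delta^2=0$ on $\bigwedge\mathfrak{g}^*$ and the module compatibility $\delta^2=0$ on $\bigwedge\mathfrak{g}^*\otimes C^\infty(\calM)$), so no further geometric input is required.
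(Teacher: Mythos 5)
The paper states this lemma without proof, and your argument is exactly the implicit one: $\delta$ is the Chevalley--Eilenberg differential of $\mathfrak{g}$ with coefficients in $C^\infty(\calM)$, tensored with the contractible Koszul piece $\delta\bar c_i=\lambda_i$, $\delta\lambda_i=0$, so checking the degree, the derivation property, and $\delta^2=0$ on generators is the right (and standard) route; your reduction to the Jacobi identity plus the homomorphism property of $X$ is the correct geometric input. Two transcription-level sign slips should be fixed, though. First, the graded Leibniz rule gives $\delta^2 c^k=\tfrac14 f^k_{ij}f^i_{ab}c^ac^bc^j-\tfrac14 f^k_{ij}f^j_{ab}c^ic^ac^b$; with the plus sign you wrote, the two terms cancel identically and the Jacobi identity would never enter, whereas in fact (after relabeling) the two terms are equal and their sum vanishes precisely by Jacobi. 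Second, in $\delta^2 f$ the second-order term is $-c^ic^jL_{X_{e_j}}L_{X_{e_i}}f$ (the new Lie derivative acts outermost), and only with this ordering does antisymmetrization of $c^ic^j$ produce $+\tfrac12 c^ic^jL_{[X_{e_i},X_{e_j}]}f$, which cancels the structure-constant term when $[X_{e_i},X_{e_j}]=f^k_{ij}X_{e_k}$; with the ordering as you wrote it the two contributions add instead of cancelling. With those signs corrected, the proof is complete and matches the intended argument.
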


In fact, we can rewrite Lemma \ref{lem:BRST_differential} by saying that $\delta$ is a cohomological vector field on $\widetilde{\calM}$. In the physics literature, $\delta$ is called the \emph{BRST operator}. A function on $\calM$ can also be considered as a function on $\widetilde{\calM}$. By definition of $\delta$, a function $f$ is invariant if and only $\delta f=0$. 
\begin{exe}
\label{exe:divergence_BRST_operator}
Show that 
\[
\Div \delta=\Div X_c+\tr \mathrm{ad}_c.
\]
\end{exe}
Note that gauge-fixing function $F\colon \calM\to \mathfrak{g}$ can be regarded as an element of $C^\infty(\calM)]\otimes \mathfrak{g}$. Using the inclusion
\[
C^\infty(\calM)\otimes\mathfrak{g}\hookrightarrow C^\infty(\calM)\otimes \bigwedge \mathfrak{g}\hookrightarrow \bigwedge\mathfrak{g}^*\otimes C^\infty(\calM)\otimes\bigwedge\mathfrak{g}\otimes \Sym(\mathfrak{g}),
\]
we can associate to $F$ a function $\Psi_F$ on $\widetilde{\calM}$. With the same notations as above, we have 
\[
\Psi_F=\bar c_i F^i.
\]
The odd function $\Psi_F$ is called \emph{gauge-fixing fermion}. 
\begin{exe}
Show that 
\[
S_F=S+\delta \Psi_F.
\]
\end{exe}

Assume now that the action $S$ is invariant, i.e. $\delta S=0$.

\begin{lem}
\label{lem:gauge-fixing_independence}
Let $g$ be a function on $\widetilde{\calM}$. If $\delta g=0$ and $\delta$ is divergence-free, then 
\[
I_F:=\int_{\widetilde{\calM}}\exp(\I S_F/\hbar)g\widetilde{\mu},
\]
is gauge-fixing independent.
\end{lem}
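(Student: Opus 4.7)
The plan is to show that $I_F$ is locally constant on $\calF$ by taking an arbitrary infinitesimal variation of the gauge-fixing function and proving that the derivative of $I_F$ in that direction vanishes. Concretely, I would pick a one-parameter family $F_\epsilon = F + \epsilon F'$ with $F' \in C^\infty(\calM',\mathfrak{g})$, and study
\[
\frac{\dd}{\dd\epsilon}I_{F_\epsilon}\bigg|_{\epsilon = 0}.
\]
Since $\Psi_{F_\epsilon} = \bar c_i(F^i + \epsilon (F')^i) = \Psi_F + \epsilon \Psi_{F'}$ depends linearly on $F$, we have $S_{F_\epsilon} = S_F + \epsilon\,\delta\Psi_{F'}$, so differentiating under the integral sign yields
\[
\frac{\dd}{\dd\epsilon}I_{F_\epsilon}\bigg|_{\epsilon = 0} = \frac{\I}{\hbar}\int_{\widetilde{\calM}}\exp(\I S_F/\hbar)\,g\,\delta\Psi_{F'}\,\widetilde{\mu}.
\]

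The next step is to recognize the integrand as $\delta$-exact. Using that $\delta$ is an odd derivation with $\delta^2=0$, together with the two hypotheses $\delta S = 0$ and $\delta g = 0$, I would compute
\[
\delta\bigl(\Psi_{F'}\,g\,\exp(\I S_F/\hbar)\bigr) = (\delta\Psi_{F'})\,g\,\exp(\I S_F/\hbar),
\]
because $\delta S_F = \delta S + \delta^2\Psi_F = 0$ kills the contribution from the exponential and $\delta g = 0$ kills the middle term (signs working out since $\Psi_{F'}$ is odd and $g$, $\exp(\I S_F/\hbar)$ are even). Hence
\[
\frac{\dd}{\dd\epsilon}I_{F_\epsilon}\bigg|_{\epsilon = 0} = \frac{\I}{\hbar}\int_{\widetilde{\calM}}\delta\bigl(\Psi_{F'}\,g\,\exp(\I S_F/\hbar)\bigr)\,\widetilde{\mu}.
\]

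The final step is to invoke divergence-freeness of $\delta$. Viewing $\delta$ as a (cohomological) vector field on $\widetilde{\calM}$, the hypothesis $\Div\delta = 0$ (which, by Exercise \ref{exe:divergence_BRST_operator}, is exactly the condition $\Div X_\gamma + \tr\mathrm{ad}_\gamma = 0$ needed in Theorem \ref{thm:gauge-fixing_independence}) gives $\int_{\widetilde{\calM}}\delta(h)\,\widetilde{\mu} = 0$ for every integrable function $h$ on $\widetilde{\calM}$. Applying this to $h = \Psi_{F'}\,g\,\exp(\I S_F/\hbar)$ makes the derivative vanish, so $I_F$ is locally constant, i.e.\ gauge-fixing independent.

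The main obstacle I anticipate is not conceptual but bookkeeping-related: tracking the Koszul signs in the derivation identity above for a mixed even/odd integrand, and justifying that one may freely differentiate under the integral sign and discard boundary contributions in the divergence argument (this amounts to choosing $g$, $F$, $F'$ so that everything decays appropriately, or replacing invariant functions by genuine test functions as the discussion preceding the lemma anticipates). Once these technicalities are handled, the proof is essentially a one-line application of Stokes-type vanishing for divergence-free vector fields.
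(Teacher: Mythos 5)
Your argument is correct and is essentially the paper's own proof: differentiate $I_{F_t}$ in the gauge-fixing parameter, use $S_{F}=S+\delta\Psi_{F}$ together with $\delta S=0$, $\delta^2=0$ and $\delta g=0$ to write the integrand as $\delta$ of $\Psi_{F'}\,g\,\exp(\I S_F/\hbar)$, and conclude by the divergence-freeness of $\delta$. The only cosmetic difference is that you take a linear path $F+\epsilon F'$ while the paper works with an arbitrary family $(F_t)$, which changes nothing of substance.
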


\begin{proof}
Let $(F_t)$ be a family of gauge-fixing functions. Then 
\begin{multline*}
\frac{\dd}{\dd t}I_{F_t}=\frac{\I}{\hbar}\int_{\widetilde{\calM}}\delta\left(\frac{\dd}{\dd t}\Psi_{F_t}\right)\exp(\I S_F/\hbar)g\widetilde{\mu}\\
=\frac{\I}{\hbar}\int_{\widetilde{\calM}}\delta\left(\frac{\dd}{\dd t}\Psi_{F_t}\exp(\I S_F/\hbar)g\right)\widetilde{\mu}=\frac{\I}{\hbar}\int_{\widetilde{\calM}}\Div \delta \exp(\I S_F/\hbar)g\widetilde{\mu}=0.
\end{multline*}
\end{proof}

\begin{proof}[Proof of Theorem \ref{thm:gauge-fixing_independence}]
Using Lemma \ref{lem:gauge-fixing_independence} together with Exercise \ref{exe:divergence_BRST_operator} we immediately get the proof.
\end{proof}

An particular case of a $\delta$-closed function is a $\delta$-exact function. These functions are irrelevant as for computing expectation values. In fact, 
\[
\int_{\widetilde{\calM}}\exp(\I S_F/\hbar)\delta h\widetilde{\mu}=\int_{\widetilde{\calM}}\delta\left(\exp(\I S_F/\hbar)h\right)\widetilde{\mu}=0
\]
if $S$ is invariant and $\delta$ is divergence-free. We can now extend Theorem \ref{thm:gauge-fixing_independence} to the following theorem.
\begin{thm}
\label{thm:BRST_divergence-free}
Let $X\colon \mathfrak{g}\to \calM$ be an infinitesimal action of the Lie algebra $\mathfrak{g}$ on the manifold $\calM$. If the action $S$ is invariant and the BRST operator $\delta$ is divergence-free (i.e. \eqref{eq:divergence-free_condition} holds), then 
\begin{enumerate}
    \item $\langle g\rangle_F$ is gauge-fixing independent for all $g\in \ker \delta$,
    \item $\langle g\rangle_F$ for all $g\in\mathrm{im}\,\delta$.
\end{enumerate}
Hence, the expectation value defines a linear function on the $\delta$-cohomology.
\end{thm}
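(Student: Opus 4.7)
The strategy is to reduce everything to Lemma \ref{lem:gauge-fixing_independence} (and the basic vanishing of integrals of divergences that underlies it), after observing the crucial identity $\delta S_F = 0$. Once this is in place, part (1) follows by applying the lemma to numerator and denominator separately, and part (2) follows from a one-line Leibniz computation.

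\emph{Step 1: the key identity $\delta S_F=0$.} Since $S$ is invariant we have $\delta S=0$, and $S_F=S+\delta\Psi_F$ by the exercise preceding the theorem. Because $\delta^2=0$ (Lemma \ref{lem:BRST_differential}), this gives $\delta S_F = \delta S + \delta^2 \Psi_F = 0$. Consequently $\delta\bigl(\exp(\I S_F/\hbar)\bigr) = (\I/\hbar)\exp(\I S_F/\hbar)\,\delta S_F = 0$.

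\emph{Step 2: proof of (1).} Let $g\in\ker\delta$. The numerator of $\langle g\rangle_F$ is $I_F(g):=\int_{\widetilde\calM} \exp(\I S_F/\hbar)\,g\,\widetilde\mu$, and the hypotheses of Lemma \ref{lem:gauge-fixing_independence} hold ($S$ is invariant, $\delta g=0$, and $\delta$ is divergence-free by assumption \eqref{eq:divergence-free_condition} together with Exercise \ref{exe:divergence_BRST_operator}). Therefore $I_F(g)$ is gauge-fixing independent. The denominator is $I_F(1)$, and since $\delta 1=0$ the same lemma applies. The ratio $\langle g\rangle_F = I_F(g)/I_F(1)$ is then gauge-fixing independent wherever the denominator is nonzero.

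\emph{Step 3: proof of (2).} Let $g=\delta h$. By Step 1 and the graded Leibniz rule for the odd derivation $\delta$,
\begin{equation*}
\delta\bigl(\exp(\I S_F/\hbar)\,h\bigr) \;=\; \pm\exp(\I S_F/\hbar)\,\delta h \;=\; \pm\exp(\I S_F/\hbar)\,g.
\end{equation*}
Integrating against $\widetilde\mu$ and invoking the definition of $\Div\delta$ together with its vanishing, we get $I_F(g)=\pm\int_{\widetilde\calM}\Div\delta\,\bigl(\exp(\I S_F/\hbar)h\bigr)\widetilde\mu = 0$. Hence $\langle g\rangle_F=0$. Combining (1) and (2), the assignment $[g]\mapsto\langle g\rangle_F$ is a well-defined linear functional on $H^\bullet(\delta)$.

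\emph{Main obstacle.} Conceptually the theorem is a direct consequence of $\delta S_F=0$ and the Stokes-type identity $\int\delta(\cdot)\widetilde\mu=0$ for divergence-free $\delta$; the only nontrivial bookkeeping is keeping track of the graded signs from the Leibniz rule on the odd variables $c,\bar c$, but these signs do not affect vanishing statements. The subtler foundational point, already resolved in Lemma \ref{lem:gauge-fixing_independence}, is the passage from Exercise \ref{exe:divergence_BRST_operator} (which expresses $\Div\delta$ in terms of $\Div X_c$ and $\tr\mathrm{ad}_c$) to the hypothesis \eqref{eq:divergence-free_condition}; here it is simply quoted.
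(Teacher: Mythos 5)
Your proposal is correct and follows essentially the same route as the paper: part (1) is exactly Lemma \ref{lem:gauge-fixing_independence} (with Exercise \ref{exe:divergence_BRST_operator} supplying the divergence-freeness) applied to numerator and denominator, and part (2) is the paper's own displayed computation $\int_{\widetilde{\calM}}\exp(\I S_F/\hbar)\,\delta h\,\widetilde{\mu}=\int_{\widetilde{\calM}}\delta\bigl(\exp(\I S_F/\hbar)h\bigr)\widetilde{\mu}=0$, resting on the identity $\delta S_F=\delta S+\delta^2\Psi_F=0$ that you spell out in Step 1. Nothing is missing.
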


\begin{rem}[Ward identities]
Note that point (2) produces identities relating expectation values of different quantities. Such identities as called \emph{Ward identities} and usually have nontrivial content. 
\end{rem}

\begin{ex}[Translations]
Let $\calM=\mathfrak{g}=\R$. Moreover, let $\mathfrak{g}$ act by infinitesimal translations. Denoting by $x$ the coordinate on $\calM$, we have $\delta x=c$ and $\delta c=0$. Assume $S$ and $g$ are constant. Then 
\[
\langle g\rangle_F=\frac{\exp(\I S/\hbar)\displaystyle{\int}\delta_0(F(x))F'(x)}{\exp(\I S/\hbar)\displaystyle{\int}\delta_0(F(x))F'(x)}=g
\]
if the denominator does not vanish. Clearly, $\langle g\rangle_F$ is gauge-fixing independent. Similarly, one can treat rotation-invariant functions on $\calM=S^1$. A section here is just a point. We take $\calM'$ to be a neighborhood of this point. After identifying $\calM'$ with $\R$, we can proceed as above with $F$ being any function with a single nondegenerate zero corresponding to the image of the section.
\end{ex}

\begin{ex}[Plane rotations]
Let $\calM=\R^2\setminus\{0\}$ and $\mathfrak{g}=\mathfrak{so}(2)$ acting by infinitesimal rotations. Denoting by $x$ and $y$ the coordinates on $\R^2$, we have 
\[
\delta x=yc,\quad \delta y=-xc,\quad \delta c=0.
\]
Let $\calM'=\{x>0\}$. A possible choice for $F$ is the function $F(x,y)=y$. Then 
\[
S_F(x,y,c,\lambda,\bar c)=S(x,y)+\lambda y+\bar cxc,
\]
where $S$ is the given rotation-invariant action. Then we get 
\[
\langle g\rangle_F=\frac{\displaystyle{\int}_0^{+\infty}\exp(\I S(x,0)/\hbar)g(x,0)x\dd x}{\displaystyle{\int}_0^{+\infty}\exp(\I S(x,0)/\hbar)x\dd x},
\]
which is equal to the expected expression
\[
\frac{\displaystyle{\int}_\calM\exp(\I S(x,y)/\hbar)g(x,y)\dd x\dd y}{\displaystyle{\int}_\calM \exp(\I S(x,y)/\hbar)\dd x\dd y}
\]
for a rotation-invariant function $g$.
\end{ex}

\subsection{Infinite dimensions}

In the infinite-dimensional setting, we want to consider \eqref{eq:expectation_BRST} whenever it makes sense as a perturbative expansion. In this case $\calM$ is an infinite-dimensional manifold, $\mathfrak{g}$ an infinite-dimensional Lie algebra acting freely on $\calM$ and $S$ some local action function on $\calM$. The space $\widetilde{\calM}$ and the BRST operator $\delta$ will be exactly defined as above and $\delta$ will still be a cohomological vector field on $\widetilde{\calM}$. A gauge-fixing function $F$ will be allowed whenever the corresponding $A$ is nondegenerate and the critical point of the action at a zero of $F$ is also nondegenerate. Then, for suitable functions $g$, we will be able to define $\langle g\rangle_F$ as a perturbative expansion using Feynman diagrams. Note that an observable in this setting is a $\delta$-cohomology class $g$ for which $\langle g\rangle_F$ is well-defined. In fact, Theorem \ref{thm:BRST_divergence-free} will hold whenever $\Div \delta=0$. Note that, however, the divergence of $\delta$ is not defined a priori and has to be understood in terms of expectation values. The usual way to proceed is to assume Theorem \ref{thm:BRST_divergence-free} to hold and to derive from it properties of the expectation values. Once they are properly defined in terms of Feynman diagrams, one can check whether the identities hold and call any deviation an \emph{anomaly}.

Note also that if our aims are of mathematical nature, Theorem \ref{thm:BRST_divergence-free} provides a source for a lot of interesting conjectures (which, fortunately, in most cases turn out to be true).\\

\subsubsection{The trivial Poisson sigma model on the plane}

Consider a 2-dimensional generalization of Section \ref{sec:Moyal_product_path_integral} which is also the basis for the study of the Poisson sigma model. Let $\xi$ and $\eta$ be 0-form and 1-form on the plane respectively. We assume that they vanish at infinity sufficiently fast (e.g., as Schwarz functions) so that we may define the action
\begin{equation}
\label{eq:trivial_PSM_action}
S:=\int_\Sigma \eta\,\dd\xi,
\end{equation}
with $\Sigma=\R^2$. Note that we have dropped the wedge product to avoid cumbersome notation and we will stick to this convention for the rest of the discussion.
The space of fields is given by $\calM=\Omega^0_0(\R^2)\oplus \Omega^1_0(\R^2)$. On $\calM$ we have an action of the abelian Lie algebra $\mathfrak{g}=\Omega^0_0(\R^2)$, given by the monomorphism $i\circ \dd$, where 
\[
\mathfrak{g}\xhookrightarrow{\dd}\Omega^1_0(\R^2)\xhookrightarrow{i}\calM.
\]
The action function $S$ is clearly invariant. The BRST differential $\delta$ is given on coordinates by 
\[
\delta\xi=0,\qquad \delta\eta=\dd c.
\]

In order to define a gauge-fixing function, we want to choose a Riemannian metric on $\R^2$. Denote by $*$ the \emph{Hodge star operator} induced by the Riemannian metric. Then, using the Hodge star, we can define the pairing
\begin{equation}
\label{eq:pairing}
\langle\alpha,\beta\rangle_*:=\int_{\R^2}(* \alpha) \beta,\quad \alpha\in \Omega^j(\R^2),\beta\in \Omega^k(\R^2),\quad j,k=0,1,2.
\end{equation}
Let $\dd^*:=*\dd*$ be the formal adjoint of the de Rham differential and choose the gauge-fixing function $F(\xi,\eta)=\dd^*\eta$. Note that different metrics will give different gauge-fixing functions. one can show that the corresponding operator $A$ is then given by the Laplacian on 0-forms which is invertible for the given conditions at infinity. There is also a unique critical point, i.e. a solution to $\dd\xi=\dd\eta=0$, satisfying the gauge-fixing condition $\dd^*\eta=0$, in particular, $\xi=\eta=0$. Hence, this gauge-fixing is indeed allowed.

Using integration, we can identify $\mathfrak{g}^*$ with $\Omega^2(\R^2)$. The corresponding gauge-fixing fermion is 
\[
\Psi_F=\int_{\R^2}\bar c\, \dd^*\eta.
\]
This gives the gauge-fixed action
\begin{equation}
\label{eq:gauge-fixed_action}
S_F=\int_{\R^2}\Big(\eta\,\dd\xi+\lambda\,\dd^*\eta-\bar c\,\dd^*\dd c\Big).
\end{equation}

Using the pairing as in \eqref{eq:pairing}, we can rewrite the gauge-fixed action as
\[
S_F=\frac{1}{2}\langle\phi,\boldsymbol{\dd}\phi\rangle_*-\langle *\bar c,\Delta c\rangle_*,
\]
where $\Delta:=\dd^*\dd+\dd\dd^*$ denotes the \emph{Hodge Laplacian}, 
\[
\phi:=\begin{pmatrix}\xi\\\eta\\\lambda\end{pmatrix}\in\Omega^0_0(\R^2)\oplus \Omega^1_0(\R^2)\otimes \Omega^2_0(\R^2)
\]
and 
\[
\boldsymbol{\dd}:=\begin{pmatrix}0& *\dd& 0\\ *\dd & 0 & \dd*\\0&\dd*&0\end{pmatrix}.
\]
The propagators between $\phi$ and $c$ or $\bar c$ are clearly zero. Hence, we get 
\[
\langle *\bar c(z)c(w)\rangle_0=-\I\hbar G_0(w,z),
\]
where $G_0$ denotes the Green function of the Hodge Laplacian acting on functions. Now in order to get the propagator between two fields $\phi$, we need to invert the symmetric operator $\boldsymbol{\dd}$. For this, we first compute its square 
\[
\boldsymbol{\dd}^2=\begin{pmatrix}\Delta&0&0\\0&\Delta&0\\0&0&\Delta\end{pmatrix}.
\]
Then, we can observe that $\boldsymbol{\dd}^{-1}=\boldsymbol{\dd}\boldsymbol{\dd}^{-2}$ and hence 
\[
\boldsymbol{\dd}^{-1}=\begin{pmatrix}0&*\dd\Delta^{-1}&0\\ *\dd\Delta^{-1}&0&\dd*\Delta^{-1}\\0&\dd*\Delta^{-1}&0\end{pmatrix}.
\]
In particular, we have 
\[
\langle\xi(z)\eta(w)\rangle_0=\I\hbar*_z\dd_z G_1(z,w)=\I\hbar *_w\dd_wG_0(w,z),
\]
where $G_1$ denotes the Green function of the Hodge Laplacian acting on 1-forms. Let us introduce the \emph{superfields}
\begin{align}
\label{eq:superfields}
\begin{split}
    \boldsymbol{\xi}&:=\xi-\dd^*\bar c,\\
    \boldsymbol{\eta}&:=c+\eta.
\end{split}
\end{align}
Then we can define a \emph{superpropagator} 
\begin{multline}
    \label{eq:superpropagator}
    \I\hbar\theta(z,w):=\left\langle\boldsymbol{\xi}(z)\boldsymbol{\eta}\right\rangle_0=\langle\xi(z)\eta(w)\rangle_0-\langle\dd^*\bar c(z)c(w)\rangle_0\\
    =\I\hbar(*_z\dd_z+*_w\dd_w)G_0(w,z)\in \Omega^1(C_2(\R^2)),
\end{multline}
where $C_2(\R^2)$ denotes the configuration space of two points in $\R^2$ as in Remark \ref{rem:conf}.

\begin{lem}
If we choose the Euclidean metric, then 
\[
\theta=\frac{\dd\phi_E}{2\pi},
\]
where $\dd$ denotes the differential on $C_2(\R^2)$ and $\phi_E(z,w)$ the Euclidean angle between a fixed reference line and the line passing through $z$ and $w$.
\end{lem}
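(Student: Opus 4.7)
The strategy is to compute the superpropagator explicitly using the explicit form of the Green function of the Hodge Laplacian on $(\R^2, \text{Euclidean metric})$ and then recognize the result as the differential of the Euclidean angle function on the configuration space $C_2(\R^2)$. The input is simply (i) the formula for $G_0$, (ii) the identification of $*\dd$ with the conjugate-harmonic operator, and (iii) the symmetry $G_0(z,w)=G_0(w,z)$.

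First I would recall that with the Euclidean metric on $\R^2$, the Green function for the Hodge Laplacian on $0$-forms (with vanishing conditions at infinity) is, up to the sign convention for $\Delta$, given by
\[
G_0(z,w) \;=\; \frac{1}{2\pi}\log|z-w|,
\]
so $\Delta_z G_0(z,w)=\delta_w(z)$ (the sign can be absorbed into the overall normalization, and one checks it matches the one implicit in \eqref{eq:superpropagator}). Identifying $\R^2$ with $\mathbb{C}$, the function $z\mapsto \log|z-w|$ is harmonic away from $z=w$, and its harmonic conjugate is $z\mapsto \arg(z-w)$. The Cauchy--Riemann equations, translated into the language of differential forms on $\R^2$, give exactly the identity $*\dd\log|\cdot-w|=\dd\arg(\cdot-w)$. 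Hence
\[
*_z\dd_z G_0(w,z) \;=\; \frac{1}{2\pi}\,\dd_z\arg(z-w).
\]

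Next I would treat the $w$-derivative by symmetry. Since $G_0(w,z)=G_0(z,w)$ and $\log|w-z|=\log|z-w|$, the same Cauchy--Riemann argument applied in the $w$-variable (with $z$ fixed) yields
\[
*_w\dd_w G_0(w,z) \;=\; \frac{1}{2\pi}\,\dd_w\arg(w-z) \;=\; \frac{1}{2\pi}\,\dd_w\arg(z-w),
\]
the last equality because $\arg(z-w)$ and $\arg(w-z)$ differ by the constant $\pi$ modulo $2\pi$, and hence have equal differentials. Adding the two contributions and using that the full de Rham differential on $C_2(\R^2)$ decomposes as $\dd=\dd_z+\dd_w$, I obtain
\[
(*_z\dd_z+*_w\dd_w)G_0(w,z)\;=\;\frac{1}{2\pi}\bigl(\dd_z+\dd_w\bigr)\arg(z-w)\;=\;\frac{1}{2\pi}\,\dd\arg(z-w).
\]
Finally, $\arg(z-w)$ is, by definition, the Euclidean angle $\phi_E(z,w)$ between the line through $z$ and $w$ and the chosen reference axis (up to an additive constant fixed by the choice of reference line, which drops out after taking $\dd$), proving the claimed equality $\theta=\dd\phi_E/(2\pi)$.

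The main obstacle is really bookkeeping rather than conceptual: one must pin down the sign of the Laplacian and of the Hodge star so that the prefactor $\frac{1}{2\pi}$ (rather than $-\frac{1}{2\pi}$) appears, and one must check that the $\arg$ function, although only defined modulo $2\pi$ and only after choosing a reference direction, produces a globally well-defined closed $1$-form $\dd\phi_E$ on the configuration space $C_2(\R^2)$. Both of these are purely normalization issues, and once they are fixed consistently with \eqref{eq:superpropagator}, the three-line computation above completes the proof.
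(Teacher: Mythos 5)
Your proof is correct and follows essentially the same route as the paper: both start from the explicit Euclidean Green function $G_0(z,w)=\frac{1}{2\pi}\log|z-w|$ and identify $(*_z\dd_z+*_w\dd_w)G_0$ with $\frac{1}{2\pi}\dd\arg(z-w)$, the only difference being that you package the key step as the Cauchy--Riemann identity $*\dd\log|z-w|=\dd\arg(z-w)$ (handling the $w$-derivative via the symmetry of $G_0$ and the constancy of $\arg(z-w)-\arg(w-z)$), whereas the paper performs the same computation explicitly in complex coordinates, arriving at $\theta=\frac{1}{4\pi\I}\,\dd\log\bigl(\frac{z-w}{\bar z-\bar w}\bigr)$ before reading off the angle. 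Your closing caveat about fixing the sign conventions for the Laplacian and the Hodge star is apt, since the paper itself is loose on exactly these normalizations.
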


\begin{proof}
The Green function for the Euclidean Laplacian in two dimensions is given by 
\[
G_0(z,w)=\frac{1}{2\pi}\log(\vert z-w\vert),
\]
where $\vert\enspace\vert$ denotes the Euclidean norm. In complex coordinates we have 
\[
G_0(z,w)=\frac{1}{4\pi}\log((z-w)(\bar z-\bar w)).
\]
Then 
\[
\dd_zG_0(z,w)=\frac{1}{4\pi}\left(\frac{\dd z}{z-w}+\frac{\dd \bar z}{\bar z-\bar w}\right).
\]
The Euclidean Hodge star operator in complex coordinates gives $*\dd z=\I\dd z$ and $*\dd\bar z=\I\dd\bar z$. Hence, we have 
\[
*_z\dd_zG_0(z,w)=\frac{1}{4\pi\I}\left(\frac{\dd w}{w-z}-\frac{\dd\bar w}{\bar w-\bar z}\right).
\]
Summing everything up, we get 
\[
\theta=\frac{1}{4\pi\I}\left(\frac{\dd z-\dd w}{z-w}-\frac{\dd\bar z-\dd\bar w}{\bar z-\bar w}\right)=\frac{1}{4\pi\I}\dd\log\left(\frac{z-w}{\bar z-\bar w}\right).
\]
One the other hand, we have $z-w=\vert w-z\vert \exp(\I\phi)$, which gives
\[
\phi=\frac{1}{2\I}\log\left(\frac{z-w}{\bar z-\bar w}\right),
\]
and hence we get the claim.
\end{proof}

\begin{rem}
The cohomology class of $\theta$ is in fact the generator of $H^1(C_2(\R^2),\mathbb{Z})$. It is not difficult to see that other choices of metric will still give the same cohomolgy class. One can easily note that $*_w\theta(z,w)$ is the Green function of the operator $P:=*\dd\Delta^{-1}*$ which is a \emph{parametrix} for the de Rham differential on forms that vanish at infinity, in particular, 
\[
\dd P+P\dd=\id.
\]
The convolution relating $P$ to $\theta$ can be written as
\[
P\alpha=-\pi_2(\theta\pi_1^*\alpha),\quad\alpha\in\Omega^j_0(\R^2),\quad j=0,1,2,
\]
with $\pi_1$ and $\pi_2$ being the projections to $\R^2$. Then 
\[
\dd P\alpha-P\dd\alpha=-(\pi_2)_*(\dd\theta\pi_1^*\alpha)+\pi_*^\de(\theta)\alpha,
\]
where $\pi_*^\de(\theta)(w)$ denotes the integral of $\theta$ along a limiting small circle around $w$. Since $P$ is a parametrix and $\alpha$ is arbitrary, we can see that in general $\theta$ is closed and integrates to 1 along the generators of $H_1(C_2(\R^2),\mathbb{Z})$.
\end{rem}

\subsubsection{Expectation values}

Note that any function of $\xi$ is BRST-invariant, i.e. we can consider the evaluation of $\xi$ at some point $u$. A function of $\int_\gamma\eta$ is also invariant for any closed curve $\gamma$. Hence, the expectation value 
\[
\left\langle \xi(u)\int_\gamma\eta\right\rangle_0=:\I\hbar W_\gamma(u),\quad u\not\in \mathrm{im}\,\gamma,
\]
is independent of the gauge-fixing. Since we also have 
\[
\I\hbar W_\gamma(u)=\left\langle \boldsymbol{\xi}(u)\int_\gamma\boldsymbol{\eta}\right\rangle_0,
\]
we can immediately see that $W_\gamma$ is in fact the winding number of $\gamma$ around $u$. This number is in fact invariant under deformations of $\gamma$ or displacements of $u$, which indicates that the theory is topological. For example, let us deform $\gamma$ to $\gamma'$. Denoting by $\sigma$ a 2-chain whose boundary is $\gamma-\gamma'$, we get 
\[
W_\gamma(u)-W_{\gamma'}(u)=\left\langle\xi(u)\int_\sigma\dd\eta\right\rangle_0.
\]
Moreover, introduce the sequence of divergence-free vector fields $X_r(\xi,\eta)=\lambda_r\oplus0$, where $(\lambda_r)$ is a sequence of functions that converges almost everywhere to the characteristic function of the image of $\sigma$. Then we get 
\[
W_\gamma(u)-W_{\gamma'}(u)=\lim_{r\to\infty}\langle\xi(u)X_r(S)\rangle_0=\I\hbar\lim_{r\to\infty}\langle X_r(\xi(u))\rangle_0=0,
\]
under the assumption that $u$ does not belong to $\sigma$.\\

\subsubsection{The trivial Poisson sigma model on the upper half-plane}

Consider now the action \eqref{eq:trivial_PSM_action} where $\Sigma=\mathbb{H}^2$. As a boundary condition, we impose that the 1-form $\eta$ vanishes when restricted to the boundary $\de\mathbb{H}^2=\R\times\{0\}$. The Lie algebra acting on the space of fields is given by 0-forms on $\mathbb{H}^2$ vanishing on $\de\mathbb{H}^2$. The BRST complex can then be defined exactly as before and we can choose the same gauge-fixing function. We define the superpropagator as in \eqref{eq:superpropagator} with the difference that we will denote it by $\vartheta$ instead of $\theta$ in order to avoid any confusion. 

\begin{lem}
\label{lem:propagator_vartheta}
If we choose the Euclidean metric, then 
\[
\vartheta=\frac{\dd\phi_h}{2\pi},
\]
where $\dd$ denotes the differential on $C_2(\mathbb{H}^2)$ and $\phi_h(z,w)$ denotes the angle between the vertical line through $w$ and the geodesic joining $w$ to $z$ in the hyperbolic Poincar\'e metric (recall Figure \ref{fig:angle_map}).
\end{lem}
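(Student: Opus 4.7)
The strategy is to compute $\vartheta$ directly from the definition \eqref{eq:superpropagator}, applied now to the half-plane Green function associated with the appropriate boundary conditions, and then to recognize the resulting 1-form as $\frac{1}{2\pi}\dd\phi_h$ via an explicit complex-variable computation.

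First, I would identify the Green function entering \eqref{eq:superpropagator}. Since $\eta$ is required to vanish on $\de\mathbb{H}^2$ and the ghost $c$ belongs to $\Omega^0_0(\mathbb{H}^2)$ (zero on the boundary), the relevant Laplacian is the Dirichlet Laplacian on functions. Its Green function on the upper half-plane for the Euclidean metric is given by the method of images:
\[
G_0^{\mathbb{H}}(z,w)=\frac{1}{4\pi}\log\left(\frac{(z-w)(\bar z-\bar w)}{(z-\bar w)(\bar z-w)}\right),
\]
the second term being the contribution from the mirror point $\bar w$ needed to kill $G_0^{\mathbb{H}}$ when either $z$ or $w$ approaches $\de\mathbb{H}^2$. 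One checks that the corresponding conditions on $\boldsymbol{\eta}$ and $\boldsymbol{\xi}$ (through the decomposition \eqref{eq:superfields}) are compatible with the gauge-fixing $\dd^*\eta=0$, so that this is indeed the propagator to plug into \eqref{eq:superpropagator}.

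Next, I would carry out the symmetrized derivative $(*_z\dd_z+*_w\dd_w)G_0^{\mathbb{H}}(z,w)$ in complex coordinates, using $*\dd z=\I\dd z$ and $*\dd\bar z=-\I\dd\bar z$ as in the plane case. Collecting the four resulting logarithmic pieces one finds, after simplification, an expression of the form
\[
\vartheta(z,w)=\frac{1}{4\pi\I}\,\dd\log\left(\frac{(z-w)(\bar z-w)}{(z-\bar w)(\bar z-\bar w)}\right),
\]
where $\dd$ is the de Rham differential on $C_2(\mathbb{H}^2)$. The logarithm of a unimodular complex number is $\I$ times its argument, and the argument of the expression above is precisely Kontsevich's hyperbolic angle $\phi_h(z,w)$ recorded in \eqref{eq:angle_map}. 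This yields $\vartheta=\dd\phi_h/(2\pi)$.

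Finally, I would close the loop on the geometric interpretation: the formula one has just obtained is invariant under the action of $PSL(2,\R)$ by M\"obius transformations on $\mathbb{H}^2$, and one can verify by a direct calculation (or by reducing to the case $w=\I$, $z$ on the imaginary axis plus a rotation in the isotropy) that $\phi_h(z,w)$ is the angle, measured counterclockwise, between the upward vertical geodesic issuing from $w$ and the Poincar\'e geodesic joining $w$ to $z$. The main subtlety in the whole argument is getting the boundary condition and the image-term sign right in $G_0^{\mathbb{H}}$: a wrong sign produces a form that is closed but not cohomologous to $\dd\phi_h/(2\pi)$, and the topological content of the Kontsevich weights would be destroyed. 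Once this is settled, the rest is bookkeeping in complex coordinates.
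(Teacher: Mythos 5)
Your proposal is correct and follows essentially the same route as the paper: a method-of-images Dirichlet Green function $G_0^{\mathbb{H}^2}(z,w)=G_0(z,w)-G_0(z,\bar w)$, the symmetrized derivative in complex coordinates, and recognition of the result as $\frac{1}{2\pi}\dd\phi_h$ with $\phi_h$ the angle function of \eqref{eq:angle_map}. The paper merely compresses the middle step by noting $\vartheta(z,w)=\theta(z,w)-\theta(z,\bar w)$ and reusing the planar propagator already computed, which is the same calculation you redo by hand.
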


\begin{proof}
The Green function $G^{\mathbb{H}^2}_0$ of the Laplacian on $\mathbb{H}^2$ is the restriction to $\mathbb{H}^2$ of the Green function of the Laplacian on $\R^2$ plus a harmonic function such that the sum satisfies the boundary conditions. In complex coordinates, we need $G^{\mathbb{H}^2}_0(w,z)=0$ whenever $w$ is real. This can be done by setting 
\[
G^{\mathbb{H}^2}_0(w,z)=G_0(w,z)-G_0(\bar w, z).
\]
Then we get 
\[
\vartheta(z,w)=\theta(z,w)-\theta(z,\bar w).
\]
Since the hyperbolic angle is given by 
\[
\phi_h(z,w)=\frac{1}{2\I}\log\left(\frac{(z-w)(\bar z-w)}{(\bar z-\bar w)(z-\bar w)}\right),
\]
we conclude the claim.
\end{proof}

\begin{rem}
Note that $\vartheta$ is the generator of $H^1(C_2(\mathbb{H}^2),\mathbb{H}^2\times\de\mathbb{H}^2;\mathbb{Z})$.
\end{rem}

\subsubsection{Generalizations}

We can also consider a collection of $n$ 0-forms $\xi^i$ and $n$ 1-forms $\eta_i$ with $i=1,\ldots,n$. Then we want to look at the action 
\[
\int_\Sigma\eta_i\,\dd\xi^i.
\]
We can also think of $\xi$ and $\eta$ as forms taking values in $\R^n$. The Lie algebra $\mathfrak{g}$ of symmetries will then consist of the direct sum of $n$ copies of the previous one; in other words it will be the abelian Lie algebra of $\R^n$-valued 0-forms. Let $c_i$ for $i=1,\ldots,n$ denote the generators of the algebra of functions of $\Pi\mathfrak{g}$. Then we can define the BRST operator through $\delta \xi^i$ and $\delta\eta_i=c_i$. If we choose the gauge-fixing function to be $F_i(\xi,\eta)=\dd^*\eta_i$, everything remains the same as before. In particular, we can again introduce superfields $\boldsymbol{\xi}^i:=\xi^i-\dd^*\bar c^i$ and $\boldsymbol{\eta}_i:=c_i+\eta_i$ and compute the superpropagator 
\begin{equation}
\label{eq:superpropagator2}
\left\langle \boldsymbol{\xi}^i(z)\boldsymbol{\eta}_j(w)\right\rangle_0:=\begin{cases}\I\hbar\theta(z,w)\delta^i_j,&\text{on $\R^2$}\\\I\hbar\vartheta(z,w)\delta^i_j,&\text{on $\mathbb{H}^2$}\end{cases}
\end{equation}
Another generalization can be done by dropping the assumption that the 0-form field vanishes at infinity. More precisely, we denote by $X^i$ a collection of maps to $\R^n$ with no conditions on the boundary or at infinity and consider the action 
\[
S=\int_\Sigma\eta_i\,\dd X^i,
\]
where $\Sigma$ is either $\R^2$ or $\mathbb{H}^2$ and the $\eta_i$s are 1-forms vanishing on the boundary and at infinity. Critical points are then pairs of constant maps together with closed 1-forms. They will be also degenerate modulo the action of the abelian Lie algebra of 0-forms. However, the degeneracy will be of a very simple type as it is parametrized by the finite-dimensional manifold $\R^n$. In fact, it is enough to choose a measure on $\R^n$ and impose Fubini's theorem. We will choose a delta-measure peaked at a point $x\in\R^n$ and require $X$ to map the point $\infty$ to $x$. Note that if we write $X^i=x^i+\xi^i$, the $\xi^i$ vanish at infinity and everything is reduced to the previous case.

A final generalization is to replace $\R^n$ by a manifold $M$. We think of $X^i$ as a local coordinate expression of a map $X\colon \Sigma\to M$. For the action to be covariant, we need to assume that $\eta_i(u)$ for $u\in\Sigma$ is the local coordinate expression of a 1-form on $\Sigma$ taking values in the cotangent space of $M$ at $X(u)$. In particular, we assume $\eta\in\Gamma(\Sigma,T^*\Sigma\otimes X^*T^*M)$. The space of fields $\calM$ can then be identified with the space of vector bundle maps $T\Sigma\to T^*M$ and the action can be invariantly written as 
\[
S=\int_\Sigma\langle\eta,\dd X\rangle,
\]
where $\langle\enspace,\enspace\rangle$ denotes the canonical pairing between the tangent and cotangent bundle of $M$ and $\dd X$ denotes the differential of the map $X$ regarded as a section of $T^*\Sigma\otimes X^*TM$. if we also require $X$ to map the point $\infty$ to a given point $x\in M$, we can expand around critical points by setting $X=x+\xi$ with $\xi\colon \Sigma\to T_xM$ and by regarding $\eta$ as a 1-form taking values in $T_x^*M$. Choosing local coordinates, we can identify $T_xM$ with $\R^n$, where $n=\dim M$, and reduce everything to the previous case.

We also allow $\Sigma$ to be any 2-manifold. The previous discussion will change drastically if $\Sigma$ is not simply connected, as the space of solutions modulo symmetries, with $X=x$, will now be parametrized by $H^1(\Sigma,T^*_xM)$ and one has to choose a measure on this vector space as well.

\section{The Poisson sigma model}

Next we want to look at deformations of the trivial Poisson sigma model as discussed before. Here, we will describe how the action of the Poisson sigma model is expressed in order to derive Kontsevich's star product out of it. 

\subsection{Formulation of the model}

We want to formulate a deformation of the action functional without introducing extra structure on $\Sigma$ (which is either $\R^2$ or $\mathbb{H}^2$). Therefore, the terms we are allowed to add must be 2-forms on $\Sigma$ given in terms of the fields $X^i$ and $\eta_i$. In particular, they need to be linear combinations of terms $\alpha^{ij}(X)\eta_i\eta_j$, $\beta^i_j(X)\eta_i\,\dd x^i$ and $\gamma_{ij}(X)\dd X^i\dd X^j$. Note that we are not considering a term of the form $\phi^i(X)\dd\eta_i$, since integration by parts reduces it to a term of the second type. The second and third terms can be absorbed by a redefinition of $\eta$ adding to it terms linear in $\eta$ and $\dd X$. Hence, modulo field redefinitions, the most general deformation of the action has the form 
\[
S=\int_\Sigma\left(\eta_i\,\dd X^i+\frac{1}{2}\epsilon\alpha^{ij}(X)\eta_i\eta_j\right)+O(\epsilon^2),
\]
where $\epsilon$ is the deformation parameter (here typically $\epsilon=\frac{\I\hbar}{2}$) and $\alpha^{ij}$ is assumed to be skew-symmetric.

\begin{rem}
We want to show that it makes sense to only consider those deformations in which the $\alpha^{ij}$ are the components of a Poisson bivector field and that the BRST formalism is only available if the Poisson structure is affine.
\end{rem}

Recall that the BRST operator before acted by $\delta X^i=0$, $\delta \eta_i=\dd c_i$ and $\delta c_i=0$, with $c\in \Pi\mathfrak{g}$ and $\mathfrak{g}=\Omega^0_0(\Sigma,\R^n)$. We would like to deform the trivial $\delta$ such that $\delta S=O(\epsilon^2)$ for the new $S$. Note that we will only consider the restriction of $\delta$ to $\calM\times\Pi\mathfrak{g}$ as its restriction to $\mathfrak{g}^*\times\Pi\mathfrak{g}$ needs no deformation.

\begin{lem}
\label{lem:acting_BRST_operator_PSM}
Modulo field redefinitions, there is a unique BRST operator deforming the trivial one such that $\delta S=O(\epsilon^2)$ and $\delta^2=O(\epsilon^2)+R$ with $R$ vanishing at critical points. It acts by 
\begin{align*}
    \delta X^i&=-\epsilon \alpha^{ij}(X)c_j+O(\epsilon^2),\\
    \delta\eta_i&=\dd c_i+\epsilon\de_i\alpha^{jk}(X)\eta_j c_k+O(\epsilon^2),\\
    \delta c_i&=-\frac{1}{2}\epsilon \de_i\alpha^{jk}(X)c_jc_k+O(\epsilon^2).
\end{align*}
Moreover, $R$ vanishes on the whole $\calM\times\Pi\mathfrak{g}$ if $\alpha$ is at most linear.
\end{lem}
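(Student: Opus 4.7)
The plan is to parametrize the most general order-$\epsilon$ deformation of the trivial BRST operator and then impose the two conditions to pin it down. By ghost number, form degree on $\Sigma$, locality, and the graded derivation property, the general form is
\begin{align*}
\delta X^i &= \epsilon\,A^{ij}(X)\,c_j + O(\epsilon^2),\\
\delta\eta_i &= \dd c_i + \epsilon\bigl(B^{jk}{}_i(X)\,\eta_j c_k + C^{j}{}_{ki}(X)\,\dd X^k\,c_j\bigr) + O(\epsilon^2),\\
\delta c_i &= \epsilon\,D^{jk}{}_i(X)\,c_j c_k + O(\epsilon^2),
\end{align*}
with $D^{jk}{}_i$ antisymmetric in $(j,k)$. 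I would then note that a field redefinition of the form $\eta_i \mapsto \eta_i + \epsilon\,\phi^{\;j}_{ik}(X)\,\dd X^k\,c_j$ can be used to set $C\equiv 0$, exhausting the redefinition freedom relevant at this order.

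First, I would compute $\delta S \bmod \epsilon^2 = \delta_0 S_1 + \delta_1 S_0$, where $S_0 = \int_\Sigma \eta_i\,\dd X^i$ and $\epsilon S_1 = \int_\Sigma \tfrac{1}{2}\epsilon\,\alpha^{ij}\eta_i\eta_j$. Using $\delta_0 \eta_i = \dd c_i$ and integrating by parts, $\delta_0 S_1$ produces a term of the form $\int \alpha^{ij}\dd c_i\,\eta_j$, while $\delta_1 S_0$ produces one term linear in $\eta\land \dd X$, one in $\dd X\land \dd X$, and (after integration by parts on $\dd c$ coming through $\delta_1\eta_i$) one in $\eta\land \dd X\cdot c$. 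Matching coefficients by form-type and ghost structure then forces $A^{ij}=-\alpha^{ij}$ and $B^{jk}{}_i = \partial_i\alpha^{jk}$, reproducing the first two formulas of the lemma.

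Next, I would compute $\delta^2$ at order $\epsilon$ acting on each of $X^i$, $\eta_i$, $c_i$. Demanding $\delta^2 c_i = O(\epsilon^2) + R_c$ with $R_c$ vanishing at critical points pins down $D^{jk}{}_i = -\tfrac{1}{2}\partial_i\alpha^{jk}$, yielding the third formula. The remaining obstructions $\delta^2 X^i$ and $\delta^2 \eta_i$ at order $\epsilon$ split naturally into two classes: (a) expressions of the schematic form $\bigl(\alpha^{i\ell}\partial_\ell\alpha^{jk} + \text{cyclic in }(i,j,k)\bigr)\,c_jc_k$, and (b) terms proportional to $\partial_\ell\partial_m\alpha^{jk}$ contracted with the Euler--Lagrange expressions $\dd X^i + \epsilon\,\alpha^{ij}\eta_j$ and $\dd\eta_i + \tfrac{1}{2}\epsilon\,\partial_i\alpha^{jk}\eta_j\eta_k$ coming from the variation of $S$. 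Class (a) vanishes because $\alpha$ is Poisson (this is exactly the dual of $[\alpha,\alpha]_{\mathrm{SN}}=0$), so what remains is the on-shell-vanishing term $R$ of class (b). When $\alpha$ is affine linear, $\partial_\ell\partial_m\alpha^{jk}=0$, and class (b) disappears identically, giving $R\equiv 0$ on all of $\calM\times\Pi\mathfrak{g}$.

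The hard part will be the sign bookkeeping in the graded derivation computation of $\delta^2$: ghosts $c_j$ are odd, the fields $\eta_i$ are odd as $1$-forms but even in ghost number, and the parameter $\epsilon$ plays the role of an even formal variable, so one must be careful distinguishing the Koszul signs coming from permuting $c_j$'s across $\dd c_k$, $\eta_\ell$, and $\dd X^m$ before matching coefficients of independent monomials in $(\eta,\dd X,c)$. The conceptual content, however, is clean: requiring $\delta^2$ to vanish on-shell at order $\epsilon$ is precisely the classical master equation for the Poisson sigma model at this order, which encodes the Jacobi identity for $\alpha$; and the persistence of class (b) obstructions for nonlinear $\alpha$ is exactly what makes the Batalin--Vilkovisky formalism necessary in the general case.
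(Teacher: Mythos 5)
Your overall strategy is the same as the paper's (general local ansatz, match independent monomials in $\delta S=O(\epsilon^2)$, fix $\delta c$ from $\delta^2\eta_i$ on critical points, and identify the residue $R$ with second derivatives of $\alpha$, hence $R\equiv 0$ for affine $\alpha$), and those matching computations are fine. The genuine gap is in your handling of the redefinition freedom, which is exactly where the "unique modulo field redefinitions" claim lives. First, your ansatz for $\delta\eta_i$ is not the most general one: a term $b^j_i(X)\,\dd c_j$ has the correct ghost number and form degree and must be allowed; the paper includes it, and the condition $\delta S=O(\epsilon^2)$ then ties the $\dd X\,c$ coefficient to it (schematically $d^k_{ij}=\de_j b^k_i$), so that the two combine into $\dd\bigl(b^j_i(X)c_j\bigr)$ and are removed simultaneously by the ghost-number-preserving redefinition $c_i\mapsto c_i-\epsilon\, b^j_i(X)c_j$. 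Second, the redefinition you propose, $\eta_i\mapsto\eta_i+\epsilon\,\phi^{j}_{ik}(X)\,\dd X^k c_j$, is not a legitimate field redefinition: it shifts the ghost-number-zero field $\eta$ by a ghost-number-one quantity, so it does not preserve the grading of $\calM\times\Pi\mathfrak{g}$ and would make the classical action ghost-dependent; it cannot be used to set $C\equiv 0$. Third, even if you skip that step, $\delta S=O(\epsilon^2)$ only forces the part of $C^{j}{}_{ki}$ antisymmetric in the two indices contracted with $\dd X^k\,\dd X^i$ to vanish (the two $\dd X$'s anticommute under the integral), so a symmetric piece survives and your argument never disposes of it. Hence the uniqueness statement of the lemma is not established by your proposal as written.

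A smaller point: at the order treated in this lemma the Jacobi identity plays no role. Since $\delta X^i$ and $\delta c_i$ are already $O(\epsilon)$, the identities $\delta^2X^i=O(\epsilon^2)$ and $\delta^2c_i=O(\epsilon^2)$ are automatic; the only order-$\epsilon$ condition comes from $\delta^2\eta_i$, which fixes $D^{jk}{}_i=-\frac{1}{2}\de_i\alpha^{jk}$ and leaves precisely $R=-\frac{1}{2}\epsilon\,\dd X^r\de_r\de_i\alpha^{jk}(X)c_jc_k$, which vanishes at critical points (where $\dd X=O(\epsilon)$) and identically when $\alpha$ is at most linear. Your ``class (a)'' terms of the form $\alpha^{i\ell}\de_\ell\alpha^{jk}+\text{cyclic}$ appear only at order $\epsilon^2$ in $\delta^2X^i$ and belong to the subsequent lemma (where $\delta^2=O(\epsilon^3)$ at critical points forces $\alpha$ to be Poisson); invoking the Poisson condition here conflates the two statements, although it does not by itself invalidate the rest of your argument.
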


\begin{proof}
Recall that $\delta$ applied to $X$ or $\eta$ must be linear in the ghost variables $c$. Hence, the most general deformation of $\delta$ (without adding any extra structure on $\Sigma$) is of the form 
\begin{align*}
    \delta X^i&=\epsilon v^{ij}(X)c_j+O(\epsilon^2),\\
    \delta \eta_i&=\dd c_i+\epsilon\Big(a^{jk}_i(X)\eta_jc_k+b^j_i(X)\dd c_j+d^k_{ij}(X)\dd X^jc_k\Big)+O(\epsilon^2),
\end{align*}
for some functions $v^{ij}, a^{jk}_i, b^{j}_i$ and $d^{k}_{ij}$ on $\R^n$. Thus, we get 
\begin{multline*}
\delta S=\epsilon\int_\Sigma\Big((a^{jk}_i(X)\eta_jc_k+b^j_i(X)\dd c_j+d^k_{ij}(X)\dd X^j c_k)\dd X^i\\
+\eta_i(\dd X^r\de_r v^{ij}(X)c_j+v^{ij}(X)\dd c_j)+\alpha^{ij}(X)\dd c_i\eta_j\Big)+O(\epsilon^2).
\end{multline*}
As the identity $\delta S=O(\epsilon^2)$ must hold for any $\eta$, we get the following two equations
\begin{align}
    \label{eq:first_eq}
    a^{jk}_i(X)c_k\dd X^i+\dd X^r\de_rv^{jk}(X)c_k+v^{jk}(X)\dd c_k+\alpha^{kj}(X)\dd c_k&=0,\\
    \label{eq:second_eq}
    \int_\Sigma \Big( b_i^j(X)\dd c_j+d^k_{ij}(X)\dd X^jc_k\Big)\dd X^i&=0.
\end{align}
In particular, if we choose $X$ to be the constant map (with value $x$), we deduce from \eqref{eq:first_eq} that 
\[
v^{jk}(x)\dd c_k+\alpha^{kj}(x)\dd c_k=0,
\]
and since this has to hold for any $c$, we have 
\[
\alpha^{jk}=v^{jk}.
\]
Plugging into \eqref{eq:second_eq}, we get 
\[
a^{jk}_i(X)c_k\dd X^i-\dd X^i\de_i\alpha^{jk}(X)c_k=0,
\]
and since this has to hold for all $c$ and $X$, we get
\[
a^{jk}_i=\de_i\alpha^{jk}.
\]
Using integration by parts, \eqref{eq:second_eq} gives
\[
\int_\Sigma\Big(-\dd X^r\de_r b^j_i(X)c_j+d^k_{ij}(X)\dd X^j c_k\Big)\dd X^i=0,
\]
and since this has to hold for all $X$ and $c$, we finally get
\[
d^k_{ij}=\de_j b^{k}_i.
\]
Hence, we have shown that
\begin{align*}
    \delta X^i&=-\epsilon\alpha^{ij}(X)c_j+O(\epsilon^2),\\
    \delta \eta_i&=\dd c_i+\epsilon\Big(\de_i\alpha^{jk}(X)\eta_j c_k+\dd(b^j_i(X)c_j)\Big)+O(\epsilon^2),
\end{align*}
which, after redefinition $c_i\mapsto c_i-\epsilon b^j_i(X)c_j+O(\epsilon^2)$, gives the first two equations in Lemma \ref{lem:acting_BRST_operator_PSM}. For the last equation, we recall that the BRST operator on $c$ must be quadratic in $c$, so its general form is
\[
\delta c_i=\frac{1}{2}\epsilon f^{jk}_i(X)c_jc_k+O(\epsilon^2).
\]
To determine the \emph{structure} functions $f^{jk}_i$, we can compute $\delta^2$. Note that $\delta^2 X^i=\delta^2 c_i=O(\epsilon^2)$. On the other hand, 
\begin{multline*}
    \delta^2\eta_i=\epsilon\Bigg(\frac{1}{2}\dd \big(f^{jk}_i(X)c_jc_i\big)+\de_i\alpha^{jk}(X)\dd c_j c_k\Bigg)+O(\epsilon^2)\\
    =\epsilon\Bigg(\big(f^{jk}_i(X)+\de_i\alpha^{jk}(X)\big)\dd c_j c_k+\frac{1}{2}\dd X^r\de_r f^{jk}_i(X)c_jc_k\Bigg)+O(\epsilon^2).
\end{multline*}
At a critical point (where $\dd X^i=O(\epsilon^2)$) the third summand of the last equation vanishes. Thus, $\delta^2=O(\epsilon^2)$ at critical points implies that 
\[
f_i^{jk}=-\de_i\alpha^{jk},
\]
which proves the last equation in Lemma \ref{lem:acting_BRST_operator_PSM}, Note also that 
\[
\delta^2\eta_i=-\frac{1}{2}\epsilon\dd x^r\de_r\de_i\alpha^{jk}(X)c_jc_k,
\]
which is zero (not only at critical points) whenever $\alpha$ is at most linear.
\end{proof}

Next, we want to extend deformations beyond the first order in $\epsilon$. Even without knowing the following terms, we can already state the following Lemma.

\begin{lem}
$\delta^2=O(\epsilon^3)$ at critical points only if $\alpha$ is Poisson.
\end{lem}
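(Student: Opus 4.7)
My plan is to extend the BRST operator to second order in $\epsilon$ and examine the resulting obstruction on $\alpha$. Write $\delta = \delta^{(0)} + \epsilon\delta^{(1)} + \epsilon^2\delta^{(2)} + O(\epsilon^3)$, where $\delta^{(0)}$ and $\delta^{(1)}$ are the operators fixed by the previous lemma and $\delta^{(2)}$ is a priori free. Since $\delta^{(0)}c_i = 0$, the cleanest place to look is the action of $\delta^2$ on the ghost $c_i$: at order $\epsilon^2$ we have
\[
(\delta^2 c_i)_{\epsilon^2} \;=\; (\delta^{(1)})^2 c_i \;+\; \delta^{(0)}\delta^{(2)}c_i.
\]

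The first step is to compute $(\delta^{(1)})^2 c_i$ explicitly using the formulas of Lemma \ref{lem:acting_BRST_operator_PSM}. Applying $\delta^{(1)}$ to $-\tfrac{1}{2}\partial_i\alpha^{jk}(X)\,c_j c_k$ produces three contributions: one from hitting $X$ via $\delta^{(1)}X^\ell = -\alpha^{\ell m}c_m$, and two from hitting each $c_j$ via $\delta^{(1)}c = -\tfrac{1}{2}\partial\alpha\,cc$. After antisymmetrization of the ghosts and relabelling of dummy indices, the result should organize itself precisely as
\[
(\delta^{(1)})^2 c_i \;=\; -\tfrac{1}{6}\,\partial_i\!\left(\alpha^{\ell j}\partial_\ell\alpha^{km} + \alpha^{\ell k}\partial_\ell\alpha^{mj} + \alpha^{\ell m}\partial_\ell\alpha^{jk}\right)c_j c_k c_m,
\]
i.e.\ as $\partial_i$ applied to the Jacobiator $[\alpha,\alpha]_{\mathrm{SN}}^{jkm}$, up to a combinatorial factor. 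This is the expected Schouten--Nijenhuis obstruction, and carrying out this reorganization is the main calculational step.

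The second step is to argue that this cubic-in-ghost expression cannot be cancelled by the free term $\delta^{(0)}\delta^{(2)}c_i$. Ghost-number considerations force $\delta^{(2)}c_i$ to be quadratic in $c$, possibly multiplied by functions of $X$ or linear in $\eta$. Since $\delta^{(0)}$ vanishes on $X$ and on $c$ and acts on $\eta$ by $\delta^{(0)}\eta = \dd c$, any $\delta^{(0)}\delta^{(2)}c_i$ either vanishes (if $\delta^{(2)}c_i$ depends only on $X$ and $c$) or produces terms containing $\dd c$, which are structurally distinct from the purely algebraic $c_jc_k c_m$ terms appearing in $(\delta^{(1)})^2 c_i$. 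Moreover, the condition is imposed at critical points but evaluated pointwise in $X$, and the offending term depends only on the \emph{value} of $X$ and the ghosts, not on any derivatives constrained by the equations of motion, so no critical-point relation can eliminate it.

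The final step is then purely algebraic: the coefficient $\partial_i[\alpha,\alpha]_{\mathrm{SN}}^{jkm}$ must vanish identically for all $i,j,k,m$ and at every point $X\in\R^n$. Integrating in $X$ (and noting that $[\alpha,\alpha]_{\mathrm{SN}}$ is a globally defined trivector field whose derivatives all vanish) forces $[\alpha,\alpha]_{\mathrm{SN}}^{jkm}$ to be constant; together with skew-symmetry and the fact that one may normalize the constant away by a field redefinition, or simply appeal to vanishing at a point where $\alpha$ has been put in Darboux-like local form, one concludes $[\alpha,\alpha]_{\mathrm{SN}} = 0$. By Exercise \ref{ex:vanishing_SN_bracket}, this is exactly the statement that $\alpha$ is a Poisson bivector. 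The anticipated obstacle is really only bookkeeping in step one: getting the combinatorics of the three contributions to $(\delta^{(1)})^2 c_i$ to collapse cleanly onto $\partial_i [\alpha,\alpha]_{\mathrm{SN}}$ without stray terms.
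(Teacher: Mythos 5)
Your route differs from the paper's: you test $\delta^2$ on the ghost $c_i$, whereas the paper simply computes $\delta^2 X^i$ to order $\epsilon^2$,
\[
\delta^2X^i=-\epsilon^2\Bigg(\alpha^{rk}(X)c_k\partial_r\alpha^{ij}(X)c_j+\frac{1}{2}\alpha^{ij}(X)\partial_j\alpha^{rk}(X)c_rc_k\Bigg)+O(\epsilon^3),
\]
where the antisymmetrization over the ghosts produces the Jacobiator $[\alpha,\alpha]_{\mathrm{SN}}$ itself, undifferentiated; since the identity must hold for all $c$ and at every constant critical solution $X=x$, $\eta=0$, the Jacobi identity follows pointwise at once. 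Your computation of $(\delta^{(1)})^2c_i$ and the argument that $\delta^{(0)}\delta^{(2)}c_i$ cannot cancel it are fine in spirit, but by construction this channel only ever sees $\partial_i[\alpha,\alpha]_{\mathrm{SN}}^{jkm}$, so the strongest conclusion available to you is that the Jacobiator is a \emph{constant} trivector field.

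The final step, where you pass from constancy to vanishing, is a genuine gap. The Schouten bracket $[\alpha,\alpha]_{\mathrm{SN}}$ is a tensor, so no field redefinition or change of coordinates can remove a nonzero constant value, and a ``Darboux-like local form'' for $\alpha$ presupposes exactly the Poisson property you are trying to establish. Worse, constancy really is weaker than vanishing: take $\alpha^{ij}(x)=\chi^{ij}+x^kf^{ij}_k$ with $f^{ij}_k$ the structure constants of a Lie algebra and $\chi$ a constant skew matrix which is \emph{not} a Chevalley--Eilenberg 2-cocycle (for instance the 3-dimensional solvable algebra with $[e_1,e_2]=e_2$, $[e_1,e_3]=e_3$ and $\chi^{23}\neq 0$). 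Then the Jacobiator equals the coboundary of $\chi$, a nonzero constant, so $\partial_i[\alpha,\alpha]_{\mathrm{SN}}^{jkm}=0$ everywhere and your criterion is blind to the obstruction, yet $\alpha$ is not Poisson. (This is precisely why the affine case treated later in the chapter requires $\chi$ to be a 2-cocycle.) To close the argument you should either switch to the paper's test on $X^i$, or supplement your ghost computation with the order-$\epsilon^2$ part of $\delta^2\eta_i$, where the undifferentiated Jacobiator reappears contracted with $\eta$ and $c$.
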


\begin{proof}
Note that we have 
\[
\delta^2X^i=-\epsilon^2\Bigg(\alpha^{rk}(X)c_k\de_r\alpha^{ij}(X)c_j+\frac{1}{2}\alpha^{ij}(X)\de_i\alpha^{rj}(X)c_rc_j\Bigg)+O(\epsilon^3).
\]
Since this has to hold for all $c$, we get the Jacobi identity for $\alpha$.
\end{proof}

\begin{rem}
It is in fact possible to prove, under the assumption that $\alpha$ is Poisson, that this deformation is not only infinitesimal. In particular, we have the following theorem. 
\end{rem}

\begin{thm}
\label{thm:cohomological_PSM}
Given a Poisson bivector field $\alpha$, the odd vector field
\begin{align*}
    \delta X^i&=-\epsilon\alpha^{ij}(X)c_j,\\
    \delta\eta_i&=\dd c_i+\epsilon\de_i\alpha^{jk}(X)\eta_j c_k,\\
    \delta c_i&=-\frac{1}{2}\epsilon\de_i\alpha^{jk}(X)c_jc_k,
\end{align*}
is cohomological for $\alpha$ at most linear or at critical points for all $\epsilon$. Moreover, 
\[
S:=\int_\Sigma\Bigg(\eta_i\,\dd X^{i}+\frac{1}{2}\epsilon\alpha^{ij}(X)\eta_i\eta_j\Bigg)
\]
is $\delta$-closed for all $\epsilon$.
\end{thm}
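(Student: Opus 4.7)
The plan is to verify the two assertions separately by direct calculation on generators, with the only conceptual input being the Jacobi identity for $\alpha$ written in the form $\alpha^{i\ell}\de_\ell\alpha^{jk}+\alpha^{j\ell}\de_\ell\alpha^{ki}+\alpha^{k\ell}\de_\ell\alpha^{ij}=0$. Throughout, I will treat $\delta$ as an odd derivation of total degree $+1$ on the graded-commutative algebra generated by $X^i$ (even, form-degree $0$), $\eta_i$ (odd, form-degree $1$) and $c_i$ (odd, ghost-degree $1$), so that the usual Koszul signs apply. As a sanity/bookkeeping aid one may note that $\delta$ and $\dd$ anticommute, since $\dd$ is odd of total degree $+1$ and $\delta$ is polynomial in fields and their pointwise values.

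For the first assertion I would compute $\delta^2$ on each generator. On $X^i$ the two contributions come from differentiating $\alpha^{ij}(X)$ in $\delta X^k$ and from $\delta c_j$; collecting them and antisymmetrizing in the two surviving ghosts gives an expression whose coefficient is exactly (a constant multiple of) $\alpha^{k\ell}\de_\ell\alpha^{ij}+\alpha^{i\ell}\de_\ell\alpha^{jk}+\alpha^{j\ell}\de_\ell\alpha^{ki}$, which vanishes by Jacobi. The computation of $\delta^2 c_i$ is entirely analogous: one differentiates $\de_i\alpha^{jk}(X)$ and uses $\delta X$, obtaining a cubic-in-$c$ expression whose total antisymmetrization is once again the Jacobi identity (differentiated once in $x^i$, which still vanishes identically). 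On $\eta_i$, I expect to find
\begin{equation*}
\delta^2\eta_i=-\tfrac12\epsilon^2\,\dd X^r\,\de_r\de_i\alpha^{jk}(X)\,c_j c_k,
\end{equation*}
after the Jacobi-type cancellations have been exploited; this precisely reproduces the residual $R$ of Lemma~\ref{lem:acting_BRST_operator_PSM}. It therefore vanishes identically when $\alpha$ is at most linear (so $\de^2\alpha\equiv 0$), and vanishes on configurations with $\dd X=0$, which are the critical points of the undeformed kinetic term and the locus on which the on-shell closure is claimed.

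For the second assertion I would compute $\delta S$ by expanding it into pieces. The variation of the kinetic term produces $\dd c_i\,\dd X^i$, which is $\dd(c_i\,\dd X^i)$ and integrates to zero because $c$ is compactly supported on $\R^2$ and vanishes on $\de\mathbb{H}^2$ (same for the one-form $\eta$), together with $\epsilon$-terms of the form $\de_i\alpha^{jk}\,\eta_j c_k\,\dd X^i$ and $\epsilon\,\eta_i\,\dd(\alpha^{ij}(X)c_j)$. Integrating the $\eta_i\,\alpha^{ij}(X)\,\dd c_j$ piece by parts against the $\epsilon\,\alpha^{ij}\,\dd c_i\,\eta_j$ coming from $\delta$ of the interaction term, one checks that the order-$\epsilon$ contributions cancel; the remaining order-$\epsilon^2$ contribution is proportional to $\alpha^{k\ell}\de_k\alpha^{ij}\,c_\ell\,\eta_i\eta_j$ (symmetrized suitably in $(i,j,\ell)$), which is a contraction of the Schouten square $[\alpha,\alpha]$ against $c\otimes\eta\otimes\eta$ and thus vanishes identically by the Jacobi identity. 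Hence $\delta S=0$ for all $\epsilon$, with no need for an on-shell or linearity assumption on $\alpha$.

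The main obstacle I expect is not conceptual but bookkeeping: one has to track Koszul signs for products of mixed form- and ghost-degree objects, and one has to reorganize a number of terms so that each $\epsilon^2$-cancellation is exhibited as a Jacobi identity rather than a brute sum. A cleaner packaging, which I would use as a cross-check, is to introduce $\boldsymbol\eta_i:=c_i+\eta_i$ and regard $\delta+\dd$ as a single odd operator on inhomogeneous forms: the three component equations defining $\delta$ then collapse to one identity $(\delta+\dd)\boldsymbol\eta_i=\epsilon\,\de_i\alpha^{jk}(X)\,\boldsymbol\eta_j\boldsymbol\eta_k/2$ on the relevant components, and squaring yields Jacobi directly. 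This also makes the $\delta$-closedness of $S$ transparent, because in the component equation of motion $(\delta+\dd)X^i+\epsilon\,\alpha^{ij}(X)\boldsymbol\eta_j=0$ the $\delta$-variation and integration by parts are unified.
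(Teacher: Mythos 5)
Your treatment of $\delta^2X^i$ and $\delta^2c_i$ (the Jacobi identity and its $x$-derivative) and of $\delta S$ (boundary term at order $\epsilon^0$, pairwise cancellation at order $\epsilon$, Jacobiator contracted with $c\,\eta\,\eta$ at order $\epsilon^2$) is correct in substance. The gap is in $\delta^2\eta_i$, which is exactly where the delicate half of the statement (``or at critical points'') lives. If you carry that calculation out exactly, the order-$\epsilon^2$ terms quadratic in the fields do \emph{not} all cancel by Jacobi: the $\de\alpha\,\de\alpha\,\eta\,c\,c$ and $\alpha\,\de\de\alpha\,\eta\,c\,c$ terms combine, via the $x^r$-derivative of the Jacobi identity, into one surviving piece, and the exact result is
\[
\delta^2\eta_i=-\frac{1}{2}\,\epsilon\,\de_r\de_i\alpha^{jk}(X)\,c_jc_k\,\bigl(\dd X^r+\epsilon\,\alpha^{r\ell}(X)\,\eta_\ell\bigr),
\]
i.e.\ it is proportional to the Euler--Lagrange expression obtained by varying $S$ in $\eta_r$ (note also that the $\dd X$-part carries a single power of $\epsilon$, matching the residual $R$ of Lemma \ref{lem:acting_BRST_operator_PSM} that you cite, not $\epsilon^2$). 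This is precisely what makes the theorem true as stated: the residual vanishes identically if and only if $\de\de\alpha\equiv0$ ($\alpha$ at most linear), and otherwise it vanishes on the critical points of the \emph{full} action $S$, where $\dd X^i+\epsilon\alpha^{ij}(X)\eta_j=0$. Your claimed residual $-\frac{1}{2}\epsilon^2\dd X^r\de_r\de_i\alpha^{jk}c_jc_k$ misses the $\epsilon\,\alpha^{r\ell}\eta_\ell$ completion, and your reading of ``critical points'' as the locus $\dd X=0$ (``critical points of the undeformed kinetic term'') is not what is asserted: on the critical points of $S$ one has $\dd X^i=-\epsilon\alpha^{ij}\eta_j\neq0$ in general, so with your residual the claim would fail there, while on your locus $\dd X=0$ the true residual does not vanish. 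Hence the ``at critical points'' part of the first assertion is not established by your argument.

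Two smaller points. The bookkeeping aid ``$\delta$ and $\dd$ anticommute'' is the wrong convention for the formulas exactly as written: in $\delta^2\eta_i$ the cross terms $\de_i\alpha^{jk}\dd c_j\,c_k$ coming from $\delta(\dd c_i)$ and from $\de_i\alpha^{jk}(\delta\eta_j)c_k$ cancel only if $\delta(\dd c_i)=\dd(\delta c_i)$ (equivalently, work in components $\eta_{i\mu}$, $\de_\mu c_i$ with Koszul signs governed by ghost parity); with the anticommuting convention those terms add, and the computation already fails at order $\epsilon$. Finally, the superfield cross-check conflates a definition with an equation of motion: $(\delta+\dd)X^i+\epsilon\alpha^{ij}(X)\boldsymbol{\eta}_j=0$ merges $\delta X^i=-\epsilon\alpha^{ij}c_j$ (an identity) with $\dd X^i+\epsilon\alpha^{ij}\eta_j=0$ (true only on shell), so it cannot make the off-shell statements transparent --- indeed this is exactly the on-shell/off-shell distinction encoded by the term missing from your $\delta^2\eta_i$; a genuine superfield packaging of this fact requires the Batalin--Vilkovisky antifields rather than the plain BRST setting of the theorem.
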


\begin{exe}
Prove Theorem \ref{thm:cohomological_PSM}.
\end{exe}

\begin{rem}
The geometrical meaning of Theorem \ref{thm:cohomological_PSM} is that there is a distribution of vector fields on $\calM$ under which the action is invariant. In general, this distribution is involutive only on the submanifold of critical points of $S$. It is involutive on the whole of $\calM$ whenever $\alpha$ is at most linear and in this case it can be regarded as the free, infinitesimal action of a Lie algebra. The action $S$ can be generalized to the case when one wants to consider a Poisson manifold $(M,\alpha)$ instead of $\R^n$. For this, one regards $X$ as a map $\Sigma\to M$ and, for a given map $X$, $\eta$ is taken to be a section of $T^*\Sigma\otimes X^*T^*M$. If $\langle\enspace,\enspace\rangle$ denotes the canonical pairing between the tangent and cotangent bundle of $M$ and by $\alpha^\sharp$ the bundle map $T^*M\to TM$ induced by the Poisson bivector field $\alpha$ (see also Section \ref{subsec:Courant_algebroids}), we can write
\[
S=\int_\Sigma\Bigg(\langle\eta,\dd X\rangle+\frac{1}{2}\epsilon\langle \eta,\alpha^\sharp\eta\rangle\Bigg).
\]
\end{rem}

\subsection{Observables}
\label{subsec:observables}
For the case when $\Sigma=\mathbb{H}^2$, $c$ has to vanish on the boundary. This implies that $\delta X(u)=0$ for $u\in\de\Sigma$. Hence, we get 
\begin{multline*}
\calO_{f_1,\ldots,f_k;u_1,\ldots,u_k}:=f_1(X(u_1))\dotsm f_k(X(u_k)),\\
f_1,\ldots,f_k\in C^\infty(\R^n),\quad u_1,\ldots,u_k\in\de\Sigma\cong\R,\quad u_1<\dotsm <u_k,
\end{multline*}
are observables, i.e. $\delta$-closed functions. In \cite{CF1} it was shown that with the gauge-fixing $\dd^*\eta=0$ for the Euclidean metric on $\Sigma$, one has 
\[
\langle \calO_{f_1,\ldots,f_k;u_1,\ldots,u_k}\rangle(x)=f_1\star\dotsm \star f_k(x),
\]
where $\langle\enspace\rangle(x)$ denotes the expectation value for $X(\infty)=x$ (and expanding only around the trivial critical solution $X=x$, $\eta=0$) while $\star$ denotes Kontsevich's star product for the given Poisson structure. In the next section will derive this result for the case when $\alpha$ is at most linear so that the BRST formalism is available.

\section{Deformation quantization for affine Poisson structures}

An affine Poisson structure on $\R^n$ is given by a Poisson bivector field $\alpha$ which is at most linear. The linear part $\alpha$ gives the dual of $\R^n$ a Lie algebra structure while the constant part is a 2-cocycle in the Lie algebra cohomology with trivial coefficients. Let us denote this Lie algebra by $\mathfrak{h}$. The fields of the Poisson sigma model for an affine Poisson structure are then a map $X\colon \Sigma\to \mathfrak{h}^*$ and a 1-form $\eta$ on $\Sigma$ with values in $\mathfrak{h}$. Since $\mathfrak{h}$ is a Lie algebra, we can regard $\eta$ as a connection 1-form on the trivial principal bundle $P$ over $\Sigma$ (with gauge group any Lie group whose Lie algebra is $\mathfrak{h}$). For definiteness, we will fix $\Sigma$ to be $\mathbb{H}^2$ and we will require $\eta$ to vanish at infinity and on the boundary. The action is then given by 
\[
S=\int_\Sigma\Bigg(\eta_i\,\dd X^i+\frac{1}{2}\alpha^{ij}(X)\eta_i\eta_j\Bigg),
\]
where 
\[
\alpha^{ij}(x)=\chi^{ij}+x^kf^{ij}_k
\]
is the given affine Poisson structure on $\mathfrak{h}^*$. Using integration by parts, we can also rewrite it as 
\[
S=\int_\Sigma\Bigg(\langle X,F_\eta\rangle+\frac{1}{2}\chi(\eta,\eta)\Bigg),
\]
where $\langle\enspace,\enspace\rangle$ denotes the canonical pairing between $\mathfrak{h}$ and $\mathfrak{h}^*$ while 
\[
(F_\eta)_i=\dd \eta_i+\frac{1}{2}f_i^{jk}\eta_j\eta_k
\]
is the curvature 2-form of the connection 1-form $\eta$. Note that there is a Lie algebra $\mathfrak{g}$, which as a vector space consists of functions $\Sigma\to \mathfrak{h}$ vanishing at infinity and on the boundary, that acts on the space of fields $\calM$ and leaves the action invariant. The BRST operator on $\calM\times \Pi\mathfrak{g}$ has the form as in Theorem \ref{thm:cohomological_PSM} with $\epsilon=1$. Geometrically, we can regard $\mathfrak{g}$ as the Lie algebra of infinitesimal gauge transformations of the principal bundle $P$; the field $\eta$ actually transforms as a connection 1-form, while $X$ transforms as a section of the coadjoint bundle in case $\chi=0$. For $\chi\not=0$, we can regard $X\oplus 1$ as a section of the coadjoint bundle for the Lie algebra $\widehat{\mathfrak{h}}\cong\mathfrak{h}\oplus\R$ obtained by central extension of $\mathfrak{h}$ through $\chi$. The BRST operator on $\Pi\mathfrak{g}^*\times\mathfrak{g}^*$ has the usual form \eqref{eq:FP_cohomological_VF}.

\subsection{Gauge-fixing and Feynman diagrams}

Choose a metric on $\Sigma$ and define the gauge-fixing function $F(X,\eta)=\dd^*\eta$. The gauge-fixing fermion is given by 
\[
\Psi_F=\int_\Sigma\langle \bar c,\dd^*\eta\rangle
\]
and the gauge-fixed action is given by 
\[
S_F=\int_\Sigma\Bigg(\eta_i\,\dd X^i+\frac{1}{2}\alpha^{ij}(X)\eta_i\eta_j+\lambda^i\dd^*\eta_i-\bar c^k\dd^*\Big(\dd c_k+\de_k\alpha^{ij}(X)\eta_ic_j\Big)\Bigg).
\]
Fix the value of $X$ at infinity to be given by $x$. We write $X=x+\xi$, where the field $\xi$ has to vanish at infinity. We can observe that $S_F$ has the same form as \eqref{eq:action_form} (with $y$ the collection of $\xi,\bar c,\lambda$ and $z$ the collection of $\eta,c$). Hence, we can write $S_F=S_0+S_1$ with 
\begin{align*}
    S_0&=\int_\Sigma\Bigg(\eta_i\,\dd X^i+\lambda^i\dd^*\eta_i-\bar c^k\dd^*\dd c_k\Bigg),\\
    S_1&=\int_\Sigma\Bigg(\frac{1}{2}\alpha^{ij}(x+\xi)\eta_i\eta_j-\bar c^k\dd^*\Big(\de_k\alpha^{ij}(x+\xi)\eta_ic_j\Big)\Bigg),
\end{align*}
and regard $S_1$ as a perturbation of $S_0$. if we consider superfields $\boldsymbol{\xi}$ and $\boldsymbol{\eta}$ as in \eqref{eq:superfields}, we can write
\[
S_1=\int_\Sigma\frac{1}{2}\alpha^{ij}(x+\boldsymbol{\xi})\boldsymbol{\eta}_i\boldsymbol{\eta}_j,
\]
where integration on $\Sigma$ is understood to select the 2-form component. 

\begin{rem}
This shows that as long as the considered observables can be written as functions of the superfields and expectation values are computed in terms of the superpropagators \eqref{eq:superpropagator2}. If we denote the superpropagator graphically as an arrow from $\boldsymbol{\eta}$ to $\boldsymbol{\xi}$, the perturbation $S_1$ is represented by the two vertices as in Figure \ref{fig:two_vertices} with the bivalent vertex corresponding to $\alpha^{ij}(x)=\chi^{ij}+x^kf^{ij}_k$ and the trivial vertex corresponding to the structure constants.
\end{rem}

\begin{figure}[ht]
\centering
\begin{tikzpicture}[scale=0.7]
\tikzset{Bullet/.style={fill=black,draw,color=#1,circle,minimum size=0.5pt,scale=0.5}}
\node[Bullet=gray] (v1) at (0,0) {};
\node[Bullet=gray] (v2) at (5,0) {};
\draw[->] (v1) -- (-1,-2);
\draw[->] (v1) -- (1,-2);
\draw[->] (v2) -- (4,-2);
\draw[->] (v2) -- (6,-2);
\draw[->] (5,2) -- (v2);
\end{tikzpicture}
\caption{The two vertices} 
\label{fig:two_vertices}
\end{figure}

Let us now consider the observable $\calO_{f_1,\ldots,f_k;u_1,\ldots,u_k}$ as in Section \eqref{subsec:observables}. As the evaluation point, i.e., integration along a 0-cycle, is understood to select the 0-form component of a differential form, we can write
\begin{multline*}
\calO_{f_1,\ldots,f_k;u_1,\ldots,u_k}:=f_1(x+\boldsymbol{\xi}(u_1))\dotsm f_k(x+\boldsymbol{\xi}(u_k)),\\
f_1,\ldots,f_k\in C^\infty(\R^n),\quad u_1,\ldots,u_k\in\de\Sigma\cong\R,\quad u_1<\dotsm <u_k.
\end{multline*}

Computing then the expectation value $\langle \calO_{f_1,\ldots,f_k;u_1,\ldots,u_k}\rangle(x)$, we only need the superpropagator. The Feynman diagrams then have three kind of vertices:
\begin{enumerate}
    \item bivalent vertices in the upper half-plane corresponding to $\alpha^{ij}(x)$,
    \item trivalent vertices in the upper half-plane corresponding to $f_k^{ij}$,
    \item $\ell$-valent vertices with $\ell\geq 0$ with only incoming arrows at one of the boundary points $u_i$ corresponding to the $\ell$-th derivative of $f_i$.
\end{enumerate}

Recall that the normal ordering excludes all graphs containing a tadpole (i.e. an edge starting and ending at the same vertex). The combinatorics prevents automatically vacuum subgraphs (see Figure \ref{fig:example_Feynman_diagrams} for examples).

\begin{figure}[ht]
\centering
\begin{tikzpicture}[scale=0.7]
\tikzset{Bullet/.style={fill=black,draw,color=#1,circle,minimum size=0.5pt,scale=0.5}}
\node[Bullet=black] (bar11) at (1,-2) {};
\node[Bullet=black] (bar21) at (5,-2) {};
\node[Bullet=black] (bar12) at (10,-2) {};
\node[Bullet=black] (bar22) at (14,-2) {};
\node[Bullet=gray] (v11) at (0,0) {};
\node[Bullet=gray] (v21) at (3,1) {};
\node[Bullet=gray] (v12) at (9,0) {};
\node[Bullet=gray] (v22) at (15,0) {};
\draw[fermion] (v11) -- (bar11);
\draw[fermion] (v11) -- (bar21);
\draw[fermion] (v21) -- (v11);
\draw[fermion] (v21) -- (bar21);
\draw[fermion] (v12) -- (bar12);
\draw[fermion] (v12) -- (bar22);
\draw[fermion] (v22) -- (bar12);
\draw[fermion] (v22) -- (bar22);
\path[->,every loop/.style={looseness=40}] (v12)
         edge  [in=160,out=60,loop] (v12); 
\draw (0,-2) -- (6,-2);
\draw (9,-2) -- (15,-2);
\end{tikzpicture}
\caption{Example of an allowed graph and a non-allowed graph.} 
\label{fig:example_Feynman_diagrams}
\end{figure}

For the case when $k=2$ and considering the gauge-fixing $\dd^*\eta=0$ with respect to the Euclidean metric on $\mathbb{H}^2$, i.e. with the superpropagator determined by the 1-form $\vartheta$ as in Lemma \ref{lem:propagator_vartheta}, we get 
\begin{equation}
\label{eq:expectation_star_product}
\langle\calO_{f,g;0,1}\rangle(x)=f\star g(x),
\end{equation}
where $\star$ denotes Kontsevich's star product for the given affine Poisson structure.

\begin{exe}
Prove \eqref{eq:expectation_star_product}.
\end{exe}

\subsection{Independence of the evaluation point}

We want to give a formal proof of the independence of the expectation values of $\calO_{f_1,\ldots,f_k;u_1,\ldots,u_k}$ from the points $u_1,\ldots,u_k$. Note first that
\[
f(X(v))-f(X(u))=\int_u^v\dd X^i\de_i f(X)=\int_u^v\Big(\dd X^i+\dd^*\lambda^i\Big)\de_i f(X)-\delta\Phi,
\]
where 
\[
\Phi:=\int_u^v\dd^*\bar c\de_i f(X).
\]
Let $(\omega_r)$ be a sequence of 1-forms on $\Sigma$ vanishing on the boundary and at infinity that converges to the measure concentrated on the interval $(u,v)\in \Sigma$. Denoting by $(a,b)$, with $b\geq 0$, the coordinates on $\Sigma=\mathbb{H}^2$, a possible choice for this sequence is 
\[
\omega_r(a,b)=rb\exp(-rb^2/2)\chi_r(a)\dd a,
\]
where $(\chi_r)$ is a sequence of smooth, compactly supported functions converging almost everywhere to the characteristic function of the interval $(u,v)$. Let $Y_{f,r}$ be the local vector field on $\widetilde{\calM}$ corresponding to the infinitesimal displacement of $\eta_i$ by $\omega_r\de_i f(X)$. Then 
\[
f(X(v))-f(X(u))=\lim_{r\to\infty}Y_{f,r}(S_F)-\delta\Phi.
\]
If $\calO$ is a BRST-observable depending on the fields outside the closed interval $[u,v]$, we get 
\[
\langle(f(X(v))-f(X(u)))\calO\rangle=\I\hbar \lim_{r\to\infty}\langle Y_{f,r}(\calO)\rangle-\langle\delta(\Phi\calO)\rangle=0.
\]

\subsection{Associativity}

The independence of the evaluation points gives us 
\[
\lim_{v\to u^+}\langle \calO_{f,g,h;u,v,w}\rangle_0(x)=\lim_{v\to w^-}\langle\calO_{f,g,h;u,v,w}\rangle_0(x).
\]
The left hand side corresponds intuitively to evaluating first the expectation value of $\calO_{f,g;u,v}$, then replacing the result at $u$ and finally computing the expectation value of $\calO_{\langle \calO_{f,g,;u,v}\rangle_0,h;w}$. The result is then $(f\star g)\star h$. Repeating the computation on the right hand side, we get $f\star (g\star h)$. This rather formal argument explains why one should expect the star product defined by the Poisson sigma model to be associative.

\section{The general construction}

We want to also state the theorem for general Poisson structures on $\R^d$. However, we will not provide a proof of the general construction here since one needs the more general gauge formalism provided by Batalin and Vilkovisky. 

\subsection{The theorem of Cattaneo--Felder}

\begin{thm}[Cattaneo--Felder\cite{CF1}]
Consider the Poisson manifold $(\R^d,\pi)$ with Poisson structure $\pi$ together with the corresponding Poisson sigma model on a disk $D$. The fields are then given by a map $X\colon D\to \R^d$ and a 1-form $\eta\in \Gamma(D,T^*D\otimes X^*T^*\R^d)$ with boundary condition $\eta\big|_{S^1}=0$. Moreover, let $0,1$ and $\infty$ be cyclically ordered points on $\de D=S^1$, i.e. if we start at $0$ and move counterclockwise on $S^1$ we will first meet $1$ and then $\infty$ (see Figure \ref{fig:cyc}), and impose the condition $\eta(\infty)=0$. Then Kontsevich's star product between two functions $f,g\in C^\infty(\R^d)$ evaluated at the point $x=X(\infty)\in \R^d$ is given by 
\[
f\star g(x)=\int_{X(\infty)=x}f(X(1))g(X(0))\exp(\I S(X,\eta)/\hbar)\mathscr{D}[X]\mathscr{D}[\eta],
\]
where the right hand side should be understood by perturbative expansion in terms of Feynman diagrams.
\end{thm}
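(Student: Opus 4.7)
The plan is to mimic the affine case treated explicitly in the text, but with the additional machinery needed because the gauge algebra of the Poisson sigma model closes only on-shell for a general Poisson bivector $\pi$. First I would set up the BV formalism for the PSM on the disk $D$: introduce antifields $X^+_i$ (a $2$-form valued in $X^*T\R^d$) and $\eta^{+i}$ (a $1$-form valued in $X^*T\R^d$), together with ghosts $\beta_i$ (a $0$-form dual to $c^i$) arising from the Lie algebra of infinitesimal gauge transformations found in Lemma \ref{lem:acting_BRST_operator_PSM}, plus their antifields $\beta^{+i}$ and $c^{+}_i$. Packaging these into superfields
\[
\boldsymbol{X}^i = X^i + \eta^{+i} + \beta^{+i}, \qquad \boldsymbol{\eta}_i = c_i + \eta_i + X^+_i,
\]
and using the BV antibracket induced by $\int_D\langle\delta\boldsymbol{\eta},\delta\boldsymbol{X}\rangle$, one constructs the BV action
\[
S_{\mathrm{BV}} = \int_D\Bigl(\boldsymbol{\eta}_i\,\dd\boldsymbol{X}^i + \tfrac{1}{2}\pi^{ij}(\boldsymbol{X})\boldsymbol{\eta}_i\boldsymbol{\eta}_j\Bigr),
\]
extracting the $2$-form component. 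The classical master equation $(S_{\mathrm{BV}},S_{\mathrm{BV}})=0$ is equivalent to $[\pi,\pi]_{\mathrm{SN}}=0$, i.e.\ to $\pi$ being Poisson, which is exactly what replaces the on-shell closure argument that failed beyond the affine case.

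Next I would choose a gauge-fixing Lagrangian submanifold using the Euclidean metric on the upper half-plane model of $D$ with the boundary conditions of the theorem: the Lorenz-type gauge $\dd^*\eta=0$, $\dd^*\eta^+=0$, implemented through a gauge-fixing fermion involving antighosts $\bar c$ and Lagrange multipliers $\lambda$, together with the boundary condition $\eta|_{\partial D}=0$ and $\eta(\infty)=0$. The quadratic part of the gauge-fixed action is, as in the trivial case on $\mathbb{H}^2$, essentially $\langle\phi,\boldsymbol{\dd}\phi\rangle_*$, so I can invert it to obtain the superpropagator. By Lemma \ref{lem:propagator_vartheta} (applied in conformal coordinates, since conformal rescalings of the metric do not change the relevant cohomology class of $\vartheta$) the superpropagator is
\[
\langle \boldsymbol{X}^i(z)\boldsymbol{\eta}_j(w)\rangle_0 = \tfrac{\I\hbar}{2\pi}\,\delta^i_j\,\dd\phi_h(z,w),
\]
which is precisely Kontsevich's angle $1$-form on the FMAS compactification $\bar C_{2,0}$.

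Then I would compute the expectation value of the boundary observable
\[
\calO_{f,g;0,1}(X) = f(\boldsymbol{X}(1))\,g(\boldsymbol{X}(0))
\]
perturbatively around the critical point $X\equiv x$, $\eta\equiv 0$. Expanding $\exp(\I S_1/\hbar)$ with $S_1=\tfrac{1}{2}\int \pi^{ij}(x+\tilde{\boldsymbol{X}})\boldsymbol{\eta}_i\boldsymbol{\eta}_j$ and Taylor-expanding $f,g$ and $\pi$ around $x$, Wick's theorem produces precisely the admissible graphs $\Gamma\in\calG_{n,2}$ of Kontsevich's construction: the bulk vertices of first type correspond to insertions of $\pi$ (with valence equal to the order of derivative via the Taylor expansion of $\pi$), the two boundary vertices of second type correspond to $f$ and $g$, and each edge carries a superpropagator. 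The combinatorial factor $1/(n!)$ from the expansion of the exponential, together with the symmetry factors from Wick contractions, matches the prefactor $\prod_k 1/|\mathrm{Star}(k)|!$ in \eqref{eq:Kontsevich_weight}. The integration of $\bigwedge_e \dd\phi_{h,e}$ over $\bar C^+_{n,2}$ reproduces exactly the Kontsevich weight $w_\Gamma$, so the Feynman series coincides term by term with $f\star g(x)$.

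Finally, I would justify that the construction is consistent: gauge-fixing independence, independence of the evaluation points $0,1$ on $\partial D$ (in their cyclic interval), and associativity all follow from the quantum master equation $(S_{\mathrm{BV}},S_{\mathrm{BV}}) - 2\I\hbar\,\Delta_{\mathrm{BV}}S_{\mathrm{BV}} = 0$, once one has checked that no BV anomaly arises; vanishing of the anomaly reduces to the same short-loop/tadpole exclusion that already protects the affine computation, together with the vanishing statement of Kontsevich's Lemma \ref{lem:Kontsevich}. The hard part is precisely this last point: constructing the BV data so that the master equation holds at the classical level is automatic from $[\pi,\pi]_{\mathrm{SN}}=0$, but verifying that the chosen gauge-fixing and regularization of coincident-point singularities preserve it at the quantum level (equivalently, that the boundary contributions in Stokes' theorem on $\bar C^+_{n,2}$ from strata of types S1 and S2 cancel in the same way they do in Kontsevich's proof of the $L_\infty$-morphism property) is the genuine obstacle and forces the use of the BV, rather than the BRST, formalism for non-affine $\pi$.
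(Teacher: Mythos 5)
Note that the paper does not actually prove this theorem: it only states it, citing Cattaneo--Felder \cite{CF1}, and explicitly remarks that the proof requires the Batalin--Vilkovisky formalism, so there is no in-paper argument to compare against. Your outline is precisely the route of the cited proof (AKSZ-type superfield BV action whose classical master equation is equivalent to $[\pi,\pi]_{\mathrm{SN}}=0$, Lorenz-type gauge fixing producing the angle superpropagator of Lemma \ref{lem:propagator_vartheta}, and Wick expansion of the boundary observable reproducing Kontsevich's admissible graphs and weights), hence essentially the same approach as the reference and consistent with the affine-case computation in this chapter; the only small blemishes are that $X^+_i$ should be a $2$-form valued in $X^*T^*\R^d$ rather than $X^*T\R^d$, that the ghost has a single antifield (your $\beta^{+i}$ and $c^+_i$ are the same object), and that the genuinely hard points you yourself flag --- anomaly-freeness of the quantum master equation for the chosen gauge fixing and regularization, and the exact matching of weights on $\bar C^+_{n,2}$ --- are asserted rather than carried out.
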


\begin{rem}
The global field theoretic approach was given in \cite{CMW3} using cutting and gluing methods for manifolds with boundary by introducing a formal global action as more generally developed in \cite{CMW4}.  Moreover, also the case for manifolds with corners was covered. The main gauge formalism that was used is known as the \emph{BV-BFV formalism} \cite{CMR1,CMR2,CattMosh1}, which can bee seen as a extension of the Batalin--Vilkovisky formalism for manifolds with boundary. A first approach to a global formulation of the Poisson sigma model on closed manifolds using the Batalin--Vilkovisky formalism was given in \cite{BCM}.
\end{rem}

\begin{figure}[!ht]
\centering
\tikzset{
particle/.style={thick,draw=black},
particle2/.style={thick,draw=black, postaction={decorate},
    decoration={markings,mark=at position .9 with {\arrow[black]{triangle 45}}}},
gluon/.style={decorate, draw=black,
    decoration={coil,aspect=0}}
 }
\tikzset{Bullet/.style={fill=black,draw,color=#1,circle,minimum size=0.5pt,scale=0.5}}
\begin{tikzpicture}[x=0.04\textwidth, y=0.04\textwidth]
\filldraw[color=black!60, fill=gray!20, very thick](-2,0) circle (2.5);
\node[Bullet=black, label=left: $0$] (0) at (-4.3,-1) {};
\node[Bullet=black, label=right: $1$] (1) at (0.3,-1) {};
\node[Bullet=black, label=above: $\infty$] (infty) at (-2,2.5) {};
\node[] (D) at (-2,0) {$D$};
\end{tikzpicture}
\caption{Cyclically ordered points on $\partial D=S^1$}
\label{fig:cyc}
\end{figure}

\subsection{Other similar constructions}
The way of using quantum field theory to obtain certain mathematical constructions has been proven to be a very interesting and deep method relating seemingly purely mathematical constructions to physics. Other examples include Witten's famous construction to obtain the 4-manifold invariants described by Donaldson \cite{Donaldson1983} using a certain deformation of the supersymmetric \emph{Yang--Mills} action and a special type of observables \cite{Witten1988}. Computing the expectation value of this observable with respect to the theory formulated by this special action, one can recover the Donaldson polynomials of the given 4-manifold. Another important and surprising result of Witten \cite{Witten1989} was that the expectation value of a \emph{Wilson loop} observable representing a given knot with respect to the \emph{Chern--Simons} action \cite{Chern1974,AS,AS2} will give the \emph{Jones polynomial} \cite{Jones1985} of the knot, which is an important knot invariant. These results have lead to many mathematical conjectures, deeper insights in physics and created a whole new perspective towards the interplay between mathematics and quantum field theory.

\printbibliography
\end{document}